\documentclass{AIMS}
 \usepackage{paralist}
\usepackage{amsmath}
\usepackage{stmaryrd}
\usepackage{cite}
\usepackage{epsfig}
\usepackage{graphics}
\usepackage{amssymb}
\usepackage{verbatim,enumerate}
 \usepackage[colorlinks=true]{hyperref}
\hypersetup{urlcolor=blue, citecolor=red}

 \textheight=8.2 true in
   \textwidth=5.0 true in
    \topmargin 30pt
     \setcounter{page}{1}

 \textheight=8.2 true in
   \textwidth=5.0 true in
    \topmargin 30pt
     \setcounter{page}{1}



%
\theoremstyle{plain}
\newtheorem{theorem}{Theorem}[section]
\newtheorem{lemma}{Lemma}[section]
\newtheorem{proposition}{Proposition}[section]

\theoremstyle{definition}
\newtheorem{definition}{Definition}[section]
\newtheorem{example}{Example}[section]
\newtheorem{remark}{Remark}[section]

\def\be{\begin{equation}\displaystyle}
\def\ee{\end{equation}}
\def\bel{\begin{equation} \displaystyle \begin{array}{l} }
\def\eel{\end{array} \end{equation} }
\def\bell{\begin{equation} \displaystyle \begin{array}{ll}  }
\def\eell{\end{array} \end{equation} }

\def\bea{\begin{eqnarray}}
\def\eea{\end{eqnarray} }
\def\beas{\begin{eqnarray*}}
\def\eeas{\end{eqnarray*} }

%

\def\NN{\mathbb{N}}

\def\RR{\mathbb{R}}

\def\bx{\mathbf{x}}
\def\by{\mathbf{y}}
\def\bz{\mathbf{z}}
\def\bH{\mathbf{H}}

\def\eps{\varepsilon}

\def\bar#1{{\overline #1}}
\def\R2+{\RR ^2_+}

\def\calB{{\mathcal B}}

\def\calF{{\mathcal F}}
\def\calE{{\mathcal E}}

\def\calT{{\mathcal T}}

\def\p{\partial}

\def\vphi{{\varphi}}

\newcommand{\pl}[2]{\frac{\partial#1}{\partial#2}}

\newcommand{\bn}{{\bf n}}
\newcommand{\bp}{{\bf p}}
\newcommand{\br}{{\bf r}}
\newcommand{\og}{\omega}
\newcommand{\Og}{\Omega}
\newcommand{\rd}{{\rm d}}
\newcommand{\fl}[2]{\frac{#1}{#2}}
\newcommand{\dt}{\delta}
\newcommand{\tm}{\times}

\newcommand{\nn}{\nonumber}
\newcommand{\ap}{\alpha}
\newcommand{\bt}{\beta}
\newcommand{\ld}{\lambda}
\newcommand{\Gm}{\Gamma}
\newcommand{\gm}{\gamma}

\newcommand{\tht}{\theta}
\newcommand{\ift}{\infty}
\newcommand{\vep}{\varepsilon}

\newcommand{\Dt}{\Delta}

\newcommand{\sg}{\sigma}

\newcommand{\btd}{\nabla}
\newcommand{\btu}{\Delta}

\newcommand{\ba}{\begin{array}}
\newcommand{\ea}{\end{array}}

\setcounter{footnote}{-1}
%
%
\title[Mathematics and numerics for BEC]{Mathematical theory and
numerical methods for Bose-Einstein condensation}
\author[W. Bao and Y. Cai]{}

\subjclass{34C29, 35Q55, 46E35, 65M70.}
 \keywords{Bose-Einstein condensation, Gross-Pitaevskii equation, numerical method, ground state,
 quantized vortex, dynamics, error estimate.}


 \email{bao@math.nus.edu.sg}
 \email{yongyong.cai@gmail.com}

\date{}
\begin{document}
\maketitle

\centerline{\scshape Weizhu Bao }
\medskip
{\footnotesize\centerline{Department of Mathematics and Center  for Computational
Science and Engineering}
\centerline{National University of Singapore,
Singapore 119076}
}

\medskip

\centerline{\scshape Yongyong
Cai }
\medskip
{\footnotesize\centerline{Department of Mathematics, National University of Singapore,
Singapore 119076}
\centerline{and}
{\footnotesize\centerline{Beijing Computational Science Research Center, Beijing 100084, P. R. China}
}
}

\bigskip

\begin{abstract}
In this paper, we mainly review recent results on mathematical theory and
numerical methods for Bose-Einstein condensation (BEC),
based on the Gross-Pitaevskii equation (GPE). Starting from the  simplest case with one-component BEC of the weakly interacting bosons, we study the  reduction of GPE to lower dimensions,  the ground states of BEC including the existence and uniqueness  as well as nonexistence results, and the dynamics of GPE including dynamical laws, well-posedness of the Cauchy problem as well as the finite time blow-up. To compute the ground state, the gradient flow with discrete normalization (or imaginary time) method is reviewed and various full discretization methods are presented and compared. To simulate the dynamics, both finite difference methods and time splitting spectral methods are reviewed, and their error estimates are briefly outlined. When the GPE has symmetric properties, we  show how to  simplify the numerical methods. Then we compare two widely used scalings, i.e. physical scaling (commonly used) and semiclassical scaling, for BEC in strong repulsive interaction regime (
 Thomas-Fermi regime), and discuss semiclassical limits of the GPE. Extensions of these results for one-component BEC are then
carried out for rotating BEC by GPE with an angular momentum rotation, dipolar BEC by GPE with long range dipole-dipole interaction, and two-component BEC by coupled GPEs. Finally, as a perspective, we show briefly the mathematical models for
spin-1 BEC, Bogoliubov excitation and BEC at finite temperature.
\end{abstract}

\tableofcontents

\section{Introduction}
\setcounter{equation}{0}\setcounter{figure}{0}\setcounter{table}{0}
Quantum theory is one of the most important science discoveries in the last century. It asserts that all objects behave like waves in the micro length scale. However, quantum world remains a mystery as it is hard to observe  quantum phenomena due to the extremely small wavelength. Now, it is possible to explore quantum world in experiments due to the remarkable discovery of a new state of matter, Bose-Einstein condensate (BEC). In the state of BEC, the temperature is very cold (near absolute zero). In such case,  the wavelength of an object increases extremely, which leads to  the incredible and observable  BEC.

\subsection{Background}
\label{subsec:background}
The idea of BEC originated in 1924-1925, when A. Einstein generalized a work of S. N. Bose on the  quantum statistics for photons \cite{Bose} to a gas of non-interacting  bosons \cite{Einstein1,Einstein2}. Based on the quantum statistics,  Einstein predicted that, below a critical temperature, part of the bosons would occupy the same  quantum state to form a condensate. Although Einstein's work was carried out for non-interacting bosons, the idea can be applied to interacting system of bosons. When temperature $T$ is decreased,  the de-Broglie wavelength $\lambda_{dB}$ of the particle increases, where $\lambda_{dB}=\sqrt{2\pi\hbar^2/mk_BT}$, $m$ is the mass of the particle, $\hbar$ is the Planck constant and $k_B$ is the Boltzmann constant. At a critical temperature $T_c$, the wavelength $\lambda_{dB}$   becomes  comparable to the inter-particle average spacing, and the de-Broglie waves  overlap. In this situation, the particles behave coherently as a giant atom  and a BEC
 is formed.

Einstein's prediction did not receive much attention until F. London suggested the superfluid $^4$He as an evidence of BEC   in 1938 \cite{London}. London's idea had inspired extensive studies on the superfluid and interacting boson system.   In 1947, by developing the idea of London,  Bogliubov   established the first microscopic theory of superfluid in a system consisting of interacting bosons \cite{Bogoliubov}.    Later, it was found in experiment that  less then $10\%$ of the superfluid $^4$He is in the condensation due to the strong interaction between helium atoms. This fact motivated physicists to search for weakly interacting   system of Bose gases  with higher occupancy of BEC. The  difficulty is that almost all substances become solid or liquid at  temperature which the BEC phase transition occurs.  In 1959, Hecht \cite{Hecht} pointed out that spin-polarized hydrogen atoms would remain gaseous even at 0K. Hence, H atoms become an attractive candidate
  for BEC. In 1980, spin-polarized hydrogen  gases were realized by  Silvera and Walraven \cite{Silvera}. In the following decade, extensive efforts had been devoted to the experimental realization of hydrogen BEC, resulting in the developments of magnetically trapping and evaporative cooling techniques. However, those attempts to observe BEC failed.

In 1980s, due to the developments of laser trapping and cooling, alkali atoms became suitable candidates for BEC experiments as they are well-suited to laser cooling and trapping. By combining the advanced laser cooling and the evaporative cooling techniques together, the first BEC of dilute $^{87}$Rb gases was achieved in 1995, by  E. Cornell and C. Wieman's group in JILA \cite{Anderson}. In the same year, two successful experimental observations  of BEC, with $^{23}$Na   by Ketterle's group  \cite{Davis} and $^7$Li by Hulet's group \cite{Bradley}, were announced.  The experimental realization of  BEC for alkali vapors has two stages: the laser pre-cooling  and evaporative cooling. The alkali gas can be cooled down to several $\mu $K by laser cooling, and then be further cooled down to 50nK--100nK by evaporative cooling. As laser cooling can not be applied to hydrogen, it took atomic physicists much more time to achieve hydrogen BEC. In 1998, atomic condensate of hydrogen
 was finally realized \cite{Fried}. For better understanding of the long history towards the Bose-Einstein condensation, we refer to the Nobel lectures \cite{Cornell,Ketterle}.

The experimental advances \cite{Anderson,Davis,Bradley}  have spurred great excitement in the
atomic physics community and condensate physics community.   Since 1995, numerous efforts have been devoted to the studies of ultracold atomic gases and various kinds of condensates of dilute gases have been produced for both bosonic particles and fermionic particles \cite{Andersen,Dalfovo,Fetter,Leggett,Morsch,Ozeri,Posazhen}. In this rapidly growing research area,   numerical simulation has been playing an important role  in  understanding the theories and the experiments.   Our aim is to  review   the numerical methods and mathematical theories for BEC that have been developed over these years.

\subsection{Many body system and mean field approximation}
\label{subsec:manybody}
We are interested in  the ultracold dilute bosonic gases  confined in an external trap,  which is the case for most of the BEC experiments. In these cold dilute gases, only binary interaction is important. Hence, the many body Hamiltonian for $N$ identical bosons  held in a  trap can be written as \cite{LiebSeiringer,Leggett}
\begin{equation}
H_{N}=\sum\limits_{j=1}^N\left(-\frac{\hbar^2}{2m}\Delta_j+V(\bx_j)\right)+\sum\limits_{1\leq j<k\leq N}V_{\rm int}(\bx_j-\bx_k),
\end{equation}
where $\bx_j\in\Bbb R^3$ ($j=1,\ldots,N$) denote the positions of the particles,
 $m$ is the mass of a boson, $\Delta_j$ is the Laplace operator with respect to $\bx_j$, $V(\bx_j)$ is the external trapping potential, and $V_{\rm int}(\bx_j-\bx_k)$ denotes the inter-atomic two body interactions. The wave function $\Psi_N:=\Psi_N(\bx_1,\ldots,\bx_N,t)\in L^2(\Bbb R^{3N}\times \Bbb R)$ is symmetric, with respect to any permutation of the positions $\bx_j$. The evolution of the system is then described by the time-dependent  Schr\"odinger equation
\begin{equation}\label{eq:Nbdy}
i\hbar\partial_t\Psi_N(\bx_1,\ldots,\bx_N,t)=H_N\Psi_N(\bx_1,\ldots,\bx_N,t).
\end{equation}
Here $i$ denotes the imaginary unit. In the sequel, we may omit time $t$ when we write the $N$ body wave function $\Psi_N$.

In principle, the above many body system can be solved, but the cost increases quadratically  as $N$ goes large, due to the binary interaction term. To simplify the interaction, mean-field potential is introduced to approximate the two-body interactions. In the ultracold dilute regime, the binary interaction $V_{\rm int}$ is well approximated by the effective interacting potential:
\begin{equation}\label{eq:mf-app}
V_{\rm int}(\bx_j-\bx_k)=g\,\delta(\bx_j-\bx_k),
\end{equation}
where $\delta(\cdot)$ is the Dirac distribution and the constant $g=\frac{4\pi\hbar^2 a_s}{m}$. Here $a_s$ is the $s$-wave scattering length of the bosons (positive for repulsive interaction and negative for attractive interaction), and it is related to the potential $V_{\rm int}$ \cite{LiebSeiringer}. The above approximation (\ref{eq:mf-app}) is valid for the dilute regime case,  where the scattering length $a_s$ is much smaller than the average distance between the particles.

 For a BEC, all  particles are in the same quantum state and we can formally take the Hartree ansatz for the many body wave function as
\begin{equation}\label{eq:HF}
\Psi_N(\bx_1,\ldots,\bx_N,t)=\prod_{j=1}^N\psi_{H}(\bx_j,t),
\end{equation}
with the normalization condition for the single-particle wave function $\psi_{H}$ as
\begin{equation}
\int_{\Bbb R^3}|\psi_{H}(\bx,t)|^2\,d\bx=1.
\end{equation}
Then the energy of the state (\ref{eq:HF}) can be written as
\begin{equation}
E=N\int_{\Bbb R^3}\left[\frac{\hbar^2}{2m}|\nabla \psi_{H}(\bx,t)|^2+V(\bx)|\psi_{H}(\bx,t)|^2
+\frac{N-1}{2}g|\psi_{H}(\bx,t)|^4\right]\,d\bx.
\end{equation}
Let us introduce the wave function for the whole condensate
\begin{equation}
\psi(\bx,t)=\sqrt{N}\psi_{H}(\bx,t).
\end{equation}
Neglecting terms of order $1/N$, we obtain the energy of the $N$ body system as
\be\label{eq:GPenergy}
E(\psi)=\int_{\Bbb R^3}\left[\frac{\hbar^2}{2m}|\nabla \psi(\bx,t)|^2+V(\bx)|\psi(\bx,t)|^2
+\frac{1}{2}g|\psi(\bx,t)|^4\right]\,d\bx,
\ee
where the wave function is normalized according to the total number of the particles,
\be\label{eq:GPnorm}
\int_{\Bbb R^3}|\psi(\bx,t)|^2\,d\bx=N.
\ee
Eq. (\ref{eq:GPenergy}) is the well-known Gross-Pitaevskii energy functional. The equation governing the motion of the condensate can be derived by \cite{PitaevskiiStringari}
\be\label{eq:GPderive}
i\hbar \p_t\psi(\bx,t)=\frac{\delta E(\psi)}{\delta \overline{\psi}}=\left[-\frac{\hbar^2}{2m}\nabla^2+V(\bx)+g|\psi|^2\right]\psi,
\ee
where $\overline{\psi}$ denotes the complex conjugate of $\psi:=\psi(\bx,t)$. Eq. (\ref{eq:GPderive}) is a
 nonlinear Schr\"odinger equation (NLSE) with cubic nonlinearity, known as the Gross-Pitaevskii equation (GPE).

In the derivation, we have used both the dilute property of the gases and the Hartree ansatz (\ref{eq:HF}). Eq. (\ref{eq:HF}) requires that the BEC system is  at extremely low temperature such that almost all  particles are in the same states. Thus, mean field approximation (\ref{eq:GPenergy}) and (\ref{eq:GPderive}) are only valid for dilute boson gases (or usually called weakly interacting boson gases) at temperature $T$ much smaller than the critical temperature $T_c$.

The Gross-Pitaevskii (GP) theory (\ref{eq:GPderive}) was developed by Pitaevskii \cite{Pitaevskii}  and  Gross \cite{Gross} independently in 1960s.   For a long time,  the validity of this mean field approximation lacks of rigorous mathematical justification. Since the first experimental observation of BEC in 1995, much attention has been paid to the GP theory. In 2000, Lieb et al.  proved that the energy (\ref{eq:GPenergy})  describes the ground state energy of the many body system correctly in the mean field regime \cite{LiebSeiringer,LiebSeiringerPra2000}. Later H. T. Yau and his collaborators  studied the validity of    GPE  (\ref{eq:GPderive}) as an approximation for  (\ref{eq:Nbdy}) to describe the dynamics of BEC \cite{Erdos}, without the trapping potential $V(\bx)$.

GP theory, or mean field theory, has been proven to predict many properties of BEC quite well. It has become the fundamental mathematical model to understand BEC. In this review article, we will concentrate on the GP theory.

\subsection{The Gross-Pitaevskii equation}
\label{subsec:gpe}
As shown in section \ref{subsec:manybody}, at temperature $T\ll T_c$, the dynamics of a BEC is well described by the Gross-Pitaevskii equation (GPE) in three dimensions (3D)
\be\label{eq:GPE}
i\hbar\p_t\psi(\bx,t)=\left[-\frac{\hbar^2}{2m}\nabla^2+V(\bx)+Ng|\psi(\bx,t)|^2\right]\psi(\bx,t),\quad \bx\in\Bbb R^3,\, t>0,
\ee
where $\bx=(x,y,z)^T\in {\Bbb R}^3$ is the Cartesian coordinates, $\nabla$ is the gradient operator and $\nabla^2:=\nabla\cdot\nabla=\Delta$ is the Laplace operator. In fact, the above GPE (\ref{eq:GPE}) is obtained
 from the GPE (\ref{eq:GPderive}) by a rescaling  $\psi\to \sqrt{N}\psi$, noticing (\ref{eq:GPnorm}),  the  wave function $\psi$ in (\ref{eq:GPE}) is normalized by
\be\label{eq:norm}
\|\psi(\cdot,t)\|_2^2=\int_{\Bbb R^3}|\psi(\bx,t)|^2\,dx=1.
\ee

\subsubsection{Different external trapping potentials}
In the early BEC experiments, a single harmonic oscillator well  was
used to trap the atoms in the condensate  \cite{Dalfovo,BradleySackett}.
Recently  more advanced and complicated traps are applied in studying
BEC in  laboratory \cite{PitaevskiiStringari,Milburn,Bronski,Carr}.  Here we
present several  typical  trapping potentials  which are widely used in
current  experiments.

\bigskip

\noindent I. Three-dimensional (3D) harmonic oscillator potential
\cite{PitaevskiiStringari}:
\be
\label{eq:hp}
V_{\rm ho}({\bx}) = V_{\rm ho}(x) + V_{\rm  ho}(y)+ V_{\rm ho}(z),
\quad  V_{\rm  ho}(\alpha)= \fl{m}{2}\og_\alpha^2
\alpha^2, \ \alpha =x,y,z,
\ee
where $\og_x$, $\og_y$ and $\og_z$ are the trap frequencies in $x$-,
$y$- and $z$-direction, respectively. Without loss of generality, we assume that $\omega_x\leq\omega_y\leq\omega_z$ throughout the paper.

\medskip

\noindent II. 2D harmonic oscillator + 1D double-well potential  (Type I)
\cite{Milburn}:
\be
\label{eq:dwp1}
V_{\rm dw}^{(1)}({\bx}) =  V_{\rm dw}^{(1)}(x) + V_{\rm ho}(y)+
V_{\rm ho}(z), \quad  V_{\rm dw}^{(1)}(x)
=  \fl{m}{2}\nu_x^4\left(x^2-\hat{a}^2\right)^2,
\ee
where $\pm \hat{a}$ are the double-well centers in $x$-axis, $\nu_x$
is a given constant with physical dimension  1/[s$\;$m]$^{1/2}$.

\medskip

\noindent III. 2D harmonic oscillator + 1D double-well potential
(Type II) \cite{Holthaus,Capuzzi}:
\be
\label{eq:dwp2} V_{\rm dw}^{(2)}({\bx}) =
V_{\rm dw}^{(2)}(x) + V_{\rm ho}(y)+ V_{\rm ho}(z),  \quad  V_{\rm dw}^{(2)}(x)=
 \fl{m}{2}\og_x^2\left(|x|-\hat{a}\right)^2.
\ee

\medskip

\noindent IV. 3D harmonic oscillator + optical lattice potential
\cite{Choi,PitaevskiiStringari,Adhikari}:
\be
\label{eq:olp2}
V_{\rm hop}({\bx}) = V_{\rm  ho}(\bx) + V_{\rm opt}(x) +V_{\rm opt}
(y)+V_{\rm opt}(z), \quad
V_{\rm opt}(\alpha)=I_\alpha\;E_\alpha  \sin^2(\hat{q}_\alpha \alpha),
\ee
where  $\hat{q}_\alpha = 2\pi/\lambda_\alpha$ is
fixed by the wavelength  $\lambda_\alpha$ of the laser
light creating the stationary 1D  lattice wave,
$E_\alpha = \hbar^2\hat{q}_\alpha^2/2m$ is the so-called  recoil energy, and
$I_\alpha$ is a dimensionless parameter providing  the intensity of the
laser beam. The optical lattice potential has  periodicity $T_\alpha =
\pi/\hat{q}_\alpha = \lambda_\alpha/2$ along  $\alpha$-axis ($\alpha =x, y,z$).

\medskip

\noindent V. 3D box potential \cite{PitaevskiiStringari}:
\be
\label{eq:box3d}
V_{\rm box}({\bx}) =  \left\{\begin{array}{ll}
0, & \quad 0 < x, y, z<L,\\
\infty,& \quad\hbox{otherwise}.
\end{array}\right.
\ee
where $L$ is the length of the box in the $x$-, $y$-, $z$-direction.   \\

For more types of external trapping potential, we refer to  \cite{PitaevskiiStringari,Pethick}.
When a harmonic potential is considered, a typical set of parameters used in  experiments
with ${}^{87}$Rb is given
by
\begin{equation*}
m=1.44\tm 10^{-25}[kg], \
\og_x=\og_y=\og_z=20\pi[rad/s], \
a=5.1\tm 10^{-9}[m], \ N:\ 10^2\sim 10^7
\end{equation*}
and the Planck constant has the value
\begin{equation*} \hbar =1.05\tm 10^{-34}\ [Js].\end{equation*}

\subsubsection{Nondimensionlization}
In order to nondimensionalize Eq. (\ref{eq:GPE}) under the normalization (\ref{eq:norm}),
 we introduce
\be
\label{eq:scale} \tilde{t} =  \fl{t}{t_s}, \quad \tilde{\bx}=\fl{\bx} {x_s},
\quad  \tilde{\psi}\left(\tilde{\bx},\tilde{t}\right) =  x_s^{3/2}\psi
\left({\bx}, t\right), \quad  \tilde{E}(\tilde{\psi}) = \fl{E(\psi)}{E_s},
\ee
where $t_s$, $x_s$ and $E_s$ are the scaling parameters of dimensionless
time, length and energy units,
respectively.  Plugging (\ref{eq:scale}) into (\ref{eq:GPE}), multiplying by
$t_s^2/mx_s^{1/2}$, and  then removing all $\tilde{}$,  we obtain  the
following dimensionless GPE under the normalization  (\ref{eq:norm}) in 3D:
\bea
\label{eq:ngpe}
i\p_t\psi({\bx},t)
&=&-\fl{1}{2}\nabla^2\psi({\bx},t)+V({\bx})\psi({\bx},t)
+\kappa|\psi({\bx},t)|^2\psi({\bx},t),
\eea
where the dimensionless energy functional $E(\psi)$ is defined as
\be
\label{eq:denergy}
E(\psi)=\int_{{\Bbb R}^3} \left[ \fl{1}{2}|\btd \psi|^2 + V(\bx)
 |\psi|^2 +\fl{\kappa}{2} |\psi|^4\right]\; d\bx,
\ee
and the choices for the scaling parameters $t_s$ and  $x_s$,
the dimensionless potential $V({\bx})$ with $\gm_y=t_s \og_y$ and
$\gm_z=t_s \og_z$, the energy unit $E_s=\hbar/t_s = \hbar^2/m x_s^2$,
and the interaction parameter $\kappa=4\pi a_s N/ x_s$
for different external trapping potentials are given below
\cite{Lim}:

\bigskip

\noindent I. 3D harmonic oscillator potential:
\[
t_s = \fl{1}{\og_x}, \quad x_s = \sqrt{\fl{\hbar}
{m\og_x}}, \quad
V({\bx}) = \fl{1}{2}\left(x^2+\gm_y^2y^2+\gm_z^2z^2\right).
\]

\medskip

\noindent II. 2D harmonic oscillator + 1D double-well potential
(type I):
\[
t_s = \left(\fl{m}{\hbar\nu_x^4}\right)^{1/3}, \
x_s = \left(\fl{\hbar}{m\nu_x^2}\right)^{1/3}, \
 a  =\fl{\hat{a}}{x_s}, \
V({\bx}) =  \fl{1}{2}\left[\left(x^2-a^2\right)^2+\gm_y^2y^2+
\gm_z^2z^2  \right].
\]

\medskip

\noindent III. 2D harmonic oscillator + 1D double-well potential
(type II):
\[
t_s = \fl{1}{\og_x}, \quad x_s = \sqrt{\fl{\hbar}
{m\og_x}}, \quad  a  =\fl{\hat{a}}{x_s}, \quad
V({\bx}) =  \fl{1}{2}\left[(|x|-a)^2+\gm_y^2y^2+\gm_z^2z^2\right].
\]

\medskip

\noindent IV. 3D harmonic oscillator + optical lattice potentials:
\beas
&&t_s = \fl{1}{\og_x}, \quad x_s = \sqrt{\fl{\hbar}
{m\og_x}},\quad k_\tau = \fl{2\pi^2x_s^2 I_\tau}{\ld_\tau^2},
\quad q_\tau =  \fl{2\pi x_s}{\ld_\tau}, \quad \tau=x,y,z,\\
&&V({\bx}) = \fl{1}{2}(x^2+\gm_y^2y^2+\gm_z^2z^2)+k_x\sin^2(q_xx)+k_y
\sin^2(q_y y) +k_z\sin^2(q_z z).
\eeas

\medskip

\noindent V. 3D Box potential:
\[
t_s = \fl{mL^2}{\hbar},\quad x_s = L,  \quad
V({\bx}) = \left\{\begin{array}{ll}
0, & 0<x, y, z<1,\\
\infty, & \textrm{otherwise}.
\end{array}\right.
\]

\subsubsection{Dimension reduction}
\label{subsubsec:dred}
 Under the external potentials I--IV, when $\og_y \approx 1/t_s=\omega_x$
and $\og_z\gg 1/t_s=\omega_x$ ($\Leftrightarrow$ $\gm_y \approx 1$ and
$\gm_z\gg 1$), i.e. a disk-shape condensate,  the 3D GPE can be reduced to a two dimensional (2D) GPE.
In the following discussion, we take potential I, i.e. the harmonic potential as an example.

 For a disk-shaped condensate with small height in $z$-direction, i.e.
 \be
\label{eq:r2d}
\og_x\approx \og_y, \quad \og_z\gg \og_x, \qquad \Longleftrightarrow \qquad
\gm_y\approx1, \quad \gm_z\gg 1,
 \ee
the 3D GPE (\ref{eq:ngpe}) can be reduced to a 2D  GPE by
assuming that the time evolution does not cause excitations along the
$z$-axis since these excitations have  larger energies at the order of
$\hbar \og_z$ compared to
excitations along the $x$ and $y$-axis with energies at the order of $\hbar \og_x$.

To understand this \cite{BaoJakschP}, consider the total condensate energy $E\left(\psi(t)\right)$ with $\psi(t):=\psi(\bx,t)$:
 \bea \label{eq:energy}
E\left(\psi(t)\right)&=&\fl{1}{2}\int_{{\Bbb R}^3}|\btd\psi(t)|^2d\bx
+\fl{1}{2}\int_{{\Bbb R}^3} \left(x^2+\gm_y^2
y^2\right)|\psi(t)|^2d\bx \nn\\
&&+\fl{\gm_z^2}{2}\int_{{\Bbb R}^3}
z^2|\psi(t)|^2d\bx+\frac{\kappa}{2} \int_{{\Bbb R}^3} |\psi(t)|^4
d\bx.
 \eea
Multiplying (\ref{eq:ngpe}) by $\overline{\psi_t}$ and integrating by
parts show the energy conservation
 \be
\label{eq:energyc}
E\left(\psi(t)\right)=E\left(\psi_I\right), \qquad  t\ge0,
 \ee
where $\psi_I=\psi(t=0)$ is the initial function which may depend
on all parameters  $\gm_y$, $\gm_z$ and $\kappa$. Now assume
that $\psi_I$ satisfies
 \be \label{eq:energyl}
\fl{E(\psi_I)}{\gm_z^2}\to 0, \qquad \hbox{as} \quad \gm_z\to
\ift.
 \ee
Take a sequence $\gm_z\to\ift$ (and keep all other parameters
fixed). Since $\int_{{\Bbb R}^3}|\psi(t)|^2\;d\bx=1$, we
conclude from weak compactness that there is a positive measure
$n^0(t)$ such that
\[|\psi(t)|^2 \rightharpoonup n^0(t)
\quad \hbox{weakly as}\quad \gm_z\to\ift.\]
Energy conservation implies
\[\int_{{\Bbb R}^3}\; z^2|\psi(t)|^2\;d\bx \to 0, \quad \hbox{as}
\quad \gm_z\to\ift,
\]
and thus we conclude concentration of the condensate in the plane
$z=0$:
\[n^0(x,y,z,t)=n_2^0(x,y,t)\dt(z),\]
where $n_2^0(t):=n_2^0(x,y,t)$ is a positive measure on ${\Bbb R}^2$.

  Now let $\psi_3=\psi_3(z)$ be a wave function with
\[\int_{{\Bbb R}} \; |\psi_3(z)|^2\;dz =1,\]
depending on $\gm_z$ such that
 \be
 |\psi_3(z)|^2 \rightharpoonup
\dt(z), \qquad \hbox{as} \quad \gm_z\to \ift.
 \ee
Denote by $S_{\rm fac}$ the subspace
\be S_{\rm fac}=\{\psi=\psi_2(x,y)\psi_3(z)\; |\; \psi_2\in L^2({\Bbb R}^2)\}\ee
and let
\be\Pi:\; L^2({\Bbb R}^3)\to S_{\rm fac}\subseteq L^2({\Bbb R}^3)\ee
be the projection on $S_{\rm fac}$:
\be (\Pi\psi)(x,y,z)=\psi_3(z)\int_{{\Bbb R}} \; \overline{\psi_3}(z^\prime)
\; \psi(x,y,z^\prime)\; dz^\prime.\ee
Now write the equation (\ref{eq:ngpe}) in the form
\be i \p_t\psi={\mathcal A}\psi +{\mathcal F}(\psi),\ee
where ${\mathcal A}\psi$ stands for the linear part and ${\mathcal
F}(\psi)$ for the nonlinearity. Applying $\Pi$ to the GPE
gives \begin{align} \label{eq:proj}
i\p_t(\Pi \psi)=&\,\Pi {\mathcal A} \psi +\Pi {\mathcal F}(\psi)\nn\\
=&\,\Pi {\mathcal A} (\Pi\psi) +\Pi {\mathcal F}(\Pi\psi) +\Pi\left((\Pi
{\mathcal A}-{\mathcal A}\Pi)\psi+(\Pi{\mathcal F}(\psi)- {\mathcal F}(\Pi
\psi))\right).
 \end{align}
 The projection approximation of (\ref{eq:ngpe})
is now obtained by dropping the commutator terms and  it reads \bea
\label{eq:proj2j}
&&i\p_t (\Pi \sg)=\Pi {\mathcal A} (\Pi\sg) +\Pi {\mathcal F}(\Pi\sg),\\
\label{eq:proj3j} &&(\Pi\sg)(t=0)=\Pi \psi_I, \eea or explicitly,
with
 \be \label{eq:proj4j}
 (\Pi \sg)(x,y,z,t)=:\psi_2(x,y,t)\psi_3(z),
 \ee
we find
 \be \label{eq:gpe2d}
 i
\p_t{\psi_2}=-\fl{1}{2}\btd^2 \psi_2+
\fl{1}{2}\left(x^2+\gm_y^2 y^2 + C \right) \psi_2 + \left(\kappa\int_{-\ift}^\ift \psi_3^4(z)\,dz\right)
|\psi_2|^2\psi_2, \ee where
\[C=\gm_z^2\int_{-\ift}^\ift \; z^2|\psi_3(z)|^2\;dz+
\int_{-\ift}^\ift\; \left|\fl{d\psi_3}{dz}\right|^2\;dz.\]
Since this GPE is time-transverse invariant, we can replace
$\psi_2\to \psi\;e^{-iC/2}$ and drop the constant $C$ in
the trap potential. The observables are not affected by this. For the same reason,  we will always assume that $V(\bx)\ge0$ in (\ref{eq:GPE}).

The `effective' GPE (\ref{eq:gpe2d}) is well known in the
physical literature, where the projection method
is often referred to as `integrating out the $z$-coordinate'.
However, an analysis of the limit process $\gm_z\to\ift$ has to be
based on the derivation as presented above, in particular on
studying the commutators $\Pi {\mathcal A}-{\mathcal A}\Pi$, $\Pi {\mathcal F}
-{\mathcal F}\Pi $ .   In the case of small interaction $\beta=o(1)$ \cite{bamsw}, a good choice for $\psi_3(z)$ is the
ground state of the harmonic oscillator in $z$-dimension: \be
\label{eq:psi3} \psi_3(z)=\left(\fl{\gm_z}{\pi}\right)^{1/4}\;
e^{-\gm_z z^2/2}.
 \ee

For condensates with interaction other than small interaction the choice of
$\psi_3$ is much less obvious. Often one assumes  that the
condensate density along the $z$-axis is well described by the
$(x,y)$-trace of the ground state position density $|\phi_g|^2$
  \be
\label{eq:pps}
|\psi(x,y,z,t)|^2 \approx |\psi_2(x,y,t)|^2 \int_{{\Bbb R}^2}
\; |\phi_g(x_1,y_1,z)|^2\;dx_1dy_1
\ee
and (taking a pure-state-approximation)
\be
\label{eq:pph}
\psi_3(z)=\left(\int_{{\Bbb R}^2}
\; |\phi_g(x,y,z)|^2\;dxdy\right)^{1/2}.
\ee

Similarly, when $\og_y\gg 1/t_s=\omega_x$ and $\og_z\gg 1/t_s=\omega_x$
($\Leftrightarrow$  $\gm_y\gg 1$ and $\gm_z\gg 1$),
i.e. a cigar-shaped condensate, the 3D GPE can be reduced to a
1D GPE. For a cigar-shaped condensate \cite{BaoJakschP,Pethick,PitaevskiiStringari}
\be
\label{eq:r2dd}
\og_y\gg \og_x, \quad \og_z\gg \og_x, \qquad \Longleftrightarrow\qquad
\gm_y\gg1, \quad \gm_z\gg 1,
 \ee
the 3D GPE (\ref{eq:GPE}) can be reduced to a 1D GPE by proceeding
analogously.

Then the 3D GPE (\ref{eq:GPE}), 2D and 1D GPEs can  be
written in a unified way
 \be
\label{eq:gpeg}
i\p_t{\psi(\bx,t)}=-\fl{1}{2}\btd^2 \psi(\bx,t)+
V(\bx)\psi(\bx,t)
+ \beta\; |\psi(\bx,t)|^2\psi(\bx,t), \qquad \bx\in {\Bbb R}^d,
\ee
where
\be
\label{eq:dhp:sec1}
\beta=\kappa\;\begin{cases}
\int_{{\Bbb R}^2} \psi_{23}^4(y,z)\;dy dz, \\
\int_{{\Bbb R}} \psi_3^4(z)\;dz, \\
1,
\end{cases}
\,
V(\bx)=\begin{cases}
 \fl{1}{2}\gm_x^2x^2,  & d=1, \\
 \fl{1}{2}\left(\gm_x^2x^2+\gm_y^2 y^2\right), & d=2, \\
 \fl{1}{2}\left(\gm_x^2x^2+\gm_y^2 y^2+\gm_z^2 z^2\right), &d=3;
\end{cases}
\ee
where $\gm_x\ge1$ is a constant and $\psi_{23}(y,z)\in L^2({\Bbb R^2})$
is often chosen to be the $x$-trace of the ground state $\phi_g(x,y,z)$ in 3D
as $\psi_{23}(y,z)=\left(\int_{\Bbb R}|\phi_g(x,y,z)|^2\,dx\right)^{1/2}$
which is usually approximated by the ground state of the corresponding
2D harmonic oscillator \cite{BaoJakschP,Pethick,PitaevskiiStringari}.
The normalization condition for (\ref{eq:gpeg}) is
\be
\label{eq:normg}
\int_{{\Bbb R}^d} \; |\psi(\bx,t)|^2\;d\bx=1,
\ee
and the energy of (\ref{eq:gpeg}) is given by
\be\label{eq:energyuniform}
E(\psi(\cdot,t)):=\int_{\Bbb R^d}\;\left[\frac12|\nabla\psi(\bx,t)|^2+V(\bx)|\psi(\bx,t)|^2
+\frac{\beta}{2}|\psi(\bx,t)|^4\right]\,d\bx.
\ee
For a weakly interacting condensate, choosing $\psi_{23}$ and $\psi_3$ as the ground states of the corresponding 2D and 1D harmonic oscillator \cite{BaoJakschP,Pethick,PitaevskiiStringari}, respectively,
we derive,
\be
\label{eq:ufw}
\beta:=\kappa \left\{\ba{ll}
 \fl{(\gm_y \gm_z)^{1/2}}{2\pi}, &\qquad d=1, \\
\sqrt{\fl{\gm_z}{2\pi}}, &\qquad d=2, \\
1, &\qquad d=3.\\
\ea\right.
\ee

\subsubsection{BEC on a ring} BEC on a ring has been realized by choosing Toroidal potential (3D harmonic oscillator +2D Gaussian potential) \cite{Ryu}:
\be\label{eq:ring}
V_{\rm tor}(\bx)= V_{\rm  ho}(\bx)+V_{\rm gau}(x,y),\quad V_{\rm gau} (x,y)=V_0e^{-2\frac{x^2+y^2}{w_0^2}},
\ee
where $V_{\rm gau}$ is produced by a laser beam, $w_0$ is the beam waist, and $V_0$ is related to the power of the plug-beam.

In the quasi-1D regime \cite{Ryu},  $\og_x=\og_y=\og_r$, the toroidal potential can be written in cylindrical coordinate $(r,\theta,z)$ as
\be
V_{\rm tor}(r,\theta,z)=\frac{m}{2}\og_r^2r^2+\frac{m}{2}\og_z^2z^2+V_0e^{-2\frac{r^2}{w_0^2}}.
\ee
When $\og_r,\og_z\gg1$, the dynamics of BEC in the ring trap (\ref{eq:ring}) would be confined in $r=R$ and $z=0$, where  $\frac{m}{2}\og_r^2r^2+V_0e^{-2\frac{r^2}{w_0^2}}$ attains the minimum at $R$. Then similar to the above  dimension reduction process and nondimensionlization, we can obtain the dimensionless 1D GPE for BEC on a ring as \cite{Halk}:
\be
i\p_t\psi(\theta,t)=-\frac12\p_{\theta\theta}\psi(\theta,t)+\beta|\psi|^2\psi(\theta,t),\quad \theta\in[0,2\pi],\quad t>0,
\ee
with periodic boundary condition, where $\psi:=\psi(\theta,t)$ is the wave function and $\beta$ is a dimensionless parameter.

\subsection{Outline of the review}
Concerning the GPE (\ref{eq:gpeg}), there are two basic issues,
the ground state and the dynamics.
 Mathematically speaking, the dynamics include the time dependent
 behavior of GPE, such as the well-posedness of the Cauchy problem,
 finite time blow-up,  stability of traveling waves, etc.
 The ground state  is usually defined as the minimizer of the
 energy functional (\ref{eq:energyuniform}) under the normalization
 constraint (\ref{eq:normg}).  In the remaining part of the paper,
 we will review the mathematical theories and numerical methods for
ground states and  dynamics of BECs.

In section \ref{sec:mathgpe}, we review the  theories of GPE for
single-component BEC. Existence and uniqueness, as well as other properties
for the ground states are presented. Well-posedness of the Cauchy
problem for GPE is also reviewed. The rigorous analysis
on the convergence rates for the  dimension reduction is introduced in
section \ref{subsec:dredrate}. After an overview on the mathematical
results for GPE, we list the numerical methods to find the ground
states and compute the dynamics for GPE in sections \ref{sec:numgs} and
\ref{sec:numdym}, respectively. The most popular way  for computing the ground states of BEC is the gradient flow with discrete normalization (or  imaginary time) method. Section \ref{sec:numgs} provides a solid mathematical background on the method and details on the full discretizations. For computing the dynamics of GPE, the traditional finite difference methods and the popular time splitting methods are taken into consideration in section \ref{sec:numdym}, with rigorous error analysis.

 In section \ref{sec:rotat}, we investigate the rotating BEC with quantized vortices. There exist critical rotating speeds for the vortex configuration. In order to compute the ground states and dynamics of  rotating BEC in the presence of the multi-scale  vortex structure, we report the efficient and accurate numerical methods in section \ref{sec:numrotat}. For fast rotating BEC, the semiclassical scaling is usually adopted other than the physical scaling used in the introduction. We demonstrate these two different scalings in section \ref{sec:semiclass}, for the whole space case (harmonic trap) and the bounded domain case (box potential). In fact, the semiclassical scaling is very useful in the case of Thomas-Fermi regime.

 Section \ref{sec:dipole} is devoted to the mathematical theory and numerical methods for dipolar BEC. There are both isotropic contact interactions (short range) and anisotropic dipole-dipole interactions (long range) in a dipolar BEC, and the dipolar GPE involves a highly singular kernel representing the dipole-dipole interaction. We overcome the difficulty caused by the singular kernel via a reformulation of the dipolar GPE, and carry out accurate and efficient numerical methods for dipolar BECs.  In section \ref{sec:2bec}, we consider a two component BEC, which is the simplest multi component BEC system. Ground state properties as well as dynamical properties are described. Efficient numerical methods are proposed by generalizing the existing methods for single component BEC. Finally, we briefly introduce some other important topics that are not covered in the current review in section \ref{sec:chall}, such as spinor BEC, Bogoliubov excitations and BEC at finite temperatu
 re.

Throughout the paper, we adopt the standard Sobolev spaces and write the $\|\cdot\|_p$ for standard $L^p(\Bbb R^d)$ norm when there is no confusion on the spatial variables. The notations are consistent in each section, and the meaning of notation remains the same if not specified.

\section{Mathematical theory for the Gross-Pitaevskii equation}\label{sec:mathgpe}
\setcounter{equation}{0}\setcounter{figure}{0}\setcounter{table}{0}
In this section, we consider the dimensionless GPE in $d$ ($d=1,2,3$) dimensions (\ref{eq:gpeg}),
\be\label{eq:gpe:sec2}
i\p_t\psi(\bx,t)=-\fl{1}{2}\btd^2 \psi(\bx,t)+
V(\bx)\psi(\bx,t)
+ \beta\; |\psi(\bx,t)|^2\psi(\bx,t),\quad \bx\in\Bbb R^d,
\ee
where $V(\bx)\ge0$ is a real-valued potential and $\beta\in\Bbb R$ is
treated as an arbitrary dimensionless parameter. The GPE (\ref{eq:gpe:sec2}) can be
generalized to any dimensions and many results presented here are valid in
higher dimensions, but we focus on the most relevant cases $d=1,2,3$ for BEC.

There are two important invariants, i.e., the  normalization (mass),
\be
\label{eq:norm:sec2}
N(\psi(\cdot,t)) =  \int_{\Bbb R^d}|\psi({\bx},t)|^2\; d{\bx} \equiv N(\psi_0)=
\int_{\Bbb R^d}|\psi({\bx},0)|^2\;d{\bx}=1, \quad t\geq0,
\ee
and the  energy per particle
\be
\label{eq:energy:sec2}
E(\psi(\cdot,t)) =  \int_{\Bbb R^d}\left[\fl{1}{2}|\nabla  \psi|^2+V({\bx})|\psi|^2
+\fl{\bt}{2}|\psi|^4\right]d{\bx}\equiv  E(\psi(\cdot,0)),\quad t\geq 0.
\ee
In fact, the energy functional  $E(\psi)$ can be split into three
parts, i.e. kinetic energy  $E_{\rm kin}(\psi)$, potential energy
$E_{\rm pot}(\psi)$ and  interaction energy $E_{\rm int}(\psi)$,
which are defined as
\bea
\label{eq:kinpot:sec2}
&&E_{\rm int}(\psi) =  \int_{\Bbb R^d}\fl{\bt}{2}|\psi({\bx},t)|^4d{\bx},
\qquad E_{\rm pot}(\psi) = \int_{\Bbb R^d}V({\bx})|\psi({\bx},t)|^2d{\bx},
\\
\label{eq:inten:sec2}
&&E_{\rm kin}(\psi) =  \int_{\Bbb R^d}\fl{1}{2}|\nabla\psi({\bx},t)|^2\;
d{\bx}, \quad E(\psi)  = E_{\rm kin}(\psi)+E_{\rm pot}(\psi)
+E_{\rm int}(\psi). \qquad
\eea
For convenience, we introduce the following function spaces:
\be\label{eq:funcspace:sec2}
L_V(\Bbb R^d)=\left\{\phi|\int_{\Bbb R^d}V(\bx)|\phi(\bx)|^2d\bx<\infty\right\},\quad
X:=X(\Bbb R^d)=H^1(\Bbb R^d)\cap L_V(\Bbb R^d).
\ee
\subsection{Ground states}
To find the stationary solution of (\ref{eq:gpe:sec2}), we write
\be
\label{eq:state-az:sec2}
\psi({\bx},t)=\phi({\bx})\; e^{-i\mu t},
\ee
where $\mu$ is the chemical potential of the condensate and
$\phi({\bx})$ is a function independent of time. Substituting
(\ref{eq:state-az:sec2}) into (\ref{eq:gpe:sec2}) gives the following equation
for $(\mu, \phi({\bx}))$:
\be
\label{eq:charactereq:sec2}
\mu\;\phi({\bx}) =  -\fl{1}{2}\Dt\phi({\bx})+V({\bx})\phi({\bx})
+\bt|  \phi({\bx})|^2\phi({\bx}), \qquad {\bx}\in \Bbb R^d,
\ee
under  the normalization condition
\be
\label{eq:gpenorm:sec2}
\|\phi\|_2^2:=\int_{\Bbb R^d}|\phi({\bx})|^2d{\bx}=1.
\ee
This is a  nonlinear eigenvalue problem with a constraint and any
eigenvalue  $\mu$ can be computed from its corresponding eigenfunction
$\phi({\bx})$ by
\bea
\label{eq:mu-energy:sec2}
\mu & = & \mu(\phi)
=\int_{\Bbb R^d}\left[\fl{1}{2}|\nabla\phi({\bx})|^2+
V({\bx})|\phi({\bx})|^2+\bt|\phi({\bx})|^4\right]d{\bx} \nn \\
&=&  E(\phi)+\int_{\Bbb R^d}\fl{\bt}{2}|\phi({\bx})|^4d{\bx}=E(\phi)
+E_{\rm  int}(\phi).
\eea

The ground state of a BEC is usually defined as the minimizer of  the
following minimization problem:

Find $ \phi_g\in S$ such that
\be
\label{eq:minp:sec2}
E_g:=E(\phi_g) = \min_{\phi\in S} E(\phi),
\ee
where $S=\{\phi \ |\ \|\phi\|_2=1, \ E(\phi)<\ift\}$ is the  unit sphere.

 It is easy to show that the ground state $\phi_g$ is an  eigenfunction
of the nonlinear eigenvalue problem.  Any eigenfunction of
(\ref{eq:charactereq:sec2}) whose energy is larger than that  of the ground
state is usually called as excited states in the  physics literatures.
\subsubsection{Existence}
In this section, we discuss the  existence and uniqueness of the ground state (\ref{eq:minp:sec2}).
Denote the best Sobolev constant $C_b$ in 2D as
\be\label{eq:bestcons:2d}
C_{b}=\inf_{0\neq f\in H^1(\Bbb R^2)}\frac{\|\nabla f\|_{L^2(\Bbb R^2)}^2\,\|f\|_{L^2(\Bbb R^2)}^2}{\|f\|_{L^4(\Bbb R^2)}^4}.
\ee
The best constant $C_b$ can be attained at
some $H^1$ function \cite{Weinstein} and it is crucial in
considering the existence of ground states in 2D.

For existence and uniqueness of the ground state, we have the following results.
\begin{theorem}\label{thm:gs}(Existence and uniqueness) Suppose  $V(\bx)\ge 0$ ($\bx\in\Bbb R^d$) satisfies the confining condition
\be\label{eq:confine:sec2}
\lim\limits_{|\bx|\to\infty}V(\bx)=\infty,
 \ee
 there exists a ground state $\phi_g\in S$ for (\ref{eq:minp:sec2}) if one of the following holds

(i)  $d=3$,  $\beta\ge0$;

(ii)  $d=2$,  $\beta>-C_{b}$;

(iii)  $d=1$, for all $\beta\in \Bbb R$.

\noindent Moreover, the ground state  can be chosen as nonnegative $|\phi_g|$ , and
$\phi_g=e^{i\theta}|\phi_g|$ for some constant  $\theta\in\Bbb R$. For $\beta\ge0$, the nonnegative ground state $|\phi_g|$ is unique. If potential $V(\bx)\in L_{\rm loc}^2$, the nonnegative ground state is strictly positive.

In contrast, there exists no ground state,  if one of the following holds:

(i$^\prime$)  $d=3$,  $\beta<0$;

(ii$^\prime$)  $d=2$,  $\beta\leq-C_{b}$.
\end{theorem}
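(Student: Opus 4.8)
The plan is to obtain existence by the direct method of the calculus of variations, to read off the phase and positivity of the minimizer from the diamagnetic inequality and the strong maximum principle, to prove uniqueness by convexity in the density, and to prove the nonexistence statements by a concentration scaling. The first step is to check that $E$ is bounded below on $S$ in cases (i)--(iii), which is exactly where the thresholds on $\beta$ enter. For $\beta\ge0$ this is immediate from $E\ge0$. In 2D with $-C_b<\beta<0$, the definition \eqref{eq:bestcons:2d} gives $\|\phi\|_{L^4}^4\le C_b^{-1}\|\nabla\phi\|_{L^2}^2$ on $S$, hence
\[
E(\phi)\ \ge\ \tfrac12\Bigl(1+\tfrac{\beta}{C_b}\Bigr)\|\nabla\phi\|_{L^2}^2+\int_{\mathbb{R}^2}V|\phi|^2\ \ge\ 0 .
\]
In 1D the Gagliardo--Nirenberg inequality $\|\phi\|_{L^4}^4\le C\,\|\phi'\|_{L^2}\|\phi\|_{L^2}^3$ together with Young's inequality absorbs the quartic term into the kinetic energy, again yielding a finite lower bound. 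In all three cases the same computation shows that any minimizing sequence is bounded in $X=H^1\cap L_V$.

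Next comes the compactness step. The key lemma is that the confining condition \eqref{eq:confine:sec2} makes the embedding $X\hookrightarrow L^p(\mathbb{R}^d)$ compact for $2\le p<2^*$ (with $2^*=6$ if $d=3$, $2^*=\infty$ if $d\le2$): on each ball $B_R$ a bound in $X$ gives a bound in $H^1(B_R)$, hence $L^p(B_R)$ precompactness, while $\int_{\{V>R\}}|\phi|^2\le R^{-1}\int V|\phi|^2$ controls the mass at infinity uniformly, and interpolation with the Sobolev bound upgrades this to $L^p$; a diagonal argument concludes. Given this, I take a minimizing sequence $(\phi_n)\subset S$, pass to a subsequence with $\phi_n\rightharpoonup\phi_g$ in $X$ and $\phi_n\to\phi_g$ strongly in $L^2$ and $L^4$. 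Strong $L^2$-convergence gives $\|\phi_g\|_{L^2}=1$, so $\phi_g\in S$; weak lower semicontinuity of $\phi\mapsto\|\nabla\phi\|_{L^2}^2$, Fatou's lemma for $\int V|\phi|^2$, and $L^4$-convergence of the interaction term give $E(\phi_g)\le\liminf_n E(\phi_n)=E_g$, so $\phi_g$ is a ground state and, by a Lagrange multiplier argument, solves \eqref{eq:charactereq:sec2} with $\mu=\mu(\phi_g)$.

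For the structure of a minimizer, the diamagnetic inequality $|\nabla|\phi_g||\le|\nabla\phi_g|$ a.e. shows $|\phi_g|\in S$ and $E(|\phi_g|)\le E(\phi_g)=E_g$, so $|\phi_g|$ is a ground state too; writing $\phi_g=|\phi_g|e^{iS}$, equality in the diamagnetic inequality forces $\nabla S=0$ on $\{|\phi_g|>0\}$, whence $\phi_g=e^{i\theta}|\phi_g|$ for a constant $\theta$. Setting $u=|\phi_g|\ge0$, it solves $-\tfrac12\Delta u+Vu+\beta u^3=\mu u$, i.e. $-\Delta u=c\,u$ with $c=2(\mu-V-\beta u^2)$; when $V\in L^2_{\rm loc}$ one has $c\in L^q_{\rm loc}$ with $q>d/2$ for $d\le3$ (using $u\in H^1\subset L^6_{\rm loc}$), so the strong maximum principle / Harnack inequality forces the nonnegative nontrivial $u$ to be strictly positive. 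For uniqueness when $\beta\ge0$ I would pass to the density $\rho=|\phi|^2$ and use $E(\phi)=\int_{\mathbb{R}^d}[\tfrac12|\nabla\sqrt\rho|^2+V\rho+\tfrac\beta2\rho^2]$: on the convex set $\{\rho\ge0:\int\rho=1,\ \sqrt\rho\in X\}$ the first term is convex in $\rho$, the second linear, the third convex and strictly convex when $\beta>0$; hence the minimizing density, and therefore the nonnegative ground state, is unique. For $\beta=0$ uniqueness follows instead from the simplicity of the bottom eigenvalue of $-\tfrac12\Delta+V$.

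Finally, nonexistence is by scaling. For $d=3$, $\beta<0$, take $0\ne\varphi\in C_c^\infty$ with $\|\varphi\|_{L^2}=1$ and $\varphi_\lambda=\lambda^{3/2}\varphi(\lambda\,\cdot)\in S$; then $E(\varphi_\lambda)=\tfrac{\lambda^2}2\|\nabla\varphi\|_{L^2}^2+\tfrac{\beta\lambda^3}2\|\varphi\|_{L^4}^4+\int V|\varphi_\lambda|^2\to-\infty$ as $\lambda\to\infty$ since $\int V|\varphi_\lambda|^2$ stays bounded, so $E_g=-\infty$ and no minimizer exists. For $d=2$, $\beta\le-C_b$, take $\varphi_\lambda=\lambda\varphi(\lambda\,\cdot)\in S$, so $E(\varphi_\lambda)=\lambda^2\bigl(\tfrac12\|\nabla\varphi\|_{L^2}^2+\tfrac\beta2\|\varphi\|_{L^4}^4\bigr)+\int V|\varphi_\lambda|^2$; a near-optimizer of \eqref{eq:bestcons:2d} makes the bracket strictly negative when $\beta<-C_b$, giving $E_g=-\infty$; when $\beta=-C_b$ the same estimates leave $E_g$ finite, but a would-be minimizer would have to be simultaneously an optimizer of \eqref{eq:bestcons:2d} and supported where $V$ attains its minimum, which is impossible since those optimizers are everywhere positive. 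I expect the main obstacle to be precisely this critical case $\beta=-C_b$ in 2D, together with a careful proof of the compact embedding $X\hookrightarrow L^p$ and, for the strict positivity, bookkeeping of the integrability exponents of the Euler--Lagrange coefficient in the low dimensions.
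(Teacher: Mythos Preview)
Your proof is correct and follows essentially the same route as the paper: boundedness below via Gagliardo--Nirenberg in each dimension, compactness of $X\hookrightarrow L^p$ from the confining potential, weak lower semicontinuity for existence, the diamagnetic inequality and convexity in $\rho=|\phi|^2$ for the phase and uniqueness, and concentration scaling for nonexistence. The only cosmetic differences are that the paper uses the exact optimizer of \eqref{eq:bestcons:2d} (rather than a near-optimizer) and a Dirac-concentration argument for the critical case $\beta=-C_b$, and it does not separate out $\beta=0$ in the uniqueness step.
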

To prove the theorem, we present the following lemmas.
\begin{lemma}\label{lem:comp:sec2}Suppose that  $V(\bx)\ge 0$ ($\bx\in\Bbb R^d$) satisfies
$\lim\limits_{|\bx|\to\infty}V(\bx)=\infty$, the embedding $X\hookrightarrow L^p(\Bbb R^d)$ is compact,  where $p\in[2,\infty]$ for $d=1$,  $p\in[2,\infty)$ for $d=2$, and $p\in [2,6)$ for  $d=3$.
\end{lemma}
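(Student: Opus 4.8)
The continuous embedding is immediate: $X\hookrightarrow H^1(\Bbb R^d)$ with norm control, and the Sobolev inequality gives $H^1(\Bbb R^d)\hookrightarrow L^p(\Bbb R^d)$ for the stated exponents (all $p\in[2,\infty]$ when $d=1$, all $p\in[2,\infty)$ when $d=2$, and $p\in[2,6]$ when $d=3$), so $X\hookrightarrow L^p$ is bounded. The real content is compactness, and the plan is the standard two‑step scheme: local compactness from Rellich--Kondrachov, plus a \emph{uniform‑in‑$n$ tail estimate at spatial infinity} supplied by the confining condition on $V$. Concretely, I would take a sequence $(u_n)\subset X$ with $\|u_n\|_{H^1}^2+\int_{\Bbb R^d}V|u_n|^2\,d\bx\le M$, pass to a subsequence with $u_n\rightharpoonup u$ in $H^1(\Bbb R^d)$ and $u_n\to u$ in $L^2_{\rm loc}$ (hence a.e.), note $u\in X$ by Fatou, and then show $u_n\to u$ in $L^p(\Bbb R^d)$.

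\textbf{Tail control.} For $R>0$ put $m_R:=\inf_{|\bx|>R}V(\bx)$; the hypothesis $\lim_{|\bx|\to\infty}V(\bx)=\infty$ means exactly $m_R\to\infty$ as $R\to\infty$, so, uniformly in $n$,
\be
\int_{|\bx|>R}|u_n|^2\,d\bx\ \le\ \frac{1}{m_R}\int_{|\bx|>R}V|u_n|^2\,d\bx\ \le\ \frac{M}{m_R}\ \longrightarrow\ 0\qquad\text{as }R\to\infty.
\ee
For $2<p$ I would interpolate this $L^2$‑smallness against a fixed higher Lebesgue norm: $\|u_n\|_{L^p(|\bx|>R)}\le \|u_n\|_{L^2(|\bx|>R)}^{1-\theta}\,\|u_n\|_{L^q(|\bx|>R)}^{\theta}$ with $p<q<\infty$ (resp. $q<6$ when $d=3$) and the corresponding $\theta\in(0,1)$, then bound $\|u_n\|_{L^q}\le C\|u_n\|_{H^1}\le C\sqrt M$ by Sobolev. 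Hence the $L^p$‑mass of $u_n$ outside $B_R$ tends to $0$ uniformly in $n$; it is exactly this interpolation that forces $p<6$ when $d=3$. The one exceptional case $d=1$, $p=\infty$ is handled directly via $\sup_{|x|>R}|u_n(x)|^2\le 2\|u_n\|_{L^2(|x|>R)}\|u_n'\|_{L^2(\Bbb R)}\le 2M/\sqrt{m_R}\to0$.

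\textbf{Assembling the conclusion.} On each ball $B_R=\{|\bx|<R\}$ the functions $u_n|_{B_R}$ are bounded in $H^1(B_R)$, so Rellich--Kondrachov (resp. the compact embedding $H^1(B_R)\hookrightarrow\hookrightarrow C(\overline{B_R})$ when $d=1$, $p=\infty$) together with a diagonal extraction over $R=1,2,\dots$ yields a subsequence with $u_n\to u$ in $L^p(B_R)$ for every $R$. By Fatou the limit inherits the uniform tail bound, $\|u\|_{L^p(|\bx|>R)}\le\liminf_n\|u_n\|_{L^p(|\bx|>R)}$. Now, given $\varepsilon>0$, first choose $R$ with $\|u_n\|_{L^p(|\bx|>R)}<\varepsilon$ for all $n$ (hence also for $u$), then $n$ large with $\|u_n-u\|_{L^p(B_R)}<\varepsilon$; the triangle inequality gives $\|u_n-u\|_{L^p(\Bbb R^d)}\le 3\varepsilon$, so $u_n\to u$ strongly in $L^p(\Bbb R^d)$.

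\textbf{Main obstacle.} There is no genuine analytic difficulty once the scheme is set up, but the essential point — and the reason the confining hypothesis cannot be dropped — is that $H^1(\Bbb R^d)\hookrightarrow L^p(\Bbb R^d)$ is never compact on the whole space (a fixed bump translated to infinity is a bounded, non‑precompact sequence); the divergence $m_R\to\infty$ is precisely what kills this escape of mass. The only thing to watch is the exponent bookkeeping: one must stay strictly below the critical Sobolev exponent both in the local Rellich step and in the tail interpolation, which is what excludes $p=6$ in dimension three while leaving all finite $p$ (and $p=\infty$ in $1$D) admissible.
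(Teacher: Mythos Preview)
Your proof is correct and follows essentially the same route as the paper: uniform tail smallness from the confining potential combined with local Rellich--Kondrachov compactness. The paper's only economy is to reduce to $p=2$ first (concluding via weak convergence plus norm convergence in the Hilbert space $L^2$) and then interpolate against the fixed $H^1$-bound to reach all other exponents, rather than interpolating at the tail level as you do.
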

\begin{proof} It suffices to prove the case for $p=2$ and the other cases can be obtained by interpolation in view of the Sobolev inequalities. Since $X$ is a Hilbert space, we need show  that any weakly convergent sequence in $X$ has a strong convergent subsequence in $L^2(\Bbb R^d)$. Taking a bounded sequence  $\{\phi^n\}_{n=1}^\infty\subset X$ such that
\be
\phi^n\rightharpoonup \phi \text{ in } X,
\ee
 in order to prove the strong $L^2(\Bbb R^d)$ convergence of the sequence,  we need only prove that \be
\|\phi^n\|_{L^2(\Bbb R^d)}\to \|\phi\|_{L^2(\Bbb R^d)}.
\ee
Using the weak convergence, there exists $C>0$ such that $\int_{\Bbb R^d}V(\bx)|\phi^n|^2\,d\bx\leq C$.
For any $\vep>0$, from $\lim\limits_{|\bx|\to\infty}V(\bx)=\infty$, there exists $R>0$ such that
$V(\bx)\ge \frac{C}{\vep}$ for $|\bx|\ge R$,
which implies that
\be\label{eq:part1:sec2}
\int_{|\bx|\ge R}|\phi^n|^2\leq \vep.
\ee
For $|\bx|\ge R$, applying Sobolev embedding theorem, we obtain
\be\label{eq:part2:sec2}
\int_{|\bx|\leq R}|\phi|^2\,d\bx=\lim\limits_{n\to\infty}\int_{|\bx|\leq R}|\phi^n|^2\,d\bx.
\ee
Combining (\ref{eq:part1:sec2}) and (\ref{eq:part2:sec2}) together as well as the lower semi-continuity  of the $L^2(\Bbb R^d)$ norm, we have
\be
\limsup\limits_{n\to\infty}\|\phi^n\|_{L^2(\Bbb R^d)}^2-\vep\leq \|\phi\|_{L^2(\Bbb R^d)}^2\leq \liminf\limits_{n\to\infty}\|\phi^n\|_{L^2(\Bbb R^d)}^2.
\ee
Hence we get $\|\phi^n\|_{L^2(\Bbb R^d)}\to \|\phi\|_{L^2(\Bbb R^d)}$
and the strong convergence  in $L^2(\Bbb R^d)$ holds true. The conclusion then follows.
\end{proof}
The following lemma ensures that the ground state must be nonnegative.
\begin{lemma}\label{lem:pos:sec2} For any $\phi\in X(\Bbb R^d)$ and energy $E(\cdot)$ (\ref{eq:energy:sec2}), we have
\be
E(\phi)\ge E(|\phi|),
\ee
and the equality holds iff $\phi=e^{i\theta}|\phi|$ for some constant $\theta\in\Bbb R$.
\end{lemma}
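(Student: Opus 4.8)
The plan is to isolate the only part of the energy $E$ (see \eqref{eq:inten:sec2}) that is sensitive to the phase of $\phi$, namely the kinetic term, and to bound it by the diamagnetic inequality. Since $\big||\phi|\big| = |\phi|$ pointwise, the potential and interaction parts are unaffected: $E_{\rm pot}(|\phi|) = E_{\rm pot}(\phi)$ and $E_{\rm int}(|\phi|) = E_{\rm int}(\phi)$, and in particular $|\phi| \in L_V(\mathbb{R}^d)$ whenever $\phi\in L_V(\mathbb{R}^d)$. Assuming, as we may, that $E(\phi) < \infty$, proving $E(\phi) \ge E(|\phi|)$ therefore reduces --- once $|\phi|\in H^1(\mathbb{R}^d)$ is known --- to the single estimate $\|\nabla|\phi|\|_2 \le \|\nabla\phi\|_2$.

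To get this I would regularize. For $\eta > 0$ put $\rho_\eta := \sqrt{|\phi|^2 + \eta^2}$; a direct computation gives $\rho_\eta\in H^1_{\rm loc}(\mathbb{R}^d)$ with $\nabla\rho_\eta = \rho_\eta^{-1}\,{\rm Re}(\overline{\phi}\,\nabla\phi)$, hence the pointwise bound $|\nabla\rho_\eta|\le \rho_\eta^{-1}|\phi|\,|\nabla\phi|\le |\nabla\phi|$ a.e. Letting $\eta\to 0^+$ one has $\rho_\eta - \eta \to |\phi|$ in $L^2$ (using $0\le\rho_\eta-\eta\le|\phi|$ and dominated convergence, noting $\phi\in L^4$ for $d\le 3$) while $\nabla(\rho_\eta-\eta)$ stays bounded in $L^2$, so $|\phi|\in H^1(\mathbb{R}^d)$ and, by weak lower semicontinuity of the $L^2$ norm, $\|\nabla|\phi|\|_2 \le \|\nabla\phi\|_2$; in fact the limit procedure delivers the sharper a.e. identity $\nabla|\phi| = {\rm Re}\big(\overline{\phi}\,\nabla\phi/|\phi|\big)$ on $\{\phi\neq 0\}$ and $\nabla|\phi| = 0$ a.e. on $\{\phi = 0\}$, whence $|\nabla|\phi||\le|\nabla\phi|$ a.e. This already yields $E(\phi)\ge E(|\phi|)$, which is all that is needed to turn a minimizer of \eqref{eq:minp:sec2} into a nonnegative one.

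For the equality statement, $E(\phi) = E(|\phi|)$ forces $|\nabla|\phi|| = |\nabla\phi|$ a.e. On the open set $\Omega := \{\phi\neq 0\}$, choosing a local real lift $\theta$ of $\arg\phi$ gives the orthogonal decomposition $|\nabla\phi|^2 = |\nabla|\phi||^2 + |\phi|^2\,|\nabla\theta|^2$ a.e., so equality forces $\nabla\theta = 0$ a.e. on $\Omega$; together with $\nabla\phi = 0$ a.e. on $\mathbb{R}^d\setminus\Omega$ this makes $\theta$ a.e. constant, i.e. $\phi = e^{i\theta}|\phi|$ for some $\theta\in\mathbb{R}$. The reverse implication is trivial, since multiplying by a unimodular constant changes neither $|\phi|$ nor $|\nabla\phi|$.

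The main obstacle is the equality case rather than the inequality: one must make rigorous the polar decomposition of $\nabla\phi$ at the Sobolev level --- equivalently, the measurable selection of a phase $\theta$ with $\nabla\theta\in L^2_{\rm loc}(\Omega)$ --- and, since the nodal set of $\phi$ may have positive measure and split $\Omega$ into several components, the clean conclusion ``$\phi = e^{i\theta}|\phi|$ with a \emph{single} constant $\theta$'' ultimately uses connectedness of $\Omega$; for the ground states in Theorem \ref{thm:gs} this is supplied by their strict positivity when $V\in L^2_{\rm loc}$. The regularization step, by contrast, is entirely elementary.
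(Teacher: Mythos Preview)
Your approach matches the paper's: reduce to the kinetic term and invoke the diamagnetic inequality $\|\nabla|\phi|\|_2\le\|\nabla\phi\|_2$, noting that the potential and interaction energies depend only on $|\phi|$. The paper's proof is a one-line citation to Lieb--Loss for both the inequality and its equality case, whereas you supply a self-contained regularization argument and a polar-decomposition analysis of equality.

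Your caution about the equality case is well placed and in fact sharper than the paper. As you observe, $|\nabla|\phi||=|\nabla\phi|$ a.e.\ forces $\nabla\theta=0$ only on each connected component of $\{\phi\neq0\}$; if that set is disconnected (take $\phi$ supported on two disjoint balls with distinct constant phases) one has $E(\phi)=E(|\phi|)$ without $\phi=e^{i\theta}|\phi|$ for a single $\theta$. So the lemma's equality clause, as stated for arbitrary $\phi\in X$, is not quite true without a connectedness hypothesis. Your remedy---that for ground states the strict positivity (available when $V\in L^2_{\rm loc}$) makes $\{\phi_g\neq0\}=\mathbb{R}^d$---is exactly the right way to close this for the intended application in Theorem~\ref{thm:gs}.
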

\begin{proof} Noticing the inequality for $\phi\in H^1(\Bbb R^d)$ ($d\in\Bbb N$)\cite{LiebLoss},
\be
\|\nabla |\phi|\|_{L^2(\Bbb R^d)}\leq \|\nabla \phi\|_{L^2(\Bbb R^d)},
\ee
where the equality holds iff $\phi=e^{i\theta}|\phi|$ for some constant $\theta\in\Bbb R$, a direct application implies the conclusion. \end{proof}
The minimization problem (\ref{eq:minp:sec2}) is nonconvex, but it can be transformed to a convex minimization problem through the following lemma when $\beta\ge0$.
\begin{lemma}(\cite{LiebSeiringerPra2000})\label{lem:convex:sec2} Considering the  density $\rho(\bx)=|\phi(\bx)|^2\ge0$, for $\sqrt{\rho}\in S$,  the energy $E(\sqrt{\rho})$ (\ref{eq:energy:sec2}) is strictly convex in $\rho$ if $\beta\ge0$.
\end{lemma}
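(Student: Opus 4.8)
The plan is to split $E(\sqrt\rho)$ into its three pieces and check convexity of each, the only substantive one being the kinetic term. Fix densities with $\sqrt{\rho_0},\sqrt{\rho_1}\in S$, let $\tht\in(0,1)$ and $\rho_\tht=(1-\tht)\rho_0+\tht\rho_1$. The potential energy $\int_{\mathbb{R}^d}V\rho$ is \emph{affine} in $\rho$, hence convex; the interaction energy $\tfrac{\beta}{2}\int_{\mathbb{R}^d}\rho^2$ is convex in $\rho$ whenever $\beta\ge0$ since $t\mapsto t^2$ is convex, and is \emph{strictly} convex when $\beta>0$ (equality in $\int\rho_\tht^2\le(1-\tht)\int\rho_0^2+\tht\int\rho_1^2$ then forces $\rho_0=\rho_1$ a.e.). So the claim reduces to: the kinetic energy $E_{\rm kin}(\sqrt\rho)=\tfrac12\int_{\mathbb{R}^d}|\nabla\sqrt\rho|^2\,d\bx=\tfrac18\int_{\mathbb{R}^d}\frac{|\nabla\rho|^2}{\rho}\,d\bx$ is convex in $\rho$, and the combination is strictly convex in the stated cases.

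The key point for the kinetic term is the pointwise joint convexity of $g(a,\mathbf b)=|\mathbf b|^2/a$ on $(0,\infty)\times\mathbb{R}^d$ — the perspective function of the strictly convex map $\mathbf b\mapsto|\mathbf b|^2$. Since $\nabla$ is linear, $\nabla\rho_\tht=(1-\tht)\nabla\rho_0+\tht\nabla\rho_1$; applying the Cauchy--Schwarz inequality coordinatewise to the pairs $\big(\sqrt{(1-\tht)\rho_0},\,\sqrt{\tht\rho_1}\big)$ and $\big(\sqrt{1-\tht}\,\partial_k\rho_0/\sqrt{\rho_0},\,\sqrt{\tht}\,\partial_k\rho_1/\sqrt{\rho_1}\big)$ and summing over $k=1,\dots,d$ gives
\[
\frac{|\nabla\rho_\tht(\bx)|^2}{\rho_\tht(\bx)}\ \le\ (1-\tht)\,\frac{|\nabla\rho_0(\bx)|^2}{\rho_0(\bx)}+\tht\,\frac{|\nabla\rho_1(\bx)|^2}{\rho_1(\bx)}
\]
for a.e.\ $\bx$ with $\rho_\tht(\bx)>0$, with equality iff $\nabla\rho_0(\bx)/\rho_0(\bx)=\nabla\rho_1(\bx)/\rho_1(\bx)$. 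On $\{\rho_0=0\}$ one has $\nabla\rho_0=2\sqrt{\rho_0}\,\nabla\sqrt{\rho_0}=0$ a.e.\ (as $\sqrt{\rho_0}\in H^1$), similarly for $\rho_1$, so $\{\rho_\tht=0\}$ contributes nothing to any of the three integrals and the inequality may be integrated over $\mathbb{R}^d$; a standard truncation/mollification (e.g.\ working with $\rho_i+\varepsilon$ and letting $\varepsilon\to0$) first secures that $\sqrt{\rho_\tht}\in H^1$. This yields convexity of $E_{\rm kin}(\sqrt\rho)$, hence of $E(\sqrt\rho)$, in $\rho$ for every $\beta\ge0$.

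For strict convexity: when $\beta>0$ it is immediate from the interaction term. When $\beta=0$, equality in the convexity of $E(\sqrt\rho)$ forces equality a.e.\ in the pointwise kinetic inequality, i.e.\ $\nabla\log\rho_0=\nabla\log\rho_1$ a.e.\ on $\{\rho_0>0,\ \rho_1>0\}$; restricting to a.e.-positive densities — the class relevant for ground states, since by Theorem~\ref{thm:gs} a confined ground state with $\beta\ge0$ is strictly positive when $V\in L^2_{\rm loc}$ — this gives $\rho_0=c\,\rho_1$ a.e.\ for some constant $c>0$, and the normalization $\int_{\mathbb{R}^d}\rho_i=\|\sqrt{\rho_i}\|_2^2=1$ forces $c=1$, i.e.\ $\rho_0=\rho_1$.

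I expect the main obstacle to be exactly the kinetic term: making the pointwise joint-convexity (perspective) inequality rigorous, the bookkeeping on the null sets where $\rho_0,\rho_1$ or $\rho_\tht$ vanish (where $|\nabla\rho|^2/\rho$ is only meaningful through $|\nabla\sqrt\rho|^2$), the a priori $H^1$-regularity of $\sqrt{\rho_\tht}$, and extracting the sharp equality condition needed for the strict statement. The potential term (affine) and the interaction term (convexity/strict convexity of $t\mapsto t^2$) are routine by comparison.
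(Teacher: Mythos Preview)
Your approach is essentially identical to the paper's: both split the energy into potential (affine), interaction (convex quadratic), and kinetic parts, and both handle the kinetic term via a pointwise Cauchy--Schwarz inequality --- the paper writes it as
\[
|\nabla\phi_\tht|^2=\frac{\bigl|\sqrt{\tht\rho_1}\,\sqrt{\tht}\,\nabla\phi_1+\sqrt{(1-\tht)\rho_2}\,\sqrt{1-\tht}\,\nabla\phi_2\bigr|^2}{\tht\rho_1+(1-\tht)\rho_2}\le \tht|\nabla\phi_1|^2+(1-\tht)|\nabla\phi_2|^2,
\]
which is exactly your perspective inequality after the substitution $\partial_k\rho_j/\sqrt{\rho_j}=2\partial_k\phi_j$. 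Your discussion of the null sets, the $H^1$-regularity of $\sqrt{\rho_\tht}$, and especially the equality case for $\beta=0$ (requiring positivity, then $\nabla\log\rho_0=\nabla\log\rho_1$ a.e., hence $\rho_0=\rho_1$ via normalization) is more careful than the paper's, which simply notes convexity of the kinetic term ``with possible approximation procedure'' and does not spell out the strictness argument.
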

\begin{proof}The potential energy (\ref{eq:kinpot:sec2}) is linear in $\rho$ and  the interaction energy  (\ref{eq:kinpot:sec2}) is quadratic in $\rho$. Hence, $E_{\rm pot}+E_{\rm int}$ is convex in $\rho$.
For $\phi_1(\bx)=\sqrt{\rho_1(\bx)},\phi_2(\bx)=\sqrt{\rho_2(\bx)}\in S$ ($\rho_1,\rho_2\ge0$), we have $\phi_\theta(\bx)=\sqrt{\theta\rho_1(\bx)+(1-\theta)\rho_2(\bx)}\in S$ for any $\theta\in(0,1)$. Using Cauchy inequality, we get
\begin{align*}
\left|\nabla\phi_\theta(\bx)\right|^2=&
\frac{\left|\sqrt{\theta\rho_1(\bx)}\sqrt{\theta}\nabla\phi_1(\bx)
+\sqrt{(1-\theta)\rho_2(\bx)}\sqrt{1-\theta}\nabla\phi_2(\bx)\right|^2}
{\theta\rho_1(\bx)+(1-\theta)\rho_2(\bx)}\\
\leq&\frac{\left(\theta\rho_1(\bx)+(1-\theta)\rho_2(\bx)\right)
\left(\theta|\nabla\phi_1(\bx)|^2
+(1-\theta)|\nabla\phi_2(\bx)|^2\right)}
{\theta\rho_1(\bx)+(1-\theta)\rho_2(\bx)}\\
=&\theta|\nabla\phi_1(\bx)|^2
+(1-\theta)|\nabla\phi_2(\bx)|^2,
\end{align*}
which implies the convexity of the kinetic energy $E_{\rm kin}$ (\ref{eq:inten:sec2})  (with possible approximation procedure). The conclusion then follows.
\end{proof}

\noindent{\it{Proof of Theorem \ref{thm:gs}}:} We separate the proof into the existence and nonexistence parts.

(1) Existence. First, we claim that the energy $E$ (\ref{eq:energy:sec2}) is bounded below under the
 assumptions. Case (i) is clear. For case (ii), using the constraint $\|\phi\|_2^2=1$ and
 Gagliardo-Nirenberg inequality, we have
\begin{equation*}
\beta\|\phi\|_4^4
\ge -\|\phi\|_2^2\cdot\|\nabla\phi\|_2^2=-\|\nabla\phi\|_2^2.
\end{equation*}
For case (iii), using Cauchy inequality
and Sobolev inequality, for any $\vep>0$, there exists $C_\vep>0$ such that
\begin{equation*}
\|\phi\|_4^4\leq \|\phi\|_{\infty}^2\|\phi\|_2^2\leq \|\phi\|_{\infty}^2
\leq \|\nabla\phi\|_2\|\phi\|_2\leq \vep \|\nabla\phi\|_2^2+C_\vep,
\end{equation*}
which yields the claim.  Hence, in all cases, we can take a sequence $\{\phi^n\}_{n=1}^\infty$ minimizing the energy $E$ in $S$, and the sequence is uniformly bounded in $X$. Taking a weakly convergent subsequence (denoted as the original sequence for simplicity)  in $X$, we have
 \be
 \phi^n\rightharpoonup \phi^\infty,\quad \text{weakly in } X.
 \ee
 Lemma \ref{lem:comp:sec2} ensures that $\left\{\phi^n\right\}_{n=1}^\infty$ converges  to $\phi^\infty$ in $L^p$ where $p$ is given in Lemma \ref{lem:comp:sec2}. Combining the lower-semi-continuity of the $H^1$ and $L_V$ norms, we  conclude that $\phi^\infty\in S$ is a ground state \cite{LiebSeiringerPra2000}. Lemma \ref{lem:pos:sec2} ensures that the ground state can be chosen as the nonnegative one. Actually, the nonnegative ground state is strictly positive \cite{LiebSeiringerPra2000}.  The uniqueness comes from the strict convexity of the energy in Lemma \ref{lem:convex:sec2}.

 (2) Nonexistence. Firstly, we consider the case $d=3$, i.e. case (i$^\prime$). If $\beta<0$, let $\phi(\bx)=\pi^{-\frac34}e^{-|\bx|^2/2}\in S$ and denote
 \be
 \phi^{\vep}(\bx)=\vep^{-3/2}\phi(\bx/\vep)\in S,\quad \vep>0,
 \ee
we find
 \be
 E(\phi^\vep)=\frac{C_1}{\vep^2}+\frac{\beta C_2}{\vep^3}+C_3+O(1), \quad C_1,C_2>0.
 \ee
 Hence $E(\phi^\vep)\to -\infty$ as $\vep\to0^+$ which shows that there exists no ground state.

 Secondly, we consider the case $d=2$. Let $\phi_b(\bx)$ ($\bx\in\Bbb R^2$) be  the smooth, radial symmetric (decreasing) function such that the best constant $C_{b}$ is attained in (\ref{eq:bestcons:2d}). If $\beta<-C_b$, let $\phi_b^\vep(\bx)=\vep^{-1}\phi_b(\bx/\vep)$ ($\vep>0$), and we have
 \be
 E(\phi_b^\vep)=\frac{\beta+C_{b}}{2\vep^2}+C_4+O(1)\text{ as } \vep\to0^+.
 \ee
  As $\vep\to0^+$, $ E(\phi_b^\vep)\to -\infty$, which shows that there exists no ground state.  For $\beta=-C_b$, as $\vep\to 0^+$, $\phi_b^\vep$ will converge to the Dirac distribution and the infimum of the energy $E$ will be the minimal of $V(\bx)$ (suppose $V(\bx)$ take minimal at origin), given by the sequence $\phi_b^\vep$. Thus, there exists no ground state for $\beta=-C_b$. The proof is complete.
$\hfill$ $\Box$

\begin{remark}The conclusions in Theorem \ref{thm:gs} hold for potentials satisfying the confining condition, including the box potential as in (\ref{eq:box3d}). Since box potentials are not in $L^2_{\rm loc}$, there exists zeros  in the ground state at the points where $V(\bx)=+\infty$. Results for the 3D case   were first obtained by Lieb et al. \cite{LiebSeiringerPra2000}.
\end{remark}

\subsubsection{Properties of ground states}
In this section, when we refer to the ground state, the conditions guaranteeing the existence in Theorem \ref{thm:gs} are always assumed and potentials are locally bounded.

For the ground state $\phi_g\in S$, we have the following Virial theorem when $V(\bx)$ is homogenous.
 \begin{theorem}(Virial identity) Suppose $V(\bx)$ ($\bx\in\Bbb R^d$, $d=1,2,3$) is homogenous of order $s>0$, i.e. $V(\lambda\bx)=\lambda^sV(\bx)$ for all $\lambda\in\Bbb R$, then the ground state solution $\phi_g\in S$ for (\ref{eq:minp:sec2}) satisfies
 \be\label{eq:virial:sec2}
 2E_{\rm kin}(\phi_g)-s\;E_{\rm pot}(\phi_g)+d\;E_{\rm int}(\phi_g)=0.
 \ee
 \end{theorem}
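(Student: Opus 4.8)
The plan is to use the minimality of $\phi_g$ against an $L^2$-norm-preserving dilation, which turns the infinite-dimensional minimization into a one-dimensional optimization in the scaling parameter and automatically kills the Lagrange multiplier (chemical potential) contribution. For $\lambda>0$ I would set
\be
\phi_g^\lambda(\bx):=\lambda^{d/2}\,\phi_g(\lambda\bx),\qquad \bx\in\Bbb R^d,
\ee
so that a change of variables gives $\|\phi_g^\lambda\|_2^2=\|\phi_g\|_2^2=1$ and $\phi_g^1=\phi_g$.

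First I would compute how each of the three parts of the energy in (\ref{eq:kinpot:sec2})--(\ref{eq:inten:sec2}) scales. From $\nabla\phi_g^\lambda(\bx)=\lambda^{d/2+1}(\nabla\phi_g)(\lambda\bx)$ and the substitution $\by=\lambda\bx$ one gets $E_{\rm kin}(\phi_g^\lambda)=\lambda^2E_{\rm kin}(\phi_g)$ and $E_{\rm int}(\phi_g^\lambda)=\lambda^dE_{\rm int}(\phi_g)$; for the potential part the homogeneity of $V$ (used with the positive factor $1/\lambda$, so that $V(\by/\lambda)=\lambda^{-s}V(\by)$) gives $E_{\rm pot}(\phi_g^\lambda)=\lambda^{-s}E_{\rm pot}(\phi_g)$. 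Since $\phi_g\in X$ has $E_{\rm kin}(\phi_g),E_{\rm pot}(\phi_g),E_{\rm int}(\phi_g)<\infty$, each scaled term is finite, hence $\phi_g^\lambda\in S$ for all $\lambda>0$, and
\be
f(\lambda):=E(\phi_g^\lambda)=\lambda^2E_{\rm kin}(\phi_g)+\lambda^{-s}E_{\rm pot}(\phi_g)+\lambda^dE_{\rm int}(\phi_g)
\ee
is a smooth function on $(0,\infty)$.

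The identity (\ref{eq:virial:sec2}) then follows from a first-variation argument: because $\phi_g$ minimizes $E$ over $S$ and the whole family $\{\phi_g^\lambda\}_{\lambda>0}$ lies in $S$ with $f(1)=E(\phi_g)=E_g$, the scalar function $f$ attains its minimum over $(0,\infty)$ at the interior point $\lambda=1$; therefore $f'(1)=0$, and differentiating the explicit formula for $f$ yields $2E_{\rm kin}(\phi_g)-s\,E_{\rm pot}(\phi_g)+d\,E_{\rm int}(\phi_g)=0$.

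The only delicate point --- the main (and rather mild) obstacle --- is justifying that $\lambda=1$ is an interior minimizer of $f$ on which we may legitimately differentiate; this is exactly what the scaling computation secures, since it shows both that $\phi_g^\lambda$ remains admissible for all $\lambda>0$ and that $f$ is a finite, smooth combination of powers of $\lambda$. As an alternative I could derive the same relation as a Pohozaev-type identity, multiplying the Euler--Lagrange equation (\ref{eq:charactereq:sec2}) by $\bx\cdot\nabla\overline{\phi_g}$, integrating over $\Bbb R^d$, taking real parts, and integrating by parts, then combining with the identity obtained by testing (\ref{eq:charactereq:sec2}) against $\overline{\phi_g}$ to eliminate $\mu$; I would favor the dilation argument, however, as it avoids tracking boundary terms and requires no extra decay or regularity assumptions on $\phi_g$ beyond $\phi_g\in X$.
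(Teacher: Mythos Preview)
Your proof is correct and follows essentially the same approach as the paper: the paper's proof uses the dilation family $\phi^\vep(\bx)=\vep^{-d/2}\phi_g(\bx/\vep)\in S$ and the stationarity condition $\frac{d}{d\vep}E(\phi^\vep)\big|_{\vep=1}=0$, which is your argument with $\vep=1/\lambda$. Your write-up is more detailed (explicit scaling of each energy piece, justification that the family stays in $S$, and the Pohozaev alternative), but the underlying idea is identical.
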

 \begin{proof} Consider $\phi^\vep(\bx)=\vep^{-d/2}\phi_g(\bx/\vep)\in S$ ($\vep>0$), and use the stationary condition of the energy $E(\phi^\vep)$ at $\vep=1$, then we get $\frac{d{E(\phi^\vep)}}{d\vep}\big|_{\vep=1}=0$, which yields the Virial identity (\ref{eq:virial:sec2}).
 \end{proof}
Many properties of the ground state are determined by the potential $V(\bx)$.
\begin{theorem}\cite{LiebSeiringerPra2000}(Symmetry) Suppose $V(\bx)$ is spherically symmetry and monotone increasing, then the positive ground state solution $\phi_g\in S$ for (\ref{eq:minp:sec2}) must be spherically symmetric and monotonically decreasing.
 \end{theorem}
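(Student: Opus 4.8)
The plan is to use the standard symmetric-decreasing rearrangement (Schwarz symmetrization). Given a positive ground state $\phi_g \in S$, let $\phi_g^\ast$ denote its symmetric-decreasing rearrangement on $\Bbb R^d$. The key classical facts I would invoke are: (a) rearrangement preserves all $L^p$ norms, so $\|\phi_g^\ast\|_2 = 1$ and $\|\phi_g^\ast\|_4 = \|\phi_g\|_4$, hence $E_{\rm int}(\phi_g^\ast) = E_{\rm int}(\phi_g)$; (b) the P\'olya--Szeg\H{o} inequality, $\|\nabla \phi_g^\ast\|_2 \le \|\nabla \phi_g\|_2$, so $E_{\rm kin}(\phi_g^\ast) \le E_{\rm kin}(\phi_g)$; and (c) since $V$ is spherically symmetric and monotone increasing, $V$ itself equals its own symmetric-\emph{increasing} rearrangement, and the ``bathtub''/Hardy--Littlewood-type inequality for rearrangements gives $\int_{\Bbb R^d} V |\phi_g^\ast|^2 \, d\bx \le \int_{\Bbb R^d} V |\phi_g|^2\, d\bx$, i.e. $E_{\rm pot}(\phi_g^\ast) \le E_{\rm pot}(\phi_g)$. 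Combining these, $E(\phi_g^\ast) \le E(\phi_g) = E_g$, and since $\phi_g^\ast \in S$, it is also a ground state.

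Next I would upgrade this to show $\phi_g$ itself is (up to translation, which is ruled out by the strictly increasing trap) equal to $\phi_g^\ast$, hence spherically symmetric and radially decreasing. The argument is an equality-case analysis. Since $\phi_g^\ast$ is a ground state, it satisfies the Euler--Lagrange equation \fref{eq:charactereq:sec2} and is strictly positive and (by elliptic regularity) continuous. The potential-energy inequality in (c) is strict unless $|\phi_g|^2$ and $V$ are ``similarly arranged'', and because $V$ is \emph{strictly} increasing in $|\bx|$ this forces the super-level sets $\{|\phi_g| > t\}$ to be balls centered at a common point; combined with the fact that equality must hold throughout (otherwise $E(\phi_g^\ast) < E_g$, a contradiction), one concludes $\phi_g$ is, after a translation, radially symmetric and nonincreasing. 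Alternatively, and more cleanly when $\beta \ge 0$, I would simply invoke the uniqueness of the nonnegative ground state from Theorem \ref{thm:gs}: since $\phi_g^\ast$ is a nonnegative ground state and so is $\phi_g$, uniqueness gives $\phi_g = \phi_g^\ast$ directly.

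Finally, I would argue that the decrease is \emph{strict}, i.e. $\phi_g$ is strictly decreasing in $r = |\bx|$. This follows from the Euler--Lagrange equation: writing $\phi_g = \phi_g(r)$, the radial ODE $-\frac12(\phi_g'' + \frac{d-1}{r}\phi_g') + V\phi_g + \beta \phi_g^3 = \mu \phi_g$ together with $\phi_g > 0$, $\phi_g'(0) = 0$, and the strict monotonicity of $V$ prevents $\phi_g'$ from vanishing on any interval in $(0,\infty)$ (a zero of $\phi_g'$ at some $r_0 > 0$ with $\phi_g'' \ge 0$ there would contradict $V\phi_g + \beta\phi_g^3 - \mu\phi_g$ having the wrong sign, via a Hopf-type / maximum-principle argument).

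The main obstacle is the equality-case analysis for the rearrangement inequalities: showing that a ground state must \emph{coincide} with its symmetrization, rather than merely having the same energy, requires either the characterization of equality in the P\'olya--Szeg\H{o} inequality (delicate, and in general only yields equality up to translation plus the set-theoretic condition that level sets are balls) or, more expediently, the convexity/uniqueness machinery of Lemma \ref{lem:convex:sec2} and Theorem \ref{thm:gs} which applies cleanly only for $\beta \ge 0$. For $\beta < 0$ in dimensions $d = 1, 2$ (where existence still holds) one genuinely needs the strict-monotonicity of $V$ to pin down the center of symmetry and rule out non-radial minimizers, and this is where I would spend most of the effort.
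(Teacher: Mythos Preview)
Your approach is correct and is exactly what the paper intends: its entire proof reads ``This fact comes from the symmetric rearrangements,'' so you have simply unpacked that one-liner in full detail, including the equality-case discussion and the appeal to uniqueness (Theorem~\ref{thm:gs}) that the paper leaves implicit.
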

\begin{proof}This fact comes from the symmetric rearrangements.
\end{proof}
To learn more on the ground state, we study the Euler-Lagrange equation (\ref{eq:charactereq:sec2}).
\begin{theorem} The ground state of (\ref{eq:minp:sec2})  satisfies the Euler-Lagrange equation (\ref{eq:charactereq:sec2}). Suppose $V(\bx)\in L_{\rm loc}^\infty$, the  ground state $\phi_g\in S$ of (\ref{eq:minp:sec2}) is $H^2_{\rm loc}$. In addition, if $V\in C^\infty$, the ground state is also $C^\infty$.
\end{theorem}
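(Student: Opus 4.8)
The plan is to establish the three assertions in order: the Euler--Lagrange equation, then interior $H^2_{\rm loc}$ regularity, then interior $C^\infty$ regularity, the last two by an elliptic bootstrap.

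For the Euler--Lagrange equation I would argue by constrained variation. Since $E$ and the mass $N(\phi)=\|\phi\|_2^2$ are $C^1$ on $X$ and $N'(\phi_g)=2\phi_g\neq 0$, the Lagrange multiplier principle gives $E'(\phi_g)=\mu\,N'(\phi_g)$ for some $\mu\in\Bbb R$; equivalently, for every $\eta\in X$ one normalizes $\phi_\vep=(\phi_g+\vep\eta)/\|\phi_g+\vep\eta\|_2\in S$ and uses $\frac{d}{d\vep}E(\phi_\vep)\big|_{\vep=0}=0$. Separating real and imaginary test directions yields the weak form of
\[
-\tfrac12\Dt\phi_g+V\phi_g+\beta|\phi_g|^2\phi_g=\mu\,\phi_g ,
\]
that is, (\ref{eq:charactereq:sec2}); pairing this equation with $\phi_g$ and using $\|\phi_g\|_2=1$ identifies $\mu=E(\phi_g)+E_{\rm int}(\phi_g)$ exactly as in (\ref{eq:mu-energy:sec2}). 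By Lemma \ref{lem:pos:sec2} I will from now on use the real, nonnegative representative $\phi_g=|\phi_g|$, so that $|\phi_g|^2\phi_g=\phi_g^3$.

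For $H^2_{\rm loc}$ regularity I would rewrite the equation as $\Dt\phi_g=2(V\phi_g+\beta\phi_g^3-\mu\phi_g)=:2f$ and check that $f\in L^2_{\rm loc}(\Bbb R^d)$. Indeed $\phi_g\in X\subset H^1(\Bbb R^d)$ with $d\le 3$, so the Sobolev embedding $H^1\hookrightarrow L^6$ gives $\phi_g\in L^6$, hence $\phi_g^3\in L^2$; also $\mu\phi_g\in L^2$ and $V\phi_g\in L^2_{\rm loc}$ since $V\in L^\infty_{\rm loc}$ and $\phi_g\in L^2$. Then interior elliptic regularity for $\Dt$ (localizing with cutoffs on balls) upgrades $\phi_g\in H^1_{\rm loc}$ with $\Dt\phi_g\in L^2_{\rm loc}$ to $\phi_g\in H^2_{\rm loc}(\Bbb R^d)$.

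For $C^\infty$ regularity, assuming in addition $V\in C^\infty$, I would bootstrap. Because $d\le 3$, $H^2(\Bbb R^d)$ is a Banach algebra (and embeds in $C^0$), so $\phi_g^3\in H^2_{\rm loc}$, and, $V$ being smooth, $V\phi_g\in H^2_{\rm loc}$; hence $f\in H^2_{\rm loc}$ and elliptic regularity gives $\phi_g\in H^4_{\rm loc}$. Iterating: if $\phi_g\in H^{2k}_{\rm loc}$ then, using that $H^{2k}(\Bbb R^d)$ is still multiplication-closed for $d\le 3$ and $k\ge 1$ and that $V\in C^\infty$, one gets $f\in H^{2k}_{\rm loc}$, hence $\phi_g\in H^{2k+2}_{\rm loc}$; by induction $\phi_g\in\bigcap_{m\ge 0}H^m_{\rm loc}(\Bbb R^d)=C^\infty(\Bbb R^d)$. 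I expect the only delicate point to be the very first bootstrap step in $d=3$, where the cubic term has to be absorbed into $L^2$ through the endpoint embedding $H^1\hookrightarrow L^6$ with no room to spare (in $d=1,2$ the same embedding is comfortably available); once $\phi_g\in H^2_{\rm loc}$ the iteration closes routinely since $H^{2k}$ remains an algebra in the dimensions of interest.
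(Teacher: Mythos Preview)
Your proof is correct and follows exactly the approach the paper indicates: its own proof is the two-line sketch ``It is easy to show the ground state satisfies the nonlinear eigenvalue problem (\ref{eq:charactereq:sec2}). The regularity follows from the elliptic theory,'' and you have simply written out the standard constrained-variation derivation and the elliptic bootstrap (using $H^1\hookrightarrow L^6$ in $d\le 3$ to place the cubic term in $L^2_{\rm loc}$, then iterating via the algebra property of $H^{2k}$).
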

\begin{proof}It is easy to show the ground state satisfies the nonlinear eigenvalue problem (\ref{eq:charactereq:sec2}). The regularity follows from the elliptic theory.
\end{proof}
For confining potentials, we can show that ground states decay exponentially fast when $|\bx|\to\infty$.
\begin{theorem}\label{thm:exponential}
Suppose that $0\leq V(\bx)\in L_{\rm loc}^2$ satisfies (\ref{eq:confine:sec2}) and $\phi_g\in S$ is a ground state  of (\ref{eq:minp:sec2}). When $\beta\ge0$, for any $\nu>0$,  there exists a constant $C_{\nu}>0$ such that
\be\label{eq:expon:sec2}
|\phi_g(\bx)|\leq C_{\nu}e^{-\nu|\bx|},\quad \bx\in\Bbb R^d,\,\, d=1,2,3.
\ee
\end{theorem}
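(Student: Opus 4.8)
\emph{Proof proposal.} The plan is a barrier/comparison argument based on the Euler--Lagrange equation. First I would record the facts already available: $\phi_g$ solves the nonlinear eigenvalue problem $-\tfrac12\Delta\phi_g+V\phi_g+\beta\phi_g^3=\mu\phi_g$ with chemical potential $\mu=\mu(\phi_g)=E(\phi_g)+E_{\rm int}(\phi_g)<\infty$ (finite since $\phi_g\in S$); $\phi_g$ may be taken nonnegative; and, since $0\le V\in L^2_{\rm loc}$, the regularity theorem gives $\phi_g\in H^2_{\rm loc}(\RR^d)\hookrightarrow C(\RR^d)$ for $d=1,2,3$. I would also note that $\phi_g(\bx)\to 0$ as $|\bx|\to\infty$ (otherwise $\phi_g$ would retain an $L^2$-mass bounded below on unit balls running to infinity, forcing $\int V|\phi_g|^2$ to diverge since $V\to\infty$, against $\phi_g\in L_V$; interior elliptic estimates then give pointwise decay), although this is needed only for the alternative route below.

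Fix $\nu>0$. Using the confining condition \eqref{eq:confine:sec2}, first choose $R=R(\nu)>0$ so that $V(\bx)-\mu\ge 2\nu^2$ whenever $|\bx|\ge R$. Rewriting the Euler--Lagrange equation as $\tfrac12\Delta\phi_g=(V-\mu)\phi_g+\beta\phi_g^3$ and dropping the cubic term (legitimate because $\beta\ge0$ and $\phi_g\ge0$) yields
\[
\Delta\phi_g\ \ge\ 4\nu^2\,\phi_g\qquad\text{on }\ \Omega_R:=\{\bx:\ |\bx|>R\},
\]
a.e., and, since $\phi_g\in H^2_{\rm loc}$, also weakly. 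For the barrier I would take $w(\bx)=Ce^{-\nu|\bx|}$; since $\Delta e^{-\nu|\bx|}=\nu^2 e^{-\nu|\bx|}-\frac{(d-1)\nu}{|\bx|}e^{-\nu|\bx|}\le \nu^2 e^{-\nu|\bx|}$ for $\bx\neq0$, we have $\Delta w\le\nu^2 w\le 4\nu^2 w$ on $\Omega_R$, i.e.\ $w$ is a supersolution of $\Delta u-4\nu^2 u=0$ there. Then I would fix $C:=e^{\nu R}\max_{|\bx|\le R}\phi_g(\bx)$, finite by continuity, so that $w\ge\phi_g$ on all of $\overline{B_R}$, in particular on $\partial B_R$.

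The remaining step is a comparison on $\Omega_R$. The difference $u:=\phi_g-w$ is a weak subsolution, $\Delta u\ge 4\nu^2 u$, and its positive part $v:=u^+$ belongs to $H^1(\Omega_R)$ (note that $e^{-\nu|\bx|}$ and its gradient lie in $L^2(\Omega_R)$) and vanishes on $\partial B_R$, hence $v\in H^1_0(\Omega_R)$. Testing the weak inequality against $v$ (after a routine cutoff at infinity) gives
\[
\int_{\Omega_R}|\nabla v|^2\,d\bx+4\nu^2\int_{\Omega_R}v^2\,d\bx\le 0,
\]
so $v\equiv0$, i.e.\ $\phi_g\le Ce^{-\nu|\bx|}$ on $\Omega_R$. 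Together with $\phi_g\le w$ on $\overline{B_R}$ this gives $0\le\phi_g(\bx)\le C_\nu e^{-\nu|\bx|}$ on $\RR^d$ with $C_\nu:=C$, and for a general ground state $\phi_g=e^{i\theta}|\phi_g|$ the bound on $|\phi_g|$ is immediate. Alternatively, instead of the truncation one can apply the weak maximum principle for $\Delta-4\nu^2$ on the annuli $\{R<|\bx|<R_n\}$ and let $R_n\to\infty$, using $\phi_g\to0$ at infinity so that the outer boundary contribution vanishes.

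I expect the genuine technical points to be: (a) the regularity and boundedness of $\phi_g$ near the origin, which make both the pointwise conclusion and the choice of $C$ meaningful — precisely where $V\in L^2_{\rm loc}$ and the regularity theorem enter; and (b) the clean justification that $\phi_g-w$ is a bona fide weak subsolution on the exterior domain so that the Stampacchia truncation (or the annulus-exhaustion maximum principle) applies, which requires a little care because $\Omega_R$ is unbounded. Point (b) is the main obstacle. The defocusing hypothesis $\beta\ge0$ is used only to discard the cubic term, so the argument genuinely depends on that sign.
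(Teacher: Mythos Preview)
Your barrier/comparison argument is correct and complete in spirit; the cutoff justification you sketch for testing against $v=(\phi_g-w)^+$ on the exterior domain goes through exactly as you indicate. One small wrinkle: you invoke ``the regularity theorem'' for $\phi_g\in H^2_{\rm loc}\hookrightarrow C$, but the paper states that result under $V\in L^\infty_{\rm loc}$, whereas here only $V\in L^2_{\rm loc}$ is assumed. This is repairable by a standard bootstrap (in $d\le 3$, $\phi_g\in H^1$ gives $\phi_g\in L^6$, hence $V\phi_g\in L^{3/2}_{\rm loc}$, so $\phi_g\in W^{2,3/2}_{\rm loc}$, and iterating gives continuity and then $H^2_{\rm loc}$), but it is worth flagging.

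The paper takes a genuinely different route: it rewrites the Euler--Lagrange equation as $(-\tfrac12\Delta+\tfrac{\nu^2}{2})\phi_g=(\mu+\tfrac{\nu^2}{2}-V-\beta|\phi_g|^2)\phi_g$ and inverts the left side via the $d$-dimensional Yukawa potential $Y_d^\nu$, obtaining an integral representation for $\phi_g$. Since the bracketed factor is $\le 0$ outside a large ball (by confinement and $\beta\ge 0$) and $Y_d^\nu,\phi_g\ge 0$, the representation is bounded above by a convolution over a fixed ball, and the known exponential decay of $Y_d^\nu$ finishes the job. Your approach is more elementary and self-contained---no Green's function needed---and adapts easily to variable-coefficient or non-Euclidean settings where an explicit fundamental solution is unavailable; the paper's approach is slicker once the Yukawa potential is in hand and sidesteps entirely the exterior-domain maximum-principle issue you identify as the main obstacle. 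Both use $\beta\ge 0$ and the confining property in exactly the same way.
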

\begin{proof}  The proof for  $d=3$ is given in \cite{LiebSeiringerPra2000} and the cases for $d=1,2$ are the same. For any $\nu>0$, rewrite the Euler-Lagrange equation (\ref{eq:charactereq:sec2}) for $\phi_g$ as
\be
\left(-\frac{1}{2}\nabla^2+\frac{\nu^2}{2}\right)\phi_g=\left(\mu+\frac{\nu^2}{2}-V-\beta|\phi_g|^2\right)\phi_g.
\ee
Making use of the $d$-dimensional Yukawa potential $Y_d^\nu(\bx)$ ($d=1,2,3$) \cite{LiebLoss} associated with $-\frac{1}{2}\nabla^2+\frac{\nu^2}{2}$, $\phi_g$ can be expressed as
\be
\phi_g(\bx)=\int_{\Bbb R^d}Y_d^\nu(\bx-{\bf y})\left[\mu+\frac{\nu^2}{2}-V(\by)-\beta|\phi_g(\by)|^2\right]\,d\,\by.
\ee
Noticing that $\phi_g$ and the Yukawa potential are positive and $V$ is confining potential, we see that for sufficiently large $R>0$,
$\mu+\frac{\nu^2}{2}-V(\bx)-\beta|\phi_g(\bx)|^2\leq0$ for $|\bx|\ge R$. Thus, we get
\be
\phi_g(\bx)\leq \int_{|\by|<R}Y_d^\nu(\bx-{\bf y})\left[\mu+\frac{\nu^2}{2}-V(\by)-\beta|\phi_g(\by)|^2\right]\,d\,{\by}.
\ee
Noticing that $Y_d^\nu\in L_{\rm loc}^2$ ($d=1,2,3$) and $|Y_d^\nu(\bx)|\leq C e^{-\nu|\bx|}$ for sufficiently large $|\bx|$, we find
\be
C_\nu=\sup_{\bx}\int_{|\by|<R}e^{\nu|\bx|}Y_d^\nu(\bx-{\bf y})\left[\mu+\frac{\nu^2}{2}-V(\by)-\beta|\phi_g(\by)|^2\right]\,d\,\by<\infty,
\ee
and the conclusion (\ref{eq:expon:sec2}) holds.
\end{proof}
\begin{remark} Results (\ref{eq:expon:sec2}) can be generalized to 1D case for arbitrary $\beta$, where $\|\phi_g\|_{\infty}$ is bounded by Sobolev inequality. The proof is the same.
\end{remark}
For convex potentials, the ground states are shown to be log concave.
\begin{theorem}Suppose $V(\bx)$ ($\bx\in\Bbb R^d$, $d=1,2,3$) is convex, then the positive ground state $\phi_g$ of (\ref{eq:minp:sec2}) is log concave, i.e. $\ln(\phi_g(\bx))$ is concave,
\begin{equation*}
\ln(\phi_g(\lambda\bx+(1-\lambda)\by))\ge \lambda \ln(\phi_g(\bx))+(1-\lambda)\ln(\phi_g(\by)),\quad
\bx,\by\in \Bbb R^d,\,\,\lambda\in[0,1].
\end{equation*}
\end{theorem}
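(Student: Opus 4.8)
The plan is to realize $\phi_g$ as the large‐time limit of the continuous normalized gradient flow (CNGF, i.e. the imaginary‐time method reviewed in Section~\ref{sec:numgs}),
\[ \p_t\phi=\tfrac12\Delta\phi-V\phi-\beta\phi^3+\mu_\phi(t)\phi,\qquad \|\phi(\cdot,t)\|_2=1, \]
started from a strictly positive, log‐concave $\phi_0\in S$ of finite energy (for instance a suitable Gaussian), and to show that log‐concavity of $\phi$ is preserved along the flow; since $\beta\ge0$ the positive ground state $\phi_g$ is unique (Theorem~\ref{thm:gs}) and $\phi_0$ is not orthogonal to it, the flow converges to $\phi_g$, and log‐concavity passes to pointwise limits. (One could instead try to regard $\phi_g$ as the principal eigenfunction of the linear operator $-\tfrac12\Delta+(V+\beta\phi_g^2)$ and invoke the Brascamp--Lieb preservation of log‐concavity by Schr\"odinger semigroups with convex potentials, but the self‐consistent potential $V+\beta\phi_g^2$ need not be convex, which is why I would work directly with the nonlinear flow.) Positivity is preserved, so set $w:=\log\phi$; it satisfies $\p_t w=\tfrac12\Delta w+\tfrac12|\nabla w|^2-V-\beta e^{2w}+\mu_\phi(t)$, where the last term is spatially constant and hence absent from the equation for the Hessian $H:=D^2w$, which is what we must control.

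Differentiating twice in space yields
\[ \p_t H=\tfrac12\Delta H+(\nabla w\cdot\nabla)H+H^2-D^2V-2\beta e^{2w}H-4\beta e^{2w}\,\nabla w\otimes\nabla w. \]
Suppose $H\le0$ failed for the first time at a point $(\bx_0,t_0)$ and in a unit direction $\xi_0$, so $\xi_0^{T}H(\bx_0,t_0)\xi_0=0$ and $\xi_0$ is a null eigenvector of $H(\bx_0,t_0)$. Freezing $\xi_0$ and contracting the evolution with it, the dangerous terms cancel precisely at this configuration: $\xi_0^{T}H^2\xi_0=|H\xi_0|^2=0$ and $-2\beta e^{2w}\xi_0^{T}H\xi_0=0$; the transport term vanishes because $\bx\mapsto\xi_0^{T}H(\bx,t_0)\xi_0$ attains a spatial maximum at $\bx_0$, where also $\tfrac12\xi_0^{T}\Delta H\xi_0=\tfrac12\Delta(\xi_0^{T}H\xi_0)\le0$; and finally $-\xi_0^{T}D^2V\,\xi_0\le0$ by convexity of $V$ while $-4\beta e^{2w}(\nabla w\cdot\xi_0)^2\le0$ by $\beta\ge0$. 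Hence $\p_t(\xi_0^{T}H\xi_0)\le0$ at $(\bx_0,t_0)$, contradicting that this quantity is nondecreasing in time there. Thus $D^2w(\cdot,t)\le0$ for all $t$, and sending $t\to\infty$ proves $\phi_g$ log‐concave.

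Two standard devices make this rigorous. First, since $V$ is merely convex, $\xi_0^{T}D^2V\,\xi_0$ can vanish and the above contradiction is not strict; replace $V$ by the uniformly convex confining potential $V_\ep:=V+\tfrac{\ep}{2}|\bx|^2$, for which $\xi_0^{T}D^2V_\ep\,\xi_0\ge\ep>0$ gives $\p_t(\xi_0^{T}H\xi_0)\le-\ep<0$, a genuine contradiction; the corresponding ground state $\phi_g^{\ep}$ is then log‐concave, and $\phi_g^{\ep}\to\phi_g$ (pointwise along a subsequence, by the compact embedding of Lemma~\ref{lem:comp:sec2}) as $\ep\to0^+$, which transfers the property to $\phi_g$. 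Second, the maximum principle must be localized to balls $B_R$, using parabolic regularity of the flow for smooth $V$ (cf. the $C^\infty$‐regularity of the ground state established above) and the superexponential decay of Theorem~\ref{thm:exponential} to control $w$ and its derivatives for $|\bx|$ large, and then one lets $R\to\infty$.

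I expect the localization to be the main obstacle: guaranteeing that the first touching point $(\bx_0,t_0)$ cannot run off to spatial infinity, and that along the flow $\phi$ stays smooth and strictly positive so that the pointwise computation of $\p_t H$ is legitimate. By contrast the algebra at the critical direction $\xi_0$ — where convexity of $V$ and the sign $\beta\ge0$ are used and all remaining terms cancel — is routine, and the convergence $\phi(\cdot,t)\to\phi_g$ of the imaginary‐time flow is available from the material of Section~\ref{sec:numgs}; both loose ends are tied by working throughout with the $\ep$‐regularized confining potential.
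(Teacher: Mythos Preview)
The paper does not prove this theorem itself; it simply cites \cite{LiebSeiringerPra2000}. Your parabolic tensor--maximum--principle argument along the continuous normalized gradient flow is therefore a genuinely different (and far more detailed) route than what the paper offers.

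The algebra at the heart of your argument is correct: the evolution equation for $H=D^2w$ is right, and at a first--touching null eigenvector $\xi_0$ every term has the needed sign once $V$ is convex and $\beta\ge0$. Your $\ep$--regularization to force a strict inequality and your plan to pass to the limit via Lemma~\ref{lem:comp:sec2} are also sound. The principal soft spot you flag --- ensuring the touching point stays in a bounded region --- is indeed handled by working with the uniformly convex $V_\ep$, for which the ground state has Gaussian decay and $D^2w\to -\ep I$ at infinity, so $\lambda_{\max}(H)$ tends to a strictly negative constant there.

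One point deserves more care than you give it: the convergence $\phi(\cdot,t)\to\phi_g$ of the CNGF. Section~\ref{sec:numgs} establishes only that the flow is normalization--conserving and energy--diminishing (Theorem~\ref{thm:edhh:sec3}), not that it converges to the ground state. You need additionally that positivity of $\phi$ is preserved along the flow (a scalar parabolic maximum principle, since the normalization step merely rescales by a positive factor), and that the only positive critical point of $E$ on $S$ is $\phi_g$ when $\beta\ge0$ (which follows from the strict convexity of $E(\sqrt\rho)$ in $\rho$, Lemma~\ref{lem:convex:sec2}). Combined with energy diminishing and the compactness of Lemma~\ref{lem:comp:sec2}, these yield convergence. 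This is standard but is not contained in Section~\ref{sec:numgs}; you should state and justify it rather than assert it is ``available''.

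For comparison, the Lieb--Seiringer--Yngvason argument is variational rather than parabolic: it exploits directly the convexity of $E(\sqrt\rho)$ in $\rho$ together with convexity of $V$ to show the minimizing density must be log--concave, without introducing any time--dependent flow. Their route is shorter and sidesteps the regularity and convergence issues you must address; yours is more hands--on and makes transparent exactly where each hypothesis ($V$ convex, $\beta\ge0$) enters as a sign condition.
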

\begin{proof}See \cite{LiebSeiringerPra2000}.
\end{proof}
When $\beta>0$, we can actually estimate the $L^\infty$ bound for the ground state.
\begin{theorem}Suppose that $0\leq V(\bx)\in C^\alpha_{\rm loc}$ ($\alpha>0$) satisfies (\ref{eq:confine:sec2}) and $\beta>0$.  Let $\phi_g$ be the unique positive ground state of (\ref{eq:minp:sec2}), we have
\be
\|\phi_g\|_{\infty}\leq \sqrt{\frac{\mu_g}{\beta}},\quad \mu_g=E(\phi_g)+\frac{\beta}{2}\|\phi_g\|_{4}^4.
\ee
The chemical potential $\mu_g\leq 2 E(\phi_g)$ and hence can be bounded by choosing arbitrary testing function.
\end{theorem}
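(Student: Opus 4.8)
The plan is to treat the two assertions separately: the chemical-potential estimate $\mu_g\le 2E(\phi_g)$ is essentially bookkeeping, while the $L^\infty$ bound comes from a pointwise maximum-principle argument applied to the Euler--Lagrange equation.

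For $\mu_g\le 2E(\phi_g)$, recall from (\ref{eq:mu-energy:sec2}) that $\mu_g=\mu(\phi_g)=E(\phi_g)+E_{\rm int}(\phi_g)$. Since $V\ge0$ and $\beta>0$, all three pieces $E_{\rm kin}(\phi_g),E_{\rm pot}(\phi_g),E_{\rm int}(\phi_g)$ are nonnegative, so $E_{\rm int}(\phi_g)\le E_{\rm kin}(\phi_g)+E_{\rm pot}(\phi_g)+E_{\rm int}(\phi_g)=E(\phi_g)$, whence $\mu_g=E(\phi_g)+E_{\rm int}(\phi_g)\le 2E(\phi_g)$. Because $E(\phi_g)=E_g=\min_{\phi\in S}E(\phi)$ by (\ref{eq:minp:sec2}), inserting any convenient $\phi\in S$ into the energy functional produces an explicit numerical upper bound for $E(\phi_g)$, hence for $\mu_g$.

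For $\|\phi_g\|_\infty\le\sqrt{\mu_g/\beta}$, I would first invoke the regularity already recorded above: with $0\leq V\in C^\alpha_{\rm loc}$ the positive ground state $\phi_g$ solves the Euler--Lagrange equation (\ref{eq:charactereq:sec2}), lies in $H^2_{\rm loc}$, and by elliptic (Schauder) regularity is in fact $C^2_{\rm loc}$, so it is continuous with a classically defined Laplacian. Since $V$ satisfies the confining condition (\ref{eq:confine:sec2}) and $\beta\ge0$, Theorem \ref{thm:exponential} gives $\phi_g(\bx)\to0$ as $|\bx|\to\infty$; together with continuity and positivity of $\phi_g$ this shows $\|\phi_g\|_\infty>0$ is attained at some point $\bx_0\in\Bbb R^d$. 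At an interior maximum $-\frac12\Delta\phi_g(\bx_0)\ge0$, so evaluating (\ref{eq:charactereq:sec2}) at $\bx_0$ and discarding the nonnegative terms $-\frac12\Delta\phi_g(\bx_0)$ and $V(\bx_0)\phi_g(\bx_0)$ yields $\mu_g\,\phi_g(\bx_0)\ge\beta\,\phi_g(\bx_0)^3$. Dividing by $\phi_g(\bx_0)>0$ gives $\mu_g\ge\beta\|\phi_g\|_\infty^2$, which is the assertion; the displayed formula $\mu_g=E(\phi_g)+\frac\beta2\|\phi_g\|_4^4$ is just (\ref{eq:mu-energy:sec2}).

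The only genuine obstacle is justifying the pointwise step: that $\phi_g$ is regular enough for $\Delta\phi_g(\bx_0)$ to make classical sense, and that the supremum is attained rather than merely approached. Both are settled by the regularity theorem and the exponential-decay theorem quoted above. If one preferred to avoid $C^2$ regularity altogether, the same bound follows by testing (\ref{eq:charactereq:sec2}) against $(\phi_g-k)_+$ with $k=\sqrt{\mu_g/\beta}$ and showing that this truncation vanishes, but the maximum-principle route is shorter given the regularity already in hand.
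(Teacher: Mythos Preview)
Your argument is correct and follows essentially the same route as the paper: invoke Schauder regularity to get $\phi_g\in C^{2,\alpha}_{\rm loc}$, use the exponential decay (Theorem~\ref{thm:exponential}) to guarantee the maximum is attained at some $\bx_0$, and then read off $\mu_g\ge\beta\phi_g(\bx_0)^2$ from the Euler--Lagrange equation using $-\frac12\Delta\phi_g(\bx_0)\ge0$ and $V(\bx_0)\ge0$. The paper's proof omits the verification of $\mu_g\le 2E(\phi_g)$, which you supply cleanly via $E_{\rm int}(\phi_g)\le E(\phi_g)$.
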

\begin{proof} Applying elliptic theory to the Euler-Lagrange equation,
\be\label{eq:el:sec2}
\mu_g\phi_g=\left(-\frac12\nabla^2+V+\beta|\phi_g|^2\right)\phi_g,
\ee
we get $\phi_g\in C^{2,\alpha}_{\rm loc}$. From Theorem \ref{thm:exponential}, $\phi_g$ is bounded in $L^\infty$. Consider the point $\bx_0$ where $\phi_g$ takes its maximal, we can obtain
\begin{align*}
\mu_g\phi_g(\bx_0)&=\left(-\frac{1}{2} \nabla^2\phi_g+V+
\beta|\phi_g|^2\right)\bigg|_{\bx_0}\phi_g(\bx_0)\nonumber\\
&\ge \left[V(\bx_0)+\beta|\phi_g(\bx_0)|^{2}\right]\phi_g(\bx_0)\ge \beta|\phi_g(\bx_0)|^{2}\phi_g(\bx_0),
\end{align*}
and so
\be
\|\phi_g\|_{\infty}^2=|\phi_g(\bx_0)|^{2}\leq \frac{\mu_g}{\beta}.
\ee
\end{proof}

\begin{remark} In 2D and 3D, for small $\beta>0$ or $\beta<0$, the $L^\infty$ estimate above can be improved by employing the $W^{2,p}$ estimates for (\ref{eq:el:sec2}) and the embedding $H^2(\Bbb R^d)\hookrightarrow L^\infty(\Bbb R^d)$ ($d=2,3$). In 1D, $L^\infty$ bound can be simply obtained by $H^1(\Bbb R)\hookrightarrow L^\infty(\Bbb R)$, while the $H^1$ norm can be estimated by the energy.
\end{remark}

\subsubsection{Approximations of ground states}
\label{subsubsec:gsapp}

For a few external potentials, we can find approximations of ground states in the weakly interaction regime, i.e.
$|\beta|=o(1)$, and strongly repulsive interaction regime, i.e.  $\beta\gg 1$ \cite{BaoJinP,BaoLimZhang}. These approximations show the leading order behavior of the ground states and they can be used as initial data for computing ground states numerically.

{\sl Under a box potential}, i.e. we take
\be\label{eq:dbox:sec7}
V(\bx)=\begin{cases}
0, & \bx=(x_1,\ldots,x_d)^T\in U=(0,1)^d,\\
+\infty,& \text{otherwise},
\end{cases}
\ee
in (\ref{eq:charactereq:sec2}). When $\beta=0$, i.e. linear case, (\ref{eq:charactereq:sec2}) collapses to
\be\label{eq:linear:sec7}
\mu\phi=-\frac12\btd^2\phi,\quad \phi|_{\partial U}=0,\quad \|\phi\|_2^2=\int_U|\phi(\bx)|^2d\bx=1.
\ee
For this linear eigenvalue problem, it is easy to find an orthonormal set of eigenfunctions as \cite{BaoJakschP,Pethick,PitaevskiiStringari}
\be
\label{eq:sollin:sec7}
\phi_{\bf J}({\bx}) = \prod_{m=1}^d  \phi_{j_m}(x_m), \ \phi_l(x) =
\sqrt{2} \sin (l\pi x), \  l\in {\Bbb N}, \ {\bf J}=(j_1, \cdots, j_d)
\in {\Bbb N}^d,
\ee
with the  corresponding eigenvalues as
\be
\label{eq:eingbox:sec7}
\mu_{\bf J} = \sum_{m=1}^d \mu_{j_m}, \qquad
\mu_{l} = \fl{1}{2}l^2\pi^2, \qquad l\in {\Bbb N}.
\ee
Thus, for linear case, we can find the exact ground state as $\phi_g(\bx)=\phi_{(1,\cdots,1)}(\bx)$.
In addition, when $|\beta|=o(1)$, we can approximate the ground state as
$\phi_g(\bx)\approx\phi_{(1,\cdots,1)}(\bx)$. The corresponding energy  and chemical
potential can be found as
\begin{eqnarray*}
&&E_g=E(\phi_g)\approx E(\phi_{(1,\cdots,1)}(\bx))=d \pi^2/2+O(\beta),\\
&&\mu_g =\mu(\phi_g)\approx \mu(\phi_{(1,\cdots,1)}(\bx))= d \pi^2/2+O(\beta).
\end{eqnarray*}
On the other hand, when $\beta\gg1$, by dropping the diffusion term (i.e. the first term
on the right hand side of (\ref{eq:charactereq:sec2})) -- Thomas-Fermi (TF)
approximation -- \cite{LiebLoss,Andersen}, we obtain
 \be \label{eq:gpegs:sec2}
\mu_g^{\rm TF} \phi_g^{\rm TF}(\bx)= \beta |\phi_g^{\rm TF}(\bx)|^2\phi_g^{\rm TF}(\bx), \qquad
\bx\in U.
 \ee
From (\ref{eq:gpegs:sec2}), we obtain
\be
\label{eq:TFsolution:sec7}
\phi_g^{\rm TF}({\bx}) =  \sqrt{\fl{\mu_g^{\rm TF}}{\bt}}, \qquad  \bx
\in U.
\ee
Plugging (\ref{eq:TFsolution:sec7}) into the normalization condition, we obtain
\be
\label{eq:normbox:sec7}
1 = \int_U  |\phi_g^{\rm TF}({\bx})|^2\; d{\bx} = \int_U
\fl{\mu_g^{\rm  TF}}{\bt} \; d{\bx} = \fl{\mu_g^{\rm TF}} {\bt}
\quad\Rightarrow \quad \mu_g^{\rm TF} = \bt.
\ee
The TF energy $E_g^{\rm TF}$ is obtained via (\ref{eq:mu-energy:sec2}),
\be
\label{eq:engasp3:sec7}
E_g^{\rm TF} =  \mu_g^{\rm TF} - \fl{\bt}{2}\int_U
|\phi_g^{\rm TF}|^4\;  d{\bx} = \fl{\mu_g^{\rm TF}}{2} = \fl{\bt}{2}.
\ee
Therefore, we  get the TF approximation for the ground state, the energy
and the chemical potential when $\bt \gg1$:
\bea
\label{eq:gs1d2:sec7}
&&\phi_g(\bx) \approx \phi_g^{\rm TF}(\bx)=1, \qquad \bx \in U, \\
\label{eq:eng5:sec7}
&&E_g \approx E_g^{\rm TF} = \fl{\bt}{2}, \qquad \mu_g \approx
\mu_g^{\rm TF} =\bt.
\eea
 It is easy to see that the  TF approximation for the ground state
does not  satisfy the boundary condition  $\phi|_{\p U}=0$.  This is  due to
removing the diffusion term in (\ref{eq:charactereq:sec2}) and  it suggests that a
boundary layer will appear in the ground  state when $\bt\gg1$.
Due to the existence of the boundary  layer, the kinetic energy does not
go to zero when $\bt\to \ift$  and thus it cannot be neglected. Better approximation with matched asymptotic expansion can be found in \cite{BaoLimZhang}.

{\sl Under a harmonic potential}, i.e. we take $V(\bx)$ as (\ref{eq:dhp:sec1}).
When $\beta=0$, the exact ground state can be found as \cite{BaoJakschP,Pethick,PitaevskiiStringari}
\[\mu_g^0=\left\{\ba{l}
\fl{\gm_x}{2}, \\
\fl{\gm_x+\gm_y}{2}, \\
\fl{\gm_x+\gm_y+\gm_z}{2},\\
\ea \right.
\quad
\phi_g^0(\bx)=\left\{\ba{ll}
\fl{(\gm_x )^{1/4}}{(\pi)^{1/4}} \;
e^{-(\gamma_xx^2)/2}, &d=1,\\
\fl{(\gm_x\gm_y )^{1/4}}{(\pi)^{1/2}} \;
e^{-(\gamma_xx^2+\gm_y y^2)/2}, &d=2,\\
\fl{(\gm_x\gm_y \gm_z)^{1/4}}{(\pi)^{3/4}} \;
e^{-(\gamma_xx^2+\gm_y y^2+\gm_z z^2)/2}, &d=3.\\
\ea\right.
\]
Thus when $|\beta|=o(1)$, the ground state $\phi_g$ can be approximated by $\phi_g^0$, i.e.
\[\phi_g(\bx)\approx \phi_g^0(\bx), \qquad \bx\in {\Bbb R}^d.\]
Again, when $\beta\gg 1$, by dropping the diffusion term (i.e. the first term
on the right hand side of (\ref{eq:charactereq:sec2})) -- Thomas-Fermi (TF)
approximation -- \cite{LiebLoss,Andersen}, we obtain
 \be \label{eq:gpegs:sec7}
\mu_g^{\rm TF} \phi_g^{\rm TF}(\bx)= V(\bx)\phi_g^{\rm TF}(\bx)+
\beta |\phi_g^{\rm TF}(\bx)|^2\phi_g^{\rm TF}(\bx), \qquad
\bx\in {\Bbb R}^d.
 \ee
Solving the above equation, we get
 \be \label{eq:gss:sec7}
  \phi_g(\bx)\approx \phi_g^{\rm TF}(\bx)=\left\{\ba{ll}
\sqrt{\left(\mu_g^{\rm TF} -V(\bx)\right)/\bt},
&\ V(\bx)< \mu_g^{\rm TF},\\
0, & \hbox{otherwise},
\ea\right.
 \ee
where $\mu_g^{\rm TF}$ is chosen to satisfy the normalization $\|\phi_g^{\rm TF}\|_2=1$.
After some tedious computations \cite{BaoJinP,BaoLimZhang}, we get
\begin{eqnarray*}
\mu_g^{\rm TF}=\left\{\ba{l}
 \fl{1}{2}\left(\fl{3\bt\gm_x}
{2}\right)^{2/3},\\
\left(\frac{\beta \gamma_x\gamma_y}{\pi}\right)^{1/2},\\
\frac{1}{2}\left(\frac{15\beta \gamma_x\gamma_y\gamma_z}{4\pi}\right)^{2/5},\\
\ea\right.
\qquad
E_g^{\rm TF}=\left\{\ba{ll}
\fl{3}{10}\left(\fl{3\bt\gm_x}{2}
\right)^{2/3}, &d=1,\\
\frac{2}{3}\left(\frac{\beta \gamma_x\gamma_y}{\pi}\right)^{1/2}, &d=2,\\
\frac{5}{14}\left(\frac{15\beta \gamma_x\gamma_y\gamma_z}{4\pi}\right)^{2/5}, &d=3.\\
\ea\right.
 \end{eqnarray*}
It is easy to verify that the Thomas-Fermi approximation (\ref{eq:gss:sec7}) does not have limit as $\beta\to\infty$.

\begin{remark}\label{rmk:tflimit}
For the harmonic potential (\ref{eq:dhp:sec1}), the energy of the Thomas-Fermi approximation is unbounded, i.e.
\begin{equation}
E(\phi_g^{\rm TF}) = + \infty.
\end{equation}
This is due to the low regularity of $\phi_g^{\rm TF}$ at the free boundary
$V(\bx)=\mu_g^{\rm TF}$. More precisely, $\phi_g^{\rm TF}$ is locally
$C^{1/2}$ at the interface. This is a
typical behavior for solutions of free boundary value problems,
which indicates that an interface layer correction has to be constructed
in order to improve the approximation quality.
\end{remark}


\subsection{Dynamics}
Many properties of dynamics for BEC can be reported   by solving  GPE (\ref{eq:gpe:sec2}). In this section, we will consider the well-posedness for Cauchy problem of GPE (\ref{eq:gpe:sec2}). For BEC, energy (\ref{eq:energy:sec2}) is an important physical quantity and thus it is natural to study the well-posedness in the energy space $X(\Bbb R^d)$ ($d=1,2,3$) (\ref{eq:funcspace:sec2}).
\subsubsection{Well-posedness}
To investigate the Cauchy problem of (\ref{eq:energy:sec2}), dispersive estimates (Strichartz estimates) have played  very important roles. For smooth potentials $V(\bx)$ with at most quadratic growth in far field, i.e.,
\be\label{eq:potcon:sec2}
V(\bx)\in C^\infty(\Bbb R^d) \hbox{ and
}D^{{\bf k}} V(\bx)\in L^\infty(\Bbb R^d),\quad \hbox{for all }
{\bf k}\in{\Bbb N}_0^d\  \hbox{with}\  |{\bf k}|\ge 2,
\ee
where $\NN_0=\{0\}\cup {\Bbb N}$,
Strichartz estimates are well established \cite{Cazenave,Strichartz}.

\begin{definition} In $d$ dimensions ($d=1,2,3$), let $q^\prime$ and $r^\prime$ be the conjugate
 index of $q$ and $r$ ($1\leq q,r\leq \infty$), respectively, i.e. $1=1/q^\prime+1/q
 =1/r^\prime+1/r$, we call the pair $(q,r)$  admissible and $(q^\prime,r^\prime)$ conjugate admissible if
\be \frac{2}{q}=d\left(\frac{1}{2}-\frac{1}{r}\right), \ee
and
\be
2\leq
r<\frac{2d}{d-2},\quad (2\leq r\leq \infty \quad\text{if}\quad d=1;\;2\leq r< \infty \quad\text{if}\quad d=2).
\ee
\end{definition}

 Consider the unitary group  $e^{it{\bf H}_{\bx}^V}$ generated by ${\bf H}_{\bx}^V=-\frac{1}{2}\nabla^2+V(\bx)$, for $V(\bx)$ satisfying (\ref{eq:potcon:sec2}), then the following estimates are available.

\begin{lemma}\label{lem:stri}(Strichartz's estimates) Let $(q,r)$ be an admissible
pair and $(\gamma,\varrho)$ be a conjugate admissible pair, $I\subset
{\Bbb R}$ be a bounded interval satisfying $0\in I$, then we have

(i) There exists a constant  $C$ depending on $I$ and $q$ such that
\be \left\|e^{-it{\bf H}_{\bx}^V}\varphi\right\|_{L^q(I,L^r(\Bbb R^d))}\leq
C(I,q) \|\varphi\|_{L^2(\Bbb R^d)}. \ee

(ii)  If $f\in L^{\gamma}(I,L^{\varrho}(\Bbb R^d))$, there exists a
constant  $C$ depending on $I$, $q$ and $\varrho$, such that \be
\left\|\int_{I\bigcap s\leq
t}e^{-i(t-s){\bf H}_{\bx}^V}f(s)\,ds\right\|_{L^q(I,L^r(\Bbb R^d))}\leq
C(I,q,\varrho) \|f\|_{L^{\gamma}(I,L^{\varrho}(\Bbb R^d))}. \ee
\end{lemma}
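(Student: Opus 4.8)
The plan is to deduce the Strichartz bounds from two inputs — a \emph{short-time} dispersive estimate for the propagator $U(t):=e^{-it{\bf H}_\bx^V}$, and the classical $TT^*$/fractional-integration argument — and then to pass from a short interval to an arbitrary bounded $I$ by a finite decomposition combined with the group law. First I would record the preliminaries: under (\ref{eq:potcon:sec2}) the operator ${\bf H}_\bx^V=-\tfrac12\nabla^2+V(\bx)$ is bounded below (since $V\ge0$) and essentially self-adjoint on $C_c^\infty(\Bbb R^d)$, so $U(t)$ is a well-defined strongly continuous unitary group with $U(t)U(s)^*=U(t-s)$. The one genuinely new ingredient is Fujiwara's representation of the Schwartz kernel of $U(t)$ as an oscillatory (Fourier) integral for small times: there exist $T_0>0$, depending only on $\sup_{|{\bf k}|\ge2}\|D^{\bf k}V\|_\infty$, and $C>0$ with $|(U(t)f)(\bx)|\le C|t|^{-d/2}\|f\|_{L^1}$ for $0<|t|\le T_0$. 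Interpolating with $\|U(t)f\|_{L^2}=\|f\|_{L^2}$ gives, for $r$ in the admissible range and $|t-s|\le T_0$,
\be
\|U(t-s)f\|_{L^r(\Bbb R^d)}\le C\,|t-s|^{-\sigma}\,\|f\|_{L^{r^\prime}(\Bbb R^d)},\qquad \sigma=d\Big(\tfrac12-\tfrac1r\Big)=\tfrac2q\in(0,1),
\ee
the last equality being admissibility; in the stated range $r<2d/(d-2)$ one always has $\sigma<1$, so we stay off the Keel--Tao endpoint $q=2$.

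\textbf{Estimates on a short interval.} Next, on a fixed interval $J$ with $0\in J$ and $|J|\le T_0$ the decay above is available for all $t,s\in J$, and I would run the usual $TT^*$ argument. For (i) on $J$: by duality $\|U(t)\varphi\|_{L^q(J,L^r)}\le C\|\varphi\|_2$ is equivalent to $\big\|\int_J U(s)^*F(s)\,ds\big\|_{L^2}\le C\|F\|_{L^{q^\prime}(J,L^{r^\prime})}$, and squaring reduces this to bounding $F\mapsto\int_J U(t)U(s)^*F(s)\,ds$ in $L^q_tL^r_x$; Minkowski's inequality, the decay estimate, and the one-dimensional Hardy--Littlewood--Sobolev inequality with exponent $\sigma=2/q$ close the argument. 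For (ii) on $J$: dropping the restriction $s\le t$, the map $F\mapsto\int_J U(t-s)F(s)\,ds$ is bounded from $L^{\gamma^\prime}(J,L^{\varrho^\prime})$ to $L^q(J,L^r)$ by composing the homogeneous estimate with its dual, and the retarded (truncated) version follows from the Christ--Kiselev lemma, whose hypothesis $q>\gamma^\prime$ holds away from the endpoint.

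\textbf{Passage to a bounded interval.} To obtain a general bounded $I$, I would split $I$ into finitely many essentially disjoint subintervals $I_1,\dots,I_M$ of length $\le T_0$, with $\tau_j\in I_j$. For $t\in I_j$ write $U(t)=U(t-\tau_j)U(\tau_j)$; since $U(\tau_j)$ is unitary on $L^2$, the short-interval bound (i) applied on $I_j-\tau_j$ gives $\|U(t)\varphi\|_{L^q(I_j,L^r)}\le C\|\varphi\|_2$, and summing over the $M$ blocks yields (i). For (ii), for $t\in I_j$ decompose $\int_{I\cap\{s\le t\}}U(t-s)F(s)\,ds$ into the part with $s\in I_j$ — controlled by (ii) on $I_j$ — plus, for each earlier block $I_k$, the term $U(t)g_k$ with $g_k:=\int_{I_k}U(s)^*F(s)\,ds$; here $\|g_k\|_2\le C\|F\|_{L^{\gamma^\prime}(I_k,L^{\varrho^\prime})}$ by the dual of (i) and $\|U(t)g_k\|_{L^q(I_j,L^r)}\le C\|g_k\|_2$ by (i). Summing over the finitely many pairs $(j,k)$ and absorbing the combinatorial factor into $C(I,q,\varrho)$ gives (ii).

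\textbf{Main obstacle.} The decomposition and $TT^*$ steps are routine; the hard part is the short-time kernel bound for potentials of at most quadratic growth. It cannot hold for all $t$ — already for the harmonic oscillator Mehler's formula produces a kernel of size $(2\pi|\sin t|)^{-d/2}$, with caustics at $t\in\pi\Bbb Z$ — which is precisely why the conclusion is confined to bounded intervals. Making the parametrix/stationary-phase analysis rigorous (controlling the generating function and the non-degeneracy of its Hessian uniformly for $|t|\le T_0$) is the technical heart, and I would invoke Fujiwara's theorem, together with \cite{Cazenave,Strichartz}, rather than reprove it.
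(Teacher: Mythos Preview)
The paper does not give a proof of this lemma: it is stated as a known input, with the dispersive machinery attributed implicitly to \cite{Cazenave,Strichartz}, and is then used as a black box to obtain Theorem~\ref{thm:dy}. Your outline --- Fujiwara's short-time parametrix for potentials with $D^{\bf k}V\in L^\infty$ for $|{\bf k}|\ge2$, interpolation to the $L^{r'}\to L^r$ decay, the $TT^*$/Hardy--Littlewood--Sobolev argument on a short interval, Christ--Kiselev for the retarded piece, and a finite block decomposition using the group law --- is exactly the standard route taken in those references, and it is correct. You are also right that the quadratic-growth hypothesis is sharp for this method (Mehler caustics), which is why the constants depend on $|I|$.

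One small notational slip: in the paper's convention $(\gamma,\varrho)$ is already a \emph{conjugate} admissible pair, i.e.\ $(\gamma',\varrho')$ is admissible, so in part~(ii) the inhomogeneous term lies in $L^\gamma(I,L^\varrho)$ with $\gamma\in[1,2)$. Your sketch writes the map as going from $L^{\gamma'}(J,L^{\varrho'})$ and checks the Christ--Kiselev hypothesis as ``$q>\gamma'$''; under the paper's convention these primes should be dropped, and the correct condition is $q>\gamma$, which holds since $q>2>\gamma$. With that adjustment the argument goes through as written.
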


Using the above lemma, we can get the following results \cite{Cazenave,Sulem}.
\begin{theorem}\label{thm:dy}
(Well-posedness of Cauchy problem) Suppose the real-valued trap
potential satisfies $V(\bx)\ge0$ ($\bx\in\Bbb R^d$, $d=1,2,3$) and the condition (\ref{eq:potcon:sec2}),  then we have

(i) For any initial data $\psi(\bx,t=0)=\psi_0(\bx)\in X(\Bbb R^d)$,
 there exists a
$T_{{\rm max}}\in(0,+\infty]$ such that the Cauchy problem of
(\ref{eq:gpe:sec2})
 has a unique maximal solution
$\psi\in C\left([0,T_{{\rm max}}),X\right)$. It is maximal in
the sense that if $T_{{\rm max}}<\infty$, then
$\|\psi(\cdot,t)\|_{X}\to\infty$ when  $t\to T^-_{{\rm
max}}$.

(ii) As long as the solution $\psi(\bx,t)$ remains in the energy
space $X$, the {\sl $L^2$-norm} $\|\psi(\cdot,t)\|_2$ and {\sl
energy} $E(\psi(\cdot,t))$ in (\ref{eq:energy:sec2}) are conserved for
$t\in[0,T_{\rm max})$.

(iii) The solution of the Cauchy problem for (\ref{eq:gpe:sec2}) is global in time, i.e.,
  $T_{\rm max}=\infty$, if $d=1$ or $d=2$ with $\beta>C_b/\|\psi_0\|_2^2$ or $d=3$ with $\beta\ge0$.
\end{theorem}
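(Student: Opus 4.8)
The plan is to recast \fref{eq:gpe:sec2} in Duhamel form,
\be
\psi(t)=e^{-it{\bf H}_{\bx}^V}\psi_0-i\beta\int_0^t e^{-i(t-s){\bf H}_{\bx}^V}\big(|\psi(s)|^2\psi(s)\big)\,ds,\qquad {\bf H}_{\bx}^V=-\tfrac12\nabla^2+V(\bx),
\ee
and to solve it by a Banach fixed-point argument built on the Strichartz estimates of Lemma \ref{lem:stri}, in the spirit of the Cazenave--Weissler scheme \cite{Cazenave,Sulem}. For part (i), pick an admissible pair $(q,r)$ adapted to the cubic nonlinearity (e.g. $(4,4)$ for $d=2$, and the analogous subcritical choices for $d=1,3$, using that cubic NLSE is energy-subcritical in these dimensions), and look for a fixed point in a small ball of
\be
Y_T=C\big([0,T],X\big)\cap L^q\big((0,T),W^{1,r}(\Bbb R^d)\big)
\ee
for $T>0$ small. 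The pointwise bound $\big||\psi|^2\psi-|\phi|^2\phi\big|\lesssim(|\psi|^2+|\phi|^2)\,|\psi-\phi|$, together with H\"older in space, the Sobolev embeddings ($H^1\hookrightarrow L^6$ for $d=3$, $H^1\hookrightarrow L^p$ for every $p<\infty$ for $d=2$, $H^1\hookrightarrow L^\infty$ for $d=1$), H\"older in time on the short interval $(0,T)$, and the inhomogeneous Strichartz estimate (ii) of Lemma \ref{lem:stri}, shows that the Duhamel map contracts once $T$ is small depending only on $\|\psi_0\|_X$. This yields a unique maximal solution $\psi\in C([0,T_{\max}),X)$; since the local existence time is controlled by the $X$-norm of the data at the starting instant, the standard continuation argument gives the blow-up alternative: if $T_{\max}<\infty$ then $\|\psi(\cdot,t)\|_X\to\infty$ as $t\to T_{\max}^-$.

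The point requiring the most care --- and the main obstacle --- is that the natural space here is the weighted energy space $X=H^1(\Bbb R^d)\cap L_V(\Bbb R^d)=D\big(({\bf H}_{\bx}^V)^{1/2}\big)$, not merely $H^1$, so the Strichartz machinery must be run for $\psi$ \emph{and} for a derivative and weight of $\psi$ simultaneously. Under hypothesis \fref{eq:potcon:sec2} the propagator $e^{-it{\bf H}_{\bx}^V}$ is a $C_0$-group on $X$ and Lemma \ref{lem:stri} applies; moreover, since $[\nabla,{\bf H}_{\bx}^V]=(\nabla V)\cdot$ has at most linear growth and the commutator of multiplication by $\bx$ with ${\bf H}_{\bx}^V$ brings down a gradient, the Strichartz inequalities for $\psi$, $\nabla\psi$ and $\langle\bx\rangle\psi$ form a closed system; running the contraction on the product of these norms, and using that the cubic nonlinearity commutes well with $\nabla$ and with multiplication by $\langle\bx\rangle$, produces the solution in $C([0,T],X)$. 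This is precisely the mechanism developed for potentials with at most quadratic growth in \cite{Cazenave,Strichartz}.

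Part (ii) is obtained by the usual pairing identities: pairing \fref{eq:gpe:sec2} with $\overline{\psi}$ and taking imaginary parts gives $\frac{d}{dt}\|\psi(\cdot,t)\|_2^2=0$, and pairing with $\overline{\psi_t}$ and taking real parts gives $\frac{d}{dt}E(\psi(\cdot,t))=0$; as these computations are only formal for $X$-solutions, one first proves them for smooth data and then passes to the limit invoking the continuous dependence from part (i). Finally, for part (iii) one uses the conserved energy to obtain an a priori bound on $\|\psi(\cdot,t)\|_X$ on bounded time intervals, which by the blow-up alternative forces $T_{\max}=\infty$. When $d=3$ and $\beta\ge0$, all three terms in \fref{eq:energy:sec2} are nonnegative, so $E(\psi_0)$ directly bounds $\|\nabla\psi(\cdot,t)\|_2^2$ and $E_{\rm pot}(\psi(\cdot,t))$. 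When $d=1$, the Gagliardo--Nirenberg inequality $\|\psi\|_4^4\le C\|\nabla\psi\|_2\|\psi\|_2^3$ and Young's inequality let one absorb $\tfrac\beta2\|\psi\|_4^4$ into $\tfrac14\|\nabla\psi\|_2^2$ for every $\beta\in\Bbb R$ (the kinetic term carries the higher power of $\nabla\psi$), leaving bounds on $\|\nabla\psi\|_2$ and $E_{\rm pot}$. When $d=2$, the sharp estimate $\|\psi\|_4^4\le C_b^{-1}\|\nabla\psi\|_2^2\|\psi\|_2^2$ coming from \fref{eq:bestcons:2d} together with mass conservation gives
\be
E(\psi_0)\ge\tfrac12\Big(1+\tfrac{\beta\|\psi_0\|_2^2}{C_b}\Big)\|\nabla\psi(\cdot,t)\|_2^2+E_{\rm pot}(\psi(\cdot,t)),
\ee
so the stated condition on $\beta$ makes the bracket positive and yields the a priori bound; in all cases $\|\psi(\cdot,t)\|_X$ remains finite on bounded intervals and the solution is global.
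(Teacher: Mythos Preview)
Your sketch is correct and is precisely the standard Cazenave--Weissler scheme the paper refers to; the paper itself gives no proof beyond citing \cite{Cazenave,Sulem} after Lemma~\ref{lem:stri}, and your Duhamel/contraction argument in $X=D\big(({\bf H}_{\bx}^V)^{1/2}\big)$ together with conservation laws and Gagliardo--Nirenberg a~priori bounds is exactly what those references contain.

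One small caveat on the $d=2$ step: your displayed inequality
\[
E(\psi_0)\ge\tfrac12\Big(1+\tfrac{\beta\|\psi_0\|_2^2}{C_b}\Big)\|\nabla\psi(\cdot,t)\|_2^2+E_{\rm pot}(\psi(\cdot,t))
\]
follows from $\|\psi\|_4^4\le C_b^{-1}\|\nabla\psi\|_2^2\|\psi\|_2^2$ only when $\beta\le 0$ (for $\beta>0$ the Gagliardo--Nirenberg bound goes the wrong way). The theorem's printed hypothesis $\beta>C_b/\|\psi_0\|_2^2$ is positive, so the a~priori bound is immediate anyway from nonnegativity of the interaction energy; but comparing with Theorem~\ref{thm:gs}(ii) the intended condition is almost certainly $\beta>-C_b/\|\psi_0\|_2^2$, and for that corrected hypothesis your argument is exactly right.
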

\subsubsection{Dynamical properties}
From Theorem \ref{thm:dy}, the GPE (\ref{eq:gpe:sec2}) conserves the energy (\ref{eq:energy:sec2}) and the mass ($L^2$-norm) (\ref{eq:norm:sec2}). There are other important quantities that measure the dynamical properties of BEC.  Consider the momentum defined as
\be\label{eq:mom:sec2}
{\bf P}(t)=\int_{\Bbb R^d}{\rm Im}(\bar{\psi}(\bx,t)\nabla \psi(\bx,t))\,d\bx,\quad t\ge0,
\ee
 where ${\rm Im}(c)$ denotes the imaginary part of $c$. Then we can get the following result.
 \begin{lemma}\label{lem:monode} Suppose $\psi({\bx},t)$ is the solution of the problem
(\ref{eq:gpe:sec2}) and $|\nabla V(\bx)|\leq C(V(\bx)+1)$ ($V(\bx)\ge0$) for some constant $C$, then we have
\be
\label{eq:momode:sec2}
\qquad \dot{{\bf P}}(t)=-\int_{\Bbb R^d}|\psi(\bx,t)|^2\nabla V(\bx)\,d\bx.
\ee
In particular, for $V(\bx)\equiv0$, the momentum is conserved.
 \end{lemma}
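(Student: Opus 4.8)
The plan is to run the standard ``momentum evolves under the external force'' computation for nonlinear Schr\"odinger equations: differentiate ${\bf P}(t)$ in time, substitute the GPE for $\partial_t\psi$, and verify that the kinetic term and the cubic self-interaction drop out after integration by parts, leaving only the force term $-\int|\psi|^2\nabla V$. Two preliminary observations fix the framework. First, ${\bf P}(t)$ is well defined whenever $\psi(\cdot,t)\in X$, since $|\,\mathrm{Im}(\bar\psi\nabla\psi)|\le|\psi|\,|\nabla\psi|\in L^1(\Bbb R^d)$ by Cauchy--Schwarz. Second, the asserted right-hand side is finite for $\psi\in X$: using $|\nabla V|\le C(V+1)$ and $V\ge0$,
\[
\int_{\Bbb R^d}|\psi|^2|\nabla V|\,d\bx\le C\int_{\Bbb R^d}|\psi|^2(V+1)\,d\bx=C\big(E_{\rm pot}(\psi)+\|\psi\|_2^2\big)<\infty .
\]

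For the computation itself I would write $\mathrm{Im}(c)=\tfrac1{2i}(c-\bar c)$, so ${\bf P}(t)=\tfrac1{2i}\int_{\Bbb R^d}(\bar\psi\nabla\psi-\psi\nabla\bar\psi)\,d\bx$, and set $W(\bx,t)=V(\bx)+\beta|\psi(\bx,t)|^2$, which is real; then the GPE reads $\partial_t\psi=-i(-\tfrac12\Delta+W)\psi$ and $\partial_t\bar\psi=i(-\tfrac12\Delta+W)\bar\psi$. A direct differentiation and regrouping gives
\[
\partial_t\!\left(\bar\psi\nabla\psi-\psi\nabla\bar\psi\right)
=\tfrac{i}{2}\big[\bar\psi\nabla(\Delta\psi)-(\Delta\bar\psi)\nabla\psi\big]
+\tfrac{i}{2}\big[\psi\nabla(\Delta\bar\psi)-(\Delta\psi)\nabla\bar\psi\big]
-2i\,|\psi|^2\nabla W ,
\]
where the last term arises from expanding $\nabla(W\psi)$ and $\nabla(W\bar\psi)$ and cancelling in pairs the terms in which $\nabla$ hits $\psi$ or $\bar\psi$. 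Now integrate in $\bx$: the two Laplacian groups are complex conjugates of each other and each integrates to zero, since two integrations by parts turn $\int(\Delta\bar\psi)\,\partial_j\psi\,d\bx$ into $\int\bar\psi\,\partial_j(\Delta\psi)\,d\bx$; and $-2i\int|\psi|^2\nabla W\,d\bx=-2i\int|\psi|^2\nabla V\,d\bx-i\beta\int\nabla(|\psi|^4)\,d\bx=-2i\int|\psi|^2\nabla V\,d\bx$, the $\beta$-integral vanishing as the integral of a gradient. Multiplying by $\tfrac1{2i}$ yields $(\ref{eq:momode:sec2})$, and the case $V\equiv0$ is then immediate.

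The main obstacle is rigor, not algebra: a solution in the energy space is only $H^1$ in $\bx$, so $\Delta\psi$ is a priori just a distribution, the double integration by parts of the Laplacian is formal, and one must also rule out surviving boundary terms at infinity. I would handle this by proving first the integrated form
\[
{\bf P}(t)-{\bf P}(0)=-\int_0^t\!\!\int_{\Bbb R^d}|\psi(\bx,s)|^2\,\nabla V(\bx)\,d\bx\,ds
\]
for initial data in a dense subclass of $X$ on which the GPE flow is classical and decays rapidly (e.g.\ Schwartz data, or $H^2\cap L^2(|\bx|^2 d\bx)$ data, which is propagated under the hypothesis (\ref{eq:potcon:sec2})), where every manipulation above is legitimate and all boundary terms vanish. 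Then I would pass to the limit for arbitrary $\psi_0\in X$ using the well-posedness and continuous dependence in Theorem \ref{thm:dy}: ${\bf P}(\cdot)$ is continuous in the $X$-topology because $\nabla:X\to L^2$ is bounded, while the inner integrand on the right is dominated by $|\psi|^2(1+V)$, uniformly controlled in $s$ by $\|\psi(\cdot,s)\|_2^2+E_{\rm pot}(\psi(\cdot,s))$, so the right-hand side also converges. Differentiating the integrated identity in $t$ then shows ${\bf P}\in C^1$ with the stated derivative. I expect the algebra to be routine and the density/limiting step to be the only place demanding genuine care.
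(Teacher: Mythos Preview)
Your argument is correct and follows essentially the same route as the paper's proof: differentiate ${\bf P}(t)$, substitute the GPE, and integrate by parts so that the kinetic and cubic terms vanish and only $-\int|\psi|^2\nabla V$ survives. The paper's only organizational twist is that it first integrates by parts once to replace $\int\bar\psi\,\nabla\psi_t$ by $-\int(\nabla\bar\psi)\,\psi_t$, after which the integrand becomes a sum of total gradients $\tfrac12\nabla|\nabla\psi|^2+V\nabla|\psi|^2+\tfrac{\beta}{2}\nabla|\psi|^4$, so the Laplacian contribution is seen to vanish with a single IBP rather than the two you use; the paper also does not spell out the density/regularization argument you supply, simply invoking sufficient decay of $\psi$ at infinity.
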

 \begin{proof}
 Differentiating (\ref{eq:mom:sec2}) with respect to $t$, noticing (\ref{eq:gpe:sec2}),
integrating by parts and taking into account that $\psi$ decreases to 0
exponentially when $|{\bx}|\to\infty$ (see also \cite{Cazenave}), we have
\begin{align*}
\dot{{\bf P}}(t)
=& -i\int_{\mathbb{R}^d}\left[\;\bar{\psi}_t\nabla\psi
-\nabla\bar{\psi}\psi_t\right]d{\bx}
= \int_{\mathbb{R}^d}\left[\left(-i\bar{\psi}_t\right)\nabla\psi
+i\psi_t\nabla\bar{\psi}\;\right]d{\bx}\\
=&\int_{\mathbb{R}^d}\left[\left(-\fl{1}{2}\nabla^2\bar{\psi}+V
\bar{\psi}+\bt|\psi|^2\bar{\psi}\right)\nabla\psi+\left(-\fl{1}{2}\nabla^2\psi+V\psi+\bt|\psi|^2\psi\right)
\nabla\bar{\psi}\right]d{\bx}\\
=&\int_{\mathbb{R}^d}\bigg[\fl{1}{2}\nabla |\nabla\psi|^2+V(\bx)\nabla|\psi|^2+\frac{\beta}{2}\nabla|\psi|^4\bigg]\,d\bx\\
=& -\int_{\mathbb{R}^d}|\psi|^2\nabla V(\bx)\,d\bx,\quad t\geq 0.
\end{align*}
The proof is complete.
 \end{proof}

 Another quantity characterizing the dynamics of  BEC is the
condensate width  defined as
\bea
\label{eq:def_sigma:sec2}
\sigma_{\ap}(t) = \sqrt{\dt_{\ap}(t)},\quad\mbox{where}\;\;
\dt_\ap(t)  =
\int_{\mathbb{R}^d}\ap^2|\psi({\bx},t)|^2d{\bx},
\eea
for $t\geq 0$ and $\alpha$ being either $x,y$ or $z$, with $\bx=x$ in 1D, $\bx=(x,y)^T$ in 2D and $\bx=(x,y,z)^T$ in 3D.
For the dynamics of condensate widths, we have the following lemmas:
\begin{lemma}\label{lem:variance}
Suppose $\psi({\bx},t)$ is the solution of
(\ref{eq:gpe:sec2}) in $\Bbb R^d$ ($d=1,2,3$) with initial data $\psi(\bx,0)=\psi_0(\bx)$, then we have
\bea
\label{eq:sigma_ODE2:sec2}
&&\ddot{\dt}_{\ap}(t) = \int_{{\mathbb  R}^d}\left[2|\p_{\alpha}\psi|^2+\beta
|\psi|^4-2\alpha|\psi|^2\p_{\alpha}V(\bx)\right]\,d\bx,\quad t\geq 0,\\
\label{eq:sigma_init0:sec2}
&&\dt_{\ap}(0) = \dt_{\ap}^{(0)} = \int_{\mathbb{R}^d}\ap^2|\psi_0({\bx})|^2
d{\bx},\qquad \ap = x, y, z,\\
\label{eq:sigma_init1:sec2}
&&\dot{\dt}_{\ap}(0) = \dt_{\ap}^{(1)} = 2\int_{\mathbb{R}^d}\ap\, {{\rm Im}}
\left(\bar{\psi}_0\p_\ap\psi_0\right)\; d{\bx}.
\eea
\end{lemma}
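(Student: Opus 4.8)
The plan is to obtain all three formulas by differentiating $\delta_\alpha(t)=\int_{\Bbb R^d}\alpha^2|\psi(\bx,t)|^2\,d\bx$ in time, using the GPE (\ref{eq:gpe:sec2}) to trade time derivatives of $\psi$ for spatial ones, and integrating by parts. Formulas (\ref{eq:sigma_init0:sec2}) and (\ref{eq:sigma_init1:sec2}) are then just $\delta_\alpha$ and its first time derivative evaluated at $t=0$, so the real content is the second-order identity (\ref{eq:sigma_ODE2:sec2}).

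First I would compute $\dot\delta_\alpha$. Since $V$ and $\beta|\psi|^2$ are real, multiplying (\ref{eq:gpe:sec2}) by $\bar\psi$ and taking twice the real part kills the potential and nonlinear terms and leaves the mass--current continuity equation $\partial_t|\psi|^2=-\nabla\cdot\mathbf J$ with $\mathbf J={\rm Im}(\bar\psi\nabla\psi)$. Multiplying by $\alpha^2$ and integrating by parts (no boundary term, by the decay of $\psi$ at infinity) gives $\dot\delta_\alpha(t)=\int_{\Bbb R^d}\nabla(\alpha^2)\cdot\mathbf J\,d\bx=2\int_{\Bbb R^d}\alpha\,{\rm Im}(\bar\psi\,\partial_\alpha\psi)\,d\bx$, whose value at $t=0$ is (\ref{eq:sigma_init1:sec2}). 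Differentiating this once more, and writing $H=-\frac12\nabla^2+V+\beta|\psi|^2$ so that $\psi_t=-iH\psi$ and $\bar\psi_t=iH\bar\psi$, one gets $\ddot\delta_\alpha=2\,{\rm Re}\int_{\Bbb R^d}\alpha\bigl(\overline{H\psi}\,\partial_\alpha\psi-\bar\psi\,\partial_\alpha(H\psi)\bigr)\,d\bx$. In this expression the terms coming from the real multiplier $V+\beta|\psi|^2$ collapse, after one integration by parts in $\alpha$, to $-\int\alpha|\psi|^2\partial_\alpha(V+\beta|\psi|^2)\,d\bx$; using $\int\alpha|\psi|^2\partial_\alpha(|\psi|^2)=\frac12\int\alpha\,\partial_\alpha(|\psi|^4)=-\frac12\int|\psi|^4$ this becomes $-2\int\alpha|\psi|^2\partial_\alpha V\,d\bx+\beta\int|\psi|^4\,d\bx$; while the Laplacian term, after integrating by parts twice, produces the commutator-type contribution $2\int|\partial_\alpha\psi|^2\,d\bx$ (the leftover pieces forming a conjugate pair whose real part cancels). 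Collecting the three pieces yields (\ref{eq:sigma_ODE2:sec2}). Equivalently the same bookkeeping can be phrased as a local momentum-balance law $\partial_tJ_\alpha+\partial_jT_{j\alpha}=-|\psi|^2\partial_\alpha V$ for a suitable stress tensor, with $\ddot\delta_\alpha=2\int T_{\alpha\alpha}\,d\bx-2\int\alpha|\psi|^2\partial_\alpha V\,d\bx$.

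The algebra above is formal, and its justification is the step I expect to be the main obstacle. By Theorem \ref{thm:dy} the solution is only known to lie in $X=H^1\cap L_V$, which in general controls neither $\int\alpha^2|\psi|^2$ (unless $V$ itself grows at least quadratically) nor $\nabla^2\psi$ pointwise, so differentiation under the integral sign and the vanishing of the boundary terms in the integrations by parts are not a priori licit. The standard remedy is to first establish (\ref{eq:sigma_ODE2:sec2})--(\ref{eq:sigma_init1:sec2}) for smooth, rapidly decaying solutions --- obtained from more regular and weighted initial data, e.g. $\psi_0\in H^2(\Bbb R^d)$ with $|\bx|\psi_0\in L^2(\Bbb R^d)$, for which both the $H^2$ regularity and the spatial weight are propagated in time --- and then to recover the general case by an approximation/density argument; alternatively one truncates with cutoffs $\chi(\bx/R)$, derives a truncated identity with a remainder, and lets $R\to\infty$, using the $X$-bound together with the decay of $\psi$ (of the type exploited in Lemma \ref{lem:monode}) to discard the remainder. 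The companion fact that $\delta_\alpha(t)<\infty$ persists in time follows in the same framework from a Gronwall estimate applied to the truncated quantity and (\ref{eq:sigma_ODE2:sec2}). Once this functional-analytic scaffolding is in place, the lemma follows from the computation sketched above.
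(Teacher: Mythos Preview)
Your approach is essentially the same as the paper's: differentiate $\delta_\alpha$ in time, use the GPE to replace $\psi_t$, and integrate by parts, first obtaining $\dot\delta_\alpha(t)=2\int\alpha\,{\rm Im}(\bar\psi\,\partial_\alpha\psi)\,d\bx$ (the paper writes this as $-i\int\alpha(\bar\psi\partial_\alpha\psi-\psi\partial_\alpha\bar\psi)\,d\bx$) and then differentiating once more to reach (\ref{eq:sigma_ODE2:sec2}). The paper's proof is terse and entirely formal, so your additional discussion of the algebra and of the regularity/weight justification actually goes beyond what the paper provides, but the underlying strategy is identical.
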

\begin{proof}
 Differentiating (\ref{eq:def_sigma:sec2}) with respect to  $t$, applying (\ref{eq:gpe:sec2}),
and integrating by parts, we obtain \be
\dot{\delta}_\alpha(t)=-i\int_{\Bbb
R^d}\left[\alpha\bar{\psi}(\bx,t)\p_{\alpha}\psi(\bx,t)-
\alpha\psi(\bx,t)\p_{\alpha}\bar{\psi}(\bx,t)\right]\,d\bx, \qquad
t\ge0.\ee Similarly, we have \be \label{eq:d2ap22:sec2}
\ddot{\delta}_\alpha(t)=\int_{\Bbb
R^d}\left[2|\p_{\alpha}\psi|^2+\beta
|\psi|^4-2\alpha|\psi|^2\p_{\alpha}V(\bx)\right]\,d\bx, \ee
and the conclusion follows.
\end{proof}

Based on the above Lemma, when $V(\bx)$ is taken as the harmonic potential (\ref{eq:dhp:sec1}),
it is easy to show that the condensate width is a periodic function
whose frequency is doubling the trapping frequency in a few special cases \cite{BaoZhang}.

\begin{lemma}
(i) In 1D without interaction, i.e. $d=1$ and $\beta=0$ in (\ref{eq:gpe:sec2}), for any initial data
$\psi(x,0)=\psi_0=\psi_0(x)$, we have
\be
\label{eq:solution_dt_r:sec27}
\dt_x(t) = \fl{E(\psi_0)}{\gm_x^2}+\left(\dt_x^{(0)}-\fl{E(\psi_0)}{\gm_x^2}\right)
\cos(2\gm_xt)+\fl{\dt_x^{(1)}}{2\gm_x}\sin(2\gm_x t),\qquad t\ge0.
\ee
(ii) In 2D with a radially symmetric trap, i.e. $d = 2$ and
$\gm_x=\gm_y:=\gm_r$ in (\ref{eq:dhp:sec1}) and (\ref{eq:gpe:sec2}), for any initial data
$\psi(x,y,0)=\psi_0=\psi_0(x,y)$, we have
\be
\label{eq:solution_dt_r:sec28}
\dt_r(t) = \fl{E(\psi_0)}{\gm_r^2}+\left(\dt_r^{(0)}-\fl{E(\psi_0)}{\gm_r^2}\right)
\cos(2\gm_rt)+\fl{\dt_r^{(1)}}{2\gm_r}\sin(2\gm_r t), \qquad t\ge0,
\ee
where $\dt_r(t)= \dt_x(t)+\dt_y(t)$,
$\dt_r^{(0)}:=\dt_x(0)+\dt_y(0)$, and
$\dt_r^{(1)}:=\dot{\dt}_x(0)+\dot{\dt}_y(0)$.
Furthermore, when the initial condition $\psi_0(x,y)$  satisfies
\be
\label{eq:vortex_initial:sec5}
\psi_0(x,y)=f(r)e^{im\theta}\quad{\rm with}\quad m\in{\mathbb Z}\quad
{\rm and} \quad f(0) = 0\quad{\rm when}\quad m\neq0,
\ee
we have, for any $t\geq 0$,
\begin{align}
\label{eq:solution_dt_xy:sec29}
\dt_x(t)=&\dt_y(t) = \fl{1}{2}\dt_r(t)\nonumber\\
=&\fl{E(\psi_0)}{2\gm_x^2}+\left(\dt_x^{(0)}-\fl{E(\psi_0)}{2\gm_x^2}\right)
\cos(2\gm_xt)+\fl{\dt_x^{(1)}}{2\gm_x}\sin(2\gm_x t), \qquad t\ge0.
\end{align}
\end{lemma}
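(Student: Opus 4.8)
The plan is to reduce each assertion to an explicit linear second-order ODE for $\dt_\alpha(t)$, obtained by specializing Lemma~\ref{lem:variance} to the harmonic trap and closing it with energy conservation. For the potential (\ref{eq:dhp:sec1}) one has $\p_\alpha V(\bx)=\gm_\alpha^2\alpha$, so (\ref{eq:sigma_ODE2:sec2}) becomes
\be
\ddot{\dt}_\alpha(t)=\int_{\Bbb R^d}\bigl[2|\p_\alpha\psi|^2+\beta|\psi|^4\bigr]\,d\bx-2\gm_\alpha^2\,\dt_\alpha(t),\qquad t\ge0.
\ee
In case (i) ($d=1$, $\beta=0$) the interaction energy vanishes and $E_{\rm pot}(\psi)=\frac12\gm_x^2\dt_x(t)$, hence by the conservation of $E$ (Theorem~\ref{thm:dy}(ii)) we have $\int_{\Bbb R}|\p_x\psi|^2\,dx=2E_{\rm kin}(\psi)=2E(\psi_0)-\gm_x^2\dt_x(t)$. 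Substituting this into the displayed identity gives $\ddot{\dt}_x+4\gm_x^2\dt_x=4E(\psi_0)$, whose general solution is $\dt_x(t)=E(\psi_0)/\gm_x^2+A\cos(2\gm_xt)+B\sin(2\gm_xt)$; imposing $\dt_x(0)=\dt_x^{(0)}$ and $\dot{\dt}_x(0)=\dt_x^{(1)}$ from (\ref{eq:sigma_init0:sec2})--(\ref{eq:sigma_init1:sec2}) fixes $A=\dt_x^{(0)}-E(\psi_0)/\gm_x^2$ and $B=\dt_x^{(1)}/(2\gm_x)$, which is exactly (\ref{eq:solution_dt_r:sec27}).

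For case (ii), with $\gm_x=\gm_y=\gm_r$ I would add the two identities for $\alpha=x$ and $\alpha=y$; then $\int_{\Bbb R^2}[2|\p_x\psi|^2+2|\p_y\psi|^2]\,d\bx=4E_{\rm kin}(\psi)$, $\int_{\Bbb R^2}2\beta|\psi|^4\,d\bx=4E_{\rm int}(\psi)$ and $E_{\rm pot}(\psi)=\frac12\gm_r^2\dt_r(t)$, so energy conservation again yields $\ddot{\dt}_r+4\gm_r^2\dt_r=4E(\psi_0)$, the same ODE as in case (i); solving it with the data $\dt_r^{(0)}$, $\dt_r^{(1)}$ gives (\ref{eq:solution_dt_r:sec28}). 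To obtain (\ref{eq:solution_dt_xy:sec29}) under the ansatz (\ref{eq:vortex_initial:sec5}) the extra ingredient is a symmetry argument: since the trap is radially symmetric, $\psi(r,\theta,t)=g(r,t)e^{im\theta}$ solves the GPE whenever $g$ solves the associated radial equation with $g(\cdot,0)=f$, and by uniqueness (Theorem~\ref{thm:dy}(i)) this is the solution, so $|\psi(\cdot,t)|^2$ stays radial in $(x,y)$; the exchange $x\leftrightarrow y$ then forces $\dt_x(t)=\dt_y(t)=\frac12\dt_r(t)$ as well as $\dt_r^{(0)}=2\dt_x^{(0)}$, $\dt_r^{(1)}=2\dt_x^{(1)}$, and dividing (\ref{eq:solution_dt_r:sec28}) by $2$ produces (\ref{eq:solution_dt_xy:sec29}).

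The calculations above are routine; the points needing care are the regularity and spatial decay required to legitimize the integrations by parts behind Lemma~\ref{lem:variance} and to guarantee $\dt_\alpha(t)<\infty$ and $\dt_\alpha\in C^2$, which rest on the well-posedness theory of Theorem~\ref{thm:dy} together with the exponential spatial decay of $\psi$ already used in the proof of Lemma~\ref{lem:monode}. The only genuinely non-computational step is the persistence of the rotational symmetry of the solution in case (ii), for which uniqueness of the Cauchy problem is essential.
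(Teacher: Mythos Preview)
Your proof is correct and follows exactly the approach the paper intends: the paper does not spell out a proof here but simply says the lemma follows from Lemma~\ref{lem:variance} together with energy conservation, and your derivation of the forced harmonic oscillator $\ddot{\dt}+4\gm^2\dt=4E(\psi_0)$ in both cases is precisely that computation. Your symmetry argument for (\ref{eq:solution_dt_xy:sec29}) via uniqueness of the Cauchy problem is the standard one (rotational invariance of the equation plus $U(1)$-invariance shows $\psi(R_\phi\bx,t)=e^{im\phi}\psi(\bx,t)$, hence $|\psi|^2$ stays radial), and is what the cited reference \cite{BaoZhang} does as well.
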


For the dynamics of BEC, the center of mass is also important, which is given by
\be\label{eq:centerofm:sec2}
{\bx}_c(t)=\int_{\Bbb R^d}\bx |\psi(\bx,t)|^2\,d\bx,\quad t\ge0.
\ee
Following the proofs for Lemmas \ref{lem:monode} and \ref{lem:variance}, we can get the equation governing the motion of ${\bx}_c$.
\begin{lemma}\label{lem:centerofmass}
Suppose $\psi({\bx},t)$ is the solution of
(\ref{eq:gpe:sec2}) in $\Bbb R^d$ ($d=1,2,3$) with initial data $\psi(\bx,0)=\psi_0(\bx)$, then we have
\bea
\label{eq:mc_ODE2:sec2}
&&\dot{\bx}_c(t)=
{\bf P}(t),\qquad\ddot{\bx}_{c}(t) = -\int_{\Bbb R^d}|\psi(\bx,t)|^2\nabla V(\bx)\,d\bx,\quad t\geq 0,\\
\label{eq:mc_init0:sec2}
&&{\bx}_{c}(0) = {\bx}_{c}^{(0)} = \int_{\mathbb{R}^d}\bx|\psi_0({\bx})|^2
d{\bx},\\
\label{eq:mc_init1:sec2}
&&\dot{\bx}_c(0) = {\bx}_{c}^{(1)} = {\bf P}(0)=\int_{\Bbb R^d}{\rm Im}(\bar{\psi}_0\nabla \psi_0)\,d\bx.
\eea
\end{lemma}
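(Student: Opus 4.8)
The plan is to differentiate $\bx_c(t)$ twice in $t$, each time using the GPE \eqref{eq:gpe:sec2} to eliminate the time derivative of $\psi$ and then integrating by parts in space, exactly in the spirit of the proofs of Lemmas \ref{lem:monode} and \ref{lem:variance}.

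First I would differentiate \eqref{eq:centerofm:sec2} under the integral sign; this is legitimate because $\psi$ and $\nabla\psi$ decay to $0$ as $|\bx|\to\infty$ (so that $\bx|\psi|^2$, $\bx\,{\rm Im}(\bar{\psi}\nabla\psi)$ and $\bx|\psi|^4$ are integrable), as already used in the proof of Lemma \ref{lem:monode} and justified by the density argument of \cite{Cazenave}. Since $\p_t|\psi|^2=2\,{\rm Re}(\bar{\psi}\,\p_t\psi)$ and \eqref{eq:gpe:sec2} gives $\p_t\psi=\tfrac{i}{2}\nabla^2\psi-iV\psi-i\beta|\psi|^2\psi$, the potential and interaction contributions $-iV|\psi|^2$ and $-i\beta|\psi|^4$ are purely imaginary and vanish on taking the real part, so that $\p_t|\psi|^2=-{\rm Im}(\bar{\psi}\,\nabla^2\psi)=-\nabla\cdot\big({\rm Im}(\bar{\psi}\nabla\psi)\big)$, where I used the identity ${\rm Im}(\bar{\psi}\Delta\psi)={\rm Im}\big(\nabla\cdot(\bar{\psi}\nabla\psi)-|\nabla\psi|^2\big)$. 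Hence
\begin{align*}
\dot{\bx}_c(t)&=-\int_{\Bbb R^d}\bx\,\nabla\cdot\big({\rm Im}(\bar{\psi}\nabla\psi)\big)\,d\bx\\
&=\int_{\Bbb R^d}{\rm Im}\big(\bar{\psi}(\bx,t)\nabla\psi(\bx,t)\big)\,d\bx={\bf P}(t),
\end{align*}
the second equality being a componentwise integration by parts (boundary terms vanishing by the decay of $\psi$) using $\p_k x_j=\delta_{jk}$. This is the first identity in \eqref{eq:mc_ODE2:sec2}.

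For the second identity I would differentiate $\dot{\bx}_c(t)={\bf P}(t)$ once more and invoke Lemma \ref{lem:monode}: under the growth hypothesis $|\nabla V(\bx)|\le C(V(\bx)+1)$ (met in particular by the harmonic traps), that lemma yields $\dot{{\bf P}}(t)=-\int_{\Bbb R^d}|\psi(\bx,t)|^2\nabla V(\bx)\,d\bx$, whence $\ddot{\bx}_c(t)=-\int_{\Bbb R^d}|\psi(\bx,t)|^2\nabla V(\bx)\,d\bx$. Alternatively one may redo the computation directly: differentiating ${\bf P}(t)$, substituting \eqref{eq:gpe:sec2} and integrating by parts reproduces the combination $\tfrac12\nabla|\nabla\psi|^2+V(\bx)\nabla|\psi|^2+\tfrac{\beta}{2}\nabla|\psi|^4$ from the proof of Lemma \ref{lem:monode}, whose total-divergence terms integrate to zero and whose remaining term is $-|\psi|^2\nabla V(\bx)$. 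Finally, the initial data \eqref{eq:mc_init0:sec2}--\eqref{eq:mc_init1:sec2} follow immediately by evaluating \eqref{eq:centerofm:sec2} and the identity $\dot{\bx}_c={\bf P}$ at $t=0$.

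The only genuine subtlety, and the place deserving the most care, is justifying both the differentiation under the integral sign and the vanishing of the surface terms in the integrations by parts: every integrand carries the extra algebraic weight $\bx$, so one needs $\bx|\psi|^2$, $\bx|\nabla\psi|^2$ and $\bx|\psi|^4$ to be integrable and the integrals over $\{|\bx|=R\}$ to tend to $0$ as $R\to\infty$. The cleanest route is to prove the identities first for Schwartz initial data $\psi_0\in\mathcal S(\Bbb R^d)$, where all the manipulations are trivially valid, and then pass to the limit in the energy space $X$ via the continuous dependence on initial data in Theorem \ref{thm:dy}, precisely as in \cite{Cazenave}.
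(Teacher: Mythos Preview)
Your proof is correct and follows exactly the same route as the paper: differentiate $\bx_c(t)$ to obtain $\dot{\bx}_c(t)={\bf P}(t)$ via the continuity equation and integration by parts, then invoke Lemma~\ref{lem:monode} for $\ddot{\bx}_c(t)$. The paper's own proof is much terser (it just writes ``analogous calculation to Lemma~\ref{lem:monode}'' and states $\dot{\bx}_c(t)=\tfrac{i}{2}\int(\psi\nabla\bar\psi-\bar\psi\nabla\psi)\,d\bx={\bf P}(t)$), whereas you spell out the computation and the density-argument justification explicitly; but the strategy is identical.
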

\begin{proof} Analogous  calculation to Lemma \ref{lem:monode} shows that
\be
\dot{\bx}_c(t)=\frac{i}{2}\int_{\Bbb R^d}(\psi\nabla\bar{\psi}-\bar{\psi}\nabla\psi)\,d\bx=
{\bf P}(t),\quad t\ge0.
\ee
Hence, Lemma \ref{lem:monode} leads to the expression for $\ddot{\bx}_c(t)$.
\end{proof}
\begin{remark}\label{lem:harmmc} When $V(\bx)$ is the harmonic potential (\ref{eq:dhp:sec1}),  Eq. (\ref{eq:mc_ODE2:sec2}) can be rewritten as
\be\label{eq:mc_ODE:sec2}
\ddot{\bx}_{c}(t)+A{\bx}_c(t)=0,\quad t\ge0,
\ee
where $A$ is a $d\times d$ diagonal matrix as $A=(\gamma_x^2)$ when $d=1$, $A={\rm diag}(\gamma_x^2,\gamma_y^2)$ when $d=2$ and $A={\rm diag}(\gamma_x^2,\gamma_y^2,\gamma_z^2)$ when $d=3$. This immediately implies that each
component of $\bx_c$ is a periodic function whose frequency is the same as the trapping frequency in that component.
\end{remark}

For the harmonic potential (\ref{eq:dhp:sec1}), Remark \ref{lem:harmmc} provides a way to construct the exact solution of the GPE (\ref{eq:gpe:sec2}) with a stationary state as initial data. Let $\phi_e(\bx)$ be a stationary state of the GPE (\ref{eq:gpe:sec2})
with a chemical potential $\mu_e$ \cite{BaoTang,BaoWangP}, i.e. $(\mu_e,\phi_e)$
satisfying
\be
\label{eq:nep:sec2}
\mu_e\phi_e({\bx}) = -\fl{1}{2}\nabla^2\phi_e + V({\bx}) \phi_e +
\bt|\phi_e|^2\phi_e , \qquad \|\phi_e\|_2^2 = 1.
\ee
If the initial data $\psi_0(\bx)$ for the Cauchy problem of (\ref{eq:gpe:sec2}) is chosen as
a stationary state with a shift in its center, one can construct an exact
solution of the GPE (\ref{eq:gpe:sec2}) with a harmonic oscillator potential
(\ref{eq:dhp:sec1}) \cite{Bialy,Garcia}. This kind of analytical construction can be used, in
particular, in the benchmark and validation of numerical algorithms for GPE.

\begin{lemma}\label{lem:shift:sec2} Suppose $V(\bx)$ is given by (\ref{eq:dhp:sec1}),
if the initial data $\psi_0(\bx)$  for the Cauchy problem of (\ref{eq:gpe:sec2}) is chosen as
\be
\label{eq:init5:sec2}
\psi_0(\bx)=\phi_e(\bx-\bx_0), \qquad
\bx \in {\mathbb R}^d,
\ee
where $\bx_0$ is a given point in ${\mathbb R}^d$, then the exact solution
of (\ref{eq:gpe:sec2}) satisfies:
\be
\label{eq:exacts1:sec2}
\psi(\bx,t)=\phi_e(\bx-\bx_c(t))\;e^{-i\mu_e t}\; e^{iw(\bx,t)},
\qquad \bx\in{\mathbb R}^d, \quad t\ge 0,
\ee
where for any time $t\ge0$, $w(\bx,t)$ is linear for $\bx$, i.e.
\be
\label{eq:exact3:sec2}
w(\bx,t) = {\bf c}(t) \cdot \bx + g(t), \qquad {\bf c}(t)=(c_1(t), \cdots,
c_d(t))^T, \qquad \bx\in {\mathbb R}^d, \quad t\ge0,
\ee
and $\bx_c(t)$ satisfies the second-order ODE system (\ref{eq:mc_ODE:sec2}) with initial condition
\be
\bx_c(0)=\bx_0,\quad \dot{\bx}_c(0)={\bf 0}.
\ee
\end{lemma}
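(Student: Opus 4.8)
The plan is to check that, for an appropriate choice of the functions $\bx_c(t)$, ${\bf c}(t)$, $g(t)$, the right-hand side of (\ref{eq:exacts1:sec2}) is a classical solution of the GPE (\ref{eq:gpe:sec2}) with initial datum $\phi_e(\cdot-\bx_0)$, and then to invoke the uniqueness part of Theorem \ref{thm:dy} (applicable since the harmonic potential (\ref{eq:dhp:sec1}) satisfies both $V\ge0$ and the smoothness/quadratic-growth hypothesis (\ref{eq:potcon:sec2})). Write $\eta=\bx-\bx_c(t)$ and $\psi=\phi_e(\eta)\,e^{-i\mu_e t}\,e^{iw}$ with $w(\bx,t)={\bf c}(t)\cdot\bx+g(t)$. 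First I would compute $\p_t\psi$, $\nabla\psi$ and $\nabla^2\psi$ by the chain rule: the $t$-derivative produces, besides $-i\mu_e\phi_e(\eta)$, a transport term $-\dot{\bx}_c\cdot\nabla\phi_e(\eta)$ and a phase term $i(\dot{{\bf c}}\cdot\bx+\dot g)\phi_e(\eta)$; the Laplacian produces $\nabla^2\phi_e(\eta)+2i\,{\bf c}\cdot\nabla\phi_e(\eta)-|{\bf c}|^2\phi_e(\eta)$; and $|\psi|^2=|\phi_e(\eta)|^2$. Substituting these into (\ref{eq:gpe:sec2}), using the stationary equation (\ref{eq:nep:sec2}) at the point $\eta$, namely $-\tfrac12\nabla^2\phi_e(\eta)+\beta|\phi_e(\eta)|^2\phi_e(\eta)=(\mu_e-V(\eta))\phi_e(\eta)$, to eliminate the Laplacian and the cubic term, and using $V(\bx)-V(\eta)=(A\bx_c)\cdot\bx-\tfrac12\bx_c^{\mathrm{T}}A\bx_c$ for the harmonic potential (\ref{eq:dhp:sec1}) (with $A$ the diagonal matrix of Remark \ref{lem:harmmc}), the equation reduces, after cancelling the common factor $e^{-i\mu_e t}e^{iw}$ and the term $\mu_e\phi_e(\eta)$, to
\[
-i\,(\dot{\bx}_c-{\bf c})\cdot\nabla\phi_e(\eta)\;=\;\Big[(\dot{{\bf c}}+A\bx_c)\cdot\bx+\dot g+\tfrac12|{\bf c}|^2-\tfrac12\bx_c^{\mathrm{T}}A\bx_c\Big]\phi_e(\eta).
\]

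It then suffices to force both sides of this identity to vanish, i.e. to impose the ODE system $\dot{\bx}_c={\bf c}$, $\dot{{\bf c}}=-A\bx_c$, $\dot g=\tfrac12\bx_c^{\mathrm{T}}A\bx_c-\tfrac12|{\bf c}|^2$. Matching the initial value $\psi(\cdot,0)=\phi_e(\cdot-\bx_0)$ in (\ref{eq:init5:sec2}) forces $\bx_c(0)=\bx_0$, ${\bf c}(0)={\bf 0}$ (so that $w(\cdot,0)\equiv 0$) and $g(0)=0$; in particular $\dot{\bx}_c(0)={\bf c}(0)={\bf 0}$. Eliminating ${\bf c}$ gives precisely $\ddot{\bx}_c+A\bx_c=0$ with $\bx_c(0)=\bx_0$, $\dot{\bx}_c(0)={\bf 0}$, which is (\ref{eq:mc_ODE:sec2}); and then ${\bf c}=\dot{\bx}_c$, $g(t)=\int_0^t\big(\tfrac12\bx_c^{\mathrm{T}}A\bx_c-\tfrac12|\dot{\bx}_c|^2\big)\,ds$ define an affine-in-$\bx$ phase $w$ as in (\ref{eq:exact3:sec2}).

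To conclude, I would note that the ODE for $\bx_c$ is linear with constant coefficients, hence has a unique, globally defined, smooth solution that, together with all its derivatives, is bounded (indeed a trigonometric polynomial), so ${\bf c}$ and $g$ are well defined on $[0,\infty)$. The function $\psi$ assembled from these data is, by the computation above, a classical solution of (\ref{eq:gpe:sec2}) with the prescribed initial datum; and since $\phi_e\in X(\mathbb{R}^d)$ — by elliptic regularity for (\ref{eq:nep:sec2}) and the Yukawa-kernel decay estimate used in the proof of Theorem \ref{thm:exponential} — and $\bx_c(t),{\bf c}(t)$ remain bounded, the translated and phase-modulated profile satisfies $\psi\in C([0,\infty);X)$ with $\|\psi(\cdot,t)\|_X$ uniformly bounded in $t$. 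By the uniqueness assertion of Theorem \ref{thm:dy}, $\psi$ must coincide with the solution of the Cauchy problem, which yields (\ref{eq:exacts1:sec2})--(\ref{eq:exact3:sec2}). I expect the only genuine obstacles to be the careful bookkeeping that isolates the $\nabla\phi_e$ and $\phi_e$ contributions — bypassed rather than proved, by simply demanding both to vanish — and the regularity/decay of $\phi_e$ needed both to justify the substitution and to place $\psi$ in the energy space $X$; one should also observe that consistency with Lemma \ref{lem:centerofmass} tacitly requires (and is ensured by) the stationary profile $\phi_e$ having vanishing momentum and being centered at the origin.
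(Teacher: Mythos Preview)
Your proposal is correct and follows precisely the standard direct-substitution approach: plug the ansatz into the GPE, use the stationary equation (\ref{eq:nep:sec2}) and the quadratic identity $V(\bx)-V(\bx-\bx_c)=(A\bx_c)\cdot\bx-\tfrac12\bx_c^{\mathrm{T}}A\bx_c$, and then read off the ODE system for $\bx_c$, ${\bf c}$, $g$. The paper itself does not give a proof but defers to \cite{BaoDuZhang}, where exactly this computation is carried out (there in the more general rotating setting); your argument is essentially that proof specialized to $\Omega=0$, and your closing appeal to the uniqueness part of Theorem~\ref{thm:dy} is the right way to finish.

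One minor remark: your justification that $\phi_e\in X$ via Theorem~\ref{thm:exponential} is slightly off, since that theorem is stated only for ground states with $\beta\ge0$; but no extra work is needed, because a stationary state is by definition in the energy space $S\subset X$, and elliptic regularity for (\ref{eq:nep:sec2}) with the smooth harmonic potential already gives enough smoothness to make the substitution rigorous.
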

\begin{proof} See detailed proof in \cite{BaoDuZhang}.
\end{proof}

\subsubsection{Finite time blow-up and damping}\label{subsubsec:blowup}
According to Theorem \ref{thm:dy}, there is a maximal time $T_{\rm max}$ for the existence of the  solution in energy space. If $T_{\rm max}<\infty$, there exists finite time blow up.

\begin{theorem}\label{thm:blowup}(Finite time blow-up) In 2D and 3D, assume $V(\bx)$ satisfies (\ref{eq:potcon:sec2}) and $d\,V(\bx)+ \bx\cdot \nabla V(\bx)\ge0$ for
$\bx\in{\Bbb R}^d$ ($d=2,3$). When $\beta<0$, for any initial data
$\psi(\bx,t=0)=\psi_0(\bx)\in X$ with finite variance $\int_{\Bbb R^d}|\bx|^2|\psi_0|^2\,d\bx<\infty$  to the Cauchy problem of
(\ref{eq:gpe:sec2}), there exists finite time blow-up, i.e.,
$T_{\rm max}<\infty$, if one of the following holds:

(i) $E(\psi_0)<0$;

(ii) $E(\psi_0)=0$ and ${\rm Im}\left(\int_{\Bbb
R^d}\bar{\psi}_0(\bx)\ (\bx\cdot\nabla\psi_0(\bx))\,d\bx\right)<0$;

(iii) $E(\psi_0)>0$ and ${\rm Im}\left(\int_{\Bbb R^d}
\bar{\psi}_0(\bx)\ (\bx\cdot\nabla\psi_0(\bx))\,d\bx\right)
<-\sqrt{d\;E(\psi_0)}\|\bx\psi_0\|_{L^2}$.
\end{theorem}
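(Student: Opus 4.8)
The plan is to run the classical virial (variance) argument. Introduce the variance $Q(t):=\int_{\Bbb R^d}|\bx|^2|\psi(\bx,t)|^2\,d\bx=\sum_{\ap}\dt_\ap(t)$; the goal is to prove a differential inequality forcing $Q$ to become negative in finite time, which is impossible since $Q\ge0$. First I would record the preliminaries: under (\ref{eq:potcon:sec2}) the standard propagation of finite variance holds, so since $\int_{\Bbb R^d}|\bx|^2|\psi_0|^2\,d\bx<\infty$ the maximal solution of Theorem \ref{thm:dy} satisfies $Q\in C^2([0,T_{\rm max}))$ and the identity of Lemma \ref{lem:variance} is valid on $[0,T_{\rm max})$; this is obtained by multiplying the GPE by a truncated weight $|\bx|^2\theta(\bx/R)$, using the at-most-quadratic growth of $V$, and letting $R\to\infty$ (see \cite{Cazenave}). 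Also $\dot Q(0)=2B_0$ with $B_0:={\rm Im}\int_{\Bbb R^d}\bar\psi_0(\bx\cdot\nabla\psi_0)\,d\bx$, which is finite because $|B_0|\le\|\bx\psi_0\|_2\,\|\nabla\psi_0\|_2$.

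Next I would sum the identities (\ref{eq:sigma_ODE2:sec2}) over $\ap\in\{x,y,z\}$ to get
\be
\ddot Q(t)=\int_{\Bbb R^d}\Big[2|\nabla\psi|^2+d\,\bt\,|\psi|^4-2\,(\bx\cdot\nabla V)\,|\psi|^2\Big]\,d\bx,\qquad 0\le t<T_{\rm max},
\ee
and then eliminate the quartic term via conservation of energy (Theorem \ref{thm:dy}(ii)): from (\ref{eq:energy:sec2}), $d\,\bt\int|\psi|^4=2d\,E(\psi_0)-d\int|\nabla\psi|^2-2d\int V|\psi|^2$, and substituting yields
\be
\ddot Q(t)=2d\,E(\psi_0)-(d-2)\int_{\Bbb R^d}|\nabla\psi|^2\,d\bx-2\int_{\Bbb R^d}\big(d\,V(\bx)+\bx\cdot\nabla V(\bx)\big)|\psi|^2\,d\bx .
\ee
Since $d\in\{2,3\}$ gives $d-2\ge0$, and $d\,V(\bx)+\bx\cdot\nabla V(\bx)\ge0$ by hypothesis, both correction terms are nonpositive, so
\be
\ddot Q(t)\le 2d\,E(\psi_0),\qquad 0\le t<T_{\rm max}.
\ee
(The hypothesis $\bt<0$ is not used in this estimate; it is needed only so that conditions (i)--(iii) are not vacuous — in particular $E(\psi_0)<0$ or $=0$ forces $\bt<0$ since $V\ge0$ — and because for $\bt\ge0$, $d=3$ the solution is global anyway by Theorem \ref{thm:dy}(iii).)

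The last step is an elementary comparison. Setting $g(t):=\|\bx\psi_0\|_2^2+2B_0\,t+d\,E(\psi_0)\,t^2$, the function $Q-g$ is concave on $[0,T_{\rm max})$ and vanishes with its derivative at $t=0$, hence $0\le Q(t)\le g(t)$ there. It then suffices to see that $g$ attains a negative value at some finite $t_0>0$: in case (i) $g$ is a concave-down parabola; in case (ii) $g(t)=\|\bx\psi_0\|_2^2+2B_0t$ is affine with negative slope; in case (iii) $g$ is a concave-up parabola whose discriminant $4B_0^2-4d\,E(\psi_0)\|\bx\psi_0\|_2^2$ is positive (this is exactly the stated inequality on $B_0$) with vertex at $t^\ast=-B_0/(d\,E(\psi_0))>0$ and $g(t^\ast)=\|\bx\psi_0\|_2^2-B_0^2/(d\,E(\psi_0))<0$. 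In every case, if $T_{\rm max}\ge t_0$ we would obtain $0\le Q(t_0)\le g(t_0)<0$, a contradiction, so $T_{\rm max}<t_0<\infty$.

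I expect the only genuinely delicate point to be the first one: rigorously justifying that the variance remains finite and that the virial identity of Lemma \ref{lem:variance} holds on the full existence interval, rather than only formally; once that is granted, the rest is the algebra above with the energy identity and a quadratic-polynomial comparison.
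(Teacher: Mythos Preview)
Your proof is correct and follows essentially the same route as the paper: define the variance, sum the identities of Lemma~\ref{lem:variance}, rewrite via energy conservation to obtain $\ddot Q\le 2d\,E(\psi_0)$, and conclude by the quadratic-polynomial comparison under each of (i)--(iii). Your treatment is slightly more detailed than the paper's (you explicitly address the propagation of finite variance and spell out the discriminant computation in case (iii)), but the argument is the same.
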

\begin{proof} Define the variance
\be
\delta_{_V}(t)=\int_{\Bbb R^d}|\bx|^2|\psi(\bx,t)|^2\,d\bx.
\ee
Lemma \ref{lem:variance} indicates that $\delta^\prime_{_V}(t)=2\,{\rm Im}\left(\int_{\Bbb R^d}\bar{\psi}(\bx,t)(\bx\cdot\nabla\psi(\bx,t))\,d\bx\right)$ and
\begin{align*}
\ddot\delta_{_V}(t)=&2d\int_{\Bbb R^d}\left(
\frac{1}{d}|\nabla\psi|^2+\frac{\beta}{2}|\psi|^4-\frac{1}{d}|\psi|^2\bx\cdot\nabla V(\bx)\right)\,d\bx\\
=&2d\,E(\psi)-(d-2)\int_{\Bbb R^d}|\nabla\psi|^2\,d\bx-2\int_{\Bbb R^d}|\psi(\bx,t)|^2(d\,V(\bx)+\bx\cdot V(\bx))\,d\bx\\
\leq & 2d\,E(\psi)=2d\,E(\psi_0),\qquad d=2,3.
\end{align*}
Thus,
\be
\delta_{_V}(t)\leq d\,E(\psi_0)t^2+\delta^\prime_{_V}(0)t+\delta_{_V}(0).
\ee
When one of the conditions (i), (ii) and (iii) holds, there exists a finite time $t^*>0$ such that $\delta_{_V}(t^*)\leq0$, which means that there is a singularity at or before $t=t^*$.
\end{proof}

Theorem \ref{thm:blowup} shows that the solution of the GPE (\ref{eq:gpe:sec2}) may blow up for negative $\beta$ (attractive interaction)
in 2D and 3D. However, the physical quantities modeled by $\psi$ do not become
infinite which implies that the validity of (\ref{eq:gpe:sec2}) breaks
down near the singularity. Additional physical mechanisms, which
were initially small, become important near the singular point and
prevent the formation of the singularity. In BEC, the particle
density $|\psi|^2$ becomes large close to the critical point and
inelastic collisions between particles which are negligible for
small densities become important. Therefore a small damping
(absorption) term is introduced into the NLSE (\ref{eq:gpe:sec2}) which
describes inelastic processes. We are interested in the cases
where these damping mechanisms are important and, therefore,
restrict ourselves to the case of focusing nonlinearity, i.e. $\bt<0$,
where $\bt$ may also be time dependent. We consider the following
damped nonlinear Schr\"{o}dinger equation:
\begin{eqnarray} \label{eq:sdged:sec2}
&&i\; \p_t\psi=-\fl{1}{2}\;\nabla^2 \psi+ V(\bx)\; \psi +\bt
|\psi|^{2\sg}\psi-i\; f(|\psi|^2)\psi,
\quad t>0, \; \bx\in {\Bbb R}^d, \qquad\\
\label{eq:sdgid:sec2}
&&\psi(\bx,t=0)=\psi_0(\bx), \qquad \bx\in {\Bbb
R}^d,
\end{eqnarray}
where $f(\rho)\ge 0$ for $\rho=|\psi|^2\ge 0$ is a real-valued
monotonically increasing function.

The general form of (\ref{eq:sdged:sec2}) covers many damped NLSE arising
in various different applications. In BEC, for example, when
$f(\rho)\equiv 0$, (\ref{eq:sdged:sec2}) reduces to the usual GPE
(\ref{eq:gpe:sec2}); a linear damping term $f(\rho)\equiv \dt$ with
$\dt>0$ describes inelastic collisions with the background gas;
cubic damping $f(\rho)=\dt_1 |\bt| \rho$ with $\dt_1>0$ corresponds
to two-body loss \cite{Saito,Roberts}; and a quintic damping term
of the form $f(\rho)=\dt_2 \bt^2 \rho^2$ with $\dt_2>0$ adds
three-body loss to the GPE (\ref{eq:gpe:sec2}) \cite{Saito,Roberts}. It is
easy to see that the decay of the normalization according to
(\ref{eq:sdged:sec2}) due to damping is given by
\be \label{eq:normNt:sec2}
\dot{N}(t)=\fl{\rd}{\rd t} \int_{{\Bbb R}^d}\; |\psi({\bf x}, t)|^2\;
d{\bf x} = -2\int_{{\Bbb R}^d}\; f(|\psi({\bf x}, t)|^2)|\psi({\bf
x}, t)|^2\; d{\bf x} \le 0, \quad t >0.
\ee
Particularly, if
$f(\rho)\equiv \dt$ with $\dt>0$, the normalization is given by
\be \label{eq:drnt:sec2}
N(t)=\int_{{\Bbb R}^d}\; |\psi({\bf x},
t)|^2\; d{\bf x}= e^{-2\dt\; t} N(0)=e^{-2\dt\; t}\; \int_{{\Bbb
R}^d}\; |\psi_0({\bf x})|^2\; d{\bf x}, \quad t\ge 0.
\ee
For more discussions, we refer to \cite{BaoJaksch}.

\subsection{Convergence of dimension reduction}
\label{subsec:dredrate}
In an experimental setup with harmonic potential (\ref{eq:dhp:sec1}), the trapping frequencies in
different directions can be very different. Especially, disk-shaped
 and cigar-shaped condensate were observed in
experiments. In section \ref{subsubsec:dred},
the 3D GPE  is formally reduced to 2D GPE in
disk-shaped condensate and to 1D GPE in cigar-shaped
condensate. Mathematical
and numerical justification for the dimension reduction of 3D GPE
is only available in the weakly interaction regime, i.e.
$\beta=o(1)$ \cite{BaoP,bamsw,bacm}. Unfortunately, in the intermediate ($\beta=O(1)$)
or strong repulsive interaction regime ($\beta\gg1$), no mathematical results are available and numerical
studies can be found in \cite{BaoGeJakschPW}.

For weak interaction regime,  the dimension reduction  is verified by energy type method with projection discussed in section \ref{subsubsec:dred}  \cite{BaoP,bamsw}. Later,  Ben Abdallah et al. developed an averaging technique and proved the more general forms of the lower dimensional GPE  \cite{bacm} without using the projection method. A more refined model in lower dimensions is developed in \cite{BenCai}. Here, we introduce this general approach. We refer to \cite{bacm} and references therein for more discussions.

Consider the 3D GPE for $(\bx,\bz)\in\Bbb R^{d}\times \Bbb R^n$ with $d+n=3$ ($d=1$ or $d=2$)
\begin{equation}\label{eq:gpedred:sec2}
\begin{split}
&i\p_t\psi(\bx,\bz,t)=\left[-\frac12\left(\Delta_{\bx}+\Delta_{\bz}\right)+
V^\vep(\bx,\bz)+\beta|\psi|^2\right]\psi(\bx,\bz,t),\\
&\psi(\bx,\bz,0)=\Psi^{\rm init}(\bx,\bz),\quad V^\vep(\bx,\bz)=\frac{|\bx|^2}{2}+\frac{|\bz|^2}{2\vep^2},\quad \bx\in\Bbb R^d,\ \bz\in\Bbb R^n,
\end{split}
\end{equation}
where $\Delta_{\bx}$ and $\Delta_{\bz}$ are the Laplace operators in $\bx\in\Bbb R^d$ and $\bz\in\Bbb R^n$, respectively. The wave function is normalized as $\|\Psi^{\rm init}\|_{L^2(\Bbb R^3)}^2=1$. Compared with the situation in section \ref{subsubsec:dred} (\ref{eq:gpeg}), we have $0<\vep=1/\gamma_z\ll1$ ($d=2$) with $\gamma_y=1$ in disk-shaped BEC (2D) and $0<\vep=1/\gamma\ll1$ ($d=1$) with $\gamma_y=\gamma_z=\gamma$ in cigar-shaped BEC (1D). Our purpose is to describe the limiting dynamics of (\ref{eq:gpedred:sec2}) for $0<\vep\ll1$.

First, we introduce the rescaling $\bz\to\vep^{1/2}\bz$ and rescale $\psi\to e^{-itn/2\vep}\vep^{-n/4}\psi^\vep(\bx,\bz,t)$ to keep the normalization. Then Eq. (\ref{eq:gpedred:sec2}) becomes
\be \label{eq:GPE:sec2}
i\p_t \psi^\vep(\bx,\bz,t) =
\bH_{\bx}\psi^\vep + {1\over \vep} \bH_{\bz} \psi^\vep +\frac{\beta}{\vep^{n/2}}
|\psi^\vep|^2\psi^\vep,\quad
(\bx,\bz)\in\mathbb{R}^d\times\mathbb{R}^n,\ee with initial data \be
 \psi^\eps(t=0)=\Psi^{\rm init}\in L^2(\Bbb R^d\times\Bbb R^n),
\ee where
\be\label{eq:weak:sec2}
\bH_{\bx}=\frac12\left(-\Delta_{\bx} +|\bx|^2\right),\quad \bH_{\bz}=\frac12\left(-\Delta_{\bz}+|\bz|^2-n\right),
\quad\frac{\beta}{\vep^{n/2}}:=\delta \in\mathbb{R}.
\ee
Here $\beta=\delta\vep^{n/2}$ with a constant $\delta\in\Bbb R$ means that we are working in the weak interaction regime, i.e., $\beta=O(\vep^{1/2})$ in 2D disk-shaped BEC and $\beta=O(\vep)$ in 1D cigar-shaped BEC. Notice that the singularly perturbed Hamiltonian $\bH_{\bz}$ is a harmonic oscillator (conveniently shifted here such that it admits integer eigenvalues).

By introducing the filtered unknown
\be\Psi^\vep=e^{it\bH_{\bz}/\vep}\,\psi^\vep,\ee
we get the equation
\be \label{eq:filtre1:sec2}
i\p_t \Psi^\vep(\bx,\bz,t) =
\bH_{\bx}\Psi^\vep(\bx,\bz,t)
+F\left(\frac{t}{\vep},\Psi^\vep\right),\qquad
\Psi^\vep(t=0)=\Psi^{\rm init}, \ee where $F$ is equal to
\be\label{eq:Fdef:sec2}
F(s,\Psi)=\delta \,e^{is\bH_{\bz}}\left(\left|e^{-is\bH_{\bz}}\Psi\right|^2e^{-is\bH_{\bz}}\,\Psi\right).\ee
When $\vep$ is small, (\ref{eq:GPE:sec2}) (or, equivalently, (\ref{eq:filtre1:sec2})) couples the high oscillations in time generated by the strong confinement operator with a nonlinear dynamics in the $\bx$ plane, which is the only phenomenon that we want to describe.

In \cite{bacm}, Ben Abdallah et al. have developed an averaging technique and proved that, for general confining potentials in the $z$ direction, the limiting model as $\eps$ goes to zero is
\begin{equation} \label{eq:1storder:sec2}i\p_t \Psi = \bH_{\bx}\Psi +F_{\rm av}\left(\Psi\right),\qquad
\Psi(t=0)=\Psi^{\rm init}, \end{equation} where the long time average of $F$ is
defined by
\be
F_{\rm av}(\Psi)=\lim_{T\to +\infty}\frac{1}{T}\int_0^T F(s,\Psi)\,ds.\ee
For general confining operator $\bH_{\bz}$, the convergence is proved  using the fact that $F(s,\Psi)$ is almost periodic \cite{bacm}, but the convergence rates are generally unclear.
In the specific case of a harmonic confinement operator, like here, this convergence result can be quantified. The important point is that $\bH_{\bz}$ admits only integer eigenvalues and the function $F$ is $2\pi$-periodic with respect to the $s$ variable. Therefore, the expression of $F_{av}$ is not a limit but a simple integral, and we have in fact
\be\label{eq:Fav:sec2}F_{\rm av}(\Psi)=\frac{1}{2\pi}\int_{0}^{2\pi} F(s,\Psi)\,ds.\ee
On top of that, one can characterize the rate of convergence and prove that $\Psi$ is a first order approximation of $\Psi^\vep$ in $\vep$.

 Rigourously,
in order to state the convergence, we introduce the convenient scale of functional spaces. For all $\ell \in \RR^+$, we set
$$\calB_\ell:=\left\{\psi\in H^\ell(\RR^3) \bigg |
(|\bx|^2+|\bz|^2)^{\ell/2}\psi\in L^2(\RR^3)\right\}$$
endowed with one of the two following equivalent norms:
\be
\|u\|_{\calB_\ell}^2:=\|u\|_{L^2(\mathbb{R}^3)}^2+\|\bH_{\bx}^{\ell/2}u\|^2_{L^2(\mathbb{R}^3)}+
\|\bH_{\bz}^{\ell/2}u\|^2_{L^2(\mathbb{R}^3)}
\label{eq:norm1:sec2}\ee
or
\be
\|u\|_{\calB_\ell}^2:=\|u\|_{H^\ell(\mathbb{R}^3)}^2+\|(|\bx|^2+|\bz|^2)^{\ell/2}u\|^2_{L^2(\mathbb{R}^3)}.
\label{eq:norm2:sec2}\ee
For the  equivalence, see e.g. Theorem 2.1 in \cite{bacm}.

We have the convergence as the following.
\begin{theorem}\label{thm:dred} For some real number $m>3/2$, assume that the initial
datum $\Psi^{\rm init}$ belongs to $\calB_{m+4}$. Let $\Psi^\vep(\bx,\bz,t)=e^{it\bH_{\bz}/\vep}\psi^\vep$ be the solution of the filtered equation
\be\label{eq:filtre:sec2} i\p_t \Psi^\vep(\bx,\bz,t) =
\bH_{\bx}\Psi^\vep(t,x,z) +F\left(\frac{t}{\eps},\Psi^\vep\right),\quad
\Psi^\vep(t=0)=\Psi^{\rm init}, \ee where
\be
F(s,\Psi)=\delta
\,e^{is\bH_{\bz}}\,\left|e^{-is\bH_z}\Psi\right|^2e^{-is\bH_{\bz}}\,\Psi\,.\ee
Define also $\widetilde{\Psi}$ as the solution of the averaged problem
\begin{equation} \label{eq:averaged:sec2}
i\p_t \widetilde{\Psi} = \bH_{\bx}\widetilde{\Psi} +F_{av}\left(\widetilde{\Psi}\right),\qquad
\widetilde{\Psi}(t=0)=\Psi^{\rm init},
\end{equation}
where $F_{av}$ is defined by (\ref{eq:Fav:sec2}).
 Then, we have the following
conclusions.
\begin{enumerate}\renewcommand{\labelenumi}{(\roman{enumi})}
      \item There exists $T_0>0$, depending only on $\|\Psi^{\rm init}\|_{\calB_{m+4}}$,
      such that $\Psi^\vep$ and $\widetilde{\Psi}$
      are uniquely defined and are uniformly bounded in the space $C([0,T_0];\calB_{m+4})$, independently of $\vep\in (0,1]$.
       \item The function $\widetilde \Psi$ is a first order approximation of the solution $\Psi^\eps$ in
$C([0,T_0];\calB_m)$, i.e., for some $C>0$, we have
\be
\|\Psi^\eps(t)-\widetilde \Psi(t)\|_{\calB_m}\leq C \eps,\quad\, \forall t\in [0, T_0].
\ee
\end{enumerate}
\end{theorem}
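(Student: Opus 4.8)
The plan is to regard the filtered equation (\ref{eq:filtre:sec2}) as a classical averaging problem and to eliminate the fast oscillation by a near-identity change of unknown. Everything rests on two structural properties of the scale $\calB_\ell$. First, since $\bH_{\bx}$ and $\bH_{\bz}$ act on the disjoint variables $\bx$ and $\bz$ they commute, and each also commutes with $\bH_{\bx}^{\ell/2}$ and $\bH_{\bz}^{\ell/2}$; hence, with the norm (\ref{eq:norm1:sec2}), both unitary groups $e^{-it\bH_{\bx}}$ and $e^{is\bH_{\bz}}$ are isometries of every $\calB_\ell$, uniformly in $t$ and $s$. Second, for $\ell>3/2$ one has the embedding $\calB_\ell\hookrightarrow L^\infty(\RR^3)$ and $\calB_\ell$ is a Banach algebra; consequently the cubic map $F(s,\cdot)$ of (\ref{eq:Fdef:sec2}) obeys $\|F(s,\Psi)\|_{\calB_\ell}\le C\|\Psi\|_{\calB_\ell}^3$ and is Lipschitz on bounded sets of $\calB_\ell$ with a constant independent of $s$, and the same holds for $F_{\rm av}$, being an average of such maps. (These facts are the content of the function-space lemmas of \cite{bacm}.)

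\textbf{Step 1: uniform local theory, part (i).} I would run the standard contraction on the Duhamel form $\Psi^\vep(t)=e^{-it\bH_{\bx}}\Psi^{\rm init}-i\int_0^t e^{-i(t-s)\bH_{\bx}}F(s/\vep,\Psi^\vep(s))\,ds$ in the ball of radius $2\|\Psi^{\rm init}\|_{\calB_{m+4}}$ of $C([0,T];\calB_{m+4})$. Since $e^{-i(t-s)\bH_{\bx}}$ is an isometry of $\calB_{m+4}$ and the Lipschitz constant of $F(s/\vep,\cdot)$ on that ball does not depend on $s$ (hence not on $\vep$), the existence time $T_0$ and the resulting bound depend only on $\|\Psi^{\rm init}\|_{\calB_{m+4}}$. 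The identical argument with $F$ replaced by $F_{\rm av}$ produces $\widetilde{\Psi}\in C([0,T_0];\calB_{m+4})$ with a bound of the same type; after shrinking $T_0$ this gives (i).

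\textbf{Step 2: normal form.} Because $F(\cdot,\Psi)$ is $2\pi$-periodic with mean $F_{\rm av}(\Psi)$, the primitive $\mathcal{G}(s,\Psi):=\int_0^s\bigl(F(\tau,\Psi)-F_{\rm av}(\Psi)\bigr)\,d\tau$ is $2\pi$-periodic in $s$, bounded, vanishes at $s=0$, and inherits the cubic and Lipschitz bounds of $F$ on every $\calB_\ell$. Set $\Theta^\vep(t):=\Psi^\vep(t)+i\vep\,\mathcal{G}(t/\vep,\Psi^\vep(t))$. Differentiating in $t$, using (\ref{eq:filtre:sec2}) and $\p_s\mathcal{G}(s,\Psi)=F(s,\Psi)-F_{\rm av}(\Psi)$, one finds
\[
i\p_t\Theta^\vep=\bH_{\bx}\Theta^\vep+F_{\rm av}(\Theta^\vep)+\vep\,R^\vep(t),\qquad \Theta^\vep(0)=\Psi^{\rm init},
\]
where $R^\vep$ gathers $\bH_{\bx}\mathcal{G}(t/\vep,\Psi^\vep)$, $D_\Psi\mathcal{G}(t/\vep,\Psi^\vep)\cdot\p_t\Psi^\vep$ and the Taylor remainders comparing $F$ and $F_{\rm av}$ at $\Psi^\vep$ and at $\Theta^\vep$. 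Using Step 1, each summand of $R^\vep$ is bounded in $\calB_m$ in terms of the uniform $\calB_{m+4}$ bound of $\Psi^\vep$; this is where the derivative margin is spent, the operator $\bH_{\bx}$ costing two orders on the $\calB$-scale and $\p_t\Psi^\vep$ reintroducing $\bH_{\bx}\Psi^\vep$. Hence $\sup_{[0,T_0]}\|R^\vep(t)\|_{\calB_m}\le C$ uniformly in $\vep$, while $\|\Psi^\vep-\Theta^\vep\|_{\calB_m}=\vep\,\|\mathcal{G}(\cdot/\vep,\Psi^\vep)\|_{\calB_m}\le C\vep$.

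\textbf{Step 3: comparison and conclusion, part (ii).} Both $\Theta^\vep$ and $\widetilde{\Psi}$ solve the averaged equation, the former with forcing $\vep R^\vep$. Subtracting the two Duhamel formulas, using that $e^{-i(t-s)\bH_{\bx}}$ is a $\calB_m$-isometry, the uniform $\calB_{m+4}$ bounds to control the Lipschitz constant of $F_{\rm av}$, and Gronwall's inequality, gives $\|\Theta^\vep(t)-\widetilde{\Psi}(t)\|_{\calB_m}\le C\vep$ on $[0,T_0]$; combined with the $O(\vep)$ bound on $\|\Psi^\vep-\Theta^\vep\|_{\calB_m}$ this yields $\|\Psi^\vep(t)-\widetilde{\Psi}(t)\|_{\calB_m}\le C\vep$ for all $t\in[0,T_0]$, which is (ii). \emph{The main obstacle} is the regularity bookkeeping in Step 2: the normal-form corrector $\mathcal{G}$, and especially $D_\Psi\mathcal{G}\cdot\p_t\Psi^\vep$, lose derivatives, so the $O(\vep)$ error in $\calB_m$ can be closed only after one has secured in Step 1 a uniform-in-$\vep$ bound for $\Psi^\vep$ and $\widetilde{\Psi}$ in the higher space $\calB_{m+4}$ on one common interval $[0,T_0]$; verifying that the algebra and semigroup-invariance properties of $\calB_\ell$ persist at that regularity with $\vep$-independent constants is the technical heart, after which everything reduces to a Gronwall estimate.
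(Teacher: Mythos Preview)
Your proposal is correct and follows the standard first-order averaging scheme that the paper itself defers to (the paper gives no self-contained proof but refers the reader to \cite{bacm,BenCai}, which develop precisely the near-identity/normal-form technique you outline, exploiting the $2\pi$-periodicity of $F(s,\cdot)$ to obtain a bounded primitive $\mathcal{G}$ and then closing by Gronwall). One small remark: as you present it, Step~2 actually loses only two orders on the $\calB$-scale (from $\bH_{\bx}\mathcal{G}$ and from $\p_t\Psi^\vep$), so your argument would already yield the $O(\vep)$ bound under the weaker hypothesis $\Psi^{\rm init}\in\calB_{m+2}$; the $m+4$ assumption in the theorem is inherited from \cite{BenCai}, where a second normal-form step is performed to reach second-order accuracy.
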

The readers are referred to \cite{bacm,BenCai} for a detailed proof of Theorem  \ref{thm:dred}.
\begin{remark} The key property here is the periodicity of $F(s,\Psi)$, and the result can be generalized to other dimensions
$\Bbb R^p=\Bbb R^{d}\times\Bbb R^{p-d}$, more general nonlinearities
$f(|\psi|^2)\psi$ in (\ref{eq:GPE:sec2}) and other operators $\bH_{\bz}$ such that $F(s,\Psi)$ defined by (\ref{eq:Fdef:sec2}) is periodic.
\end{remark}

Theorem \ref{thm:dred} implies results of lower dimensional GPE (\ref{eq:gpeg}). Let us take disk-shaped BEC as an example, i.e., $n=1$ and $d=2$. Thus,  the eigenvalues of $\bH_{\bz}$ are the nonnegative integers. Let $\chi_p(\bz)$ be the normalized eigenfunction associated to the eigenvalue $p\in \NN_0$:
\be \bH_{\bz} \chi_p = p \chi_p\,, \qquad \int \chi_p^2 \, d\bz = 1.\ee
In particular,
\be\label{eq:hermit:sec2}
\chi_0(\bz)=e^{-\bz^2/2}/\pi^{1/4},\quad \bz\in\Bbb R.
\ee
Consider a function $\Psi\in \calB_m$ expanded on this basis as
\be\Psi(\bx,z) = \sum_{p=0}^{+\infty} \vphi_p(\bx) \chi_p (\bz)\,.\ee
Then we have
\be\label{eq:Fexpa:sec2} F(s,\Psi)=\delta\sum_{p_1,p_2,p_3,p_4}a_{\mbox{\tiny $p_1p_2p_3p_4$}}\,e^{is\Omega_{\mbox{\tiny $p_1p_2p_3p_4$}}}\,\vphi_{p_2}(\bx)\vphi_{p_3}(\bx)\overline{\vphi_{p_4}}(\bx)\,\chi_{p_1}(\bz),\ee
where we define the coefficients
$$\Omega_{pqrs}=p+s-q-r,\,\quad a_{pqrs} = \langle \chi_p\chi_q\chi_r\chi_s\rangle.$$
Here and in the sequel, $\langle \cdot\rangle$ denotes the integration over the $\bz$ variable.
 We write the expansion (\ref{eq:Fexpa:sec2}) shortly as
\be\label{eq:Fexpas:sec2} F(s,\Psi)=\delta\sum_{p_1,p_2,p_3,p_4}a_{\mbox{\tiny 1234}}\,e^{is\Omega_{\mbox{\tiny 1234}}}\,\vphi_{p_2}\vphi_{p_3}\overline{\vphi_{p_4}}\otimes\chi_{p_1}\,.\ee
In the above sums, and in the sequel, $a_{\mbox{\tiny 1234}}$ and $\Omega_{\mbox{\tiny 1234}}$ stand for $a_{p_1p_2p_3p_4}$ and $\Omega_{p_1p_2p_3p_4}$, respectively.

The expansion of $F_{\rm av}$ (\ref{eq:Fav:sec2}) is obtained by averaging $F(s,\Psi)$ over time $s$. Noticing that the average of  $e^{is\Omega_{\mbox{\tiny 1234}}}$ vanishes if $\Omega_{\mbox{\tiny 1234}}\neq 0$,
 let us  define  the following index set, whose information is preserved after averaging $F(s,\Psi)$ given by (\ref{eq:Fexpas:sec2}), \ for any $p \in \NN$,
\be\label{eq:Lambdap:sec2} \Lambda(p) = \{(q,r,s),  \,\, \mathrm{such\,\, that}\,\, p + s = q+r\}. \ee
Then we have
\begin{equation}
\label{eq:Favexp:sec2}
F_{\rm av}(\Phi)=\delta \sum_{p_1=0}^\infty\,\,\sum_{\scriptsize(p_2,p_3,p_4) \in \Lambda(p_1)}a_{\mbox{\tiny 1234}}\,\vphi_{p_2}\vphi_{p_3}\overline{\vphi_{p_4}}\otimes\chi_{p_1}.
\end{equation}

The solution of (\ref{eq:filtre:sec2}) is written as $\Psi^\vep(\bx,\bz,t)=\sum\limits_{p=0}^\infty \vphi^\vep_p(\bx,t)\chi_p(\bz)$ and the solution of (\ref{eq:averaged:sec2}) is written as $\Psi(\bx,\bz,t)=\sum\limits_{p=0}^\infty \vphi_p(\bx,t)\chi_p(\bz)$. If the initial data is polarized on the ground  mode of the confinement Hamiltonian, i.e., we have
$$
\forall p\in\NN\setminus\{0\},\qquad \vphi^\vep_p(t=0)=0\quad\mbox{and}\quad \vphi^\vep_{0}(t=0)=\vphi^{\rm init}.$$
In this case, the averaged system (\ref{eq:averaged:sec2}) reads
\begin{eqnarray*}
&&i\p_t\vphi_{p_1}=\bH_{\bx} \vphi_{p_1}+\delta\sum_{\scriptsize (p_2,p_3,p_4)\in \Lambda(p_1)}a_{\mbox{\tiny 1234}}\,\vphi_{p_2}\vphi_{p_3}\overline{\vphi_{p_4}},\\
&&\vphi_{p_1}(t=0)=\delta_{0p_1}\,\vphi^{\rm init},
\end{eqnarray*}
where $\delta_{0p_1}$ is the Kronecker delta.
It is readily seen from this expression that $\vphi_{p}(t)=0$ for all $t$ as soon as $p\neq 0$. Hence the averaged system  (\ref{eq:averaged:sec2}) reduces to the single equation for $\vphi_0$ as
\be i\p_t\vphi_{0}=\bH_{\bx} \vphi_{0}+\delta a_{0000}|\vphi_0|^2\vphi_0,\ee
where $a_{0000}=\frac{1}{\sqrt{2\pi}}$. This is exactly the 2D GPE (\ref{eq:gpeg}) with the choice (\ref{eq:ufw}) (notice that  we adopt a rescaling here).

Similarly, for a cigar-shaped BEC, i.e. $n=2$, when initial data is polarized on the ground  mode of the confinement Hamiltonian, we recover the 1D GPE (\ref{eq:gpeg}).

This averaging technique has solved the dimension reduction of 3D GPE in the weak interaction regime $\beta=o(1)$ (\ref{eq:gpedred:sec2}). However, there seems no progress for the intermediate interaction regime $\beta=O(1)$ (\ref{eq:gpedred:sec2}) yet.

\section{Numerical methods for computing  ground states}
\label{sec:numgs}
\setcounter{equation}{0}\setcounter{figure}{0}\setcounter{table}{0}
In this section, we  review different numerical methods for computing the ground states of BEC (\ref{eq:minp:sec2}). Due to the presence of the confining potential, the ground state decays exponentially fast when $|\bx|\to\infty$ and thus it is natural to truncate the whole space problem (\ref{eq:gpe:sec2}) to a bounded domain $U\subset\Bbb R^d$ with homogeneous Dirichlet boundary conditions. Thus, we consider the  GPE (\ref{eq:gpe:sec2}) in $U$ as
\bea
\label{eq:sdge:sec3}
&&i\; \psi_t=-\fl{1}{2}\;\nabla^2 \psi+ V(\bx)\; \psi +\bt
|\psi|^{2}\psi, \qquad t>0, \qquad \bx\in U\subset{\mathbb R}^d, \\
\label{eq:sdge1:sec3}
&&\psi(\bx,t)= 0, \qquad \bx \in \Gm=\p U, \quad t\ge0.
\eea
The normalization (\ref{eq:norm:sec2}) and energy (\ref{eq:energy:sec2}) then become
\begin{equation} \label{eq:mass:sec3}
N(\psi)=\|\psi(\cdot,t)\|_2^2:=\int_{U}\; |\psi({\bf x}, t)|^2\; d{\bf x}=1,
\qquad t\ge 0,
\end{equation}
and
\begin{equation} \label{eq:energy:sec3}
E(\psi) = \int_{U}\; \left[\fl{1}{2}|\btd \psi({\bf x},
t)|^2
+V(\bx)|\psi(\bx,t)|^2+\fl{\bt}{2}|\psi(\bx,t)|^{4}\right]\;
d\bx, \quad t\ge 0.
\end{equation}
Replacing $\Bbb R^d$ with $U$, many results presented in section \ref{sec:mathgpe} can be directly  generalized to bounded domain case. Similarly, finding the ground state $\phi_g$ of (\ref{eq:sdge:sec3}), i.e. minimizing energy $E(\phi)$ (\ref{eq:energy:sec3}) under normalization  constraint $N(\phi)=1$ (\ref{eq:mass:sec3}), is equivalent to solving the nonlinear eigenvalue problem (\ref{eq:charactereq:sec2}) with boundary condition (\ref{eq:sdge1:sec3}). According to Theorem \ref{thm:gs}, ground state $\phi_g$ can be chosen as nonnegative, and we will restrict ourselves in real-valued wave function $\phi$ throughout this section.
\subsection{Gradient flow with discrete normalization}
\label{subsubsec:gfdn}
One of the most popular
techniques for dealing with the normalization constraint (\ref{eq:mass:sec3})
is through the following construction:
choose a time sequence $0=t_0<t_1<t_2<\cdots<t_n<\cdots$ with
$\tau_n:=\btu t_{n}=t_{n+1}-t_{n}>0$ and $\tau=\max_{n\ge 0} \; \tau_n$.
To adapt an algorithm for the solution of the
usual gradient flow to the minimization problem under a constraint,
it is natural to consider the following gradient flow with discrete normalization (GFDN)
which is  widely used in physical literatures
 for computing the ground state solution of BEC \cite{Bao,BaoDu}:
\begin{align}
\label{eq:ngf1:sec3}
&\phi_t = -\fl{1}{2}\fl{\dt E(\phi)}{\dt \phi}=
\fl{1}{2}\nabla^2 \phi - V(\bx) \phi -\bt\; |\phi|^2\phi, \
\bx\in U,\ t_n<t<t_{n+1}, \ n\ge0, \quad \\
\label{eq:ngf2:sec3}
&\phi(\bx,t_{n+1})\stackrel{\triangle}{=}
\phi(\bx,t_{n+1}^+)=\fl{\phi(\bx,t_{n+1}^-)}{\|\phi(\cdot,t_{n+1}^-)\|_2},
\qquad \bx\in U, \quad n\ge 0,\\
\label{eq:ngf3:sec3}
&\phi(\bx,t)= 0, \qquad \bx \in \Gm,\qquad \quad
\phi(\bx,0)=\phi_0(\bx), \qquad \bx \in U,
\end{align}
where $\phi(\bx, t_n^\pm)=\lim_{t\to t_n^\pm} \phi(\bx,t)$ and
$\|\phi_0\|_2=1$.
In fact, the gradient flow  (\ref{eq:ngf1:sec3}) can be viewed as
applying the steepest decent method to the energy functional
$E(\phi)$ without constraint  and (\ref{eq:ngf2:sec3}) then projecting
the solution back to the unit sphere in order to satisfy the
constraint (\ref{eq:mass:sec3}). From the numerical point of view, the
gradient flow  (\ref{eq:ngf1:sec3}) can  be solved via
traditional techniques and the normalization of the
gradient flow is simply achieved by a projection at
the end of each time step.
In fact, Eq. (\ref{eq:ngf1:sec3}) can be obtained from the GPE (\ref{eq:sdge:sec3}) by $t\to it$.
Thus GFDN is also known as the imaginary time method in physics literatures.

The GFDN (\ref{eq:ngf1:sec3})-(\ref{eq:ngf3:sec3}) possesses the following properties \cite{BaoDu}.
\begin{lemma}\label{lem:energyd:sec3}
Suppose $V(\bx)\ge0$ for all $\bx \in U$, $\bt\ge 0$ and
$\|\phi_0\|_2=1$, then

(i) $\|\phi(\cdot, t)\|_2 \le \|\phi(\cdot, t_n)\|_2=1$ for
$t_n\le t\le t_{n+1}$,
 $n\ge0$.

(ii) For any $\bt\geq 0$,
\be
\label{eq:energydg:sec3}
E(\phi(\cdot,t))\le  E(\phi(\cdot,t')), \qquad t_n\le t'<t\le t_{n+1},
\qquad n\ge0.
\ee

(iii) For $\bt=0$,
\be
\label{eq:energydl:sec3}
E\left(\frac{\phi(\cdot,t)}{\|\phi(\cdot,t)\|_2}\right)\le E\left(\frac{\phi(\cdot,t_n)}{\|\phi(\cdot,t_n)\|_2}\right),
\quad t_n\le t\le t_{n+1}, \quad n\ge0.
\ee
\end{lemma}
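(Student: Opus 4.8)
\emph{Plan of proof.} All three statements describe the monotonicity of a functional along the smooth parabolic evolution on a single open time interval $(t_n,t_{n+1})$; the normalization step (\ref{eq:ngf2:sec3}) enters only through the value at the left endpoint. The strategy in each case is therefore to differentiate the relevant quantity in $t$, substitute the gradient flow equation (\ref{eq:ngf1:sec3}), integrate by parts using that $\phi$ --- and hence $\phi_t$ --- vanishes on $\Gamma$, and then integrate the resulting differential inequality in time. Parabolic regularity for the semilinear heat-type equation (\ref{eq:ngf1:sec3}) (with $V\ge0$ locally bounded) makes $\phi(\cdot,t)$ smooth enough on $(t_n,t_{n+1})$ to justify differentiation under the integral; since $\phi$ is real-valued here, no real/imaginary-part bookkeeping is needed.

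For (i) I would compute $\frac{d}{dt}\|\phi(\cdot,t)\|_2^2=2\int_U\phi\,\phi_t\,d\bx$, insert (\ref{eq:ngf1:sec3}), and integrate the Laplacian term by parts to get $\frac{d}{dt}\|\phi\|_2^2=-\int_U|\nabla\phi|^2\,d\bx-2\int_U V(\bx)|\phi|^2\,d\bx-2\beta\int_U|\phi|^4\,d\bx\le 0$, since $V\ge0$ and $\beta\ge0$. Thus $t\mapsto\|\phi(\cdot,t)\|_2$ is nonincreasing on $[t_n,t_{n+1}]$; combined with $\|\phi_0\|_2=1$ and the fact that (\ref{eq:ngf2:sec3}) resets the norm to $1$ at every $t_n$, this yields $\|\phi(\cdot,t)\|_2\le\|\phi(\cdot,t_n)\|_2=1$. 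For (ii) I would differentiate $E(\phi(\cdot,t))$ to get $\frac{d}{dt}E=\int_U\big[\nabla\phi\cdot\nabla\phi_t+2V\phi\,\phi_t+2\beta|\phi|^2\phi\,\phi_t\big]\,d\bx$; integrating the first term by parts and substituting (\ref{eq:ngf1:sec3}) collapses the bracket to $\big(-\nabla^2\phi+2V\phi+2\beta|\phi|^2\phi\big)\phi_t=-2\phi_t^2$, so $\frac{d}{dt}E(\phi(\cdot,t))=-2\|\phi_t(\cdot,t)\|_2^2\le 0$, and integrating from $t'$ to $t$ gives (\ref{eq:energydg:sec3}).

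For (iii) the additional ingredient is homogeneity plus Cauchy--Schwarz. With $\beta=0$, write $H:=-\tfrac12\nabla^2+V$, so that (\ref{eq:ngf1:sec3}) becomes $\phi_t=-H\phi$ and $E(\phi)=\langle H\phi,\phi\rangle$ is quadratic, whence $E\big(\phi/\|\phi\|_2\big)=E(\phi)/\|\phi\|_2^2$. Setting $N(t)=\|\phi(\cdot,t)\|_2^2$ and $\mathcal{E}(t)=E(\phi(\cdot,t))$, the computations above specialize to $N'=-2\mathcal{E}$ and $\mathcal{E}'=-2\|H\phi\|_2^2$, so $\frac{d}{dt}\big(\mathcal{E}/N\big)=(\mathcal{E}'N-\mathcal{E}N')/N^2=2\big(\mathcal{E}^2-\|H\phi\|_2^2\,N\big)/N^2$. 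Cauchy--Schwarz gives $\mathcal{E}^2=\langle H\phi,\phi\rangle^2\le\|H\phi\|_2^2\|\phi\|_2^2=\|H\phi\|_2^2\,N$, so this derivative is $\le 0$; integrating over $[t_n,t]$ (the argument already being normalized at $t_n$) yields (\ref{eq:energydl:sec3}).

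The computations are routine. The only points needing care are the sign bookkeeping after the integration by parts in (ii), and the observation in (iii) that monotonicity of the normalized energy is precisely the Cauchy--Schwarz inequality $\langle H\phi,\phi\rangle^2\le\|H\phi\|_2^2\|\phi\|_2^2$. One should also keep in mind that these are interval-wise assertions: across a normalization step the energy need not decrease when $\beta>0$ (indeed $E(\phi/\|\phi\|_2)\ge E(\phi)$ whenever $\|\phi\|_2\le 1$ and $\beta\ge0$), which is exactly why (iii) is stated only for $\beta=0$.
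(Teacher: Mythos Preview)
Your proof is correct. The paper does not actually prove Lemma~\ref{lem:energyd:sec3}; it merely states the result with a citation to \cite{BaoDu}, so there is no ``paper's own proof'' to compare against in detail. Your argument---differentiating $\|\phi\|_2^2$ and $E(\phi)$ along the flow for (i)--(ii), and using the quadratic homogeneity of $E$ together with Cauchy--Schwarz $\langle H\phi,\phi\rangle^2\le\|H\phi\|_2^2\|\phi\|_2^2$ for (iii)---is exactly the standard route and matches what one finds in \cite{BaoDu}. One tiny point you might make explicit in a full write-up: in (iii) you divide by $N(t)=\|\phi(\cdot,t)\|_2^2$, so you should note that $N(t)>0$ on $[t_n,t_{n+1}]$; this follows since $N'/N=-2\mathcal{E}/N$ is bounded (indeed $\mathcal{E}/N$ is nonincreasing by your own computation), whence $N$ decays at most exponentially.
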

The property
(\ref{eq:energydg:sec3}) is often referred as the energy diminishing
property of the gradient flow. It is interesting to note
that (\ref{eq:energydl:sec3}) implies that
the energy diminishing property is preserved even with
the normalization of the solution of the
 gradient flow for $\bt=0$, that is, for
linear evolutionary equations.
\begin{theorem}\label{thm:edh:sec3}
Suppose $V(\bx)\ge0$ for all $\bx \in U$ and
$\|\phi_0\|_2=1$.  For $\bt=0$,
the GFDN (\ref{eq:ngf1:sec3})-(\ref{eq:ngf3:sec3})
is energy diminishing for any time step $\tau$ and
initial data $\phi_0$, i.e.
\be
\label{eq:dphi:sec3}
E(\phi(\cdot,t_{n+1}))\le E(\phi(\cdot, t_n))\le \cdots
\le E(\phi(\cdot,0))=E(\phi_0), \;\; n=0,1,2,\cdots.
\ee
\end{theorem}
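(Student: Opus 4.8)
The plan is a short induction on $n$ that reduces everything to part (iii) of Lemma \ref{lem:energyd:sec3}. The base case is immediate: since $\|\phi_0\|_2=1$, the normalization (\ref{eq:ngf2:sec3}) is vacuous at $t_0=0$ and $\phi(\cdot,0)=\phi_0$, so the last equality in (\ref{eq:dphi:sec3}) holds. For the inductive step it suffices to prove $E(\phi(\cdot,t_{n+1}))\le E(\phi(\cdot,t_n))$ for every $n\ge0$, under the standing hypothesis $\|\phi(\cdot,t_n)\|_2=1$, which holds for $n\ge1$ by the renormalization at step $n$ and for $n=0$ by $\|\phi_0\|_2=1$; chaining these inequalities for $n=0,1,2,\dots$ then yields (\ref{eq:dphi:sec3}).

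First I would record that for $\bt=0$ the gradient flow (\ref{eq:ngf1:sec3}) is linear: on $(t_n,t_{n+1})$ one has $\phi_t=-\calH\phi$ with $\calH=-\tfrac12\nabla^2+V(\bx)\ge0$ self-adjoint on $L^2(U)$ with homogeneous Dirichlet data. Since $V$ is confining, $\calH$ has discrete spectrum $0<\lambda_1\le\lambda_2\le\cdots$, so $\phi(\cdot,t)=e^{-(t-t_n)\calH}\phi(\cdot,t_n)$ is well defined for all $t\ge t_n$, the map $t\mapsto\phi(\cdot,t)$ is continuous into $L^2(U)$ (indeed into $H^1\cap L_V$), and, $e^{-s\calH}$ being injective with operator norm $\le1$, one gets $0<\|\phi(\cdot,t_{n+1}^-)\|_2\le\|\phi(\cdot,t_n)\|_2=1$. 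Hence the renormalization at step $n+1$ is well posed and $\phi(\cdot,t_{n+1})=\phi(\cdot,t_{n+1}^-)/\|\phi(\cdot,t_{n+1}^-)\|_2$.

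Then I would apply Lemma \ref{lem:energyd:sec3}(iii) on the interval $[t_n,t_{n+1}]$, letting $t\uparrow t_{n+1}$ and using the energy-continuity of the linear flow, to obtain
\[
E\!\left(\frac{\phi(\cdot,t_{n+1}^-)}{\|\phi(\cdot,t_{n+1}^-)\|_2}\right)\le E\!\left(\frac{\phi(\cdot,t_n)}{\|\phi(\cdot,t_n)\|_2}\right)=E(\phi(\cdot,t_n)),
\]
where the last equality uses $\|\phi(\cdot,t_n)\|_2=1$. By the previous paragraph the left-hand side is exactly $E(\phi(\cdot,t_{n+1}))$, so $E(\phi(\cdot,t_{n+1}))\le E(\phi(\cdot,t_n))$, which closes the induction.

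The genuine content — and the only delicate point — is Lemma \ref{lem:energyd:sec3}(iii) itself, which is granted here; if one had to prove it, the natural route is the spectral decomposition of $\calH$: writing $\phi(\cdot,t_n)=\sum_k c_k e_k$ in an orthonormal eigenbasis $\calH e_k=\lambda_k e_k$ and setting $p_k=c_k^2$, $s=t-t_n$, one has $E(\phi(\cdot,t))/\|\phi(\cdot,t)\|_2^2=\big(\sum_k p_k\lambda_k e^{-2s\lambda_k}\big)\big/\big(\sum_k p_k e^{-2s\lambda_k}\big)$; since $\lambda\mapsto\lambda$ is increasing and $\lambda\mapsto e^{-2s\lambda}$ is decreasing, the Chebyshev (correlation) inequality gives $\sum_k p_k\lambda_k e^{-2s\lambda_k}\le\big(\sum_k p_k\lambda_k\big)\big(\sum_k p_k e^{-2s\lambda_k}\big)$, and dividing by $\|\phi(\cdot,t)\|_2^2$ yields $E(\phi(\cdot,t)/\|\phi(\cdot,t)\|_2)\le E(\phi(\cdot,t_n))$ for $t\ge t_n$ (the version with an arbitrary reference time $t'$ follows verbatim replacing $t_n$ by $t'$). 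Given that the excerpt already provides Lemma \ref{lem:energyd:sec3}, the proof of the theorem itself is just the bookkeeping above.
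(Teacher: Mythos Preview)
Your proof is correct and follows exactly the route the paper intends: the theorem is stated immediately after Lemma~\ref{lem:energyd:sec3} precisely because it is a direct consequence of part~(iii), and your induction together with the observation that $\|\phi(\cdot,t_n)\|_2=1$ after each normalization is the required bookkeeping. One minor remark: you justify discreteness of the spectrum by saying ``$V$ is confining,'' but in this section the problem is posed on a bounded domain $U$ with homogeneous Dirichlet boundary conditions, so $\calH=-\tfrac12\nabla^2+V$ already has compact resolvent and discrete spectrum regardless of any confinement hypothesis on $V$; this does not affect the argument.
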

For $\beta>0$, the GFDN (\ref{eq:ngf1:sec3})-(\ref{eq:ngf3:sec3}) does not
preserve the diminishing property for the normalization of the solution (\ref{eq:energydl:sec3}) in general.

In fact, the normalized step (\ref{eq:ngf2:sec3}) is equivalent to solving
the following ODE {\sl exactly}
\bea
\label{eq:Ode1:sec3}
&&\phi_t(\bx,t) = \mu_\phi(t,\tau) \phi(\bx,t), \qquad \bx\in U,
\quad t_n < t<t_{n+1}, \quad n\ge0,\\
\label{eq:Ode2:sec3}
&&\phi(\bx,t_n^+)= \phi(\bx,t_{n+1}^-), \qquad   \bx\in U;
\eea
where
\be
\label{eq:sgtk:sec3}
\mu_\phi(t,\tau)\equiv
\mu_\phi(t_{n+1},\tau_n) = -\fl{1}{2\; \tau_n}
\ln \|\phi(\cdot,t_{n+1}^-)\|_2^2,
\qquad t_n\le t\le t_{n+1}.
\ee
Thus the GFDN
(\ref{eq:ngf1:sec3})-(\ref{eq:ngf3:sec3})
can be viewed as a first-order
splitting method for the gradient flow with discontinuous coefficients \cite{BaoDu}:
\bea
\label{eq:nngf1:sec3}
&&\phi_t = \fl{1}{2}\nabla^2 \phi - V(\bx) \phi -\bt\; |\phi|^2\phi
+\mu_\phi(t,\tau)\phi, \qquad
\bx\in U, \quad t> 0,\\
\label{eq:nngf2:sec3}
&&\phi(\bx,t)= 0, \qquad \bx \in \Gm,\qquad \quad
\phi(\bx,0)=\phi_0(\bx), \qquad \bx \in U.
\eea
Let $\tau\to 0$,
we see that
\begin{equation*}
\lim_{\tau\to0^+}\mu_\phi(t,\tau)= \mu_\phi(t)
=\fl{1}{\|\phi(\cdot,t)\|_2^2}\int_{U}
\left[\fl{1}{2}|\btd \phi(\bx,t)|^2+V(\bx)\phi^2(\bx,t)+
\bt\phi^4(\bx,t)\right]d\bx.
\end{equation*}
This suggests us to consider the  following
continuous normalized gradient flow (CNGF) \cite{BaoDu}:
\bea
\label{eq:nkngf1:sec3}
&&\phi_t = \fl{1}{2}\nabla^2 \phi - V(\bx) \phi -\bt\; |\phi|^2\phi
+\mu_\phi(t)\phi, \qquad
\bx\in U, \quad t\ge 0,\\
\label{eq:nkngf2:sec3}
&&\phi(\bx,t)= 0, \qquad \bx \in \Gm, \qquad \quad
\phi(\bx,0)=\phi_0(\bx), \qquad \bx \in U.
\eea
In fact, the right hand side of  (\ref{eq:nkngf1:sec3}) is the same
as (\ref{eq:mu-energy:sec2}) if we view $\mu_\phi(t)$
as a Lagrange multiplier for the
constraint (\ref{eq:mass:sec3}).

Furthermore, for the above CNGF,
as observed in \cite{AftalionDu,BaoDu,Du},
the solution of (\ref{eq:nkngf1:sec3}) also satisfies the following theorem:
\begin{theorem}\label{thm:edhh:sec3}
Suppose $V(\bx)\ge0$ for all $\bx \in U$, $\bt\ge0$ and
$\|\phi_0\|_2=1$. Then the CNGF
(\ref{eq:nkngf1:sec3})-(\ref{eq:nkngf2:sec3}) is normalization conservative and
energy diminishing, i.e.
\bea
\label{eq:ncphi:sec3}
&&\|\phi(\cdot,t)\|_2^2=\int_U \phi^2(\bx,t)\; d\bx =
\|\phi_0\|_2^2=1, \qquad t\ge0,\\
\label{eq:edcngf:sec3}
&&
\fl{\rd}{\rd t}E(\phi)=- 2\left\|\phi_t(\cdot,t)\right\|_2^2\le 0\;,
\qquad t\ge0,
\eea
which in turn implies
$$
E(\phi(\cdot, t_1))\ge E(\phi(\cdot,t_2)), \qquad 0\le t_1\le t_2<\ift.
$$
\end{theorem}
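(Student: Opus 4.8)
The plan is to prove the two assertions in sequence, deriving the energy law from the mass law; both follow from multiplying the CNGF equation by the appropriate test function and integrating by parts.

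\textbf{Step 1 (mass conservation).} I would multiply (\ref{eq:nkngf1:sec3}) by $\phi$, integrate over $U$, and integrate the Laplacian term by parts; the boundary term drops because $\phi\equiv 0$ on $\Gamma$. This gives
\be
\frac12\frac{d}{dt}\|\phi(\cdot,t)\|_2^2 = -\int_U\left[\frac12|\nabla\phi|^2+V(\bx)\phi^2+\beta\phi^4\right]d\bx + \mu_\phi(t)\,\|\phi(\cdot,t)\|_2^2 .
\ee
By the very definition of $\mu_\phi(t)$, the product $\mu_\phi(t)\|\phi(\cdot,t)\|_2^2$ equals exactly the bracketed integral, so the right-hand side vanishes identically. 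Since $\|\phi_0\|_2=1$, this yields (\ref{eq:ncphi:sec3}).

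\textbf{Step 2 (energy diminishing).} Next I would differentiate $E(\phi(\cdot,t))$ from (\ref{eq:energy:sec3}) under the integral sign and integrate the kinetic term by parts (the boundary contribution again vanishes, now because $\partial_t\phi\equiv 0$ on $\Gamma$), obtaining
\be
\frac{d}{dt}E(\phi(\cdot,t)) = 2\int_U\left[-\frac12\nabla^2\phi+V(\bx)\phi+\beta\phi^3\right]\phi_t\,d\bx .
\ee
Then I would rewrite (\ref{eq:nkngf1:sec3}) as $-\frac12\nabla^2\phi+V(\bx)\phi+\beta\phi^3 = \mu_\phi(t)\phi-\phi_t$ and substitute, which gives
\be
\frac{d}{dt}E(\phi(\cdot,t)) = 2\mu_\phi(t)\int_U\phi\,\phi_t\,d\bx - 2\|\phi_t(\cdot,t)\|_2^2 = \mu_\phi(t)\frac{d}{dt}\|\phi(\cdot,t)\|_2^2 - 2\|\phi_t(\cdot,t)\|_2^2 .
\ee
By Step 1 the first term is zero, so (\ref{eq:edcngf:sec3}) follows, and integrating in $t$ gives $E(\phi(\cdot,t_1))\ge E(\phi(\cdot,t_2))$ for $0\le t_1\le t_2<\infty$. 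A useful consistency check is that, up to the Lagrange-multiplier term $\mu_\phi(t)\phi$, the right-hand side of (\ref{eq:nkngf1:sec3}) is exactly $-\delta E/\delta\phi$, which is the structural reason the flow cannot increase $E$.

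\textbf{Main obstacle.} The integrations by parts and the differentiation under the integral are routine once one has a sufficiently regular solution, so the genuine difficulty is legitimizing these manipulations, i.e. establishing well-posedness of the CNGF: $\mu_\phi(t)$ depends on $\phi$ nonlinearly and nonlocally, so one must first show that (\ref{eq:nkngf1:sec3})--(\ref{eq:nkngf2:sec3}) possesses a solution regular enough in space (to validate the elliptic integration by parts) and in time (so that $t\mapsto\|\phi(\cdot,t)\|_2^2$ and $t\mapsto E(\phi(\cdot,t))$ are differentiable with the derivatives computed above). I would handle this by first taking smooth, rapidly decaying data and constructing a local-in-time strong solution; the a priori bounds $\|\phi(\cdot,t)\|_2=1$ and $E(\phi(\cdot,t))\le E(\phi_0)$ obtained in Steps 1--2 control the $X$-norm precisely because $\beta\ge0$ makes the interaction energy nonnegative, and hence rule out finite-time blow up, so the solution is global; a density/approximation argument then extends the identities to general $\phi_0\in X$ with $\|\phi_0\|_2=1$.
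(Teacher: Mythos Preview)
Your proof is correct and follows exactly the standard energy-method argument that underlies the cited references \cite{AftalionDu,BaoDu,Du}; the paper itself does not give an in-text proof of this theorem but simply attributes it to those works. Your Step~1 and Step~2 computations are precisely the intended ones, and your remark that the right-hand side of (\ref{eq:nkngf1:sec3}) is $-\delta E/\delta\phi$ plus a Lagrange-multiplier term is the structural explanation behind both identities.
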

\subsection{Backward Euler finite difference discretization}
\label{subsec:BEFD}
In this section, we will present a backward Euler finite difference  method
 to discretize
the GFDN (\ref{eq:ngf1:sec3})-(\ref{eq:ngf3:sec3}) (or a full discretization
of the CNGF (\ref{eq:nkngf1:sec3})-(\ref{eq:nkngf2:sec3})).
 For simplicity of
notation, we introduce the method for the case of one spatial
dimension $d=1$ with
homogeneous Dirichlet boundary conditions.
Generalizations to higher dimension with a rectangle $U=[a,b]\times[c,d]\subset \Bbb R^2$ and a box
$U=[a,b]\times[c,d]\times[e,f]\subset\Bbb R^3$ are straightforward
for tensor product grids and the results remain valid without
modifications. For $d=1$, we have \cite{BaoDu}
\begin{eqnarray} \label{eq:sdge1d:sec3}
&&\phi_t = \fl{1}{2}\phi_{xx} - V(x) \phi -\bt\; |\phi|^2\phi, \quad
x\in U=(a,b),\ t_n<t<t_{n+1}, \ n\ge0,\qquad \\
\label{eq:sdge1d2:sec3}
&&\phi(x,t_{n+1})\stackrel{\triangle}{=}
\phi(x,t_{n+1}^+)=\fl{\phi(x,t_{n+1}^-)}{\|\phi(\cdot,t_{n+1}^-)\|_2},
\qquad a\le x\le b, \quad n\ge 0,\\
\label{eq:sdge1d3:sec3}
&&\phi(x,0)=\phi_0(x), \quad a\le x \le b,\qquad
\phi(a,t)=\phi(b,t)=0, \qquad t\ge0;
\eea
with $\|\phi_0\|_2^2=\int_a^b \phi_0^2(x)\; dx=1$.

We choose the spatial mesh size $h=\btu x>0$ with $\Delta x=(b-a)/M$,  choose the time step  $\tau=\btu
t>0$ and define the index sets
 \begin{equation}\label{eq:index:sec3} {\mathcal T}_{M}=\{j\ |\ j=1,2,\ldots,M-1\},
 \qquad{\mathcal T}_{M}^0=\{j\ |\ j=0,1,2,\ldots,M\}.
 \end{equation}
 We denote grid points and time steps by
\be\label{eq:mesh1d:sec3}
x_j:=a+j\;h, \qquad j\in\calT_M^0;\qquad t_n := n\; \tau,  \qquad
n=0,1,2,\cdots.
\ee

Let $\phi^{n}_j$ be the numerical approximation of $\phi(x_j,t_n)$
and $\phi^{n}$ the solution vector at time $t=t_n=n\tau$ with
components $\phi_j^{n}$. Introduce the following finite difference
operators:
\be\label{eq:fdnotation:sec3}\begin{split}
&\delta_x^+\phi^n_{j}=\frac{1}{h}(\phi_{j+1}^{n}-\phi_{j}^n),\quad
\delta_x^-\phi_{j}^n=\frac{1}{h}(\phi_{j}^{n}-\phi_{j-1}^n), \quad \delta_x\phi_{j}^n=\frac{
\phi_{j+1}^{n}-\phi_{j-1}^n}{2\,h}, \\ &
\delta_t^+\phi_{j}^n=\frac{1}{\tau}(\phi_{j}^{n+1}-\phi_{j}^n),\quad
\delta_t^-\phi_j^n=\frac{1}{\tau}(\phi_j^n-\phi_j^{n-1}),\quad
\delta_t\phi_{j}^n=\frac{\phi_{j}^{n+1}-\phi_{j}^{n-1}}{2\tau},\\
& \delta_x^2\phi_{j}^n=\frac{\phi_{j+1}^n-2\phi_{j}^n+\phi_{j-1}^n}{h^2},\quad \delta_t^2\phi_{j}^n=\frac{\phi_{j}^{n+1}-2\phi_{j}^n+\phi_{j}^{n-1}}{\tau^2}.
\end{split}
\ee

We denote
 \be
  X_{M}=\left\{v=\left(v_{j}\right)_{j\in{\mathcal T}_M^0}\ |\
v_{0}=v_{M}=0\right\}\subset {\Bbb C}^{M+1},\ee
and define the discrete $l^p$, semi-$H^1$ and $l^\infty$ norms  over $X_M$ as
\begin{equation}\label{eq:fdnorm:sec3}\begin{split} &\|v\|_p^p=h\sum\limits_{j=0}^{M-1}|v_{j}|^p,\qquad
\|\delta_x^+ v\|_2^2=h\sum\limits_{j=0}^{M-1}
\left|\delta_x^+ v_{j}\right|^2, \qquad
\|v\|_{\infty}=\sup\limits_{j\in\mathcal{T}_M^0}|v_{j}|,\\
&\left(u,v\right)=\sum\limits_{j=0}^{M-1}u_j\bar{v}_j,\qquad \langle u,v\rangle=\sum\limits_{j=1}^{M-1}u_j\bar{v}_j,\quad \forall u,v\in X_M.\end{split}
\end{equation}

The {\bf Backward Euler finite difference (BEFD)} method is to use backward Euler
for time discretization and second-order centered finite difference
for spatial derivatives. The detailed scheme is \cite{BaoDu}:
\bea\label{eq:befd3:sec3}
&&\fl{\phi_j^{(1)}-\phi_j^n}{\tau}=\frac{1}{2}\delta_x^2\phi^{(1)}_j-V(x_j)\phi_j^{(1)}-\bt \left(\phi_j^{n}\right)^2
\phi_j^{(1)},  \quad j\in \calT_M,\\
&&\phi_0^{(1)}=\phi_M^{(1)}=0,\quad
\phi_j^0= \phi_0(x_j), \quad \phi_j^{n+1}=\fl{\phi_j^{(1)}}{\|\phi^{(1)}\|_2}, \quad j\in \calT_M^0,
\quad n=0,1,\cdots.\nn
\eea
The above BEFD method is  implicit  and unconditionally stable.
The discretized system can be solved by Thomas' algorithm.
The memory cost is $O(M)$ and computational cost per time step is $O(M)$.  In higher
dimensions (such as 2D or 3D), the associated discretized system can
be solved by iterative methods, for example the Gauss-Seidel
or conjugate gradient (CG) or multigrid (MG) iterative
method  \cite{Bao,BaoDu,BaoWang2}. With the approximation
$\phi^n$ of $\phi$ by BEFD, the energy and chemical potential can be computed as
\beas
&&E(\phi(\cdot,t_n))\approx E^n=h\sum_{j=0}^{M-1}\left[\frac{1}{2}\left|\delta_x^+\phi_j^n\right|^2+V(x_j)|\phi_j^n|^2
+\frac{\beta}{2}|\phi_j^n|^4\right],\\
&&\mu(\phi(\cdot,t_n))\approx \mu^n=E^n+h\sum_{j=0}^{M-1}\frac{\beta}{2}|\phi_j^n|^4,\qquad n\ge0.
\eeas

For $\beta=0$, i.e., linear case, the BEFD discretization (\ref{eq:befd3:sec3})
is energy diminishing and monotone for any $\tau>0$ (see \cite{BaoDu}).

\subsection{Backward Euler pseudospectral method}
\label{subsec:besp}
Spectral method enjoys high accuracy for smooth problems such as the ground state problems in BEC. Thus, it is favorable to use spectral method in numerical computation of ground states.  For simplicity, we shall introduce the method in
1D (\ref{eq:sdge1d:sec3})-(\ref{eq:sdge1d3:sec3}), i.e. $d=1$.
Generalization to $d>1$ is straightforward for tensor product
grids and the results remain valid without modifications. We adopt the same mesh strategy and notations as those in section \ref{subsec:BEFD}.

For any function $\psi(x)\in L^2(U)$ ($U=(a,b)$), $\phi(x)\in C_0(\bar{U})$,
 and vector $\phi=(\phi_0,\phi_1,\ldots,\phi_{M})^T\in X_M$ with
$M$ an even positive integer, denote finite dimensional spaces
 \be
 Y_M={\rm span}\left\{\Phi_l(x)=\sin\left(\mu_l(x-a)\right),
\quad \mu_l=\frac{\pi l}{b-a},\quad x\in U,\,l\in{\calT}_M\right\}.\ee
Let $P_M:L^2\left(U\right)\to Y_M$ be the standard $L^2$ projection
 onto $Y_M$ and $I_M:C_0(\overline{U})\to Y_M$  and $I_M: X_M\to Y_M$
 be the standard sine interpolation operator as
\be\label{eq:interpolation:sec3}
\left(P_M\psi\right)(x)=\sum\limits_{l=1}^{M-1}\hat{\psi}_l
\sin\left(\mu_l(x-a)\right),\quad \left(I_M\phi\right)(x)
=\sum\limits_{l=1}^{M-1}\tilde{\phi}_l
\sin\left(\mu_l(x-a)\right),
\ee
and the coefficients are given by
\be\label{eq:sinetran:sec3}
\hat{\psi}_l=\frac{2}{b-a}\int_a^b\psi(x)\sin\left(\mu_l(x-a)\right)\,dx,\; \tilde{\phi}_l=\frac{2}{M}\sum\limits_{j=1}^{M-1}\phi_j\sin\left(\frac{jl\pi}{M}\right),\; l\in{\calT}_M,
\ee
where $\phi_j=\phi(x_j)$ when $\phi$ is a function instead of a vector.

The  backward Euler sine spectral  discretization  for (\ref{eq:ngf1:sec3})-(\ref{eq:ngf3:sec3}) reads \cite{BaoChernLim}:\\
Find
$\phi^{n+1}(x)\in Y_{M}$ (i.e. $\phi^{+}(x)\in Y_{M}$)  such that
\begin{align}
\label{eq:besp11:sec3}
&\frac{\phi^+(x)-\phi^n(x)}{\tau}=\frac{1}{2} \nabla^2
\phi^{+}(x)-P_M\left[\left(V(x)+\beta |\phi^n(x)|^2\right)
\phi^+(x)\right],  \quad x\in U, \\
\label{eq:besp21:sec3}
&\phi^{n+1}(x)=\frac{\phi^+(x)}{\|\phi^+(x)\|_2}, \quad x\in U,\quad
n=0,1,\cdots;\quad \phi^0(x)=P_{M}\left(\phi_0(x)\right).
\end{align}

The above discretization can be solved in phase space and it is not
suitable in practice due to the difficulty of computing the
integrals in (\ref{eq:sinetran:sec3}). We now present an efficient implementation
by choosing $\phi^0(x)$ as the interpolation of $\phi_0(x)$ on
the grid points $\{x_j, \ j\in{\calT}_{M}^0\}$,
i.e $\phi^0(x_j) =\phi_0(x_j)$ for $j\in{\calT}_{M}^0$, and approximating the integrals in (\ref{eq:sinetran:sec3}) by a
quadrature rule on the grid points.
Let $\phi_{j}^n$  be the approximations of $\phi(x_j,t_n)$, which is the solution
of (\ref{eq:ngf1:sec3})-(\ref{eq:ngf3:sec3}).  Backward Euler sine pseudospectral  (BESP) method   for discretizing (\ref{eq:ngf1:sec3})-(\ref{eq:ngf3:sec3}) reads \cite{BaoChernLim}:
\begin{align}
\label{eq:besp1:sec3}
&\frac{\phi_j^{(1)}-\phi_j^n}{\tau}=\frac{1}{2} \left.
D_{xx}^s\phi^{(1)}\right|_{x=x_j}-V(x_j)\phi_j^{(1)} -\beta |\phi_j^{n}|^2
\phi_j^{(1)},  \quad j\in\calT_M, \\
\label{eq:besp2:sec3}
&\phi_0^{(1)}=\phi_M^{(1)}=0, \; \phi_j^0=\phi_0(x_j),
\;\phi_j^{n+1}=\frac{\phi_j^{(1)}}{\|\phi^{(1)}\|_2}, \;j\in\calT_M^0, \;
n=0,1,\cdots.
\end{align}
 Here $D_{xx}^s$, a pseudospectral differential operator approximation
of $\partial_{xx}$, is defined as
\be\label{eq:sineps:sec3}\left. D_{xx}^s u\right|_{x=x_j}= -\sum_{l=1}^{M-1}\;
\mu_l^2 \tilde{u}_l\; \sin(\mu_l(x_j-a)). \qquad j\in\calT_M. \ee

  In the discretization (\ref{eq:besp1:sec3}), at every time
step, a nonlinear system has to be solved. Here we present
an efficient way to solve it iteratively by introducing
a stabilization term with constant coefficient and using
discrete sine transform (DST):
\begin{equation}
\label{eq:iter1:sec3}
\frac{\phi_j^{(1),m+1}-\phi_j^n}{\tau}=\frac{1}{2} \left.
D_{xx}^s\phi^{(1),m+1}\right|_{x=x_j}-\alpha\phi_j^{(1),m+1}
+\left(\alpha-V(x_j)-\beta |\phi_j^{n}|^2\right) \phi_j^{(1),m},
\end{equation}
 where $m\ge0$, $\phi_j^{(1),0}=\phi_j^n$ and $j\in\calT_M^0$.
Here $\alpha\ge0 $ is called as a stabilization parameter
 to be determined.
Taking discrete sine transform at both sides of (\ref{eq:iter1:sec3}),
we obtain
\begin{equation}
\label{eq:iter3:sec3}
\frac{(\widetilde{\phi^{(1),m+1}})_l-(\widetilde{\phi^n})_l}{\tau}
=-\left(\alpha+\frac{1}{2}\mu_l^2\right)(\widetilde{\phi^{(1),m+1}})_l
+(\widetilde{G^m})_l, \qquad l\in\calT_M,
\end{equation}
where $(\widetilde{G^m})_l$ are the sine transform
coefficients of the vector $G^m=(G_0^m,\cdots,G_M^m)^T$ defined as
\be G_j^m = \left(\alpha-V(x_j)-\beta |\phi_j^{n}|^2\right) \phi_j^{(1),m},
\qquad j\in\calT_M^0.\ee
Solving (\ref{eq:iter3:sec3}), we get
\begin{equation}
\label{eq:iter4:sec3}
 (\widetilde{\phi^{(1),m+1}})_l = \frac{2}{2+\tau (2\alpha +\mu_l^2)}
\left[(\widetilde{\phi^n})_l +\tau \; (\widetilde{G^m})_l\right],
\qquad l\in\calT_M.
\end{equation}
Taking inverse discrete sine transform for (\ref{eq:iter4:sec3}), we get
the solution for (\ref{eq:iter1:sec3}) immediately.

In order to make the iterative method (\ref{eq:iter1:sec3}) for
solving (\ref{eq:besp1:sec3}) converges as fast as possible,  the `optimal' stabilization parameter $\alpha$ in (\ref{eq:iter1:sec3}) is suggested as \cite{BaoLim}:
\begin{equation}
\label{eq:para:sec3}
\alpha_{\rm opt}= \frac{1}{2}\left(b_{\rm max}+b_{\rm min}\right),
\end{equation}
where
\be
b_{\rm max}= \max_{1 \le j \le M-1} \left(V(x_j)+\beta |\phi_j^n|^2\right),
\qquad b_{\rm min}= \min_{1 \le j \le M-1}
\left(V(x_j)+\beta |\phi_j^n|^2\right).
\ee

Similarly, with the approximation
$\phi^n$ of $\phi$ by BESP, the energy and chemical potential can be computed as
\beas
&&E(\phi(\cdot,t_n))\approx E^n=\frac{b-a}{4}\sum_{l=1}^{M-1}\mu_l^2 |\widetilde{(\phi^n)}_l|^2+ h\sum_{j=0}^{M-1}\left[V(x_j)|\phi_j^n|^2
+\frac{\beta}{2}|\phi_j^n|^4\right],\\
&&\mu(\phi(\cdot,t_n))\approx \mu^n=E^n+h\sum_{j=0}^{M-1}\frac{\beta}{2}|\phi_j^n|^4,\qquad n\ge0.
\eeas

\begin{remark}In practice, Fourier pseudospectral method or cosine pseudospectral method can also be applied to spatial discretization  for discretizing (\ref{eq:sdge1d:sec3})-(\ref{eq:sdge1d3:sec3}) when the homogeneous
Dirichlet boundary condition in (\ref{eq:sdge1d3:sec3}) is replaced by periodic boundary condition
or homogeneous Neumann boundary condition, respectively.
\end{remark}

\subsection{Simplified methods under symmetric potentials}
\label{subsec:sympotgs}
The ground state $\phi_g$ of (\ref{eq:minp:sec2}) shares the same symmetric properties with $V(\bx)$ ($\bx\in\Bbb R^d$) ($d=1,2,3$). In such cases, simplified numerical  methods, especially with less memory requirement,
 for computing the ground states are available.

{\it Radial symmetry in 1D, 2D and 3D}. When  the potential $V(\bx)$ is radially symmetric in $d=1,2$ and spherically symmetric in $d=3$, the problem is reduced to 1D.  Due to the symmetry, the GPE (\ref{eq:gpe:sec2}) essentially
collapses to a 1D problem with $r=|\bx|\in[0,+\infty)$ for $\psi:=\psi(r,t)$ ($d=1, 2, 3$):
 \begin{align}
\label{eq:gperadial:sec3}
&i\p_t\psi(r,t) = \frac{-1}{2r^{d-1}}\frac{\p}{\p r}\left(r^{d-1}\frac{\p}{\p r}\psi\right)+
\left(V(r)+
\bt|\psi|^2\right)\psi,\; r\in(0,+\infty),\\
&\frac{\p\psi(0,t)}{\p r}=0,\qquad \qquad \psi(r,t)\to 0,\quad {\rm as}\quad r\to\infty.
\end{align}
The normalization condition (\ref{eq:norm:sec2}) becomes
\be\label{eq:normradial:sec3}
N_r(\psi)=\omega(d)\int_{0}^\infty|\psi(r,t)|^2r^{d-1}\,dr=1.
\ee
Here $\omega(d)$ is the area  of unit sphere in $d$ dimensions, where $\omega(1)=2$, $\omega(2)=2\pi$ and $\omega(3)=4\pi$. The energy (\ref{eq:energy:sec2})  can be rewritten for radial wave function as
\be\label{eq:energyradial:sec3}
E_r(\psi)=\omega(d)\int_0^\infty \left(\frac12|\partial_r\psi(r,t)|^2+V(r)|\psi(r,t)|^2+\frac{\beta}{2}|\psi(r,t)|^4\right)r^{d-1}\,dr.
\ee
Then, the minimization problem (\ref{eq:minp:sec2}) collapses to :

Find $ \vphi_g\in S_r$ such that
\be
\label{eq:minpradial:sec3}
E_g:=E_r(\vphi_g) = \min_{\vphi\in S_r} E_r(\vphi),
\ee
where $S_r=\{\vphi \ |\omega(d)\int_{0}^\infty|\varphi(r)|^2r^{d-1}\,dr=1 , \ E_r(\vphi)<\ift\}$.

 The nonlinear eigenvalue problem (\ref{eq:charactereq:sec2}) collapses to
\be
\label{eq:charactereqradial:sec3}
\mu\vphi(r) = -\frac{1}{2r^{d-1}}\frac{\rd}{\rd r}\left(r^{d-1}\frac{\rd}{\rd r}\vphi(r)\right)+V(r)\vphi(r)
+\bt|\vphi(r)|^2\vphi(r),\; r\in(0,+\infty),
\ee
with boundary conditions
\be\label{eq:boundaryradial:sec3}
\varphi^\prime(0)=0,\qquad \vphi(r)\to0,\quad{\rm when}\quad r\to\infty,
\ee
under the normalization constraint (\ref{eq:normradial:sec3}) with $\psi=\varphi$.

The eigenvalue problem (\ref{eq:charactereqradial:sec3})-(\ref{eq:boundaryradial:sec3}) is defined in a semi-infinite  interval $(0,+\infty)$. In practical computation, this is approximated by a problem defined on a finite
interval. Since the full wave function  vanishes exponentially fast as $r\to\infty$, choosing $R>0$ sufficiently
large, then the eigenvalue problem (\ref{eq:charactereqradial:sec3})-(\ref{eq:boundaryradial:sec3}) can be approximated by
\be
\label{eq:charactereqradial2:sec3}
\mu\;\vphi(r) =  -\frac{1}{2r^{d-1}}\frac{\rd}{\rd r}\left(r^{d-1}\frac{\rd}{\rd r}\vphi(r)\right)+[V(r)
+\bt|\vphi|^2]\vphi(r),\quad 0<r<R,
\ee
with boundary conditions
\be\label{eq:boundaryradial2:sec3}
\varphi^\prime(0)=0,\qquad \vphi(R)=0,
\ee
under the normalization
\be\label{eq:normradial2:sec3}
\omega(d)\int_{0}^R|\varphi(r)|^2r^{d-1}\,dr=1.
\ee
 To compute the ground state $\varphi_g$, a method based on GFDN  (\ref{eq:ngf1:sec3})-(\ref{eq:ngf3:sec3}) can be simplified. Here, we only present full discretization using a simplified BEFD method.

Choose time steps as (\ref{eq:mesh1d:sec3}), mesh size $\Delta r=2R/(2M+1)$ with positive integer $M$ and grid points as
\be\label{eq:meshradial:sec3}
r_j=j\Delta r,\quad r_{j+\frac12}=\left(j+\frac12\right)\Delta r,\quad j\in\calT_M^0.
 \ee
We adopt the same notation for finite difference operator as (\ref{eq:fdnotation:sec3}). Let $\vphi_{j+\frac12}^n$ be the numerical approximation of $\varphi(r_{j+\frac12},t_n)$ and $\vphi^{n}$ be the solution vector at time $t=t_n$ with
components $\vphi_{j+\frac12}^{n}$. Then a simplified BEFD method for computing the ground state of (\ref{eq:minpradial:sec3}) by GFDN with an initial guess $\vphi_0(r)$ is given as \cite{BaoDu}
\begin{align}\label{eq:simbefd31:sec3}
&\fl{\vphi_{j+\frac12}^{(1)}-\vphi_{j+\frac12}^n}{\tau}=\left[
\frac{1}{2}\delta_{r,d}^2-V(r_{j+\frac12})-
\bt \left(\vphi_{j+\frac12}^n\right)^2\right]
\vphi_{j+\frac12}^{(1)}, \quad j\in\calT_M\cup\{0\}, \\
&\vphi_{-\frac12}^{(1)}=\vphi_{\frac12}^{(1)},\qquad \vphi_{M+\frac12}^{(1)}=0,\quad
\vphi_{j+\frac12}^0= \vphi_0(r_{j+\frac12}),\quad \quad j\in \calT_M^0, \nn\\ \label{eq:simbefd33:sec3}
&\quad \vphi_{j+\frac12}^{n+1}=\fl{\vphi_{j+\frac12}^{(1)}}{\|\vphi^{(1)}\|_{r}}, \quad j\in \calT_M^0,
\quad n=0,1,\cdots,
\end{align}
where
\[\delta_{r,d}^2\;\vphi_{j+\frac{1}{2}}^{(1)}=\frac{1}{(\Delta r)^2r_{j+\frac12}^{d-1}}\left[r_{j+1}^{d-1}\vphi^{(1)}_{j+\frac32}-(r_{j+1}^{d-1}+r_j^{d-1})
\vphi^{(1)}_{j+\frac12}+r_j^{d-1}\vphi^{(1)}_{j-\frac12}\right],\]
and the norm is defined as
\be\label{eq:fdnormradial:sec3}
\|\vphi^{(1)}\|_{r}^2=\omega(d)\Delta r\sum\limits_{j=0}^{M-1}|\vphi^{(1)}_{j+\frac12}|^2(r_{j+\frac12})^{d-1}.
\ee
Here, we have introduced a ghost point $r_{-\frac12}$ so that the Neumann boundary condition $\vphi^\prime(0)=0$ is approximated  with second order accuracy. The linear system (\ref{eq:simbefd31:sec3}) can be solved very efficiently by  the Thomas' algorithm, where the computational cost is $O(M)$ per time step, for all dimensions $d=1,2,3$. The memory cost is $O(M)$. This tremendously reduces memory and computation complexity in higher dimensions ($d=2,3$) from
$O(M^d)$ to $O(M)$ compared with the proposed BEFD (\ref{eq:befd3:sec3}) with Cartesian coordinates.

{\it Cylindrical symmetry in 3D}. For $\bx=(x,y,z)^T\in\Bbb R^3$, when $V$ is cylindrically symmetric, i.e., $V$ is of the form $V(r,z)$ ($r=\sqrt{x^2+y^2}$), the problem is reduced to 2D. Due to the symmetry, the GPE (\ref{eq:gpe:sec2}) essentially
collapses to a 2D problem with $r\in(0,+\infty)$ and $z\in\Bbb R$ for $\psi:=\psi(r,z,t)$ :
\begin{align}
\label{eq:gpecylin:sec3}
&i\p_t\psi(r,z,t) =-\frac12 \left[\frac{1}{r}\frac{\p}{\p r}\left(r\frac{\p\psi}{\p r}\right)+\frac{\p^2\psi}{\p z^2}\right]+
\left(V(r,z)+
\bt|\psi|^2\right)\psi,\\
&\frac{\p\psi(0,z,t)}{\p r}=0,\quad z\in\Bbb R,\quad
\psi(r,z,t)\to 0, \quad{\rm when}\quad r+|z|\to\infty.
\end{align}
The normalization condition (\ref{eq:norm:sec2}) becomes
\be\label{eq:normcylin:sec3}
N_c(\psi)=2\pi\int_{\Bbb R^+}\int_{\Bbb R}|\psi(r,z,t)|^2r\,dzdr=1.
\ee

Then, the minimization problem (\ref{eq:minp:sec2}) collapses to :

Find $ \vphi_g\in S_c$ such that
\be
\label{eq:minpcylin:sec3}
E_g:=E_c(\vphi_g) = \min_{\vphi\in S_c} E_c(\vphi),
\ee
where
\be\label{eq:energycylin:sec3}
E_c(\vphi)=\pi\int_{\Bbb R^+}\int_{\Bbb R} \left(|\vphi_r(r,z)|^2+|\vphi_z(r,z)|^2+2V(r,z)|\vphi|^2+\beta|\vphi|^4\right)r\,dzdr,
\ee
and $S_c=\{\vphi \ |2\pi\int_{\Bbb R^+}\int_{\Bbb R}|\varphi(r,z)|^2r\,dzdr=1, \ E_c(\vphi)<\ift\}$.

 The nonlinear eigenvalue problem (\ref{eq:charactereq:sec2}) collapses to
\be
\label{eq:charactereqcylin:sec3}
\mu\vphi(r,z) = -\frac12
\left[\frac{1}{r}\frac{\p}{\p r}\left(r\frac{\p\vphi}{\p r}\right)+\frac{\p^2\vphi}{\p z^2}\right]+\left(V(r,z)
+\bt|\vphi|^2\right)\vphi,\ \ r>0, \ {z\in \Bbb R},
\ee
with boundary conditions
\be\label{eq:boundarycylin:sec3}\begin{split}
&\frac{\p\vphi(0,z,t)}{\p r}=0,\quad z\in\Bbb R,\quad
\vphi(r,z,t)\to 0, \quad{\rm when}\quad r+|z|\to\infty,
\end{split}\ee
under the normalization constraint (\ref{eq:normcylin:sec3}) with $\psi=\varphi$.

The eigenvalue problem (\ref{eq:charactereqradial:sec3})-(\ref{eq:boundaryradial:sec3}) is defined in the $r$-$z$ plane. In practical computation, this is approximated by a problem defined on a bounded
domain. Since the full wave function  vanishes exponentially fast as $r+|z|\to\infty$, choosing $R>0$ and $Z_1<Z_2$ with $|Z_1|$, $|Z_2|$ and $R$ sufficiently
large, then the eigenvalue problem (\ref{eq:charactereqcylin:sec3})-(\ref{eq:boundarycylin:sec3}) can be approximated for $(r,z)\in(0,R)\times(Z_1,Z_2)$,
\be
\label{eq:charactereqcylin2:sec3}
\mu\;\vphi(r,z) =  -\frac12
\left[\frac{1}{r}\frac{\p}{\p r}\left(r\frac{\p\vphi}{\p r}\right)+\frac{\p^2\vphi}{\p z^2}\right]+\left(V(r,z)
+\bt|\vphi|^2\right)\vphi,
\ee
with boundary conditions
\be\label{eq:boundarycylin2:sec3}
\frac{\p\vphi(0,z)}{\p r}=0,\; \vphi(R,z)=\vphi(r,Z_1)=\vphi(r,Z_2)=0,
\quad z\in[Z_1,Z_2],\; r\in[0,R],
\ee
under the normalization
\be\label{eq:normcylinl2:sec3}
2\pi\int_{0}^R\int_{Z_1}^{Z_2}|\varphi(r,z)|^2r\,dzdr=1.
\ee
 To compute the ground state, the GFDN  (\ref{eq:ngf1:sec3})-(\ref{eq:ngf3:sec3}) collapses to a 2D problem. We present a full finite difference discretization. Choose time steps as (\ref{eq:mesh1d:sec3}) and $r$- grid points (\ref{eq:meshradial:sec3}) for positive integer $M>0$. For integer $N>0$, choose mesh size $\Delta z=(Z_2-Z_1)/N$ and define $z$- grid points $z_k=Z_1+k\Delta z$ for  $k\in\calT_N^0=\{k\ |\ k=0,1,\ldots,N\}$.

Let $\vphi_{j+\frac12\,k}^n$ be the numerical approximation of $\varphi(r_{j+\frac12},z_{k},t_n)$ and $\vphi^{n}$ be the solution vector at time $t=t_n$ with
components $\vphi_{j+\frac12\,k}^{n}$. Then a simplified BEFD method for computing the ground state of (\ref{eq:energycylin:sec3}) by GFDN with an initial guess $\vphi_0(r,z)$ is given below \cite{BaoDu}:
\begin{align} 
&\fl{\vphi_{j+\frac12\,k}^{(1)}-\vphi_{j+\frac12k}^n}{\tau}=\left[
\frac{1}{2}(\delta_r^2+\delta_z^2)-V(r_{j+\frac12},z_{k})-\bt
\left(\vphi_{j+\frac12k}^n\right)^2\right]\vphi_{j+\frac12k}^{(1)},\ (j,k)\in\calT_{MN}^*, \nn\\
&\vphi_{-\frac12\,k}^{(1)}=\vphi_{\frac12\,k}^{(1)},
\quad\vphi_{M+\frac12\,k}^{(1)}=\vphi_{j+\frac12\,0}^{(1)}=\vphi_{j+\frac12\,N}^{(1)}=0,\qquad (j,k)\in\calT_{MN}^0, \nn \\ \label{eq:cysimbefd33:sec3}
& \vphi_{j+\frac12\,k}^0= \vphi_0(r_{j+\frac12},z_{k}),\quad\vphi_{j+\frac12\,k}^{n+1}
=\fl{\vphi_{j+\frac12\,k}^{(1)}}{\|\vphi^{(1)}\|_{c}}, \quad (j,k)\in \calT_{MN}^0,
\, n\ge0,
\end{align}
 where $\calT_{MN}^*=\{(j,k)\ |\ 0\le j\le M-1, \ 1\le k\le N-1\}$, $\calT_{MN}^0=\{(j,k)\ |\ 0\le j\le M, \ 0\le k\le N\}$ and
 \begin{eqnarray*}
&&\delta_r^2\vphi_{j+\frac12\,k}^{(1)}= \frac{1}{(\Delta r)^2r_{j+\frac{1}{2}}}\left[r_{j+1}\vphi_{j+\frac32\,k}^{(1)}-(r_{j+1}+
r_{j})\vphi_{j+\frac12\,k}^{(1)}+r_{j}\vphi_{j-\frac12\,k}^{(1)}\right],\\
&&\delta_z^2\vphi_{j+\frac12k}^{(1)}=\frac{1}{(\Delta z)^2}[\vphi_{j+\frac12\,k+1}^{(1)}-2\vphi_{j+\frac12\,k}^{(1)}+\vphi_{j+\frac12\,k-1}^{(1)}], \qquad (j,k)\in \calT_{MN}^*,
\end{eqnarray*}
and the norm is defined by
\be\label{eq:fdnormcylin:sec3}
\|\vphi^{(1)}\|_{c}^2=2\pi\Delta r\,\Delta z\sum\limits_{j=0}^{M-1}\sum\limits_{k=0}^{N-1}|\vphi^{(1)}_{j+\frac12\,k}|^2
r_{j+\frac12}.
\ee
Here, we use ghost points to approximate the Neumann boundary conditions, which is the same as  the radially symmetric potential case.

\begin{remark}\label{rmk:excite} When the potential $V(\bx)$ is an even function, BEFD (\ref{eq:befd3:sec3}) and BESP (\ref{eq:besp1:sec3})-(\ref{eq:besp2:sec3}) can be used to compute the first excited states by choosing proper initial guess (see \cite{BaoDu,BaoLim}).
\end{remark}

\subsection{Numerical results}
In this section, we report numerical results on the ground state by the proposed BEFD and BESP methods.
\begin{example}\label{exm:1:sec3}
Ground and first excited states (Remark \ref{rmk:excite}) in 1D, i.e., we take
$d=1$ in (\ref{eq:gpe:sec2}) and study
two kinds of trapping potentials

Case I. A harmonic oscillator potential $V(x) =\frac{x^2}{2}$ and
$\beta=400$;

Case II. An optical lattice potential $V(x)= \frac{x^2}{2}+25
\sin^2\left(\frac{\pi x}{4}\right)$ and $\beta=250$.

The initial data (\ref{eq:ngf3:sec3}) is chosen
 as $\phi_0(x) = e^{-x^2/2}/\pi^{1/4}$ for computing the ground state,
and resp., $\phi_0(x) = \frac{\sqrt{2} x}{\pi^{1/4}} e^{-x^2/2}$
for computing the first excited state. We solve the problem with BESP (\ref{eq:besp1:sec3})-(\ref{eq:besp2:sec3})
on $[-16,16]$, i.e. $a=-16$ and $b=16$, and take time step
$\tau =0.05$ for computing the ground state, and resp., $\tau=0.001$
for computing the first excited state.
The steady state solution in our computation is reached when $\max_{1 \le j \le M-1}\;
|\phi_j^{n+1}-\phi_j^n|<10^{-12}$. Let $\phi_g$ and $\phi_1$
be the `exact' ground state and first excited state, respectively,
which are obtained numerically by
using BESP with a very fine mesh $h=\frac{1}{32}$
and $h=\frac{1}{128}$, respectively. We denote their
energy and chemical potential as
$E_g:=E(\phi_g)$, $E_1:=E(\phi_1)$, and
$\mu_g:=\mu(\phi_g)$, $\mu_1:=\mu(\phi_1)$.
Let $\phi_{g,h}^{\rm SP}$ and $\phi_{1,h}^{\rm SP}$
be the numerical ground state and first excited state
obtained by using BESP with mesh size $h$, respectively.
Similarly, $\phi_{g,h}^{\rm FD}$ and $\phi_{1,h}^{\rm FD}$
are obtained by using BEFD in a similar way.
Tabs.~\ref{tbl:1a:sec3} and \ref{tbl:1b:sec3} list the errors for Case I, and Tabs.~\ref{tbl:2a:sec3} and
\ref{tbl:2b:sec3}
show the errors for Case II.  Furthermore, we  compute the energy and chemical potential
for the ground state and first excited state based on our
`exact' solution $\phi_g$ and $\phi_1$.
For Case I, we have $E_g:=E(\phi_g)= 21.3601$ and
$\mu_g:=\mu(\phi_g) = 35.5775$ for ground state,
and $E_1:=E(\phi_1)= 22.0777$ and
$\mu_1:=\mu(\phi_1) = 36.2881$ for the first excited
 state. Similarly, for Case II,  we have
$E_g=26.0838$, $\mu_g= 38.0692$, $E_1=27.3408$ and
$\mu_1=38.9195$. Fig.~\ref{fig:1:sec3} plots $\phi_g$ and $\phi_1$ as well
as their corresponding trapping potentials
for Cases I\&II. Fig.~3.2a shows the excited states $\phi_1$ for potential in Case I with
different $\beta$.
\end{example}
\begin{table}[htbp]
\begin{center}
\begin{tabular}{ccccc}  \hline
 mesh size  &$h=1$  &$h=1/2$  &$h=1/4$   &$h=1/8$ \\  \hline
$\max|\phi_g - \phi_{g,h}^{\rm SP}|$  &1.310E-3  &7.037E-5
    &1.954E-8  &$<$E-12 \\
$\|\phi_g - \phi_{g,h}^{\rm SP}\|$  &1.975E-3  &7.425E-5
    &2.325E-8   &$<$E-12 \\
$|E_g - E(\phi_{g,h}^{\rm SP})|$  &5.688E-5  &2.642E-6
    &9E-12   &$<$E-12 \\
$|\mu_g - \mu(\phi_{g,h}^{\rm SP})|$  &1.661E-2 &8.705E-5
   &9.44E-10  &4E-12 \\    \hline
$\max|\phi_g - \phi_{g,h}^{\rm FD}|$   &2.063E-3  &1.241E-3
     &2.890E-4  &7.542E-5  \\
$\|\phi_g - \phi_{g,h}^{\rm FD}\|$  &3.825E-3  &1.439E-3
  &3.130E-4  &7.705E-5 \\
$|E_g - E(\phi_{g,h}^{\rm FD})|$  &2.726E-3  &9.650E-4
    &2.540E-4  &6.439E-5 \\
$|\mu_g - \mu(\phi_{g,h}^{\rm FD})|$  &2.395E-2  &6.040E-4
&2.240E-4  &5.694E-5 \\    \hline
\end{tabular}
\caption{Spatial resolution of BESP and BEFD for ground state of
Case I in Example \ref{exm:1:sec3}.}\label{tbl:1a:sec3}
\end{center}
\end{table}

\begin{table}[htbp]
\begin{center}
\begin{tabular}{ccccc}  \hline
 Mesh size  &$h=1/4$   &$h=1/8$  &$h=1/16$  &$h=1/32$ \\   \hline
$\max|\phi_1 - \phi_{1,h}^{\rm SP}|$ &2.064E-1  &6.190E-4	
     &2.099E-7  &$<$E-12 \\
$\|\phi_1 - \phi_{1,h}^{\rm SP}\|$  &1.093E-1 &3.200E-4
    &1.403E-7  &$<$E-12 \\
$|E_1 - E(\phi_{1,h}^{\rm SP})|$  &5.259E-2  &3.510E-4
   &5.550E-9   &$<$E-12 \\
$|\mu_1 - \mu(\phi_{1,h}^{\rm SP})|$  &1.216E-1	&1.509E-3
    &4.762E-8  &$<$E-12 \\     \hline
$\max|\phi_1 - \phi_{1,h}^{\rm FD}|$  &2.348E-1  &8.432E-3
    &2.267E-3  &6.040E-4 \\
$\|\phi_1 - \phi_{1,h}^{\rm FD}\|$  &1.197E-1  &4.298E-3
    &1.215E-3  &2.950E-4 \\
$|E_1 - E(\phi_{1,h}^{\rm FD})|$  &3.154E-1  &5.212E-2  &1.382E-2
    &3.449E-3 \\
$|\mu_1 - \mu(\phi_{1,h}^{\rm FD})|$  &4.216E-1  &5.884E-2
    &1.609E-2  &3.999E-3 \\   \hline
  \end{tabular}
 \caption{Spatial resolution of BESP and BEFD for the first excited
 state of Case I in Example \ref{exm:1:sec3}.} \label{tbl:1b:sec3}
\end{center}
\end{table}

\begin{table}[htbp]
\begin{center}
\begin{tabular}{ccccc}    \hline
Mesh size   &$h=1$  &$h=1/2$ &$h=1/4$  &$h=1/8$ \\     \hline
$\max|\phi_g -\phi_{g,h}^{\rm SP}|$  &7.982E-3 &1.212E-3
    &2.219E-6    &1.9E-11 \\
$\|\phi_g - \phi_{g,h}^{\rm SP}\|$  &1.304E-2  &1.313E-3
   &2.431E-6    &2.8E-11 \\
$|E_g - E(\phi_{g,h}^{\rm SP})|$  &4.222E-4  &1.957E-4
    &4.994E-8    &$<$E-12 \\
$|\mu_g - \mu(\phi_{g,h}^{\rm SP})|$  &9.761E-2  &4.114E-3
    &5.605E-7   &$<$E-12 \\     \hline
$\max|\phi_g - \phi_{g,h}^{\rm FD}|$  &1.019E-2  &5.815E-3
  &1.001E-3  &2.541E-4 \\
$\|\phi_g - \phi_{g,h}^{\rm FD}\|$  &1.967E-2  &7.051E-3
   &1.390E-3  &3.387E-4 \\
$|E_g - E(\phi_{g,h}^{\rm FD})|$  &7.852E-2  &2.961E-2
    &7.940E-3  &2.027E-3 \\
$|\mu_g - \mu(\phi_{g,h}^{\rm FD})|$   &1.786E-1  &1.716E-2
  &6.730E-3  &1.728E-3 \\     \hline
  \end{tabular}
  \caption{Spatial resolution of BESP and BEFD for ground state
of Case II in Example \ref{exm:1:sec3}.}\label{tbl:2a:sec3}
\end{center}
\end{table}

\begin{table}[htbp]
\begin{center}
\begin{tabular}{ccccc}    \hline
 Mesh size  &$h=1/4$  &$h=1/8$  &$h=1/16$  &$h=1/32$ \\   \hline
$\max|\phi_1 - \phi_{1,h}^{\rm SP}|$  &2.793E-1 &1.010E-3
    &4.240E-7   &2E-12 \\
$\|\phi_1 - \phi_{1,h}^{\rm SP}\|$  &1.477E-1  &5.241E-4
   &2.784E-7  &2E-12 \\
$|E_1 - E(\phi_{1,h}^{\rm SP})|$  &1.145E-1  &8.337E-4
   &1.943E-8  &$<$E-12  \\
$|\mu_1 - \mu(\phi_{1,h}^{\rm SP})|$   &1.593E-1  &2.357E-3
   &1.097E-7   &5E-12 \\      \hline
$\max|\phi_1 - \phi_{1,h}^{\rm FD}|$   &3.134E-1  &1.124E-2
   &3.231E-3  &8.450E-4 \\
$\|\phi_1 - \phi_{1,h}^{\rm FD}\|$  &1.599E-1  &5.779E-3
  &1.701E-3  &4.122E-4 \\
$|E_1 - E(\phi_{1,h}^{\rm FD})|$  &6.011E-1  &1.002E-1  &2.688E-2
   &6.707E-3 \\
$|\mu_1 - \mu(\phi_{1,h}^{\rm FD})|$  &6.315E-1  &9.887E-2
   &2.742E-2  &6.827E-3 \\     \hline
  \end{tabular}
  \caption{Spatial resolution of BESP and BEFD for the first excited
 state of Case II in Example \ref{exm:1:sec3}.} \label{tbl:2b:sec3}
\end{center}
\end{table}

\begin{figure}[htbp]
\centerline{a)\psfig{figure=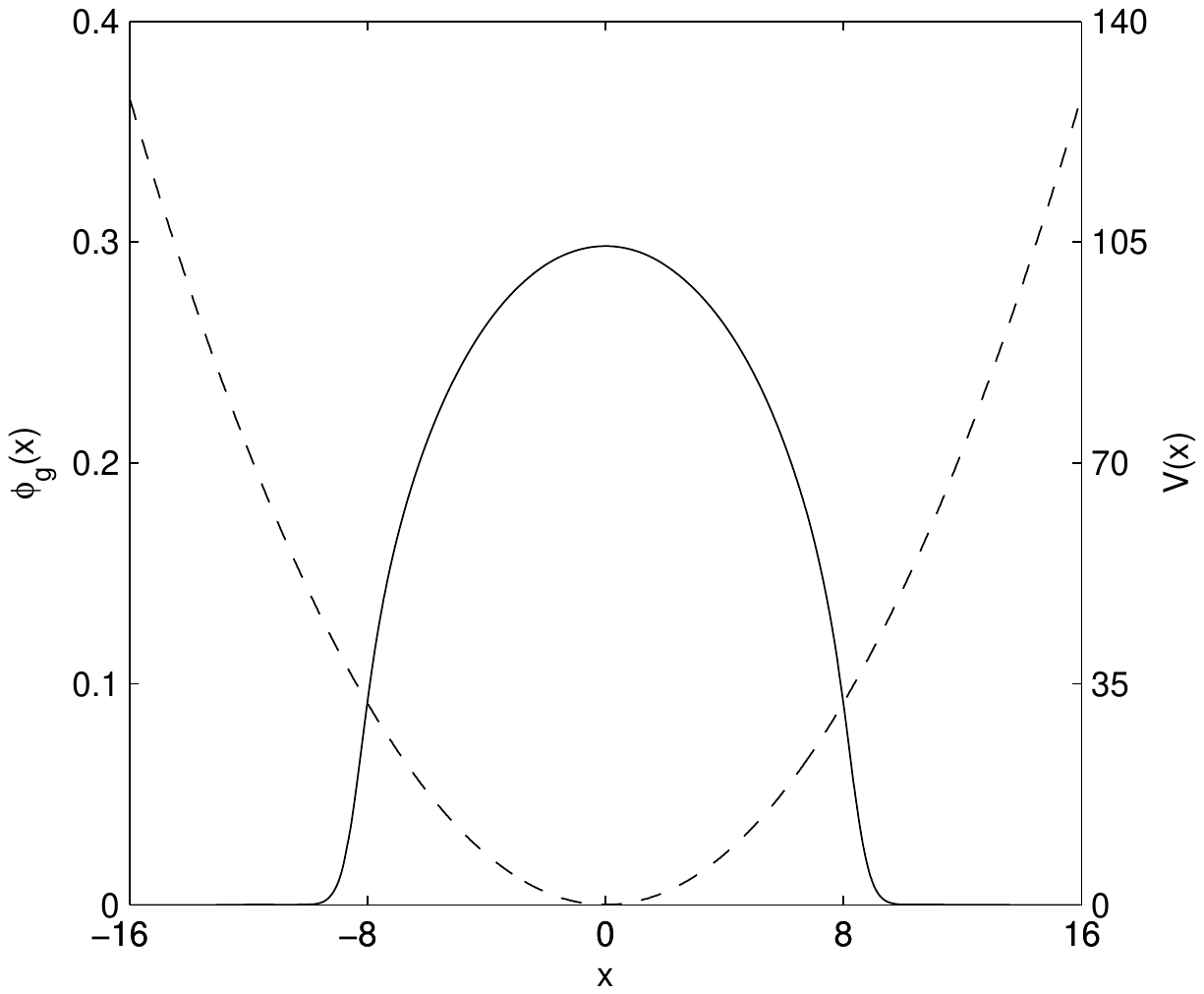,height=4.5cm,width=5cm,angle=0}
\quad \psfig{figure=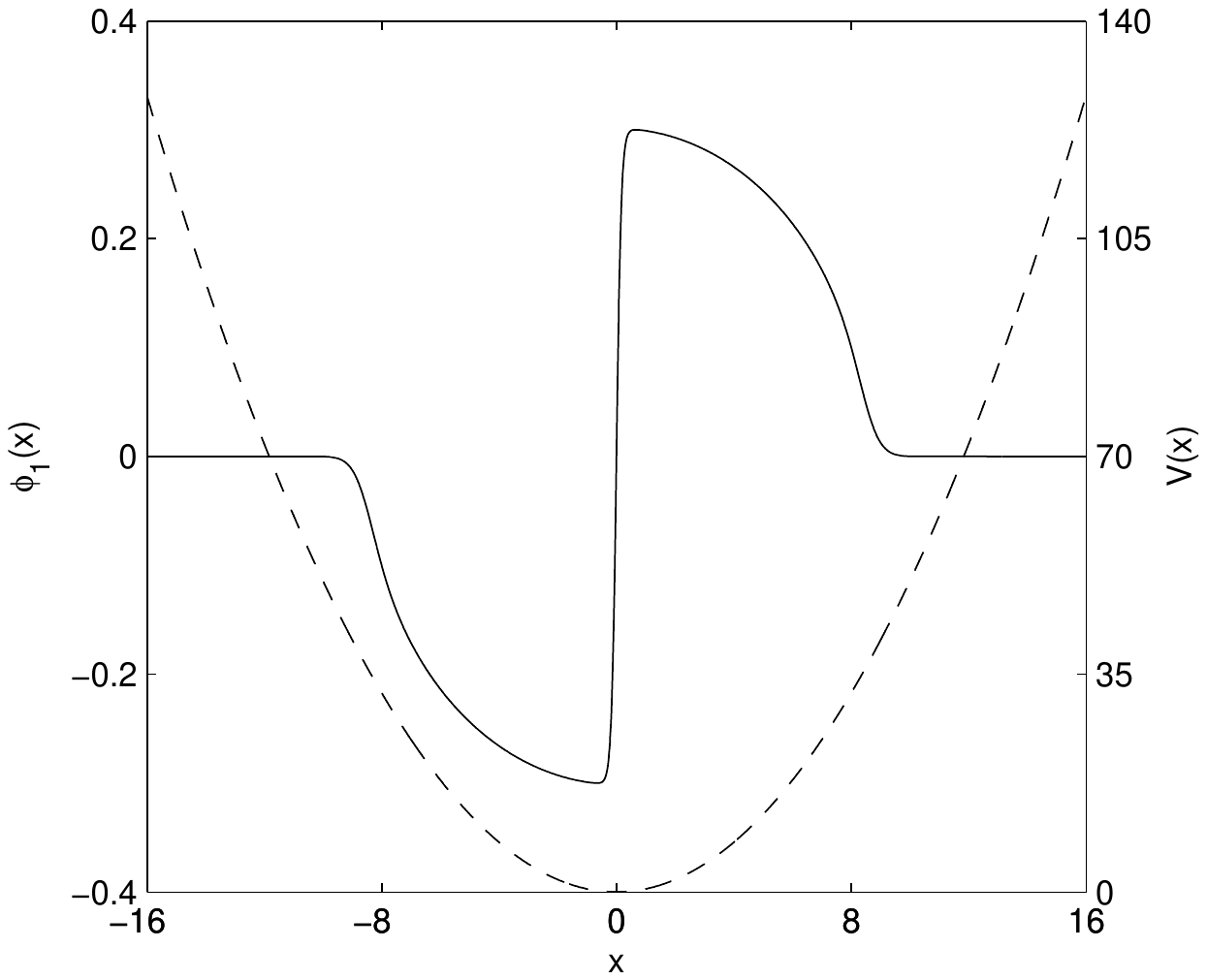,height=4.5cm,width=5cm,angle=0} }
\centerline{b)\psfig{figure=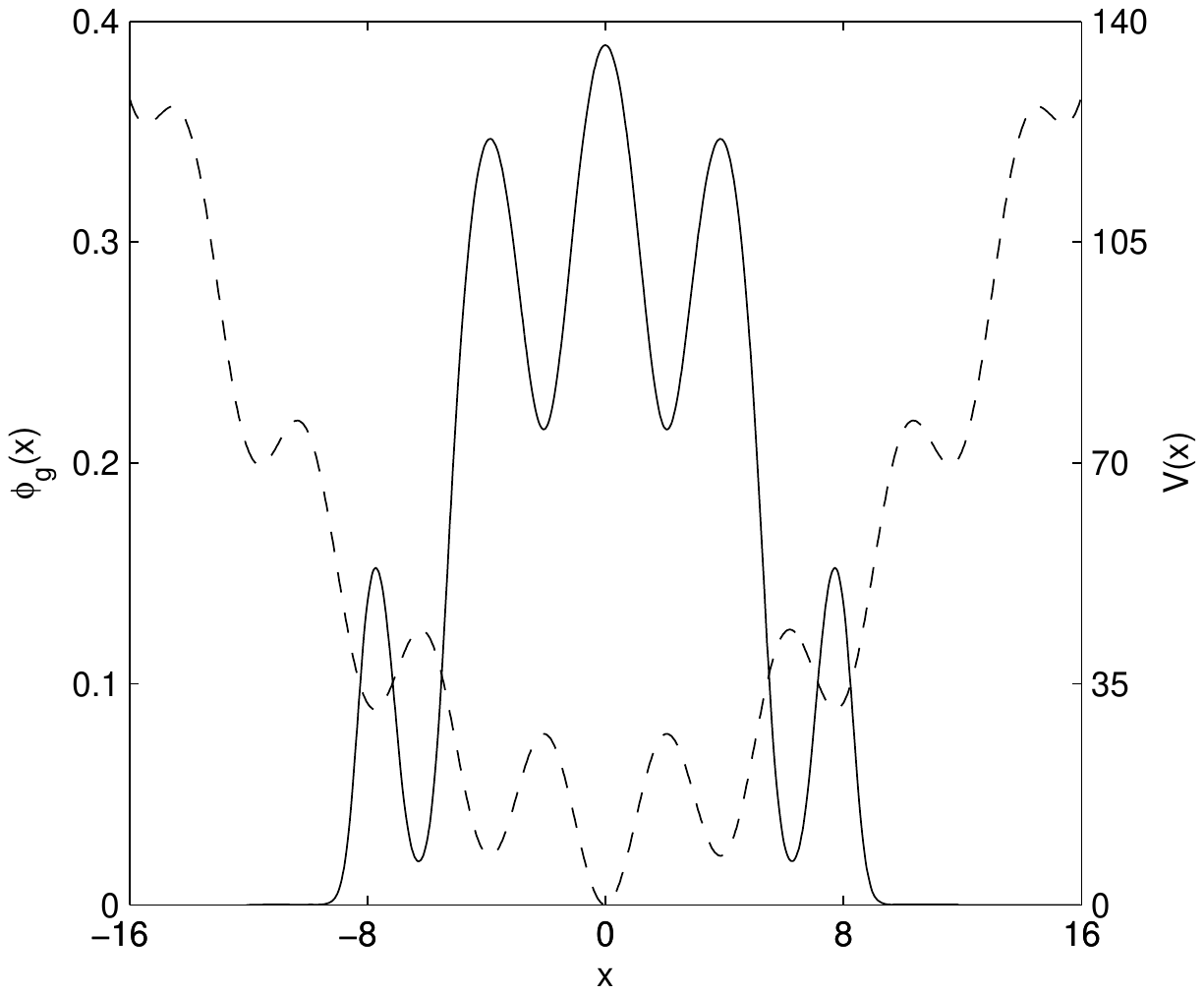,height=4.5cm,width=5cm,angle=0}
\quad \psfig{figure=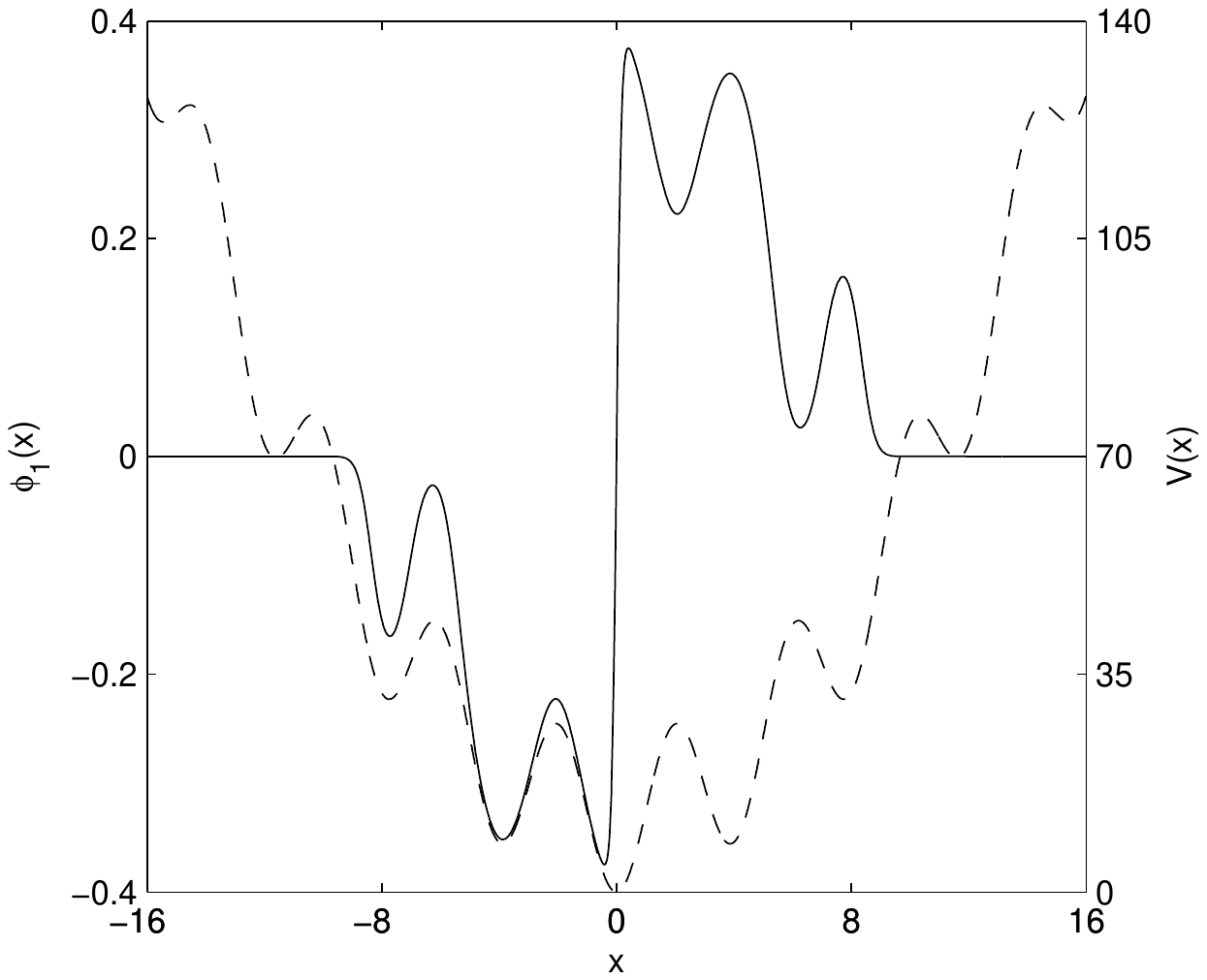,height=4.5cm,width=5cm,angle=0} }
\caption{Ground state $\phi_g$ (left column, solid lines)
and first excited state $\phi_1$ (right column, solid lines)
as well as trapping potentials (dashed lines) in Example \ref{exm:1:sec3}.
a): For Case I; b): For Case II.}\label{fig:1:sec3}
\end{figure}

From Tabs.~\ref{tbl:1a:sec3}-\ref{tbl:2b:sec3}, Figs.~\ref{fig:1:sec3} and 3.2a, we can draw the following conclusions. For BESP, it is spectrally accurate in spatial discretization; where for BEFD,
it is only second-order accurate. The error in the ground and first excited
states is only due to the spatial discretization.
Thus when high accuracy is required
or the solution has multiscale structure \cite{BaoLimZhang}, BESP is
 much better than BEFD in terms that it needs much less grid points.
Therefore BESP can save a lot of memory and computational time,
especially  in 2D \& 3D.

\begin{example}\label{exm:2:sec3}
Ground states in 2D with radial symmetric trap, i.e. we take $d=2$ in (\ref{eq:gpe:sec2}) and
\be V(x,y)=V(r)=\fl{1}{2}r^2, \qquad (x,y)\in {\Bbb R}^2,
\qquad  r=\sqrt{x^2+y^2}\ge0.\ee
The GFDN (\ref{eq:ngf1:sec3})-(\ref{eq:ngf3:sec3}) is solved in polar coordinate
with  $R=8+1/128$ under mesh size $\Delta r=\fl{1}{64}$ and
time step $\tau=0.1$ by using the simplified BEFD (\ref{eq:simbefd31:sec3})-(\ref{eq:simbefd33:sec3})
with initial data $\phi_0(x,y)=\phi_0(r)=\fl{1}{\sqrt{\pi}}\,e^{-r^2/2}$.
Fig.~3.2b
shows the ground state solution $\phi_g(r)$ with different
$\bt$.
Tab.~\ref{tab:tab2t:sec3} displays the values of
 $\phi_g(0)$, radius mean square $r_{\rm rms}=\sqrt{2\pi\int_0^\infty r^2|\phi_g(r)|^2rdr}$, energy $E(\phi_g)$ and
chemical potential $\mu_g$ for different $\bt$.
\end{example}

\begin{table}[htbp]\label{tab:tab2t:sec3}
\begin{center}
\begin{tabular}{ccccc}\hline
 $\beta$  &$\phi_g(0)$ &$r_{\rm rms}$  &$E(\phi_g)$ &$\mu_g=\mu(\phi_g)$
   \\ \hline
  0      &0.5642   &1.0000    &1.0000     &1.0000 \\
  10      &0.4104   &1.2619     &1.5923    &2.0637 \\
 50      &0.2832     &1.7018    &2.8960     &4.1430 \\
100    &0.2381    &1.9864    &3.9459    &5.7597 \\
250     &0.1892     &2.4655     &6.0789    &9.0031 \\
500      &0.1590    &2.9175     &8.5118    &12.6783\\
    \hline
\end{tabular}
\caption{Numerical results for radial symmetric ground  states in Example \ref{exm:2:sec3}.}
\end{center}
\end{table}

\begin{figure}[htbp] \label{fig:2:sec3}
\centerline{(a).\psfig{figure=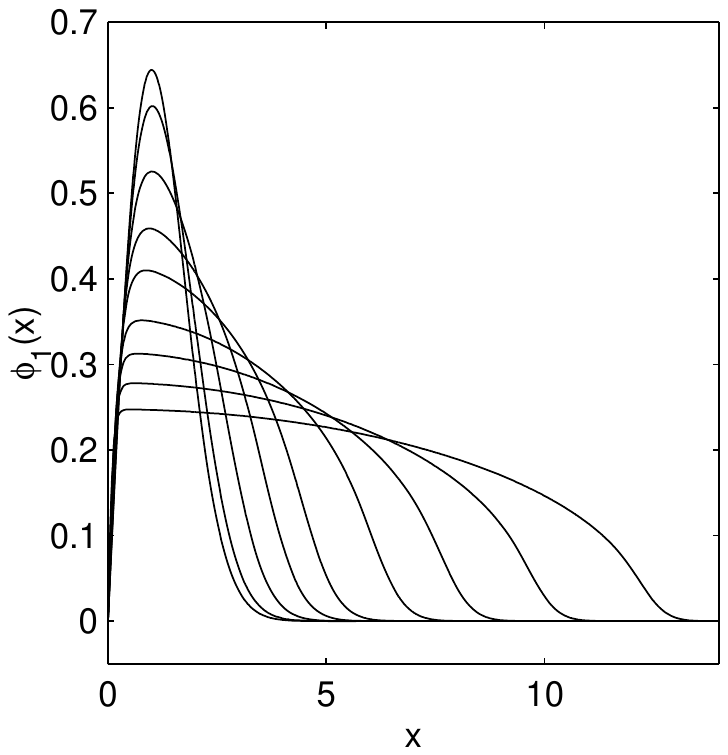,height=4.5cm,width=5cm,angle=0} \quad
(b).\psfig{figure=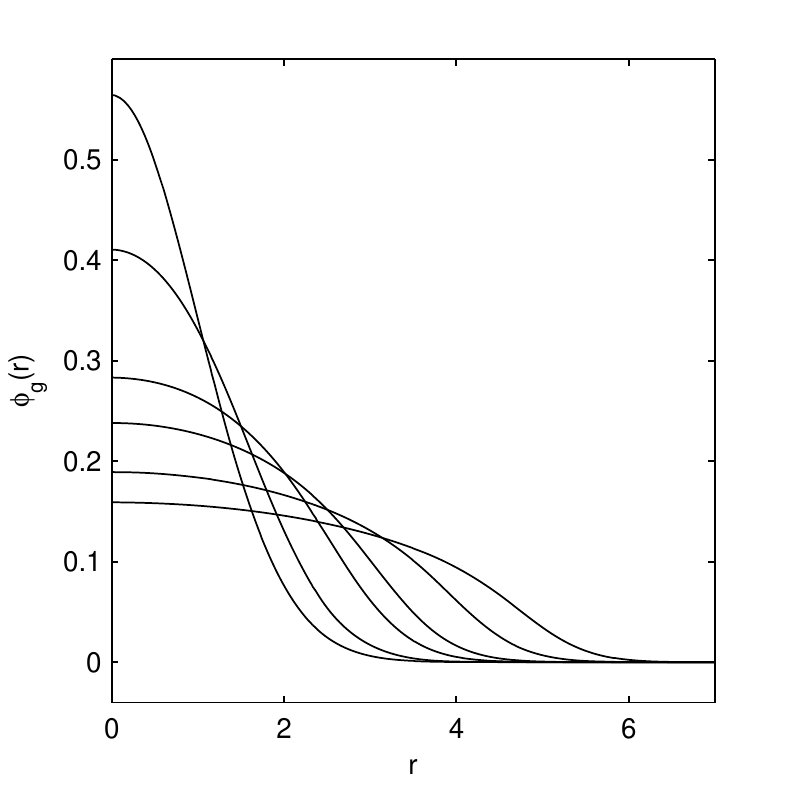,height=4.5cm,width=5cm,angle=0}
\qquad }
\caption{(a). First excited state solution $\phi_1(x)$ (an odd function)
in Example \ref{exm:1:sec3} with potential in Case I for different
$\bt=0,\; 3.1371,\; 12.5484,\; 31.371,\; 62.742,\;
156.855,\; 313.71,\; 627.42,\; 1254.8$ (with decreasing peak);
and (b). 2D ground states $\phi_g(r)$
in Example \ref{exm:2:sec3} for  $\bt=0, 10, 50,100,250,500$ (with decreasing peak).}
\end{figure}

\subsection{Comments of different methods}
In literatures, different numerical methods have been introduced to compute ground states of BEC.
In \cite{Ruprecht}, Ruprecht et al. used Crank-Nicolson finite difference method to compute the ground states of BEC based on the Euler-Lagrange equation (\ref{eq:charactereq:sec2}). Later, Edwards et al. proposed a Runge-Kutta method  to find the ground states in 1D and 3D with spherical symmetry. Dodd \cite{Dodd} gave an analytical expansion of the energy
$E(\phi)$ using the Hermite polynomials
 when the trap $V$ is harmonic. By minimizing the energy in terms of
 the expansion, approximated
 ground state results were reported in \cite{Dodd}. In \cite{Tosi},
 Succi et al. used an imaginary time
 method (equivalent to  GFDN) to compute the ground states with centered
 finite-difference discretization in space and explicit
 forward discretization in time. Lin et al. designed an iterative method
 in \cite{Chang1}. After discretization
 in space, they transformed the problem to a minimization problem in finite
 dimensional vector space.
  Gauss-Seidel iteration methods were proposed to solve the corresponding problem. Bao and Tang proposed a
  finite element method to find the ground state by directly minimizing the energy functional in \cite{BaoTang}.
 In \cite{BaoDu,BaoLim}, Bao et al. developed the GFDN method to calculate the ground state, which contains a gradient flow and a projection at each step.
Different discretizations have been discussed, including the finite difference discretization or spectral
discretization in space, explicit (forward Euler) discretization or implicit (backward Euler, Crank-Nicolson)
discretization in time.

 In the current studies of BEC, the most popular method for computing
the ground state  is the GFDN method. In fact, imaginary time method \cite{Tosi} is the same as the GFDN method,
while imaginary time is preferred in the physics community. There are many different discretizations for GFDN (\ref{eq:ngf1:sec3})-(\ref{eq:ngf3:sec3})  \cite{BaoDu,BaoLim}, including backward Euler, Crank-Nicolson, forward Euler, backward Euler for linear part and forward Euler for nonlinear part, time-splitting for the time discretization and centered finite difference or spectral method for spatial discretization. From our  experience, among these discretizations,  BEFD and BESP  are very easy to use, robust, very efficient and accurate in practical computation. Furthermore, energy diminishing is observed in linear case under
any time step $\tau>0$ and nonlinear case when time step $\tau$ is not too big \cite{BaoDu}. If high accuracy is crucial in computing ground states in BEC, e.g. under an optical lattice potential or
in a rotational frame, BESP is recommended.

\section{Numerical methods for computing  dynamics of GPE}
\label{sec:numdym}
\setcounter{equation}{0}\setcounter{figure}{0}\setcounter{table}{0}
In this section, we review different numerical methods to discretize the Cauchy problem of the GPE (\ref{eq:gpe:sec2})
for computing the dynamics of BEC.
In fact, many efficient and accurate numerical methods have been proposed for discretizing the above GPE, or
the nonlinear Schr\"{o}dinger equation (NLSE) in general,
such as time-splitting sine pseudospectral method \cite{BaoJakschP,BaoJinP,BaoJinP2,BaoZhang}, time-splitting finite difference
 method  \cite{Wang0,BaoTang0}, time-splitting Laguerre-Hermite pseudospectral method \cite{BaoShen}, conservative Crank-Nicolson finite difference method \cite{Chang,BaoCai2,BaoCai},
 semi-implicit finite difference method \cite{BaoCai2,BaoCai}, etc. Each method has its own advantages and disadvantages. Here we present the detailed algorithms for some of these methods.

In practice, we truncate the problem on a bounded domain $U\subset\Bbb R^d$ ($d=1,2,3$) as Eq. (\ref{eq:sdge:sec3}), with either homogeneous Dirichlet boundary conditions (\ref{eq:sdge1:sec3}) or periodic boundary conditions when $U$ is an interval (1D), a rectangle (2D) or a box (3D).
For simplicity of notation,  we shall introduce the method
in one space dimension ($d=1$). Generalizations to $d>1$ are
straightforward for tensor product grids and the results remain
valid without modifications. In 1D, the equation (\ref{eq:sdge:sec3})
 becomes \bea \label{eq:sdge1d:sec4}
&&i \pl{\psi(x,t)}{t}=-\fl{1}{2}\psi_{xx}(x,t)+
V(x)\psi(x,t)
+ \beta |\psi|^2\psi(x,t), \ a<x<b,\;t>0,\qquad \\
\label{eq:sdgi1d:sec4}
&&\psi(x,t=0)=\psi_0(x), \qquad  a\le x\le b,\eea
with the homogenous Dirichlet boundary condition
\be\label{eq:sdgd01d:sec4}
\psi(a,t)=\psi(b,t)=0,\qquad t>0.
\ee
 The following periodic boundary condition
\be
\label{eq:sdgb1d:sec4}
\psi(a,t)=\psi(b,t),\quad \p_x\psi(a,t)=\p_x\psi(b,t),
\qquad t>0,
\ee
or homogeneous Neumann boundary condition
\be\label{eq:sdgd01np:sec4}
\p_x\psi(a,t)=\p_x\psi(b,t)=0,\qquad t>0,
\ee
is also widely used in the literature.

To discretize Eq. (\ref{eq:sdge1d:sec4}), we use uniform grid points. Choose the spatial mesh size $h=\btu x>0$ with $\Delta x=(b-a)/M$  ($M$ an
even positive integer), the time step $\tau$, the grid points and the time step  as
in (\ref{eq:mesh1d:sec3}).
Let $\psi^n_j$ be the approximation of $\psi(x_j,t_n)$ and
$\psi^n$ be the solution vector with components $\psi_j^n$.

In the following, we will introduce two widely used schemes: the time-splitting methods and the finite difference time domain (FDTD) methods.

\subsection{Time splitting pseudospectral/finite difference method}
\label{subsec:ts}
The time splitting procedure was
presented for differential equations in \cite{Strang} and applied
to Schr\"{o}dinger equations in \cite{Hardin,Taha1}. For the simplest two-step case, consider an abstract initial value problem for $u:[0,T]\to \calB$ ($\calB$ Banach space),
\be\label{eq:generalinv:sec4}
\frac{\rd}{\rd t}u(t)=(A +B)u(t),\quad u(0)\in\calB,
\ee
where $A$ and $B$ are two operators, the solution can be written in the abstract form as $u(t)=e^{t(A+B)}u(0)$. For a given time step $\tau>0$, let $t_n = n \tau$, $n=0,1,\ldots$, and $u^n$ be the approximation of $u(t_n)$. The time-splitting approximation, or operator splitting (split-step) is usually  given as  \cite{Strang,Yoshida}
\be\label{eq:lietrotter:sec4}
u^{n+1}= e^{\tau A}e^{\tau B}u^n, \qquad\text{Lie-Trotter splitting},
\ee
or
\be\label{eq:strang:sec4}
u^{n+1}\approx e^{\tau A/2}e^{\tau B}e^{\tau A/2}u^n,\qquad\text{Strang splitting}.
\ee
Formally, from Taylor expansion, it is easy to see that the approximation error of Lie-Trotter splitting is of first order $O(\tau)$, and the error of Strang splitting is of second order $O(\tau^2)$. In principle, splitting approximations of higher order accuracy can be constructed as \cite{Yoshida}
\be
u^{n+1}\approx e^{a_1\tau A}e^{b_1\tau B}e^{a_2\tau A}e^{b_2\tau B}\cdots e^{a_m \tau A}e^{b_m\tau B}u^n,\quad m\ge1,
\ee
where coefficients $a_j$ and $b_j$ ($j=1,\cdots,m$) are chosen properly. One of the most frequently used higher order splitting scheme is the  fourth-order symplectic time integrator (cf. \cite{Yoshida,BaoShen}) for (\ref{eq:generalinv:sec4})  as:
\be\label{eq:4thsplit:sec4}
\begin{split}
&u^{(1)} = e^{\theta \tau A}\;u^n,\quad  u^{(2)}= e^{2\theta \tau B}\;u^{(1)},\quad
u^{(3)} = e^{(0.5-\theta)\tau A}\;u^{(2)},\\
&u^{(4)}=e^{(1-4\theta)\tau B}\;u^{(3)},\quad u^{(5)}=e^{(0.5-\theta)\tau A}\;u^{(4)},
\quad u^{(6)}=e^{2\theta \tau B}\;u^{(5)},\\& u^{n+1}=e^{\theta \tau A}\;u^{(6)},
\end{split}
\ee
where the coefficient $\theta$ can be calculated as
\be
\theta =\frac{1}{6}\left(2+2^{1/3}+2^{-1/3}\right)\approx 0.67560 35959 79828 81702.
\ee

\subsubsection{Time splitting sine pseudospectral method}
\label{subsubsec:TSSP}
In this section we present a time-splitting sine pseudospectral
method, to numerically solve the GPE (\ref{eq:sdge1d:sec4}) with homogenous Dirichlet boundary condition (\ref{eq:sdgd01d:sec4}).
The merit of this method is that it is
unconditionally stable, time reversible, time-transverse
invariant, and that it conserves the total particle number.

\medskip

{\bf Time-splitting sine pseudospectral (TSSP) method}. From time $t=t_n$ to
$t=t_{n+1}$,  the GPE (\ref{eq:sdge1d:sec4})  is solved in two splitting
steps. One solves first \be \label{eq:fstep:sec4}
i \psi_t=-
\fl{1}{2} \psi_{xx}, \ee for the time step of length $\tau$,
followed by solving \be \label{eq:sstep:sec4} i
\pl{\psi(x,t)}{t}=V(x)\psi(x,t) + \beta
|\psi(x,t)|^2\psi(x,t), \ee for the same time step. Eq.
(\ref{eq:fstep:sec4}) will be discretized in space by the sine spectral
method and integrated in time {\it exactly}. For
$t\in[t_n,t_{n+1}]$, the ODE (\ref{eq:sstep:sec4}) leaves $|\psi|$
invariant in $t$ \cite{BaoJakschP,BaoJinP} and  therefore  becomes \be
\label{eq:sstepp:sec4} i\pl{\psi(x,t)}{t}=V(x)\psi(x,t) +
\beta |\psi(x,t_n)|^2\psi(x,t) \ee and thus can be integrated {\it
exactly}. This is equivalent to choosing operators $A$, $B$ in (\ref{eq:generalinv:sec4}) as
\be
A\psi=\frac{i}{2}\p_{xx}\psi,\qquad B\psi=-i(V(x)+\beta |\psi|^2)\psi.
\ee
 From time $t=t_n$ to $t=t_{n+1}$, we combine the
splitting steps via the standard Strang splitting \cite{BaoJakschP,BaoJinP,BaoZhang}:
\be\label{eq:tssp:sec4}\begin{split}
&\psi_j^{(1)}=\fl{2}{M}\sum_{l=1}^{M-1}
  e^{-i\tau\mu_l^2/4}\;\widetilde{(\psi^n)}_l\;\sin(\mu_l(x_j-a)),
    \\
&\psi^{(2)}_j=e^{-i(V(x_j)+\beta |\psi_j^{(1)}|^2)\tau}\;\psi_j^{(1)}, \qquad j\in\calT_M,\\
&\psi^{n+1}_j=\fl{2}{M}\sum_{l=1}^{M-1}
  e^{-i\tau\mu_l^2/4}\;\widetilde{(\psi^{(2)})}_l\;\sin(\mu_l(x_j-a)), \qquad\psi_0^{n+1}=\psi_M^{n+1}=0,
\end{split}
\ee
where $\mu_l=l\pi/(b-a)$ for $l\in\calT_M$ and  $\widetilde{(\psi^n)}_l$ and $\widetilde{(\psi^{(2)})}_l$ are the discrete sine transform
coefficients of $\psi^n$  and $\psi^{(2)}$, respectively, which are defined  in (\ref{eq:sinetran:sec3}). One can also exchange the order of the steps (\ref{eq:sstep:sec4}) and (\ref{eq:sstepp:sec4}) in TSSP (\ref{eq:tssp:sec4}), and the numerical results are almost the same.

\begin{remark}For the  GPE (\ref{eq:sdge1d:sec4})  truncated in a bounded domain with periodic boundary condition (\ref{eq:sdgb1d:sec4}) or
homogenous Neumann boundary condition (\ref{eq:sdgd01np:sec4}), a time splitting Fourier or cosine pseudospectral  method similar to TSSP (\ref{eq:tssp:sec4}) is straightforward \cite{BaoWang,BaoJakschP,BaoJinP}, i.e., solve the linear part (\ref{eq:fstep:sec4}) by Fourier or cosine spectral discretization instead of sine spectral discretization.
\end{remark}

\subsubsection{Time splitting finite difference method}
\label{subsubsec:TSFD}
In section \ref{subsubsec:TSSP}, a sine pseudospectral method (or a Fourier pseudospectral method) is used to solve Eq. (\ref{eq:fstep:sec4}). Spectral method is favorable in view of its high accuracy if the solution of  continuous problem (\ref{eq:gpe:sec2}) is smooth. For non-smooth potentials $V(x)$ (random potential), the regularity of the solution of the GPE (\ref{eq:gpe:sec2}) would be low. In such case, finite difference method is suggested instead of spectral method.

{\bf Time-splitting finite difference (TSFD) method} From time $t=t_n$ to
$t=t_{n+1}$,  the GPE (\ref{eq:sdge1d:sec4}) with homogenous Dirichlet boundary condition (\ref{eq:sdgd01d:sec4}) is solved in two splitting
steps, (\ref{eq:fstep:sec4}) and (\ref{eq:sstepp:sec4}). As indicated in section \ref{subsubsec:TSSP}, Eq. (\ref{eq:sstepp:sec4}) can be integrated exactly. For the linear part (\ref{eq:fstep:sec4}), we use Crank-Nicolson finite difference method.  From time $t=t_n$ to $t=t_{n+1}$, we combine the
splitting steps via the standard Strang splitting \cite{BaoTang0}:
\be\label{eq:tsfd:sec4}\begin{split}
&\psi^{(1)}_j=e^{-i\tau (V(x_j)+\beta |\psi_j^n|^2)/2}\;\psi_j^n, \qquad j\in\calT_M^0,\\
&i\frac{\psi_j^{(2)}-\psi^{(1)}_j}{\tau}=-\frac14 (\delta_x^2\psi_j^{(2)}+\delta_x^2\psi_j^{(1)}),
  \quad j\in\calT_M, \qquad \psi_0^{(2)}=\psi_{M}^{(2)}=0,  \\
&\psi^{n+1}_j=e^{-i\tau (V(x_j)+\beta |\psi_j^{(2)}|^2)/2}\;\psi_j^{(2)},
 \qquad j\in\calT_M^0.
\end{split}
\ee

This method is also unconditionally stable, time reversible, time-transverse
invariant, and it conserves the total particle number. The linear part of TSFD
(\ref{eq:tsfd:sec4}) can be solved by  the Thomas' algorithm in 1D and discrete sine transform (DST) in 2D and 3D.  Thus, the computational costs of TSFD and TSSP are the same in 2D and 3D, while the cost of TSFD is cheaper in 1D due to the Thomas' algorithm.

\subsection{Finite difference time domain method}\label{subsec:fdtd}
Finite difference time domain (FDTD) methods for NLSE  have been extensively studied in the literature \cite{Akrivis1,Chang} and are  widely used. In this section, we present the most popular finite difference discretizations for the GPE (\ref{eq:sdge1d:sec4}) with homogenous Dirichlet boundary condition (\ref{eq:sdgd01d:sec4}).
\subsubsection{Crank-Nicolson finite difference method}
The {\sl conservative Crank-Nicolson finite difference} (CNFD)
discretization of  GPE (\ref{eq:sdge1d:sec4}) reads \cite{Chang,Glassey,BaoJinP2,BaoCai2,BaoCai}
\begin{equation}\label{eq:cnfd:sec4}
i\delta^+_t\psi_{j}^n=\left[-\frac{1}{2}\delta_x^2+V_{j}+
\frac{\beta}{2}(|\psi_{j}^{n+1}|^2+|\psi_{j}^n|^2)\right]
\psi_{j}^{n+1/2}, \quad j\in\calT_M,\ n\ge0,
 \end{equation}
where
\begin{equation*}
V_{j}=V(x_j),\qquad \psi_{j}^{n+1/2}=\frac
12\left(\psi_{j}^{n+1}+\psi_{j}^n\right), \qquad j\in\calT_M^0, \quad n=0,1,2,\ldots.
\end{equation*}
The boundary condition (\ref{eq:sdgd01d:sec4}) is discretized as
 \be \label{eq:bound1:sec4}
\psi^{n+1}_{0}=\psi^{n+1}_{M}=0,  \qquad
  n=0,1,\ldots\;,
 \ee
and the initial condition (\ref{eq:sdgi1d:sec4}) is discretized as
 \be \label{eq:init1:sec4}
\psi^0_{j}=\psi_0(x_j), \qquad j\in \calT_M^0.
 \ee
The above CNFD method conserves the mass
and energy in the discretized level. However, it is a fully implicit
method, i.e., at each time step, a fully nonlinear system must be
solved, which may be very expensive, especially in 2D and 3D. In
fact, if the fully nonlinear system is not solved numerically to
extremely high accuracy, e.g., at machine accuracy, then the mass and
energy of the numerical solution obtained in practical computation
are no longer conserved \cite{BaoCai2}. This motivates us also consider the
following semi-implicit discretization for the GPE.
\subsubsection{Semi-implicit finite difference method}
The {\sl semi-implicit finite difference} (SIFD) discretization  of
the GPE (\ref{eq:sdge1d:sec4}), is to use Crank-Nicolson/leap-frog schemes for
discretizing linear/nonlinear terms, respectively, as \cite{BaoCai2,BaoCai}
\begin{equation}\label{eq:sifd:sec4}
i\delta_t
\psi_{j}^n=\left[-\frac{1}{2}\delta_x^2+V_{j}\right]\frac{\psi_{j}^{n+1}+\psi_{j}^{n-1}}{2}
+\beta|\psi_{j}^{n}|^2\psi_{j}^{n},\qquad j\in
\calT_M,\quad n\ge1.
\end{equation}
Again, the boundary condition (\ref{eq:sdgd01d:sec4}) and initial condition
(\ref{eq:sdgi1d:sec4}) are discretized in (\ref{eq:bound1:sec4}) and (\ref{eq:init1:sec4}),
respectively. In addition, the first step can be computed by any
explicit second or higher order time integrator, e.g., the
second-order modified Euler method, as
\begin{eqnarray}\label{eq:sifd1:sec4}
&&\psi_{j}^1=\psi_{j}^0-i\tau\left[\left(-\frac{1}{2}\delta_
x^2+V_{j}\right)\psi_{j}^{(1)} +\beta
|\psi_{j}^{(1)}|^2\psi_{j}^{(1)}\right],\qquad j\in \calT_M,\qquad \\
&&\qquad\qquad
\psi_{j}^{(1)}=\psi_{j}^0-i\frac{\tau}{2}\left[\left(-\frac{1}{2}\delta_x^2+V_{j}\right)\psi_{j}^0
+\beta|\psi_{j}^{0}|^2\psi_{j}^{0}\right]. \nonumber
\end{eqnarray}

For this SIFD method, at each time step, only a linear system is to
be solved, which is much more cheaper than that of the CNFD method
in practical computation.

\subsection{Simplified methods for symmetric potential and initial data}
Similar to  section \ref{subsec:sympotgs}, when the potential $V(\bx)$ ($\bx\in\Bbb R^d$, $d=1,2,3$) and the initial data $\psi_0$ are symmetric, then the solution of the GPE (\ref{eq:gpe:sec2}) is symmetric.
In such cases, simplified numerical methods, especially with less memory requirement, are
available for computing the dynamics of GPE.

\subsubsection{Radial symmetry in 1D, 2D and 3D} When  potential $V(\bx)$ and initial data $\psi(\bx,0)=\psi_0(\bx)$ are radially symmetric for $d=1,2$ and spherically symmetric for $d=3$, the problem is reduced to 1D. In such case, the GPE (\ref{eq:gpe:sec2}) collapse to the 1D equation (\ref{eq:gperadial:sec3}) for $\psi:=\psi(r,t)$ with $r=|\bx|\ge0$.

In practical computation, the equation (\ref{eq:gperadial:sec3}) is approximated  on a finite
interval. Since the  wave function  vanishes  as $r\to\infty$, choosing $R>0$ sufficiently
large,  equation (\ref{eq:gperadial:sec3}) can be approximated by
\be
\label{eq:gperadial:sec4}
i\p_t\psi(r,t) =  -\frac{1}{2r^{d-1}}\frac{\p}{\p r}\left(r^{d-1}\frac{\p}{\p r}\psi\right)+V(r)\psi
+\bt|\psi|^2\psi,\quad 0<r<R,
\ee
with boundary conditions
\be\label{eq:boundaryradial:sec4}
\p_r\psi(0,t)=0,\qquad \psi(R,t)=0, \qquad t\ge0,
\ee
and initial condition
\be\label{eq:iniradial:sec4}
\psi(r,0)=\psi_0(r), \qquad 0\le r\le R.
\ee
For this 1D problem, an efficient and accurate method is the time-splitting finite difference (TSFD) method.
Choose time steps as (\ref{eq:mesh1d:sec3}) and mesh size  $\Delta r=2R/(2M+1)$ for some integer $M>0$. We use the same notations for the grid points as (\ref{eq:meshradial:sec3}), and let $\psi_{j+\frac12}^n$ be the numerical approximation of $\psi(r_{j+\frac12},t_n)$ and $\psi^{n}$ be the solution vector at time $t=t_n$ with
components $\psi_{j+\frac12}^{n}$. The time-splitting finite difference (TSFD)  discretization for (\ref{eq:gperadial:sec4}) reads
\begin{align}\label{eq:tsfdsim:sec4}
&\psi^{(1)}_{j+\frac{1}{2}}=e^{-i\tau \beta |\psi_{j+\frac{1}{2}}^n|^2/2}\;\psi_{j+\frac{1}{2}}^n, \qquad  j\in\calT_M^0, \nn\\
&i\frac{\psi_{j+\frac{1}{2}}^{(2)}-\psi^{(1)}_{j+\frac{1}{2}}}{\tau}=
\left[-\frac{1}{2}\delta_{r,d}^2 +V(r_{j+\frac{1}{2}})\right]\frac{\psi_{j+\frac{1}{2}}^{(2)}+\psi^{(1)}_{j+\frac{1}{2}}}{2},
\quad 0\le j\le M-1, \nn\\
&\psi_{-\frac{1}{2}}^{(2)}=\psi_{\frac{1}{2}}^{(2)}, \qquad \psi_{M+\frac{1}{2}}^{(2)}=0,
\qquad \psi_{j+\frac{1}{2}}^0=\psi_0(r_{j+\frac{1}{2}}), \qquad j\in\calT_M^0, \nn\\
&\psi^{n+1}_{j+\frac{1}{2}}=e^{-i\tau \beta |\psi_{j+\frac{1}{2}}^{(2)}|^2/2}\;\psi_{j+\frac{1}{2}}^{(2)},
 \qquad j\in\calT_M^0, \quad n\ge0.
\end{align}
This method is second-order accurate in space and time, unconditionally stable and it conserves the normalization in the discretized level. The memory cost is $O(M)$ and computational cost is $O(M)$ per time step,
which save significantly from $O(M^d)$ and $O(M^d\ln M)$, respectively,
in 2D and 3D when the Cartesian coordinates is used by TSSP.

\subsubsection{Cylindrical symmetry in 3D} For $\bx=(x,y,z)^T\in\Bbb R^3$, when $V$  and $\psi_0$ are cylindrically symmetric, i.e., $V$ and $\psi_0$ are of the form $V(r,z)$ and $\psi_0(r,z)$ ($r=\sqrt{x^2+y^2}$), respectively, the Cauchy problem of the GPE (\ref{eq:gpe:sec2}) is reduced to 2D.  Due to the symmetry, the GPE (\ref{eq:gpe:sec2}) essentially
collapses to the 2D problem  (\ref{eq:gpecylin:sec3}) with $r\in(0,+\infty)$ and $z\in\Bbb R$ for $\psi:=\psi(r,z,t)$.

In practice, the GPE (\ref{eq:gpecylin:sec3}) is truncated on a bounded domain. Choosing $R>0$ and $Z_1<Z_2$ with $|Z_1|$, $|Z_2|$ and $R$ sufficiently
large, then Eq. (\ref{eq:gpecylin:sec3}) can be approximated for $(r,z)\in(0,R)\times(Z_1,Z_2)$ as,
\be
\label{eq:gpecylin:sec4}
i\p_t\psi(r,z,t) =  -\frac12
\left[\frac{1}{r}\frac{\p}{\p r}\left(r\frac{\p\psi}{\p r}\right)+\frac{\p^2\psi}{\p z^2}\right]+\left(V(r,z)
+\bt|\psi|^2\right)\psi,
\ee
with boundary condition
\be\label{eq:boundarycylin:sec4}
\frac{\p\psi(0,z,t)}{\p r}=0,\; \psi(R,z,t)=\psi(r,Z_1,t)=\psi(r,Z_2,t)=0,
\quad z\in[Z_1,Z_2],\; r\in[0,R],
\ee
and initial condition
\be\label{eq:inicylin:sec4}
\psi(r,z,0)=\psi_0(r,z),\qquad z\in[Z_1,Z_2],\quad r\in[0,R].
\ee
Similar to section \ref{subsec:sympotgs}, choose time steps as (\ref{eq:mesh1d:sec3}) and $r$- grid points (\ref{eq:meshradial:sec3}) for positive integer $M>0$. For integer $N>0$, choose mesh size $\Delta z=(Z_2-Z_1)/N$ and define $z$- grid points $z_k=Z_1+k\Delta z$ for $k\in\calT_N^0$. Let $\psi_{j+\frac12\,k}^n$ be the numerical approximation of $\psi(r_{j+\frac12},z_{k},t_n)$ and $\psi^{n}$ be the solution vector at time $t=t_n$ with
components $\psi_{j+\frac12\,k}^{n}$.
Then the time-splitting finite difference (TSFD)  discretization for (\ref{eq:gpecylin:sec4}) reads
\begin{align}\label{eq:tsfdcyl:sec4}
&\psi^{(1)}_{j+\frac{1}{2}\,k}=e^{-i\tau[V(r_{j+\frac{1}{2}},z_k)+ \beta |\psi_{j+\frac{1}{2}\,k}^n|^2]/2}\;\psi_{j+\frac{1}{2}\,k}^n, \qquad (j,k)\in \calT_{MN}^0, \nn \\
&i\frac{\psi_{j+\frac{1}{2}\,k}^{(2)}-\psi^{(1)}_{j+\frac{1}{2}\,k}}{\tau}=
-\frac{1}{4}(\delta_{r}^2 +\delta_z^2)\left(\psi_{j+\frac{1}{2}\,k}^{(2)}+\psi^{(1)}_{j+\frac{1}{2}\,k}\right),
\qquad (j,k)\in \calT_{MN}^*,\nn \\
&\psi_{-\frac{1}{2}\,k}^{(2)}=\psi_{\frac{1}{2}\,k}^{(2)}, \ \psi_{M+\frac{1}{2}\,k}^{(2)}=\psi_{j+\frac{1}{2}\,0}^{(2)}=\psi_{j+\frac{1}{2}\,N}^{(2)}=0,
 \ (j,k)\in \calT_{MN}^0,\nn \\
&\psi^{n+1}_{j+\frac{1}{2}\,k}=e^{-i\tau [V(r_{j+\frac{1}{2}},z_k)+\beta |\psi_{j+\frac{1}{2}\,k}^{(2)}|^2]/2}\;\psi_{j+\frac{1}{2}\,k}^{(2)}, \quad (j,k)\in \calT_{MN}^0,
\end{align}
with $\ \psi_{j+\frac{1}{2}\,k}^0=\psi_0(r_{j+\frac{1}{2}},z_k)$ for $(j,k)\in \calT_{MN}^0$.
This method is  second-order accurate in space and time, unconditionally stable and it conserves the normalization in the discretized level. The memory cost is $O(MN)$ and computational cost is $O(MN\ln M)$ per time step,
which save significantly from $O(M^2N)$ and $O(M^2N\ln M)$, respectively,
 when the Cartesian coordinates is used by TSSP.

\subsection{Error estimates   for SIFD and CNFD}
In this section, we present the error estimates for CNFD (\ref{eq:cnfd:sec4}) and SIFD (\ref{eq:sifd:sec4}). The notations for finite difference operators, related finite dimensional vector space and norms are given in section \ref{subsec:BEFD}. Specially, for real valued nonnegative potential $V(x)$, we define the corresponding discrete weighted $l^2$ norm for $u=(u_0,u_1,\cdots,u_M)^T\in X_M$ as:
\be
\|u\|_V^2=h\sum\limits_{j=0}^{M-1}V_j|u_j|^2.
\ee

In the remaining part of this section, we use the notation $p
\lesssim q$ to represent that there exists a generic constant $C$
which is independent of time step $\tau$ and mesh size $h$ such that
$|p|\le C\,q$.

To state the error bounds, we define the `error' function $e^n\in X_M$ as
\be e_{j}^n =
\psi(x_j,t_n)-\psi_{j}^n, \qquad j\in \calT_M^0, \quad
n\ge0,\ee
where $\psi:=\psi(x,t)$ is the exact solution of the GPE (\ref{eq:sdge1d:sec4}) and $\psi_j^n$ is the numerical solution.
We make the following assumption
on the exact solution $\psi$, i.e., let $0<T<T_{\rm max}$ with
$T_{\rm max}$ the maximal existing time of the solution \cite{Cazenave,Sulem}:
\be \label{assum:fd:sec4}
\psi\in
C^3([0,T];W^{1,\infty})\cap C^2([0,T]; W^{3,\infty})
\cap C^0([0,T];W^{5,\infty}\cap
H_0^1),
\ee
where the spatial norms are taken in the interval $U=(a,b)$.

For the SIFD method, we have \cite{BaoCai2,BaoCai}
\begin{theorem}\label{thm:sifd:sec4} Assuming that
(\ref{assum:fd:sec4}) holds and that $V(x)\in C^1(\bar{U})$, there exist $h_0>0$ and $0<\tau_0\leq\frac{1}{4}$
sufficiently small, when $0<h\le h_0$ and $0<\tau\le \tau_0\leq\frac14$,  we
have the following  error estimates for the SIFD method
(\ref{eq:sifd:sec4}) with (\ref{eq:bound1:sec4}), (\ref{eq:init1:sec4}) and
(\ref{eq:sifd1:sec4})
 \be\label{eq:sifdes:sec4} \|e^n\|_2\lesssim h^2+\tau^2,\quad\|\delta_x^+
e^n\|_2\lesssim h^{3/2}+\tau^{3/2}, \quad \|\psi^n\|_\infty\leq 1+M_1,
\quad 0\le n\le \frac{T}{\tau}, \ee where
 $M_1=\max_{0\le t\le T}\|\psi(\cdot,t)\|_{L^\infty(U)}$.
In addition, if either  $\p_{n}V(x)|_{\Gamma}=0$ ($\Gamma=\p U$, $\p_{n}$ is the outer normal derivative) or $\psi\in C^0([0,T];H^2_0(U))$, we have the optimal
error estimates
\be\label{eq:sifdo:sec4}
\|e^n\|_2+ \|\delta_x^+e^n\|_2\lesssim h^{2}+\tau^{2},\quad \|\psi^n\|_\infty\leq 1+M_1,\quad 0\le n\le \frac{T}{\tau}.
\ee
\end{theorem}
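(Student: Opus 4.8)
The plan is to prove the result by the classical route of \emph{consistency plus discrete stability}, but with the twist that the nonlinearity forces the stability estimate and an $L^\infty$ bound on the numerical solution to be established simultaneously by induction on $n$. First I would plug the exact solution $\psi(x_j,t_n)$ into the SIFD scheme (\ref{eq:sifd:sec4}) and define the local truncation error $\xi^n\in X_M$. Taylor expanding in $t$ about $t_n$ and in $x$ about $x_j$ and invoking the regularity assumption (\ref{assum:fd:sec4}), one gets $\|\xi^n\|_2\lesssim h^2+\tau^2$ from the interior; the only loss comes from the grid points adjacent to $\Gamma$, where the centered operator $\delta_x^2$ reproduces $\partial_{xx}\psi$ to lower order in the discrete $H^1$ seminorm unless the odd-order $x$-derivatives of $\psi$ vanish on $\Gamma$. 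This boundary mismatch is precisely what degrades the generic bound for $\|\delta_x^+e^n\|_2$ to $h^{3/2}+\tau^{3/2}$, and the extra hypotheses $\partial_n V|_\Gamma=0$ or $\psi\in C^0([0,T];H_0^2(U))$ kill that boundary contribution and restore the optimal order. The start-up value $\psi^1$ from the modified Euler step (\ref{eq:sifd1:sec4}) is handled separately: being a second-order one-step integrator it gives $\|e^1\|_2\lesssim h^2+\tau^2$ and $\|\delta_x^+e^1\|_2\lesssim h^{3/2}+\tau^{3/2}$ directly.

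Next I would form the error equation by subtracting (\ref{eq:sifd:sec4}) from the scheme satisfied by the exact solution up to $\xi^n$, obtaining
$$i\delta_t e^n=\Big[-\tfrac12\delta_x^2+V_j\Big]\frac{e^{n+1}+e^{n-1}}{2}+\beta\,\eta_j^n+\xi_j^n,$$
with $\eta_j^n=|\psi(x_j,t_n)|^2\psi(x_j,t_n)-|\psi_j^n|^2\psi_j^n$. The key algebraic fact is that once both $\|\psi(\cdot,t_n)\|_\infty$ and $\|\psi^n\|_\infty$ are $O(1)$, the elementary inequality $\big||a|^2a-|b|^2b\big|\le C(|a|^2+|b|^2)\,|a-b|$ gives $\|\eta^n\|_2\lesssim\|e^n\|_2$ with a constant depending only on $M_1$ and on the not-yet-controlled bound for $\|\psi^n\|_\infty$. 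Taking the discrete inner product of the error equation with $e^{n+1}+e^{n-1}$ and extracting imaginary parts, the self-adjoint real operator $-\tfrac12\delta_x^2+V_j$ drops out and summation by parts yields a two-step recursion $\tfrac1{2\tau}\big(\|e^{n+1}\|_2^2-\|e^{n-1}\|_2^2\big)\lesssim\|e^{n+1}\|_2^2+\|e^n\|_2^2+\|e^{n-1}\|_2^2+\|\xi^n\|_2^2$. A parallel computation pairing the $\delta_x^+$-differenced error equation with $\delta_x^+e^{n+1}+\delta_x^+e^{n-1}$ produces the analogous recursion for $\|\delta_x^+e^n\|_2^2$; there one also meets $\delta_x^+V_j$ (bounded since $V\in C^1$) and $\delta_x^+\eta_j^n$, which by the discrete product rule brings in $\delta_x^+e^n$ and a factor $\|\delta_x^+\psi\|_\infty$, so the $W^{1,\infty}$-regularity in (\ref{assum:fd:sec4}) is used here.

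The main obstacle is the circular dependence between these energy recursions, which require $\|\psi^n\|_\infty$ bounded, and the bound on $\|\psi^n\|_\infty$, which comes out of the error estimate. I would break the circle by induction on $n$: assume $\|e^m\|_2+\|\delta_x^+e^m\|_2\lesssim h^{3/2}+\tau^{3/2}$ for all $m\le n$. The one-dimensional discrete Sobolev inverse inequality $\|e^m\|_\infty\lesssim h^{-1/2}\|\delta_x^+e^m\|_2\lesssim h+h^{-1/2}\tau^{3/2}$ then forces $\|\psi^m\|_\infty\le 1+M_1$ for $m\le n$ once $h_0$ and $\tau_0$ are chosen small, so the nonlinear terms in both recursions are now controlled with a \emph{uniform} constant (this choice of $\tau_0$ also absorbs the mild explicit-in-time treatment of $\beta|\psi^n|^2\psi^n$, which is only a zeroth-order perturbation and hence needs no CFL restriction). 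Feeding this into the two recursions and applying the discrete Gronwall inequality, summed from the start-up step, promotes both bounds to step $n+1$, closing the induction and simultaneously proving $\|\psi^n\|_\infty\le 1+M_1$.

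Finally, for the optimal estimate (\ref{eq:sifdo:sec4}), the only modification is that under $\partial_n V|_\Gamma=0$ or $\psi\in C^0([0,T];H_0^2(U))$ the boundary part of $\|\xi^n\|$ entering the $\delta_x^+$-recursion becomes $O(h^2+\tau^2)$, so the same Gronwall argument delivers $\|e^n\|_2+\|\delta_x^+e^n\|_2\lesssim h^2+\tau^2$ (and the $L^\infty$ bound a fortiori). The CNFD case (\ref{eq:cnfd:sec4}) is entirely parallel but technically easier: one pairs the error equation with $e^{n+1}+e^n$, the fully implicit Crank--Nicolson treatment of the cubic term makes the scheme unconditionally stable, and the only genuinely new point is a fixed-point/contraction argument in $X_M$ to guarantee that the nonlinear implicit system is uniquely solvable for $\tau$ small, after which the consistency/stability/induction scheme carries over verbatim.
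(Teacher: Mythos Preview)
Your overall architecture---induction on $n$ combining consistency, a discrete stability recursion, and an $L^\infty$ bound via Sobolev embedding to tame the cubic term---is exactly the paper's, and the $L^2$ step (pair with $e^{n+1}+e^{n-1}$, imaginary parts, Gronwall) matches. Two points need correction.

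First, the inequality you invoke, $\|e^m\|_\infty\lesssim h^{-1/2}\|\delta_x^+e^m\|_2$, is wrong in one dimension: for $u\in X_M$ with $u_0=u_M=0$ one has the genuine discrete Sobolev bound $\|u\|_\infty\le C\|\delta_x^+u\|_2$ with \emph{no} negative power of $h$ (write $u_j=h\sum_{k<j}\delta_x^+u_k$ and apply Cauchy--Schwarz). Your version would force $\tau=o(h^{1/3})$ to make $h^{-1/2}\tau^{3/2}$ small, a restriction absent from the theorem; with the correct inequality the induction closes for all small $h,\tau$ independently. The $h^{-1/2}$ factor you have in mind is the 3D analogue (cf.\ Remark~\ref{rmk:extension:sec4}).

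Second, your route to the semi-$H^1$ bound---applying $\delta_x^+$ to the error equation and pairing with $\delta_x^+e^{n+1}+\delta_x^+e^{n-1}$---differs from the paper and has a boundary issue you do not address: the differenced equation holds only for $j=1,\dots,M-2$, so the resulting sum misses the endpoint contributions to $\|\delta_x^+e\|_2^2$. The paper instead pairs the \emph{undifferenced} error equation with $\overline{e^{n+1}-e^{n-1}}$ and takes \emph{real} parts; summation by parts then directly yields the discrete energy $\mathcal{E}(e^n)=\tfrac12\|\delta_x^+e^n\|_2^2+\|e^n\|_V^2$. The key trick is to substitute $e^{n+1}-e^{n-1}=-2i\tau(\text{truncation}+\text{nonlinear residual}+\chi^n)$ back from the error equation itself, where $\chi^n$ is the linear-operator term; one more summation by parts bounds the cross terms by $\|\delta_x^+(\text{truncation})\|_2$, $\|\delta_x^+(\text{residual})\|_2$ and $\|\delta_x^+e^{n\pm1}\|_2$. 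It is precisely here that the interior bound $|\delta_x^+\eta_j^n|\lesssim h^2+\tau^2$ versus the boundary bound $\lesssim h+\tau$ at $j=0,M-1$ enters, producing the $(h^2+\tau^2)^{3/2}$ versus $(h^2+\tau^2)^2$ dichotomy in the accumulated $H^1$ error and hence the $3/2$-order versus optimal-order conclusion.
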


Similarly, for the CNFD method, we have \cite{BaoCai2,BaoCai}

\begin{theorem}\label{thm:cnfd:sec4}
Assuming that (\ref{assum:fd:sec4}) holds and that $V(x)\in C^1(\bar{U})$, there exists $h_0>0$ and $\tau_0>0$ sufficiently small, when
$0<h\le h_0$ and $0<\tau\le \tau_0$,  the CNFD discretization (\ref{eq:cnfd:sec4}) with (\ref{eq:bound1:sec4})
and (\ref{eq:init1:sec4}) admits a unique solution $\psi^n$ ($0\le n\le
\frac{T}{\tau}$) such that the following  error estimates hold,
 \be \label{eq:cnfdes:sec4}\|e^n\|_2\lesssim h^2+\tau^2,\quad\|\delta_x^+ e^n\|_2\lesssim
h^{3/2}+\tau^{3/2},\quad \|\psi^n\|_\infty\leq 1+M_1, \quad 0\le n\le \frac{T}{\tau}.\ee
In addition, if either  $\p_{n}V(x)|_{\Gamma}=0$ or $\psi\in C^0([0,T];H^2_0(U))$, we have the optimal
error estimates
\be\label{eq:cnfdo:sec4}
\|e^n\|_2+ \|\delta_x^+e^n\|_2\lesssim h^{2}+\tau^{2},\quad \|\psi^n\|_\infty\leq 1+M_1,\quad 0\le n\le \frac{T}{\tau}.
\ee
\end{theorem}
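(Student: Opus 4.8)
The plan is to prove both theorems (SIFD and CNFD are parallel; I treat CNFD) by a discrete energy method combined with a bootstrap, i.e.\ mathematical induction on the time level $n$, that simultaneously propagates the error bounds and the a priori $\ell^\infty$ control of $\psi^n$ needed to treat the cubic nonlinearity as Lipschitz. Let $\Psi_j^n:=\psi(x_j,t_n)$ be the exact solution on the grid and define the local truncation error $\xi^n\in X_M$ by inserting $\Psi^n$ into (\ref{eq:cnfd:sec4}):
\[
i\delta_t^+\Psi_j^n=\Big[-\tfrac12\delta_x^2+V_j+\tfrac{\beta}{2}\big(|\Psi_j^{n+1}|^2+|\Psi_j^n|^2\big)\Big]\Psi_j^{n+1/2}+\xi_j^n,\qquad j\in\calT_M .
\]
A Taylor expansion in $t$ and $x$ together with the regularity hypothesis (\ref{assum:fd:sec4}) gives the consistency bound $\|\xi^n\|_2\lesssim h^2+\tau^2$, and, up to a contribution concentrated near $x=a,b$, also $\|\delta_x^+\xi^n\|_2\lesssim h^2+\tau^2$. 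Subtracting (\ref{eq:cnfd:sec4}) yields the error equation for $e^n=\Psi^n-\psi^n$, which has the same Crank--Nicolson linear structure plus a nonlinear difference term; under a running bound $\|\psi^n\|_\infty,\|\psi^{n+1}\|_\infty\le 1+M_1$ this term is Lipschitz in $(e^n,e^{n+1})$ with constant depending only on $M_1$ and $\beta$. The base case of the induction is trivial since $e^0=0$.

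For the inductive step I would first settle existence and local uniqueness of the fully nonlinear CNFD step: write (\ref{eq:cnfd:sec4}) as a fixed-point equation $\psi^{n+1}=\mathcal{G}(\psi^{n+1})$ on $X_M$, check that $\mathcal{G}$ maps a suitable closed ball into itself and is a contraction there once $\tau\le\tau_0$ with $\tau_0$ depending only on $M_1,\beta$; Brouwer's theorem gives existence and the contraction gives uniqueness in that ball. Next comes the $L^2$ estimate: take the discrete inner product $\langle\cdot,\cdot\rangle$ of the error equation with $e^{n+1/2}=\tfrac12(e^{n+1}+e^n)$ and extract the imaginary part. The self-adjoint operator $-\tfrac12\delta_x^2+V_j$ contributes a real quantity and drops out, while the nonlinear term is absorbed by the Lipschitz bound, leaving
\[
\tfrac{1}{2\tau}\big(\|e^{n+1}\|_2^2-\|e^n\|_2^2\big)\lesssim \|e^{n+1}\|_2^2+\|e^n\|_2^2+\|\xi^n\|_2^2 .
\]
Discrete Gronwall over $0\le n\le T/\tau$, with $e^0=0$, yields $\|e^n\|_2\lesssim h^2+\tau^2$ with a constant proportional to $e^{CT}$ that is uniform in $n,h,\tau$.

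For the $H^1$ bound I would apply $\delta_x^+$ to the error equation and run the same energy argument on $\delta_x^+e^n$, getting a telescoping inequality for $\|\delta_x^+e^{n+1}\|_2^2$; the potential and nonlinear contributions are absorbed using the $L^2$ bound already proved and the running $\ell^\infty$ bound, and the consistency term is controlled by $\|\delta_x^+\xi^n\|_2$. The subtlety is the boundary terms produced by the discrete summation by parts: since $\delta_x^+e^n$ does not vanish at $\Gamma$, these survive, and combined with the boundary behaviour of the truncation error of $\delta_x^2$ they cost half an order, giving the generic estimates $\|e^n\|_2\lesssim h^2+\tau^2$, $\|\delta_x^+e^n\|_2\lesssim h^{3/2}+\tau^{3/2}$. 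Under the extra hypothesis $\p_n V|_\Gamma=0$ (or $\psi(\cdot,t)\in H^2_0(U)$, i.e.\ $\p_x\psi|_\Gamma=0$), differentiating the GPE twice in $x$ and using $\psi|_\Gamma=0$ forces a higher-order compatibility relation at $\Gamma$ that cancels the offending boundary term, so $\|\delta_x^+\xi^n\|_2\lesssim h^2+\tau^2$ and the optimal estimate $\|e^n\|_2+\|\delta_x^+e^n\|_2\lesssim h^2+\tau^2$ follows. The induction then closes: the one-dimensional discrete Sobolev inequality $\|e^n\|_\infty\lesssim\|e^n\|_2^{1/2}\big(\|\delta_x^+e^n\|_2+\|e^n\|_2\big)^{1/2}$ gives $\|e^n\|_\infty=o(1)$ uniformly in $n\le T/\tau$, whence $\|\psi^n\|_\infty\le\|\psi(\cdot,t_n)\|_\infty+\|e^n\|_\infty\le M_1+1$ for $h\le h_0$, $\tau\le\tau_0$, validating the running bound used to pass from level $n$ to $n+1$.

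The main obstacle is exactly this circularity: the error bound is needed to get the $\ell^\infty$ control of $\psi^n$, and that control is in turn needed to estimate the nonlinearity in the error equation. It is broken only because the Gronwall constant is independent of $n,h,\tau$, so the smallness of $\|e^n\|_\infty$ is uniform over all $n\le T/\tau$; the induction must be set up so that the ``threshold'' $h_0,\tau_0$ depends solely on $T$, $M_1$, $\beta$ and the norms in (\ref{assum:fd:sec4}). A secondary technical point, carrying the difference between the two sets of estimates, is the careful isolation of the near-boundary part of the truncation error of $\delta_x^2$ and the boundary terms in the discrete integration by parts, which is what distinguishes the generic $h^{3/2}+\tau^{3/2}$ rate from the optimal $h^2+\tau^2$ rate.
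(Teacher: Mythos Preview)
Your overall strategy---bootstrap induction with a fixed-point step at each level---is essentially the argument the paper uses for the \emph{SIFD} scheme, where the cubic term involves only $|\psi^n|^2$ and the induction hypothesis $\|\psi^n\|_\infty\le 1+M_1$ immediately makes the nonlinearity Lipschitz. For CNFD the paper takes a genuinely different route: it replaces the cubic $|\psi|^2\psi$ by $f_B(|\psi|^2)\psi$ with $f_B(s)=\rho(s/B)\,s$, $B=(M_1+1)^2$, $\rho\in C_0^\infty$, defines an auxiliary scheme $\phi^n$, and proves the error bounds for $\hat e^n=\psi(\cdot,t_n)-\phi^n$ \emph{directly}. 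Because $f_B$ is globally Lipschitz the estimates $\|\hat\xi^n\|_2\lesssim\|\hat e^n\|_2+\|\hat e^{n+1}\|_2$ hold with no a priori $\ell^\infty$ control whatsoever; once $\|\hat e^n\|_2+\|\delta_x^+\hat e^n\|_2\lesssim h^2+\tau^2$ is in hand, discrete Sobolev gives $\|\phi^n\|_\infty\le 1+M_1$, the cutoff becomes inactive, and $\phi^n=\psi^n$. No induction is needed, and---this is the point the paper stresses---the argument transfers to 2D and 3D with only a mild logarithmic (resp.\ $h^{-1/2}$) loss in the Sobolev embedding.

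The specific weakness in your proposal is the circularity you yourself flag: for CNFD the nonlinear difference depends on $\psi^{n+1}$, so the Lipschitz bound requires $\|\psi^{n+1}\|_\infty\le 1+M_1$, which is part of the conclusion at level $n+1$. Your fixed-point step is meant to break this, but as written it is not clear in which norm the ``suitable closed ball'' lives. A ball in $\ell^2$ does not control $\ell^\infty$ uniformly in $h$; to get an $h$-independent $\tau_0$ you would have to run the contraction in a discrete $H^1$-type norm and check that $(I+\tfrac{i\tau}{2}L)^{-1}$ is uniformly bounded there, which you have not done. In 1D this can be pushed through, but it is exactly the extra work the paper's cutoff device avoids. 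A secondary difference: for the $H^1$ bound the paper does \emph{not} apply $\delta_x^+$ to the error equation (which runs into the boundary problems you anticipate), but instead multiplies the error equation by $\overline{e^{n+1}-e^n}$ and takes the real part, producing a telescoping identity in the discrete energy $\mathcal E(e^n)=\tfrac12\|\delta_x^+e^n\|_2^2+\|e^n\|_V^2$; the half-order loss then comes from $\|\delta_x^+\hat\eta^n\|_2$ near $j=0,M-1$, exactly as in the SIFD analysis.
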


Error bounds of conservative CNFD method for NLSE in 1D (without potential $V(x)$) was established in
\cite{Chang,Glassey}. In fact, their proofs for CNFD rely strongly
on the conservative property of the method and the discrete version
of the Sobolev inequality in 1D
\be\|f\|_{L^\infty}^2\leq \|\nabla f\|_{L^2}\cdot \|f\|_{L^2}, \qquad
\forall f\in H^1(U) \ \hbox{with}\ U\subset {\Bbb R}, \ee which
immediately implies {\it a priori} uniform bound for
$\|f\|_{L^\infty}$. However, the extension of the discrete version
of the above Sobolev inequality is no longer valid in 2D and 3D.
Thus the techniques used in \cite{Chang,Glassey} for obtaining error
bounds of CNFD for NLSE only work for conservative schemes in 1D and
they cannot be extended to either high dimensions or
non-conservative finite difference scheme like SIFD.

Here, we are going  to use different techniques to establish
optimal error bounds of CNFD and SIFD for the GPE (\ref{eq:gpe:sec2}) in 1D
which can be directly generalized to 2D and 3D. In the
analysis, besides the standard techniques of the energy method, for
SIFD, we adopt the mathematical induction; for CNFD, we cut off the nonlinearity.
\subsubsection{Convergence rate for SIFD}
Firstly,  SIFD (\ref{eq:sifd:sec4}) is uniquely solvable \cite{BaoCai}.
\begin{lemma}\label{lem:sifdex:sec4} (Solvability of the difference equations)
 Under the assumption (\ref{assum:fd:sec4}),
for any given initial data $\psi^0\in X_M$, there exists a unique
solution $\psi^n\in X_M$ of (\ref{eq:sifd1:sec4}) for $n=1$ and
(\ref{eq:sifd:sec4}) for $n>1$.
\end{lemma}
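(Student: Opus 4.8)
The statement to prove is Lemma \ref{lem:sifdex:sec4}: the SIFD scheme (\ref{eq:sifd:sec4}) (together with the modified-Euler initializer (\ref{eq:sifd1:sec4}) for $n=1$) is uniquely solvable for $\psi^n\in X_M$ at every step.

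\textbf{Overall approach.} The key observation is that SIFD is a \emph{semi}-implicit, in fact \emph{linearly} implicit, scheme: the nonlinear term $\beta|\psi_j^n|^2\psi_j^n$ only involves the already-known level $\psi^n$, so at each step one solves a \emph{linear} system for the unknown vector $\psi^{n+1}\in X_M$. Concretely, rewriting (\ref{eq:sifd:sec4}) gives, for $n\ge1$,
\be
\left(\frac{i}{\tau}\,I+\frac14\delta_x^2-\frac12 D_V\right)\psi^{n+1}=\text{(known data at levels }n,n-1),
\ee
where $D_V=\mathrm{diag}(V_0,\dots,V_M)$ acts on $X_M$ and $\delta_x^2$ denotes the discrete Laplacian with homogeneous Dirichlet conditions built in. Thus unique solvability amounts to showing the $(M-1)\times(M-1)$ matrix $A_\tau:=\frac{i}{\tau}I+\frac14\delta_x^2-\frac12 D_V$ is invertible on $X_M$. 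For the first step, (\ref{eq:sifd1:sec4}) is fully explicit: $\psi_j^{(1)}$ is given by an explicit formula in $\psi^0$, and then $\psi^1$ is given by an explicit formula in $\psi^0$ and $\psi^{(1)}$, so existence and uniqueness there are immediate and require no argument beyond noting the right-hand sides are well defined.

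\textbf{Key steps.} First I would record the matrix form above and note that $B:=-\frac14\delta_x^2+\frac12 D_V$ is a real symmetric matrix on $X_M$ (the discrete Laplacian $-\delta_x^2$ with Dirichlet boundary conditions is symmetric, indeed positive definite, and $D_V$ is symmetric since $V$ is real-valued). Hence $A_\tau=\frac{i}{\tau}I-B$ with $B=B^{*}$. Second, I would show $A_\tau$ is invertible by a standard energy/coercivity argument: if $A_\tau v=0$ for some $v\in X_M$, take the discrete inner product with $v$ to get $\frac{i}{\tau}\|v\|_2^2=\langle Bv,v\rangle$; since $B$ is symmetric real, $\langle Bv,v\rangle\in\mathbb R$, while the left side is purely imaginary, forcing $\|v\|_2^2=0$, i.e. $v=0$. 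Therefore $A_\tau$ is injective on the finite-dimensional space $X_M$, hence bijective, so $\psi^{n+1}$ exists and is unique. Third, I would dispatch the $n=1$ case as above — purely explicit formulas — and observe that no smallness of $\tau$ or $h$ is actually needed for solvability (the hypothesis (\ref{assum:fd:sec4}) only guarantees the exact solution is smooth enough to make the scheme a meaningful approximation; here it merely ensures $\psi^0=\psi_0(x_j)$ and all data are finite).

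\textbf{Main obstacle.} Honestly, there is no serious obstacle: the lemma is easy precisely because SIFD is linearly implicit, and the only care needed is bookkeeping — writing $\delta_x^2$ as an operator on the boundary-constrained space $X_M$ correctly so that the symmetry/positivity of the discrete Dirichlet Laplacian can be invoked, and being careful that "purely imaginary $=$ real $\Rightarrow$ zero" argument is applied to $\|v\|_2^2$ and not mistakenly to $v$ componentwise. If one wanted to avoid even the spectral remark, one could instead note directly that $\mathrm{Re}\,\langle A_\tau v,v\rangle = -\langle Bv,v\rangle$ and $\mathrm{Im}\,\langle A_\tau v,v\rangle=\frac1\tau\|v\|_2^2$, so $A_\tau v=0\Rightarrow\|v\|_2=0$; this is the cleanest route and is the one I would write up.
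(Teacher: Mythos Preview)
Your argument is correct and is exactly the standard one: SIFD is linearly implicit, and the system matrix $\frac{i}{\tau}I-B$ with $B=-\frac14\delta_x^2+\frac12 D_V$ real symmetric has trivial kernel by the ``imaginary part $=$ real $\Rightarrow$ zero'' trick, while the first step is purely explicit. The paper itself does not actually supply a proof of this lemma --- it just states the result and cites \cite{BaoCai} --- so your write-up is already more explicit than what appears here; nothing to change.
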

Define the local truncation error $\eta^n\in X_M$ of the SIFD method
(\ref{eq:sifd:sec4}) with (\ref{eq:bound1:sec4}), (\ref{eq:init1:sec4}) and
(\ref{eq:sifd1:sec4}) for $n\ge1$ and $j\in\calT_M$ as
\be\label{eq:etadef:sec4}
\eta_{j}^n:=\left(i\delta_t-\beta|\psi(x_j,t_{n})|^2\right)
\psi(x_j,t_n)+\left[\frac{\delta_x^2}{2}-V_{j}\right]
\frac{\psi(x_j,t_{n-1})+\psi(x_j,t_{n+1})}{2},
\ee
and by noticing (\ref{eq:init1:sec4}) for $n=0$ as
 \bea\label{eq:eta0:sec4}
&&\eta_{j}^0:=i\delta_t^+\psi(x_j,0) -\left(-\frac{1}{2}\delta_x^2
+V_{j}\right)\psi_{j}^{(1)} -\beta
|\psi_{j}^{(1)}|^2\psi_{j}^{(1)},\quad j\in {\calT}_M,\qquad \\
&&\quad
\psi_{j}^{(1)}=\psi_0(x_j)-i\frac{\tau}{2}\left[\left(-\frac{1}{2}\delta_x^2+V_{j}
\right)\psi_0(x_j)
+\beta|\psi_0(x_j)|^2\psi_0(x_j)\right]. \nn\eea
Then we have
\begin{lemma}\label{lem:localerrsifd:sec4} (Local truncation error)
 Assuming $V(\bx)\in C(
\overline{U})$,
under the assumption (\ref{assum:fd:sec4}), we have
 \be\label{eq:tau1:sec4}
\|\eta^n\|_\infty\lesssim \tau^2+h^2, \qquad  0\leq n \leq
\frac{T}{\tau}-1, \quad\mbox{and}\qquad
\|\delta_x^+\eta^0\|_\infty\lesssim \tau+h. \ee
In addition, assuming
 $V(\bx)\in C^1(\overline{U})$,  we have  for  $1\le n \leq \frac{T}{\tau}-1$
\be\label{eq:tauder1:sec4} |\delta_x^+\eta_{j}^n|\lesssim\begin{cases}
\tau^2+h^2,& 1\leq j\leq M-2,\\
  \tau+h, &j=0,M-1.
 \end{cases}
 \ee
Furthermore,  assuming either  $\p_nV(x)|_{\Gamma}=0$ or $u\in C([0,T];H^2_0(U))$, we have
\be\label{eq:tauder:sec4} \|\delta_x^+\eta^n\|_\infty\lesssim
\tau^2+h^2,\qquad 1\le n \leq \frac{T}{\tau}-1. \ee
 \end{lemma}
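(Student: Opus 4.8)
The plan is to expand the exact solution $\psi$ in Taylor series about $(x_j, t_n)$ and substitute it into the definition of $\eta^n_j$, tracking all the remainder terms carefully. First I would observe that $\psi$ solves the PDE (\ref{eq:sdge1d:sec4}) exactly, i.e. $i\p_t\psi(x_j,t_n) = -\frac12\psi_{xx}(x_j,t_n) + V_j\psi(x_j,t_n) + \beta|\psi(x_j,t_n)|^2\psi(x_j,t_n)$, so that $\eta^n_j$ measures precisely the discrepancies between the difference quotients and the continuous derivatives. The two sources of error are: (a) temporal — replacing $\p_t\psi$ by the centered quotient $\delta_t\psi$ and replacing $\psi(x_j,t_n)$ by the average $\frac12(\psi(x_j,t_{n-1})+\psi(x_j,t_{n+1}))$ in the linear (Crank–Nicolson) part; (b) spatial — replacing $\p_{xx}\psi$ by $\delta_x^2\psi$. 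For the $L^\infty$ bound (\ref{eq:tau1:sec4}), standard Taylor expansion gives $\delta_t\psi(x_j,t_n) = \p_t\psi(x_j,t_n) + O(\tau^2\|\p_t^3\psi\|_\infty)$, $\frac12(\psi(x_j,t_{n-1})+\psi(x_j,t_{n+1})) = \psi(x_j,t_n) + O(\tau^2\|\p_t^2\psi\|_\infty)$, and $\delta_x^2\psi(x_j,t_n) = \p_{xx}\psi(x_j,t_n) + O(h^2\|\p_x^4\psi\|_\infty)$; the regularity assumption (\ref{assum:fd:sec4}), specifically $\psi\in C^3([0,T];W^{1,\infty})\cap C^0([0,T];W^{5,\infty})$ with $V\in C(\bar U)$, makes all these remainders bounded, yielding $\|\eta^n\|_\infty\lesssim\tau^2+h^2$. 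The bound $\|\delta_t^+\eta^0\|_\infty \lesssim \tau+h$ for the starting step follows the same way but with the lower-order modified-Euler initialization, losing one power each in $\tau$ and $h$; since we only ever multiply $\eta^0$ by $\tau$ when it enters the global error recursion, this suffices.

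For the discrete-gradient bounds I would apply $\delta_x^+$ to the identity defining $\eta^n_j$ and commute it past each term. The key point is that $\delta_x^+\delta_x^2\psi(x_j,t_n)$, $\delta_x^+\delta_t\psi(x_j,t_n)$, etc., are again difference quotients of spatial/temporal derivatives of $\psi$, so interior Taylor expansion (using $\psi\in C^2([0,T];W^{3,\infty})\cap C^0([0,T];W^{5,\infty})$) gives $|\delta_x^+\eta^n_j|\lesssim\tau^2+h^2$ whenever the stencil $\{x_{j-1},x_j,x_{j+1},x_{j+2}\}$ stays in the interior, i.e. for $1\le j\le M-2$. At the two boundary-adjacent indices $j=0$ and $j=M-1$, the stencil of $\delta_x^+\delta_x^2$ reaches the boundary where only $\psi\equiv 0$ (not its higher derivatives) is controlled, so one loses accuracy there and only gets $|\delta_x^+\eta^n_j|\lesssim\tau+h$; this is (\ref{eq:tauder1:sec4}). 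The term $\delta_x^+(V_j\cdot)$ is where $V\in C^1$ is used: $\delta_x^+(V_j w_j) = (\delta_x^+V_j)w_{j+1} + V_j\delta_x^+w_j$, and $\delta_x^+V_j$ is bounded by $\|V'\|_\infty$.

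Finally, for the uniform gradient bound (\ref{eq:tauder:sec4}) I would improve the boundary estimate from $\tau+h$ to $\tau^2+h^2$. This is the main obstacle: near $x=a$ and $x=b$ one needs extra structure. Under the hypothesis $\p_n V|_\Gamma = 0$, the PDE forces compatibility so that $\p_{xx}\psi$ vanishes appropriately at the boundary (one differentiates the stationary relation and uses that $\psi|_\Gamma=0$ together with $V_x|_\Gamma=0$ to control $\psi_{xxx}|_\Gamma$), which restores the cancellation in the one-sided stencil; alternatively, assuming directly $\psi\in C^0([0,T];H^2_0(U))$ gives $\psi_{xx}|_\Gamma = 0$ outright, and then Taylor expansion with the vanishing boundary values of $\psi$ and $\psi_{xx}$ recovers the $O(h^2)$ bound even for $j=0,M-1$. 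Combining these boundary-corrected estimates with the interior ones of (\ref{eq:tauder1:sec4}) yields $\|\delta_x^+\eta^n\|_\infty\lesssim\tau^2+h^2$ for $1\le n\le T/\tau-1$, completing the proof. The only genuinely delicate bookkeeping is matching the regularity indices in (\ref{assum:fd:sec4}) to exactly the derivatives that appear — $W^{5,\infty}$ in space because $\delta_x^+\delta_x^2$ is effectively a third-order stencil needing fourth differences of a fourth derivative's worth of smoothness, and $C^3$ in time for the centered first-derivative quotient.
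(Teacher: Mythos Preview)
Your overall strategy—Taylor expand, subtract the continuous PDE, and track remainders—is the same as the paper's, and the $L^\infty$ bound together with the interior gradient bound $(1\le j\le M-2)$ are handled correctly. The gap is in your boundary analysis, both for the loss in (\ref{eq:tauder1:sec4}) and for its recovery in (\ref{eq:tauder:sec4}).

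For the loss at $j=0,M-1$: you attribute it to ``only $\psi\equiv 0$ being controlled at the boundary.'' But in fact much more vanishes automatically: from the PDE and the Dirichlet condition one gets $\psi_{tt}|_\Gamma=\psi_{xx}|_\Gamma=\psi_{ttxx}|_\Gamma=0$ without any extra hypothesis (evaluate $i\psi_t=-\tfrac12\psi_{xx}+V\psi+\beta|\psi|^2\psi$ at $x=a$). The paper uses this to show the temporal remainder keeps its full $O(\tau^2)$ order even at the boundary; the degradation to $O(h)$ comes \emph{only} from the spatial term $h^2\psi_{xxxx}(x_1,\cdot)/h$, since $\psi_{xxxx}|_\Gamma$ is in general nonzero.

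For the recovery in (\ref{eq:tauder:sec4}): you say $\psi\in H^2_0$ gives $\psi_{xx}|_\Gamma=0$ and that vanishing of $\psi,\psi_{xx}$ at $\Gamma$ restores $O(h^2)$. Both claims are off. Membership in $H^2_0$ gives $\psi_x|_\Gamma=0$ (not $\psi_{xx}|_\Gamma=0$, which you already have for free), and vanishing of $\psi,\psi_{xx}$ alone is \emph{not} sufficient—that is exactly the situation in which you still only get $O(h)$. What is actually needed is $\psi_{xxxx}|_\Gamma=0$. You obtain this by differentiating the PDE twice in $x$ and evaluating at $\Gamma$: after using $\psi|_\Gamma=\psi_{xx}|_\Gamma=0$, the only surviving term in $(V\psi)_{xx}|_\Gamma$ is $2V_x\psi_x|_\Gamma$, which is killed either by $\partial_n V|_\Gamma=0$ or by $\psi_x|_\Gamma=0$ (the $H^2_0$ assumption). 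The nonlinear term also vanishes since every summand in $(|\psi|^2\psi)_{xx}$ carries a factor of $\psi$ or $\bar\psi$. Once $\psi_{xxxx}|_\Gamma=0$, the boundary estimate improves to $O(h^2)$ as required. Your mention of $\psi_{xxx}|_\Gamma$ is a red herring; the relevant derivative is one order higher.
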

\begin{proof}
 First, we prove (\ref{eq:tau1:sec4}) and
(\ref{eq:tauder1:sec4}) when $n=0$. Rewriting $\psi_{j}^{(1)}$ and then
using Taylor's expansion at $(x_j,0)$, noticing (\ref{eq:sdge1d:sec4}) and
(\ref{eq:sdgi1d:sec4}), we get \begin{align*}
\psi_{j}^{(1)}=&\;\psi\left(x_j,\frac{\tau}{2}\right)+i\frac{\tau}{2}
\left[\left(\frac{\delta_x^2}{2}-V_{j}-\beta|\psi_0(x_j)|^2\right)\psi_0(x_j)+
i\frac{\psi\left(x_j,\frac{\tau}{2}\right)-\psi_0(x_j)}{\tau/2}\right]\nn\\
=&\;\psi\left(x_j,\frac{\tau}{2}\right)+i\frac{\tau}{2}\biggl[
\frac{h^2}{2}\int_0^1\int_{0}^{s_1}\int_0^{s_2}\int_{-s_3}^{s_3}
\p_{xxxx}\psi_0(x_j+s h)\,ds\,ds_3\,ds_2\,ds_1\nn\\
&+i\frac{\tau}{2}\int_0^1\int_0^\theta\p_{tt}\psi(x_j,s\tau/2)\,ds\,d\theta\biggr]
=\psi\left(x_j,\frac{\tau}{2}\right)+O\left(\tau^2+\tau
h^2\right),\quad
 j\in\mathcal{T}_M.\end{align*}
 Then, using  Taylor expansion at
$(x_j,\tau/2)$ in (\ref{eq:eta0:sec4}), noticing (\ref{eq:sdge1d:sec4}), in view of triangle inequality and the assumption (\ref{assum:fd:sec4}), we have
\begin{align*}
|\eta^0_{j}|\lesssim&\;\tau^2\|\psi_{ttt}\|_{L^\infty}+
h\tau\|\psi_{xxxxx}\|_{L^\infty}+\tau^2\|\psi_{ttxx}\|_{L^\infty}+\tau(h^2\|\psi_{xxxx}\|_{L^\infty}
+\tau\|\psi_{tt}\|_{L^\infty})\nn\\
&\cdot(\|\psi\|_{L^\infty}+\tau h^2\|\psi_{xxxx}\|_{L^\infty}+\tau^2\|\psi_{tt}\|_{L^\infty})^2\nn\\
\lesssim&\;\tau^2+h^2, \qquad
 j\in\calT_M,
\end{align*}
where the $L^\infty$-norm means $\|\psi\|_{L^\infty}:=\sup_{0\le t\le
T}\sup_{x\in U}|\psi(x,t)|$. This immediately implies (\ref{eq:tau1:sec4})
when $n=0$. Similarly, we can get
\begin{align*}
|\delta_x^+\eta^0_{j}|\lesssim &\;\tau\|\psi_{xxxxx}\|_{L^\infty}
+\tau^2\|\psi_{tttx}\|_{L^\infty}+\tau h\|\psi_{xxxx}\|_{L^\infty}(\|\psi\|_{W^{4,\infty}}+\|\psi_{tt}\|_{L^\infty})^2\\
&+\tau^2\|\psi_{ttx}\|_{L^\infty}(\|\psi\|_{W^{4,\infty}}+\|\psi_{tt}\|_{L^\infty})^2\\
\lesssim&\; \tau+h,\qquad  j\in \calT_M,
\end{align*}
where for $j=0$, we use equation (\ref{eq:sdgd01d:sec4}) to deduce that $\psi_{tt}(a,t)=\psi_{ttt}(a,t)=0$.

Now we prove (\ref{eq:tau1:sec4}), (\ref{eq:tauder1:sec4}) and (\ref{eq:tauder:sec4}) when $n\ge1$. Using
Taylor's expansion at $(x_j,t_n)$ in (\ref{eq:etadef:sec4}), noticing
(\ref{eq:sdge1d:sec4}), using triangle inequality and  assumption (\ref{assum:fd:sec4}), we
have for $1\le n\le
\frac{T}{\tau}-1$,
\begin{equation*}
|\eta_{j}^n|\lesssim h^2\|\psi_{xxxx}\|_{L^\infty}
+\tau^2\left(\|\psi_{ttt}\|_{L^\infty}+\|\psi_{ttxx}\|_{L^\infty}\right)
\lesssim\tau^2+h^2, \qquad j\in\mathcal{T}_M.
\end{equation*}
Thus, (\ref{eq:tau1:sec4}) is true. For $j=1,\ldots, M-2$, we know
\be\label{eq:lte11:sec4}
|\delta_x^+\eta_j^n|\lesssim h^2\|\psi_{xxxxx}\|_{L^\infty}
+\tau^2\left(\|\psi_{tttx}\|_{L^\infty}+\|\psi_{ttxxx}\|_{L^\infty}\right)
\lesssim\tau^2+h^2,\quad 1\le n\le
\frac{T}{\tau}-1.
\ee
However, for $j=0,M-1$, we derive that $\psi_{tt}$, $\psi_{xx}$ and $\psi_{ttxx}$ are all zero on the boundary $\Gamma$, and it follows that for $1\le n\le
\frac{T}{\tau}-1$ and $j=0,M-1$,
\be\label{eq:lte12:sec4}
|\delta_x^+\eta_j^n|\lesssim h\|\psi_{xxxx}\|_{L^\infty} +\tau^2\left(\|\psi_{tttx}\|_{L^\infty}+\|\psi_{ttxxx}\|_{L^\infty}\right)
\lesssim\tau+h.
\ee
(\ref{eq:lte11:sec4}) and (\ref{eq:lte12:sec4}) prove (\ref{eq:tauder1:sec4}).

In the case of $\p_nV(x)|_{\Gamma}=0$ or $u\in C([0,T];H^2_0(U))$, by differentiating (\ref{eq:sdge1d:sec4}), we can show that $\psi_{xx}|_{\Gamma}=\psi_{xxxx}|_{\Gamma}=\psi_{ttxx}|_{\Gamma}=\psi_{tt}|_{\Gamma}=0$. Then (\ref{eq:lte11:sec4}) holds for boundary case and (\ref{eq:tauder:sec4}) is correct. The proof is complete.
\end{proof}
Now, we are going to establish the estimates in Theorem \ref{thm:sifd:sec4} by mathematical induction \cite{BaoCai}.
\begin{proof}[Proof of Theorem \ref{thm:sifd:sec4}] We first prove the optimal
discrete semi-$H^1$ norm convergence rate in the case of either $\p_n V(x)|_{\Gamma}=0$ or $\psi\in C^0([0,T];H^2_0(U))$. Since $e^0={\bf 0}$, (\ref{eq:sifdo:sec4}) is true. For $n=1$, using Lemma \ref{lem:localerrsifd:sec4} and noticing $e_j^1=\psi(x_j,\tau)-\psi_j^1=-i\tau \eta_j^0$ ($j\in\calT_M^0$), we find that
\be
\|e^1\|_2+\|\delta_x^+e^1\|_2\lesssim h^2+\tau^2.
\ee
Recalling discrete Sobolev inequality which implies that
$\|e^1\|_\infty\leq C_1\|\delta_x^+e^1\|_2$, for sufficiently small $h$ and $\tau$, we derive
\be
\|\psi^1\|_\infty\leq \|e^1\|_\infty+\|\psi\|_{L^\infty}\leq M_1+1.
\ee
Now we assume that
(\ref{eq:sifdo:sec4}) is valid for all $0\le n\leq m-1\leq \frac{T}{\tau}-1$,
then we need to show that it is still valid when $n=m$.  In order to
do so, subtracting (\ref{eq:etadef:sec4}) from (\ref{eq:sifd:sec4}),
noticing (\ref{eq:sdgb1d:sec4}) and (\ref{eq:bound1:sec4}), we obtain the following
equation for the ``error" function $e^n\in X_M$:
\begin{equation}\label{eq:semi-error:sec4}
i\delta_te_{j}^{n}=\left[-\frac12\delta_x^2+V_{j}\right]\frac{e_{j}^{n+1}+e_{j}^{n-1}}{2}
+\xi_{j}^n+\eta^n_{j},\qquad j\in\calT_{M},\quad n\ge1,
\end{equation}
where $\xi^n\in X_M$ ($n\ge1$) is defined as
\be
\xi_{j}^n=\beta|\psi(x_j,t_n)|^2\psi(x_j,t_n)-\beta|\psi_{j}^n|^2
\psi_{j}^n, \qquad
j\in\calT_{M}. \label{eq:eta11:sec4}
\ee
By the assumption of mathematical induction, we have \cite{BaoCai}
\be\label{eq:eexi:sec4} \|\xi^n\|_2^2 \le
C_{2}\|e^n\|_2^2,\qquad
\|\delta_x^+\xi^n\|_2^2\leq C_{3}
\|\delta_x^+e^n\|_2^2+\|e^n\|_2^2,\qquad 1\le n\le m-1, \ee
where $C_2$ and $C_3$ are constants only depending on $M_1$ and $\beta$.

Multiplying both sides of (\ref{eq:semi-error:sec4}) by
$\overline{e_{j}^{n+1}+e_{j}^{n-1}}$ and summing all together for
$j\in\calT_M$,  taking imaginary parts, using the triangular
and Cauchy inequalities, noticing (\ref{eq:tau1:sec4}) and (\ref{eq:eexi:sec4}) , we
have for $1\le n\le m-1$
\begin{align*}
\|e^{n+1}\|_2^2-\|e^{n-1}\|_2^2&=2\tau\, {\rm
Im}\left(\xi^n+\eta^n,e^{n+1}
+e^{n-1}\right)\\
&\leq2\tau\left[\|e^{n+1}\|_2^2+\|e^{n-1}\|_2^2
+\|\eta^n\|_2^2+\|\xi^n\|_2^2\right]\nn\\
&\leq C_4\tau(h^2+\tau^2)^2+2\tau\left(\|e^{n+1}\|_2^2+\|e^{n-1}\|_2^2\right)
+2\tau C_2\|e^n\|_2^2.
\end{align*}
Summing  above inequality for $n=1,2,\ldots,m-1$, for $\tau\leq\frac14$, we get
\begin{equation}
\|e^{m}\|_2^2+\|e^{m-1}\|_2^2\leq C_4T(h^2+\tau^2)^2+C_5\tau
\sum\limits_{l=1}^{m-1}\|e^{l}\|_2^2,
\ 1\le m\le \frac{T}{\tau},
\end{equation}
with positive constants $C_4>0$ and $C_5>0$ independent of $\tau$ and $h$.
In view of the discrete Gronwall inequality \cite{Chang,Glassey,BaoCai}
 and noticing $\|e^0\|_2=0$ and $\|e^{1}\|_2\lesssim h^2+\tau^2$, we immediately
 obtain $\|e^n\|_2\lesssim h^2+\tau^2$ for $n=m$.

 For the semi-$H^1$ norm, define the discrete linear energy  for $e^n\in X_M$ as
  \be
  {\calE}(e^n)=\frac12\|\delta_x^+e^{n}\|_2+\|e^{n}\|_V^2.
  \ee
  Multiplying both sides of (\ref{eq:semi-error:sec4}) by
$\overline{e_{j}^{n+1}-e_{j}^{n-1}}$, summing over index
$j\in{\calT}_M$ and performing summation by parts, taking real part,
 we have
\be \label{eq:energy2:sec4}{\calE}(e^{n+1})-
{\calE}(e^{n-1})=-2\;{\rm
Re}\left \langle \xi^n+\eta^n,e^{n+1}-e^{n-1}\right\rangle,\qquad
1\leq n\leq m-1. \ee
Rewriting (\ref{eq:semi-error:sec4}) as \be \label{eq:en1nm1:sec4}
e_{j}^{n+1}-e_{j}^{n-1}=-2i\tau\left[\xi_{j}^n+\eta_{j}^n+\chi_{j}^n\right],
\qquad j\in {\calT}_M,\ee where $\chi^n\in X_M$ is defined as
\be\label{eq:chi11:sec4}\chi_{j}^n=\left[-\frac12\delta_x^2+V_{j}\right]\frac{e_{j}^{n+1}+e_{j}^{n-1}}{2}, \qquad j\in
{\calT}_M,\ee
then plugging (\ref{eq:en1nm1:sec4}) into (\ref{eq:energy2:sec4}), we
obtain
\begin{eqnarray} \label{eq:eng367:sec4}
{\calE}(e^{n+1})-
{\calE}(e^{n-1})&=&-4\tau \,{\rm
Im}\left\langle \xi^n+\eta^n, \xi^n
+\eta^n+\chi^n\right\rangle\nn\\
&=&-4 \tau\,{\rm Im}\left\langle \xi^n+\eta^n,\chi^n\right\rangle,
\quad 1\leq n\leq m-1.
\end{eqnarray}
From (\ref{eq:chi11:sec4}), (\ref{eq:eta11:sec4}) and summation by parts, we have
\bea
\label{eq:ett167:sec4}\left|\left\langle
\xi^n,\chi^n\right\rangle\right|&=&\frac{1}{2}\left|\left\langle
\xi^n,\left(-\frac12\delta_x^2+V\right)\left(e^{n+1}+e^{n-1}\right)\right\rangle\right|\nn\\
&\lesssim&\left|\left\langle
\delta_x^+\xi^n,\delta_x^+\left(e^{n+1}+e^{n-1}\right)\right\rangle\right|
+\left|\left\langle
\xi^n,V\left(e^{n+1}+e^{n-1}\right)\right\rangle\right|\nn\\
&\lesssim&\|\delta^+_x e^{n+1}\|_{2}^2+\|\delta^+_x
e^n\|^2_{2}+\|\delta^+_x
e^{n-1}\|_2^2+\|e^{n+1}\|_2^2+\|e^n\|_2^2+\|e^{n-1}\|_2^2\nn\\
&&+\|\delta^+_x \xi^{n}\|_{2}^2+\|\xi^n\|_2^2\nn\\
&\lesssim&\|\delta^+_x e^{n+1}\|_{2}^2+\|\delta^+_x
e^n\|^2_{2}+\|\delta^+_x e^{n-1}\|_2^2,\qquad 1\le n\le
m.\qquad
 \eea
Similarly, noticing (\ref{eq:eexi:sec4}), (\ref{eq:tau1:sec4}) and (\ref{eq:tauder:sec4}),
we have \begin{align}\label{eq:ett168:sec4} \left|\left\langle
\eta^n,\chi^n\right\rangle\right|&=\frac{1}{2}\left|\left\langle
\eta^n,\left(-\frac12\delta_x^2+V\right)\left(e^{n+1}+e^{n-1}\right)\right\rangle\right|\\
&\lesssim\left|\left\langle
\delta_x^+\eta^n,\delta_x^+\left(e^{n+1}+e^{n-1}\right)\right\rangle\right|
+\left|\left\langle
\eta^n,V\left(e^{n+1}+e^{n-1}\right)\right\rangle\right|\nn\\
&\lesssim\|\delta^+_x e^{n+1}\|_{2}^2+\|\delta^+_x
e^n\|^2_{2}+\|\delta^+_x
e^{n-1}\|_2^2+\|e^{n+1}\|_2^2+\|e^n\|_2^2+\|e^{n-1}\|_2^2\nn\\
&\quad+\|\delta^+_x\eta^{n}\|_{2}^2+\|\eta^n\|_2^2\nn\\
 &\lesssim\|\delta^+_x
e^{n+1}\|_{2}^2+\|\delta^+_x e^n\|^2_{2}+\|\delta^+_x
e^{n-1}\|_2^2+(\tau^2+h^2)^2, \quad 1\le n\le
m.\nn
 \end{align}
Plugging (\ref{eq:ett167:sec4}) and (\ref{eq:ett168:sec4}) into (\ref{eq:eng367:sec4}),
 we get \begin{align*}
\mathcal{E}(e^{n+1})-\mathcal{E}(e^{n-1})
&\lesssim\tau(\tau^2+h^2)^2+\tau\left[\|\delta^+_x
e^{n+1}\|_{2}^2+\|\delta^+_x e^n\|^2_{2}+\|\delta^+_x
e^{n-1}\|_2^2\right]
\\
&\lesssim\tau(\tau^2+h^2)^2+\tau\left[\mathcal{E}(e^{n+1})
+\mathcal{E}(e^{n})+\mathcal{E}(e^{n-1})\right], \ 1\le n\le
m.
 \end{align*}
 Summing  above inequality for $1\le n\le m-1$, we get \begin{equation*}
\mathcal{E}(e^{n+1})+\mathcal{E}(e^{n})\lesssim T(\tau^2+h^2)^2 +
\mathcal{E}(e^{1}) +\mathcal{E}(e^{0}) +\tau \sum_{l=1}^{n+1}
\mathcal{E}(e^{l}), \qquad 1\le n\le m-1.\end{equation*}  Using the
discrete Gronwall inequality \cite{BaoCai,Chang}, we have \be
\mathcal{E}(e^{n+1})+\mathcal{E}(e^{n})\lesssim
(\tau^2+h^2)^2+\mathcal{E}(e^{1})+\mathcal{E}(e^{0})
\lesssim(\tau^2+h^2)^2, \qquad 1\le n\le m-1. \ee
Thus $\calE(e^{m})\lesssim (\tau^2+h^2)^2$ and
\be
\|\delta_x^+e^m\|_2^2\leq \mathcal{E}(e^{m})+\|V\|_{L^\infty(U)}\|e^m\|_2^2\lesssim
(\tau^2+h^2)^2.
\ee
 In view of the discrete Sobolev inequality, we get
 \be
 \|e^m\|_\infty\lesssim \|\delta_x^+e^m\|_2\lesssim \tau^2+h^2.
 \ee
 Noticing that in all the above inequalities, the appearing  constants are independent of $h$ and $\tau$. Hence, for sufficiently small $\tau$ and $h$, we conclude that
 \be
 \|\psi^m\|_\infty\leq \|\psi(\cdot,t_m)\|_{L^\infty(U)}+\|e^m\|_{\infty}\leq M_1+1.
 \ee
 This completes the proof of (\ref{eq:sifdo:sec4}) at $n=m$. Therefore the  result is proved by  mathematical induction.

 For the case of assumption (\ref{assum:fd:sec4}) and $V\in C^1$ without further assumptions, we will lose half
order convergence rate in the semi-$H^1$ norm because of the boundary (\ref{eq:tauder1:sec4}).
Notice that the reminder term is $O(h^2+\tau^2)^{3/2}$ instead of
$O(h^2+\tau^2)$ in (\ref{eq:ett168:sec4}), and that the  remaining proof
is the same. Hence,   we will have the $3/2$ order convergence
rate for  discrete semi-$H^1$ norm.  The proof is complete.
 \end{proof}

 \begin{remark}\label{rmk:extension:sec4}
 Here we emphasis that the above approach can be extended to the
 higher dimensions, e.g. 2D and 3D, directly.
 The key point is the discrete Sobolev inequality in 2D and 3D as
 \be
 \|u_h\|_\infty\leq C|\ln h|\,\|u_h\|_{H^1_s},
 \qquad \|v_h\|_\infty\leq Ch^{-1/2}\|v_h\|_{H^1_s},
 \ee
 where $u_h$ and $v_h$  are 2D and 3D mesh functions with zero at the boundary,
  respectively,  and the discrete semi-$H^1$ norm $\|\cdot\|_{H_s^1}$ and $l^\infty$ norm
  $\|\cdot\|_\infty$ can be defined similarly as the discrete semi-$H^1$ norm
  and the $l^\infty$ norm  in (\ref{eq:fdnorm:sec3}). The same proof  works in 2D
  and 3D, with the above Sobolev
  inequalities and the  additional technical assumption
 $\tau=o(1/|\ln h|)$ in 2D and $\tau =o(h^{1/3})$ in 3D.
 \end{remark}

\subsubsection{Convergence rate for CNFD}
Let $\psi^n\in X_M$ be the
numerical solution of the CNFD (\ref{eq:cnfd:sec4})  and $e^n\in X_M$ be the error
function.

\begin{lemma} (Conservation of mass and energy)
\label{eq:concnfd:sec4} For the CNFD scheme (\ref{eq:cnfd:sec4})  with
(\ref{eq:bound1:sec4}) and (\ref{eq:init1:sec4}), for any mesh size $h>0$, time step
$\tau>0$ and initial data $\psi_0$, it conserves the mass and energy
in the discretized level, i.e. \be\label{eq:mass} \|\psi^n\|_2^2\equiv
\|\psi^0\|_2^2, \qquad E_h(\psi^n)\equiv E_h(\psi^0), \qquad
n=0,1,2,\ldots\;,
\end{equation}
where the discrete energy is given by
\be\label{eq:denergy:sec4}
E_h(\psi^n)=\frac12\|\delta_x^+\psi^n\|_2^2+\|\psi^n\|_V^2+\frac{\beta}{2}\|\psi^n\|_4^4.
\ee
\end{lemma}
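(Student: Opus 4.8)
Both conservation laws will come out of the usual energy-method trick: pair the scheme (\ref{eq:cnfd:sec4}) with a well-chosen discrete test function, sum over the interior nodes $j\in\calT_M$, and then take either the imaginary or the real part. The structural fact being exploited is that, once the homogeneous Dirichlet condition (\ref{eq:bound1:sec4}) is used to discard boundary contributions in summation by parts, the discrete operator $-\tfrac12\delta_x^2+V_j+\tfrac{\beta}{2}(|\psi_j^{n+1}|^2+|\psi_j^n|^2)$ is self-adjoint and real with respect to the discrete $l^2$ pairing, so that $(\delta_x^2u,v)=-(\delta_x^+u,\delta_x^+v)$ for $u,v\in X_M$. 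I would record these product/summation-by-parts identities first.

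For the mass identity, I would take the discrete $l^2$ inner product of (\ref{eq:cnfd:sec4}) with $\psi^{n+1/2}$, i.e. multiply by $h\,\overline{\psi_j^{n+1/2}}$ and sum over $j\in\calT_M$. Summation by parts turns the diffusion term into $\tfrac12\|\delta_x^+\psi^{n+1/2}\|_2^2$, and the potential and nonlinear terms become $\|\psi^{n+1/2}\|_V^2$ and $\tfrac{\beta}{2}h\sum_j(|\psi_j^{n+1}|^2+|\psi_j^n|^2)|\psi_j^{n+1/2}|^2$; all three are real. Hence the right-hand side is real, while the left-hand side equals $\tfrac{i}{2\tau}(\|\psi^{n+1}\|_2^2-\|\psi^n\|_2^2)$ plus a purely real cross term (since $\psi_j^{n+1}\overline{\psi_j^n}-\psi_j^n\overline{\psi_j^{n+1}}$ is purely imaginary). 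Taking imaginary parts gives $\|\psi^{n+1}\|_2^2=\|\psi^n\|_2^2$, and iterating yields the first identity in (\ref{eq:mass}).

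For the energy identity, I would instead pair (\ref{eq:cnfd:sec4}) with $\delta_t^+\psi^n=\tfrac1\tau(\psi^{n+1}-\psi^n)$ (multiply by $h\,\overline{\delta_t^+\psi_j^n}$, sum over $j$) and take the real part. The left-hand side is $i\|\delta_t^+\psi^n\|_2^2$, with vanishing real part. On the right, summation by parts writes the diffusion term as $\tfrac{1}{2\tau}\cdot\tfrac12(\delta_x^+\psi^{n+1}+\delta_x^+\psi^n,\,\delta_x^+\psi^{n+1}-\delta_x^+\psi^n)$, whose real part telescopes to $\tfrac{1}{2\tau}\big(\tfrac12\|\delta_x^+\psi^{n+1}\|_2^2-\tfrac12\|\delta_x^+\psi^n\|_2^2\big)$; the potential term telescopes to $\tfrac{1}{2\tau}(\|\psi^{n+1}\|_V^2-\|\psi^n\|_V^2)$; and the symmetrized cubic term contributes $\tfrac{1}{2\tau}\cdot\tfrac{\beta}{2}h\sum_j(|\psi_j^{n+1}|^2+|\psi_j^n|^2)\,\mathrm{Re}\big[(\psi_j^{n+1}+\psi_j^n)\overline{(\psi_j^{n+1}-\psi_j^n)}\big]$, and since $\mathrm{Re}\big[(\psi_j^{n+1}+\psi_j^n)\overline{(\psi_j^{n+1}-\psi_j^n)}\big]=|\psi_j^{n+1}|^2-|\psi_j^n|^2$, this equals $\tfrac{1}{2\tau}\cdot\tfrac{\beta}{2}(\|\psi^{n+1}\|_4^4-\|\psi^n\|_4^4)$. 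Collecting, the real part of the right-hand side is $\tfrac{1}{2\tau}(E_h(\psi^{n+1})-E_h(\psi^n))$, which must vanish, so $E_h(\psi^{n+1})=E_h(\psi^n)$.

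\textbf{Main obstacle.} There is no analytic difficulty here; the delicate point is purely algebraic, namely verifying that the specific discretization $\tfrac{\beta}{2}(|\psi_j^{n+1}|^2+|\psi_j^n|^2)$ of the nonlinear coefficient is exactly the choice that makes the cubic term telescope to $\|\psi^{n+1}\|_4^4-\|\psi^n\|_4^4$ — with a naive alternative such as $\beta|\psi_j^n|^2$ or $\beta|\psi_j^{n+1/2}|^2$ the telescoping fails and the discrete energy is not conserved, which is the whole reason this particular scheme is called \emph{conservative}. I would also note that the computation is dimension-independent, so it transfers verbatim to the tensor-product discretizations in 2D and 3D.
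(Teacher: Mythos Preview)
Your proposal is correct and is precisely the standard energy-method argument the paper has in mind: the paper itself omits all details, simply writing ``Follow the analogous arguments of the CNFD method for the NLSE \cite{Chang1,Glassey} and we omit the details here for brevity.'' Your computation (pairing with $\psi^{n+1/2}$ and taking imaginary parts for mass, pairing with $\delta_t^+\psi^n$ and taking real parts for energy, using summation by parts and the telescoping identity $(|a|^2+|b|^2)(|a|^2-|b|^2)=|a|^4-|b|^4$ for the symmetrized nonlinearity) is exactly what those references contain, so there is nothing to add.
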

\begin{proof}  Follow the analogous arguments of the  CNFD
method for the NLSE \cite{Chang1,Glassey} and we omit the details here
for brevity. \end{proof}
For the solvability, we have
\begin{lemma} (Solvability of the difference equations)
\label{eq:solve:sec4} For any given $\psi^n$, there exists a unique solution
$\psi^{n+1}$ of the CNFD discretization (\ref{eq:cnfd:sec4})  with
(\ref{eq:bound1:sec4}).
\end{lemma}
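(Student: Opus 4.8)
\emph{Proof proposal.} The plan is to reformulate the scheme as a fixed-point equation on the finite-dimensional space $X_M$ and apply Brouwer's fixed point theorem for existence, then run a short discrete energy argument for uniqueness. First I would identify $X_M$ with $\mathbb{C}^{M-1}$ via the interior values $(v_1,\dots,v_{M-1})$, and set $A_h=-\tfrac12\delta_x^2+V$, which on $X_M$ is a Hermitian matrix (real and symmetric, with the homogeneous Dirichlet condition built in through the index range). Introducing the half-step unknown $\zeta=\psi^{n+1/2}$, so that $\psi^{n+1}=2\zeta-\psi^n$, the CNFD scheme (\ref{eq:cnfd:sec4}) together with the boundary condition (\ref{eq:bound1:sec4}) is equivalent to finding $\zeta\in X_M$ with
\[
\mathcal{G}(\zeta):=\zeta-\psi^n-\frac{\tau}{2i}\Big(A_h+\frac{\beta}{2}D(\zeta)\Big)\zeta=0,\qquad D(\zeta)=\mathrm{diag}\big(|2\zeta_j-\psi^n_j|^2+|\psi^n_j|^2\big),
\]
since then $\psi^{n+1}:=2\zeta-\psi^n$ lies in $X_M$ and solves (\ref{eq:cnfd:sec4}). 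The map $\mathcal{G}:X_M\to X_M$ is continuous, being polynomial in $\zeta$ and $\overline{\zeta}$.

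For existence, I would pair $\mathcal{G}(\zeta)$ with $\zeta$ in the discrete $\ell^2$ inner product. Since $A_h+\tfrac{\beta}{2}D(\zeta)$ is Hermitian, the quantity $\langle(A_h+\tfrac\beta2 D(\zeta))\zeta,\zeta\rangle$ is real, so $\mathrm{Re}\big(\tfrac{\tau}{2i}\langle(A_h+\tfrac\beta2 D(\zeta))\zeta,\zeta\rangle\big)=0$, and therefore $\mathrm{Re}\langle\mathcal{G}(\zeta),\zeta\rangle=\|\zeta\|_2^2-\mathrm{Re}\langle\psi^n,\zeta\rangle\ge\|\zeta\|_2\big(\|\zeta\|_2-\|\psi^n\|_2\big)$. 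Taking $R=\|\psi^n\|_2+1$ makes this quantity strictly positive on the sphere $\|\zeta\|_2=R$. By the elementary corollary of Brouwer's theorem (a continuous map $\mathcal{G}:\overline{B_R}\to X_M$ with $\mathrm{Re}\langle\mathcal{G}(\zeta),\zeta\rangle\ge0$ on $\partial B_R$ must vanish somewhere in $\overline{B_R}$), there is $\zeta^\ast\in X_M$ with $\mathcal{G}(\zeta^\ast)=0$, and $\psi^{n+1}=2\zeta^\ast-\psi^n$ is the desired solution; this step is unconditional in $\tau$ and $h$.

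For uniqueness, suppose $\psi^{n+1}$ and $\phi^{n+1}$ both solve (\ref{eq:cnfd:sec4})--(\ref{eq:bound1:sec4}) for the same $\psi^n$, and put $e=\psi^{n+1}-\phi^{n+1}\in X_M$. Subtracting the two identities, pairing with $e$, and taking imaginary parts kills the self-adjoint contribution $A_h\tfrac e2$; in the nonlinear difference one splits off the part proportional to $e_j$, which is real when tested against $\bar e_j$ and drops out, leaving only a term built from $|\psi^{n+1}_j|^2-|\phi^{n+1}_j|^2=\mathrm{Re}\big(e_j\,\overline{(\psi^{n+1}_j+\phi^{n+1}_j)}\big)$. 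Estimating that term gives $\|e\|_2^2\le C\tau\,\|e\|_2^2$, where $C$ depends only on $|\beta|$ and on $\max\{\|\psi^{n+1}\|_\infty,\|\phi^{n+1}\|_\infty,\|\psi^n\|_\infty\}$; these $L^\infty$ bounds are furnished a priori by the mass and energy conservation of Lemma~\ref{eq:concnfd:sec4} together with the one-dimensional discrete Sobolev inequality $\|v\|_\infty^2\lesssim\|\delta_x^+v\|_2\|v\|_2$ (and, for $\beta<0$, a discrete Gagliardo--Nirenberg inequality to bound $\|\delta_x^+\psi^{n+1}\|_2$ from the conserved energy). Hence $e\equiv0$ once $\tau$ is small enough, consistent with the smallness requirement of Theorem~\ref{thm:cnfd:sec4}. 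The main obstacle is exactly this last point: securing the uniform $L^\infty$ control of the discrete solutions (hence a clean constant $C$, independent of $h$ in 1D) and thereby pinning down the admissible range of $\tau$; in 2D and 3D this is where the $h$-dependent discrete Sobolev inequalities of Remark~\ref{rmk:extension:sec4} enter and force the extra constraints $\tau=o(1/|\ln h|)$ and $\tau=o(h^{1/3})$.
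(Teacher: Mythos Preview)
Your proposal is correct and is precisely the standard argument that the paper defers to the literature for: Brouwer's fixed point theorem applied to the half-step unknown $\zeta=\psi^{n+1/2}$ (using the self-adjointness of $A_h+\tfrac{\beta}{2}D(\zeta)$ to obtain $\mathrm{Re}\langle\mathcal{G}(\zeta),\zeta\rangle\ge\|\zeta\|_2(\|\zeta\|_2-\|\psi^n\|_2)$) for existence, and a discrete energy estimate yielding $\|e\|_2^2\le C\tau\|e\|_2^2$ for uniqueness. The paper gives no details beyond citing \cite{Akrivis,BaoCai} and the remark that in 2D/3D the uniqueness constant $C$ depends on $h$ through the discrete Sobolev embedding; your write-up already anticipates both the 1D mechanism (conservation from Lemma~\ref{eq:concnfd:sec4} plus $\|v\|_\infty^2\lesssim\|\delta_x^+v\|_2\|v\|_2$) and the higher-dimensional caveat.
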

\begin{proof}The proof is standard \cite{Akrivis,BaoCai}.  In higher dimensions (2D and 3D),
additional assumption is needed for uniqueness, i.e., time step
$\tau$ is sufficiently small compared with mesh size.
\end{proof}
Denote the local truncation error $\widetilde{\eta}^n\in X_M$
($n\ge0$) of the CNFD scheme (\ref{eq:cnfd:sec4})  with (\ref{eq:bound1:sec4}) and
(\ref{eq:init1:sec4}) as
\begin{eqnarray}\label{eq:taudef:sec4}
\widetilde{\eta}_{j}^n:&=&i\delta_t^+\psi(x_j,t_n)-
\left[-\frac12\delta_x^2+V_{j}+\frac{\beta}{2}
\left(|\psi(x_j,t_{n+1})|^2+|\psi(x_j,t_n)|^2\right)\right]\nn \\
&&
\times\frac{\psi(x_j,t_n)+\psi(x_j,t_{n+1})}{2}, \qquad
 j\in\calT_M.
\end{eqnarray}
Similar to Lemma \ref{lem:localerrsifd:sec4}, we have
\begin{lemma} (Local truncation error)
\label{eq:localerr1:sec4} Assume $V(\bx)\in C(\bar{U})$ and under
assumption (\ref{assum:fd:sec4}), we have
 \be\label{eq:tauhat1:sec4}
\|\widetilde{\eta}^n\|_\infty\lesssim \tau^2+h^2, \qquad  0\le n
\leq \frac{T}{\tau}-1. \ee In addition, assuming $V(\bx)\in C^1(\bar{U})$,
we have for $1\le n \leq \frac{T}{\tau}-1$
\be\label{eq:tauhatder1:sec4}
 |\delta_x^+\widetilde{\eta}_{j}^n|\lesssim\begin{cases}
\tau^2+h^2,& 1\leq j\leq M-2,\\
  \tau+h, &j=0,M-1.
 \end{cases}
 \ee
 In addition,  if either  $\p_n V(x)|_{\Gamma}=0$ or $\psi\in C^0([0,T];H^2_0(U))$, we have
\be\label{eq:tauhatder:sec4} \|\delta_x^+\widetilde{\eta}^n\|_\infty\lesssim
\tau^2+h^2,\qquad 1\le n \leq \frac{T}{\tau}-1. \ee
\end{lemma}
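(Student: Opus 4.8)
The plan is to mirror the already-given proof of the SIFD local truncation error (Lemma \ref{lem:localerrsifd:sec4}), since the Crank--Nicolson error $\widetilde{\eta}^n$ in (\ref{eq:taudef:sec4}) has the same architecture, only with the discrete operators centered differently. The natural expansion point here is the space-time node $(x_j,t_{n+1/2})$ with $t_{n+1/2}=t_n+\tau/2$: the forward quotient $\delta_t^+\psi(x_j,t_n)$ is a second-order approximation of $\partial_t\psi(x_j,t_{n+1/2})$, the temporal average $\tfrac12(\psi(x_j,t_n)+\psi(x_j,t_{n+1}))$ equals $\psi(x_j,t_{n+1/2})+O(\tau^2)$, likewise $\tfrac12(|\psi(x_j,t_{n+1})|^2+|\psi(x_j,t_n)|^2)=|\psi(x_j,t_{n+1/2})|^2+O(\tau^2)$, and the centered stencil gives $\delta_x^2\psi=\psi_{xx}+O(h^2)$. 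First I would Taylor expand every term of (\ref{eq:taudef:sec4}) about $(x_j,t_{n+1/2})$; the leading contributions assemble into $i\partial_t\psi-[-\tfrac12\psi_{xx}+V+\beta|\psi|^2]\psi$ at $(x_j,t_{n+1/2})$, which vanishes identically since $\psi$ solves (\ref{eq:sdge1d:sec4}). What survives is controlled by $\|\psi_{ttt}\|_{L^\infty}$, $\|\psi_{ttxx}\|_{L^\infty}$ and $\|\psi_{xxxx}\|_{L^\infty}$, all finite under (\ref{assum:fd:sec4}), which yields $\|\widetilde{\eta}^n\|_\infty\lesssim\tau^2+h^2$ and proves (\ref{eq:tauhat1:sec4}).

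For the discrete gradient (\ref{eq:tauhatder1:sec4}) I would apply $\delta_x^+$ to these expansions. At interior indices $1\le j\le M-2$ the surviving terms are smooth in $x$, so differencing the $O(h^2)$ spatial remainder costs one extra $x$-derivative and produces $h^2\psi_{xxxxx}$ (needing $W^{5,\infty}$), while differencing the $O(\tau^2)$ temporal remainders produces $\psi_{tttx}$ and $\psi_{ttxxx}$ (needing $C^3([0,T];W^{1,\infty})$ and $C^2([0,T];W^{3,\infty})$), giving $|\delta_x^+\widetilde{\eta}_j^n|\lesssim\tau^2+h^2$. The delicate point is $j=0,M-1$, where $\delta_x^+$ reaches $\Gamma=\partial U$. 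Differentiating the Dirichlet condition (\ref{eq:sdgd01d:sec4}) in $t$ gives $\psi=\psi_t=\psi_{tt}=0$ on $\Gamma$, and evaluating (\ref{eq:sdge1d:sec4}) on $\Gamma$ then forces $\psi_{xx}|_\Gamma=0$ (hence $\psi_{ttxx}|_\Gamma=0$ as well). However $\psi_{xxxx}$ does \emph{not} vanish on $\Gamma$ in general: differentiating (\ref{eq:sdge1d:sec4}) twice in $x$ and using $\psi|_\Gamma=\psi_{xx}|_\Gamma=0$ (so that $(|\psi|^2\psi)_{xx}|_\Gamma=0$ and $\partial_t\psi_{xx}|_\Gamma=0$) leaves $\psi_{xxxx}|_\Gamma=4V_x\,\psi_x|_\Gamma$, generically nonzero. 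Consequently the leading $O(h^2)$ spatial remainder fails to cancel across the last cell, and the one-sided difference only gives $|\delta_x^+\widetilde{\eta}_j^n|\lesssim\tau+h$ at $j=0,M-1$.

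Finally, for the improved bound (\ref{eq:tauhatder:sec4}) I would show that either extra hypothesis forces $\psi_{xxxx}|_\Gamma=0$. If $\partial_n V|_\Gamma=0$ then $V_x|_\Gamma=0$, so the identity $\psi_{xxxx}|_\Gamma=4V_x\psi_x|_\Gamma$ gives $\psi_{xxxx}|_\Gamma=0$ at once; if instead $\psi\in C^0([0,T];H_0^2(U))$ then $\psi_x|_\Gamma=0$, with the same conclusion. In either case the obstruction is removed, the boundary truncation error recovers the interior rate, and $\|\delta_x^+\widetilde{\eta}^n\|_\infty\lesssim\tau^2+h^2$. The main obstacle throughout is precisely this boundary bookkeeping: tracking which mixed space-time traces of $\psi$ vanish on $\Gamma$ (and which do not) from the Dirichlet condition combined with the equation, since the whole $O(\tau^2+h^2)$-versus-$O(\tau+h)$ dichotomy hinges on the single trace $\psi_{xxxx}|_\Gamma$.
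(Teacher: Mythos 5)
Your proposal is correct and takes essentially the same approach as the paper: the paper proves this lemma simply by analogy with the SIFD truncation-error result (Lemma~\ref{lem:localerrsifd:sec4}), whose proof is precisely your argument --- Taylor expansion about the natural centering point (here $(x_j,t_{n+1/2})$ for Crank--Nicolson), cancellation of the leading terms via the PDE, and the boundary bookkeeping in which $\psi_{tt}$, $\psi_{ttt}$, $\psi_{xx}$, $\psi_{ttxx}$ vanish on $\Gamma$ while $\psi_{xxxx}$ generically does not, so that the one-sided quotient at $j=0,M-1$ degrades to $O(\tau+h)$ unless the extra hypotheses hold. Your explicit trace identity $\psi_{xxxx}|_\Gamma = 4V_x\,\psi_x|_\Gamma$ is a concrete sharpening of the paper's bare assertion that $\psi_{xxxx}|_\Gamma=0$ under either $\p_n V|_\Gamma=0$ or $\psi\in C^0([0,T];H^2_0(U))$, but it implements the same mechanism.
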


One main difficulty in deriving error bounds for CNFD  in high dimensions is the
$l^\infty$ bounds for the finite difference  solutions. In \cite{BaoCai,Thomee,Akrivis},
this difficulty was overcome  by truncating the nonlinearity  to a
global Lipschitz function with compact support in $d$-dimensions ($d=1,2,3$).
This cutoff would not change  $\psi(x,t)$ and $\psi^n$  if the continuous solution  $\psi(x,t)$ is bounded and the numerical solution   $\psi^n$ is
close to the
continuous solution, i.e., if (\ref{eq:cnfdes:sec4}) or (\ref{eq:cnfdo:sec4}) holds.
\begin{proof}[Proof of Theorem \ref{thm:cnfd:sec4}] As in the proof of Theorem \ref{thm:sifd:sec4}, we only prove the optimal convergence under  assumptions (\ref{assum:fd:sec4}) with either   $\p_nV(x)|_{\Gamma}=0$ or $\psi\in C^0([0,T];H^2_0(U))$. Choose  a smooth function $\rho(s)\in C^\infty(\Bbb R)$  such that
\be\label{eq:rho:sec4}
\rho(s)=\left\{\begin{array}{ll} 1,
&0\le |s|\leq 1\,,\\
\in[0,1], &1\le |s|\le 2\,,\\
0, &|s|\ge2\,.\\
\end{array}\right.
\ee
Let us denote $B=(M_1+1)^2$ and denote
\be\label{eq:fbdef:sec4}
 f_{_B}(s)=\rho(s/B)s,\qquad s\in\Bbb R,
\ee
then $f_{_B}\in C_0^\infty(\Bbb R)$.
Choose $\phi^0=\psi^0\in X_M$ and define
$\phi^n\in X_M$ ($n\ge1$) as
 \begin{equation}\label{eq:appro:sec4}
i\delta_t^+\phi_{j}^{n}=\left[-\frac12\delta_x^2+V_{j}
+\frac{\beta}{2}\left(f_{_B}(|\phi_{j}^{n+1}|^2)
+f_{_B}\left(|\phi_{j}^{n}|^2\right)\right)
\right]\phi_{j}^{n+1/2},\quad j\in{\calT}_M,
 \end{equation}
where
\be
\phi^{n+1/2}_{j}=\frac{1}{2}(\phi^{n+1}_{j}+\phi^n_{j}),\qquad
j\in{\calT}_M^0, \quad n\ge0.\ee In fact, $\phi^n$ can be viewed
as another approximation of $\psi(x,t_n)$. Define the `error'
function $\hat{e}^n\in X_M$ ($n\ge0$) with components $\hat{e}_j^n=\psi(x_j,t_n)-\phi_j^n$ for $j\in {\calT}_M^0$ and
denote the local truncation error $\hat{\eta}^n\in X_M$ as
\begin{eqnarray}\label{eq:tautil:sec4}
\hat{\eta}_{j}^n&:=&i\delta_t^+\psi(x_j,t_n)-\left[-\frac12\delta_x^2+V_{j}
+\frac{\beta}{2}\left(f_{_B}(|\psi(x_j,t_{n+1})|^2)+f_{_B}(|\psi(x_j,t_n)|^2)\right)\right]
\nn\\&&\times\frac{\psi(x_j,t_n)+\psi(x_j,t_{n+1})}{2}, \quad
 j\in\calT_M,\ n\ge0.
\end{eqnarray}
Similar as Lemma \ref{eq:localerr1:sec4}, we can prove
\be \|\hat{\eta}^n\|_\infty+\|\delta_x^+\hat{\eta}^n\|_\infty\lesssim \tau^2+h^2,
\quad 0\le n\le \frac{T}{\tau}-1.\ee
Subtracting (\ref{eq:tautil:sec4}) from (\ref{eq:appro:sec4}), we obtain
\begin{eqnarray}\label{eq:erreq1:sec4}
i\delta^+_t\hat{e}_{j}^n&=&\left[-\frac12\delta_x^2+V_{j}
\right]\hat{e}_{j}^{n+1/2}+\hat{\xi}_{j}^n+\hat{\eta}_{j}^n,\quad
j\in\calT_M,\ n\ge0,
\end{eqnarray}
where $\hat{\xi}^n\in X_M$ defined as
\begin{align}
\hat{\xi}_{j}^n=&\frac{\beta}{2}\left(|\psi(x_j,t_{n+1})|^2
+|\psi(x_j,t_n)|^2\right)\frac{\psi(x_j,t_{n+1})
+\psi(x_j,t_n)}{2}\\&-\frac{\beta}{2}\left(f_{_B}(|\phi_{j}^{n+1}|^2)
+f_{_B}(|\phi_{j}^n|^2)\right)\phi_j^{n+1/2}, \ j\in\calT_M^0.
\end{align}
Recalling that $f_{_B}\in C_0^\infty$, for $0\leq n\leq\frac{T}{\tau}-1$,
we can show that \cite{BaoCai}
\be\label{eq:lipcn:sec4}
\|\hat{\xi}^n\|_2\lesssim \sum\limits_{k=n,n+1}
\|\hat{e}^k\|_2,\quad
\|\delta_x^+\hat{\xi}^n\|_2\lesssim \sum\limits_{k=n,n+1}\left(\|\hat{e}^k\|_2+
\|\delta_x^+\hat{e}^k\|_2\right).
\ee
This property is similar to that of the SIFD case (\ref{eq:eexi:sec4}). The same proof for Theorem \ref{thm:sifd:sec4} works. However, there is no need to use mathematical induction here, as $C_0^\infty$ property of $f_{_B}$ guarantees (\ref{eq:lipcn:sec4}). For simplicity, we omit the details here. Finally, we  derive the following for sufficiently small $h$ and $\tau$,
\be\label{eq:escnfd:sec4}
\|\hat{e}^n\|_2+\|\delta_x^+\hat{e}^n\|_2\lesssim h^2+\tau^2,\quad \|\phi^n\|_\infty\leq M_1+1,\quad 0\leq n\leq \frac{T}{\tau}.
\ee
From (\ref{eq:escnfd:sec4}), we know that (\ref{eq:appro:sec4}) collapses to CNFD (\ref{eq:cnfd:sec4}), i.e., $\psi^n=\phi^n$. Thus, we prove error estimates (\ref{eq:cnfdo:sec4}) for CNFD (\ref{eq:cnfd:sec4}).

Again, for the case of assumption (\ref{assum:fd:sec4}) and $V\in C^1$ without further assumptions, we will lose half order convergence rate in the semi-$H^1$ norm.
\end{proof}
\begin{remark} If the cubic nonlinear term
$\beta|\psi|^2\psi$ in (\ref{eq:sdge1d:sec4}) is replaced by a general
nonlinearity $f(|\psi|^2)\psi$, the numerical discretization CNFD
and its error estimates in $l^2$-norm,  $l^\infty$-norm and discrete
$H^1$-norm are still valid provided that the nonlinear real-valued
function $f(\rho)\in C^3([0,\infty))$. The higher dimensional case (2D or 3D) is the same as Remark \ref{rmk:extension:sec4}.
\end{remark}

\subsection{Error estimates for TSSP}
From now on, we investigate the error bounds for time-splitting method. In the last decade, there have been many studies  on the analysis of the splitting error  for Schr\"odinger equations \cite{Besse,Lubich,Thalhammer,Neuhauser,Debussche}. For NLSE, Besse et al. obtained order estimates for the Strang splitting error \cite{Besse}. Later, Lubich introduced  formal Lie derivatives to estimate the Strang splitting error \cite{Lubich}. The formal Lie calculus enables a systematical approach for studying splitting schemes.

In our consideration of the TSSP (\ref{eq:tssp:sec4}) for solving the GPE (\ref{eq:sdge1d:sec4}), we will restrict ourselves to certain subspaces of $H_0^1(U)$. Let $\phi(x)\in H^m(U)\cap H_0^1(U)$ be represented in sine series as
\be
\phi(x)=\sum\limits_{l=1}^{+\infty}\widehat{\phi}_l\sin(\mu_l(x-a)),\qquad x\in U=(a,b),
\ee
with $\widehat{\phi}_l$ given in (\ref{eq:sineps:sec3}), define the subspace $H_{\sin}^m(U)\subset H^m\cap H_0^1$  equipped with the norm
\be
\|\phi\|_{H_{\sin}^m(U)}=\left(\sum\limits_{l=1}^\infty \mu_l^{2m}|\widehat{\phi}_l|^2\right)^{\frac12},
\ee
which is equivalent to the $H^m$ norm in this subspace. We notice that $H_{\sin}^m(U)=\{\phi\in H^m(U)| \p_x^{2k}\phi(a)=\p_x^{2k}\phi(b)=0,\quad 0\leq 2k<m,\,k\in\Bbb Z\}$.   It is easy to see that $e^{it\Delta}$ will preserve the $H_{\sin}^m$ norm.

The TSSP (\ref{eq:tssp:sec4}) can be thought as the full discretization of the following semi-discretization scheme. Let $\psi^{[n]}(x)$ be the numerical approximation of $\psi(x,t_n)$. From time $t=t_n$ to $t=t_{n+1}$, we use the standard Strang splitting:
\be\label{eq:tsspsemi:sec4}\begin{split}
&\psi^{\{1\}}(x)=
  e^{i\tau\Delta/4}\psi^{[n]}(x),\quad
  \psi^{\{2\}}(x)=e^{-i(V(x)+\beta |\psi^{\{1\}}(x)|^2)\tau}\;\psi^{\{1\}}(x), \\
&\psi^{[n+1]}(x)=e^{i\tau\Delta/4}\psi^{\{2\}}(x),
 \qquad x\in U.
\end{split}
\ee
In order to guarantee that $e^{-i\tau(V+\beta|\phi|^2)}\phi$ is a flow in $H_{\sin}^m$, we make the following assumptions on the potential $V(x)$
\be\label{assum:pot:sec4}
V(x)\in H^m(U) \quad\text{and}\quad\p_xV(x)\in H_{\sin}^{m-1}(U),\quad m\ge1.
\ee
To derive the optimal error bounds, we make the following assumption
on the exact solution $\psi$, i.e., let $0<T<T_{\rm max}$ with
$T_{\rm max}$ the maximal existing time of the solution \cite{Cazenave,Sulem}:
\be \label{assum:sp:sec4}
\psi\in C\left([0,T];H^{m}(U)\cap
H_0^1(U)\right).
\ee
It is easy to get that $\psi$ is actually in $H_{\sin}^m$ under assumptions (\ref{assum:pot:sec4}) and (\ref{assum:sp:sec4}), by showing that $\p_x^{2k}\psi|_{\Gamma}=0$ ($0\leq 2k<m$) from the GPE (\ref{eq:sdge1d:sec4}).

Now, we could state the error estimates for the TSSP (\ref{eq:tssp:sec4}).
\begin{theorem}\label{thm:tssp:sec4} Let $\psi^n\in X_M$ be the numerical approximation obtained by the TSSP (\ref{eq:tssp:sec4}). Under assumptions (\ref{assum:pot:sec4}) and (\ref{assum:sp:sec4}),
there exist constants $0<\tau_0,h_0\leq1$, such that if $0<h\leq h_0$, $0<\tau\leq\tau_0$ and $m\ge5$, we have
\begin{equation}\label{eq:tsspes:sec4}\begin{split}
&\|\psi(x,t_n)-I_M(\psi^n)(x)\|_{L^2(U)}\lesssim h^{m}+\tau^2,\qquad \|\psi^n\|_\infty\leq M_1+1,
\\ &\|\nabla (\psi(x,t_n)
-I_M(\psi^n)(x))\|_{L^2(U)}\lesssim h^{m-1}+\tau^2,\qquad 0\leq n\leq\frac{T}{\tau},
\end{split}\end{equation}
where the interpolation operator $I_M$ is given in (\ref{eq:interpolation:sec3}) and $M_1=\max_{t\in[0,T]}\|\psi(\cdot,t)\|_{L^\infty}$.
\end{theorem}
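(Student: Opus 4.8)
The plan is to split the global error into a time-discretization error and a spatial-discretization error, treating the two sources separately and then combining them with a triangle inequality. First I would introduce the semi-discrete-in-time flow $\psi^{[n]}$ defined by the Strang splitting (\ref{eq:tsspsemi:sec4}), and establish the second-order local error estimate for one step: writing the exact flow as $\Phi_\tau$ and the split flow as $S_\tau = e^{i\tau\Delta/4}\,\Phi^{N}_\tau\,e^{i\tau\Delta/4}$ (where $\Phi^N$ is the exact nonlinear-plus-potential flow, which is explicitly solvable since it preserves $|\psi|$), one shows $\|\Phi_\tau\psi - S_\tau\psi\|_{H^{m-2}} \lesssim \tau^3 (1+\|\psi\|_{H^m})$ using Taylor expansion of the two flows and the commutator structure (this is the classical Strang local error, treatable via Lubich's Lie-derivative formalism or Besse's direct approach, both cited in the text). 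The regularity assumptions (\ref{assum:pot:sec4}) and (\ref{assum:sp:sec4}) are exactly what is needed to keep all the intermediate quantities in $H^m_{\sin}$, so that the linear half-steps $e^{i\tau\Delta/4}$ act as isometries on each Sobolev level and the nonlinear step $e^{-i\tau(V+\beta|\phi|^2)}\phi$ maps $H^m_{\sin}$ to itself with a Lipschitz-type bound.

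Next I would propagate the local error to a global bound. For this one needs stability of the numerical flow: the split flow $S_\tau$ is Lipschitz on bounded sets of $H^{m-2}$ (again using that $V,|\phi|^2$ multiplication is bounded on these Sobolev spaces and the linear part is unitary), with Lipschitz constant $1 + O(\tau)$. A standard Lady Windermere's fan / discrete Gronwall argument then gives $\|\psi(\cdot,t_n) - \psi^{[n]}\|_{H^{m-2}} \lesssim \tau^2$ for $0\le n\le T/\tau$, and by taking $m$ large enough (the hypothesis $m\ge5$ is chosen so that $m-2\ge3$, enough to control $H^1$ and $L^\infty$ norms via Sobolev embedding) one also gets the $L^2$ and $H^1$ bounds with the stated orders, as well as an a priori $L^\infty$ bound $\|\psi^{[n]}\|_\infty\le M_1 + \tfrac12$ for $\tau$ small.

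Then I would handle the spatial discretization: the full scheme (\ref{eq:tssp:sec4}) is the sine-pseudospectral collocation of (\ref{eq:tsspsemi:sec4}), so $I_M(\psi^n)$ differs from $\psi^{[n]}$ by the aliasing/interpolation error of the discrete sine transform. Using the standard spectral approximation estimate $\|\psi^{[n]} - I_M\psi^{[n]}\|_{L^2}\lesssim h^m\|\psi^{[n]}\|_{H^m_{\sin}}$ and $\|\nabla(\psi^{[n]} - I_M\psi^{[n]})\|_{L^2}\lesssim h^{m-1}\|\psi^{[n]}\|_{H^m_{\sin}}$, together with the fact that the pseudospectral nonlinear step commits only an $O(h^m)$ aliasing error per step that accumulates to $O(h^m)$ over $O(1/\tau)$ steps (the usual trick is to track $P_M\psi^{[n]} - I_M\psi^n$ and use stability in $L^2$), one gets the spatial contribution. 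Combining via the triangle inequality $\|\psi(\cdot,t_n) - I_M\psi^n\|\le \|\psi(\cdot,t_n)-\psi^{[n]}\|+\|\psi^{[n]}-I_M\psi^{[n]}\|+\|I_M\psi^{[n]}-I_M\psi^n\|$ yields (\ref{eq:tsspes:sec4}); the $L^\infty$ bound follows from the $H^1$ bound and the discrete Sobolev inequality in 1D. The main obstacle I expect is the careful bookkeeping of regularity levels and the stability estimate for the \emph{pseudospectral} (collocation, not Galerkin) nonlinear substep — showing that aliasing errors do not destroy the $O(h^m)$ accuracy requires controlling $\|I_M(|\phi|^2\phi) - P_M(|\phi|^2\phi)\|$ and using a bootstrap/induction on $n$ to keep the numerical solution in a bounded ball of $H^m_{\sin}$ so that all the nonlinear estimates apply uniformly in $n$ and $\tau$.
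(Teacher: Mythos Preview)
Your proposal is essentially the paper's own argument: split into the semi-discrete Strang error (Theorem~\ref{thm:tsspsemi:sec4}, proved via Lubich's Lie-commutator local error in Lemma~\ref{lem:tssploc:sec4} plus the $H^1_0$-conditional stability Lemma~\ref{lem:sta:sec4} and Gronwall), then compare $I_M(\psi^n)$ with $P_M(\psi^{[n]})$ by induction using the per-step aliasing bound $O(\tau h^m)$.  One correction to your derivative bookkeeping: the Strang local error involves the double commutator $[\hat T,[\hat T,\hat V]]$ with $\hat T=\tfrac{i}{2}\Delta$, so the $\tau^3$ local error loses \emph{four} derivatives, not two --- the paper bounds $\|[\hat T,[\hat T,\hat V]](\psi)\|_{H^1_0}\lesssim \|\psi\|_{H^5_{\sin}}^3$, and \emph{that} is why $m\ge5$ (to get the local error in $H^1$, not because $m-2\ge3$ is needed for Sobolev embedding in 1D).
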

By Parseval's identity, we can easily get the following.
\begin{lemma}(Conservation of mass) The TSSP (\ref{eq:tssp:sec4}) conserves the total mass, i.e.,
\be
\|\psi^n\|_2^2=\|\psi^0\|_2^2,\qquad n\ge1.
\ee
\end{lemma}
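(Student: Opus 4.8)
The plan is to show that each of the three substeps making up one time step of the TSSP scheme (\ref{eq:tssp:sec4}) is an isometry of the discrete space $X_M$ for the norm $\|\cdot\|_2$ in (\ref{eq:fdnorm:sec3}), and then chain them together. First I would record the discrete Parseval identity for the sine transform: for $v=(v_j)_{j\in\calT_M^0}\in X_M$ with sine coefficients $\widetilde v_l$ as in (\ref{eq:sinetran:sec3}), one has $\|v\|_2^2=h\sum_{j=0}^{M-1}|v_j|^2=\frac{b-a}{2}\sum_{l=1}^{M-1}|\widetilde v_l|^2$, which follows from $h=(b-a)/M$, from $\mu_l(x_j-a)=jl\pi/M$, and from the discrete orthogonality relations $\frac{2}{M}\sum_{j=1}^{M-1}\sin(jl\pi/M)\sin(jl'\pi/M)=\delta_{ll'}$ of the sine basis.

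Second, I would treat the two linear substeps. In the sine-frequency variables the map $\psi^n\mapsto\psi^{(1)}$ (and likewise $\psi^{(2)}\mapsto\psi^{n+1}$) is simply multiplication $\widetilde{(\psi^{(1)})}_l=e^{-i\tau\mu_l^2/4}\,\widetilde{(\psi^n)}_l$; since $|e^{-i\tau\mu_l^2/4}|=1$ for every $l$, Parseval gives $\|\psi^{(1)}\|_2^2=\frac{b-a}{2}\sum_{l=1}^{M-1}|e^{-i\tau\mu_l^2/4}|^2|\widetilde{(\psi^n)}_l|^2=\|\psi^n\|_2^2$, and similarly $\|\psi^{n+1}\|_2=\|\psi^{(2)}\|_2$. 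The boundary conditions $\psi_0^{n+1}=\psi_M^{n+1}=0$ built into (\ref{eq:tssp:sec4}) keep the iterates in $X_M$, so the norm is well defined at every stage.

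Third, for the nonlinear substep $\psi_j^{(2)}=e^{-i(V(x_j)+\beta|\psi_j^{(1)}|^2)\tau}\psi_j^{(1)}$, the multiplier at each grid point $x_j$ has modulus one (here one uses that $V(x_j)$ and $\beta$ are real), so $|\psi_j^{(2)}|=|\psi_j^{(1)}|$ for all $j$, whence $\|\psi^{(2)}\|_2=\|\psi^{(1)}\|_2$ term by term. Combining the three identities yields $\|\psi^{n+1}\|_2=\|\psi^{(2)}\|_2=\|\psi^{(1)}\|_2=\|\psi^n\|_2$ for every $n\ge0$, and an immediate induction gives $\|\psi^n\|_2^2=\|\psi^0\|_2^2$ for all $n\ge1$. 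There is essentially no genuine obstacle here; the only point deserving a little care is the bookkeeping of normalization constants, so that the forward and inverse discrete sine transforms appearing in (\ref{eq:tssp:sec4}) are exact mutual inverses and Parseval's identity holds with the stated constant.
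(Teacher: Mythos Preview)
Your argument is correct and is exactly the approach the paper has in mind; the paper's proof is simply the one-line remark ``By Parseval's identity, we can easily get the following,'' and your proposal spells out precisely that: the two free-Schr\"odinger substeps are unitary in the discrete sine basis by Parseval, and the pointwise phase multiplication in the nonlinear substep trivially preserves moduli. Nothing more is needed.
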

The proof of Theorem \ref{thm:tssp:sec4} is separated into two steps. The first step is to establish the error estimates for semi-discretization (\ref{eq:tsspsemi:sec4}). Then we analyze the error between the semi-discretization (\ref{eq:tsspsemi:sec4}) and the full discretization TSSP (\ref{eq:tssp:sec4}).

The $H^1$ error bound for the semi-discretization (\ref{eq:tsspsemi:sec4}) is the following.
\begin{theorem} \label{thm:tsspsemi:sec4} Let $\psi^{[n]}(x)$ be the solution given by the splitting scheme (\ref{eq:tsspsemi:sec4}). Under assumption (\ref{assum:pot:sec4}) and (\ref{assum:sp:sec4}) with $m\ge5$, we have $\psi^{[n]}(x)\in H_{\sin}^m$ and
\be\label{eq:tsspsemi:sec47}
\|\psi^{[n]}(x)-\psi(x,t_n)\|_{L^2}+\|\nabla(\psi^{[n]}(x)-\psi(x,t_n))\|_{L^2}\lesssim \tau^2,\quad 0\leq n\leq\frac{T}{\tau}.
\ee
 \end{theorem}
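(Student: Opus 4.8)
The plan is to follow the Lie--calculus framework of Lubich for split-step integrators of nonlinear Schr\"odinger equations, combined with the $H^m_{\sin}$ regularity structure already set up before the statement. Write the GPE (\ref{eq:sdge1d:sec4}) as $\partial_t\psi=\mathcal{A}\psi+\mathcal{B}(\psi)$ with $\mathcal{A}\psi=\tfrac{i}{2}\Delta\psi$ and $\mathcal{B}(\psi)=-i(V+\beta|\psi|^2)\psi$, and denote by $\Phi^t$, $\Phi_{\mathcal{A}}^t=e^{it\Delta/2}$ and $\Phi_{\mathcal{B}}^t:\psi\mapsto e^{-it(V+\beta|\psi|^2)}\psi$ the corresponding flows; $\Phi_{\mathcal{B}}^t$ is well defined because $|\Phi_{\mathcal{B}}^t\psi|=|\psi|$. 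Then (\ref{eq:tsspsemi:sec4}) reads $\psi^{[n+1]}=\mathcal{S}^\tau\psi^{[n]}$ with $\mathcal{S}^\tau=\Phi_{\mathcal{A}}^{\tau/2}\circ\Phi_{\mathcal{B}}^{\tau}\circ\Phi_{\mathcal{A}}^{\tau/2}$, and one must compare $(\mathcal{S}^\tau)^n\psi_0$ with $\Phi^{n\tau}\psi_0=\psi(\cdot,t_n)$, which by assumption (\ref{assum:sp:sec4}) (and the remark preceding the theorem) stays in a fixed ball of $H_{\sin}^m(U)$ on $[0,T]$.

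First I would record the structural facts. (i) Propagation of regularity: $\Phi_{\mathcal{A}}^t$ is an isometry on every $H_{\sin}^s(U)$; and since $H^m(U)$ is a Banach algebra in one dimension for $m\ge1$ and, under (\ref{assum:pot:sec4}), multiplication by $e^{-i\tau(V+\beta|\psi|^2)}$ respects the even-derivative boundary conditions defining $H_{\sin}^m$, one has $\|\Phi_{\mathcal{B}}^\tau v\|_{H_{\sin}^m}\le(1+C\tau)\|v\|_{H_{\sin}^m}+C\tau$ with $C$ depending only on $\|v\|_{H^1}$ (hence, via $H^1\hookrightarrow L^\infty$, on a lower-order norm). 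A discrete Gronwall argument then yields a uniform bound $\sup_{0\le n\le T/\tau}\|\psi^{[n]}\|_{H_{\sin}^m}\le C(T,\|\psi_0\|_{H_{\sin}^m},V)$, provided an $O(1)$ bound on $\|\psi^{[n]}\|_{H^1}$ is in hand; the latter will come out of the error estimate itself, so this is closed by a bootstrap. (ii) Stability: $\mathcal{S}^\tau$ preserves the $L^2$ norm since $\Phi_{\mathcal{A}}^t$ and $\Phi_{\mathcal{B}}^t$ do, and on any ball $\{\|v\|_{H^m}\le R\}$ one has the local Lipschitz bound $\|\mathcal{S}^\tau v-\mathcal{S}^\tau w\|_{H^1}\le(1+C_R\tau)\|v-w\|_{H^1}$, again because $\Phi_{\mathcal{A}}^t$ is an $H^1$-isometry and $\Phi_{\mathcal{B}}^t$ is $(1+C\tau)$-Lipschitz on bounded sets.

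The heart of the argument is the local error estimate: for $v$ in a bounded subset of $H_{\sin}^m$ with $m\ge5$,
\[
\|\mathcal{S}^\tau v-\Phi^\tau v\|_{H^1(U)}\le C\,\tau^3 .
\]
To prove it I would expand both flows by the iterated Duhamel (variation-of-constants) formula, equivalently by the formal Lie series for $e^{\tau(\mathrm{D}_{\mathcal{A}}+\mathrm{D}_{\mathcal{B}})}$ versus $e^{(\tau/2)\mathrm{D}_{\mathcal{A}}}e^{\tau\mathrm{D}_{\mathcal{B}}}e^{(\tau/2)\mathrm{D}_{\mathcal{A}}}$. The $O(\tau)$ terms agree and, since the Strang composition is symmetric, the $O(\tau^2)$ terms agree as well, so the leading remainder is $O(\tau^3)$ and is a combination of the double commutators $[\mathrm{D}_{\mathcal{A}},[\mathrm{D}_{\mathcal{A}},\mathrm{D}_{\mathcal{B}}]]$ and $[\mathrm{D}_{\mathcal{B}},[\mathrm{D}_{\mathcal{B}},\mathrm{D}_{\mathcal{A}}]]$ integrated along the exact and the split trajectories. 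Each application of $\mathrm{D}_{\mathcal{A}}$ costs two spatial derivatives, so $[\mathrm{D}_{\mathcal{A}},[\mathrm{D}_{\mathcal{A}},\mathrm{D}_{\mathcal{B}}]]v$ carries up to four derivatives on $v$ and up to three on $V$ (the latter controlled by (\ref{assum:pot:sec4})); measuring the remainder in $H^1$ then requires $v\in H^5$ and $V\in H^5$ with $\partial_x V\in H_{\sin}^4$, which is precisely $m\ge5$. Since all trajectories involved stay in a fixed $H^m$-ball (exact one by (\ref{assum:sp:sec4}), numerical one by step (i)), the constant $C$ is uniform in $\tau$.

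Finally I would assemble the global estimate by a Lady Windermere's fan telescoping: from $\psi(\cdot,t_n)-\psi^{[n]}=\sum_{k=0}^{n-1}(\mathcal{S}^\tau)^{\,n-1-k}\bigl(\Phi^\tau-\mathcal{S}^\tau\bigr)\psi(\cdot,t_k)$, applying the Lipschitz stability of (ii) to each propagated local error from the displayed bound and invoking the discrete Gronwall lemma, one gets $\|\psi(\cdot,t_n)-\psi^{[n]}\|_{H^1}\le C\,e^{CT}\,(n\tau)\,\tau^2\le C(T)\tau^2$ for $0\le n\le T/\tau$; the $L^2$ bound follows a fortiori. The only delicate point is that the local error and Lipschitz bounds were established on an $H^m$-ball, so the conclusion must be propagated by induction on $n$: choose $\tau_0$ small enough that the accumulated $O(\tau^2)$ error keeps $\psi^{[n]}$ inside twice the $H^m$-ball containing $\{\psi(\cdot,t):0\le t\le T\}$, which simultaneously supplies the $H^1$ bound feeding the regularity bootstrap in (i). The main obstacle is exactly the local error estimate --- controlling the iterated commutators of the non-autonomous cubic nonlinearity and verifying that four derivatives on $\psi$ (hence $m\ge5$) suffice --- for which I would follow Lubich's treatment of the cubic NLSE and the order analysis of Besse--Bid\'egaray--Descombes.
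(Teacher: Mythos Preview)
Your proposal is correct and follows essentially the same route as the paper: both invoke Lubich's Lie-calculus framework, establishing $H^1$-conditional stability of $\mathcal{S}^\tau$ (the paper's Lemma~\ref{lem:sta:sec4}), an $O(\tau^3)$ local error controlled by the double commutator $[\hat{T},[\hat{T},\hat{V}]]$ requiring $H^5_{\sin}$ regularity (the paper's Lemma~\ref{lem:tssploc:sec4}, whose proof uses the midpoint-quadrature/Peano-kernel decomposition you allude to), and then closing by induction \`a la Lady Windermere. One small caveat: in your step (i) the constant in $\|\Phi_{\mathcal{B}}^\tau v\|_{H^m_{\sin}}\le(1+C\tau)\|v\|_{H^m_{\sin}}+C\tau$ will in fact depend polynomially on $\|v\|_{H^m}$ (not just $\|v\|_{H^1}$), since differentiating $e^{-i\tau(V+\beta|v|^2)}$ $m$ times produces $m$ derivatives of $|v|^2$; this does not harm the argument---the uniform $H^m$ bound on $\psi^{[n]}$ still follows from a nonlinear discrete Gronwall over $[0,T]$---but you should state the dependence accurately.
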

The semi-discretization (\ref{eq:tsspsemi:sec4}) can be simplified as
\be
\psi^{[n+1]}(x)=S_\tau(\psi^{[n]}(x)).
\ee
Using the fact $L^\infty(U)\subset H^1(U)$ and following \cite{Lubich}, we would easily obtain the $H^1_0$ stability of the splitting (\ref{eq:tsspsemi:sec4}).
\begin{lemma}\label{lem:sta:sec4} ($H_0^1$-conditional $L^2$- and $H_0^1$ stability). If $\psi,\phi\in H_{0}^1(U)$ with
\be
\|\psi\|_{H_0^1}\leq M_2,\quad \|\phi\|_{H^0_1} \leq M_2,
\ee
then we have
\be\begin{split}
\|S_\tau(\psi)-S_\tau(\phi)\|_{L^2}\leq e^{c_0\tau}\|\psi-\phi\|_{L^2},\\
\|S_\tau(\psi)-S_\tau(\phi)\|_{H^1_0}\leq e^{c_1\tau}\|\psi-\phi\|_{H^1_0},
\end{split}
\ee
where $c_0$ and $c_1$ depend on $M_2$, $V(x)$ and $\beta$.
\end{lemma}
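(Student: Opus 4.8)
\textbf{Proof proposal for Lemma \ref{lem:sta:sec4}.}

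The plan is to unfold the map $S_\tau$ into its three elementary factors and estimate each one, treating the kinetic propagator $e^{i\tau\Delta/4}$ as an exact isometry and the nonlinear-potential factor $e^{-i\tau(V+\beta|\cdot|^2)}$ as a pointwise phase multiplication that is Lipschitz on the relevant ball. Write $N_\tau(\phi)(x)=e^{-i\tau(V(x)+\beta|\phi(x)|^2)}\phi(x)$, so that $S_\tau=e^{i\tau\Delta/4}\circ N_\tau\circ e^{i\tau\Delta/4}$. Since $e^{i\tau\Delta/4}$ is unitary on $L^2$ and preserves the $H^1_0$ (and $H^m_{\sin}$) norm, it suffices to prove the two claimed inequalities for $N_\tau$ alone, provided we also know that the intermediate states $e^{i\tau\Delta/4}\psi$, $e^{i\tau\Delta/4}\phi$ still lie in a ball of $H^1_0$ of controlled radius — which they do, by the norm-preservation just noted, with the same bound $M_2$. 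This reduces everything to estimating $\|N_\tau(\psi)-N_\tau(\phi)\|_{L^2}$ and $\|N_\tau(\psi)-N_\tau(\phi)\|_{H^1_0}$ in terms of $\|\psi-\phi\|$ in the corresponding norm.

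First I would handle the $L^2$ bound. Write
\[
N_\tau(\psi)-N_\tau(\phi)=\left(e^{-i\tau(V+\beta|\psi|^2)}-e^{-i\tau(V+\beta|\phi|^2)}\right)\psi
+e^{-i\tau(V+\beta|\phi|^2)}(\psi-\phi).
\]
The second term has modulus exactly $|\psi-\phi|$ pointwise. For the first term, use $|e^{i\theta_1}-e^{i\theta_2}|\leq|\theta_1-\theta_2|$ to bound its modulus by $\tau\beta\,\big||\psi|^2-|\phi|^2\big|\,|\psi|=\tau|\beta|\,|\psi|\,(|\psi|+|\phi|)\,|\psi-\phi|$. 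Here I would invoke the embedding $H^1_0(U)\hookrightarrow L^\infty(U)$ valid in one dimension (this is the only place the dimension restriction enters, exactly as indicated in the paragraph preceding the lemma), so that $\||\psi|(|\psi|+|\phi|)\|_{L^\infty}\leq C M_2^2$; taking $L^2$ norms gives $\|N_\tau(\psi)-N_\tau(\phi)\|_{L^2}\leq(1+c_0\tau)\|\psi-\phi\|_{L^2}$ with $c_0$ depending on $|\beta|$, $M_2$, and the embedding constant, and $1+c_0\tau\leq e^{c_0\tau}$. The $H^1_0$ estimate proceeds the same way after differentiating: $\nabla(N_\tau(\psi))$ produces, besides the terms analogous to the above, a contribution $-i\tau(\nabla V+2\beta\,\mathrm{Re}(\bar\psi\nabla\psi))\,e^{-i\tau(\cdots)}\psi$, so the difference $\nabla(N_\tau(\psi)-N_\tau(\phi))$ contains terms of the schematic form $(\nabla\psi-\nabla\phi)$ (with unimodular coefficient), $\tau\cdot(\text{bounded})\cdot(\psi-\phi)$, and $\tau\cdot(\text{bounded})\cdot(\nabla\psi-\nabla\phi)$; each bounded factor is controlled by $\|V\|_{H^m}$, $|\beta|$, $M_2$ and the $L^\infty$ embedding applied to $\psi,\phi$ and to $\nabla\psi,\nabla\phi$ — for the last, note $\|\nabla\psi\|_{L^\infty}$ is \emph{not} bounded by $\|\psi\|_{H^1}$, but we only need it multiplied by $\tau$ and paired in $L^2$ with $\psi-\phi$, so a Cauchy–Schwarz splitting $\|\nabla\psi\cdot(\psi-\phi)\|_{L^2}\leq\|\nabla\psi\|_{L^2}\|\psi-\phi\|_{L^\infty}\leq M_2\,C\|\psi-\phi\|_{H^1_0}$ suffices. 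Collecting, $\|N_\tau(\psi)-N_\tau(\phi)\|_{H^1_0}\leq(1+c_1\tau)\|\psi-\phi\|_{H^1_0}\leq e^{c_1\tau}\|\psi-\phi\|_{H^1_0}$.

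The main obstacle — and the reason the lemma is stated with an ``$H^1_0$-conditional'' qualifier — is precisely that the nonlinear factor $N_\tau$ is not globally Lipschitz on $H^1_0$: the cubic term forces one to absorb an $L^\infty$ norm of $\psi$ and $\phi$, which is only finite because of the a priori bound $\|\psi\|_{H^1_0},\|\phi\|_{H^1_0}\leq M_2$ together with the 1D Sobolev embedding. In higher dimensions this embedding fails and one would need either a stronger a priori bound or a cutoff of the nonlinearity (as was done for CNFD via $f_{_B}$ earlier in the paper). A secondary technical point is bookkeeping the intermediate states: one must confirm $\|e^{i\tau\Delta/4}\psi\|_{H^1_0}\leq M_2$ so the Lipschitz estimate for $N_\tau$ applies at the middle stage; this is immediate from unitarity of $e^{i\tau\Delta/4}$ on $H^1_0$, which itself follows by expanding in the sine basis and noting the propagator acts as multiplication by unimodular phases $e^{-i\tau\mu_l^2/4}$. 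Beyond these, the argument is a routine chain of triangle inequalities and I would not grind through the constants explicitly.
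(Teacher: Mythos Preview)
Your proposal is correct and follows precisely the approach the paper indicates: the paper does not spell out a proof but simply remarks that, using the 1D embedding $H^1_0(U)\hookrightarrow L^\infty(U)$ and following Lubich \cite{Lubich}, the stability estimate follows easily. Your decomposition $S_\tau=e^{i\tau\Delta/4}\circ N_\tau\circ e^{i\tau\Delta/4}$, the unitarity of the kinetic factor, and the pointwise Lipschitz estimate for $N_\tau$ via the Sobolev embedding are exactly the ingredients that reference supplies.
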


To prove Theorem \ref{thm:tsspsemi:sec4}, we need  the local error, which is the key point in analyzing time-splitting methods.
\begin{lemma}\label{lem:tssploc:sec4} If $\psi_0\in H_{\sin}^5$, then the error after one step of (\ref{eq:tsspsemi:sec4}) in $H^1_0$ norm is given by
\be
\|\psi^{[1]}(x)-\psi(x,\tau)\|_{L^2}+\|\nabla(\psi^{[1]}(x)-\psi(x,\tau))\|_{L^2}\leq C \tau^3,
\ee
where $C$ only depends on $\|\psi_0\|_{H_{\sin}^5}$, $V(x)$ and $\beta$.
\end{lemma}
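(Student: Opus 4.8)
The plan is to follow the now-standard Lie-calculus approach to splitting errors (as in Lubich and Besse--Bid\'egaray--Descombes), adapted to the sine-spectral setting. First I would recast the GPE (\ref{eq:sdge1d:sec4}) in the semigroup form $\partial_t\psi = A\psi + B(\psi)$, with the \emph{linear} part $A\psi = \tfrac{i}{2}\psi_{xx}$ generating the group $e^{tA}=e^{it\Delta/2}$, which acts as an isometry on every space $H_{\sin}^k(U)$, and the \emph{nonlinear/potential} part $B(\psi)=-i\big(V(x)+\beta|\psi|^2\big)\psi$. The key structural observation (already used in deriving (\ref{eq:sstepp:sec4})) is that the $B$-subflow $\Phi_B^t$ is explicit, $\big(\Phi_B^t\phi\big)(x)=e^{-it(V(x)+\beta|\phi(x)|^2)}\phi(x)$, since it leaves $|\psi|$ pointwise invariant; under assumption (\ref{assum:pot:sec4}) and the fact that $H^k(U)$ is a Banach algebra for $k\ge1$ in one dimension, $\Phi_B^t$ maps a ball of $H_{\sin}^m(U)$ into $H_{\sin}^m(U)$ with a bound uniform for $t\in[0,\tau_0]$ — here the vanishing-at-the-boundary structure of $H_{\sin}^m$ is preserved because $V$ and $\partial_xV$ lie in $H^m$ and $H_{\sin}^{m-1}$, respectively. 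Writing $S_\tau = e^{(\tau/2)A}\circ\Phi_B^\tau\circ e^{(\tau/2)A}$ for one Strang step and $\Phi^\tau$ for the exact flow of the GPE, Lemma~\ref{lem:tssploc:sec4} asks for $\|S_\tau(\psi_0)-\Phi^\tau(\psi_0)\|_{H^1_0}\lesssim\tau^3$.

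Second, I would expand both $\Phi^\tau(\psi_0)$ and $S_\tau(\psi_0)$ in powers of $\tau$ by iterating Duhamel's formula. For the exact flow, $\psi(\tau)=e^{\tau A}\psi_0+\int_0^\tau e^{(\tau-s)A}B(\psi(s))\,ds$, and substituting this identity into itself twice produces an expansion $\psi(\tau)=e^{\tau A}\psi_0+\tau\,e^{\tau A}B(\psi_0)+\tfrac{\tau^2}{2}(\cdots)+R_{\mathrm{ex}}$ in which the $\tau^0,\tau^1,\tau^2$ coefficients are explicit in $A$, the Fr\'echet derivative $B'(\psi_0)$, and $B(\psi_0)$, and $R_{\mathrm{ex}}$ is a third-order remainder given by iterated integrals. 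The same is done for $S_\tau$, expanding $e^{(\tau/2)A}=\mathrm{Id}+\tfrac{\tau}{2}A+\tfrac{\tau^2}{8}A^2+\cdots$ and $\Phi_B^\tau\phi=\phi+\tau B(\phi)+\tfrac{\tau^2}{2}B'(\phi)B(\phi)+\cdots$ and composing. A direct (if lengthy) matching shows that the $\tau^0$, $\tau^1$ and $\tau^2$ terms of the two expansions coincide — this is precisely the second-order consistency of Strang splitting, and it can be verified at the level of the formal Lie derivatives $D_A,D_B$, where the discrepancy first appears at order $\tau^3$ through iterated brackets such as $D_A^2D_B$, $D_AD_BD_A$ and $D_B^2D_A$ evaluated along the flow.

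Third — the analytic heart — I would bound the surviving $\tau^3$ remainder in $H^1$. Each such term is of the schematic form $\tau^3$ times a spatial differential operator of total order at most $4$, whose coefficients are built from $V,\partial_xV,\ldots,\partial_x^4V$ and from $\psi_0$ and its derivatives up to order $4$ (the two factors of $A$ contribute up to four derivatives, each $D_B$ contributes at most a first-order derivative of the coefficient $V+\beta|\psi|^2$), applied to $\psi_0$ and then evaluated either at $\psi_0$ or at one of the intermediate states $e^{(\tau/2)A}\psi_0$, $\Phi_B^s\big(e^{(\tau/2)A}\psi_0\big)$, $\psi(s)$ for $s\in[0,\tau]$. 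Using the isometry of $e^{tA}$ on $H_{\sin}^k$, the uniform $H^5$-boundedness of all intermediate states on $[0,\tau_0]$ (from local well-posedness of the GPE in $H^5_{\sin}$ together with the boundedness of $\Phi_B$ established above), and the one-dimensional Sobolev algebra inequalities for products, each term is $\lesssim\tau^3\,\|\psi_0\|_{H_{\sin}^5}^{p}$ for some power $p$, with constant depending only on $\|V\|_{H^5}$ (fixed) and $\beta$; summing the finitely many such terms and absorbing the $O(\tau^4)$ contributions yields $\|S_\tau(\psi_0)-\Phi^\tau(\psi_0)\|_{H^1_0}\le C\tau^3$. The appearance of $m\ge5$ is exactly this count: two Laplacians plus measuring in $H^1$ costs five derivatives. (This local bound then feeds, together with the stability estimate of Lemma~\ref{lem:sta:sec4}, into the telescoping argument for Theorem~\ref{thm:tsspsemi:sec4}.)

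The step I expect to be the main obstacle is the third one together with the rigorous justification of the formal expansions: since $\Phi_B$ is a genuinely nonlinear map rather than a linear semigroup, the "commutators" are Fr\'echet-derivative expressions, and the Taylor expansions must be carried out in the Banach space $H_{\sin}^5(U)$ with explicit control of the nonlinear remainders. Keeping careful track of which derivatives of $\psi_0$ and of $V$ actually occur — and checking that the top-order terms which would demand more than five derivatives cancel between the $A$- and $B$-parts, so that $H^5$ genuinely suffices — is the delicate bookkeeping that upgrades the estimate from the trivial $O(\tau^2)$ to the required $O(\tau^3)$.
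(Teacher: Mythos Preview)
Your plan is sound and would yield the lemma, but it differs in execution from the paper's argument, which follows Lubich's variation-of-constants/quadrature representation more closely. Instead of Taylor-expanding both $S_\tau(\psi_0)$ and $\Phi^\tau(\psi_0)$ and matching coefficients, the paper writes the one-step error directly as
\[
\psi^{[1]}-\psi(\tau)=\Big(\tau f(\tfrac{\tau}{2})-\int_0^\tau f(s)\,ds\Big)+(r_2-r_1),\qquad f(s)=e^{(\tau-s)D_T}D_V e^{sD_T}\,\mathrm{Id}(\psi_0),
\]
so that the principal part is the \emph{midpoint quadrature error} for $f$, hence $\tau^3\int_0^1\mathrm{ker}(\theta)f''(\theta\tau)\,d\theta$ with $f''(s)=e^{i\frac{\tau}{2}\Delta}[\hat T,[\hat T,\hat V]](e^{i\frac{\tau-s}{2}\Delta}\psi_0)$. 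The $H^5$ requirement then drops out of the single estimate $\|[\hat T,[\hat T,\hat V]](\psi)\|_{H^1_0}\lesssim\|\psi\|_{H^5_{\sin}}(1+\|\psi\|_{H^5_{\sin}}^2)$, with the second-order remainders $r_1,r_2$ handled by a further quadrature argument needing only $H^3$. Your direct-matching route reaches the same conclusion but with more bookkeeping, and the commutator structure that explains ``why exactly five derivatives'' is hidden inside the term-by-term cancellations.

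One concrete caution on your third step: do \emph{not} Taylor-expand $e^{(\tau/2)A}$ in powers of $A$ as you wrote --- the integral remainder of that expansion contains $A^3$ and would demand $\psi_0\in H^7_{\sin}$ to control in $H^1$. The paper (and Lubich) avoid this by keeping every $e^{tA}$ intact as an $H^k_{\sin}$-isometry and expanding only the nonlinear $B$-flow; the structure you need then comes from writing $\tau\, e^{(\tau/2)A}B(e^{(\tau/2)A}\psi_0)-\int_0^\tau e^{(\tau-s)A}B(e^{sA}\psi_0)\,ds$ as a midpoint-rule error rather than by expanding the propagators. If you reorganize your expansion this way, your derivative count (two copies of $A$ on a $B$-term, measured in $H^1$) is exactly the double commutator bound, and the proof goes through with $\psi_0\in H^5_{\sin}$ as claimed.
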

We use formal Lie derivative calculus to study the local error in Lemma \ref{lem:tssploc:sec4}. For a general differential equation $\phi_t=F(\phi)$ ($\phi\in H_0^1$), denote the evolution operator $\varphi_F^t(v)$ as the solution at time $t$ with initial value $\phi(0)=v$. The Lie derivative $D_F$ is given by \cite{Lubich}
\be
(D_FG)(v)=\frac{\rd}{\rd t}G(\varphi_F^t(v))\big|_{t=0}=G^\prime(v) F(v),\quad v\in H_0^1(U),
\ee
where $G$ is a vector field on $H^1_0$. Let $\hat{T}(\psi)=\frac{i}{2}\Delta \psi$, $\hat{V}(\psi)=-i\left(V(x)+\beta|\psi|^2\right)\psi$ and $\hat{H}=\hat{T}+\hat{V}$, and denote $D_T$, $D_V$ and $D_H$ as the corresponding Lie derivatives (cf. \cite{Lubich}) for $\hat{T}$, $\hat{V}$ and $\hat{H}$, respectively. Similar to \cite{Lubich}, one can compute the commutator
\begin{align*}
[\hat{T},\hat{V}](\psi)=&\;\hat{T}^\prime(\psi)\hat{V}(\psi)-\hat{V}^\prime(\psi)\hat{T}(\psi)\\
=&\;\frac{i}{2}(-i)\Delta(V(x)\psi+\beta|\psi|^2\psi)-\left[-iV\Delta\psi\frac{i}{2}
-i\beta\left(2|\psi|^2\Delta\psi\frac{i}{2}-\psi^2\Delta\bar{\psi}\frac{i}{2}\right)\right]\\
=&\; \frac12\psi\Delta V+\nabla V\cdot\nabla\psi+\beta \psi^2\Delta\bar{\psi}+\frac32\beta|\nabla\psi|^2\psi
+\frac{\beta}{2}\bar{\psi}\nabla\psi\cdot\nabla\psi.
\end{align*}
Under assumption (\ref{assum:pot:sec4}) on potential $V(x)$ with $m\ge5$, following analogous arguments in \cite{Lubich}, we have
\begin{align}
&\|[\hat{T},\hat{V}](\psi)\|_{H^1_0}\leq C\|\psi\|_{H_{\sin}^3}(1+\|\psi\|_{H_{\sin}^3}^2),\\
&\|[\hat{T},[\hat{T},\hat{V}]](\psi)\|_{H^1_0}\leq C\|\psi\|_{H_{\sin}^5}(1+\|\psi\|_{H_{\sin}^5}^2).
\end{align}
Now, we can prove the local error.
\begin{proof}[Proof of Lemma \ref{lem:tssploc:sec4}] The proof is analogous to that in section 5 of \cite{Lubich}. Here, we outline the main part. First of all, by using variation of constant formula (or Duhamel's principle), one can write the error as
\be
\psi^{[1]}(x)-\psi(x,\tau)=\tau f\left(\frac{\tau}{2}\right)-\int_0^\tau f(s)ds+r_2-r_1,
\ee
where $f(s)=\exp((\tau-s)D_{T})D_V \exp(sD_T){\rm Id}(\psi_0)$ ($\rm{Id}$ the identity operator), and the remainder terms,
\begin{align*}
&r_1=\int_0^\tau\int_0^{\tau-s}\exp((\tau-s-\sigma)D_H)D_V
\exp(\sigma D_T)D_V \exp(s D_T){\rm Id}(\psi_0)\,d\sigma ds,\\
&r_2=\tau^2\int_0^1(1-\theta)\exp(\frac{\tau}{2}D_T)\exp(\theta\tau D_T)D_V^2\exp(\frac{\tau}2D_T){\rm Id}(\psi_0)\,d\theta.
\end{align*}
For the principal part, we notice that the midpoint rule quadrature leads to
\be
\tau f\left(\frac{\tau}{2}\right)-\int_0^\tau f(s)ds=\tau^3\int_0^1{\rm ker}(\theta)f^{\prime\prime}(\theta\tau)\,d\theta,
\ee
where ${\rm ker}(\theta)$ is the Peano kernel for midpoint rule and \cite{Lubich}
\be
f^{\prime\prime}(s)=e^{i\frac{\tau}{2}\Delta}
[\hat{T},[\hat{T},\hat{V}]](e^{i\frac{\tau-s}{2}\Delta}\psi_0).
\ee
Hence the principal part is of order $O(\tau^3)$. For $r_2-r_1$, denote
\be
g(s,\sigma)=\exp((\tau-s-\sigma)D_T)D_V
\exp(\sigma D_T)D_V \exp(s D_T){\rm Id}(\psi_0),
\ee
then we have
\be
r_2-r_1=\frac{\tau^2}{2}g\left(\frac{\tau}{2},0\right)-\int_0^\tau\int_0^{\tau-s}g(s,\sigma)\,d\sigma ds
+\tilde{r}_2-\tilde{r}_1,
\ee
with
\be
\tilde{r}_2=r_2-\frac{\tau^2}{2}g\left(\frac{\tau}{2},0\right),\quad
\tilde{r}_1=r_1-\int_0^\tau\int_0^{\tau-s}g(s,\sigma)\,d\sigma ds.
\ee
For $\tilde{r}_1$, noticing that \begin{equation*}
\exp(\tau D_H){\rm Id} (\psi_0)=\exp(\tau D_T){\rm Id}(\psi_0)+\int_0^\tau \exp((\tau-s)D_H)D_V\exp(sD_T){\rm Id} (\psi_0),\end{equation*}
we can derive from the definition of $g(s,\sigma)$ and the form of $r_1$ that
\be
\|\tilde{r}_1\|_{H_0^1}\leq \tilde{C}_1 \tau^3.
\ee
Similarly, we get $\|\tilde{r}_2\|_{H_0^1}\leq \tilde{C_2}\tau^3$. Here, $\tilde{C}_1$ and $\tilde{C}_2$ only depend on $\|\psi_0\|_{H_{\sin}^3}$, $V$ and $\beta$. The  remainder term is also a quadrature rule and it follows that
\be
\left\|\frac{\tau^2}{2}g\left(\frac{\tau}{2},0\right)-\int_0^\tau\int_0^{\tau-s}g(s,\sigma)\,d\sigma ds\right\|_{H_0^1}
\leq C_r\tau^3,
\ee
where $C_r$ depends on $\|\psi_0\|_{H_{\sin}^3}$, $V$ and $\beta$. Combining the above results together, we can get Lemma \ref{lem:tssploc:sec4}.
\end{proof}
Theorem \ref{thm:tsspsemi:sec4} can be proved by a combination of Lemmas \ref{lem:sta:sec4} and \ref{lem:tssploc:sec4}, using induction \cite{Lubich}, and we omit this part here.

\begin{proof}[Proof of Theorem \ref{thm:tssp:sec4}] Having Theorem \ref{thm:tsspsemi:sec4}, we only need to  compare the full discretization solution $I_M(\psi^n)(x)$ and the semi-discretization solution $\psi^{[n]}(x)$ ($0\leq n\leq \frac{T}{\tau}$),
\be
I_M(\psi^n)(x)-\psi^{[n]}(x)=I_M(\psi^n)(x)-P_M(\psi^{[n]}(x))+P_M(\psi^{[n]}(x))-\psi^{[n]}(x).
\ee
From Theorem \ref{thm:tsspsemi:sec4},  there exists some $M_2>0$ such that $\|\psi^{[n]}\|_{H_{\sin}^m}\leq M_2$. Hence
\be
\|P_M(\psi^{[n]}(x))-\psi^{[n]}(x)\|_{L^2}\lesssim h^m,\quad \|\nabla[P_M(\psi^{[n]}(x))-\psi^{[n]}(x)]\|_{L^2}\lesssim h^{m-1}.
\ee
Denote $e^n_I(x)=I_M(\psi^n)(x)-P_M(\psi^{[n]}(x))$, then $e^0_I=I_M(\psi^0)(x)-P_M(\psi^{[0]})(x)$ and
\be
\|e^0_I(x)\|_{L^2}\lesssim h^m,\qquad \|\nabla e^0_I(x)\|_{L^2}\lesssim h^{m-1},\quad \| e^0_I(x)\|_{H^2}\lesssim h^{m-2}.
\ee
From the semi-discretization (\ref{eq:tsspsemi:sec4}), there holds
\begin{equation*}\begin{split}
&P_M(\psi^{\{1\}})(x)=
  e^{i\tau\Delta/4}P_M(\psi^{[n]}),\quad
  P_M(\psi^{\{2\}})(x)=P_M(e^{-i(V(x)+\beta |\psi^{\{1\}}|^2)\tau}\;\psi^{\{1\}})(x),\\
&P_M(\psi^{[n+1]})(x)=P_M(e^{i\tau\Delta/4}\psi^{\{2\}})(x),
 \qquad x\in U.
 \end{split}
\end{equation*}
Similarly, for the TSSP (\ref{eq:tssp:sec4}), there holds
\begin{equation*}\begin{split}
&I_M(\psi^{(1)})(x)=
  e^{i\tau\Delta/4}I_M(\psi^n)(x),\quad
  I_M(\psi^{(2)})(x)=I_M(e^{-i(V(x)+\beta |\psi^{(1)}(x)|^2)\tau}\;\psi^{(1)})(x),\\
&I_M(\psi^{n+1})(x)=e^{i\tau\Delta/4}I_M(\psi^{(2)})(x),
 \qquad x\in U.
 \end{split}
 \end{equation*}
 Noticing that $e^{i\tau\Delta}$ preserves $H_{\sin}^m$ norm, we find that
\begin{align*}
&\|e_I^n(x)\|_{H_{\sin}^m}=\|I_M(\psi^{(1)})(x)-P_M(\psi^{\{1\}})(x)\|_{H_{\sin}^m},\\
&\|e_I^{n+1}(x)\|_{H_{\sin}^m}=\|I_M(\psi^{(2)})(x)-P_M(\psi^{\{2\}})(x)\|_{H_{\sin}^m}.
\end{align*}
Following the analogous mathematical induction for SIFD (Theroem \ref{thm:sifd:sec4}),  we can assume that
error estimates (\ref{eq:tsspes:sec4}) holds for $n\leq \frac{T}{\tau}-1$. For $n+1$, using the  techniques and results in \cite{BaoCai3}, we have
\begin{align*}
&\|I_M(\psi^{(2)})(x)-P_M(\psi^{\{2\}})(x)\|_{L^2}
\lesssim \tau\|I_M(\psi^{(1)})(x)-P_M(\psi^{\{1\}})(x)\|_{L^2}+\tau h^m,\\
&\|I_M(\psi^{(2)})(x)-P_M(\psi^{\{2\}})(x)\|_{H^1_0}
\lesssim \tau \|I_M(\psi^{(1)})(x)-P_M(\psi^{\{1\}})(x)\|_{H^1_0}+\tau h^{m-1}.
\end{align*}
Hence for $n\leq\frac{T}{\tau}-1 $,
\be
\|e_I^{n+1}(x)\|_{L^2}\lesssim \tau \|e^n_I(x)\|_{L^2}+\tau h^m,\quad \|e_I^{n+1}(x)\|_{H_0^1}\lesssim \tau \|e^n_I(x)\|_{H_0^1}+\tau h^{m-1}.
\ee
Then, mathematical induction and discrete Gronwall inequality would imply that for all $n\leq \frac{T}{\tau}-1$ and small $\tau$,
\be
\|e_I^{n+1}(x)\|_{L^2}\lesssim h^m+\tau^2,\quad \|e_I^{n+1}(x)\|_{H_0^1}\lesssim h^{m-1}+\tau^2.
\ee
Hence $\|I_M(\psi^n)(x)-\psi^{[n]}(x)\|_{H_0^1}\lesssim h^{m-1}+\tau^2$, and discrete Sobolev inequality gives that $\|\psi^n\|_\infty\leq M_1+1$ ($n\leq\frac{T}{\tau}$) (cf. proof of Theorem \ref{thm:sifd:sec4}). This would complete the proof of Theorem \ref{thm:tssp:sec4}.
\end{proof}
\begin{remark}Similar as Remark \ref{rmk:extension:sec4}, results in Theorem \ref{thm:tssp:sec4} can be extended to higher dimensions (2D and 3D) and the proof presented in this section is still valid with small modification of Lemma \ref{lem:sta:sec4}, by using the fact that $H^2(\Bbb R^d)\subset L^\infty(\Bbb R^d)$ ($d=2,3$).
\end{remark}
For time splitting Fourier pseudospectral method and time splitting finite difference method (\ref{eq:tsfd:sec4}), similar error estimates to Theorem \ref{thm:tssp:sec4} can be established.
\subsection{Numerical results}
In this section, we report numerical results of the proposed numerical methods.
\begin{example} 1D defocusing condensate,
i.e. we choose $d=1$
and consider GPE
\be
i\p_t\psi=-\frac{1}{2}\p_{xx}\psi+\frac{x^2}{2}\psi+\beta|\psi|^2\psi,
\ee
 with positive $\beta=50$.
The initial condition is taken as
 \be
\label{eq:inite1:sec4} \psi(x,0)=\fl{1}{\pi^{1/4}}e^{-x^2/2}, \qquad x\in{\Bbb R}.
 \ee
We  solve this
problem on $[-16,16]$, i.e. $a=-16$ and $b=16$ with homogenous Dirichlet boundary conditions. Let $\psi$ be the `exact' solution which is
obtained numerically by using TSSP4 (fourth order time-splitting sine pseudospectral method (\ref{eq:4thsplit:sec4})) with a very fine mesh and time
step, e.g., $h=\fl{1}{1024}$ and $\tau=0.0001$, and $\psi_{h,\tau}$ be
the numerical solution obtained by using a method with mesh size
$h$ and time step $\tau$.
\end{example}

\bigskip

First we compare the discretization error in space. We choose a
very small time step, e.g., $\tau=0.0001$ for CNFD, TSSP4 and TSFD, $\tau=0.00001$ for TSSP2 such that the error from
the time discretization is negligible compared to the spatial
discretization error, and solve the GPE using different methods
and various spatial mesh sizes $h$. Tab.~\ref{tab:tab1:sec4} lists the numerical
errors $\|\psi(t)-\psi_{h,\tau}(t)\|_{l^2}$ at $t=1$ for various
spatial mesh sizes $h$.

\begin{table}[htbp]
\begin{center}
\begin{tabular}{ccccccc}\hline
 Mesh   &$h=\fl{1}{4}$ &$h=\frac{1}{8}$ &$h=\frac{1}{16}$
   &$h=\frac{1}{32}$ & $h=1/64$\\
\hline
TSSP2 &9.318E-2 &4.512E-7 &$<$5.0E-10 &$<$5.0E-10&$<$5.0E-10\\
TSSP4 &9.318E-2 &4.512E-7 &$<$5.0E-10 &$<$5.0E-10&$<$5.0E-10\\
TSFD &7.943E-1 &3.147E-1 &9.025E-2 &2.239E-2&5.574E-3\\
CNFD &7.943E-1 &3.147E-1 &9.026E-2 &2.240E-2 &5.583E-3\\
\hline
\end{tabular}
\end{center}
\caption{Spatial discretization error analysis:
$\|\psi(t)-\psi_{h,\tau}(t)\|_{l^2}$
at time $t=1$ under $\tau=0.0001$ for different numerical methods including TSSP2
(\ref{eq:tssp:sec4}), TSFD (\ref{eq:tsfd:sec4}), CNFD (\ref{eq:cnfd:sec4}), and TSSP4 (fourth order time integrator (\ref{eq:4thsplit:sec4}) with sine pseudospectral method). }\label{tab:tab1:sec4}
\end{table}

Secondly, we test the discretization error in time.  Tab.~\ref{tab:tab2:sec4} shows the numerical errors
$\|\psi(t)-\psi_{h,\tau}(t)\|_{l^2}$ at $t=1$ with a very small mesh
size $h=\fl{1}{1024}$ for different time steps $\tau$ and different
numerical methods.

\begin{table}[htbp]
\begin{center}
\begin{tabular}{cccccc}\hline
 Time step   &$\tau=0.01$ &$\tau=0.005$ &$\tau=0.0025$
   &$\tau=0.00125$\\
\hline
TSSP2 &4.522E-4 &1.129E-4 &2.821E-5 &7.051E-6\\
TSSP4 &1.091E-5 &6.756E-7 &4.213E-8 &2.630E-9\\
TSFD &3.332E-2 &8.261E-3 &2.071E-3 &5.323E-4\\
CNFD &1.099E-1 &2.884E-2 &7.268E-3 &1.835E-3\\
\hline
\end{tabular}
\end{center}
\caption{Time discretization error analysis:
$\|\psi(t)-\psi_{h,\tau}(t)\|_{l^2}$
at time $t=1$ under $h=\fl{1}{1024}$.}\label{tab:tab2:sec4}
\end{table}


\bigskip

  From Tabs.~\ref{tab:tab1:sec4}-\ref{tab:tab2:sec4}, one can make the following observations:
(i) Both TSSP2 and TSSP4 are spectral accurate in space and they share the same accuracy for fixed mesh size $h$, and resp., TSFD and CNFD are second-order
in space and they share the same accuracy for fixed mesh size $h$. (ii) TSSP2, TSFD and CNFD are second-order in time and TSSP4 is fourth-order in time. In general, for fixed time step $\tau$, the error from time discretization of TSSP2
is much smaller than that of TSFD and CNFD, and the error from time discretization of TSFD is much smaller than that of CNFD. From our computations, the error bounds for SIFD are similar to CNFD and we omit it here (cf. \cite{BaoCai2,BaoCai}). For more comparisons, we refer to \cite{BaoTang0} and references therein.

 Among the above numerical methods: (i) TSSP is explicit with computational cost per time step at $O(M\ln M)$
 with $M$ the total number of unknowns in 1D, 2D and 3D,  TSFD and SIFD are implicit with computational cost
 per time step $O(M)$ and $O(M\ln M)$ in 1D and 2D/3D, respectively, CNFD is implicit which is the most expensive one since it needs to solve
a fully coupled nonlinear system per time step. (ii) The storage requirement of TSSP is little less than those of
TSFD, CNFD and SIFD. (iii) TSSP, TSFD and CNFD are unconditionally stable and SIFD is conditionally stable.
(iv) TSSP and TSFD are time transverse invariant, where CNFD and SIFD are not.
(v) TSSP, TSFD and CNFD conserve the mass in the discretized level, where SIFD doesn't.
(vi) CNFD conserves the energy in the discretized level, where TSSP, TSFD and SIFD don't. However,
when the time step is small, TSSP, TSFD and SIFD conserve the energy very well in practical computation.
(vii)  Extension of all the numerical methods to 2D and 3D cases is straightforward
without additional numerical difficulty.
Based on these comparisons, in order to solve the GPE numerically for computing the dynamics of BEC,
when the solution is smooth, we recommend to use TSSP method, and resp., when the solution is not very smooth,
e.g. with random potential, we recommend to use TSFD.

\subsection{Extension to damped Gross-Pitaevskii equations}
In section \ref{subsubsec:blowup}, a damping term is introduced in GPE to describe the collapse of focusing BEC. Our numerical methods can be generalized to this damped GPE easily. For simplicity, we will only consider 1D case and extensions to 2D and 3D are straightforward. For $d=1$, the general damped NLSE becomes \cite{BaoJaksch,BaoJakschP46}
\begin{align} \label{eq:dgpe1d:sec4}
&i\psi_t=- \fl{1}{2} \psi_{xx}+ V(x)\psi
+\bt|\psi|^{2\sg}\psi -i\, f(|\psi|^2)\psi,
\quad a<x<b,\quad t>0,  \\
\label{eq:dgpei:sec4} &\psi(x,t=0)=\psi_0(x), \quad a\le x\le b, \qquad
\psi(a,t)=\psi(b,t)={ 0}, \quad t\ge 0.
\end{align}
Due to the high performance of TSSP (\ref{eq:tssp:sec4}), we will extend it to solve damped GPE and adopt the same notations and  mesh strategy. From time $t=t_n$ to time $t=t_{n+1}$, the damped GPE (\ref{eq:dgpe1d:sec4}) is solved in two steps.
One solves
\begin{equation} \label{eq:1step:sec4}
i\; \psi_t=- \fl{1}{2} \psi_{xx},
\end{equation}
for one time step, followed by solving
\begin{equation}
\label{eq:2step:sec4} i\; \psi_t(x,t)= V(x)\psi(x,t)+ \bt
|\psi(x,t)|^{2\sg} \psi(x,t)-i\, f(|\psi(x,t)|^2)\psi(x,t),
\end{equation}
again for the same time step. Equation (\ref{eq:1step:sec4}) is
discretized in space by the sine-spectral method and integrated in
time {\it exactly}. For $t\in[t_n,t_{n+1}]$, multiplying the ODE
(\ref{eq:2step:sec4}) by $\overline{\psi(x,t)}$, the conjugate of
$\psi(x,t)$, one obtains
\begin{equation} \label{eq:sstepa:sec4}
i\;\psi_t(x,t)\overline{\psi(x,t)}= V(x)|\psi(x,t)|^2+ \bt
|\psi(x,t)|^{2\sg+2} -i\, f(|\psi(x,t)|^2)|\psi(x,t)|^2.
\end{equation}
Subtracting the conjugate of Eq.~(\ref{eq:sstepa:sec4}) from
Eq.~(\ref{eq:sstepa:sec4}) and multiplying by $-i$ one obtains
\begin{equation}
\label{eq:sstepc:sec4}
\fl{\rd}{\rd t}|\psi(x,t)|^2=\overline{\psi_t(x,t)}\psi(x,t)+
\psi_t(x,t)\overline{\psi(x,t)} =-2 f(|\psi(x,t)|^2)|\psi(x,t)|^2.
\end{equation}
Let
\begin{equation} \label{eq:frho:sec4}
g(s)=\int \fl{1}{s\; f(s)}\;ds, \qquad h(s,s^\prime)=\left\{\ba{ll}
g^{-1}\left(g(s)-2s^\prime\right), &s>0, \ s^\prime\ge0, \\
0, &s=0, \ s^\prime\ge0. \ea\right. \\
\end{equation}
Then, if $f(s)\ge0$ for $s\ge 0$, we find
\begin{equation}
\label{eq:positive:sec4} 0\le h(s,s^\prime)\le s, \qquad \hbox{for}\quad s\ge
0,\quad  s^\prime\ge 0,
\end{equation}
and the solution of the ODE (\ref{eq:sstepc:sec4}) can be expressed as
(with $s^\prime=t-t_n$)
\begin{eqnarray} \label{eq:rhot:sec4} 0\le\rho(t)&=&\rho(t_n+s^\prime)
:=|\psi(x,t)|^2=h\left(|\psi(x,t_n)|^2, t-t_n\right):=
h\left(\rho(t_n),s^\prime\right) \nn\\
&\le&\rho(t_n)=|\psi(x,t_n)|^2, \qquad t_n\le t\le t_{n+1}.
\end{eqnarray}
Combining Eq.~(\ref{eq:rhot:sec4}) and Eq.~(\ref{eq:sstep:sec4}) we obtain
\begin{eqnarray} \label{eq:sstepd:sec4}
i\;\psi_t(x,t)&=&V(x)\psi(x,t)+
\bt \left[h\left(|\psi(x,t_n)|^2, t-t_n\right)\right]^{\sg} \psi(x,t)\nn\\
&&-i\, f\left(h\left(|\psi(x,t_n)|^2,t-t_n\right)\right)\psi(x,t),
\quad t_n\le t\le t_{n+1}. \qquad
\end{eqnarray}
Integrating (\ref{eq:sstepd:sec4}) from $t_n$ to $t$, we find
\begin{align}
\label{eq:spsi:sec4} \psi(x,t)=&\;
\exp\left\{i\left[-V(x)(t-t_n)-G\left(|\psi(x,t_n)|^2,t-t_n\right)\right]
-F\left(|\psi(x,t_n)|^2,t-t_n\right)\right\} \nn\\
&\quad \tm\; \psi(x,t_n), \qquad t_n\le t\le t_{n+1},
\end{align}
where we have defined
\begin{equation}
\label{eq:FG:sec4}
F(s,\eta)=\int_0^\eta f(h(s,s^\prime))\;ds^\prime\ge0,
\quad
G(s,\eta)=\int_0^\eta \bt\; \left[h(s,s^\prime)\right]^{\sg}\;ds^\prime.
\end{equation}
To find the time evolution between $t=t_n$ and $t=t_{n+1}$, we
combine the splitting steps via the standard second-order Strang
splitting (TSSP) for solving the damped GPE (\ref{eq:dgpe1d:sec4}).
 In detail, the steps for obtaining
$\psi^{n+1}_j$ from $\psi^{n}_j$ are given by
\begin{align}
&\psi^{(1)}_j=\exp\left\{-F\left(|\psi_j^{n}|^2,\frac{\tau}{2}\right)+i
\left[-V(x_j)\frac{\tau}{2}-G\left(|\psi_j^{n}|^2,\frac{\tau}{2}\right)\right]\right\}
\;\psi_j^{n},\nn
\\
&\psi_j^{(2)}=\frac{2}{M}\sum_{l=1}^{M-1} e^{-i
\tau\mu_l^2/2}\;\widetilde{(\psi^{(1)})}_l \; \sin(\mu_l(x_j-a)),\qquad
j\in\calT_M, \label{schmg}\\
&\psi^{n+1}_j=\exp\left\{-F\left(|\psi_j^{(2)}|^2,\frac{\tau}{2}\right)+i
\left[-V(x_j)\frac{\tau}{2}-G\left(|\psi_j^{(2)}|^2,\frac{\tau}{2}\right)\right]\right\}
\;\psi_j^{(2)},\nn
\end{align}
where $ \widetilde{u}_l$ are the sine-transform coefficients of a
complex vector $u=(u_0,\cdots, u_M)^T$ with $u_0=u_M={0}$ defined in (\ref{eq:sinetran:sec3}),
and
\begin{equation}
\label{eq:init1dg:sec4}
\psi^{0}_j=\psi(x_j,0)=\psi_0(x_j), \qquad j=0,1,2,\cdots,M.
\end{equation}
For some frequently used damping terms, the integrals in
(\ref{eq:frho:sec4}) and (\ref{eq:FG:sec4}) can be evaluated analytically (cf. \cite{BaoJaksch}).

\section{Theory for rotational BEC}
\label{sec:rotat}
\setcounter{equation}{0}\setcounter{figure}{0}\setcounter{table}{0}
In view of
potential applications,
the study of quantized vortices, which are
related to superfluid properties,
is one of the key issues. In fact, bulk
superfluids are distinguished from normal fluids by their ability
to support dissipationless flow. Such persistent currents are
intimately related to the existence of quantized vortices, which
are localized phase singularities with integer topological charge \cite{Fetter}.
The superfluid vortex is an example of a topological defect that
is well known in  superconductors
 and in liquid helium.
 Currently, one of the most popular ways to generate quantized vortices from
BEC ground state is the following:  impose a laser beam rotating with an angular
velocity
on the magnetic trap holding the atoms to
create a harmonic anisotropic potential. Various experiments have confirmed the observation of quantized vortices in  BEC under a rotational frame \cite{Abo,Madison,Caradoc}.
\subsection{GPE with an angular momentum rotation term}
At temperatures $T$ much smaller than the critical temperature
$T_c$, following the mean field theory (cf. section \ref{subsec:manybody}),
BEC in a rotational frame  is well described by
the macroscopic wave function $\psi(\bx,t)$, whose evolution is
governed by the Gross-Pitaevskii equation (GPE)
with an angular momentum rotation term
\cite{BaoShen,BaoShen2,BaoWang,Fetter,AftalionDu}, (w.l.o.g.) assuming the rotation
being around the $z$-axis:
\be\label{eq:gperota:sec5}
i\hbar\pl{\psi(\bx,t)}{t}=\left(-\fl{\hbar^2}{2m}\btd^2 + V(\bx)+N
g |\psi(\bx,t)|^2-\Omega L_z\right) \psi(\bx,t),
\ee
where  $\bx=(x,y,z)^T\in{\Bbb R}^3$ is the spatial
coordinate vector,
\; and
\be\label{eq:rota:sec5}
L_z=-i \hbar\left(x\partial_y -y\partial_x\right)
\ee
is the $z$-component of the angular momentum operator $L=(L_x,L_y,L_z)^T$ given by
$L=-i\hbar(\bx\wedge\nabla)$. The appearance of the angular momentum term means that we are using a reference frame where the trap is at rest. The energy
functional per particle $E(\psi)$ is defined as
\be\label{eq:penergy:sec5}
E(\psi) =\int_{{\Bbb R}^3} \left[
\fl{\hbar^2}{2m} \left|\btd \psi\right|^2 + V(\bx)|\psi|^2 +
\fl{Ng}{2} |\psi|^4-\Og \bar{\psi} L_z \psi \right]\; d\bx,
\ee
and wave function is normalized as
 \begin{equation}
\label{eq:norm:sec5}
\int_{{\Bbb R}^3} \; |\psi(\bx,t)|^2\;d\bx=1.
\end{equation}

\subsubsection{Dimensionless form} For the conventional harmonic potential case,
by introducing the dimensionless variables: $t\to t/\omega_{0}$
with $\omega_0=\min\{\omega_x,\omega_y,\omega_z\}$,
$\bx\to \bx x_s$ with $x_s=\sqrt{\hbar/m\og_0 }$,
$\psi\to \psi/x_s^{3/2}$, $\Omega\to \Omega \omega_0$ and
$E(\cdot)\to\hbar\og_0 E_{\beta,\Og}(\cdot)$, we get the dimensionless
GPE
\be \label{eq:gperot:sec5}
i\;\pl{\psi(\bx,t)}{t}=\left(-\fl{1}{2}\btd^2 +
V(\bx) + \kappa\; |\psi(\bx,t)|^2-\Omega
L_z\right)\psi(\bx,t),
\ee
where  $\kappa=\fl{g N}{x_s^3\hbar \og_0}=\fl{4\pi a_sN}{x_s}$,
$L_z=-i(x\p_y -y \p_x)$ ($L=-i(\bx\wedge\nabla)$),
$V(\bx)=\fl{1}{2}\left(\gamma_x^2 x^2+\gm_y^2 y^2+\gm_z^2 z^2\right)$ with
$\gm_x=\fl{\og_x}{\og_0}$,
$\gm_y=\fl{\og_y}{\og_0}$ and  $\gm_z=\fl{\og_z}{\og_0}$.

In a disk-shaped condensate with parameters $\og_x\approx \og_y$
and $\og_z\gg \og_x$ ($\Longleftrightarrow$ $\gamma_x=1$,
$\gm_y\approx1$ and
$\gm_z\gg 1$ with choosing $\omega_0=\omega_x$),
the 3D GPE (\ref{eq:gperot:sec5}) can be reduced to a
 2D GPE with $\bx=(x,y)^T$ (cf. section \ref{subsubsec:dred}):
 \be \label{eq:gpe2drot:sec5}
i\;\pl{\psi(\bx,t)}{t}=-\fl{1}{2}\btd^2 \psi+
V_2(x,y) \psi +c_2\kappa
|\psi|^2\psi-\Omega L_z\psi,
\ee
where $c_2= \sqrt{\gm_z/2\pi}$
and $V_2(x,y)=\fl{1}{2}\left(\gamma_x^2 x^2+\gm_y^2 y^2\right)$
 \cite{BaoWang,BaoWangP,BaoCai,Fetter}.

 Thus here we consider the dimensionless GPE under a rotational frame
in $d$-dimensions ($d=2,3$):
 \be \label{eq:gpegrot:sec5}
i\;\pl{\psi(\bx,t)}{t}=-\fl{1}{2}\btd^2 \psi+
V(\bx)\psi + \beta|\psi|^2\psi-\Omega L_z
\psi, \quad \bx\in {\Bbb R}^d, \quad t>0,
\ee
where
\be\label{eq:pot:sec5}
\beta=\kappa \begin{cases}
\sqrt{\gm_z/2\pi}, \\
1,
\end{cases}
\,
V(\bx)=\begin{cases}
 \fl{1}{2}\left(\gm_x^2x^2+\gm_y^2 y^2\right), & d=2, \\
 \fl{1}{2}\left(\gm_x^2x^2+\gm_y^2 y^2+\gm_z^2 z^2\right), &d=3.
\end{cases}
\ee
 Then the dimensionless energy functional  per particle
$E_{\beta,\Omega}(\psi)$ is defined as
\be\label{eq:engd1:sec5}
E_{\beta,\Og}(\psi) =\int_{{\Bbb
R}^d} \left[\fl{1}{2} \left|\btd \psi(\bx,t)\right|^2+
V(\bx)|\psi|^2 +\fl{\beta}{2}\; |\psi|^4-
\Omega \bar{\psi}\, L_z \psi\right]d\bx,
\ee
and the normalization is given by
\be
\|\psi(\cdot,t)\|_{L^2(\Bbb R^d)}^2=\|\psi(\cdot,0)\|_{L^2(\Bbb R^d)}^2=1.
\ee
The vortex structure of rotating BEC in 3D is very complicated \cite{Aftalion,Fetter} due to the presence of vortex lines, while in 2D, the structure is relatively simple as the vortex center is only a point. As a result, most investigations  start from the 2D case.

\subsection{Theory for ground states}
Similar to section \ref{sec:mathgpe},
the ground state wave function $\phi_g:=\phi_g(\bx)$ of
a rotating BEC satisfies the nonlinear eigenvalue problem (Euler-Lagrange equation)
\be
\label{eq:gss:sec5}
\mu\; \phi(\bx)=\left[-\fl{1}{2}\nabla^2+
V(\bx)
+ \bt |\phi|^2-\Omega L_z\right]\phi(\bx),
\quad \bx\in {\Bbb R}^d,\quad d=2,3,
\ee
under the normalization condition
\be
\label{eq:normgg:sec5}
\|\phi\|_2^2=\int_{{\Bbb R}^d} \; |\phi(\bx)|^2\;d\bx=1,
\ee
with eigenvalue (or chemical potential)
$\mu$ given by
\begin{equation}
\mu=E_{\beta,\Og}(\phi)+\fl{\bt}{2}\int_{{\Bbb R}^d}
\left|\phi(\bx)\right|^4\;d\bx.
\label{eq:engvf:sec5}
\end{equation}

The eigenfunction $\phi(\bx)$ of (\ref{eq:gss:sec5}) under the constraint
(\ref{eq:normgg:sec5}) with least energy is called  ground state.
The ground state can be found by minimizing the energy functional $E_{\beta,\Og}(\phi)$
(\ref{eq:engd1:sec5})
over the unit sphere $S=\{\phi \ |\ \|\phi\|_2=1, \ E_{\beta,\Og}(\phi)<\ift\}$:

(I) Find $(\mu_{\beta,\Og}^g, \phi_g\in S)$ such that \be
\label{eq:minim:sec5} E^g_{\beta,\Og}:=E_{\beta,\Og}^g(\phi_g)=
\min_{\phi\in S} \; E_{\beta,\Og}(\phi), \quad
\mu^g_{\beta,\Og}: =\mu^g_{\beta,\Og}(\phi_g).
\ee

Any eigenfunction $\phi(\bx)$ of (\ref{eq:gss:sec5}) under constraint
(\ref{eq:normgg:sec5}) whose energy $E_{\beta,\Og}(\phi)
>E_{\beta,\Og}(\phi_g)$ is usually called
as an excited state in the physics literature.

Existence/nonexistence results of ground state depend
on the magnitude $|\Og|$ of the angular velocity relative to
the trapping frequencies \cite{Seiringer,BaoWangP,Cai}.

For the existence and simple properties of ground state for rotating BEC, we have the following \cite{BaoWangP,Seiringer}.
\begin{theorem}\label{thm:rot:sec5} Suppose that $V(\bx)$ is given in (\ref{eq:pot:sec5}), then we have the conclusions below.

i) In 2D, if $\phi_{\beta,\Og}(x,y)\in S$ is a ground state
of the energy functional $E_{\beta,\Og}(\phi)$, then
$\phi_{\beta,\Og}(x,-y)\in S$  and $\phi_{\beta,\Og}(-x,y)\in S$ are
 ground states
of the energy functional $E_{\beta,-\Og}(\phi)$. Furthermore
\be
\label{eq:relat:sec5}
E_{\beta,\Og}^g = E_{\beta,-\Og}^g, \qquad
\mu_{\beta,\Og}^g=\mu_{\beta,-\Og}^g.
\ee

ii) In 3D, if $\phi_{\beta,\Og}(x,y,z)\in S$ is a ground state
of the energy functional $E_{\beta,\Og}(\phi)$, then
$\phi_{\beta,\Og}(x,-y,z)\in S$  and $\phi_{\beta,\Og}(-x,y,z)\in S$ are
 ground states
of the energy functional $E_{\beta,-\Og}(\phi)$, and (\ref{eq:relat:sec5})
is also valid.

iii) When  $|\Og|<\min\{\gamma_{x},\gamma_y\}$ and $\beta\ge0$ in 3D or $\beta>-C_b$ in 2D ($C_b$ given in (\ref{eq:bestcons:2d})), there exists a minimizer
for the minimization problem (\ref{eq:minim:sec5}), i.e. there
exist ground state.
\end{theorem}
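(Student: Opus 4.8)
\textbf{Proof proposal for Theorem \ref{thm:rot:sec5}.}

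The plan is to handle the three parts by different mechanisms: parts i) and ii) are exact symmetry statements and will be proved by a change of variables in the energy functional, while part iii) is the genuine existence result and will follow the direct method in the calculus of variations, closely paralleling the proof of Theorem \ref{thm:gs}, with the key extra ingredient being a bound that absorbs the rotation term $-\Omega \bar\psi L_z\psi$ into the kinetic and potential energies.

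For parts i) and ii), I would start from the substitution $\phi(x,y)\mapsto \tilde\phi(x,y):=\phi(x,-y)$ (and similarly $\phi(x,y)\mapsto\phi(-x,y)$, and in 3D with $z$ carried along). Since $V$ in (\ref{eq:pot:sec5}) is even in each variable, the potential and interaction parts of $E_{\beta,\Omega}$ are invariant. For the kinetic part, $|\nabla\tilde\phi|^2$ at $(x,y)$ equals $|\nabla\phi|^2$ at $(x,-y)$, so after the change of variables that term is also unchanged. The only term that transforms nontrivially is the rotation term: $L_z=-i(x\partial_y-y\partial_x)$, and under $y\to-y$ one computes $\overline{\tilde\phi}\,L_z\tilde\phi$ at $(x,y)$ equals $-\big(\bar\phi\,L_z\phi\big)$ evaluated at $(x,-y)$. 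Hence $E_{\beta,\Omega}(\tilde\phi)=E_{\beta,-\Omega}(\phi)$, and since the map is a bijection of $S$ onto $S$ preserving the $L^2$ norm, minimizers correspond to minimizers and the equalities (\ref{eq:relat:sec5}) of the ground state energies and chemical potentials follow. The 3D case is identical with $z$ as a spectator variable.

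For part iii), I would run the direct method. The first and essential step is to show $E_{\beta,\Omega}$ is bounded below on $S$ and coercive on $X(\mathbb R^d)$ when $|\Omega|<\min\{\gamma_x,\gamma_y\}$. The standard device is the identity
\be
-\Omega\,\re\!\int \bar\psi\,L_z\psi\,d\bx = \Omega\,\im\!\int \bar\psi\,(y\partial_x-x\partial_y)\psi\,d\bx,
\ee
together with the pointwise/Cauchy--Schwarz estimate
\be
\left|\Omega\int \bar\psi\,(x\partial_y-y\partial_x)\psi\,d\bx\right|\le \frac{\delta}{2}\int|\nabla\psi|^2\,d\bx + \frac{\Omega^2}{2\delta}\int (x^2+y^2)|\psi|^2\,d\bx
\ee
for any $\delta>0$; choosing $\delta$ between $|\Omega|/\min\{\gamma_x,\gamma_y\}$ and $1$ shows that a fraction of $E_{\rm kin}+E_{\rm pot}$ strictly dominates the rotation term, so that
\be
E_{\beta,\Omega}(\psi)\ge c_1\big(\|\nabla\psi\|_2^2+\||\bx|\psi\|_2^2\big) + E_{\rm int}(\psi) - c_2
\ee
for constants $c_1>0$, $c_2\ge0$. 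Then the interaction energy is controlled exactly as in the proof of Theorem \ref{thm:gs}: for $\beta\ge0$ in 3D it is nonnegative, and for $\beta>-C_b$ in 2D the Gagliardo--Nirenberg inequality with the sharp constant $C_b$ absorbs $\beta\|\psi\|_4^4$ into $\|\nabla\psi\|_2^2$. This gives a lower bound and coercivity on $X$. Next, take a minimizing sequence $\{\psi^n\}\subset S$; coercivity bounds it in $X$, so (up to a subsequence) $\psi^n\rightharpoonup\psi^\infty$ weakly in $X$, and Lemma \ref{lem:comp:sec2} (whose hypothesis $V(\bx)\to\infty$ is satisfied by the harmonic $V$ in (\ref{eq:pot:sec5})) gives strong $L^p$ convergence, in particular $\|\psi^\infty\|_2=1$ so $\psi^\infty\in S$. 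Finally I would pass to the liminf: the kinetic and potential parts are weakly lower semicontinuous, the quartic term converges by the strong $L^4$ convergence, and the rotation term $\int\bar\psi^n L_z\psi^n$ converges because $\partial_x\psi^n,\partial_y\psi^n\rightharpoonup\partial_x\psi^\infty,\partial_y\psi^\infty$ weakly in $L^2$ while $x\psi^n,y\psi^n\to x\psi^\infty,y\psi^\infty$ strongly in $L^2$ (again by the compactness lemma applied to the weighted space). Hence $E_{\beta,\Omega}(\psi^\infty)\le\liminf E_{\beta,\Omega}(\psi^n)$, so $\psi^\infty$ is a minimizer.

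The main obstacle is the coercivity/lower-bound step: unlike the non-rotating case, the energy is not manifestly a sum of nonnegative-ish terms, and one must exploit the strict inequality $|\Omega|<\min\{\gamma_x,\gamma_y\}$ quantitatively to show the rotation term cannot destabilize the functional. Getting the constants right in the Cauchy--Schwarz splitting above — so that a genuinely positive multiple of $\|\nabla\psi\|_2^2+\||\bx|\psi\|_2^2$ survives — is the delicate point; everything else (weak compactness, the compact embedding from Lemma \ref{lem:comp:sec2}, weak lower semicontinuity, and the handling of $\beta\|\psi\|_4^4$ via $C_b$) is routine and mirrors the proof of Theorem \ref{thm:gs}. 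A secondary technical point is that the convergence of the cross term $\int \bar\psi^n L_z\psi^n$ under weak convergence needs the strong $L^2$ convergence of $x\psi^n$ and $y\psi^n$, which is exactly what the confining-potential compactness lemma provides, so no new input is needed there.
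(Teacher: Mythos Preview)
Your proposal is correct and follows essentially the same route as the paper. The paper compresses the argument into a one-line reference: it records the reflection identities $E_{\beta,-\Omega}(\phi(x,-y))=E_{\beta,\Omega}(\phi(x,y))$ etc.\ as Lemma~\ref{lem:EOM:sec5}(i)--(ii) (your parts i)--ii)), and for part iii) it states the two-sided estimate (Lemma~\ref{lem:EOM:sec5}(iii))
\[
\int\Bigl[\tfrac{1-|\Omega|}{2}|\nabla\phi|^2+\bigl(V(\bx)-\tfrac{|\Omega|}{2}(x^2+y^2)\bigr)|\phi|^2+\tfrac{\beta}{2}|\phi|^4\Bigr]d\bx\le E_{\beta,\Omega}(\phi),
\]
then defers to the non-rotating existence theory (Theorem~\ref{thm:gs}). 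Your Cauchy--Schwarz splitting with a free parameter $\delta$ is exactly this bound with the choice $\delta=|\Omega|$ replaced by a general $\delta\in(|\Omega|/\min\{\gamma_x,\gamma_y\},1)$; this extra flexibility is in fact needed to cover the full range $|\Omega|<\min\{\gamma_x,\gamma_y\}$ when $\min\{\gamma_x,\gamma_y\}>1$, so your version is a shade more careful. Your explicit treatment of the weak-limit step for the rotation term --- pairing the weak $L^2$ convergence of $\nabla\psi^n$ with the strong $L^2$ convergence of $x\psi^n,y\psi^n$ coming from Lemma~\ref{lem:comp:sec2} --- fills in a detail the paper leaves implicit; it is the right way to justify lower semicontinuity here.
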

Theorem \ref{thm:rot:sec5} is a direct consequence of the below observation and the theory for non-rotating BEC (cf. section \ref{sec:mathgpe}).
  \begin{lemma}\label{lem:EOM:sec5}Under the conditions of Theorem \ref{thm:rot:sec5}, the following results hold.

i) In 2D, we have
\begin{equation*}
E_{\beta,-\Og}(\phi(x,-y))=E_{\beta,\Og}(\phi(x,y)), \quad
E_{\beta,-\Og}(\phi(-x,y))=E_{\beta,\Og}(\phi(x,y)), \quad
\phi\in S.
\end{equation*}
ii) In 3D, we have
\begin{equation*}
E_{\beta,-\Og}(\phi(x,-y,z))=E_{\beta,\Og}(\phi(x,y,z)), \
E_{\beta,-\Og}(\phi(-x,y,z))=E_{\beta,\Og}(\phi(x,y,z)), \
\phi\in S.
\end{equation*}
iii) In 2D and 3D, we have \bea \label{eq:engm:sec5}
\lefteqn{\int_{{\Bbb R}^d} \left[\fl{1-|\Og|}{2} \left|\btd
\phi(\bx)\right|^2+
\left(V(\bx)-\fl{|\Og|}{2}(x^2+y^2)\right)|\phi|^2
+\fl{\beta}{2}\; |\phi|^4\right]d\bx
\le E_{\beta,\Og}(\phi) \nn} \\[2mm]
&\le& \int_{{\Bbb R}^d} \left[\fl{1+|\Og|}{2} \left|\btd
\phi(\bx)\right|^2+
\left(V(\bx)+\fl{|\Og|}{2}(x^2+y^2)\right)|\phi|^2
+\fl{\beta}{2}\; |\phi|^4\right]d\bx. \qquad \eea
\end{lemma}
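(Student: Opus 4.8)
The plan is to prove Lemma \ref{lem:EOM:sec5} by elementary manipulations of the energy functional $E_{\beta,\Og}(\phi)$ in (\ref{eq:engd1:sec5}), so that parts (i)--(ii) follow from symmetry of the reflected wave functions and part (iii) from a pointwise estimate on the rotation term $-\Og\,\bar\phi L_z\phi$. The only term in $E_{\beta,\Og}$ that is not invariant under $y\mapsto -y$ or $x\mapsto -x$ is the angular momentum term, so the first task is to understand how $\int \bar\phi L_z\phi\,d\bx$ transforms under these reflections.

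First I would record the key identity for the rotation term. Writing $L_z=-i(x\p_y-y\p_x)$, one has for $\phi\in S$ (real or complex) the real-valued expression $\int_{\Bbb R^d}\bar\phi\,L_z\phi\,d\bx=\int_{\Bbb R^d}{\rm Im}\big(\bar\phi\,(x\p_y-y\p_x)\phi\big)\,d\bx$. Under the change of variables $\tilde y=-y$ (with $x,z$ fixed), set $\widetilde\phi(x,y,z)=\phi(x,-y,z)$; then $\p_y\widetilde\phi = -(\p_y\phi)(x,-y,z)$, while $x\p_y-y\p_x$ picks up an overall sign, so $\int\overline{\widetilde\phi}\,L_z\widetilde\phi\,d\bx = -\int\bar\phi\,L_z\phi\,d\bx$. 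All other terms in $E_{\beta,\Og}$ --- the kinetic energy $\frac12|\nabla\phi|^2$, the potential energy with $V$ even in $y$, and the quartic term --- are unchanged, and the reflection is $L^2$-norm preserving, so $\widetilde\phi\in S$ and $E_{\beta,-\Og}(\widetilde\phi)=E_{\beta,\Og}(\phi)$. The same computation with $x\mapsto -x$ gives the other identity. This proves (i) in 2D; (ii) in 3D is verbatim the same since the reflection only acts in the $(x,y)$ variables, which establishes the first two bulleted claims; the companion statements on $E^g$ and $\mu^g$ in Theorem \ref{thm:rot:sec5} then follow because a reflection is a bijection of $S$ onto itself.

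For part (iii) the plan is to bound the rotation term by a combination of kinetic and potential-type energies. The elementary inequality $|ab|\le\frac12(a^2+b^2)$ applied pointwise, together with the fact that $L_z$ is (up to the factor $-i$) $x\p_y-y\p_x$, gives $|\bar\phi L_z\phi|\le |\phi|\,|x\p_y\phi - y\p_x\phi|\le |\phi|\,\sqrt{x^2+y^2}\,|\nabla\phi|\le \frac12\big[(x^2+y^2)|\phi|^2 + |\nabla\phi|^2\big]$, where I used $|x\p_y\phi-y\p_x\phi|\le\sqrt{x^2+y^2}\,|\nabla_{x,y}\phi|\le\sqrt{x^2+y^2}\,|\nabla\phi|$ by Cauchy--Schwarz. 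Hence $|\Og|\,\big|\int\bar\phi L_z\phi\,d\bx\big|\le \frac{|\Og|}{2}\int\big[(x^2+y^2)|\phi|^2+|\nabla\phi|^2\big]\,d\bx$. Substituting this two-sided bound for $-\Og\int\bar\phi L_z\phi\,d\bx$ into (\ref{eq:engd1:sec5}) and regrouping the $|\nabla\phi|^2$ and $(x^2+y^2)|\phi|^2$ terms yields exactly (\ref{eq:engm:sec5}). I would also note that once (iii) is in hand, Theorem \ref{thm:rot:sec5}(iii) follows: when $|\Og|<\min\{\gamma_x,\gamma_y\}$ the lower bound in (\ref{eq:engm:sec5}) has a strictly positive effective trapping potential $V(\bx)-\frac{|\Og|}{2}(x^2+y^2)$ that still satisfies the confining condition (\ref{eq:confine:sec2}) and a strictly positive kinetic prefactor, so the existence theory of Theorem \ref{thm:gs} applies after rescaling.

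The routine part is the pointwise inequality; the only mild subtlety --- and the step I would be most careful with --- is the justification that $\widetilde\phi=\phi(x,-y,z)$ genuinely lies in $S$ with $E_{\beta,-\Og}(\widetilde\phi)=E_{\beta,\Og}(\phi)$, i.e. that the formal change of variables in the rotation term is legitimate for every $\phi\in X(\Bbb R^d)$ (not just smooth $\phi$). This is handled by density: the identity is obvious for $\phi\in C_c^\infty$, both sides of $E_{\beta,\pm\Og}$ are continuous on $X$ (the rotation term is continuous on $H^1\cap L_V$ by the bound just derived), and $C_c^\infty$ is dense in $X$, so the identity extends. No genuine obstacle remains beyond this bookkeeping.
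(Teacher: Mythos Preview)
Your proof is correct and is precisely the direct verification the paper has in mind: the paper does not supply a proof of this lemma, instead calling it an ``observation'' from which Theorem~\ref{thm:rot:sec5} follows, so your change-of-variables argument for (i)--(ii) and Cauchy--Schwarz/Young estimate for (iii) constitute the intended proof.
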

For understanding the uniqueness
question, note that $E_{\beta,\Og}(\ap \phi_{\beta,\Og}^g)=E_{\beta,\Og}
(\phi_{\beta,\Og}^g)$ for all $\ap\in{\Bbb C}$ with $|\ap|=1$.
Thus an additional constraint has to be introduced to
show uniqueness.  For non-rotating BEC, i.e.
$\Og=0$, the unique positive minimizer is usually taken as the
ground state. In fact, the ground state is unique up to
a constant $\ap$ with $|\ap|=1$, i.e. density of the ground state
is unique, when $\Og=0$. For rotating BEC under  $|\Og|<\min\{\gamma_{x},\gamma_y\}$,
 the density of the ground state may be no longer unique
when $|\Og|> \Og^c$ with $\Og^c$ a critical angular rotation speed.

When rotational speed exceeds the trap frequency in $x$- or $y$- direction, i.e. $\Omega>\min\{\gamma_x,\gamma_y\}$,
 there will be no ground state \cite{Cai,BaoWangP}.
 \begin{theorem} \label{thm:rotnon:sec5}(nonexistence) Suppose that $V(\bx)$ is given in (\ref{eq:pot:sec5}), then there exists no minimizer  for problem (\ref{eq:minim:sec5}) if one of the following holds:

 (i) $\beta<0$ in 3D or $\beta<-C_b$ ($C_b$ given in (\ref{eq:bestcons:2d})) in 2D;

 (ii) $|\Og|>\min\{\gamma_x,\gamma_y\}$.
 \end{theorem}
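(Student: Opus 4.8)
The plan is to show, under either hypothesis, that $\inf_{\phi\in S}E_{\beta,\Og}(\phi)=-\infty$. Since a minimizer $\phi_g\in S$ for (\ref{eq:minim:sec5}) would have finite energy equal to this infimum, this at once rules out the existence of a ground state. The two hypotheses will be handled by two different norm‑one trial sequences along which the energy diverges to $-\infty$.

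\emph{Case (i).} The observation I would exploit is that the rotational term vanishes on real‑valued functions: if $\phi\in S$ is real, then writing $\mathbf F=(-y,x)$ for the (divergence‑free) generator of $L_z$ and integrating by parts,
\[
\int_{\Bbb R^d}\bar\phi\,L_z\phi\,d\bx=-\frac{i}{2}\int_{\Bbb R^d}\mathbf F\cdot\nabla(\phi^2)\,d\bx=\frac{i}{2}\int_{\Bbb R^d}(\operatorname{div}\mathbf F)\,\phi^2\,d\bx=0 ,
\]
so $E_{\beta,\Og}(\phi)=E_{\beta,0}(\phi)$ on real $\phi$. I would then reuse the real trial functions from the nonexistence part of the proof of Theorem \ref{thm:gs}: the concentrating Gaussian $\phi^\vep(\bx)=\vep^{-3/2}\phi(\bx/\vep)$ with $\phi=\pi^{-3/4}e^{-|\bx|^2/2}$ when $d=3$, and the rescaled optimizer $\phi_b^\vep(\bx)=\vep^{-1}\phi_b(\bx/\vep)$ of the best constant $C_b$ when $d=2$; both lie in $S$. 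The scaling computations there give $E_{\beta,0}(\phi^\vep)=C_1\vep^{-2}+\beta C_2\vep^{-3}+O(1)$ with $C_1,C_2>0$ in 3D, and $E_{\beta,0}(\phi_b^\vep)=\tfrac12(\beta+C_b)\vep^{-2}+O(1)$ in 2D, whence $E_{\beta,\Og}(\phi^\vep)=E_{\beta,0}(\phi^\vep)\to-\infty$ as $\vep\to0^+$ precisely when $\beta<0$ ($d=3$) or $\beta<-C_b$ ($d=2$). Note that the one‑sided bound $E_{\beta,\Og}(\phi)\le\int[\tfrac{1+|\Og|}{2}|\nabla\phi|^2+(V+\tfrac{|\Og|}{2}(x^2+y^2))|\phi|^2+\tfrac\beta2|\phi|^4]$ from Lemma \ref{lem:EOM:sec5}(iii) only covers $\beta<-(1+|\Og|)C_b$, so the exact cancellation on real functions is what delivers the full threshold $\beta<-C_b$.

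\emph{Case (ii).} By Theorem \ref{thm:rot:sec5}(i) the existence question is invariant under $\Og\mapsto-\Og$, so I may take $\Og>0$, and after relabeling $x$ and $y$ assume $\gamma_x=\min\{\gamma_x,\gamma_y\}<\Og$. The idea is to pass to the rotating (``magnetic'') gauge: completing the square yields $-\tfrac12\nabla^2-\Og L_z=\tfrac12|(-i\nabla-\Og\,\bx^{\perp})|^2-\tfrac{\Og^2}{2}(x^2+y^2)$ with $\bx^{\perp}:=(-y,x)$ (resp. $(-y,x,0)$), hence
\[
E_{\beta,\Og}(\phi)=\int_{\Bbb R^d}\Bigl[\tfrac12\bigl|(-i\nabla-\Og\,\bx^{\perp})\phi\bigr|^2+\tfrac12\bigl((\gamma_x^2-\Og^2)x^2+(\gamma_y^2-\Og^2)y^2+\gamma_z^2z^2\bigr)|\phi|^2+\tfrac\beta2|\phi|^4\Bigr]d\bx ,
\]
the $z$‑term being absent for $d=2$. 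Since $\gamma_x^2-\Og^2<0$, the effective potential is unbounded below along the $x$‑axis, and the task is to keep the magnetic kinetic energy bounded while sliding mass out to where the potential is very negative. I would fix a real bump $\chi\in C_0^\infty(\Bbb R^d)$, even in $x$, with $\|\chi\|_2=1$, and set $\phi_n(\bx)=e^{\,i\Og L_n y}\chi(\bx-\bx_n)$ with $\bx_n=(L_n,0)$ (resp. $(L_n,0,0)$), $L_n\to\infty$. A direct computation gives $(-i\nabla-\Og\bx^{\perp})\phi_n=e^{i\Og L_n y}\bigl(-i\nabla+\Og(L_n\mathbf{e}_y-\bx^{\perp})\bigr)\chi(\cdot-\bx_n)$, and on the fixed‑size support of $\chi(\cdot-\bx_n)$ one has $L_n\mathbf{e}_y-\bx^{\perp}=(y,\,L_n-x,\,0)=O(1)$, so $\|(-i\nabla-\Og\bx^{\perp})\phi_n\|_2\le\|\nabla\chi\|_2+C\Og$ uniformly in $n$; meanwhile the potential integral equals $\tfrac12(\gamma_x^2-\Og^2)L_n^2+O(1)$ (the $O(L_n)$ cross term dies by evenness of $\chi$) and the interaction term is the constant $\tfrac\beta2\|\chi\|_4^4$. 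Hence $E_{\beta,\Og}(\phi_n)\to-\infty$.

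The square completion and the Gaussian/optimizer scaling identities are routine (the latter already appears in the proof of Theorem \ref{thm:gs}), and the vanishing of the rotation term on real functions is a one‑line integration by parts. The step needing genuine care is the trial sequence of Case (ii): one must verify that the modulation $e^{i\Og L_n y}$ really cancels the leading part of $\Og\bx^{\perp}$ on the travelling support so that the magnetic gradient stays $O(1)$ independently of $n$, and that the $O(L_n)$ correction to the potential integral does not compete with the dominant $-\tfrac12(\Og^2-\gamma_x^2)L_n^2$ term; taking $\chi$ compactly supported and even in $x$ makes both checks short. Finally one records that every trial function constructed lies in $S$, so the infimum over $S$ is $-\infty$ and no minimizer exists.
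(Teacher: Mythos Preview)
Your proof is correct. The paper itself does not supply a proof of Theorem \ref{thm:rotnon:sec5}, deferring instead to the references \cite{Cai,BaoWangP}, so there is no in-paper argument to compare against. Your treatment of Case (i)---observing that $\int\bar\phi L_z\phi=0$ for real $\phi$ so that $E_{\beta,\Og}=E_{\beta,0}$ on the concentrating Gaussian/optimizer sequences from the proof of Theorem \ref{thm:gs}---is clean and, as you note, recovers the sharp 2D threshold $\beta<-C_b$ that the one-sided bound of Lemma \ref{lem:EOM:sec5}(iii) alone would miss. Your treatment of Case (ii) via the magnetic completion of the square and the phase-modulated translating bumps $e^{i\Og L_n y}\chi(\cdot-\bx_n)$ is the standard construction; the verification that $L_n\mathbf{e}_y-\bx^{\perp}=O(1)$ on the support is the crux and you have it right. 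Two minor remarks: the paper already fixes $\gm_x\le\gm_y$ globally, so the relabeling is unnecessary; and the evenness of $\chi$ in $x$ is a convenience rather than a necessity, since the $O(L_n)$ cross term is in any case dominated by the $-\tfrac12(\Og^2-\gm_x^2)L_n^2$ leading term.
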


\subsection{Critical speeds for quantized vortices}
\label{subsec:critical}
From Theorems \ref{thm:rot:sec5} and \ref{thm:rotnon:sec5}, only the case of $0\leq\Omega<\min\{\gamma_x,\gamma_y\}$ is interesting for considering the ground state of a rotating BEC and we will assume $\Omega\ge0$ in the subsequent discussions.  In particular,  the  vortex  appears in the ground state only if the rotational speed exceeds certain critical value. Various experiments and mathematical studies confirm the existence of such critical speeds.

In \cite{Seiringer}, for a radially symmetric  potential  in 2D, Seiringer proved that there exists critical velocity $\Omega_0>0$ that a state containing  vortices is energetically favorable for $\Omega>\Omega_0$. The point is that, as $\Omega$ increases from 0 to $\min\{\gamma_x,\gamma_y\}$, the first ground state with vortex should be the {\it central vortex} state, i.e., a state containing central vortex (vortex line) in the rotational center (axis).

In 2D with radially symmetric  potential $V(r)$ ($r=|\bx|$, $\bx\in\Bbb R^2$), the symmetric state ($m=0$) and central vortex state with index (or winding number or circulation) $0\neq m\in\Bbb Z$ is the solution of the nonlinear eigenvalue problem (\ref{eq:gss:sec5})-(\ref{eq:engvf:sec5}) with the form
\be\label{eq:centralvor:sec5}
\phi(\bx)=\phi_m(r)e^{im\theta},\quad \text{where}\quad(r,\theta)\quad \text{is the polar coordinate}.
\ee
In polar coordinate, the angular momentum term  (\ref{eq:rota:sec5}) becomes $L_z=-i\frac{\p}{\p\theta}$.
In order to find the above central vortex states with index $m$ ($m\ne0$),
$\phi(\bx)=\phi_m(r)e^{im\tht}$,
we need to find  a real nonnegative
 function $\phi_m(r):=\phi_{\beta,\Og}^m(r)$ which minimizes
the energy functional
\bea
\label{eq:ecen:sec5}
\lefteqn{E_{\beta,\Og}^m(\phi(r))=E_{\beta,\Og}(\phi(r)e^{im\tht})\nn}\\[2mm]
&=&\pi \int_0^\ift \left[|\phi^\prime(r)|^2
+\left(2V(r) + \fl{m^2}{r^2}\right)|\phi(r)|^2 + \bt |\phi(r)|^4
-2m\Og |\phi(r)|^2 \right]r\;dr\nn\\
&=&E_{\beta,0}^n(\phi(r)) -m\Og, \qquad \Og\ge0,\quad m\ge0,
\eea over the set $S_r=\{
\phi(r)\in{\Bbb R} \ |\ 2\pi \int_0^\ift |\phi(r)|^2 r\;dr=1,
\ E_{\beta,0}^m(\phi)<\ift, \
\phi^\prime(0)=0\ (m=0),\ \hbox{and resp.}\ \phi(0)=0 (m\ne0)
\}$.
The existence and
uniqueness of nonnegative minimizer for this minimization problem
can be obtained similarly as for the ground state when $\Og=0$
\cite{LiebSeiringerPra2000,Seiringer}. It is clear from (\ref{eq:ecen:sec5}) that  contribution of rotation in the  energy (\ref{eq:ecen:sec5}) is fixed with index $m$ and the central vortex state with index $m$ is independent of $\Omega$.  In fact, let us denote
\be\label{eq:redecen:sec5}
E^m(\phi(r))=E_{\beta,\Og}^m(\phi(r))+m\Omega,
\ee
and the central vortex state can be found as the minimizer of energy $E^m(\phi)$ which is independent of $\Omega$.

Similarly, in order
 to find the cylindrically symmetric state ($m=0$), and
resp. central vortex line states ($m\ne0$),
 in 3D with cylindrical
 symmetry,
i.e. $d=3$ and $V(\bx)=V(r,z)$
in (\ref{eq:gpegrot:sec5}),
we solve the nonlinear eigenvalue problem (\ref{eq:gss:sec5})-(\ref{eq:engvf:sec5}) with the special form of wave function as
\be
\label{eq:statv35:sec5}
\phi(\bx)= \phi_m(x,y,z)
=\phi_m(r,z)e^{im\tht},
\ee
where $(r,\theta,z)$ is the cylindrical coordinate, $m$ is an
integer and called as index when $m\ne0$.
 $\phi_m(r,z)$ is a real function independent of  angle.
It is equivalent to computing  a real nonnegative
 function $\phi_m(r,z):=\phi_{\beta,\Og}^m(r,z)$ which minimizes
the energy functional
\bea\label{eq:mincy:sec5}
\lefteqn{E_{\beta,\Og}^m(\phi(r,z))=E_{\beta,\Og}
(\phi(r,z)e^{im\tht}) }\\[2mm]
&=&\pi \int_0^\ift \int_{-\ift}^\ift \left[|\p_r \phi|^2 +|\p_z
\phi|^2 +\left(2V(r,z)+\fl{m^2}{r^2}-2m\Og\right)|\phi|^2 + \bt |\phi|^4\right]
r\;dzdr,\nn
 \eea
over the set $S_c=\{
\phi\in{\Bbb R} \ |\ 2\pi \int_0^\ift\int_{-\ift}^\ift
 |\phi(r,z)|^2 r\;dzdr=1, \ E_{\beta,0}^m(\phi)<\ift,\
\ \p_r \phi(0,r)=0 \ (m=0), \hbox{and resp.}\
\phi(0,z)=0 \ (m\ne0), \; -\ift<z<\ift \}$.

For 2D case, by carefully studying the energy functional (\ref{eq:ecen:sec5}), Seiringer \cite{Seiringer} established the estimates of critical velocity  $\Omega_m$ ($m\ge0$) when a central vortex $\phi(r)$ with index $m+1$ is energetically favorable to a central vortex state with index $m$ for $\Omega>\Omega_m$, i.e., $E_{\beta,\Omega}^{m+1}(\phi(r))<E_{\beta,\Omega}^m(\phi(r))$. The critical speed is given by
\be
\Omega_m=\min_{\phi\in S_r}{E^{m+1}(\phi(r))}-\min_{\phi\in S_r}{E^{m}(\phi(r))}>0.
\ee
The estimates for $\Omega_m$ is the following.
\begin{theorem}(cf. \cite{Seiringer})\label{thm:critical:sec5} In 2D, assume potential $V(r)=\frac{1}{2}r^2$ ($r=|\bx|$, $\bx\in\Bbb R^d$), for positive $\beta>0$ and $m\ge0$, then the following bounds on $\Omega_m$ hold,
\begin{align}
&\Og_m\leq (2m+1)\frac{2\pi e}{\beta}\left(1+\sqrt{\frac{2\beta}{\pi}}\right)\left(3+\left[\ln(\beta/(2\pi e^2))\right]\right),\\
&\Og_m\ge \frac{2m+1}{(m+2)\sqrt{1+\frac{\beta}{b_{m+1}(m+2)}}},
\end{align}
where $b_m=\frac{2\times 4^m\pi(m!)^2}{(2m)!}$. In addition, there is a relation between the critical velocities as  $\Og_{m+1}\leq \frac{2m+3}{2m+1}\Og_m$.
\end{theorem}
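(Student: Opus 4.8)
The plan is to reduce everything to the $\Omega$-independent reduced energy $E^m$ of \eqref{eq:redecen:sec5}, setting $e_m:=\min_{\phi\in S_r}E^m(\phi)$ so that by definition $\Omega_m=e_{m+1}-e_m$, and to exploit the following structural remark. Writing a real minimizer through its density $\rho=|\phi_m|^2\ge0$,
\[
E^m(\phi)=A(\rho)+m^2B(\rho),\qquad A(\rho)=\pi\!\int_0^\infty\!\Bigl(|\partial_r\sqrt\rho|^2+r^2\rho+\beta\rho^2\Bigr)r\,dr,\quad B(\rho)=\pi\!\int_0^\infty\!\frac{\rho}{r}\,dr,
\]
with $A,B$ independent of $m$ and $B\ge0$. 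Hence $e_m=g(m^2)$ where $g(t):=\inf_\rho\bigl(A(\rho)+tB(\rho)\bigr)$ is, as an infimum of functions affine in $t$, concave and nondecreasing on $[0,\infty)$. The relation $\Omega_{m+1}\le\frac{2m+3}{2m+1}\Omega_m$ is then immediate: since $(m+1)^2-m^2=2m+1$ and $(m+2)^2-(m+1)^2=2m+3$, the decreasing–slope property of the concave $g$ gives
\[
\frac{\Omega_{m+1}}{2m+3}=\frac{g((m+2)^2)-g((m+1)^2)}{(m+2)^2-(m+1)^2}\ \le\ \frac{g((m+1)^2)-g(m^2)}{(m+1)^2-m^2}=\frac{\Omega_m}{2m+1}.
\]
For $m\ge1$ the infimum defining $g(t)$ runs over the fixed set $\{\rho:\rho(0)=0\}$, so this is literally a statement about one concave function; the case $m=0$ is handled by adding the node $t=0$, with $g(0)=\min_{\rho(0)=0}A(\rho)\ge e_0$.

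For the two-sided bounds I use the two natural trial-state comparisons suggested by the affine structure. For $m\ge1$ the minimizer $\phi_m$ vanishes at the origin, hence $\rho_m$ is admissible for the index-$(m+1)$ problem and $\rho_{m+1}$ for the index-$m$ problem; feeding one into the other yields the sandwich
\[
(2m+1)B(\rho_{m+1})\ \le\ \Omega_m\ \le\ (2m+1)B(\rho_m),\qquad B(\rho)=\pi\!\int_0^\infty\!\frac{\rho}{r}\,dr
\]
(for $m=0$ the right-hand inequality must use a trial state obtained from $\phi_0$ by a cutoff vanishing at the origin, which perturbs the energy by a controllable amount; the left inequality is unaffected, and in particular $\Omega_m>0$). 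The explicit Gaussian $\psi_m(r)=c\,r^me^{-r^2/2s}$, with $s$ optimized, gives the companion bound $e_m\le E^m(\psi_m)=(m+1)\sqrt{1+\frac{\beta}{(m+1)b_m}}$ with $b_m=\frac{2\cdot4^m\pi(m!)^2}{(2m)!}$ exactly as in the statement.

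It then remains to bound $B(\rho_m)$ from above and $B(\rho_{m+1})$ from below. For the lower bound, Cauchy--Schwarz in the form $(2\pi)^{-2}=\bigl(\int\rho_{m+1}r\,dr\bigr)^2\le\bigl(\int\rho_{m+1}r^{-1}dr\bigr)\bigl(\int\rho_{m+1}r^3dr\bigr)$, together with the a priori estimate that the harmonic part (and, to get the sharp constant, the interaction part, which rules out a concentrated competitor) of $E^{m+1}(\rho_{m+1})$ is at most $e_{m+1}\le(m+2)\sqrt{1+\beta/((m+2)b_{m+1})}$, and a short optimization over competitor densities, produces $\Omega_m\ge(2m+1)\big/\bigl[(m+2)\sqrt{1+\beta/(b_{m+1}(m+2))}\bigr]$. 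For the upper bound one splits $\int_0^\infty\rho_m r^{-1}dr$ at the Thomas--Fermi scales: on the tail $r\ge R_1$, $\int\rho_m r^{-1}dr\le R_1^{-2}\int\rho_m r\,dr=(2\pi R_1^2)^{-1}$; on the bulk $R_0\le r\le R_1$, $\int\rho_m r^{-1}dr\le\|\rho_m\|_\infty\ln(R_1/R_0)$, where $\|\rho_m\|_\infty\le\widetilde\mu_m/\beta\le 2e_m/\beta$ follows by evaluating the Euler--Lagrange equation for $\phi_m$ at its maximum; and on the vortex core $r\le R_0$ the centrifugal barrier makes $\rho_m$ negligible. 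Choosing $R_0$ of the order of the inner turning radius, $R_1\sim\sqrt{\widetilde\mu_m}$, and inserting $e_m\le(m+1)\sqrt{1+\beta/((m+1)b_m)}$, the estimate collapses to the asserted form, the logarithm $\bigl[\ln(\beta/(2\pi e^2))\bigr]$ coming from $\ln(R_1/R_0)$ and the prefactor $1+\sqrt{2\beta/\pi}$ from the bound on $\|\rho_m\|_\infty$.

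The genuinely hard part is the upper bound: unlike the relation and the lower bound it cannot be obtained from a soft comparison with an explicit trial function — that route only gives $\Omega_m=O(\sqrt\beta)$, whereas the claimed bound decays like $\beta^{-1}\ln\beta$ — so one must exploit real a priori information about the true minimizer $\phi_m$ (its pointwise $L^\infty$ bound from the Euler--Lagrange equation, exponential smallness in the classically forbidden core, the second-moment bound) combined with a cutoff tuned to the Thomas--Fermi scales. Carrying this out so that the constants come out in exactly the stated closed form — and, on the lower-bound side, squeezing the last factor out of the moment inequality by invoking the interaction-energy bound — is the main bookkeeping burden; the concavity of $e_m$ in $m^2$, the two trial-state comparisons, and the Gaussian computation are routine.
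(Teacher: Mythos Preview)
The paper does not prove this theorem; it is stated with a citation to Seiringer's original paper and no argument is given. Your proposal is, in outline, a faithful reconstruction of Seiringer's proof: the concavity of $t\mapsto g(t)=\inf_\rho(A(\rho)+tB(\rho))$ gives the relation $\Omega_{m+1}\le\frac{2m+3}{2m+1}\Omega_m$; the trial-state swap gives the sandwich $(2m+1)B(\rho_{m+1})\le\Omega_m\le(2m+1)B(\rho_m)$ (with the $m=0$ upper bound requiring a cutoff, as you note); the Gaussian $\psi_m=c\,r^me^{-r^2/2s}$ yields $e_m\le(m+1)\sqrt{1+\beta/((m+1)b_m)}$; and the upper bound on $B(\rho_m)$ comes from the Euler--Lagrange $L^\infty$ bound $\|\rho_m\|_\infty\le\tilde\mu_m/\beta$ combined with a core--bulk--tail splitting at the Thomas--Fermi scales.

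One point to tighten: your lower-bound sketch via Cauchy--Schwarz, $(\int\rho r\,dr)^2\le\int\rho r^{-1}dr\cdot\int\rho r^3\,dr$, together with the crude estimate $E_{\rm pot}(\rho_{m+1})\le e_{m+1}$ loses a factor of $2$ relative to the stated constant. You need the virial identity for the reduced problem (scaling $\phi\mapsto\lambda\phi(\lambda\,\cdot)$ at the minimizer) which gives $E_{\rm pot}(\rho_{m+1})=e_{m+1}/2$ exactly; inserting this into the Cauchy--Schwarz inequality recovers the sharp denominator. Your remark about ``the interaction part, which rules out a concentrated competitor'' is not quite the mechanism---it is the virial relation, not a competitor argument, that pins down the constant.
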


In particular, we notice that $\Omega_0$ is the critical velocity for the appearance of vortex in the ground state. For large $\Omega$, the central vortex state (\ref{eq:centralvor:sec5}) will no longer be the right ansatz for ground state due to the  symmetry breaking \cite{Seiringer}.
\begin{theorem}(cf. \cite{Seiringer})\label{thm:symbre:sec5} In 2D,  assume potential $V(r)=\frac{1}{2}\gamma_r^2r^2$ ($r=|\bx|$, $\bx\in\Bbb R^d$), for fixed rotational speed $0<\Omega<\gamma_r$, there exists a constant $\beta_{\Omega}$, such that if $\beta\ge\beta_{\Omega}$, no ground state of problem (\ref{eq:minim:sec5}) is an eigenfunction of the angular momentum operator $L_z$ (\ref{eq:rota:sec5}). Let $\phi_g$ be the ground state of (\ref{eq:minim:sec5}) when $\beta\ge\beta_{\Omega}$, then $|\phi_g|$ is not radially symmetric.
\end{theorem}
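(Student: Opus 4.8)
The plan is to prove the stronger statement that, for $\beta$ large (depending on $\Omega$), one has $E^g_{\beta,\Omega}<\mathcal I_{\rm rad}(\beta):=\inf\{E_{\beta,\Omega}(\phi):\phi\in S,\ |\phi|\ \text{radially symmetric}\}$. This single inequality yields both assertions at once: every eigenfunction of $L_z$ has the form $\phi_m(r)e^{im\theta}$, hence has radial modulus, so no ground state can be such an eigenfunction; and if $|\phi_g|$ were radially symmetric then $E^g_{\beta,\Omega}=E_{\beta,\Omega}(\phi_g)\ge\mathcal I_{\rm rad}(\beta)$, a contradiction. The basic tool is the gauge (complete-the-square) identity $E_{\beta,\Omega}(\phi)=\int_{\Bbb R^2}[\tfrac12|(\nabla-i\Omega\mathbf A)\phi|^2+\tfrac12(\gamma_r^2-\Omega^2)|\mathbf x|^2|\phi|^2+\tfrac{\beta}{2}|\phi|^4]\,d\mathbf x$ with $\mathbf A=(-y,x)$; since $0<\Omega<\gamma_r$ the reduced trap $\tfrac12(\gamma_r^2-\Omega^2)|\mathbf x|^2$ is still confining, and discarding the nonnegative magnetic kinetic term gives the universal lower bound $E_{\beta,\Omega}(\phi)\ge e^{\rm TF}_\beta$ for all $\phi\in S$, where $e^{\rm TF}_\beta=\tfrac23\sqrt{\beta(\gamma_r^2-\Omega^2)/\pi}$ is the $2$D Thomas--Fermi energy of the reduced trap (Section~\ref{subsubsec:gsapp} with $\gamma_x=\gamma_y=\sqrt{\gamma_r^2-\Omega^2}$).

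For the matching upper bound I would exhibit a trial state $\phi_\beta$ attaining $e^{\rm TF}_\beta$ to leading order: take the edge-regularized Thomas--Fermi density $\rho^{\rm TF}_\beta$ of the reduced trap and dress it with a triangular lattice of singly quantized vortices of areal density $\Omega/\pi$ (Feynman's rule), so that $\nabla(\arg\phi_\beta)\approx\Omega\mathbf A$ away from vortex cores of size the healing length $\xi\sim\beta^{-1/4}$. The standard bookkeeping of the Thomas--Fermi edge layer together with the vortex-core self-energies gives $E^g_{\beta,\Omega}\le E_{\beta,\Omega}(\phi_\beta)\le e^{\rm TF}_\beta+C(\Omega)\ln\beta=e^{\rm TF}_\beta+o(\sqrt\beta)$. (If one is content with the first assertion only, a single off-centre vortex placed on top of $\rho^{\rm TF}_\beta$ already has energy $\approx\tfrac23(\beta\gamma_r^2/\pi)^{1/2}-\Omega$, which one checks lies strictly below the energy of every $L_z$-eigenfunction once $\beta$ is large; the vortex lattice is needed only for the non-radiality of $|\phi_g|$.)

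The crux is the lower bound $\mathcal I_{\rm rad}(\beta)\ge e^{\rm TF}_\beta+\varepsilon(\Omega)\sqrt\beta$ for all large $\beta$, with $\varepsilon(\Omega)>0$. From the gauge identity, for $\rho=|\phi|^2$ one has $E_{\beta,\Omega}(\phi)\ge\mathcal F_\beta(\rho)+\tfrac12\int|(\nabla-i\Omega\mathbf A)\phi|^2$ with $\mathcal F_\beta(\rho)=\int[\tfrac12(\gamma_r^2-\Omega^2)|\mathbf x|^2\rho+\tfrac\beta2\rho^2]$, and I would split into two cases. The functional $\mathcal F_\beta$ is uniformly convex with $\mathcal F_\beta(\rho)-e^{\rm TF}_\beta\ge\tfrac\beta2\|\rho-\rho^{\rm TF}_\beta\|_{L^2}^2$, so if $\rho$ is not $L^2$-close to $\rho^{\rm TF}_\beta$ the Thomas--Fermi deficit alone exceeds $\varepsilon(\Omega)\sqrt\beta$. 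In the remaining case $\rho$ is essentially the full Thomas--Fermi disc of radius $R_{\rm TF}\sim\beta^{1/4}$ and is strictly positive off a set of radial measure $o(1)$; because $|\phi|$ is radial, the zeros of $\phi$ sit at the origin or on circles $\{|\mathbf x|=s\}$ where $\rho$ vanishes, so on each connected radial positivity interval the winding number $m\in\Bbb Z$ of $\arg\phi$ on $\{|\mathbf x|=s\}$ is constant; writing $\phi=|\phi|e^{iS}$ and using Jensen in $\theta$, $\tfrac12\int|(\nabla-i\Omega\mathbf A)\phi|^2\ge\tfrac12\int\rho\,|\nabla S-\Omega\mathbf A|^2\gtrsim\pi\int_0^\infty\rho(s)\,|m/s-\Omega s|^2\,s\,ds\gtrsim\Omega^2\int_{\Bbb R^2}|\mathbf x|^2\rho\,d\mathbf x\sim\Omega^2\sqrt\beta$, since no fixed integer $m$ can track $\Omega s^2$ over $s\in(0,R_{\rm TF})$. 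Combining the two cases gives the claimed bound, hence $\mathcal I_{\rm rad}(\beta)-E^g_{\beta,\Omega}\ge\varepsilon(\Omega)\sqrt\beta-o(\sqrt\beta)>0$ once $\beta\ge\beta_\Omega$, which is precisely the theorem; the threshold $\beta_\Omega$ is where the $\sqrt\beta$ gain overtakes the logarithmic corrections, and it is genuine because for $\beta$ small the real radial $m=0$ state is the ground state.

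I expect the main obstacle to be making the second case of the dichotomy uniform and quantitative. One must prevent $\rho$ --- while staying $L^2$-close to the Thomas--Fermi disc --- from carving out many thin zero-circles that would host additional vorticity and thereby evade the $\Omega^2\sqrt\beta$ penalty: this requires controlling the number and total width of such circles by the (small) Thomas--Fermi deficit, and upgrading the ``constant winding on each positivity annulus'' heuristic into a genuine lower bound that also handles the vortex-core and outer-edge regions. The vortex-lattice trial-function estimate likewise conceals the by-now standard but nontrivial two-dimensional Thomas--Fermi core/edge energy analysis. Full details are in \cite{Seiringer}.
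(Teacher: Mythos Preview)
The paper does not supply its own proof of this theorem; it is stated with the attribution ``(cf.\ \cite{Seiringer})'' and no argument is given, the reader being referred to Seiringer's paper. So there is nothing to compare against in the present paper.

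That said, your sketch is precisely the strategy of the cited reference: rewrite $E_{\beta,\Omega}$ via the magnetic completion of the square, obtain the Thomas--Fermi energy $e^{\rm TF}_\beta\sim\sqrt{\beta}$ of the reduced trap as a universal lower bound, build a vortex-lattice trial state to show $E^g_{\beta,\Omega}\le e^{\rm TF}_\beta+O(\ln\beta)$, and then prove that any state with radially symmetric modulus pays an extra $c(\Omega)\sqrt{\beta}$ in magnetic kinetic energy because a (piecewise constant) winding number cannot track $\Omega r^2$ across the Thomas--Fermi support. Your reduction of both assertions to the single inequality $E^g_{\beta,\Omega}<\mathcal I_{\rm rad}(\beta)$ is clean, and the obstacle you flag --- ruling out many thin zero-circles that would allow the winding number to jump without incurring a large Thomas--Fermi deficit --- is exactly the technical heart of the matter in \cite{Seiringer}. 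One small remark: your parenthetical about the single off-centre vortex is a bit loose (the claimed energy and the comparison class need care), but it is inessential to the main line, which is correct.
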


For small rotational speed $\Omega$ with radial potential $V(r)$, the minimizer of problem (\ref{eq:minim:sec5}) remains the same as the ground state of the non-rotating case, i.e., the ground state of small $\Omega$ is the same as that of $\Omega=0$. This observation is recently proved rigourously by Aftalion et al. \cite{AftalionJerr}.

\begin{theorem} (cf. \cite{AftalionJerr}) In 2D,  assuming potential $V(r)=\frac{1}{2}\gamma_r^2r^2$ ($r=|\bx|$, $\bx\in\Bbb R^d$),  let $\phi^g_{\beta,\Omega}(\bx)$ be the solution of the minimization problem (\ref{eq:minim:sec5}) with rotational speed $\Omega$ and $\phi^g_{\beta,0}(r)$ be the unique positive minimizer for (\ref{eq:minim:sec5}) with $\Omega=0$ (see section \ref{sec:mathgpe}). There exists $\beta_0\ge0$, such that when $\beta\ge\beta_0$, for $0\leq\Omega\leq \Omega_\beta$ (constant $\Omega_\beta$ depends on $\beta$),
$\phi^g_{\beta,\Omega}=e^{i\theta}\phi^g_{\beta,0}$ for some constant $\theta\in\Bbb R$.
\end{theorem}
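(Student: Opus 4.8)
The plan is to reduce the global minimization of $E_{\beta,\Omega}$ to a single coercivity statement and then to establish that coercivity using the Thomas--Fermi description of the non--rotating ground state. Throughout we may assume $\Omega\ge 0$ (by Theorem \ref{thm:rot:sec5}, $E^g_{\beta,\Omega}=E^g_{\beta,-\Omega}$) and $\Omega<\gamma_r$ (otherwise no minimizer exists, by Theorem \ref{thm:rotnon:sec5}). Write $\eta:=\phi^g_{\beta,0}$; by the non--rotating theory (Theorem \ref{thm:gs} and \cite{LiebSeiringerPra2000}) it is the unique positive minimizer of $E_{\beta,0}$, is radially symmetric, strictly positive, smooth, exponentially decaying, and solves $-\frac12\Delta\eta+V\eta+\beta\eta^3=\mu_0\eta$.

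\emph{Step 1 (energy splitting).} Since $\eta>0$ everywhere, every $\psi$ with $\|\psi\|_2=1$ can be written $\psi=\eta u$ with $\int\eta^2|u|^2=1$. Using that $L_z(\eta u)=\eta L_z u$ because $\eta$ is radial, the Euler--Lagrange equation for $\eta$, the identity $\frac12|\nabla(\eta u)|^2=\frac12\eta^2|\nabla u|^2+\frac14\nabla(\eta^2)\cdot\nabla|u|^2+\frac12|u|^2|\nabla\eta|^2$, and $\int\eta^2|u|^2=\int\eta^2=1$, a direct computation yields
\begin{equation}
E_{\beta,\Omega}(\eta u)=E_{\beta,0}(\eta)+F_\eta(u),\qquad
F_\eta(u):=\frac12\int_{\mathbb{R}^2}\eta^2|\nabla u|^2+\frac{\beta}{2}\int_{\mathbb{R}^2}\eta^4\bigl(|u|^2-1\bigr)^2-\Omega\int_{\mathbb{R}^2}\eta^2\,\bar u\,L_zu .
\end{equation}
Thus the theorem is \emph{equivalent} to: for $\Omega\le\Omega_\beta$ one has $F_\eta(u)\ge0$ over $\{\int\eta^2|u|^2=1\}$, with equality only when $u$ is a constant of modulus one. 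Indeed, the first two terms of $F_\eta$ are nonnegative and vanish simultaneously exactly when $\nabla u\equiv0$ and $|u|\equiv1$, i.e. $u\equiv e^{i\theta_0}$, for which also $L_zu=0$; hence $\min F_\eta=0$ is attained only at $\psi=e^{i\theta_0}\eta$.

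\emph{Step 2 (coercivity --- the main point).} It remains to prove $\Omega\,\bigl|\int\eta^2\bar uL_zu\bigr|<\frac12\int\eta^2|\nabla u|^2+\frac{\beta}{2}\int\eta^4(|u|^2-1)^2$ for every admissible, non--constant $u$, provided $\Omega\le\Omega_\beta$. Writing $u=|u|e^{i\varphi}$ one has $\int\eta^2\bar uL_zu=\int\eta^2|u|^2\,\partial_\theta\varphi$, so this term only sees the angular (non--radial) variation of $u$; it is the \emph{sign--indefinite, first--order} part of $F_\eta$, whereas the Dirichlet term is positive but carries the \emph{degenerate weight} $\eta^2$, which collapses near and beyond the Thomas--Fermi boundary $R_{\rm TF}\sim\beta^{1/4}$ (where $\mu_0\sim\sqrt{\beta}$). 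The strategy: (i) in the bulk $\{r\le(1-\delta)R_{\rm TF}\}$, where $\eta^2$ is bounded below and $\eta^2\approx(\mu_0-V)/\beta$, combine the $\frac{1}{r^2}(\partial_\theta\varphi)^2$ contribution to the Dirichlet term with the quartic penalization to absorb $\Omega\int\eta^2|u|^2\partial_\theta\varphi$ once $\Omega$ is below a threshold of order $R_{\rm TF}^{-2}\sim\beta^{-1/2}$ (up to logarithmic factors, matching the scaling of the critical velocities in Theorem \ref{thm:critical:sec5}); (ii) in the low--density collar and exterior, bound $\Omega\int\eta^2|u|^2\,r|\nabla u|$ by $\frac{\beta}{2}\int\eta^4(|u|^2-1)^2$ plus a small multiple of $\frac12\int\eta^2|\nabla u|^2$, using the precise exponential decay and $C^{1/2}$ boundary behaviour of $\eta$; this is the step that forces $\beta\ge\beta_0$. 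When $u$ is close to a constant this reduces to a weighted Poincaré/Hardy inequality; when $u$ is far from a constant (in particular if $u$ has zeros, i.e. vortices) one invokes the vortex--ball construction and Jacobian/degree lower bounds of \cite{AftalionJerr} to show that the energy cost of any phase circulation against the inhomogeneous weight strictly exceeds the rotational gain $\lesssim\Omega$ for $\Omega\le\Omega_\beta$.

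\emph{Step 3 (conclusion and main obstacle).} Granted the strict coercivity, $\min_{\|\psi\|_2=1}E_{\beta,\Omega}=E_{\beta,0}(\eta)$, and it is attained precisely at $\psi=e^{i\theta_0}\eta=e^{i\theta_0}\phi^g_{\beta,0}$; this in particular re--confirms that $|\phi^g_{\beta,\Omega}|$ is radially symmetric in this regime, complementing Theorem \ref{thm:symbre:sec5}. The hard part is the exterior/boundary estimate in Step 2: outside the bulk the weight $\eta^2$ degenerates while the first--order rotational term does not, so the positivity of $F_\eta$ rests on a delicate quantitative comparison --- precisely the analysis of the Gross--Pitaevskii energy in the small--density region carried out in \cite{AftalionJerr}. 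Making the threshold $\Omega_\beta$ explicit, and ensuring the coercivity constant is uniform over all candidate minimizers (so that \emph{every} minimizer, not just one, is forced to be $e^{i\theta_0}\eta$), are the principal sources of difficulty.
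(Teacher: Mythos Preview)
The paper does not give a proof of this statement: it is quoted directly from \cite{AftalionJerr} with no argument supplied. So there is no ``paper's proof'' to compare against; the relevant benchmark is the argument in \cite{AftalionJerr} itself.

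Your outline follows that argument faithfully. The energy splitting in Step~1 is the Lassoued--Mironescu decomposition, and your computation is correct (the cross term integrates by parts against the Euler--Lagrange equation for $\eta$ exactly as you describe). The reduction of the theorem to the statement $F_\eta(u)\ge 0$ with equality only for constants is also right. Your identification of the obstacle --- degeneracy of the weight $\eta^2$ near and beyond the Thomas--Fermi radius, where the rotational term can in principle compete with the Dirichlet term --- is precisely what drives the analysis in \cite{AftalionJerr}.

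The only issue is that Step~2 is not a proof but an annotated table of contents for one: the bulk/collar splitting, the Poincar\'e/Hardy estimate near constants, and especially the vortex--ball and Jacobian lower bounds far from constants are each substantial pieces of work, and you invoke \cite{AftalionJerr} for all of them. Since that is the very paper the theorem is being quoted from, your proposal is, in the end, a correct high-level summary of the source rather than an independent argument. If your goal was a self-contained proof, the genuine gap is the quantitative lower bound for $F_\eta$ in the small-density region (your Step~2(ii)): making this rigorous requires the detailed a~priori estimates on $\eta$ near the free boundary and the Jacobian machinery, neither of which can be sketched in a paragraph.
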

All these results are valid for   general potential $V(r)$ \cite{Seiringer,AftalionJerr}.

When $\Omega$ increases, the number of vortices in the ground state will also increase and the vortices interact. At high rotational speed, vortices will form a lattice, known as an Abrikosov lattice \cite{AftalionDu,Fetter,Aftalion}.   For the fast rotational speeds, Correggi \cite{Correggi3} studied the  expansion of energy (\ref{eq:engd1:sec5}) and proved that the vortices will be asymptotically equidistributed, which means that the vortices will form a triangle lattice.
For general harmonic potentials in 2D, i.e., $\gamma_x\neq\gamma_y$, Ignat et al.  estimated the critical angular velocity for the existence of $n$ vortices in the ground state as well as the location of the vortex centers  \cite{Ignat1,Ignat2}.

If an anharmonic potential, i.e. $V(\bx)= V(r)=O(r^4)$ when $r\to\infty$, is considered instead of harmonic potentials,  there exists another phase transition \cite{Correggi1,Correggi2}. When $\Omega$ increases,   the ground state of a rotating BEC with anharmonic potential will first undergo a phase transition to the vortex state, and then become a vortex lattice. If the  velocity $\Omega$ keeps increasing, the vortex lattice will disappear, and the density will be depleted near the trap center. Then  all the vortices will be pushed away from the center and form a giant vortex (or vortex ring) \cite{Rougerie}. Thus, there are three typical critical speeds that can be identified with these kinds of phase transitions in rotating BEC \cite{Correggi1}.
In the study of the critical speeds, a more widely used scaling is different from the one adopted here \cite{AftalionDu} (cf. section \ref{sec:semiclass}).

\subsection{Well-posedness of Cauchy problem}
This section is devoted to the well-posedness of Cauchy problem for the rotating GPE (\ref{eq:gpegrot:sec5}). We use the same notations for function spaces as those in section \ref{sec:mathgpe}.

 Like the  non-rotating case ($\Omega=0$) (cf. section \ref{sec:mathgpe}), the key part is to establish the dispersive  estimates like Lemma \ref{lem:stri} for the  evolutionary operator $e^{itL_{R}}$ generated by the linear operator
\be
L_R=-\frac12\nabla^2+V(\bx)-\Omega L_z,\qquad \bx\in\Bbb R^d,\quad d=2,3.
 \ee
 For the special case $V(\bx)=\frac{\omega^2}{2}|\bx|^2$ and $\Omega=\omega$, Hao et al. obtained the dispersive estimates for  $e^{itL_R}$ using a generalization of Mehler's formula for the kernel of $e^{itL_{R}}$ \cite{Hao1,Hao2,Cazenave}. Later, Antonelli et al. found another approach to establish the dispersive estimates for $e^{itL_R}$ without those assumptions on $\Omega$ and $V$ in \cite{Hao1,Hao2}. It turns out that the rotational term $L_z$ does not affect the dispersive behavior (in short time), and the dispersive estimates for $e^{itL_R}$ are then analogous to those in Lemma \ref{lem:stri} for harmonic potentials. Then the well-posedness of the rotational GPE (\ref{eq:gpegrot:sec5}) follows from the classical arguments \cite{Cazenave,Sulem}.

 \begin{theorem} (cf. \cite{SparberA}) Assume  that $V(\bx)$ is given by (\ref{eq:pot:sec5})  and denote $\gamma_{\min}=\min\{\gamma_\alpha\}$ ($\gamma_\alpha>0$; $\alpha=x,y$ for $d=2$, and
$\alpha=x,y,z$ for $d=3$), then we have the following results.

(i) For any initial data $\psi(\bx,t=0)=\psi_0(\bx)\in X(\Bbb R^d)$ ($d=2,3$),
 there exists a
$T_{{\rm max}}\in(0,+\infty]$ such that the Cauchy problem of
(\ref{eq:gpegrot:sec5})
 has a unique maximal solution
$\psi\in C\left([0,T_{{\rm max}}),X\right)$. It is maximal in
the sense that if $T_{{\rm max}}<\infty$, then
$\|\psi(\cdot,t)\|_{X}\to\infty$ when  $t\to T^-_{{\rm
max}}$.

(ii) As long as the solution $\psi(\bx,t)$ remains in the energy
space $X$, the {\sl $L^2$-norm} $\|\psi(\cdot,t)\|_2$ and {\sl
energy} $E_{\beta,\Omega}(\psi(\cdot,t))$ in (\ref{eq:engd1:sec5}) are conserved for
$t\in[0,T_{\rm max})$.

(iii) The solution of the Cauchy problem for (\ref{eq:gpegrot:sec5}) is global in time, i.e.,
  $T_{\rm max}=\infty$, if $\beta\ge0$.

(iv) If $\beta<0$ in 2D and 3D, and if either:
\begin{enumerate}
 \renewcommand{\labelenumi}{(\arabic{enumi})}
\item $V(\bx)$ is axially symmetric, i.e.,
$\Omega\, L_zV(\bx)=0$;
\item  $\Omega\, L_zV(\bx)\neq0$ and $d\sqrt{(1-(\Omega/\gamma_{\min})^2}\ge1$ with $|\Omega|\leq\gamma_{\min}$ ($d=2,3$).
\end{enumerate}
Then there exists  $\psi_0\in X(\Bbb R^d)$  such that  finite time blow-up happens for the solution of the corresponding Cauchy problem (\ref{eq:gpegrot:sec5}).
 \end{theorem}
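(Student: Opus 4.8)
The plan is to follow the classical Cazenave scheme for nonlinear Schr\"odinger equations, the only new analytic input being dispersive (Strichartz) estimates for the linear propagator $e^{itL_R}$ generated by $L_R=-\tfrac12\nabla^2+V(\bx)-\Omega L_z$. The decisive observation is that the rotation can be removed by passing to a uniformly rotating frame: setting $\phi(\bx,t)=\psi(\mathcal R(-\Omega t)\bx,t)$, where $\mathcal R(\theta)$ is the rotation by angle $\theta$ about the $z$-axis, and using that $L_z$ generates this rotation and commutes with $\nabla^2$, one checks that $\phi$ solves the rotation-free equation $i\p_t\phi=-\tfrac12\nabla^2\phi+\widetilde V(\bx,t)\phi+\beta|\phi|^2\phi$ with the time-dependent potential $\widetilde V(\bx,t)=V(\mathcal R(-\Omega t)\bx)$. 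Since $V$ is a fixed non-negative quadratic form, $\widetilde V(\cdot,t)$ is, uniformly in $t$, a quadratic potential with bounded second derivatives, so the Fujiwara / Kitada--Kumano-go construction of the fundamental solution applies and yields Strichartz estimates for the corresponding propagator that are uniform on bounded time intervals; undoing the change of frame gives Strichartz estimates for $e^{itL_R}$ of exactly the form of Lemma~\ref{lem:stri}. I expect the main technical obstacle to sit here: one must verify that the rotation does not spoil the short-time $t^{-d/2}$ dispersive decay of the kernel (this is where the Mehler-type computation of Hao et al., or the more robust oscillatory-integral argument of Antonelli--Marahrens--Sparber, is needed).

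Granting these estimates, part (i) follows from a standard contraction mapping argument applied to the Duhamel formulation $\psi(t)=e^{itL_R}\psi_0-i\beta\int_0^t e^{i(t-s)L_R}\bigl(|\psi|^2\psi\bigr)(s)\,ds$ in a space of the form $C([0,T];X)\cap L^q([0,T];L^r)$ for an admissible pair $(q,r)$, using that the cubic nonlinearity maps $X$ into itself by Gagliardo--Nirenberg/Sobolev and is locally Lipschitz on bounded sets; the blow-up alternative $\|\psi(\cdot,t)\|_X\to\infty$ as $t\to T_{\max}^-$ is built into the iteration, exactly as in Theorem~\ref{thm:dy}. For part (ii), mass conservation comes from pairing the equation with $\bar\psi$ and taking imaginary parts, and energy conservation from pairing with $\p_t\bar\psi$ and taking real parts (the rotation term being self-adjoint contributes a real quantity); both computations are first performed on smooth data and then extended to $X$-solutions by a regularization/density argument.

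For part (iii) I would argue in the rotating frame. There the energy $E^{\rm rot}(\phi)=\int\bigl[\tfrac12|\nabla\phi|^2+\widetilde V|\phi|^2+\tfrac{\beta}{2}|\phi|^4\bigr]d\bx$ is in general only almost conserved, with $\tfrac{d}{dt}E^{\rm rot}(\phi)=\int\p_t\widetilde V\,|\phi|^2\,d\bx$; since $V$ is quadratic and the rotation preserves $x^2+y^2$, one has the pointwise bound $|\p_t\widetilde V(\bx,t)|\lesssim\widetilde V(\bx,t)$, so Gronwall's inequality yields an a priori bound on $\|\phi(\cdot,t)\|_X$ on every finite interval when $\beta\ge0$ (the interaction term having the good sign), hence $T_{\max}=\infty$; alternatively, when $|\Omega|<\min\{\gamma_x,\gamma_y\}$ one may combine conservation of $E_{\beta,\Omega}$ with the coercive lower bound of Lemma~\ref{lem:EOM:sec5}(iii). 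Finally, part (iv) is a virial/variance argument in the spirit of Theorem~\ref{thm:blowup}: with $\delta_V(t)=\int_{\RR^d}|\bx|^2|\psi(\bx,t)|^2\,d\bx$ (which is frame-independent), one computes $\ddot\delta_V$; because $L_z$ commutes with $|\bx|^2$ and is self-adjoint, the rotation enters $\ddot\delta_V$ only through the conserved quantities, leading to an identity of the schematic form $\ddot\delta_V(t)=2dE_{\beta,\Omega}(\psi_0)+(2-d)\int|\nabla\psi|^2-2\int|\psi|^2\bigl(dV+\bx\cdot\nabla V\bigr)+(\text{rotation corrections})$. Hypothesis (1), axial symmetry $\Omega L_zV=0$, makes $\widetilde V$ time-independent and reduces the computation to the fixed-potential virial (with $dV+\bx\cdot\nabla V\ge0$ automatic for the harmonic $V$), while hypothesis (2), the dimensional condition $d\sqrt{1-(\Omega/\gamma_{\min})^2}\ge1$, is precisely what is needed to absorb the time-dependence of $\widetilde V$ and keep the quadratic form controlling the corrections of the correct sign. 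In either case, for $\beta<0$ one chooses $\psi_0\in X$ with finite variance for which the right-hand side is strictly negative --- possible since the focusing quartic term (and, where relevant, the $-\Omega\bar\psi L_z\psi$ contribution to the energy) can be made arbitrarily negative --- and concavity of $\delta_V$ together with $\delta_V\ge0$ forces $T_{\max}<\infty$. I expect the bookkeeping in this virial identity, and checking that (1)--(2) indeed close the estimate, to be the delicate point of the last part.
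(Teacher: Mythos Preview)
Your proposal is correct and matches the approach of the cited reference \cite{SparberA} (Antonelli--Marahrens--Sparber); the paper itself gives no proof beyond the two-sentence outline ``Strichartz estimates for $e^{itL_R}$ $+$ classical Cazenave argument,'' and your rotating-frame reduction $\phi(\bx,t)=\psi(\mathcal R(-\Omega t)\bx,t)$ is precisely the mechanism used in \cite{SparberA} to obtain those estimates without the symmetry restrictions of \cite{Hao1,Hao2}. Two minor remarks: in (iii) your Gronwall argument in the rotating frame is the right general route, since the alternative via Lemma~\ref{lem:EOM:sec5}(iii) only yields coercivity when $|\Omega|<\min\{\gamma_x,\gamma_y\}$ (and $|\Omega|\le 1$), whereas the theorem places no such restriction; and in (iv) it is worth noting that, summing the identities (\ref{eq:sigma_ODE2:sec5}) over $\alpha$, the $\Omega$-dependent terms cancel, so $\ddot\delta_V$ takes exactly the non-rotating form and the rotation enters only when one re-expresses $\int|\nabla\psi|^2$ in terms of $E_{\beta,\Omega}$, producing the extra $4\Omega\langle L_z\rangle(t)$ --- conserved under hypothesis (1) by Lemma~\ref{lem:lemmalz:sec5}, and controlled under hypothesis (2) by the stated dimensional condition.
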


\subsection{Dynamical laws}
\label{subsec:dynrot}
In this section, we present results on the dynamical properties of rotating BEC governed by GPE with an angular momentum term (\ref{eq:gpegrot:sec5}).

For the dynamics of angular momentum expectation in rotating BEC,
we have  the following lemmas \cite{BaoDuZhang}:
\begin{lemma}
\label{lem:lemmalz:sec5}
Suppose $\psi({\bx},t)$ is the solution to the Cauchy problem of
(\ref{eq:gpegrot:sec5}), then we have
\be
\label{eq:ode_lz:sec5}
 \fl{\rd\langle L_z\rangle(t)}{\rd t} =
\left(\gm_x^2-\gm_y^2\right)\dt_{xy}(t),
\quad
\mbox{where}\;\;\dt_{xy}(t)=\int_{\mathbb{R}^d}xy|\psi({\bx},t)|^2d{\bx},
\quad t\geq 0\;.
\ee
Consequently, the angular momentum expectation and energy
for non-rotating part
are conserved, that is,
for any given initial data $\psi(\bx,0)=\psi_0({\bx})$,
\be
\label{conse}
\langle L_z\rangle(t)\equiv \langle L_z\rangle(0),\quad
E_{\beta,0}(\psi)\equiv E_{\beta,0}(\psi_0), \qquad t\ge 0,
\ee
at least for radially symmetric trap in 2D or
cylindrically symmetric trap in 3D, i.e. $\gm_x =\gm_y$.
\end{lemma}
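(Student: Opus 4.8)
The plan is to differentiate the angular–momentum expectation $\langle L_z\rangle(t)=\int_{\mathbb R^d}\bar\psi(\bx,t)\,L_z\psi(\bx,t)\,d\bx$ directly and to insert the GPE \fref{eq:gpegrot:sec5}, exactly in the spirit of the proofs of Lemmas \ref{lem:monode} and \ref{lem:variance}. First I would justify differentiation under the integral sign and all the integrations by parts below by the regularity and exponential decay of the solution (alternatively, by working with smooth, rapidly decaying approximations of $\psi_0$ and passing to the limit). Writing $\p_t\psi=\tfrac{i}{2}\btd^2\psi-i\big(V+\beta|\psi|^2\big)\psi+i\Og L_z\psi$ and its conjugate, this gives
\[
\frac{d}{dt}\langle L_z\rangle(t)=\int_{\mathbb R^d}\Big[(\p_t\bar\psi)\,L_z\psi+\bar\psi\,L_z(\p_t\psi)\Big]\,d\bx,
\]
a sum of four groups of terms corresponding to the kinetic, potential, cubic and rotation parts of the equation.

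Next I would show that three of those four groups cancel. For the kinetic part, integrating by parts twice and using $[\btd^2,L_z]=0$ (the Laplacian commutes with $L_z=-i(x\p_y-y\p_x)$) leaves $\tfrac{i}{2}\int\bar\psi L_z\btd^2\psi-\tfrac{i}{2}\int\bar\psi L_z\btd^2\psi=0$. For the rotation part, the contribution of $+i\Og L_z\psi$ and that of the conjugate term cancel since $[L_z,L_z]=0$ (each reduces to $\pm i\Og\|L_z\psi\|_2^2$ after using self-adjointness of $L_z$). For the cubic nonlinearity, after the $|\psi|^2L_z\psi$ contributions from the two pieces cancel, the surviving term is $-i\beta\int_{\mathbb R^d}|\psi|^2\,L_z(|\psi|^2)\,d\bx$; writing $L_z(|\psi|^2)=-i(x\p_y-y\p_x)|\psi|^2$ and $|\psi|^2\p_\alpha|\psi|^2=\tfrac12\p_\alpha(|\psi|^4)$, one integration by parts turns this into $-\tfrac{i}{2}\int_{\mathbb R^d}(x\p_y-y\p_x)(|\psi|^4)\,d\bx$, which vanishes because $\p_y x=\p_x y=0$.

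The only surviving contribution is the potential one, which after the analogous cancellation of the $V\,L_z\psi$ pieces reduces to $-i\int_{\mathbb R^d}|\psi|^2\,L_z(V)\,d\bx$. Since the quadratic potential \fref{eq:pot:sec5} gives $L_z(V)=-i(x\p_y-y\p_x)V=i(\gm_x^2-\gm_y^2)\,xy$ (the $\gm_z^2z^2$ term commuting with $L_z$), this is exactly $(\gm_x^2-\gm_y^2)\int_{\mathbb R^d}xy\,|\psi|^2\,d\bx=(\gm_x^2-\gm_y^2)\dt_{xy}(t)$, which is \fref{eq:ode_lz:sec5}. When $\gm_x=\gm_y$ the right-hand side vanishes, so $\langle L_z\rangle(t)\equiv\langle L_z\rangle(0)$; and since $E_{\beta,0}(\psi)=E_{\beta,\Og}(\psi)+\Og\,\langle L_z\rangle$ by the definition \fref{eq:engd1:sec5} of $E_{\beta,\Og}$, while $E_{\beta,\Og}(\psi(\cdot,t))$ is conserved along the flow by the well-posedness theorem for \fref{eq:gpegrot:sec5}, we conclude $E_{\beta,0}(\psi(\cdot,t))\equiv E_{\beta,0}(\psi_0)$.

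I expect the main obstacle to be purely technical: rigorously justifying the differentiation under the integral and the several integrations by parts without boundary contributions, i.e.\ controlling the decay of $\psi$, $\btd\psi$ and of the polynomially weighted density $xy|\psi|^2$. As in Lemmas \ref{lem:monode}--\ref{lem:variance}, this is handled by the exponential decay of the solution together with a density/approximation argument; the algebra of the cancellations themselves is routine.
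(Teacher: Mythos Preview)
Your proposal is correct and is exactly the approach the paper (and the cited reference \cite{BaoDuZhang}) uses for this and the surrounding dynamical laws: differentiate the quantity, insert \fref{eq:gpegrot:sec5}, and reduce via the commutator identities $[\nabla^2,L_z]=0$, $[L_z,L_z]=0$, $[|\psi|^2,L_z]\to$ divergence form, leaving only the $[V,L_z]$ contribution which produces $(\gm_x^2-\gm_y^2)\dt_{xy}$. One minor slip: the coefficient in front of $\int(x\p_y-y\p_x)|\psi|^4$ should be $-\tfrac{\beta}{2}$ rather than $-\tfrac{i}{2}$, but this does not affect the vanishing.
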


For the condensate width defined by $\delta_{\alpha}(\cdot)$ in (\ref{eq:def_sigma:sec2}),
we can obtain similar results to that in Lemma \ref{lem:variance} \cite{BaoDuZhang}.
\begin{lemma}
Suppose $\psi({\bx},t)$ is the solution of the problem
(\ref{eq:gpegrot:sec5}), then we have
\bea
\label{eq:sigma_ODE2:sec5}
&&\ddot{\dt}_{\ap}(t) = \int_{{\Bbb R}^d}\biggl[(\p_y\ap-\p_x\ap)
\left(4i\Og\bar{\psi}(x\p_y+y\p_x)\psi+2\Og^2(x^2-y^2)|\psi|^2\right) \nn\\
&&\qquad \qquad \qquad +2|\p_\ap\psi|^2+\bt|\psi|^4-2\ap|\psi|^2
\p_\ap V(\bx) \biggr]d{\bx},\quad t\geq 0,\\
\label{eq:sigma_init0:sec5}
&&\dt_{\ap}(0) = \dt_{\ap}^{(0)} = \int_{\mathbb{R}^d}\ap^2|\psi_0({\bx})|^2
d{\bx},\qquad \ap = x, y, z,\\
\label{eq:sigma_init1:sec5}
&&\dot{\dt}_{\ap}(0) = \dt_{\ap}^{(1)} = 2\int_{\mathbb{R}^d}\ap \left[ -\Og
|\psi_0|^2\left(x\p_y-y\p_x\right)\ap+ {\rm Im}
\left(\bar{\psi}_0\p_\ap\psi_0\right)\right]\; d\bx.
\eea
\end{lemma}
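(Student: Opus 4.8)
The plan is to follow the same computational strategy as in the proof of Lemma \ref{lem:variance}: differentiate $\delta_\alpha(t)=\int_{\Bbb R^d}\alpha^2|\psi(\bx,t)|^2\,d\bx$ twice in time, substituting the rotational GPE (\ref{eq:gpegrot:sec5}) for $\p_t\psi$ and $\p_t\bar\psi$ at each stage and then moving derivatives off $\psi$ by integration by parts. Throughout, boundary terms at infinity are discarded, which is legitimate because $\psi(\cdot,t)$ lies in the energy space $X(\Bbb R^d)$ and decays rapidly (as in Theorem \ref{thm:exponential} for stationary states; in general one first establishes the identities for Schwartz initial data and passes to the limit using the well-posedness theory of section \ref{subsec:dynrot}). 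The initial value (\ref{eq:sigma_init0:sec5}) is immediate from the definition of $\delta_\alpha$.

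For the first derivative I would write $\dot\delta_\alpha=2\,\mathrm{Re}\int_{\Bbb R^d}\alpha^2\bar\psi\,\p_t\psi\,d\bx$ and insert $\p_t\psi=\tfrac{i}{2}\nabla^2\psi-iV\psi-i\beta|\psi|^2\psi+i\Omega L_z\psi$. The potential and nonlinear contributions drop out because $\alpha^2 V|\psi|^2$ and $\alpha^2\beta|\psi|^4$ are real; integrating the Laplacian term by parts produces $2\int\alpha\,\mathrm{Im}(\bar\psi\p_\alpha\psi)$, while the rotational term, after using $2\,\mathrm{Re}(\bar\psi(x\p_y-y\p_x)\psi)=(x\p_y-y\p_x)|\psi|^2$ and integrating by parts against the divergence-free vector field $(-y,x)^T$, yields $-2\Omega\int\alpha\,|\psi|^2(x\p_y-y\p_x)\alpha\,d\bx$. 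Evaluating the sum at $t=0$ gives exactly (\ref{eq:sigma_init1:sec5}).

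Differentiating once more and substituting the GPE again is the core of the argument. The Laplacian-times-Laplacian and Laplacian-times-potential pieces reproduce, after integration by parts, precisely the $2|\p_\alpha\psi|^2+\beta|\psi|^4-2\alpha|\psi|^2\p_\alpha V$ terms familiar from (\ref{eq:sigma_ODE2:sec2}). The genuinely new contributions come from the cross terms between $-\Omega L_z\psi$ and the kinetic/potential parts, together with the term quadratic in $\Omega$ obtained by pairing the rotational term with itself; these are what generate the factor $(\p_y\alpha-\p_x\alpha)$ multiplying $4i\Omega\bar\psi(x\p_y+y\p_x)\psi+2\Omega^2(x^2-y^2)|\psi|^2$ in (\ref{eq:sigma_ODE2:sec5}). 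I expect the main obstacle to be the careful bookkeeping of these rotational cross terms: one must repeatedly use that $L_z$ is skew-adjoint on $L^2$, that the commutator of $x\p_y-y\p_x$ with multiplication by $\alpha^2$ is multiplication by $2\alpha(x\p_y-y\p_x)\alpha$, and that $(x\p_y-y\p_x)\alpha$ equals $-y$, $x$, $0$ for $\alpha=x,y,z$ respectively, so that the surviving integrals collapse onto the stated form. Organising the algebra so that the real and imaginary parts are taken at the right moment — and checking that every purely real integrand (which contributes nothing after taking the imaginary part) is discarded consistently — is where the work lies; once these reductions are in place the identity (\ref{eq:sigma_ODE2:sec5}) follows by collecting terms.
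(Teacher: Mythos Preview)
Your approach is correct and matches the paper's: the paper does not write out a separate proof for this lemma but indicates that it is obtained in the same way as Lemma \ref{lem:variance} (differentiate $\delta_\alpha$ in $t$, substitute the GPE, integrate by parts), with the extra bookkeeping coming from the rotational term $-\Omega L_z\psi$. The points you flag---skew-adjointness of $L_z$, the commutator of $x\p_y-y\p_x$ with $\alpha^2$, and the case distinction $(x\p_y-y\p_x)\alpha\in\{-y,x,0\}$---are exactly the ingredients needed to organise the rotational cross terms into the form (\ref{eq:sigma_ODE2:sec5}).
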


\begin{lemma}
(i) In 2D with a radially symmetric trap, i.e. $d = 2$ and
$\gm_x=\gm_y:=\gm_r$ in (\ref{eq:gpegrot:sec5}), for any initial data
$\psi(x,y,0)=\psi_0=\psi_0(x,y)$, we have for any $t\ge0$,
\be
\label{eq:solution_dt_r:sec5}
\dt_r(t) = \fl{E_{\beta,\Og}(\psi_0)+\Og\langle
L_z\rangle(0)}{\gm_r^2}\left[1-\cos(2\gm_rt)\right]
+\dt_r^{(0)}\cos(2\gm_rt)+\fl{\dt_r^{(1)}}{2\gm_r}\sin(2\gm_r t),
\ee
where $\dt_r(t)= \dt_x(t)+\dt_y(t)$,
$\dt_r^{(0)}:=\dt_x(0)+\dt_y(0)$, and
$\dt_r^{(1)}:=\dot{\dt}_x(0)+\dot{\dt}_y(0)$.
Furthermore, when the initial condition $\psi_0(x,y)$  satisfies
\be
\label{eq:vortex_initial2:sec5}
\psi_0(x,y)=f(r)e^{im\theta}\quad{\rm with}\quad m\in{\mathbb Z}\quad
{\rm and} \quad f(0) = 0\quad{\rm when}\quad m\neq0,
\ee
we have, for any $t\geq 0$,
\begin{align}
\label{eq:solution_dt_xy:sec5}
\dt_x(t)=&\dt_y(t) = \fl{1}{2}\dt_r(t)\nonumber\\
=&\fl{E_{\beta,\Og}(\psi_0)+m\Og}{2\gm_x^2}\left[1-\cos(2\gm_xt)\right]
+\dt_x^{(0)}\cos(2\gm_xt)+\fl{\dt_x^{(1)}}{2\gm_x}\sin(2\gm_xt).
\end{align}
This and (\ref{eq:def_sigma:sec2}) imply that
\begin{equation*}
\sg_x=\sg_y=\sqrt{\fl{E_{\bt,\Og}(\psi_0)+m\Og}{2\gm_x^2}
\left[1-\cos(2\gm_xt)\right]
+\dt_x^{(0)}\cos(2\gm_xt)+\fl{\dt_x^{(1)}}{2\gm_x}\sin(2\gm_xt)}.
\end{equation*}
Thus in this case, the condensate widths $\sg_x(t)$
and $\sg_y(t)$ are periodic functions
with frequency doubling the trapping frequency.

(ii) For all other cases, we have, for any $t\geq 0$
\be
\label{eq:sigma_general:sec5}
\qquad \dt_\ap(t)=\fl{E_{\beta,\Og}(\psi_0)}{\gm^2_\ap}+\left(\dt_\ap^{(0)}-
\fl{E_{\beta,\Og}(\psi_0)} {\gm_\ap^2}\right)\cos(2\gm_\ap
t)+\fl{\dt_\ap^{(1)}}{2\gm_\ap} \sin(2\gm_\ap t)+f_\ap(t),
\ee
where $f_\ap(t)$ is the solution of the following second-order ODE:
\bea
\ddot{f}_\ap(t)+4\gm_\ap^2\;f_\ap(t)=F_\ap(t),
\qquad f_\ap(0)=\dot{f}_\ap(0) = 0, \eea with
\beas
F_\ap(t)
&=& \int_{{\mathbb R}^d}\bigg[2|\p_\ap\psi|^2-2|\nabla\psi|^2
-\bt|\psi|^4+\left(2\gm_\ap^2\ap^2-4V({\bx})\right)|\psi|^2
+4\Og \bar{\psi}L_z\psi\nonumber\\
&&\quad+(\p_y\ap-\p_x\ap)\left(4i\Og\bar{\psi}\left(x\p_y+y\p_x\right)\psi+
2\Og^2(x^2-y^2)|\psi|^2\right)\bigg]d{\bx}.
\eeas
\end{lemma}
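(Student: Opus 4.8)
The plan is to turn each identity (\ref{eq:sigma_ODE2:sec5}) of the preceding lemma into a closed second-order linear ODE for $\delta_\alpha(t)$ and then solve it by elementary means; the radially symmetric 2D case is the one in which the inhomogeneity degenerates to a constant.

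First I would specialize (\ref{eq:sigma_ODE2:sec5}) to the harmonic trap (\ref{eq:pot:sec5}): since $\p_\alpha V=\gm_\alpha^2\alpha$, the term $-2\alpha|\psi|^2\p_\alpha V$ becomes $-2\gm_\alpha^2\alpha^2|\psi|^2$, which integrates to $-2\gm_\alpha^2\delta_\alpha(t)$. Moving $4\gm_\alpha^2\delta_\alpha(t)$ to the left, adding and subtracting $4E_{\beta,\Og}(\psi(\cdot,t))$ inside the integral, and using the energy conservation proved for the rotating GPE ($E_{\beta,\Og}(\psi(\cdot,t))\equiv E_{\beta,\Og}(\psi_0)$), one arrives at
\[
\ddot{\delta}_\alpha(t)+4\gm_\alpha^2\delta_\alpha(t)=4E_{\beta,\Og}(\psi_0)+F_\alpha(t),
\]
where $F_\alpha(t)$ coincides with the integral given for $F_\alpha$ in the lemma; checking that the $|\nabla\psi|^2$, $V|\psi|^2$, $|\psi|^4$ and $\Og\,\bar{\psi}L_z\psi$ contributions combine into that exact $F_\alpha$ is a short bookkeeping step. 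Part (ii) then follows by linear superposition: write $\delta_\alpha=g_\alpha+f_\alpha$, where $g_\alpha$ solves $\ddot g_\alpha+4\gm_\alpha^2 g_\alpha=4E_{\beta,\Og}(\psi_0)$ with Cauchy data $(\delta_\alpha^{(0)},\delta_\alpha^{(1)})$ — whose explicit solution is the first three terms of (\ref{eq:sigma_general:sec5}) — and $f_\alpha$ solves the $F_\alpha$-forced equation with vanishing Cauchy data.

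For part (i), set $d=2$ and $\gm_x=\gm_y=\gm_r$ and add the equations for $\alpha=x$ and $\alpha=y$. Since $\p_yx-\p_xx=-1$ and $\p_yy-\p_xy=+1$ are opposite, every $\Og$-dependent term cancels, leaving a forcing built only from $\int(|\nabla\psi|^2+\bt|\psi|^4)\,d\bx$ and $\int V|\psi|^2\,d\bx=\tfrac12\gm_r^2\delta_r$. Eliminating $\int(|\nabla\psi|^2+\bt|\psi|^4)$ through the energy functional (\ref{eq:engd1:sec5}) and invoking, besides energy conservation, Lemma \ref{lem:lemmalz:sec5} (which gives $\langle L_z\rangle(t)\equiv\langle L_z\rangle(0)$ precisely because $\gm_x=\gm_y$), the right-hand side collapses to the constant $4\big(E_{\beta,\Og}(\psi_0)+\Og\langle L_z\rangle(0)\big)$; solving $\ddot{\delta}_r+4\gm_r^2\delta_r=4\big(E_{\beta,\Og}(\psi_0)+\Og\langle L_z\rangle(0)\big)$ with data $\delta_r^{(0)},\delta_r^{(1)}$ yields (\ref{eq:solution_dt_r:sec5}). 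For the special data $\psi_0=f(r)e^{im\theta}$ I would use the rotational equivariance of (\ref{eq:gpegrot:sec5}) under a radial trap: rotating $\psi_0$ by an angle $\phi$ only multiplies it by $e^{-im\phi}$, and since the flow commutes with rotations while the nonlinearity depends only on $|\psi|^2$, the solution satisfies $|\psi(R_\phi\bx,t)|=|\psi(\bx,t)|$ for every $\phi$, so $|\psi(\cdot,t)|^2$ stays radially symmetric. Hence $\delta_x(t)=\delta_y(t)=\tfrac12\delta_r(t)$ and $\delta_{xy}(t)\equiv0$ for all $t$, and together with $\langle L_z\rangle(0)=m$ — immediate from $L_z\big(f(r)e^{im\theta}\big)=m\,f(r)e^{im\theta}$ and $\|\psi_0\|_2=1$ — formula (\ref{eq:solution_dt_xy:sec5}) and the associated $\sigma_x=\sigma_y$ statement drop out of (\ref{eq:solution_dt_r:sec5}).

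The standard analytic points — differentiation under the integral sign, integration by parts, and the exponential decay of $\psi$ at infinity that kills the boundary terms — are identical to those in the proof of Lemma \ref{lem:variance}, so I would merely cite it. The only genuinely delicate step is the symmetry argument in part (i): one has to make precise that rotational equivariance of the GPE with a radial $V$ is strong enough to propagate radial symmetry of $|\psi|^2$ from the initial datum, the harmless phase factor $e^{-im\phi}$ notwithstanding; everything else reduces to a routine constant-coefficient ODE computation.
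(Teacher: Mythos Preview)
Your proposal is correct and follows the natural route: the paper does not supply a proof of this lemma (it is stated and attributed to \cite{BaoDuZhang}), but the argument you outline---recasting (\ref{eq:sigma_ODE2:sec5}) as $\ddot{\delta}_\alpha+4\gamma_\alpha^2\delta_\alpha=4E_{\beta,\Omega}(\psi_0)+F_\alpha(t)$ via energy conservation, summing over $\alpha=x,y$ in the radial case so that the $(\partial_y\alpha-\partial_x\alpha)$-weighted $\Omega$-terms cancel, and then invoking Lemma \ref{lem:lemmalz:sec5} to freeze $\langle L_z\rangle$---is exactly the intended derivation. Your rotational-equivariance argument for propagating $\delta_x=\delta_y$ from the vortex initial datum is clean and valid; the phase factor $e^{\pm im\phi}$ picked up under rotation indeed drops out of $|\psi|^2$, and uniqueness of solutions closes the loop.
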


Let $\phi_e(\bx)$ be a stationary state of the GPE (\ref{eq:gpegrot:sec5})
with a chemical potential $\mu_e$, i.e., $(\mu_e,\phi_e)$ satisfies the nonlinear eigenvalue problem (\ref{eq:gss:sec5})-(\ref{eq:normgg:sec5}). If the initial data $\psi_0(\bx)$ for the Cauchy problem of (\ref{eq:gpegrot:sec5}) is chosen as
a stationary state with a shift in its center, one can construct an exact
solution of the GPE with an angular momentum  term (\ref{eq:gpegrot:sec5}), which is similar to Lemma \ref{lem:shift:sec2} \cite{BaoDuZhang}.

\begin{lemma}Suppose $V(\bx)$ is given by (\ref{eq:pot:sec5}),
if the initial data $\psi_0(\bx)$  for the Cauchy problem of (\ref{eq:gpegrot:sec5}) is chosen as
\be
\label{eq:init5:sec5}
\psi_0(\bx)=\phi_e(\bx-\bx_0), \qquad
\bx \in {\mathbb R}^d,
\ee
where $\bx_0$ is a given point in ${\mathbb R}^d$, then the exact solution
of (\ref{eq:gpegrot:sec5}) satisfies:
\be
\label{eq:exacts1:sec5}
\psi(\bx,t)=\phi_e(\bx-\bx_c(t))\;e^{-i\mu_e t}\; e^{iw(\bx,t)},
\qquad \bx\in{\mathbb R}^d, \quad t\ge 0,
\ee
where for any time $t\ge0$, $w(\bx,t)$ is linear for $\bx$, i.e.
\be
\label{eq:exact3:sec5}
w(\bx,t) = {\bf c}(t) \cdot \bx + g(t), \qquad {\bf c}(t)=(c_1(t), \cdots,
c_d(t))^T, \qquad \bx\in {\mathbb R}^d, \quad t\ge0,
\ee
and $\bx_c(t)=(x_c(t),y_c(t))^T$ in 2D, and resp., $\bx_c(t)=(x_c(t),y_c(t),z_c(t))^T$ in 3D, satisfies the following second-order ODE system:
\bea
\label{eq:govern_eq1:sec5}
&&\ddot{x}_c(t)-2\Og\dot{y}_c(t)+\left(\gm_x^2-\Og^2\right)x_c(t) = 0, \\
\label{eq:govern_eq2:sec5}
&&\ddot{y}_c(t)+2\Og\dot{x}_c(t)+\left(\gm_y^2-\Og^2\right)y_c(t) = 0,
\qquad t\geq 0, \\
\label{eq:govern_eq3:sec5}
&&x_c(0) = x_0, \quad  y_c(0) = y_0, \qquad \dot{x}_c(0) =
\Og y_0, \qquad  \dot{y}_c(0) = -\Og x_0.
\eea
Moreover, if in 3D, another ODE needs to be added:
\be
\label{eq:govern_eq4:sec5}
\ddot{z}_c(t)+\gm_z^2 z_c(t) = 0,\qquad
z_c(0) = z_0, \qquad \dot{z}_c(0) = 0.
\ee
\end{lemma}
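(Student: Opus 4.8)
The plan is to follow the route of Lemma~\ref{lem:shift:sec2} (worked out in detail in \cite{BaoDuZhang}): one \emph{defines} $\bx_c(t)$, $\mathbf{c}(t)$ and $g(t)$ through a first--order ODE system, verifies by direct substitution that the ansatz (\ref{eq:exacts1:sec5})--(\ref{eq:exact3:sec5}) then solves (\ref{eq:gpegrot:sec5}) with initial datum (\ref{eq:init5:sec5}), and concludes by uniqueness for the Cauchy problem (the well--posedness theorem stated above). Write $\by=\bx-\bx_c(t)$ and $\psi=\phi_e(\by)\,e^{-i\mu_e t}\,e^{i(\mathbf{c}\cdot\bx+g)}$, where $\mathbf{c}$ and $g$ will stay real (real linear ODEs, real data), so $|\psi|^2=|\phi_e(\by)|^2$. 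First I would record three elementary facts matching the three pieces of the equation: (a) the cubic term becomes $\beta|\phi_e(\by)|^2\psi$; (b) since $V(\bx)=\frac{1}{2}\bx^{T}A\bx$ with $A=\mathrm{diag}(\gamma_x^2,\gamma_y^2,\gamma_z^2)$, we have $V(\bx)=V(\by)+\bx^{T}A\bx_c-\frac{1}{2}\bx_c^{T}A\bx_c$, i.e. $V(\bx)$ differs from $V(\by)$ only by a term affine in $\bx$; (c) a linear phase shifts derivatives, $\nabla\big(\phi_e(\by)e^{i\mathbf{c}\cdot\bx}\big)=\big(\nabla\phi_e(\by)+i\mathbf{c}\,\phi_e(\by)\big)e^{i\mathbf{c}\cdot\bx}$, with the analogous formula for $\nabla^2$.

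Next I would compute $i\partial_t\psi$ and $-\frac{1}{2}\nabla^2\psi+V\psi+\beta|\psi|^2\psi-\Og L_z\psi$ and equate them after dividing by the common $e^{-i\mu_e t}e^{i\mathbf{c}\cdot\bx}$. Using (a)--(c) and the stationary equation (\ref{eq:gss:sec5}) for $\phi_e$, the residue organizes into just two groups: a multiple of $\nabla\phi_e(\by)$, and a multiple of $\phi_e(\by)$ whose coefficient is affine in $\bx$. The delicate point is the rotation term: acting $L_z$ on $\phi_e(\bx-\bx_c)e^{i\mathbf{c}\cdot\bx}$ produces, besides $(L_z\phi_e)(\by)e^{i\mathbf{c}\cdot\bx}$, an extra gradient piece $i\,\bx_c^{\perp}\!\cdot\!\nabla\phi_e(\by)e^{i\mathbf{c}\cdot\bx}$ with $\bx_c^{\perp}=(y_c,-x_c)$ (from $x=y_1+x_c$, $y=y_2+y_c$) and an extra linear piece $(c_2x-c_1y)\phi_e(\by)e^{i\mathbf{c}\cdot\bx}$ (from $L_z$ hitting the phase). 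Requiring each group to vanish then \emph{defines} the system: the $\nabla\phi_e$ group gives $\dot{\bx}_c=\mathbf{c}+\Og\,\bx_c^{\perp}$; the $\bx$-linear part of the $\phi_e$ group gives $\dot{\mathbf{c}}=-A\bx_c+\Og(c_2,-c_1)$; the $\bx$-independent part gives the quadrature $\dot g=\frac{1}{2}\bx_c^{T}A\bx_c-\frac{1}{2}|\mathbf{c}|^2$. The initial datum (\ref{eq:init5:sec5}) forces $\bx_c(0)=\bx_0$, $\mathbf{c}(0)=\mathbf{0}$, $g(0)=0$, which in particular keeps $w(\bx,t)=\mathbf{c}(t)\cdot\bx+g(t)$ of the form (\ref{eq:exact3:sec5}).

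Eliminating $\mathbf{c}$ from the first two relations — substitute $c_1=\dot x_c-\Og y_c$, $c_2=\dot y_c+\Og x_c$ into $\dot{\mathbf{c}}=-A\bx_c+\Og(c_2,-c_1)$ — gives exactly $\ddot x_c-2\Og\dot y_c+(\gamma_x^2-\Og^2)x_c=0$ and $\ddot y_c+2\Og\dot x_c+(\gamma_y^2-\Og^2)y_c=0$, the $2\Og$ gyroscopic and $\Og^2$ centrifugal coefficients being precisely the residue of this elimination; in 3D the $z$-block decouples ($L_z$ does not see $z$) and reduces to $\ddot z_c+\gamma_z^2 z_c=0$. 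Since $\mathbf{c}(0)=\mathbf{0}$, the relation $\dot{\bx}_c=\mathbf{c}+\Og\bx_c^{\perp}$ gives $\dot x_c(0)=\Og y_0$, $\dot y_c(0)=-\Og x_0$, $\dot z_c(0)=0$, i.e. (\ref{eq:govern_eq3:sec5}) and (\ref{eq:govern_eq4:sec5}). The $(\bx_c,\mathbf{c},g)$-system is linear with constant coefficients, hence globally uniquely solvable, so the ansatz is a genuine $C([0,\infty);X)$ solution of (\ref{eq:gpegrot:sec5}) with the prescribed initial value; by uniqueness it is the solution. The main obstacle is exactly the bookkeeping just sketched: correctly separating the two effects of $L_z$ (on the translated profile $\phi_e(\bx-\bx_c)$ versus on the phase $e^{i\mathbf{c}\cdot\bx}$) and keeping the signs in the $\bx_c^{\perp}$ and $(c_2,-c_1)$ terms consistent, since it is precisely this $\Og$-coupling — absent in Lemma~\ref{lem:shift:sec2} — that turns two first-order identities into the coupled second-order system (\ref{eq:govern_eq1:sec5})--(\ref{eq:govern_eq3:sec5}).
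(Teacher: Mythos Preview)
Your proposal is correct and follows exactly the route the paper has in mind: the paper does not write out a proof here but, as for the non-rotating analogue Lemma~\ref{lem:shift:sec2}, refers to \cite{BaoDuZhang}, where precisely this direct-substitution argument (ansatz, separate the $\nabla\phi_e$ and $\phi_e$ coefficients, derive the first-order system for $(\bx_c,\mathbf{c},g)$, eliminate $\mathbf{c}$, then invoke uniqueness) is carried out. Your bookkeeping of the $L_z$ contributions and the resulting signs in $\dot{\bx}_c=\mathbf{c}+\Og(y_c,-x_c)^T$, $\dot{\mathbf{c}}=-A\bx_c+\Og(c_2,-c_1)^T$ is right and yields (\ref{eq:govern_eq1:sec5})--(\ref{eq:govern_eq4:sec5}) with the stated initial data.
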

We note that the above ODE system (\ref{eq:govern_eq1:sec5})-(\ref{eq:govern_eq3:sec5}) can be solved analytically \cite{BaoDuZhang}.

\section{Numerical methods for rotational BEC}
\label{sec:numrotat}
\setcounter{equation}{0}\setcounter{figure}{0}\setcounter{table}{0}
In this section, we first present efficient and accurate numerical methods for computing ground states, as well as the central vortex states (\ref{eq:centralvor:sec5}) of rotating BEC modeled by the GPE with an angular momentum term (\ref{eq:gpegrot:sec5}). To compute the dynamics of rotating BEC based on GPE (\ref{eq:gpegrot:sec5}), there are new difficulties due to the angular momentum term $L_z$ in (\ref{eq:gpegrot:sec5}). We will show how to design efficient numerical methods for simulating the dynamics of the rotational GPE (\ref{eq:gpegrot:sec5}). In most cases, we consider the potential $V(\bx)$ given as (\ref{eq:pot:sec5}) if no further clarification.

\subsection{Computing  ground states}
In order to find the ground state for rotational GPE (\ref{eq:gpegrot:sec5}), we need to solve the minimization problem (\ref{eq:minim:sec5}). Analogous to the non-rotating case (cf. section \ref{sec:numgs}), we adopt the gradient flow with discrete normalization (GFDN) method (or imaginary time method) \cite{BaoWang2}.
 \begin{align} \label{eq:ngf1:sec6}
&\phi_t = -\fl{\dt E_{\beta,\Og}(\phi)}{\dt \overline{\phi}}=\fl{1}{2}\nabla^2
\phi - V(\bx) \phi -\bt\; |\phi|^2\phi+\Og\; L_z \phi,
\quad  t_n<t<t_{n+1},  \qquad  \\
\label{eq:ngf2:sec6}
&\phi(\bx,t_{n+1})\stackrel{\triangle}{=}
\phi(\bx,t_{n+1}^+)=\fl{\phi(\bx,t_{n+1}^-)}{\|\phi(\cdot,t_{n+1}^-)\|_2},
\quad \bx\in {\Bbb R}^d, \quad d=2,3,\quad n\ge 0,\\
\label{eq:ngf3:sec6}
&\phi(\bx,0)=\phi_0(\bx), \qquad \bx \in {\Bbb R}^d \qquad
\hbox{with}\quad \|\phi_0\|_2=1; \end{align}
where $0=t_0<t_1<t_2<\cdots<t_n<\cdots$ with $\tau_{n}=t_{n+1}-t_{n}>0$ and $\tau=\max_{n\ge 0} \; \tau_n$, and
$\phi(\bx, t_n^\pm)=\lim_{t\to t_n^\pm} \phi(\bx,t)$.
As stated in section \ref{sec:numgs}, the gradient
flow  (\ref{eq:ngf1:sec6}) can be viewed as applying the steepest descent
method to the energy functional $E_{\beta,\Og}(\phi)$ (\ref{eq:engd1:sec5}) without
constraint  and (\ref{eq:ngf2:sec6}) then projects the solution back to
the unit sphere in order to satisfy the constraint (\ref{eq:normgg:sec5}).

In non-rotating BEC, i.e. $\Og=0$,
the  unique real valued nonnegative ground state solution
$\phi_g(\bx)\ge0$ for all $\bx\in{\Bbb R}^d$  is
obtained by choosing a positive initial datum $\phi_0(\bx)\ge0$ for
$\bx\in{\Bbb R}^d$, e.g., the ground state solution of linear
Schr\"{o}dinger equation with a harmonic oscillator potential
\cite{Bao,BaoDu}. For rotating BEC, e.g., $|\Og|<\min\{\gamma_{x},\gamma_y\}$, our numerical
experiences suggest that the initial data can be
chosen as a linear combination of the ground state and central
vortex ground state with index $m=1$ of (\ref{eq:gpegrot:sec5}) when $\bt=0$ and $\Og=0$,
which are given explicitly in section \ref{subsec:numrot}.

\subsubsection{Backward Euler finite difference method} In order to derive a full discretization of
the GFDN (\ref{eq:ngf1:sec6})-(\ref{eq:ngf3:sec6}), we first
truncate  the physical domain of the problem to a
rectangle in 2D or a box in 3D with homogeneous Dirichlet
boundary condition, and then
apply backward Euler
for time discretization and second-order centered finite difference
for spatial derivatives. The backward Euler finite difference method is similar to
the BEFD discretization for non-rotating BEC in section \ref{subsec:BEFD} and we omit the details here for brevity \cite{BaoWang2}.

\subsubsection{Backward Euler Fourier pseudospectral method} Computing the ground state of rotating BEC is a very challenging problem, especially for fast rotational speed $\Omega$ close to $\min\{\gamma_x,\gamma_y\}$, where many vortices exist in the ground state. In order to achieve high resolution for the vortex structure, a very fine mesh must be used if BEFD is used for discretizing the GFDN (\ref{eq:ngf1:sec6})-(\ref{eq:ngf3:sec6}), when $\Omega$ is large.
Here, we propose a Fourier pseudospectral discretization in space for the GFDN (\ref{eq:ngf1:sec6})-(\ref{eq:ngf3:sec6}) to maintain the accuracy for high rotational speed $\Omega$ with less computational cost.

 We first truncate the problem (\ref{eq:ngf1:sec6})-(\ref{eq:ngf3:sec6}) in 2D on a rectangle
$U=[a,b]\tm[c,d]$, and resp. in 3D on a box
$U=[a,b]\tm[c,d]\tm[e,f]$ with homogeneous Dirichlet
boundary condition:
 \begin{align} \label{eq:ngf1t:sec6}
&\phi_t = \left[\fl{1}{2}\nabla^2
 - V(\bx) -\bt |\phi|^2+\Og\; L_z \right]\phi,
\qquad  t_n<t<t_{n+1}, \quad \bx\in U,  \\
\label{eq:ngf2t:sec6}
&\phi(\bx,t_{n+1})\stackrel{\triangle}{=}
\phi(\bx,t_{n+1}^+)=\fl{\phi(\bx,t_{n+1}^-)}{\|\phi(\cdot,t_{n+1}^-)\|_{L^2(U)}},
\quad \bx\in U, \quad n\ge 0,\\
\label{eq:ngf3t:sec6}
&\phi(\bx,\cdot)|_{\p U}=0,\quad \phi(\bx,0)=\phi_0(\bx), \quad \bx \in U \quad
\hbox{with}\quad \|\phi_0\|_{L^2(U)}=1; \end{align}
where $\phi(\bx, t_n^\pm)=\lim_{t\to t_n^\pm} \phi(\bx,t)$.

For the simplicity of notation, we only present the methods in 2D. Generalizations to $d=3$ are straightforward
for tensor product grids and the results remain valid without
modifications.  Choose mesh sizes $\Delta x:=\frac{b-a}{M}$ and
$\Delta y:=\frac{d-c}{N}$ with $M$ and $N$ two even positive integers and
denote the  grid points
as
\be\label{eq:grid:sec5}
x_j:=a+j\,\Delta x,\quad j=0,1,\ldots, M;
\qquad y_k:=c+k\,\Delta y,\quad  k=0,1,\ldots,N.\ee
Define the index sets
 \begin{eqnarray*} &&{\calT}_{MN}=\{(j,k)\ |\ j=1,2,\ldots,M-1,\
k=1,2,\ldots, N-1\}, \\
&&{\calT}_{MN}^0=\{(j,k)\ |\ j=0,1,2\ldots,M,\ k=0,1,2\ldots,N\}.
 \end{eqnarray*}
Let $\phi_{jk}^n$ be the numerical approximation of the solution
$\phi(x_j,y_k,t_n)$ of the GFDN (\ref{eq:ngf1t:sec6})-(\ref{eq:ngf3t:sec6}) for $(j,k)\in {\calT}_{MN}^0$ and $n\ge0$, and
denote $\phi^n\in {\Bbb C}^{(M+1)\tm(N+1)}$ to be the numerical
approximate solution at time $t=t_n$. Define
\be\label{eq:fourdomain:sec6}
\lambda^x_p=\frac{2p\pi}{b-a},\quad \lambda_q^y=\frac{2q\pi}{d-c},\quad p= -\frac M2,\ldots,\frac M2-1,\,
q=-\frac N2,\ldots, \frac N2-1.
\ee

Then the backward Euler Fourier pseudospectral (BEFP) discretization for solving GFDN (\ref{eq:ngf1t:sec6})-(\ref{eq:ngf3t:sec6}) reads as \cite{BaoWang2,BaoChernZhang,Zeng}
\begin{align}\label{eq:befp2d:sec6}
&\frac{\phi_{jk}^{(1)}-\phi_{jk}^n}{\tau}=\fl{1}{2 }\nabla^2_s\phi^{(1)}\big|_{jk}+i\Omega\left(y\p_x^s\phi^{(1)}
-x\p_y^s\phi^{(1)}\right)\big|_{jk}
-\left[\beta|\phi_{jk}^n|^2+V_{jk}\right]\phi_{jk}^{(1)},\\
&\phi_{jk}^{n+1}=\frac{\phi_{jk}^{(1)}}{\|\phi_{jk}^{(1)}\|_2},\quad
\phi_{jk}^0=\phi_0(x_j,y_k),\quad (j,k)\in\calT_{MN}^0,
\end{align}
 where $V_{jk}=V(x_j,y_k)$ for $(j,k)\in\calT_{MN}^0$,
 $\|\phi^{(1)}\|_2$ denotes the discrete $l^2$ norm  given by $\|\phi^{(1)}\|_2^2=\Delta x\Delta y\sum\limits_{j=0}^{M-1}\sum\limits_{k=0}^{N-1}|\phi^{(1)}_{jk}|^2$,
and $\nabla^2_s$,  $\p_x^s$ and $\p_y^s$ are the Fourier pseudospectral approximations of $\nabla^2$, $\p_x$ and $\p_y$, respectively,  which can be written as
\begin{align}
&\left(\nabla^2_s\phi\right)_{jk}= \frac{-1}{MN}\sum\limits_{p=-M/2}^{M/2-1}\,
\sum\limits_{q=-N/2}^{N/2-1}\left[(\lambda_p^x)^2+(\lambda_q^y)^2\right]
\widehat{\phi}_{pq}e^{i\lambda^x_p(x_j-a)}
e^{i\lambda^y_q(y_k-c)},\label{eq:pslap:sec5}\\
&\left(\p_x^s\phi\right)_{jk}= \frac{i}{MN}\sum\limits_{p=-M/2}^{M/2-1}\,
\sum\limits_{q=-N/2}^{N/2-1}\lambda_p^x\widehat{\phi}_{pq}e^{i\lambda^x_p(x_j-a)}
e^{i\lambda^y_q(y_k-c)},\label{eq:psdx:sec5}\\
&\left(\p^s_y\phi\right)_{jk}= \frac{i}{MN}\sum\limits_{p=-M/2}^{M/2-1}\,
\sum\limits_{q=-N/2}^{N/2-1}\lambda_q^y\widehat{\phi}_{pq}e^{i\lambda^x_p(x_j-a)}
e^{i\lambda^y_q(y_k-c)},\label{eq:psdy:sec5}
\end{align}
for $-M/2\leq p\leq M/2-1$, $-K/2\leq q\leq K/2-1$. Here $\widehat{\phi}_{pq}$ denotes the Fourier coefficients of mesh function $\phi_{jk}$ as
\be\label{eq:fourcoe:sec6}
\hat{\phi}_{pq}=\sum\limits_{j=0}^{M-1}\sum\limits_{k=0}^{N-1}\phi_{jk}e^{-i\frac{2jp\pi}{M}}
e^{-i\frac{2kq\pi}{N}}=\sum\limits_{j=0}^{M-1}\sum\limits_{k=0}^{N-1}\phi_{jk}e^{-i\lambda^x_p(x_j-a)}
e^{-i\lambda^y_q(y_k-c)}.
\ee
Similar to those in section 3.3, at every time step, we can design an iterative method to solve
the linear system (\ref{eq:befp2d:sec6}) for $\phi^{(1)}$ via discrete Fourier transform with a stabilization term.
We omit the details here for brevity.

\begin{remark}\label{lem:calenergy:sec5} For large $\Omega$, there exists many local minimums for the energy (\ref{eq:engd1:sec5}). To calculate the energy accurately, when the pseudospectral discretization BEFP (\ref{eq:befp2d:sec6}) is used, the terms involving derivatives in energy (\ref{eq:engd1:sec5}) should use the pseudospectral approximations
like (\ref{eq:pslap:sec5})-(\ref{eq:psdy:sec5}). For a numerical approximation $\phi^n_{jk}$ given by the BEFP (\ref{eq:befp2d:sec6}), the discretized energy $E^h_{\beta,\Omega}(\phi^n)$ can be computed as
\begin{align}
E^h_{\beta,\Omega}(\phi^n)=&\Delta x\Delta y\sum\limits_{j=0}^{M-1}\sum\limits_{k=0}^{N-1}\Big[\frac12|(\p_x^s\phi^n)_{jk}|^2
+\frac12|(\p_y^s\phi^n)_{jk}|^2+V(x_j,y_k)|\phi^n_{jk}|^2\nn\\
&
+i\Omega\left(x_j(\p_y^s\phi^n)_{jk}-y_k(\p_x^s\phi^n)_{jk}\right)\bar{\phi}^n_{jk}
+
\frac{\beta}{2}|\phi^n_{jk}|^2\Big].
\end{align}
 \end{remark}
\subsection{Central vortex states with polar/cylindrical symmetry}
\label{subsec:central}
As shown in Theorem \ref{thm:symbre:sec5}, if the potential is radially symmetric, the  ground state density of a rotating BEC may be no longer radially symmetric.
Thus, those simplified finite difference methods for computing ground states of non-rotating BEC with radially symmetric or cylindrically symmetric potentials  in section \ref{subsec:sympotgs} can not be directly used for computing ground states of rotational GPE (\ref{eq:gpegrot:sec5}).

For central vortex state (\ref{eq:centralvor:sec5}) and central vortex line state (\ref{eq:statv35:sec5}), when potential $V$ is radially symmetric in 2D or cylindrically symmetric in 3D, finding the central vortex state (\ref{eq:centralvor:sec5}) in 2D or the central line vortex state (\ref{eq:statv35:sec5}) in 3D, is almost the same as computing the radially symmetric (2D) or cylindrically symmetric (3D) ground states in section \ref{subsec:sympotgs}. Simplified backward Euler finite difference method can be directly used here.

\subsubsection{Formulation of the problem with cylindrical symmetry}
When we consider the harmonic potential $V(\bx)$ (\ref{eq:pot:sec5}), the polar and cylindrical symmetries lead to new efficient and accurate numerical methods.  Since angular momentum rotation does not affect the central vortex states, here we will only consider the GPE (\ref{eq:gpegrot:sec5}) with $\Omega=0$. In particular, for 3D,  we consider GPE
\begin{equation}\label{eq:gpegrot:sec6}
i\;\frac{\partial}{\partial t} \psi =
\left[-\frac{1}{2}\left(\frac{\p^2}{\partial
x^2}+\frac{\p^2}{\partial y^2}+\frac{\p^2}{\partial z^2}\right)
  + V(x,y,z)+
\beta |\psi|^2 \right]\psi,
\end{equation}
where $V(\bx)=\frac{1}{2}(\gamma_r^2(x^2+y^2)+\gamma_z^2z^2)+W(z)$,
and $\psi :=\psi(x,y,z,t)$ is the normalized wave function of the
condensate with
\begin{equation}\label{eq:norm:sec6}
\|\psi(x,y,z,t)\|_2^2 = \int_{{\Bbb R}^3} |\psi(x,y,z,t)|^2 \;dxdydz
= 1.
\end{equation}
To find cylindrically symmetric states ($m=0$) and  central
vortex line states with index or winding number $m$ ($m\ne0$) for
the BEC, we write \cite{BaoWangP} \be\label{eq:antz:sec6} \psi(x,y,z,t)=e^{-i\mu_m
t}\phi_m(x,y,z)=e^{-i\mu_m t}\phi_m(r,z) e^{im\tht}, \ee
where
$(r,\tht,z)$ is the cylindrical coordinates, $\mu_m$ is the
chemical potential, $\phi_m=\phi_m(r,z)$ is a function independent
of time $t$ and angle $\tht$.
Denote
\begin{equation}
  \label{eq:bm:sec6}
\begin{split}
&  B^r_m\phi:=\frac{1}{2}\left[-\frac{1}{r}\frac{\p}{\p r}\left(r\frac{\p}{\p
r}\right)
  + \gm_r^2r^2+\frac{m^2}{r^2}\right]\phi,\quad B^z\phi:=\frac{1}{2}\left[-\frac{\p^2}{\partial
      z^2}+\gm_z^2z^2\right]\phi,\\
&B_m := B^r_m+B^z.
\end{split}
\end{equation}
Plugging (\ref{eq:antz:sec6}) into the GPE
(\ref{eq:gpegrot:sec6}) and the normalization condition (\ref{eq:norm:sec6}), we
obtain  (see \cite{BaoZhang,BaoZhang2,BaoWangP,BaoShen2}  for more details)
\begin{align}\label{eq:neng1:sec6}&\mu_m\;\phi_m =
\left[B_m+W(z)+
\beta |\phi_m|^2 \right]\phi_m,\qquad (r,z)\in
(0,+\infty)\times(-\infty,+\infty), \\
\label{eq:neng2:sec6} &\phi_m(0,z)=0 \quad (\hbox{for}\ m\ne0), \qquad -\ift<z<\ift, \\
\label{eq:neng3:sec6} &\lim_{r\to\ift}\phi_m(r,z)=0, \quad -\ift<z<\ift,
\qquad \lim_{|z|\to \ift}\phi_m(r,z)=0, \quad 0\le r<\ift, \end{align}
under the normalization condition \be\label{eq:norm3d:sec6} \|\phi_m\|_c^2
=2\pi\int_0^\ift\int_{-\ift}^\ift |\phi_m(r,z)|^2\; r\; dzdr
=1.\ee
From a mathematical point of view, the symmetric states ($m=0$)
and central vortex line  states with index $m$ ($m\ne0$) of the
BEC are defined as the minimizer of the  nonconvex
minimization problem (\ref{eq:mincy:sec5}).

To compute the symmetric states and central vortex line states of BEC,  we use the gradient flow with discrete normalization (GFDN) method \cite{BaoShen2}:
\begin{eqnarray}\label{eq:12:sec6}
&&\frac{\partial}{\partial t} \phi(r,z,t)=-B_m\phi
-\left[W(z)+\beta|\phi|^2\right]\phi,\quad  t_n \le t <
t_{n+1}, \quad n\ge 0,\\
 \label{eq:14:sec6} &&\phi(0,z,t)=0 \quad (\hbox{for}\
m\ne0),
\qquad z\in \mathbb{R},, \quad t\ge0, \\
\label{eq:15:sec6} &&\lim_{r\to\ift}\phi(r,z,t)=0, \; z\in \mathbb{R},
\quad \lim_{|z|\to \ift}\phi(r,z,t)=0, \; r\in \mathbb{R^+}=(0,\infty),\\
\label{eq:13:sec6} &&\phi(r,z,t_{n+1}) :=\phi(r,z,t_{n+1}^+) =
\frac{\phi(r,z,t_{n+1}^-)} {\|\phi(\cdot,t_{n+1}^-)\|_c},
 \\
\label{eq:131:sec6} &&\phi(r,z,0) = \phi_0(r,z), \qquad \hbox{with}
\quad \|\phi_0(\cdot)\|_c=1;
\end{eqnarray}
where $0=t_0<t_1<\cdots$,
$\|\phi(\cdot)\|_c^2=2\pi\int_0^\ift\int_{-\ift}^\ift |\phi(r,z)|^2
\;r \; dzdr$, and $\phi(r,z,t_n^\pm)=\lim_{t\to t_n^\pm}
\phi(r,z,t)$.

For the time discretization of
(\ref{eq:12:sec6})-(\ref{eq:131:sec6}), we adopt the following
 backward Euler scheme with projection:

 Given $\phi^0$,
 find  $\tilde \phi^{n+1}$ and $\phi_{MN}^{n+1}$
such that
\bea\label{eq:dis3d1:sec6}
&&\frac{\tilde \phi^{n+1}(r,z)-\phi^{n}(r,z)}{\tau} =-B_m\;
\tilde \phi^{n+1} -\left(
W(z)+\beta\;|\phi^{n}|^2\right)\tilde \phi^{n+1},\qquad \\
\label{eq:dis3d2:sec6} &&\phi^{n+1}(r,z)=
\frac{\tilde \phi^{n+1}(r,z)}{\|\tilde \phi^{n+1}\|_c}.
 \eea

For $\beta=0$, it is shown in section \ref{sec:numgs} (cf. \cite{BaoDu}) that
 the scheme \eqref{eq:dis3d1:sec6} is energy diminishing. However,
 (\ref{eq:dis3d1:sec6}) involves non-constant
coefficients so it can not be solved by a direct fast spectral
solver. Therefore, we propose to solve (\ref{eq:dis3d1:sec6})  iteratively
(for $p=0,1,2,\ldots$) by introducing a stabilization term with
constant coefficient (cf. section \ref{subsec:besp}) \begin{align}& \frac{\tilde
\phi^{n+1,p+1}-\phi^{n}}{\tau} =-(B_m+\ap_n) \tilde
\phi^{n+1,p+1}+\left(\ap_n-W(z)-\beta |\phi^{n}|^2\right)\tilde
\phi^{n+1,p}, \label{eq:sol3d:sec6}\\
 &\tilde \phi^{n+1,0}=\phi^n,\quad \tilde \phi^{n+1}=\lim_{p\rightarrow \infty}
\tilde \phi^{n+1,p},  \quad
\phi^{n+1}=\frac{\tilde \phi^{n+1}}{\|\tilde \phi^{n+1}\|_c}. \label{eq:sol3d2:sec6}
\end{align}
The stabilization factor $\ap_n$ is chosen such that the convergence of
the iteration is `optimal' and $\ap_n$ should be chosen as (cf. section \ref{subsec:besp} and \cite{BaoChernLim})  $\ap_n=\frac{1}{2}\left(b^n_{\rm min}+b^n_{\rm
max}\right)$ with
 \begin{equation*} b^n_{\rm min}= \min_{(r,z)\in\mathbb{\bar R^+}\times\mathbb{R}}
 \left[W(z)+\beta |\phi^{n}(r,z)|^2\right],\;
b^n_{\rm max}= \max_{(r,z)\in\mathbb{\bar{R^+}}\times\mathbb{R}} \left[W(z)+\beta
|\phi^{n}(r,z)|^2\right].\end{equation*}

\subsubsection{Eigenfunctions of $B_m$}The numerical scheme for
(\ref{eq:12:sec6})-(\ref{eq:131:sec6}) requires solving, repeatedly, (\ref{eq:sol3d:sec6}).
Therefore,  it is most convenient
to  use eigenfunctions of $B_m$ as basis functions. Thanks to
(\ref{eq:bm:sec6}), we only need to find eigenfunctions of $B_m^r$ and $B^z$.
We shall construct these eigenfunctions by properly
scaling the  Hermite polynomials and
generalized Laguerre polynomials.

We start with $B^z$. Let $H_l(z)$ ($l=0,1,2,\ldots$) be the
standard Hermite polynomials of degree $l$ satisfying \cite{BaoShen,BaoShen2,BaoLiShen}
\be\label{eq:GHF1:sec6} H_l^{\prime\prime}(z)-2z\;H_l^\prime(z) +
2l\;H_l(z)=0,\; z\in\mathbb{R}, \qquad l=0,1,2,\ldots,\ee
\be\label{eq:GHF2:sec6}\int_{-\ift}^\ift  H_l(z) \;H_{l^\prime}(z)\;
e^{-z^2}\; dz = \sqrt{\pi}\;2^l\;l! \;\delta_{ll^\prime}, \quad
l,l^\prime=0,1,2,\ldots\;,\ee where $\delta_{ll^\prime}$ is the
Kronecker delta.

Define the scaled Hermite
functions  \be\label{eq:GHF3:sec6} h_l(z)=e^{-\gm_z z^2/2}\;
H_l\left(\sqrt{\gm_z} z\right)/\sqrt{2^l\;l!} (\gm_z/\pi)^{1/4},
\qquad z\in \mathbb{R}.\ee
It is clear that $\lim_{|z|\to\ift} h_l(z) =0$.

Plugging (\ref{eq:GHF3:sec6}) into (\ref{eq:GHF1:sec6}) and (\ref{eq:GHF2:sec6}), a simple
computation  shows \be\label{eq:GHF4:sec6}
B^z h_l(z)=-\frac{1}{2}h_l^{\prime\prime}(z)+\frac{1}{2}\gm_z^2z^2 h_l(z)=
\left(l+\frac{1}{2}\right)\gm_z\; h_l(z), \quad z\in {\Bbb R},
\quad
  l\ge0,\ee \be \label{eq:GHF5:sec6} \int_{-\ift}^\ift
h_l(z)\; h_{l^\prime}(z)\; dz =\delta_{ll^\prime}, \qquad
l,l^\prime=0,1,2,\ldots\;.\ee Hence $\{h_l\}_{l=0}^\ift$ are
eigenfunctions of the linear operator $B^z$ in  (\ref{eq:bm:sec6}).

We now consider $B_m^r$.
To this end,
we recall  the definition for the generalized Laguerre  polynomials.

For any fixed $m$ ($m=0,1,2,\ldots$), let $\hat{L}_k^m(r)$
($k=0,1,2,\ldots$) be the the generalized-Laguerre polynomials of
degree $k$ satisfying \cite{GS} \be\label{eq:GLF1:sec6} \left(r
\frac{\rd^2}{\rd r^2}+(m+1-r)\frac{\rd}{\rd r}\right)\hat{L}_k^m(r) +k
\;\hat{L}_k^m(r)=0, \qquad k=0,1,2,\ldots,\ee
\be\label{eq:GLF2:sec6}\int_0^\ift r^m\; e^{-r}\; \hat{L}_k^m(r)\;
\hat{L}_{k^\prime}^m(r)\;dr = C_k^m \;\delta_{kk^\prime}, \quad
k,k^\prime=0,1,2,\ldots,\ee  where
\[C_k^m=\Gamma(m+1)\left(\ba{c} k+m\\
k\\
\ea\right)=\prod_{j=1}^m (k+j), \qquad k=0,1,2,\ldots\;.\]

We
define the scaled generalized-Laguerre functions $L_k^m$ by
\be\label{eq:GLF3:sec6} L_k^m(r)=\frac{\gm_r^{(m+1)/2}}{\sqrt{\pi
C_k^m}}\; r^m\; e^{-\gm_r r^2/2}\; \hat{L}_k^m(\gm_r r^2).\ee

Plugging (\ref{eq:GLF3:sec6}) into (\ref{eq:GLF1:sec6}) and (\ref{eq:GLF2:sec6}), direct
computation  leads to
 \begin{align}\label{eq:GLF4:sec6}
&B^r_m L_k^m(r)=\gm_r(2k+m+1)
L_k^m(r),\\ &\label{eq:GLF5:sec6} 2\pi\int_0^\ift
L_k^m(r)\; L_{k^\prime}^m(r)\; r\; dr =\delta_{kk^\prime}.
\end{align}
 Hence $\{L_k^m\}_{k=0}^\ift$ are
eigenfunctions of  $B^r_m$.

Finally we derive from the above that \cite{BaoShen,BaoShen2,BaoLiShen}
 \bea\label{eq:GLH1:sec6}
B_m(L_k^m(r)h_l(z))
&=&h_l(z)B^r_m L_k^m(r)
+L_k^m(r) B^zh_l(z)\\
&=&\gm_r(2k+m+1)L_k^m(r)h_l(z)+\gm_z\left(l+\frac{1}{2}\right)L_k^m(r)h_l(z)\nn\\
&=&\left[\gm_r(2k+m+1)+\gm_z\left(l+\frac{1}{2}\right)\right]L_k^m(r)h_l(z).
\eea Hence, $\{L_k^m(r)h_l(z)\}_{k,l=0}^\ift$ are eigenfunctions
of the operator $B_m$ defined in (\ref{eq:bm:sec6}).

\subsubsection{Interpolation operators}
In order to efficiently deal with the term
$|\phi^{n}|^2\tilde \phi^{n+1,p}$ in \eqref{eq:sol3d:sec6}, a proper interpolation
operator should be used.  We shall define below scaled
interpolation operators in both $r$, $z$ directions and in the $(r,z)$
space.

Let $\{\hat z_s\}_{s=0}^N$ be the Hermite-Gauss points, i.e., they
are the $N+1$ roots of the Hermite polynomial $H_{N+1}(z)$, and
let $\{\hat \omega^z_s\}_{s=0}^N$ be the associated Hermite-Gauss
quadrature weights \cite{GS}. We have
 \be\label{eq:quad1d1:sec6} \sum_{s=0}^N
\hat{\og}_s^z \; \frac{H_l(\hat{z}_s)}{\pi^{1/4}\sqrt{2^l\;l!}}\;
\frac{H_{l^\prime}(\hat{z}_s)}{\pi^{1/4}\sqrt{2^{l^\prime}\;l^\prime!}}=
\delta_{ll^\prime}, \qquad l,l^\prime=0,1,\ldots,N.\ee
We then define the scaled Hermite-Gauss points and
weights by \be\label{eq:pot1d:sec6} z_s
=\frac{\hat{z}_s}{\sqrt{\gm_z}}, \qquad \og_s^z=\frac{
\hat{\og}_s^z \; e^{\hat{z}_s^2}}{\sqrt{\gm_z}}, \quad
s=0,1,2,\ldots,N.\ee
We derive from (\ref{eq:GHF3:sec6}) and (\ref{eq:quad1d1:sec6}) that
 \bea\label{eq:quad1d2:sec6} \sum_{s=0}^N
\og_s^z \; h_l(z_s)\; h_{l^\prime}(z_s)&=&\delta_{ll^\prime}, \qquad l,l^\prime=0,1,\ldots,N. \eea

Let us denote
\be\label{eq:yn:sec6}
Y_N^h=\text{span}\{h_k: k=0,1,\cdots,N\}.\ee
 We define
\begin{equation}\label{eq:in:sec6}
I^z_N: C(\mathbb{R}) \rightarrow Y_N^h \;\text{ such that }\;
  (I^z_Nf)(z_s)=f(z_s), 0\leq s\leq N, \;\forall f\in C(\mathbb{R}).
\end{equation}

Now, let $\{\hat{r}_j^m\}_{j=0}^M$ be the
generalized-Laguerre-Gauss points  \cite{GS,Shen2,BaoShen2}; i.e. they
are the $M+1$ roots of the polynomial $\hat{L}_{M+1}^m(r)$, and
let $\{\hat\og_j^m\}_{j=0}^M$ be the weights associated with the
generalized-Laguerre-Gauss quadrature \cite{GS,Shen2,BaoShen2}. Then, we
have \be\label{eq:quad2d1:sec6} \sum_{j=0}^M \hat\og_j^m \;
\frac{\hat{L}_k^m(\hat{r}_j^m)}{\sqrt{C_k^m}}\;
\frac{\hat{L}_{k^\prime}^m(\hat{r}_j^m)}{\sqrt{C_{k^\prime}^m}}=
\delta_{kk^\prime}, \qquad k,k^\prime=0,1,\ldots,M.\ee
We then
define the scaled generalized-Laguerre-Gauss points and weights by
 \be\label{eq:pot2d:sec6} r_j^m =\sqrt{\frac{\hat{r}_j^m}{\gm_r}},
\qquad \og_j^m=\frac{\pi\; \hat\og_j^m \; e^{\hat{r}_j^m}}{\gm_r\;
\left(\hat{r}_j^m\right)^m}, \quad j=0,1,\ldots,M.\ee We derive
from (\ref{eq:GLF3:sec6}) and (\ref{eq:quad2d1:sec6})
 that \bea\label{eq:quad2d2:sec6} \sum_{j=0}^M
\og_j^m \; L_k^m(r_j^m)\; L_{k^\prime}^m(r_j^m)&=&\sum_{j=0}^M
\frac{\pi\; \hat\og_j^m \; e^{\hat{r}_j^m}}{\gm_r\;
\left(\hat{r}_j^m\right)^m}\;
L_k^m\left(\sqrt{\hat{r}_j^m/\gm_r}\right)\;
L_{k^\prime}^m\left(\sqrt{\hat{r}_j^m/\gm_r}\right) \nn\\
&=&\sum_{j=0}^M \hat{\og}_j^m \;
\frac{\hat{L}_k^m(\hat{r}_j^m)}{\sqrt{C_k^m}}\;
\frac{\hat{L}_{k^\prime}^m(\hat{r}_j^m)}{\sqrt{C_{k^\prime}^m}}\nn\\
&=&\delta_{kk^\prime}, \qquad k,k^\prime=0,1,\ldots,M. \eea

Let us denote
\be\label{eq:xn:sec6}
X^m_M=\text{span}\{L_k^m: k=0,1,\cdots,M\}.\ee
 We define
\begin{equation}\label{eq:im:sec6}
I^m_M: C(\mathbb{\bar R_+}) \rightarrow X^m_M \;\text{ such that }\;
  (I^m_Mf)(r^m_j)=f(r^m_j), 0\leq j\leq M, \;\forall f\in
C(\mathbb{\bar R_+}).
\end{equation}

Finally, let
\be\label{eq:xmn:sec6}
X^m_{MN}=\hbox{span}\{L_k^m(r)h_l(z)\
:\ k=0,1,2,\ldots,M, \ l=0,1,2,\ldots,N\}.\ee
 we define $I^m_{MN}: C(\mathbb{\bar R_+}\times \mathbb{R})
 \rightarrow X^m_{MN}$
 such that
\begin{equation}\label{eq:imn:sec6}
  (I^m_{MN}f)(r^m_j,z_s)=f(r^m_j,z_s),\ j=0,1,\cdots,M,\
s=0,1,\cdots,N,\ \forall f\in C(\mathbb{\bar R_+}\times
\mathbb{R}).
\end{equation}
It is clear that $I^m_{MN}=I^m_M\circ I_N^z$.

 Note
that the computation of the weights $\{\omega^m_j,\omega^z_s\}$ from
(\ref{eq:pot2d:sec6}) and (\ref{eq:pot1d:sec6}) is not a
stable process for  large $m$, $M$ and $N$. However, they can be
computed in a stable way as suggested in  the Appendix of
\cite{Shen2}.

\subsubsection{A Hermite pseudospectral method in 1D}
\label{subsubsec:hermite}
In this section, we introduce a Hermite pseudospectral method for
computing ground states of 1D BEC. In fact, when $\gm_r\gg\gm_z$
in (\ref{eq:gpegrot:sec6}), the 3D GPE (\ref{eq:gpegrot:sec6}) can be approximated
by a 1D GPE (cf. section \ref{subsubsec:dred}). In this case, the stationary
states satisfy \be \label{eq:neng1d1:sec6}\mu\;\phi =
\left[-\frac{1}{2}\frac{\p^2}{\p z^2}
  + \frac{1}{2}\gm_z^2z^2+W(z)+
\beta |\phi|^2 \right]\phi, \ee under the normalization
condition \be\label{eq:norm1d:sec6} \|\phi\|_2^2 =\int_{-\ift}^\ift
|\phi(z)|^2\; dz =1,\ee where $\phi=\phi(z)$.   The stationary states can
be viewed as the Euler-Lagrange equations of the energy
functional $E(\phi)$, defined as \be\label{eq:eng1d:Sec6} E(\phi)
 =\int_{-\ift}^\ift\left[\frac{1}{2}\left|\p_z\phi\right|^2
 +\left(\frac{1}{2}\gm_z^2 z^2
 +W(z)\right)|\phi|^2+\frac{\beta}{2}\left|\phi\right|^4\right]\;dz,
  \ee
under the constraint (\ref{eq:norm1d:sec6}). Similarly, in this case, the
normalized gradient flow (\ref{eq:12:sec6})-(\ref{eq:131:sec6}) collapses to \cite{BaoShen,BaoShen2,BaoLiShen}
\begin{eqnarray}\label{eq:121d:sec6}
&&\frac{\partial}{\partial t} \phi(z,t)
  =-B^z\phi-W(z)\phi-\beta |\phi|^2 \phi, \qquad  \\
\label{eq:141d:sec6} &&\lim_{|z|\to\ift}\phi(z,t)=0, \qquad t\ge 0,\\
\label{eq:131d:sec6} &&\phi(z,t_{n+1}) :=\phi(z,t_{n+1}^+) =
\frac{\phi(z,t_{n+1}^-)} {\|\phi(\cdot,t_{n+1}^-)\|_2},
 \\
\label{eq:1311d:sec6} &&\phi(z,0) = \phi_0(z), \quad z\in\mathbb{R}
\qquad \hbox{with} \quad \|\phi_0(\cdot)\|_2=1, \qquad
\end{eqnarray}
where $\phi(z,t_n^\pm)=\lim_{t\to t_n^\pm} \phi(z,t)$,
$\|\phi(\cdot)\|_2^2=\int_{-\ift}^\ift |\phi(z)|^2  \; dz$.

Similarly, the scheme (\ref{eq:sol3d:sec6}) in this case becomes:
\be\label{eq:sol1d:sec6} \frac{\tilde
\phi^{n+1,p+1}(z)-\phi^{n}(z)}{\tau} =-(B^z+\ap_n) \tilde
\phi^{n+1,p+1}+\left(\ap_n-W(z)-\beta |\phi^{n}|^2\right)\tilde
\phi^{n+1,p}. \ee We now describe a pseudo-spectral method based
on the scaled Hermite functions $\{h_l(z)\}$ for
\eqref{eq:sol1d:sec6}-\eqref{eq:sol3d2:sec6}.

Let $(u,v)_{\mathbb{R}}=\int_{\mathbb{R}} u\,v dz$ and
$\phi_N^0\in Y_N^h$.
 For $n=0,1,\cdots$,  set $\tilde\phi_N^{n+1,0}=\phi_N^n$ and
$\ap_n=\frac{1}{2}\left(b^n_{\rm min}+b^n_{\rm
max}\right)$ with
\[b^n_{\rm min}= \min_{-\ift< z<\ift} \left[W(z)+\beta
|\phi_N^{n}(z)|^2\right], \qquad b^n_{\rm max}= \max_{-\ift< z<\ift}
\left[W(z)+\beta |\phi_N^{n}(z)|^2\right].\]
Then, the Hermite pseudospectral method for
(\ref{eq:sol1d:sec6})-\eqref{eq:sol3d2:sec6} is: \\
Find
$\tilde \phi_N^{n+1,p+1}\in Y_N^h$ such
that
\be\label{eq:sol1dn:sec6}
\begin{split}
&\left(\frac{\tilde \phi_N^{n+1,p+1}-\phi_N^{n}}{\tau}
+(B^z+\ap_n)
\tilde \phi_N^{n+1,p+1}, h_l\right)_{\mathbb{R}}\\
&\qquad =\left(I^z_N[(\ap_n-W-\beta
|\phi_N^{n})|^2)\tilde \phi_N^{n+1,p}],h_l\right)_{\mathbb{R}}, \; 0\le l\le
N,\;p=0,1,\cdots,\\
&\tilde \phi^{n+1}_N=\lim_{p\rightarrow \infty}
\tilde\phi^{n+1,p}_N,\quad
\phi^{n+1}_N=\frac{\tilde \phi^{n+1}_N}{\|\tilde \phi^{n+1}_N\|_2}.
\end{split}
\ee
We note that  $\tilde\phi^{n+1,p+1}_N$ can be easily determined from
\eqref{eq:sol1dn:sec6} as follows:

We write the expansion
\be \label{eq:GHF6:sec6}
\tilde \phi_N^{n+1,p+1}(z)=\sum_{l=0}^{N} \hat{\phi}_l^{n+1,p+1}\; h_l(z), \quad
\phi_N^{n}(z)=\sum_{l=0}^{N} \hat{\phi}_l^{n}\; h_l(z),\ee
and
\[g^{n,p}(z)=I^z_N\left[\left(\ap_n-W(z)-\beta
|\phi_N^{n}(z)|^2\right)\tilde\phi_N^{n+1,p}(z)\right]=
\sum_{l=0}^{N} \hat{g}_l^{n,p}\; h_l(z),\] where the coefficients
$\{\hat{g}_l^{n,p}\}_{l=0}^N$ can be computed from the known
function values $\{g^{n,p}(z_s)\}_{s=0}^N$ through the discrete
Hermite transform using \eqref{eq:quad1d2:sec6}, i.e., \be
\hat{g}_l^{n,p}=\sum_{s=0}^N g^{n,p}(z_s)\;
h_l(z_s)\;\omega^z_s.\ee Thanks to (\ref{eq:GHF4:sec6})-(\ref{eq:GHF5:sec6}), we
find from (\ref{eq:sol1dn:sec6}) that
 \be\label{eq:sol1d1:sec6}
\frac{\hat{\phi}_l^{n+1,p+1}-\hat{\phi}_l^{n}}{\tau} =
-\left[\gm_z\left(l+\frac{1}{2}\right)+\ap_n\right]\hat{\phi}_l^{n+1,p+1}
+ \hat{g}_l^{n,p},\quad l=0,1,\ldots,N,\ee
from which we derive
\be\label{eq:sol1d2:sec6}\hat{\phi}_l^{n+1,p+1}=\frac{\hat{\phi}_l^{n}+\tau \;\hat{g}_l^{n,p}}{1+\tau\left[\ap_n
+\gm_z\left(l+\frac{1}{2}\right)\right]}, \qquad
l=0,1,\ldots,N.\ee
Then, $\tilde\phi_N^{n+1}$ and $\phi_N^{n+1}$ can be determined from
the second equation in (\ref{eq:sol1dn:sec6}).

\subsubsection{A generalized-Laguerre pseudospectral method in
2D}

We now consider the
 2D BEC with radial symmetry. The physical motivation is that when
 $\gm_z\gg\gm_r$ in
(\ref{eq:gpegrot:sec6}), the 3D GPE (\ref{eq:gpegrot:sec6}) can be approximated by a
2D GPE (cf. section \ref{subsubsec:dred}). In this case, the radial symmetric
state ($m=0$) and central vortex state with index $m$ ($m\ne0$)
satisfy \bea \label{eq:neng2d1:sec6}&&\mu_m\;\phi_m =
\frac{1}{2}\left[-\frac{1}{r}\frac{\p}{\p r}\left(r\frac{\p}{\p
r}\right)
  + \gm_r^2r^2+\frac{m^2}{r^2}+
2\beta |\phi_m|^2 \right]\phi_m,\\
\label{eq:neng2d2:sec6} &&\phi_m(0)=0 \quad (\hbox{for}\ m\ne0), \qquad
\lim_{r\to\ift}\phi_m(r)=0, \eea under the normalization condition
\be\label{eq:norm2d:sec6} \|\phi_m\|_{r}^2 =2\pi\int_0^\ift |\phi_m(r)|^2\;
r\; dr =1,\ee where $\phi_m=\phi_m(r)$.
Again, this nonlinear eigenvalue problem
(\ref{eq:neng2d1:sec6})-(\ref{eq:norm2d:sec6}) can also be viewed as the
Euler-Lagrange equations of the energy functional $E(\phi_m)$,
defined by
 \be\label{eq:eng2d:sec6} E(\phi_m)
 =\pi \int_0^\ift\left[\left|\p_r\phi_m\right|^2+\left(\gm_r^2 r^2
 +\frac{m^2}{r^2}\right)\left|\phi_m\right|^2+\beta\left|\phi_m\right|^4\right]r\;dr,
  \ee
under the constraint (\ref{eq:norm2d:sec6}). Accordingly,
the normalized gradient flow (\ref{eq:12:sec6})-(\ref{eq:131:sec6}) collapses to \cite{BaoShen,BaoShen2,BaoLiShen}
\begin{eqnarray}\label{eq:122d:sec6}
&&\frac{\partial}{\partial t} \phi(r,t)
=-B^r_m\phi-\beta |\phi|^2 \phi, \qquad  \\
\label{eq:142d:sec6} &&\phi(0,t)=0 \quad (\hbox{for}\ m\ne0),
\qquad\lim_{r\to\ift}\phi(r,t)=0, \qquad t\ge 0,\\
\label{eq:132d:sec6} &&\phi(r,t_{n+1}) :=\phi(r,t_{n+1}^+) =
\frac{\phi(r,t_{n+1}^-)} {\|\phi(\cdot,t_{n+1}^-)\|_{r}},
 \\
\label{eq:1312d:sec6} &&\phi(r,0) = \phi_0(r), \quad 0\le r<\ift,
\qquad \hbox{with} \quad \|\phi_0(\cdot)\|_{r}=1, \qquad
\end{eqnarray}
where $\phi(r,t_n^\pm)=\lim_{t\to t_n^\pm} \phi(r,t)$,
$\|\phi(\cdot)\|_{r}^2=2\pi\int_0^\ift |\phi(r)|^2 \;r \; dr$. The
scheme (\ref{eq:sol3d:sec6}) in this case becomes: \be\label{eq:sol2d:sec6}
\frac{\tilde \phi^{n+1,p+1}-\phi^{n}(r)}{\tau} =-(B^r_m+\ap_n)
\tilde \phi^{n+1,p+1}+\left(\ap_n-\beta
|\phi^{n}|^2\right)\tilde \phi^{n+1,p}. \ee We now describe a
pseudo-spectral method based on the scaled generalized-Laguerre
functions $\{L_k^m(r)\}$ for (\ref{eq:sol2d:sec6})-(\ref{eq:sol3d2:sec6}).

Let $(u,v)_{r,\Bbb R^+}=\int_{\Bbb R^+} u\,v\,r\,dr$
 and $\phi_M^0\in X^m_M$.
 For $n=0,1,\cdots$,  set $\tilde\phi_M^{n+1,0}=\phi_M^n$ and
$\ap_n=\frac{1}{2}\left(b^n_{\rm min}+b^n_{\rm
max}\right)$ with
\[b^n_{\rm min}= \min_{0\le r<\ift} \left[\beta
|\phi_M^{n}(r)|^2\right], \qquad b^n_{\rm max}= \max_{0\le r<\ift}
\left[\beta |\phi_M^{n}(r)|^2\right].\]
Then, the generalized-Laguerre pseudospectral method for
\eqref{eq:sol2d:sec6}-\eqref{eq:sol3d2:sec6} is:

Find
$\tilde \phi_M^{n+1,p+1}\in X^m_M$ such
that
\be\label{eq:sol2dn:sec6}
\begin{split}
&\left(\frac{\tilde \phi_M^{n+1,p+1}-\phi_M^{n}}{\tau}
+(B^r_m+\ap_n)
\tilde \phi_M^{n+1,p+1}, L_k^m\right)_{r,\mathbb{R_+}}\\
&\qquad =\left(I^m_M[(\ap_n-\beta
|\phi_M^{n})|^2)\tilde \phi_M^{n+1,p}],L_k^m\right)_{r,\mathbb{R_+}}, \; 0\le k\le
M,\;p=0,1,\cdots,\\
&\tilde \phi^{n+1}_M=\lim_{p\rightarrow \infty}
\tilde\phi^{n+1,p}_M,\quad
\phi^{n+1}_M=\frac{\tilde \phi^{n+1}_M}{\|\tilde \phi^{n+1}_M\|_{r}}.
\end{split}
\ee
The function $\tilde\phi^{n+1,p+1}_M$ can be easily determined from
(\ref{eq:sol2dn:sec6}) as follows:

We write the expansion
\begin{align} \label{eq:GLF6:sec6}
&\tilde \phi_M^{n+1,p+1}(r)=\sum_{k=0}^{M} \hat{\phi}_k^{n+1,p+1}\;
L_k^m(r), \quad
\phi_M^{n}(r)=\sum_{k=0}^{M} \hat{\phi}_k^{n}\;L_k^m(r) ,\\
&g^{n,p}(z)=I^m_M\left[\left(\ap_n-\beta
|\phi_M^{n}(r)|^2\right)\tilde\phi_M^{n+1,p}(r)\right]=
\sum_{k=0}^{M} \hat{g}_k^{n,p}\; L_k^m(r),\nn\end{align} where the
coefficients $\{\hat{g}_k^{n,p}\}_{k=0}^M$ can be computed from
the known function values $\{g^{n,p}(r^m_j)\}_{j=0}^M$ through the
discrete generalized-Laguerre transform using (\ref{eq:quad2d2:sec6}),
i.e., \be \hat{g}_k^{n,p}=\sum_{j=0}^M g^{n,p}(r^m_j)\;
L_k^m(r^m_j)\;\omega^r_j.\ee Thanks to (\ref{eq:GLF4:sec6})-(\ref{eq:GLF5:sec6}),
we find from (\ref{eq:sol2dn:sec6}) that
 \be\label{eq:sol1d1b:sec6}
\frac{\hat{\phi}_k^{n+1,p+1}-\hat{\phi}_k^{n}}{\tau} =
-\left[\gm_r\left(2k+m+1\right)+\ap_n\right]\hat{\phi}_k^{n+1,p+1}
+ \hat{g}_k^{n,p},\quad k=0,\ldots,N,\ee
from which we derive
\be\label{eq:sol1d2b:sec6}\hat{\phi}_k^{n+1,p+1}=\frac{\hat{\phi}_k^{n}+\tau \;\hat{g}_k^{n,p}}{1+\tau\left[\ap_n
+\gm_r\left(2k+m+1\right)\right]}, \qquad
k=0,1,\ldots,N.\ee
Then, $\tilde\phi_M^{n+1}$ and $\phi_M^{n+1}$ can be determined from
the second equation in (\ref{eq:sol2dn:sec6}).

\subsubsection{A generalized-Laguerre-Hermite pseudospectral method in 3D}
We are now in position to describe the generalized-Laguerre-Hermite
pseudo-spectral method for computing symmetric and central vortex
line states of 3D BEC with cylindrical symmetry \cite{BaoShen,BaoShen2,BaoLiShen}.

Let $(u,v)_{r,\mathbb{R^+}\times\mathbb{R}}=\int_\mathbb{R}
\int_{\mathbb{R^+}}u\,v\, r\,dr\,dz$ and $\phi_{MN}^0\in
X^m_{MN}$.
 For $n=0,1,\cdots$,  set $\tilde\phi_{MN}^{n+1,0}=\phi_{MN}^n$ and
$\ap_n=\frac{1}{2}\left(b^n_{\rm min}+b^n_{\rm
max}\right)$ with $U=\mathbb{R^+}\times\mathbb{R}$,
\begin{equation*}b^n_{\rm min}= \min_{(r,z)\in U} \left[W(z)+\beta
|\phi_{MN}^{n}(r,z)|^2\right], \; b^n_{\rm max}=
\max_{(r,z)\in  U}
\left[W(z)+\beta |\phi_{MN}^{n}(r,z)|^2\right].\end{equation*}
Then,  the generalized-Laguerre-Hermite
pseudo-spectral method for  \eqref{eq:sol3d:sec6}-\eqref{eq:sol3d2:sec6} is: find
$\tilde \phi_{MN}^{n+1,p+1}\in X^m_{MN}$ such
that for $ \; 0\le k\le
M,\; 0\le l\le N,\;p=0,1,\cdots$,
\be\label{eq:sol3dn:sec6}
\begin{split}
&\left(\frac{\tilde \phi_{MN}^{n+1,p+1}-\phi_{MN}^{n}}{\tau} +(B_m+\ap_n)
\tilde \phi_{MN}^{n+1,p+1}, L_k^m(r)h_l(z)\right)_{r,\mathbb{R^+}\times\mathbb{R}}\\
&\qquad \quad=\left(I^m_{MN}[(\ap_n-W(z)-\beta
|\phi_{MN}^{n}|^2)\tilde \phi_{MN}^{n+1,p}],L_k^m(r)h_l(z)
\right)_{r,\mathbb{R^+}\times\mathbb{R}},\\
&\tilde \phi^{n+1}_{MN}=\lim_{p\rightarrow \infty}
\tilde\phi^{n+1,p}_{MN},\quad \phi^{n+1}_{MN}=\frac{\tilde
\phi^{n+1}_{MN}}{\|\tilde \phi^{n+1}_{MN}\|_c}.
\end{split}
\ee
The function $\tilde\phi^{n+1,p+1}_{MN}$ can be easily determined from
(\ref{eq:sol3dn:sec6}) as follows:

We write the expansion
\be \label{eq:GLHF6:sec6}
\tilde \phi_{MN}^{n+1,p+1}=\sum_{k=0}^{M}\sum_{l=0}^N
 \hat{\phi}_{kl}^{n+1,p+1}
L_k^m(r) h_l(z), \;
\phi_{MN}^{n}=\sum_{k=0}^{M}
\sum_{l=0}^N\hat{\phi}_{kl}^{n}L_k^m(r)h_l(z),\ee
and
\begin{equation*}g^{n,p}(r,z)=I^m_{MN}\left[\left(\ap_n-W(z)-\beta
|\phi_{MN}^{n}|^2\right)\tilde\phi_{MN}^{n+1,p}\right]=
\sum_{k=0}^{M} \sum_{l=0}^N\hat{g}_{kl}^{n,p}\; L_k^m(r)h_l(z),\end{equation*}
where the coefficients $\{\hat{g}_{kl}^{n,p}\}$ can be computed
from the known function values $\{g^{n,p}(r^m_j,z_s)\}$ through
the discrete generalized-Laguerre transform and discrete Hermite
transform using (\ref{eq:quad2d2:sec6}) and  (\ref{eq:quad1d2:sec6}), i.e., \be
\hat{g}_{kl}^{n,p}=\sum_{s=0}^N\sum_{j=0}^M g^{n,p}(r^m_j,z_s)\;
L_k^m(r^m_j)\;h_l(z_s)\;\omega^r_j\;\omega^z_s.\ee Thanks to
(\ref{eq:GLF4:sec6})-(\ref{eq:GLF5:sec6}) and (\ref{eq:GHF4:sec6})-(\ref{eq:GHF5:sec6}), we find
from
 (\ref{eq:sol3dn:sec6}) that
\be\label{eq:sol3d1:sec6}
\frac{\hat{\phi}_{kl}^{n+1,p+1}-\hat{\phi}_{kl}^{n}}{\tau} =
-\left[\gm_r(2k+m+1)\gm_z\left(l+\frac{1}{2}\right)+\ap_n\right]
\hat{\phi}_{kl}^{n+1,p+1}
+ \hat{g}_{kl}^{n,p},\ee
from which we derive
\be
\hat{\phi}_{kl}^{n+1,p+1}=\frac{\hat{\phi}_{kl}^{n}+\tau \;\hat{g}_{kl}^{n,p}}{1+\tau\left[\ap_n
+\gm_r(2k+m+1)+\gm_z\left(l+\frac{1}{2}\right)\right]}.\ee
Then,  $\tilde\phi_{MN}^{n+1}$ and $\phi_{MN}^{n+1}$ can be determined from
the second equation in \eqref{eq:sol3dn:sec6}.

\subsection{Numerical methods for dynamics}
In this section, we consider different numerical methods for solving the GPE (\ref{eq:gperot:sec5})
with an angular momentum rotation term in $d$-dimensions ($d=2,3$) for the dynamics of rotating BEC:
 \bea \label{eq:gpegrot2:sec6}
&&i\;\pl{\psi(\bx,t)}{t}=-\fl{1}{2}\btd^2 \psi+ V(\bx)\psi +
\beta|\psi|^2\psi-\Omega L_z \psi, \quad \bx\in {\Bbb R}^d,
\quad t>0, \qquad \\
\label{eq:gpegp:sec6}
&&\psi(\bx,0)=\psi_0(\bx), \qquad \bx\in {\Bbb R}^d,\qquad\|\psi_0\|_2=1,
\eea
 where $L_z=-i (x\p_y-y\p_x)$ and $V(\bx)$ in $d$ dimensions is given in (\ref{eq:pot:sec5}).

In fact, many efficient and accurate numerical methods have been proposed for discretizing the above GPE,
such as the time-splitting Fourier pseudospectral method via the alternating direction implicit (ADI) to decouple the angular momentum rotation term \cite{BaoWang}, time-splitting finite element method based on polar and cylindrical coordinates in 2D and 3D, respectively, such that the angular momentum rotation term becomes constant coefficient \cite{BaoDuZhang}, time-splitting generalized Laguerre-Hermite pseudospectral method via eigen-expansion of the linear operator \cite{BaoLiShen}, finite difference time domain methods \cite{BaoCai2}, etc. Each method has its own advantages and disadvantages.
Here we present the detailed algorithms for some of these methods.


\subsubsection{Time splitting Fourier pseudospectral method via an ADI technique}

Due to the external trapping potential $V(\bx)$, the
   solution $\psi(\bx,t)$ of (\ref{eq:gpegrot2:sec6})-(\ref{eq:gpegp:sec6})
   decays to zero exponentially fast when $|\bx|\to \infty$.
Thus in practical computation, we can truncate the problem
(\ref{eq:gpegrot2:sec6})-(\ref{eq:gpegp:sec6}) into a bounded computational domain:
 \begin{align}
&i\p_t\psi({\bx},t) = -\fl{1}{2}\nabla^2\psi +
\left[V(\bx)+\bt|\psi|^2 - \Og L_z\right] \psi,\quad {\bx}\in U,\;
t>0,\label{eq:GPE2:sec6} \\
 &\psi({\bx},0) = \psi_0({\bx}), \quad
{\bx}\in U;\label{eq:initial_data:sec6}
\end{align}
with periodic boundary condition.
Here  we choose
$U=[a,b]\tm[c,d]$ in 2D, and resp.,
$U=[a,b]\tm[c,d]\tm[e,f]$ in 3D,
 with $|a|$, $|b|$, $|c|$, $|d|$, $|e|$  and $|f|$
sufficiently large.

We choose a time step size $\tau>0$. For $n=0,1,2,\cdots$, similar to the case of non-rotating GPE in section \ref{subsubsec:TSSP}, from
time $t=t_n=n\tau$ to $t=t_{n+1}=t_n+\tau$, the GPE
(\ref{eq:GPE2:sec6}) is  solved in two splitting steps. One solves
first \cite{BaoWang} \be \label{eq:fstep:sec6} i\;\p_t\psi({\bx},t) = -\fl{1}{2}\nabla^2
\psi(\bx,t) - \Og L_z \psi(\bx,t)\ee for the time step of length
$\tau$, followed by solving
 \be \label{eq:sstep:sec6}
i\;\p_t\psi({\bx},t) = V(\bx) \psi(\bx,t) +
\bt|\psi(\bx,t)|^2\psi(\bx,t),
 \ee
for the same time step. (\ref{eq:sstep:sec6}) can be integrated exactly \cite{BaoDuZhang,BaoWang}, and we find for $\bx \in U$ and
$t_n\le t\le t_{n+1}$:
\be \label{eq:solode:sec6}
\psi(\bx,t)=
e^{-i[V(\bx)+\bt|\psi(\bx,t_n)|^2](t-t_n)}\;
\psi(\bx,t_n).
\ee
 To discretize (\ref{eq:fstep:sec6}) in space, compared with
non-rotating BEC (cf. section \ref{sec:numdym}), i.e. $\Og=0$ in
(\ref{eq:gpegrot2:sec6}), the
main difficulty is that the coefficients in $L_z$ are {\sl not}
constants which causes big trouble in applying sine or Fourier
pseudospectral discretization. Due to the special structure in the
angular momentum rotation term $L_z$, we will apply
the alternating direction implicit (ADI) technique and decouple
the operator $-\fl{1}{2}\nabla^2 - \Og L_z$ into two one dimensional
operators such that each operator becomes a summation of terms
with constant coefficients in that dimension. Therefore, they can
be discretized in space by Fourier pseudospectral method and
the ODEs in phase space can be integrated  analytically.

{\it Discretization in 2D.} When $d=2$ in (\ref{eq:fstep:sec6}),
we choose mesh sizes $\btu x>0$ and
$\btu y>0$ with $\btu x=(b-a)/M$ and $\btu y=(d-c)/N$ for $M$ and
$N$ even positive integers, and let the grid points be
\be x_j=a + j \btu x, \quad j=0,1,2,\cdots, M; \quad y_k=c +k
\btu y, \quad k=0,1,2,\cdots,N.\ee Let $\psi_{jk}^n$ be the
approximation of $\psi(x_j,y_k,t_n)$ and $\psi^n$ be the solution
vector with component $\psi_{jk}^n$.

  From time $t=t_n$ to $t=t_{n+1}$, we solve (\ref{eq:fstep:sec6}) first
\be\label{eq:fstep1:sec6} i\;\p_t\psi({\bx},t) = -\fl{1}{2}\p_{xx}
\psi(\bx,t) - i\Og y\p_x \psi(\bx,t), \ee for the time step of
length $\tau$, followed by solving \be\label{eq:fstep2:sec6}
i\;\p_t\psi({\bx},t) = -\fl{1}{2}\p_{yy} \psi(\bx,t) + i\Og x\p_y
\psi(\bx,t), \ee for the same time step. Using the standard
second order Strang splitting, a time splitting Fourier pseudospectral (TSSP) method
  for solving (\ref{eq:GPE2:sec6})-(\ref{eq:initial_data:sec6}) can be written as \cite{BaoWang}:
 \begin{equation} \label{eq:tssp2d:sec6}\begin{split}
&\psi_{jk}^{(1)}=\sum_{p=-M/2}^{M/2-1}
 e^{-i\tau(\mu_p^2
+2\Og y_k \mu_p)/4}\; \widehat{(\psi_k^n)}_p\; e^{i\mu_p(x_j-a)},
    \ (j,k)\in\calT_{MN}^0, \\
&\psi_{jk}^{(2)}=\sum_{q=-N/2}^{N/2-1}
 e^{-i\tau(\ld_q^2
-2\Og x_j \ld_q)/4}\; \widehat{(\psi_j^{(1)})}_q\;
e^{i\ld_q(y_k-c)},
    \ (j,k)\in\calT_{MN}^0, \\
&\psi^{(3)}_{jk}=
e^{-i\tau [V(x_j,y_k)+\bt|\psi_{jk}^{(2)}|^2]}\;\psi_{jk}^{(2)},\qquad (j,k)\in\calT_{MN}^0,
  \\
&\psi_{jk}^{(4)}=\sum_{q=-N/2}^{N/2-1}
 e^{-i\tau(\ld_q^2
-2\Og x_j \ld_q)/4}\; \widehat{(\psi_j^{(3)})}_q\;
e^{i\ld_q(y_k-c)},
    \ (j,k)\in\calT_{MN}^0, \\
&\psi_{jk}^{n+1}=\sum_{p=-M/2}^{M/2-1}
 e^{-i\tau(\mu_p^2
+2\Og y_k \mu_p)/4}\; \widehat{(\psi_k^{(4)})}_p\;
e^{i\mu_p(x_j-a)},
    \ (j,k)\in\calT_{MN}^0,
\end{split}
\end{equation}
where  for each fixed $k$, $\widehat{(\psi_k^\alpha)}_p$
($p=-\frac{M}{2},\cdots,\frac{M}{2}-1$) with an index $\ap$,
the Fourier coefficients of the vector
$\psi_k^\ap=(\psi_{0k}^\ap$, $\psi_{1k}^\ap$, $\cdots$,
$\psi_{(M-1)k}^\ap)^T$, are defined as \be \label{eq:Fouv1:sec6}
\widehat{(\psi_k^\ap)}_p=\frac{1}{M}\sum_{j=0}^{M-1}
 \psi^\ap_{jk}\;e^{-i\mu_p(x_j-a)},  \quad\mu_p=\frac{2p\pi}{b-a},\quad
 p=-\fl{M}{2},\cdots,\fl{M}{2}-1;
\ee similarly, for each fixed $j$, $\widehat{(\psi_j^\ap)}_q$
($q=-\frac{N}{2},\cdots,\frac{N}{2}-1$), the Fourier coefficients of the vector
$\psi_j^\ap=(\psi_{j0}^\ap$, $\psi_{j1}^\ap$, $\cdots$,
$\psi_{j(N-1)}^\ap)^T$, are defined as \be \label{eq:Fouv2:sec6}
\widehat{(\psi_j^\ap)}_q=\frac{1}{N}\sum_{k=0}^{N-1}
 \psi^\ap_{jk}\;e^{-i\ld_q(y_k-c)},  \quad \ld_p=\frac{2q\pi}{d-c},\quad
 q=-\fl{N}{2},\cdots,\fl{N}{2}-1.
\ee
For the TSSP (\ref{eq:tssp2d:sec6}), the total memory requirement is $O(MN)$ and the total
computational cost per time step is $O(MN\ln (MN))$. The scheme is
time reversible
just as it holds for the GPE (\ref{eq:GPE2:sec6}), i.e. the
scheme is
unchanged if we interchange $n\leftrightarrow n+1$ and $\tau\leftrightarrow -\tau$ in (\ref{eq:tssp2d:sec6}).  Also, a main advantage of the
numerical method is its time-transverse invariance, just as it
holds for the GPE (\ref{eq:GPE2:sec6}) itself. If a constant $\ap$ is
added to the external potential $V$, then the discrete wave
functions $\psi_{jk}^{n+1}$ obtained from (\ref{eq:tssp2d:sec6}) get
multiplied by the phase factor $e^{-i\ap(n+1)\tau}$, which
leaves the discrete quadratic observable  $|\psi_{jk}^{n+1}|^2$
unchanged.

{\it Discretization in 3D.}  When $d=3$ in (\ref{eq:fstep:sec6}), we choose mesh sizes $\btu x>0$, $\btu
y>0$ and $\btu z>0$ with $\btu x=(b-a)/M$, $\btu y =(d-c)/N$ and
$\btu z=(f-e)/L$ for even positive integers $M$, $N$  and $L$, and
let the grid points be
\be x_j=a + j \btu x, \quad y_k=c +k
\btu y, \ (j,k)\in\calT_{MN}^0; \quad z_l=e+l \btu z, \ 0\le l\le L.\ee Let
$\psi_{jkl}^n$ be the approximation of $\psi(x_j,y_k,z_l,t_n)$ and
$\psi^n$ be the solution vector with component $\psi_{jkl}^n$.

  Similar as those for 2D case, from time $t=t_n$ to $t=t_{n+1}$,
  we solve (\ref{eq:fstep:sec6}) first
\be\label{eq:fstep3:sec6} i\;\p_t\psi({\bx},t) =
\left(-\fl{1}{2}\p_{xx}-\frac{1}{4}\p_{zz} - i\Og y\p_x\right)
\psi(\bx,t), \ee for the time step of length $\tau$, followed by
solving \be\label{eq:fstep4:sec6} i\;\p_t\psi({\bx},t) =
\left(-\fl{1}{2}\p_{yy} -\frac{1}{4}\p_{zz} + i\Og x\p_y\right)
\psi(\bx,t), \ee for the same time step. A second order time splitting Fourier pseudospectral (TSSP) method
  for solving (\ref{eq:GPE2:sec6})-(\ref{eq:initial_data:sec6}) can be written as \cite{BaoWang}:
  \begin{align}
&\psi_{jkl}^{(1)}=\sum_{p=-M/2}^{M/2-1}\;\sum_{s=-L/2}^{L/2-1}
 e^{-i\tau(2\mu_p^2+\gm_s^2
+4\Og y_k \mu_p)/8}\ \widehat{(\psi_k^n)}_{ps}\ e^{i\mu_p(x_j-a)}\
e^{i\gm_s(z_l-e)}, \nn\\
&\psi_{jkl}^{(2)}=\sum_{q=-N/2}^{N/2-1}\;\sum_{s=-L/2}^{L/2-1}
 e^{-i\tau(2\ld_q^2+\gm_s^2
-4\Og x_j \ld_q)/8}\ \widehat{(\psi_j^{(1)})}_{qs}\
e^{i\ld_q(y_k-c)}\ e^{i\gm_s(z_l-e)},\nn \\
&\psi^{(3)}_{jkl}=e^{-i\tau
[V(x_j,y_k,z_l)+\bt|\psi_{jkl}^{(2)}|^2]}\;\psi_{jkl}^{(2)},\quad
\ (j,k,l)\in\calT_{MNL}^0,\label{eq:tssp3d:sec6}\\
&\psi_{jkl}^{(4)}=\sum_{q=-N/2}^{N/2-1}\;\sum_{s=-L/2}^{L/2-1}
 e^{-i\tau(2\ld_q^2+\gm_s^2
-4\Og x_j \ld_q)/8} \widehat{(\psi_j^{(3)})}_{qs}\
e^{i\ld_q(y_k-c)}\ e^{i\gm_s(z_l-e)},\nn \\
&\psi_{jkl}^{n+1}=\sum_{p=-M/2}^{M/2-1}\;\sum_{s=-L/2}^{L/2-1}
 e^{-i\tau(2\mu_p^2+\gm_s^2
+4\Og y_k \mu_p)/8}\ \widehat{(\psi_k^{(4)})}_{ps}\
e^{i\mu_p(x_j-a)}\ e^{i\gm_s(z_l-e)},\nn
\end{align}
 where
 \[\calT_{MNL}^0=\{(j,k,l) \ |\ j=0,1,\ldots,M,\ k=0,1,\ldots,N,\ l=0,1,\ldots,L\},\]
 and for each
fixed $k$, $\widehat{(\psi_k^\ap)}_{ps}$ ($-M/2\le p\le M/2-1$,
$-L/2\le s\le L/2-1$) with an index $\ap$,
the Fourier coefficients of the vector $\psi_{jkl}^\ap$
($0\le j< M$, $0\le l<L$), are defined as \[
\widehat{(\psi_k^\ap)}_{ps}=\frac{1}{ML}\sum_{j=0}^{M-1}
\sum_{l=0}^{L-1}
 \psi^\ap_{jkl}\ e^{-i\mu_p(x_j-a)} \ e^{-i\gm_s(z_l-e)},  \
 -\fl{M}{2}\le p<\fl{M}{2}, \ -\frac{L}{2}\le s<\frac{L}{2};
\] similarly, for each fixed $j$, $\widehat{(\psi_j^\ap)}_{qs}$
($-N/1\le q \le N/2-1$, $-L/2\le s\le L/2-1$) with an index $\ap$,
the Fourier
coefficients of the vector $\psi_{jkl}^\ap$ ($k=0,\cdots,N$,
$l=0,\cdots,L$), are defined as \[
\widehat{(\psi_j^\ap)}_{qs}=\frac{1}{NL}\sum_{m=0}^{N-1}
\sum_{l=0}^{L-1}
 \psi^\ap_{jkl}\ e^{-i\ld_q(y_k-c)} \ e^{-i\gm_s(z_l-e)},  \
 -\fl{N}{2}\le q<\fl{N}{2}, \ -\frac{L}{2}\le s < \frac{L}{2},
 \] with $\gm_s=\frac{2\pi s}{f-e}$ for
$s=-L/2,\cdots, L/2-1$.
For the scheme (\ref{eq:tssp3d:sec6}), the total memory requirement is
$O(MNL)$ and the total computational cost per time step is
$O(MNL\ln (MNL))$. Furthermore,
 the discretization is time
reversible and time transverse invariant in the discretized
level.

\subsubsection{Time-splitting finite element method
based on polar/cylindrical coordinates}
 As noticed in
\cite{BaoDuZhang,BaoCai2}, the angular momentum rotation is a constant coefficient in
2D with polar coordinates and 3D with cylindrical coordinates. Thus
the original problem of GPE with an angular momentum rotation term
defined in ${\Bbb R}^d$ ($d=2,3$) for rotating BEC can also be
truncated on a disk in 2D and a cylinder in 3D as bounded
computational domain with homogeneous Dirichlet boundary condition:
\begin{align}
&i\p_t\psi({\bx},t) = -\fl{1}{2}\nabla^2\psi +
\left[V(\bx)+\bt|\psi|^2 - \Og L_z\right] \psi,\quad {\bx}\in U,\;
t>0,\label{eq:GPE27:sec6} \\
 &\psi(\bx,t)=0, \quad \bx\in\Gamma=\p U, \quad t\ge0, \qquad \psi({\bx},0) = \psi_0({\bx}), \quad
{\bx}\in U;\label{eq:initial_data8:sec6}
\end{align}
where we choose
$U=\{\bx=(x,y)\ |\ r=\sqrt{x^2+y^2}<R\}$ in 2D, and resp.,
$U=\{\bx=(x,y,z)\ |\ r=\sqrt{x^2+y^2}<R, \ Z_1<z<Z_2\}$ in 3D
 with $R$, $|Z_1|$, $|Z_2|$ sufficiently large.

We choose a time step size $\tau>0$. For $n=0,1,2,\cdots$,  from
time $t=t_n=n\tau$ to $t=t_{n+1}=t_n+\tau$, the GPE
(\ref{eq:GPE27:sec6}) is solved in two splitting steps. One solves
first \be \label{eq:fstep7:sec6} i\;\p_t\psi({\bx},t) = -\fl{1}{2}\nabla^2
\psi(\bx,t) - \Og L_z \psi(\bx,t)\ee for the time step of length
$\tau$, followed by solving
 \be \label{eq:sstep7:sec6}
i\;\p_t\psi({\bx},t) = V(\bx) \psi(\bx,t) +
\bt|\psi(\bx,t)|^2\psi(\bx,t),
 \ee
for the same time step. (\ref{eq:sstep7:sec6}) can be integrated exactly
\cite{BaoWang}, and we find for $\bx \in U$ and
$t_n\le t\le t_{n+1}$:
\be \label{eq:solode7:sec6}
\psi(\bx,t)=
e^{-i[V(\bx)+\bt|\psi(\bx,t_n)|^2](t-t_n)}\;
\psi(\bx,t_n).
\ee

{\sl Discretization in 2D.}
To solve (\ref{eq:fstep7:sec6}), we try to formulate the equation in a
variable separable form. When $d=2$,
 we use the polar coordinate $(r,\theta)$,
and discretize in the $\theta$-direction by a Fourier pseudo-spectral
method, in the $r$-direction by a finite element method (FEM) and in
time by a Crank-Nicolson (C-N) scheme. Assume \cite{BaoDuZhang}
\be
\label{Fourier-form_2d}
\psi(r,\theta,t) = \sum_{l=-L/2}^{L/2-1}\
\widehat{\psi}_{l} (r, t)\ e^{il\theta},
\ee
where $L$ is an even positive integer and $\widehat{\psi}_{l} (r, t)$ is
the Fourier coefficient for the $l$-th mode. Plugging (\ref{Fourier-form_2d})
into (\ref{eq:fstep7:sec6}), noticing the orthogonality of the Fourier
functions, we obtain for $-\fl{L}{2}\leq l\leq\fl{L}{2}-1$ and $0<r<R$:
\bea
\label{eq_hat_psi_2d}
&&i\p_t\widehat{\psi}_{l}(r,t)=-
\fl{1}{2r}\frac{\p}{\p r}
\left(r\frac{\p\widehat{\psi}_{l}(r,t)}{\p r}\right)
+\left(\fl{l^2}{2r^2}-l\Og\right)\widehat{\psi}_{l}(r,t),\\
\label{femit}
 &&\widehat{\psi}_{l}(R,t) = 0\quad (\hbox{for all}\ l) , \qquad
\widehat{\psi}_{l}(0,t) = 0 \quad  (\hbox{for}\ l\ne0).
\eea
Let $P^k$ denote all polynomials with degree at most $k$,
$M>0$ be a chosen integer,
$0=r_0<r_1<r_2<\cdots<r_M=R$  be  a partition
for the interval $[0,R]$ with a mesh size $h=\max_{0\le m<M}\;
\{r_{m+1}-r_m\}$. Define a FEM subspace by
$$U^h=
\left\{u^h\in C[0,R]\ |\ \left.u^h\right|_{[r_m,r_{m+1}]}\in P^k,
\; 0\le m<M, \ u^h(R)=0\right\}$$
for $l=0$, and for $l\ne 0$,
$$U^h= \left\{u^h\in C[0,R]\; |\;
 \left.u^h\right|_{[r_m,r_{m+1}]}\in P^k,
\; 0\le m<M, \; u^h(0)=u^h(R)=0\right\}\,,$$
then we obtain
the FEM  approximation for (\ref{eq_hat_psi_2d})-(\ref{femit}): Find
$\hat{\psi}_l^h=\hat{\psi}_l^h(\cdot,t)\in U^h$
such that for all $\phi^h\in U^h$
and  $t_n\le t\le t_{n+1}$,
\be
\label{fem2}
i\frac{d}{dt} A(\hat{\psi}_l^h(\cdot,t),\phi^h)
=B(\hat{\psi}_l^h(\cdot,t),\phi^h)+l^2
C(\hat{\psi}_l^h,\phi^h)-l\Og
A(\hat{\psi}_l^h,\phi^h), \ee where \beas
&&A(u^h,v^h)=\int_0^R r\; u^h(r)\; v^h(r)\; dr, \qquad
B(u^h,v^h)=\int_0^R \frac{r}{2}\; \frac{d u^h(r)}{dr}\; \frac{d
v^h(r)}{dr}\; dr,  \\
&&C(u^h,v^h)=\int_0^R \frac{1}{2r}\; u^h(r)\; v^h(r)\; dr, \qquad
u^h,\ v^h\in U^h.
\eeas
The ODE system (\ref{fem2}) is then discretized by the
standard  Crank-Nicolson scheme in time.
Although an implicit time discretization is applied for (\ref{fem2}),
the 1D nature of the problem makes
the coefficient matrix for the linear system
band-limited. For example, if the piecewise linear polynomial is used,
i.e. $k=1$ in $U^h$, the matrix is tridiagonal.
Fast algorithms can be applied to solve the resulting linear systems.

In practice, we always use the second-order Strang splitting \cite{Strang}, i.e.
from time $t=t_n$ to $t=t_{n+1}$: i) first evolve (\ref{eq:sstep7:sec6}) for half
time step $\tau/2$ with initial data given at $t=t_n$; ii) then evolve
(\ref{eq:fstep7:sec6}) for one time step $\tau$ starting with the new  data;
iii) and evolve (\ref{eq:sstep7:sec6}) for half time step $\tau/2$ with
the newer data. For the discretization considered here, the total memory requirement is
$O(ML)$ and the total computational cost per time step is $O(ML\ln L)$.
Furthermore, it conserves the total density
in the discretized level.

{\sl Discretization  in 3D.} When $d=3$ in (\ref{eq:fstep7:sec6}), we use the cylindrical  coordinate
$(r,\theta,z)$, and discretize in
the $\theta$-direction by the Fourier pseudo-spectral method,
in the $z$-direction
by the sine pseudo-spectral method, and in the $r$-direction by finite
element or finite difference method and
in time by the C-N scheme. Assume that,
\be
\label{Fourier-form_3d}
\psi(r,\theta,z,t) = \sum_{l=-L/2}^{L/2-1}\ \sum_{k=1}^{K-1}
\widehat{\psi}_{l,k} (r,t)\ e^{il\theta}\ \sin(\mu_k(z-a)) ,
\ee
where $L$ and $K$ are two even positive integers,
$\mu_k=\frac{\pi k}{b-a}$ ($k=1,\cdots,K-1$) and
$\widehat{\psi}_{l,k} (r, t)$ is the Fourier-sine coefficient for
the $(l,k)$th
mode. Plugging (\ref{Fourier-form_3d}) into (\ref{eq:fstep7:sec6}) with $d=3$,
noticing the orthogonality of the Fourier-sine modes, we obtain, for
$-\fl{L}{2}\leq l\leq\fl{L}{2}-1$, $1 \leq k\leq K-1$
and $0<r<R$, that \cite{BaoDuZhang}:
\be \label{eq_hat_psi_3d}
i\p_t\widehat{\psi}_{l,k}(r,t)=-
\fl{1}{2r}\frac{\p}{\p r}
\left(r\frac{\p\widehat{\psi}_{l,k}(r,t)}{\p r}\right)
+\left(\fl{l^2}{2r^2}+\frac{\mu_k^2}{2}
-l\Og\right)\widehat{\psi}_{l,k}(r,t),
\ee
with essential boundary conditions
\be \label{femit3d}
\widehat{\psi}_{l,k}(R,t) = 0\ (\hbox{for all} \ l), \qquad
\widehat{\psi}_{l,k}(0,t) = 0 \ (\hbox{for}\ l\ne0).
\ee
The discretization of (\ref{eq_hat_psi_3d})-(\ref{femit3d}) is similar
as that for (\ref{eq_hat_psi_2d})-(\ref{femit}) and it is omitted here.

For the algorithm in 3D, the total memory requirement is $O(MLK)$
and the total computational cost per time step is $O(MLK\ln(LK))$.

\subsection{A generalized Laguerre-Fourier-Hermite pseudospectral method}
Like section \ref{subsec:central}, for polar coordinate in 2D and cylindrical coordinate
in 3D, a similar Laguerre-Hermite  pseudospectral method  can be designed for computing dynamics for rotating BEC (\ref{eq:gpegrot2:sec6})-(\ref{eq:gpegp:sec6}). Here, we assume that the potential $V(\bx)$ in  (\ref{eq:gpegrot2:sec6})-(\ref{eq:gpegp:sec6}) is given as \cite{BaoLiShen}
\be
V(\bx)=V_h(\bx)+W(\bx),\quad V_h(\bx)=\begin{cases}\frac{1}{2}(\gamma_r^2(x^2+y^2)+\gamma_z^2z^2),&d=3,\\
\frac{1}{2}\gamma_r^2(x^2+y^2),&d=2,
\end{cases}\ee
where $W(\bx)$ is a real potential.

Denoting \bea \label{eq:Bpp:sec6} B_\perp\phi
&=&\left[-\fl{1}{2}\left(\frac{\p^2}{\p x^2}+\frac{\p^2}{\p
y^2}\right) + \frac{1}{2}\gm_r^2(x^2+y^2) - \Og L_z\right]\phi, \\
 \label{eq:Bz:sec6}
B_z\phi&=&\left[-\frac{1}{2}\frac{\p^2}{\p
z^2}+\frac{1}{2}\gm_z^2z^2\right]\phi,\\
A\phi&=&\left[W(\bx)+\beta|\phi|^2\right]\phi,\quad
\label{eq:Bb:sec6} B\phi=\left\{\ba{ll} B_\perp\phi, &\quad d=2, \\
(B_\perp+B_z)\phi, &\quad d=3, \\ \ea \right.
 \eea then the GPE ({\ref{eq:gpegrot2:sec6}) becomes \be \label{eq:GPEAB:sec6}
i\p_t\psi({\bx},t) =A\psi+B\psi,\qquad {\bx}\in {\Bbb{R}}^d,\quad
t>0.\ee
For
$n=0,1,2,\ldots,$ let $\psi^n:=\psi^n(\bx)$ be the approximation of
$\psi(\bx,t_n)$. A standard Strang splitting second-order symplectic time integrator
 for (\ref{eq:GPEAB:sec6}) is as follows
\be\label{eq:timesp1:sec6} \psi^{(1)}=e^{-i\tau A/2}\psi^n,\qquad
\psi^{(2)}=e^{-i\tau B}\psi^{(1)},\qquad \psi^{n+1}=e^{-i\tau A/2}\psi^{(2)}. \ee
Thus the key for an efficient implementation of
(\ref{eq:timesp1:sec6}) is to solve efficiently the following two
subproblems: \be \label{eq:timesp2:sec6}i\p_t\psi({\bx},t) = A\psi(\bx,t)=
\left[W(\bx)+ \bt|\psi(\bx,t)|^2\right]\psi(\bx,t),\qquad
{\bx}\in {\mathbb{R}}^d, \ee and \be\label{eq:timesp3:sec6}\begin{split}
&i\p_t\psi({\bx},t) =B\psi(\bx,t) =\left[-\fl{1}{2}\nabla^2 + V_h({\bx})
- \Og L_z\right]\psi(\bx,t),\qquad {\bx}\in {\mathbb{R}}^d,\\
&\lim_{|\bx|\to+\ift} \psi(\bx,t)=0.
\end{split}
\ee The decaying condition in (\ref{eq:timesp3:sec6}) is necessary for
satisfying the mass conservation.

\subsubsection{Discretization in 2D} In the 2D case, we use
 the polar coordinates $(r,\tht)$, and write the solutions
of (\ref{eq:timesp3:sec6}) as $\psi(r,\tht,t)$ . Therefore, for
 $t\ge t_s$ ($t_s$ is any given time), (\ref{eq:timesp3:sec6}) collapses to \cite{BaoLiShen}
\bea\label{eq:GPE2D:sec6}\begin{split}
&i\p_t\psi(r,\tht,t)=\left[-\frac{1}{2r}\frac{\p}{\p
r}\left(r\frac{\p}{\p r}\right) -\frac{1}{2r^2}\frac{\p^2}{\p
\tht^2}  + \frac{1}{2}\gm_r^2r^2 +
i\Og \p_\tht\right]\psi(r,\tht,t)\\
&\qquad \qquad \quad:=B_\perp\psi(r,\tht,t),\\
&\psi(r,\tht+2\pi,t)=\psi(r,\tht,t), \qquad r\in(0,\ift), \quad
\tht\in(0,2\pi),\quad\lim_{r\to\ift}
\psi(r,\tht,t)=0.
\end{split}
\eea
For any fixed $m$ ($m=0,\pm1,\pm2,\ldots$), recalling the scaled generalized-Laguerre functions $L_k^n$  (\ref{eq:GLF3:sec6}) ($n\ge0$),  a simple calculation
shows that
\be\label{eq:Bperp2:sec6} B_\perp
\left(L_k^{|m|}(r)\,e^{im\tht}\right)=\mu_{km}\;L_k^{|m|}(r)e^{im\tht},\qquad k=0,1,2,\ldots\,.\ee
where \be\label{eq:mukm:sec6} \mu_{km}=\gm_r(2k+|m|+1)-m\Og, \qquad
k=0,1,2,\ldots\,. \ee
 This immediately implies
that
 $\{L_k^{|m|}(r)\,e^{im\tht},\ k=0,1,\cdots,\
m=0,\pm1,\pm2,\cdots \}$ are eigenfunctions of the linear operator
$B_\perp$.

For fixed even integer $M>0$ and integer $K>0$, let $X_{KM}={\rm
span}\{L_k^{|m|}(r)\,e^{im\tht}\ :\ k=0,1,\ldots,K,\
m=-M/2,-M/2+1,\ldots,-1,0,1,\ldots,M/2-1\}$. The
generalized-Laguerre-Fourier spectral method for (\ref{eq:GPE2D:sec6}) is to
find $\psi_{KM}(r,\tht,t)\in X_{KM}$, i.e. \be\label{eq:expan2D:sec6}
\psi_{KM}(r,\tht,t)=\sum_{m=-M/2}^{M/2-1}\left[e^{im\tht}\sum_{k=0}^K
\hat{\psi}_{km}(t)L_k^{|m|}(r)\right],\quad 0\le r<\ift,\ 0\le
\tht\le 2\pi, \ee
such that
\bea\label{eq:sol2D5:sec6}
i\pl{\psi_{KM}(r,\tht,t)}{t}&=&\left[-\frac{1}{2r}\frac{\p}{\p
r}\left(r\frac{\p}{\p r}\right) -\frac{1}{2r^2}\frac{\p^2}{\p
\tht^2}  + \frac{1}{2}\gm_r^2r^2 +
i\Og \p_\tht\right]\psi(r,\tht,t)\nn\\
&=&B_\perp \psi_{KM}(r,\tht,t), \qquad 0<r<\ift, \quad 0<\tht<2\pi.
\eea Noting that $\lim_{r\to\ift}L_k^{|m|}(r) =0$ for
$k=0,1,2,\ldots$ and $m=0,\pm1,\pm2,\ldots$ \cite{GS}; hence,
$\lim_{r\to\ift}\psi_{KM}(r,\tht,t)=0$ is automatically satisfied.
In addition, the expansions in $r$- and $\tht$-directions for
(\ref{eq:expan2D:sec6}) {\sl do not} commute. Plugging (\ref{eq:expan2D:sec6}) into
(\ref{eq:sol2D5:sec6}), thanks to (\ref{eq:Bperp2:sec6}), noticing the orthogonality
of the Fourier series, for $k=0,1,\ldots,K$ and
$m=-M/2,-M/2-1,\ldots,-1,0,1,\ldots,M/2-1$, we find
\be\label{eq:sol2D6:sec6} i\frac{\rd\hat{\psi}_{km}(t)}{\rd t}=\mu_{km} \;
\hat{\psi}_{km}(t)=\left[\gm_r(2k+|m|+1)-m\Og\right]
\hat{\psi}_{km}(t).\ee The above linear ODE can be integrated {\sl
exactly} and the solution is given by \be\label{eq:sol2D7:sec6}
\hat{\psi}_{km}(t)=e^{-i\mu_{km}(t-t_s)} \;\hat{\psi}_{km}(t_s),
\qquad t\ge t_s. \ee
Plugging (\ref{eq:sol2D7:sec6}) into (\ref{eq:expan2D:sec6}),
we obtain the solution of (\ref{eq:sol2D5:sec6}) as \begin{align}\label{eq:sol2D8:sec6}
\psi_{KM}(r,\tht,t)=&e^{-iB_\perp (t-t_s)}\psi_{KM}(r,\tht,t_s)\nn\\
=&\sum_{m=-M/2}^{M/2-1}\left[e^{im\tht}\sum_{k=0}^K
e^{-i\mu_{km}(t-t_s)}\,\hat{\psi}_{km}(t_s)\,L_k^{|m|}(r)\right],
\quad t\ge t_s, \end{align}
with \be\label{eq:expan2d1:sec6}
\hat{\psi}_{km}(t_s)=\frac{1}{2\pi}\int_0^{2\pi}
\left[e^{-im\tht}\int_0^\ift
\psi_{KM}(r,\tht,t_s)L_k^{|m|}(r)r\,dr\right] d\tht. \ee
To summarize, a second-order time-splitting
generalized-Laguerre-Fourier spectral method for the GPE
(\ref{eq:gpegrot2:sec6}) with $d=2$ is  as follows:

Let $\psi^0=\Pi_{KM}\psi_0$ where $\Pi_{KM}$ is the $L^2$
 projection operator from  $L^2((0,\infty)\times
 (0,2\pi))$ onto $X_{KM}$, we determine $\psi^{n+1}$
 $(n=0,1,\cdots)$ by \cite{BaoLiShen}
 \be\label{eq:tssp2Drd:sec6}\begin{split}
&\psi^{(1)}(r,\tht)=e^{-i\tau[W(r,\tht)+\beta|\psi^n(r,\tht)|^2]/2}\psi^n(r,\tht),\\
&\psi^{(2)}(r,\tht)=\sum_{m=-M/2}^{M/2-1}\left[e^{im\tht}\sum_{k=0}^K
e^{-i\tau \mu_{km}}\,\widehat{\psi^{(1)}}_{km}\,L_k^{|m|}(r)\right],\\
&\psi^{n+1}(r,\tht)=e^{-i\tau [W(r,\tht)+\beta|\psi^{(2)}(r,\tht)|^2]/2}\psi^{(2)}(r,\tht), \end{split}\ee with
\be\label{eq:int1:sec6}
\widehat{\psi^{(1)}}_{km}=\frac{1}{2\pi}\int_0^{2\pi}
\left[e^{-im\tht}\int_0^\ift
\psi^{(1)}(r,\tht)\,L_k^{|m|}(r)r\,dr\right]d\tht.\ee

The scheme \eqref{eq:tssp2Drd:sec6} is not suitable in practice due to the
difficulty to compute  the initial data
$\psi^0_{KM}=\Pi_{KM}\psi_0$
and the integrals in \eqref{eq:int1:sec6}.
We now present an efficient implementation  by
choosing $\psi^0_{KM}(r,\theta)$ as the interpolation of
$\psi(r,\theta,0)$  on a  suitable grid, and approximating
\eqref{eq:int1:sec6} (for all $m$) by
a quadrature rule on this grid.

 It is clear that the
optimal quadrature rule, hence the collocation points,   for the
$r$-integral in  \eqref{eq:int1:sec6} depends on $m$ \cite{BaoShen,BaoShen2}. However, we
have to use the same set of collocation points for all $m$ to form a
tensorial grid in the $(r,\theta)$ domain. Therefore, let
$\{\hat{r}_j\}_{j=0}^{K+M/2}$ be the Laguerre-Gauss points
\cite{GS,Shen2}; i.e. they are the $K+M/2+1$ roots of the standard
Laguerre polynomial $\hat{L}_{K+M/2+1}^0(r):=\hat{L}_{K+M/2+1}(r)$.
Let $\{\hat\og_j\}_{j=0}^{K+M/2}$ be the corresponding weights
associated with the generalized-Laguerre-Gauss quadrature (\ref{eq:quad2d1:sec6}).
 We then
define the scaled generalized-Laguerre-Gauss points and weights
 $r_j$ and $\og_j$ ($j=0,1,\ldots,K+M/2$) as in (\ref{eq:pot2d:sec6})
and the appendix of \cite{Shen2}.

Let $\theta_s=\frac{2s\pi}M$ $(s=0,1,\cdots, M-1)$. For any given
set of values $\{\psi_{js}$, $0\le j\le K+M/2$; $0\le s\le
  M-1\}$, we can define a unique function $\psi$ in $X_{KM}$
interpolating this set, i.e.,
\be
\begin{split}
&\psi(r,\theta)=\sum_{m=-M/2}^{M/2-1}\sum_{k=0}^{K}
\widehat{\psi}_{km}\,L_k^{|m|}(r)e^{im\theta}\quad \text{such that}\\
&\psi(r_j,\theta_s)=\psi_{js},\quad  {0\le j\le K+M/2;\; 0\le s\le
  M-1}.
\end{split}\ee
 By using the discrete orthogonality
 relation  (\ref{eq:quad2d2:sec6}) for the scaled generalized Laguerre
 functions and the discrete Fourier orthogonality relation
\be\label{eq:four_orth:sec6}
\frac 1M\sum_{s=0}^{M-1} e^{ik\theta_s}
e^{-ik'\theta_s}=\delta_{kk'},
\quad |k|,|k'|\le M/2,\ee
we find that
\be\label{eq:trans7:sec6}
\widehat{\psi}_{km}=\frac{1}{M}
\sum_{s=0}^{M-1}\left[e^{-im\tht_s}\sum_{j=0}^{K+M/2}
\og_j\, \psi_{js} \,L_k^{|m|}(r_j)\right],\ee
and that
\begin{equation*}
\|\psi\|_{2}^2:=\int_0^{2\pi}\int_0^\infty |\psi|^2 r\,
 dr\,d\theta=2\pi\sum_{m=-M/2}^{M/2-1}\sum_{k=0}^{K}
|\widehat{\psi}_{km}|^2=\frac
 {2\pi}M\sum_{j=0}^{K+M/2}\sum_{s=0}^{M-1}|\psi_{js}|^2 \omega_j.
 \end{equation*}

We can now describe
the second-order time-splitting generalized-Laguerre-Fourier
pseudospectral (TSGLFP2) method for the GPE (\ref{eq:gpegrot2:sec6}) with $d=2$
as follows:

Let
$\psi^0_{js}=\psi_0(r_j,\theta_s)$ for $0\le j\le K+M/2$ and $0\le
s\le M-1$.  For $n=0,1,2,\cdots$, we compute
 $\psi^{n+1}_{js}$ ($0\le j\le K+M/2, \ 0\le s\le M-1$) by \cite{BaoLiShen}
\be\label{eq:tssp2Dprd:sec6}\begin{split}
&\psi^{(1)}_{js}=e^{-i\tau[W(r_j,\tht_s)+\beta|\psi_{js}^n|^2]/2}\psi^n_{js},\\
&\psi^{(2)}_{js}=\sum_{m=-M/2}^{M/2-1}\left[e^{im\tht_s}\sum_{k=0}^{K}
e^{-i\tau\mu_{km}}\,\widehat{(\psi^{(1)})}_{km}\,L_k^{|m|}(r_j)\right],\\
&\psi^{n+1}_{js}=e^{-i\tau[W(r_j,\tht_s)+\beta|\psi^{(2)}_{js}|^2]/2}\psi^{(2)}_{js},
\end{split}\ee
where $\{\widehat{(\psi^{(1)})}_{km}\}$ are the expansion
coefficients of $\psi^{(1)}$ given by (\ref{eq:trans7:sec6}).

\subsubsection{Discretization in 3D} In the 3D case, by
using the cylindrical coordinates $(r,\tht,z)$, we can write the
solutions of (\ref{eq:timesp3:sec6}) as $\psi(r,\tht,z,t)$. Therefore, for
 $t\ge t_s$ ($t_s$ is any given time), (\ref{eq:timesp3:sec6}) collapses to \cite{BaoLiShen}
\begin{equation*}\begin{split} &\qquad
i\p_t\psi(r,\tht,z,t)=\frac{1}{2}\left[-\frac{1}{r}\frac{\p}{\p
r}\left(r\frac{\p}{\p r}\right) -\frac{1}{r^2}\frac{\p^2}{\p
\tht^2}-\frac{\p^2}{\p z^2} + \gm_r^2r^2+\gm_zz^2 +
2i\Og \p_\tht\right]\psi,\\
&\qquad \qquad \qquad \qquad =
\left(B_\perp+B_z\right)\psi(r,\tht,z,t)=B\,\psi(r,\tht,z,t),\\
&\qquad \psi(r,\tht+2\pi,z,t)=\psi(r,\tht,z,t), \qquad 0<r<\ift,
\quad
0<\tht<2\pi,\quad z\in{\mathbb R},\\
 &\qquad \lim_{r\to\ift}
\psi(r,\tht,z,t)=0, \qquad -\ift<z<\ift,\quad t\ge t_s.
\end{split}
\end{equation*}
Let  the scaled Hermite functions $h_l(z)$ ($l=0,1,\ldots,$) be given in
(\ref{eq:GHF3:sec6}). For any fixed $m$ ($m=0,\pm1,\pm2,\ldots$), we find that \cite{BaoLiShen}
\begin{align}\label{eq:GLH1:sec67}
&B\left(L_k^{|m|}(r)\,e^{im\tht}\, h_l(z)\right)
=\left(\mu_{km}+\ld_l\right) L_k^{|m|}(r)\,e^{im\tht}\,h_l(z),\quad\lambda_l=(l+\frac12)\gamma_z.
\end{align} Hence, $\{L_k^{|m|}(r)\,e^{im\tht}\,h_l(z),\ k,l=0,1,\cdots,\
m=0,\pm1,\pm2,\cdots\}$ are eigenfunctions of the linear operator
$B=B_\perp+B_z$ defined in (\ref{eq:Bb:sec6}) for $d=3$.

Then a second-order time-splitting generalized-Laguerre-Fourier-Hermite spectral
method for the GPE (\ref{eq:gpegrot2:sec6}) with $d=3$ can be constructed analogously to (\ref{eq:tssp2Drd:sec6}). Here, we only present pseudospectral method generalizing TSGLFP2 (\ref{eq:tssp2Dprd:sec6}).

Define the scaled Hermite-Gauss points $z_p$ and weights $\omega_p^z$ ($0\leq p\leq L$) by (\ref{eq:pot1d:sec6}). For any given set of values $\{\psi_{jsp},\; {0\le j\le K+M/2;\; 0\le s\le
  M-1;\; 0\le p\le L}\}$, we can define a unique function $\psi$ in
  $Y_{KML}={\rm span}\{L_k^{|m|}(r)\,e^{im\tht}\,h_l(z)\ :\
0\le k\le K,\ -M/2\le m\le M/2-1, \
0\le l\le L\}$ interpolating this set, i.e.,
\be
\begin{split}
&\psi(r,\theta,z)=\sum_{m=-M/2}^{M/2-1}\sum_{k=0}^{K}\sum_{l=0}^L
\widehat{\psi}_{kml}\,L_k^{|m|}(r)e^{im\theta}h_l(z)\quad\text{such that}\\
&\psi(r_j,\theta_s,z_p)=\psi_{jsp},\quad  {0\le j\le K+M/2;\; 0\le s\le
  M-1;\; 0\le p\le L}.
\end{split}\ee

 By using the discrete orthogonality
relations (\ref{eq:quad2d2:sec6}), (\ref{eq:four_orth:sec6}) and (\ref{eq:quad1d2:sec6}),
we find that
\be\label{eq:trans3D9:sec6}\widehat{\psi}_{kml}=\frac{1}{M}\sum_{p=0}^L\left[h_l(z_p)
\og_p^z\sum_{s=0}^{M-1}\left(e^{-im\tht_s}\sum_{j=0}^{K+M/2}
\og_j\, \psi_{jsp} \,L_k^{|m|}(r_j)\right)\right],\ee
and that
\be\label{eq:norm2:sec6}
\begin{split}
\|\psi\|_2^2&:=\int_{-\ift}^\ift\int_0^\ift\int_0^{2\pi}|\psi(r,\tht,z)|^2
r\;d\tht dr dz\\
&=2\pi\sum_{m=-M/2}^{M/2-1}\sum_{k=0}^{K}\sum_{l=0}^L
|\widehat{\psi}_{kml}|^2=\frac
{2\pi}M\sum_{j=0}^{K+M/2}\sum_{s=0}^{M-1}\sum_{p=0}^L|\psi_{jsp}|^2
  \omega_j\omega_p^z.
\end{split}\ee

Then the second-order time-splitting
generalized-Laguerre-Fourier-Hermite pseudospectral (TSGLFHP2)
method for the GPE (\ref{eq:gpegrot2:sec6}) with $d=3$ is as follows:

Let
$\psi^0_{jsp}=\psi_0(r_j,\theta_s,z_p)$ for $0\le j\le K+M/2$, $0\le
s\le M-1$ and $0\le p\le L$.  For $n=0,1,\cdots$, we compute
 $\psi^{n+1}_{jsp}$ by \cite{BaoLiShen}
\be\label{eq:tssp3Dp:sec6}\begin{split}
&\psi^{(1)}_{jsp}=e^{-i\tau[W(r_j,\tht_s,z_p)+\beta|\psi_{jsp}^n|^2]/2}\psi^n_{jsp},\\
&\psi^{(2)}_{jsp}=\sum_{l=0}^L\left[h_l(z_p)
\sum_{m=-M/2}^{M/2-1}\left(e^{im\tht_s}\sum_{k=0}^K
e^{-i\tau(\mu_{km}+\ld_l)}\,\widehat{(\psi^{(1)})}_{kml}\,L_k^{|m|}(r_j)\right)\right],\\
&\psi^{n+1}_{jsp}=e^{-i\tau[W(r_j,\tht_s,z_p)+\beta|\psi^{(2)}_{jsp}|^2]/2}\psi^{(2)}_{jsp},
\end{split}\ee
where $\{\widehat{(\psi^{(1)})}_{kml}\}$ are the expansion
coefficients of $\psi^{(1)}$ given by \eqref{eq:trans3D9:sec6}.

\subsection{Numerical results}
\label{subsec:numrot}
In this section, we report numerical examples for  ground states and central vortex states as well as dynamics for rotating BEC.

\begin{figure}[t!]
\centerline{\psfig{figure=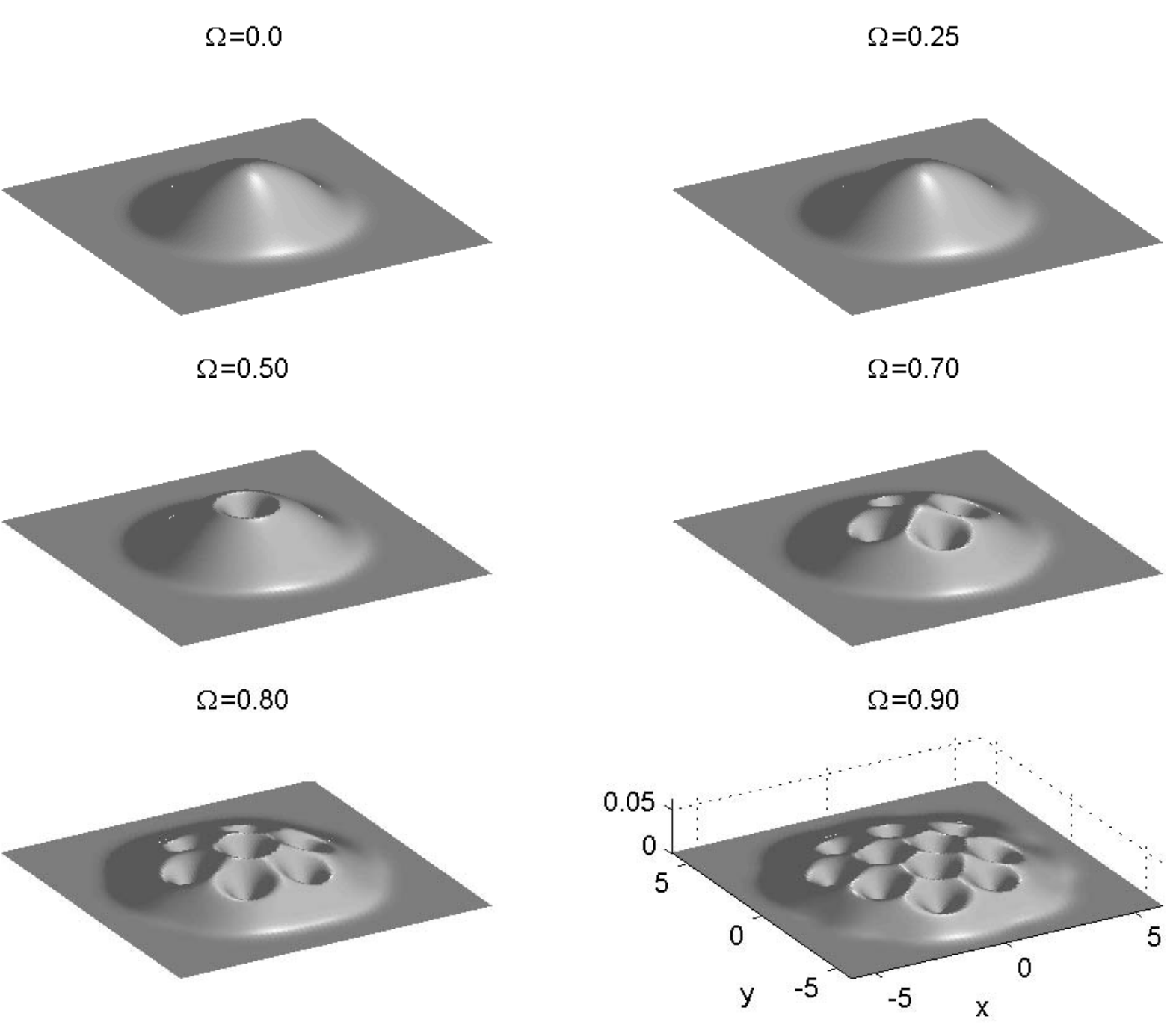,height=16cm,width=11cm,angle=0}}

\caption{Surface plots of ground state density function
$|\phi^g_{\Omega}(x,y)|^2$ in 2D with $\gamma_x=\gm_y=1$ and $\bt=100$ for
different $\Omega$ in Example \ref{exm:1:sec6}.}
\label{fig:1:sec6}
\end{figure}

\begin{figure}[tb!]
\centerline{a)\psfig{figure=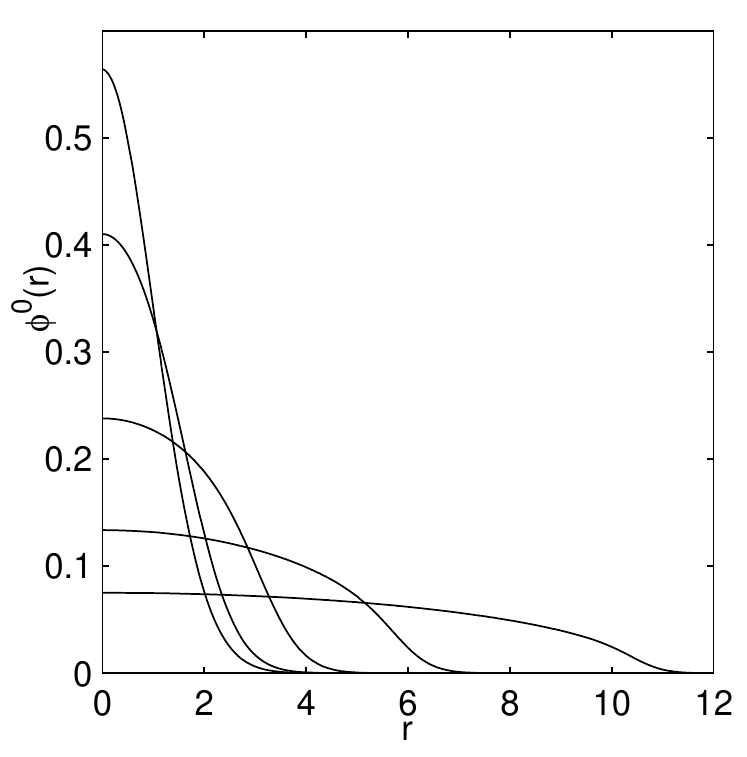,height=5cm,width=5cm,angle=0}
\qquad b)\psfig{figure=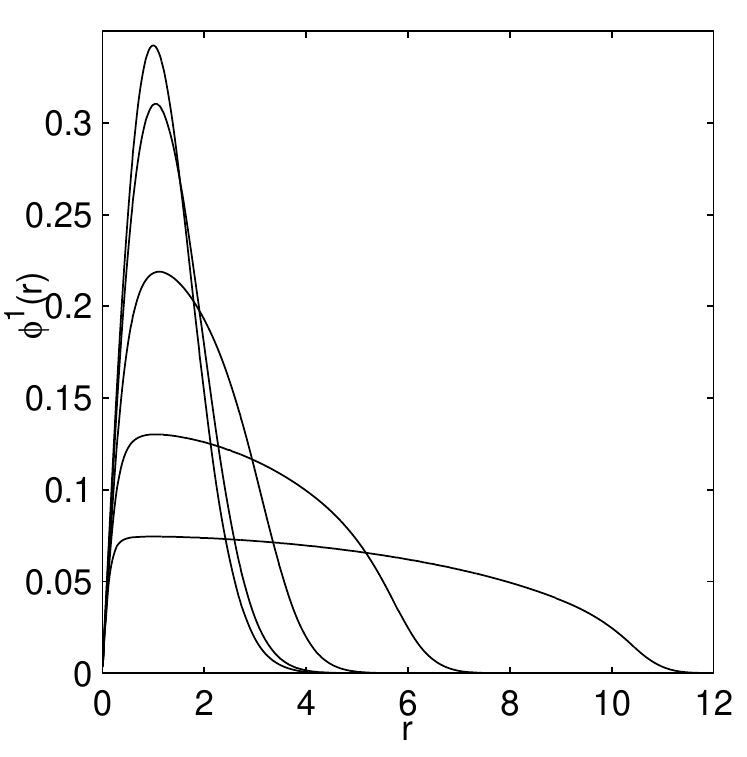,height=5cm,width=5cm,angle=0}}
\centerline{c) \psfig{figure=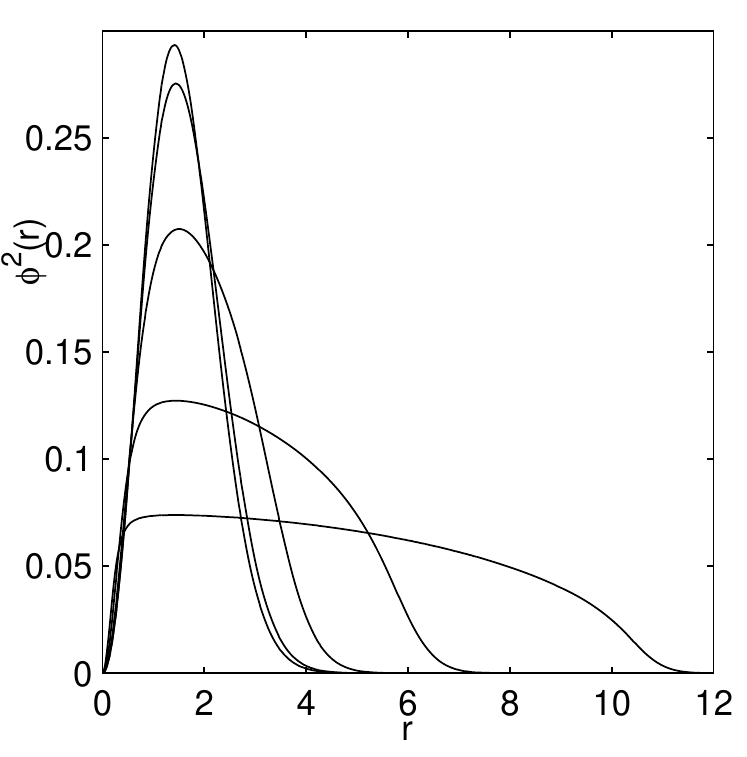,height=5cm,width=5cm,angle=0}
\qquad d) \psfig{figure=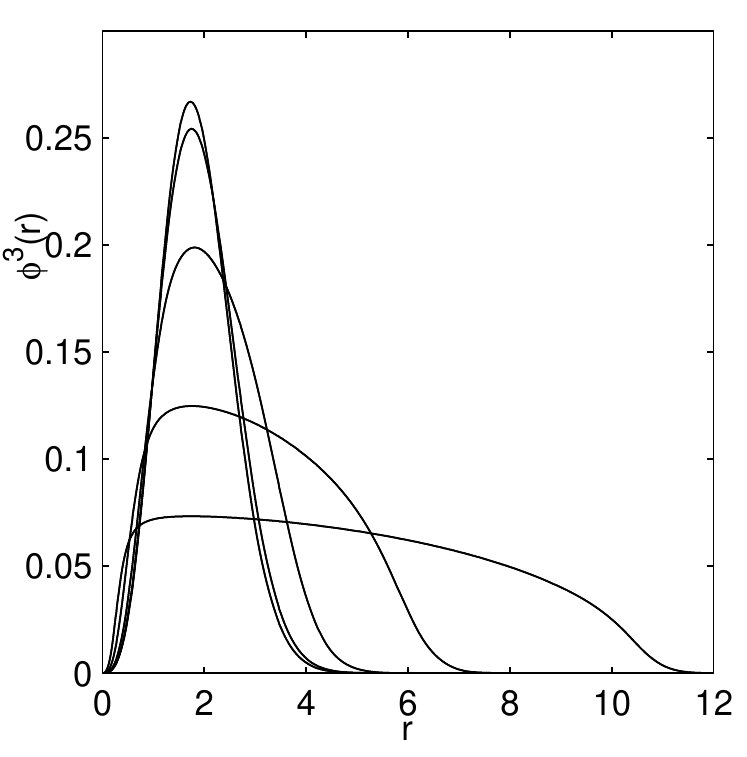,height=5cm,width=5cm,angle=0}}

\caption{Symmetric and central vortex states in 2D with
$\gamma_x=\gm_y=1$
for $\beta=0$, $10$, $100$, $1000$, $10000$ (in the order of
decreasing of peak) in Example \ref{exm:1:sec6}. Symmetric state $\phi^0(r)$: a);
and central
vortex states $\phi^m(r)$: b). $m=1$, c). $m=2$ and d). $m=3$.}\label{fig:2:sec6}

\end{figure}

\begin{example}\label{exm:1:sec6} Ground, symmetric and central vortex
states, as well as their energy configurations, in 2D, i.e. we
take $d=2$ and $\gamma_x=\gm_y=1$ in (\ref{eq:gpegrot:sec5}). Fig.~\ref{fig:1:sec6} plots surface
 of the ground state
$\phi^g(x,y):=\phi_{\Og}^g(x,y)$ with $\bt=100$ for
different $\Og$.  Fig.~\ref{fig:2:sec6}  plots the symmetric
state $\phi^0(r):=\phi_{0}^0(r)$ and first three central
vortex states $\phi_m(r):=\phi_{0}^m(r)$ ($m=1,2,3$) for
different interaction rate $\bt$. Backward Euler  finite difference method is used here
with a bounded computational domain
$U=[-6,6]\tm [-6,6]$ and initial data for GFDN (\ref{eq:ngf1:sec6})-(\ref{eq:ngf3:sec6})
is chosen as
$\phi_0(x,y) =
\fl{(1-\Og) \phi_{\rm ho}(x,y) + \Og \phi_{\rm ho}^v(x,y)}
{\|(1-\Og) \phi_{\rm ho}(x,y) + \Og \phi_{\rm ho}^v(x,y)\|_2}$,
$(x,y)\in U$, where
 $\phi_{\rm ho}^v(x,y) = \fl{x+iy}{\sqrt{\pi}}\;
e^{-(x^2+y^2)/2}$ and $\phi_{\rm
ho}(x,y) = \fl{1}{\sqrt{\pi}}\; e^{-(x^2+y^2)/2}$.
The steady state solution is obtained numerically when $\|\phi^{n+1}-\phi^n\|_\infty:=\max_{(j,l)}
|\phi_{j\,l}^{n+1}-\phi_{j\,l}^n|<\epsilon=10^{-7}$.
\end{example}

\begin{figure}[h!]
\centerline{\psfig{figure=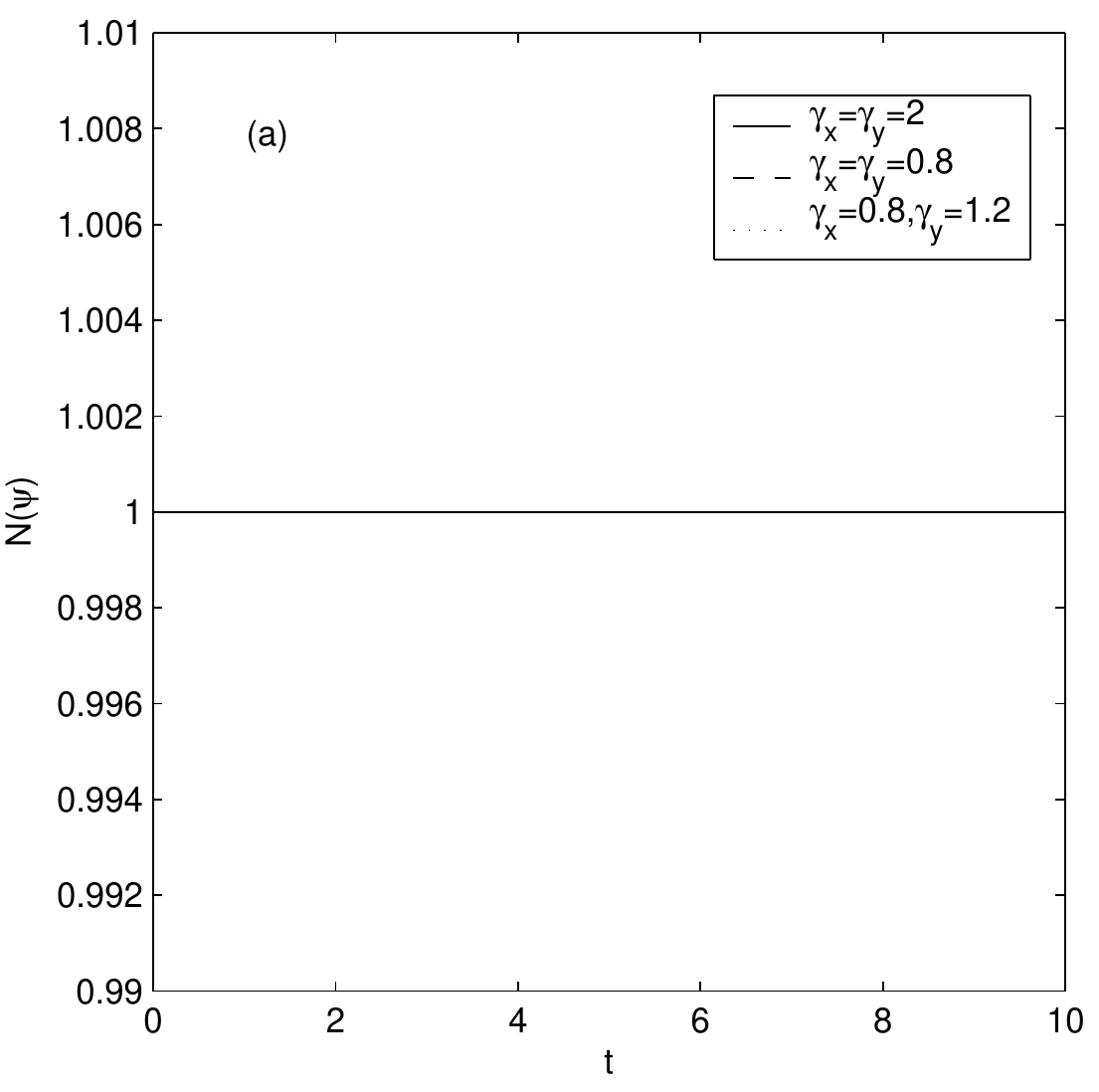,height=5cm,width=5cm,angle=0}
\psfig{figure=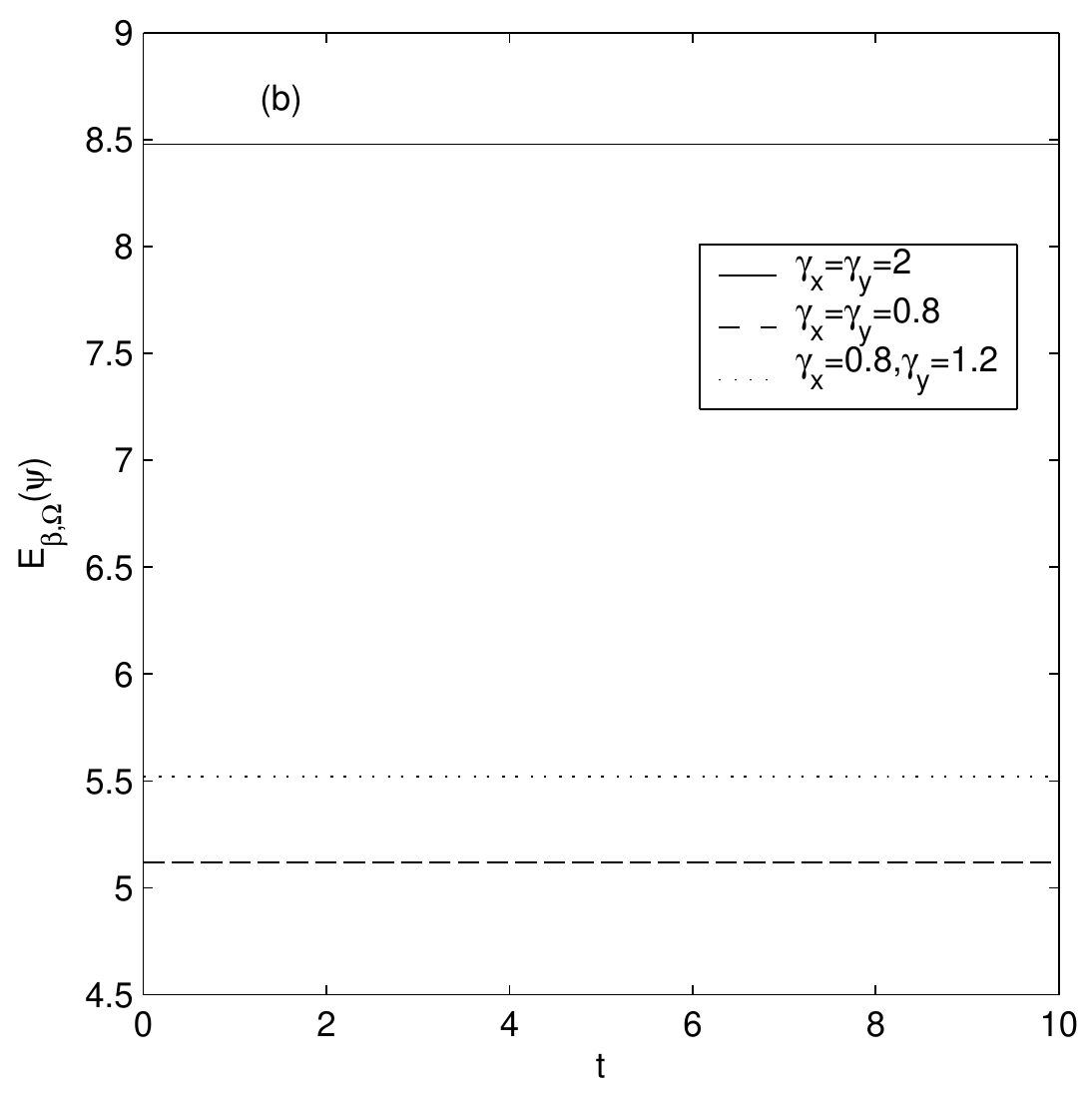,height=5cm,width=5cm,angle=0}}
\centerline{\psfig{figure=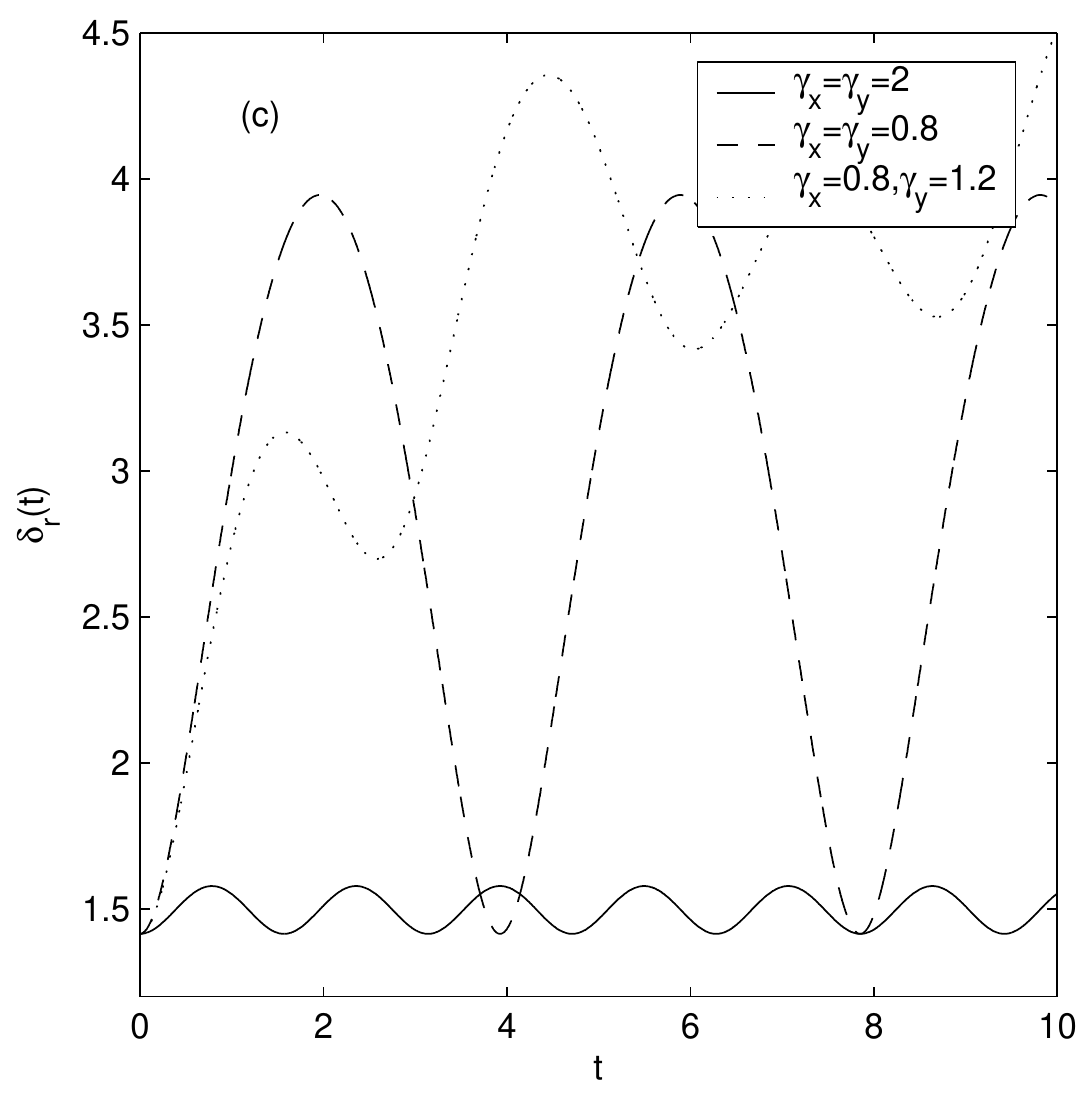,height=5cm,width=5cm,angle=0}
\psfig{figure=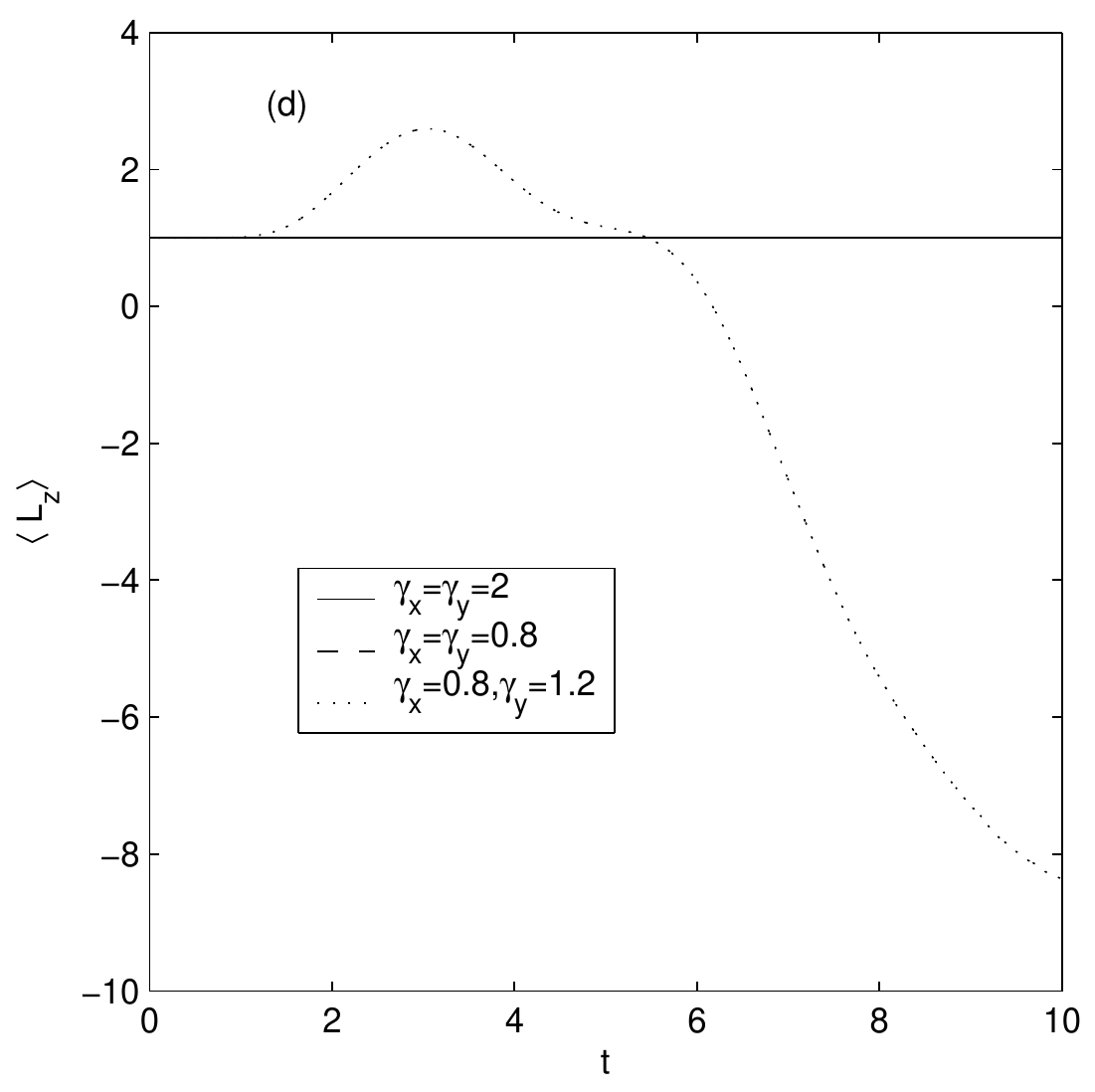,height=5cm,width=5cm,angle=0}} \caption{Time
evolution of a few quantities for the dynamics of rotating BEC in 2D
with three sets of parameters: (a) normalization $N(\psi)$, (b)
energy $E_{\beta,\Og}(\psi)$, (c) condensate width $\dt_r(t)$, and
(d) angular momentum expectation $\langle L_z\rangle (t)$. }
\label{fig:3:sec6}
\end{figure}

\begin{example}\label{exm:2:sec6} Dynamics of a rotating BEC in 2D, i.e. we
take $d=2$, $\beta=100$, $\Og=0.5$ and $W(\bx)\equiv 0$ in
(\ref{eq:gpegrot2:sec6}). The initial data in (\ref{eq:gpegrot2:sec6}) is chosen as
\be\label{inita2d4}
\psi_0(x,y)=\frac{x+iy}{\sqrt{\pi}}e^{-(x^2+y^2)/2}, \qquad (x,y)\in
{\mathbb R}^2. \ee We solve the problem by the scheme
(\ref{eq:tssp2Dprd:sec6}) with $\tau =0.0005$, $M=128$ and $K=200$.
 Fig.~\ref{fig:3:sec6} depicts time evolution of
 the normalization $N(\psi)$, energy
  $E_{\beta,\Og}(\psi)$, condensate width $\dt_r(t)$ and
angular momentum expectation $\langle L_z\rangle (t)$ for three sets
of parameters in (\ref{eq:gpegrot2:sec6}): (i) $\gm_x=\gm_y=2$, (ii)
$\gm_x=\gm_y=0.8$, and (iii) $\gm_x=0.8$, $\gm_y=1.2$.
\end{example}

From Fig.~\ref{fig:3:sec6}, we can draw the following conclusions: (i)
the normalization $N(\psi)$ and energy $E_{\beta,\Og}(\psi)$ are
conserved well in the computation (cf. Fig.~\ref{fig:3:sec6}a\&b); (ii)
the angular momentum expectation $\langle L_z\rangle (t)$ is
conserved when $\gm_x=\gm_y$ (cf. Fig.~\ref{fig:3:sec6}d), i.e. the
trapping is radially symmetric, which again confirms the analytical
results in section \ref{subsec:dynrot} ; (iii) the condensate width $\dt_r(t)$ is a
periodic function when $\gm_x=\gm_y$ (cf. Fig.~\ref{fig:3:sec6}c), which
again confirms the analytical results in section \ref{subsec:dynrot}.

\section{Semiclassical scaling and limit}
\label{sec:semiclass}\setcounter{equation}{0}\setcounter{figure}{0}\setcounter{table}{0}
In section \ref{subsec:gpe}, we have introduced the scaling in the GPE (\ref{eq:GPE}) to obtain the dimensionless form
which has been widely adopted in physics literatures. For BEC with or without the rotational frame (cf. section \ref{sec:rotat}), the dimensionless GPE in $d$-dimensions ($d=1,2,3$) (lower dimensions with $d=1,2$ are treated
as from 3D GPE by dimension reduction) can be written as
\be\label{eq:gpe:sec7}
i\p_t\psi(\bx,t)=\left[-\frac12\nabla^2+V(\bx)-\Omega L_z+\beta|\psi|^2\right]\psi,
\quad \bx\in U\subseteq\Bbb R^d,\quad t>0,
\ee
with normalization condition
\be\label{eq:normg:sec7}
\|\psi(\cdot,t)\|_2^2=\int_{\Bbb R^d}|\psi(\bx,t)|^2\,d\bx=1,
\ee
where $\psi:=\psi(\bx,t)$ is the macroscopic wave function, $U=[0,1]^d$ for box potentials, $U=\Bbb R^d$ for harmonic potential and other confining potentials (cf. section \ref{subsec:gpe}), $L_z=-i(y\p_x-x\p_y)$ for $d=2,3$, and $\Omega=0$ for $d=1$. The energy $E(\psi)$ for (\ref{eq:gpe:sec7}) is given by
\be\label{eq:denergy:sec7}
E(\psi(\cdot,t))=\int_{U} \left[\fl{1}{2} \left|\btd \psi(\bx,t)\right|^2+
V(\bx)|\psi|^2 +\fl{\beta}{2}\; |\psi|^4-
\Omega \bar{\psi}\, L_z \psi\right]d\bx.
\ee

The ground state  $\phi_g$ of the GPE (\ref{eq:gpe:sec7}) is  the minimizer of  the
energy $E(\phi)$ (\ref{eq:denergy:sec7}) over the unit sphere $S=\{\phi \ |\ \|\phi\|_2=1, \ E(\phi)<\ift\}$.
It can also be  characterized by the nonlinear eigenvalue problem:
\begin{align}
\label{eq:charactereq:sec7}
&\mu\;\phi({\bx}) =  -\fl{1}{2}\Dt\phi({\bx})+V({\bx})\phi({\bx})
-\Omega  L_z \phi+\bt|  \phi({\bx})|^2\phi({\bx}), \qquad {\bx}\in U,\\
&\phi(\bx)|_{\p U}=0,\nn
\end{align}
under  the normalization condition (\ref{eq:normg:sec7}) with $\psi=\phi$.
Here, the nonlinear
eigenvalue (or chemical potential) $\mu$ can be computed from its corresponding eigenfunction
$\phi({\bx})$ by
\bea
\label{eq:mu-energy:sec7}
\mu & = & \mu(\phi)
=\int_{U}\left[\fl{1}{2}|\nabla\phi({\bx})|^2+
V({\bx})|\phi({\bx})|^2+\bt|\phi({\bx})|^4-
\Omega \bar{\phi}\, L_z \phi\right]d{\bx} \nn \\
&=&  E(\phi)+\int_{U}\fl{\bt}{2}|\phi({\bx})|^4d{\bx}.
\eea

\subsection{Semiclassical scaling in the whole space}
When $U = {\Bbb R^d}$,  $\bt \gg 1$ and $V(\bx) = V_0(\bx) +
W(\bx)$ satisfies
\be
\label{eq:gpon:sec7}
 V_0(\lambda\bx) = |\lambda|^\alpha V_0(\bx),\quad \forall \lambda\in\Bbb R,\quad
\lim\limits_{|\bx|\to\infty}V_0(\bx)=\ift,\quad \lim_{|\bx|\to \ift}  \fl{W(\bx)}{V_0(\bx)}=0,
\ee
where $\bx\in\Bbb R^d$ and  $\alpha>0$, another  scaling  (under the normalization (\ref{eq:normg:sec7})
 with $\psi$ being
replaced by $\psi^\vep$) -- semiclassical scaling -- for (\ref{eq:gpe:sec7}) is also very useful in
practice by choosing,
 $t\to t \vep^{(\ap-2)/(\ap+2)}$, $\bx\to \bx
\vep^{-2/(2+\ap)}$, 
and $\psi = \psi^\vep\; \vep^{d/(2+\ap)}$ with
$\vep=1/\bt^{(\ap+2)/2(d+\ap)}$ \cite{BaoZhang,BaoDuZhang,BaoChai}:
\be
\label{eq:semiclass:sec7}
i\vep  \fl{\partial\psi^\vep({\bx},t)}{\partial t}
=\left[-\fl{\vep^2}{2}\nabla^2 -\vep^{\frac{2\alpha}{2+\alpha}}\Omega L_z\right]\psi^\vep + (V_0(\bx)+W^\vep(\bx))\psi^\vep
+ |\psi^\vep|^2\psi^\vep, \ {\bx}\in {\Bbb R^d},
\ee
where  $W^\vep(\bx) = \vep^{2\ap/(2+\ap)} W(\bx/\vep^{2/(2+\ap)})$ and
the energy functional $E^\vep(\psi^\vep)$ is defined as
\begin{equation} \label{eq:energysem:sec7}
E^\vep(\psi^\vep)=\int_{{\Bbb R}^d}\left[\fl{\vep^2}{2}|\nabla
\psi^\vep|^2-
\vep^{\frac{2\alpha}{2+\alpha}}\Omega \overline{\psi^\vep} L_z \psi^\vep+(V_0 + W^\vep)  |\psi^\vep|^2
+\fl{|\psi^\vep|^4}{2}\right]d{\bx}=O(1).
\end{equation}

Similarly, the nonlinear eigenvalue problem (\ref{eq:charactereq:sec7})
(under the normalization (\ref{eq:normg:sec7}) with $\psi=\phi^\vep$)  reads
\be
\label{eq:charac:sec7} \mu^\vep\phi^\vep({\bx}) =  -\fl{\vep^2}{2}\Dt\phi^\vep
+(V_0({\bx})+W^\vep(\bx))\phi^\vep-\vep^{\frac{2\alpha}{2+\alpha}}\Omega L_z\phi^\vep+|  \phi^\vep|^2\phi^\vep, \quad {\bx}
\in {\Bbb R}^d ,
\ee
where eigenvalue $\mu^\vep$
can be computed from its corresponding
eigenfunction $\phi^\vep$ by
\be \mu^\vep=\mu^\vep(\phi^\vep)=E^\vep(\phi^\vep)+\frac{1}{2}\int_{\Bbb R^d}|\phi^\vep|^4\,d\bx=O(1).\ee

Based on this re-scaling, it is  easy to get the leading asymptotics of the energy
functional  $E(\psi)$ in (\ref{eq:denergy:sec7})  and the chemical potential
(\ref{eq:mu-energy:sec7}) when $\bt\gg1$ from this scaling \cite{BaoZhang,BaoDuZhang,BaoChai}:
\bea
\label{eq:energyl:sec7}
&&E(\psi)= \vep^{-2\ap/(2+\ap)} E^\vep(\psi^\vep) =
O\left(\vep^{-2\ap/(2+\ap)}\right)=O\left(\bt^{\ap/(d+\ap)}\right), \\
\label{eq:muresal5:sec7}
&&\mu(\phi)= \vep^{-2\ap/(2+\ap)} \mu^\vep(\phi^\vep) =
O\left(\vep^{-2\ap/(2+\ap)}\right)=O\left(\bt^{\ap/(d+\ap)}\right).
\eea
In \cite{Carles,Sparber}, a different rescaling for the nonlinear
Schr\"{o}dinger equation subject to smooth, lattice-periodic
potentials was used in the semiclassical regime. There they
studied Bloch waves dynamics in BEC on optical
lattices.


\subsection{Semiclassical scaling in bounded domain}
When $U=(0,1)^d\subset\Bbb R^d$ is a bounded domain,  $\bt\gg 1$, we use  the following scaling  (under the normalization (\ref{eq:normg:sec7})) with $\psi$ being
replaced by $\psi^\vep$) -- semiclassical scaling -- for (\ref{eq:gpe:sec7}) by choosing $t\to t \vep^{-1}$,  and $\psi = \psi^\vep$ with
$\vep=1/\sqrt{\bt}$ \cite{BaoChai}:
\be
\label{eq:semiclassb:sec7}
i\vep  \fl{\partial\psi^\vep({\bx},t)}{\partial t}
=\left[-\fl{\vep^2}{2}\nabla^2 -\vep^2\Omega L_z\right]\psi^\vep + V^\vep(\bx)\psi^\vep
+ |\psi^\vep|^2\psi^\vep, \ {\bx}\in U,
\ee
where  $V^\vep(\bx) = \vep^2 V(\bx)$ and
the energy functional $E^\vep(\psi^\vep)$ is defined as
\begin{equation} \label{eq:energysemb:sec7}
E^\vep(\psi^\vep)=\int_{U}\left[\fl{\vep^2}{2}|\nabla
\psi^\vep|^2-
\vep^2\Omega \overline{\psi^\vep} L_z \psi^\vep+V^\vep  |\psi^\vep|^2
+\fl{|\psi^\vep|^4}{2}\right]d{\bx}=O(1).
\end{equation}

We can derive the leading asymptotics of the energy
functional  $E(\psi)$ in (\ref{eq:denergy:sec7})  and the chemical potential
(\ref{eq:mu-energy:sec7}) when $\bt\gg1$ from this scaling in the bounded  domain case \cite{BaoChai}:
\bea
\label{eq:energylb:sec7}
&&E(\psi)= \vep^{-2} E^\vep(\psi^\vep) =
O\left(\vep^{-2}\right)=O(\bt), \\
\label{eq:muresal5b:sec7}
&&\mu(\phi)= \vep^{-2} \mu^\vep(\phi^\vep) =
O\left(\vep^{-2}\right)=O(\bt).
\eea

For comparison, Tabs.~\ref{tab:1:sec7} and \ref{tab:2:sec7} display dimensionless units and several important quantities to obtain the GPE (\ref{eq:gpe:sec7}) under the standard physical scaling and (\ref{eq:semiclass:sec7}) or (\ref{eq:semiclassb:sec7}) under the semiclassical scaling
for a BEC in the whole space with harmonic potential (\ref{eq:hp}) ($\og_x=\min\{\og_x,\og_y,\og_z\}$) and in a bounded domain with box potential (\ref{eq:box3d}), respectively. Again, the dimensionless GPE in lower dimensions with $d=1,2$ are treated
as from 3D GPE by dimension reduction.

{\renewcommand\baselinestretch{1.2}\selectfont
 \begin{table}\begin{center}
\begin{tabular}{ccc}\hline\hline
Quantities &Physical scaling &Semiclassical scaling\\ \hline\hline
time unit $t_s$ &$\frac{1}{\og_x}$&$\frac{1}{\og_x}$\\
length unit $x_s$ &$\sqrt{\frac{\hbar}{m\og_x}}:=a_0$ &$\frac{a_0}{\vep^{1/2}}$\\
energy unit $E_s$ &$\hbar\og_x$ &$\frac{\hbar\og_x}{\vep}$\\
wave amplitude unit $\psi_s$ &$a_0^{-3/2}$
 &$a_0^{-3/2}\vep^{d/4}$ \\
\hline
healing length $\xi_h$ &$O(\beta^{-1/(2+d)})$ &$O(\vep)$ \\
energy $E_g$ &$O(\beta^{2/(d+2)})$&$O(1)$\\
chemical potential $\mu_g$ &$O(\beta^{2/(d+2)})$&$O(1)$\\
Thomas-Fermi radius $R^{\rm TF}_g$ &$O(\beta^{1/(d+2)})$&$O(1)$\\
wave amplitude $\phi_g^{\rm max}$ &$O(\beta^{-d/2(d+2)})$& $O(1)$\\
\hline\hline
\end{tabular}
\end{center}
\caption{Comparison of dimensionless units and several important quantities under
the standard physical scaling and semiclassical scaling. Here $t_s$ is time unit, $x_s$ is length unit,
$E_s$ is energy unit, $\psi_s$ is wave function unit, where $m$, $\hbar$, $a_s$ and $N$ are the mass, Planck constant, $s$-wave scattering length and total particle number, respectively (cf. \ref{subsec:gpe}).
$\xi_h$ is the healing length \cite{Pethick}, $E_g$ is the energy of ground state, $\mu_g$ is the chemical potential of ground state,
$R_g^{\rm TF}$ is the Thomas-Fermi radius of the ground state, $\phi_g^{\rm max}$ is the maximum value of ground state.
(a) For a BEC in the whole space with a harmonic potential (\ref{eq:hp}) ($\og_x=\min\{\og_x,\og_y,\og_z\}$),
$\beta$ is given in (\ref{eq:dhp:sec1}) and $\vep=1/\beta^{2/(2+d)}$.}\label{tab:1:sec7}
\end{table}}

\setcounter{table}{0}

\begin{table}\begin{center}
\begin{tabular}{ccc}\hline\hline
Quantities &Physical scaling &Semiclassical scaling\\\hline
\hline
time unit $t_s$ &$\frac{mL^2}{\hbar}$ & $\frac{mL^2}{\hbar\vep^2}$\\
length unit $x_s$ &$L$ &$L$\\
energy unit $E_s$ &$\frac{\hbar^2}{mL^2}$ &$\frac{\hbar^2}{m\vep^2L^2}$\\
wave amplitude unit $\psi_s$ &$L^{-3/2}$ &$L^{-3/2}$ \\
\hline
healing length $\xi_h$& $O(\beta^{-1/2})$ &$O(\vep)$ \\
energy $E_g$ &$O(\beta)$&$O(1)$\\
Chemical potential $\mu_g$ &$O(\beta)$&$O(1)$\\
Thomas-Fermi radius $R_g^{\rm TF}$  &$O(1)$&$O(1)$\\
wave amplitude $\phi_g^{\rm max}$  &$O(1)$& $O(1)$\\
\hline\hline
\end{tabular}
\end{center}
\caption{(Con't) (b) For a BEC in the box potential (\ref{eq:box3d}) with $L$ the size of box potential.
$\beta=4\pi a_s N/L$ and $\vep=\beta^{-1/2}$.}\label{tab:2:sec7}
\end{table}
}


\subsection{Semiclassical limits and geometric optics}

Suppose $V^\vep(\bx)=V_0(\bx)+W^\vep(\bx)$ in (\ref{eq:semiclass:sec7}), and we set
\be\psi^\vep(\bx,t)=\sqrt{\rho^\vep(\bx,t)}\exp\left(\fl{i}{\vep}S^\vep(\bx,t)\right), \label{eq:anssemi:sec7}\ee
where $\rho^\vep=|\psi^\vep|^2$ and $S^\vep$ are the density and phase of the wave function, respectively.
Inserting (\ref{eq:anssemi:sec7}) into the GPE (\ref{eq:semiclass:sec7})  and
separating real and imaginary parts give \cite{GMMP,GaMa,BaoDuZhang,Carles01,ZhangP}
\bea
&&\rho_t^\vep+{\rm div}\, (\rho^\vep\; \btd S^\vep)+\Omega \hat{L}_z\rho^\vep=0, \label{eq:eqtrans:sec7}\\
&&S_t^\vep+\fl{1}{2}|\btd S^\vep|^2 +
\rho^\vep+V^\vep(\bx)+\Omega \hat{L}_zS^\vep
=\fl{\vep^2}{2}\;\fl{1}{\sqrt{\rho^\vep}}\ \Delta \sqrt{\rho^\vep},
\label{eq:eqHJ:sec7}
 \eea
 where $\hat{L}_z=(x\p_y-y\p_x)$.
The equation (\ref{eq:eqtrans:sec7}) is the transport equation for the
atom density and (\ref{eq:eqHJ:sec7}) the Hamilton-Jacobi equation for
the phase.

 By formally passing to the limit
$\vep\to 0$ (cf. \cite{SAGardiner}), we obtain the system \bea \label{eq:eqHJ1:sec7}
&&\rho^0_t+{\rm div}\, (\rho^0\; \btd S^0)+\Omega \hat{L}_z\rho^0=0,\\
\label{eq:HJ2:sec7}
&&S^0_t+\fl{1}{2}|\btd S^0|^2 +\rho^0+V_0(\bx)+\Omega \hat{L}_zS^0=0.
 \eea

It is well known that this limit process is only correct in the
defocusing case $\beta>0$ before caustic onset, i.e. in
time-intervals where the solution of the Hamilton-Jacobian equation
(\ref{eq:eqHJ:sec7}) coupled with the atom-number conservation equation
(\ref{eq:eqtrans:sec7}) is smooth. After the breakdown of regularity,
oscillations occur, which make the term
$\fl{\vep^2}{2}\;\fl{1}{\sqrt{\rho^\vep}}\ \btu \sqrt{\rho^\vep}$ at least $O(1)$
such that the validity of the formal limit process is destroyed.
The limiting behavior after caustics
onset is not understood yet except in 1D case
without confinement, see \cite{ShenJin}. Also, the focusing case
$\beta < 0$ is not fully understood yet.

Furthermore, by defining the
current densities \bea {\bf J}^{\vep}(\bx,
t)=\rho^{\vep}\nabla S^{\vep} = \vep\,{\rm
Im}\left[\overline{\psi^{\vep}}\nabla\psi^{\vep}\right],
 \eea
  we can rewrite (\ref{eq:eqtrans:sec7})-(\ref{eq:eqHJ:sec7}) as a
coupled Euler system with third-order dispersion  terms \cite{GMMP,GaMa,BaoDuZhang,Carles01,ZhangP} {\small\begin{align}
\label{eq:6rho_1:sec7}&\p_t\rho^{\vep}+{\rm div}{\bf
J}^{\vep}+\Og\widehat{L}_z\rho^{\vep} = 0, \\
\label{eq:6J:sec7} &\p_t{\bf J}^{\vep}+{\rm div}\left(\fl{{\bf
J}^{\vep}\otimes{\bf
J}^{\vep}}{\rho^{\vep}}\right)+\rho^{\vep}\nabla
V^\vep(\bx)+\frac12\nabla
\left(\rho^{\vep}\right)^2+\Og\widehat{L}_z{\bf J}^{\vep}=\fl{\vep^2}{4}\nabla\left(\rho^{\vep}
\nabla^2\ln\rho^{\vep}\right).\quad \end{align}}

 Letting
$\vep\to0^+$ in (\ref{eq:6rho_1:sec7})-(\ref{eq:6J:sec7}), formally we get an Euler
system coupling through the pressures \cite{GMMP,GaMa,BaoDuZhang,Carles01,ZhangP} \bea
\label{eq:6rho_3:sec7}&&\p_t\rho^{0}+{\rm div}{\bf
J}^{0}+\Og\widehat{L}_z\rho^{0} = 0, \\
\label{eq:6J3:sec7} &&\p_t{\bf J}^{0}+{\rm div}\left(\fl{{\bf
J}^{0}\otimes{\bf
J}^{0}}{\rho^{0}}\right)+\rho^{0}\nabla V_0(\bx)+\frac12\nabla
\left(\rho^{0}\right)^2 +\Og\widehat{L}_z{\bf J}^{0} =0, \qquad \quad \eea where ${\bf
J}^{0}(\bx, t)=\rho^{0}\nabla S^{0}$. The system
(\ref{eq:6rho_3:sec7})-(\ref{eq:6J3:sec7}) is a coupled isotropic Euler system
with quadratic pressure-density constitutive relations in the
rotational frame. The formal asymptotics is supposed to hold up to
caustic onset time \cite{GaMa,GMMP}.

\section{Mathematical theory and numerical methods for dipolar BEC}
\label{sec:dipole}
\setcounter{equation}{0}\setcounter{figure}{0}\setcounter{table}{0}
In the last several years, there has been a quest for realizing a
novel kind of quantum gases with the dipolar interaction, acting
between particles having a permanent magnetic or electric dipole
moment. In 2005, the first dipolar BEC
with $^{52}$Cr atoms was successfully realized in experiments at the
Stuttgart University \cite{Griesmaier}.
Later in 2011, a dipolar BEC  with $^{164}$Dy atoms, whose
dipole-dipole interaction is much stronger than that of  $^{52}$Cr,
was achieved in experiments at the
Stanford University \cite{Lum}. Very recently in 2012, a dipolar BEC of $^{168}$Er atoms
has been produced in Innsbruck University \cite{Aikawa}.
 These successes of
experiments have renewed interests in theoretically studying dipolar BECs.
\subsection{GPE with dipole-dipole interaction}
At temperature $T$ much smaller than the critical temperature $T_c$,
a dipolar BEC is well described by the macroscopic wave function
$\psi=\psi(\bx,t)$ whose evolution is governed by the
3D Gross-Pitaevskii equation (GPE)
\cite{Yi,Santos,Baranov} \be \label{eq:ngpe:sec8} i\hbar \p_t
\psi(\bx,t)=\left[-\fl{\hbar^2}{2m}\nabla^2+V(\bx)+g|\psi|^2+
\left(V_{\rm dip}\ast |\psi|^2\right)\right]\psi, \quad \bx\in{\Bbb
R}^3, \ t>0, \ee where $\bx=(x,y,z)^T\in {\Bbb R^3}$ is
the Cartesian coordinate and a harmonic trap potential $V(\bx)$ is considered here.
$g=\frac{4\pi \hbar^2 a_s}{m}$ describes local (or short-range)
interaction between dipoles in the condensate with $a_s$ the
$s$-wave scattering length. The long-range dipolar
interaction potential between two dipoles is given by
\be\label{eq:kel0:sec8} V_{\rm dip}(\bx)= \frac{\mu_0\mu_{\rm
dip}^2}{4\pi}\,\fl{1-3(\bx\cdot \bf
n)^2/|\bx|^2}{|\bx|^3}=\frac{\mu_0\mu_{\rm
dip}^2}{4\pi}\,\fl{1-3\cos^2(\theta)}{|\bx|^3}, \qquad \bx\in{\Bbb
R}^3,\ee  where $\mu_0$ is the vacuum magnetic permeability,
$\mu_{\rm dip}$ is permanent magnetic dipole moment (e.g. $\mu_{\rm
dip}=6\mu_{_B}$ for $^{52}$C$_{\rm r}$ with $\mu_{_B}$ being the
Bohr magneton), ${\bf n}=(n_1,n_2,n_3)^T\in {\Bbb R}^3$ is the
dipole axis (or dipole moment) which is a given unit vector, i.e.
$|{\bf n}|=\sqrt{n_1^2+n_2^2+n_3^3}=1$, and $\theta$ is the angle
between the dipole axis ${\bf n}$ and the vector $\bx$. The wave
function is normalized according to \be\label{eq:norm00:Sec8}
\|\psi\|_2^2:=\int_{{\Bbb R}^3} |\psi(\bx,t)|^2\;d\bx =N,\ee where $N$
is the total number of dipolar particles in the dipolar BEC.

By introducing the dimensionless variables, $t\to \frac{t}{\og_{0}}$
with $\og_0=\min\{\og_x,\og_y,\og_z\}$,  $\bx \to x_s\bx$ with $
x_s=\sqrt{\fl{\hbar}{m\og_{0}}}$, $\psi\to \frac{\sqrt{N}
\psi}{x_s^{3/2}}$,
 we obtain the dimensionless GPE  in 3D from (\ref{eq:ngpe:sec8})
 as :
 \be \label{eq:ngpe1:sec8} i\p_t \psi(\bx,t)=\left[-\fl{1}{2}\nabla^2+V(\bx)+\beta
|\psi|^2+\lambda \left(U_{\rm dip}\ast|\psi|^2\right)\right]\psi,
\quad \bx\in{\Bbb R}^3, \; t>0,\ee where $\beta
=\frac{Ng}{\hbar\og_0 x_s^3}= \fl{4\pi a_sN}{x_s}$, $\lambda
=\fl{mN\mu_0\mu_{\rm dip}^2}{3\hbar^2 x_s}$,
$V(\bx)=\fl{1}{2}(\gamma_{x}^2x^2+ \gamma_y^2y^2+\gamma_{z}^2 z^2)$
is the dimensionless harmonic trapping potential with
$\gm_x=\frac{\og_x}{\og_0}$, $\gm_y=\frac{\og_y}{\og_0}$ and
$\gm_z=\frac{\og_z}{\og_0}$, and the dimensionless long-range
dipolar interaction potential $U_{\rm dip}(\bx)$ is given as
\be\label{eq:kel:sec8} U_{\rm dip}(\bx)= \frac{3}{4\pi}\,\fl{1-3(\bx\cdot
\bf n)^2/|\bx|^2}{|\bx|^3}=\frac{3}{4\pi}\,
\fl{1-3\cos^2(\theta)}{|\bx|^3}, \qquad \bx\in{\Bbb R}^3.\ee
Although the kernel $U_{\rm dip}$
is highly singular near the origin,
the convolution is well-defined for $\rho\in L^p(\Bbb R^3)$
with $U_{\rm dip}*\rho\in L^p(\Bbb R^3)$  for $p\in(1,\infty)$ \cite{Carles}.

Denote the
differential operators $\partial_{\bn}=\bn\cdot\nabla$ and
$\partial_{\bn\bn}=\partial_{\bn}\partial_{\bn}$, and  notice the
identity \cite{BaoCaiWang} \be \label{eq:decop1:sec8} U_{\rm dip}(\bx)=\fl{3}{4\pi
|\bx|^3}\left(1-\frac{3(\bx\cdot {\bf n})^2}{|\bx|^2}\right) = -
\delta (\br)-3\p_{\bn\bn}\left( \frac{1}{4\pi |\bx|}\right),\qquad
\bx\in {\Bbb R}^3,\ee  we can
re-formulate  the GPE (\ref{eq:ngpe1:sec8})  as the following
Gross-Pitaevskii-Poisson system (GPPS) \cite{BaoCaiWang,BaoBenCai,CaiRosen} \begin{align}
\label{eq:gpe:sec8} &i \p_t
\psi(\bx,t)=\left[-\fl{1}{2}\nabla^2+V(\bx)+(\beta-\lambda)
|\psi|^2-3\lambda \p_{\bn\bn} \varphi \right]\psi, \quad \bx\in{\Bbb
R}^3,
\quad t>0,  \\
\label{eq:poisson:sec8}&\qquad \nabla^2 \varphi(\bx,t) =
-|\psi(\bx,t)|^2,\qquad \bx\in{\Bbb R}^3, \qquad
\lim\limits_{|\bx|\to\infty}\varphi(\bx,t)=0,\qquad  t\ge0.
 \end{align}
 The above GPPS in 3D conserves the {\sl mass}, or the {\sl normalization} condition,
 \be
\label{eq:norm3d:sec8}N(\psi(\cdot,t)):=\|\psi(\cdot,t)\|_{2}^2=\int_{{\Bbb
R}^3}|\psi(\bx,t)|^2\;d\bx\equiv \int_{{\Bbb
R}^3}|\psi(\bx,0)|^2\;d\bx=1, \quad t\ge0,\ee
 and  {\sl energy} per particle with $\varphi=\frac{1}{4\pi|\bx|}*|\psi|^2$,
\be\label{eq:energydp:sec8}
E_{\rm 3D}(\psi)=\int_{\Bbb
R^3}\left[\frac12|\nabla\psi|^2+V(\bx)|\psi|^2+\frac{\beta-\lambda}{2}
|\psi|^4+\frac{3\lambda}{2}\left|\p_{\bn}\nabla\varphi\right|^2\right]\,d\bx.
\ee

From
(\ref{eq:decop1:sec8}), it is straightforward to get the Fourier
transform of $U_{\rm dip}(\bx)$ as \be\label{eq:four11:sec8}
\widehat{(U_{\rm dip})}(\xi)=-1+\frac{3\left(\bn\cdot
\xi\right)^2}{|\xi|^2}, \qquad \xi\in {\Bbb R}^3. \ee
\subsection{Dimension reduction}
In many physical experiments of dipolar BECs,  the condensates
 are confined with  strong harmonic trap in one  or two axis directions,
 resulting in a disk- or cigar-shaped dipolar BEC, respectively.
 Mathematically speaking, this corresponds to the anisotropic potentials
 $V(\bx)$ of the form:

{\sl Case I} (disk-shaped), potential is strongly confined in
vertical $z$ direction
with \be\label{eq:case1:sec8} V(\bx)=V_2(x,y)+\frac{z^2}{2\vep^4}, \qquad
\bx\in{\Bbb R}^3, \ee

 {\sl Case II} (cigar-shaped), potential is strongly confined in
 horizontal $\bx_{\perp}=(x,y)^T\in {\Bbb R}^2$
 plane with
\be\label{eq:case2:sec8} V(\bx)=V_1(z)+\frac{x^2+y^2}{2\vep^4}, \qquad
\bx\in {\Bbb R}^3, \ee
where $0<\vep\ll 1$ ($\vep=1/\gamma_z$ in Case I and $\vep=1/\gamma_r$
with $\gamma_x=\gamma_y=\gamma_r$ in Case II) is a small parameter describing
the strength of confinement. In such cases, the above GPPS in 3D can be formally reduced
to 2D and 1D, respectively \cite{CaiRosen}.

In  {\sl Case I}, when $\vep\to 0^+$,  evolution of the solution
$\psi(\bx,t)$ of GPPS (\ref{eq:gpe:sec8})-(\ref{eq:poisson:sec8}) in $z$-direction
would essentially occur in the ground state mode of
$-\frac12\p_{zz}+\frac{z^2}{2\vep^4}$, which is spanned by
$w_\vep(z)=\vep^{-1/2}\pi^{-1/4}e^{-\frac{z^2}{2\vep^2}}$
\cite{CaiRosen,BaoBenCai}. By taking the ansatz
\be\label{eq:an1:sec8}\psi(\bx_{\perp},z,t)=e^{-it/2\vep^2}\phi(\bx_{\perp},t)w_\vep(z),
\qquad (\bx_{\perp},z)^T=(x,y,z)^T\in{\Bbb R}^3, \quad t\ge0, \ee the 3D GPPS
(\ref{eq:gpe:sec8})-(\ref{eq:poisson:sec8}) can be formally reduced to a {\sl
{quasi-2D equation} I} \cite{CaiRosen,BaoBenCai}: \bea \label{eq:gpe2d:sec8} i\p_t
\phi=\left[-\frac12\nabla^2+V_2+\beta_{2D} |\phi|^2-\frac{3\lambda}{2}(
\p_{\bn_\perp\bn_\perp} -n_3^2\nabla^2)\varphi^{2D} \right]\phi,  \eea
where  $\beta_{2D}=\frac{\beta-\lambda+3\lambda
n_3^2}{\sqrt{2\pi}\,\vep}$,
$\bn_\perp=(n_1,n_2)^T$, $\p_{\bn_\perp}=\bn_\perp\cdot\nabla=\bn_\perp\cdot(\p_x,\p_y)^T$,
$\p_{\bn_\perp\bn_\perp}=\p_{\bn_\perp}(\p_{\bn_\perp})$,
$\nabla^2=\p_{xx}+\p_{yy}$ and
 \be\label{eq:u2d1:sec8}
 \varphi^{2D}(x,y,t)=U^{2D}_\vep*|\phi|^2,\quad
 U^{2D}_\vep(x,y)=\frac{1}{2\sqrt{2}\pi^{3/2}}
 \int_{\Bbb R}\frac{e^{-s^2/2}}{\sqrt{x^2+y^2+
\vep^2s^2}}\,ds. \ee In addition, as
$\vep\to0^+$, $\varphi^{2D}$ can be approximated by
$\varphi^{2D}_\infty$ \cite{CaiRosen} as :
\be\label{eq:u2d2:sec8} \varphi^{2D}_\infty(\bx_{\perp},t)=U_{\rm dip}^{2D}*|\phi|^2, \quad
\hbox{with} \quad U_{\rm
dip}^{2D}(\bx_{\perp})=\frac{1}{2\pi|\bx_{\perp}|},  \ee which can be re-written as a fractional
Poisson equation  \be\label{eq:u2d2k:sec8}
(-\nabla^2)^{1/2}\varphi^{2D}_\infty(\bx_{\perp}t)=|\phi(\bx_{\perp},t)|^2, \quad \lim\limits_{|\bx_{\perp}|\to\infty}\varphi^{2D}_\infty(\bx_{\perp},t)=0,
\qquad t\ge0. \ee Thus an alternative {\sl{quasi-2D equation} II}
can be obtained as: \be \label{eq:gpe2d2:sec8} i \p_t
\phi=\left(-\frac{1}{2}\nabla^2+V_2+\beta_{2D}|\phi|^2-\frac{3\lambda}{2}(
\p_{\bn_\perp\bn_\perp} -n_3^2\nabla^2)(-\nabla^2)^{-\frac{1}{2}}(|\phi|^2)
\right)\phi.\ee

Similarly, in  {\sl Case II}, evolution of the solution
$\psi(x,y,z,t)$ of GPPS (\ref{eq:gpe:sec8})-(\ref{eq:poisson:sec8}) in $(x,y)$ plane
would essentially occur in the ground state mode of
$-\frac12(\p_{xx}+\p_{yy})+\frac{x^2+y^2}{2\vep^4}$, which
is spanned by
$w_\vep(x,y)=\vep^{-1}\pi^{-1/2}e^{-\frac{x^2+y^2}{2\vep^2}}$
\cite{BaoBenCai,CaiRosen}. Again, by taking the ansatz
\be\label{eq:an2:sec8}\psi(x,y,z,t)=e^{-i t/\vep^2}\phi(z,t)w_\vep(x,y), \quad t\ge0,\ee the 3D GPPS
(\ref{eq:gpe:sec8})-(\ref{eq:poisson:sec8}) can be formally reduced to a {\sl
{quasi-1D equation}}:
 \be \label{eq:gpe1d:sec8} i \p_t
\phi=\left[-\frac{1}{2}\p_{zz}+V_1+\beta_{1D} |\phi|^2 -\frac{3\lambda(
3n_3^2-1)}{8\sqrt{2\pi}\,\vep}\p_{zz}\varphi^{1D} \right]\phi, \quad
z\in{\Bbb R}, \ t>0,\ee where $\beta_{1D}=\frac{\beta+\frac{1}{2}\lambda
(1-3n_3^2)}{2\pi\vep^2}$ and
\be \label{eq:poisson1d:sec8}
\varphi^{1D}(z,t)=U_\vep^{1D}*|\phi|^2, \qquad
U^{1D}_\vep(z)=\frac{\sqrt{2}e^{z^2/2\vep^2}}{\sqrt{\pi}\,\vep}\int_{|z|}^\infty
e^{-s^2/2\vep^2}\,ds.
 \ee
\begin{remark} To describe a rotating dipolar BEC, we only need to include the angular momentum term (\ref{eq:rota:sec5}) in the dipolar GPE (\ref{eq:ngpe:sec8}). Therefore, dimensionless rotating dipolar GPEs in 3D and quasi-2D regime are straightforward.
\end{remark}
\subsection{Theory for ground states}
In this section, we report results for ground state of dipolar BECs. Denote unit sphere
\be
S=X\cap\left\{u\in L^2(\Bbb R^d)\big|\,\|u\|_{L^2(\Bbb R^d)}=1\right\},
\ee
where $X$ is the energy space associated with corresponding potential (\ref{eq:funcspace:sec2}).

\subsubsection{3D case}In 3D, the ground state of GPPS (\ref{eq:gpe:sec8})-(\ref{eq:poisson:sec8}) is the minimizer of energy $E_{3D}$ (\ref{eq:energydp:sec8}) over the nonconvex set $S$ \cite{BaoCaiWang}.
\begin{theorem} Assume $V(\bx)\ge0$ for $\bx\in{\Bbb R}^3$ and
$\lim\limits_{|\bx|\to\infty}V(\bx)=\infty$ (i.e., confining
potential) in GPPS (\ref{eq:gpe:sec8})-(\ref{eq:poisson:sec8}), then we have:

(i) If $\beta\ge0$ and $-\frac{1}{2}\beta\leq \lambda\leq \beta$,
there exists a ground state $\phi_g\in S$, and the positive ground
state $|\phi_g|$ is unique. Moreover, $\phi_g=e^{i\theta_0}|\phi_g|$
for some constant $\theta_0\in\Bbb R$.

(ii) If $\beta<0$, or $\beta\ge0$ and $\lambda<-\frac{1}{2}\beta$ or
$\lambda>\beta$, there exists no ground state, i.e.,
$\inf\limits_{\phi\in S}E_{\rm 3D}(\phi)=-\infty$.
\end{theorem}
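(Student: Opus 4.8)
# Proof proposal for the ground state existence/nonexistence theorem (dipolar BEC, 3D)

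The plan is to exploit the reformulation of the dipolar energy $E_{\rm 3D}$ in Fourier space and to mimic, as closely as possible, the variational machinery already developed for the scalar GPE in Section~\ref{sec:mathgpe}. The crucial algebraic input is the identity (\ref{eq:four11:sec8}): $\widehat{U_{\rm dip}}(\xi) = -1 + 3(\bn\cdot\xi)^2/|\xi|^2$, whose range is exactly $[-1,2]$. Using Plancherel, the interaction part of the energy can be written as
\be
\frac{\beta-\lambda}{2}\int_{\Bbb R^3}|\psi|^4\,d\bx + \frac{\lambda}{2}\int_{\Bbb R^3}\widehat{U_{\rm dip}}(\xi)\,\bigl|\widehat{|\psi|^2}(\xi)\bigr|^2\,d\xi
= \frac12\int_{\Bbb R^3}\Bigl(\beta + \lambda\,\widehat{U_{\rm dip}}(\xi)\Bigr)\bigl|\widehat{|\psi|^2}(\xi)\bigr|^2\,d\xi,
\ee
so the nonlocal term is controlled by the pointwise bounds on the multiplier $\beta + \lambda\,\widehat{U_{\rm dip}}(\xi)$. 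First I would record the elementary fact that $\beta + \lambda\,\widehat{U_{\rm dip}}(\xi)\ge 0$ for a.e.\ $\xi$ exactly when $-\tfrac12\beta\le\lambda\le\beta$ (testing the two extreme values $-1$ and $2$ of $\widehat{U_{\rm dip}}$), and conversely that outside this regime (together with $\beta\ge0$) the multiplier is negative on a set of positive measure.

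For the existence part (i), under $\beta\ge0$ and $-\tfrac12\beta\le\lambda\le\beta$, the Fourier representation shows the interaction energy is nonnegative, hence $E_{\rm 3D}(\psi)\ge E_{\rm kin}(\psi)+E_{\rm pot}(\psi)\ge 0$ on $S$, so the infimum is finite. Then I would run the direct method: take a minimizing sequence $\{\psi^n\}\subset S$, bounded in $X$ by coercivity, extract a weakly convergent subsequence $\psi^n\rightharpoonup\psi^\infty$, and invoke the compact embedding $X\hookrightarrow L^p(\Bbb R^3)$, $p\in[2,6)$, from Lemma~\ref{lem:comp:sec2} to pass to the limit in $\|\psi^n\|_2$ (so $\psi^\infty\in S$) and in the $L^4$ and dipolar terms; the kinetic and potential parts are handled by weak lower semicontinuity. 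For the nonlocal term one needs lower semicontinuity of $\psi\mapsto\int(\beta+\lambda\widehat{U_{\rm dip}})|\widehat{|\psi|^2}|^2$, which follows from writing it as a sum of a nonnegative quadratic form (weakly l.s.c.) plus convergent pieces, using $L^{4}\to L^{12/5}$ convergence of $|\psi^n|^2$ and boundedness of the Fourier multiplier, together with the convolution bound $U_{\rm dip}*\rho\in L^p$ for $\rho\in L^p$ quoted after (\ref{eq:kel:sec8}). For the phase/positivity statement I would apply the diamagnetic-type inequality (Lemma~\ref{lem:pos:sec2}) to reduce to nonnegative minimizers, noting that $|\phi|$ leaves $\int|\psi|^2$, the potential and all the density-dependent terms unchanged while not increasing the kinetic energy. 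Uniqueness of the positive ground state follows from strict convexity of $E_{\rm 3D}$ as a functional of the density $\rho=|\phi|^2$: the kinetic part is convex in $\rho$ by the Cauchy--Schwarz argument of Lemma~\ref{lem:convex:sec2}, the potential part is linear, and the full interaction part is $\tfrac12\int(\beta+\lambda\widehat{U_{\rm dip}})|\widehat\rho|^2$, a nonnegative (hence convex) quadratic form in $\rho$ under our sign condition; strict convexity then comes from the strict convexity of the kinetic term, exactly as in Theorem~\ref{thm:gs}.

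For the nonexistence part (ii), I would produce, in each of the three cases, a sequence in $S$ along which $E_{\rm 3D}\to-\infty$. If $\beta<0$, the scaling $\phi^\vep(\bx)=\vep^{-3/2}\phi(\bx/\vep)$ with a fixed Gaussian $\phi$ drives the $L^4$ term like $\vep^{-3}$ with a negative coefficient while the kinetic term grows only like $\vep^{-2}$ and the dipolar term also scales like $\vep^{-3}$ but can be made subdominant by choosing $\phi$ with $\widehat{|\phi|^2}$ supported where $\beta+\lambda\widehat{U_{\rm dip}}<0$ is avoided---more cleanly, one just notes the total interaction coefficient structure and picks $\phi$ so that $\int(\beta+\lambda\widehat{U_{\rm dip}})|\widehat{|\phi|^2}|^2<0$, which is possible precisely in the complementary sign region. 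For $\beta\ge0$ with $\lambda<-\tfrac12\beta$ or $\lambda>\beta$, the multiplier $\beta+\lambda\widehat{U_{\rm dip}}(\xi)$ is strictly negative on an open cone of directions $\xi$; concentrating mass so that $\widehat{|\psi^\vep|^2}$ lives in that cone (e.g.\ a suitably modulated and dilated profile) makes the interaction energy $\sim -c\,\vep^{-3}$, again beating the $O(\vep^{-2})$ kinetic term; the confining potential energy stays bounded along a translated/scaled family. The main obstacle I anticipate is the nonexistence construction: unlike the scalar case, the sign of the dipolar contribution depends on the \emph{shape} of $\widehat{|\psi^\vep|^2}$, so one must carefully engineer profiles whose mass-density Fourier transform concentrates in the negativity region of the multiplier while keeping the $L^2$ normalization and controlling the kinetic growth rate---making the competition between the $\vep^{-2}$ and $\vep^{-3}$ scales rigorous is the delicate point. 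Everything else is a routine adaptation of the arguments already in Section~\ref{sec:mathgpe}.
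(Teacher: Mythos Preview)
Your approach is correct and is essentially the one used in the cited reference \cite{BaoCaiWang} (the paper itself only states the theorem): the range $[-1,2]$ of $\widehat{U_{\rm dip}}$ converts the sign condition on $\beta+\lambda\widehat{U_{\rm dip}}$ into exactly the interval $-\tfrac12\beta\le\lambda\le\beta$, after which the direct method, the diamagnetic inequality, and convexity-in-$\rho$ from Section~\ref{sec:mathgpe} go through verbatim. One remark on the point you flagged as delicate, the nonexistence constructions. For $\beta<0$ there is no need to engineer the Fourier support: simply take $\phi$ \emph{radially symmetric}, so that the dipolar energy vanishes identically (the spherical average of $\widehat{U_{\rm dip}}$ is $-1+3\cdot\tfrac13=0$) and you are back to Theorem~\ref{thm:gs}(i$'$). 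For $\beta\ge0$ with $\lambda>\beta$ (resp.\ $\lambda<-\tfrac12\beta$), use an anisotropic Gaussian elongated along $\bn$ (resp.\ flattened perpendicular to $\bn$); the dipolar energy of such Gaussians is computed in closed form in the paper (see the formula in Example~\ref{exm:1:sec8}), and one checks directly that the total interaction $\tfrac12\int(\beta+\lambda\widehat{U_{\rm dip}})|\widehat{|\phi|^2}|^2$ is strictly negative for sufficiently extreme aspect ratio. Since $\widehat{U_{\rm dip}}$ is homogeneous of degree $0$, the $L^2$-preserving dilation $\phi^\vep=\vep^{-3/2}\phi(\cdot/\vep)$ then scales the interaction exactly as $\vep^{-3}$ while kinetic energy scales as $\vep^{-2}$ and the potential energy stays bounded (it converges to $V(0)$), so $E_{\rm 3D}(\phi^\vep)\to-\infty$. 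Finally, note that the reformulated energy (\ref{eq:energydp:sec8}), together with the elementary Fourier bound $0\le\int|\partial_{\bn}\nabla\varphi|^2\le\int|\phi|^4$, yields the same nonnegativity in (i) without writing Plancherel explicitly; this is the presentation the paper favors.
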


By splitting the total energy $E_{\rm 3D}(\cdot)$ in (\ref{eq:energydp:sec8}) into
kinetic, potential, interaction and dipolar energies, i.e. \be
E_{3D}(\phi)=E_{\rm kin}(\phi)+E_{\rm pot}(\phi)+E_{\rm int}(\phi)+E_{\rm
dip}(\phi),\ee where {\small\begin{align}  E_{\rm
kin}(\phi)=&\frac{1}{2}\int_{{\Bbb R}^3} |\nabla\phi(\bx) |^2 d
\bx,\   E_{\rm pot}(\phi) = \int_{{\Bbb R}^3} V(\bx)|\phi(\bx)|^2 d
\bx, \ E_{\rm int}(\phi) =
\frac{\beta}{2}\int_{{\Bbb R}^3} |\phi(\bx) |^4 d \bx,\nn \\
E_{\rm dip} (\phi)=&\frac{\lambda}{2}\int_{{\Bbb R}^3} \left(U_{\rm
dip}\ast|\phi|^2\right) |\phi(\bx)|^2 d \bx=
\frac{\lambda}{2}\int_{{\Bbb R}^3}
|\phi(\bx)|^2\left[-|\phi(\bx)|^2-3
 \p_{\bn\bn} \varphi\right]d \bx\label{eq:dipp03:sec8}\\=&\frac{\lambda}{2}\int_{{\Bbb R}^3} \left[-|\phi(\bx)|^4+3
(\nabla^2\varphi)(\p_{\bn\bn}\varphi)\right]d
\bx=\frac{\lambda}{2}\int_{{\Bbb R}^3} \left[-|\phi(\bx)|^4+3
\left|\p_{\bf n}\nabla \varphi\right|^2\right]d \bx, \nn
\end{align}}
with $\varphi=\frac{1}{4\pi|\bx|}*|\phi|^2$, we have the following
Viral identity \cite{BaoCaiWang}:

\begin{proposition}
Suppose $V(\lambda\bx)=\lambda^2V(\bx)$ for all $\lambda\in\Bbb R$ and $\phi_g$ is the ground state of a dipolar BEC, i.e., the minimizer of energy
 (\ref{eq:energydp:sec8})
under the normalization constraint (\ref{eq:norm3d:sec8}), then we have \be \label{eq:virial:sec8}
2E_{\rm kin}(\phi_e)- 2E_{\rm pot}(\phi_e)+3E_{\rm
int}(\phi_e)+3E_{\rm dip}(\phi_e)=0.\ee
\end{proposition}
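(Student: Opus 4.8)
The plan is to derive the virial (Pohozaev-type) identity by the standard scaling argument already used in the proof of the earlier Virial identity theorem for non-dipolar BEC. Let $\phi_g \in S$ be the ground state, so it minimizes $E_{\rm 3D}$ over the unit sphere, and consider the rescaled family $\phi^\vep(\bx) = \vep^{-3/2}\phi_g(\bx/\vep) \in S$ for $\vep > 0$, which preserves the $L^2$-normalization (\ref{eq:norm3d:sec8}). Since $\phi_g$ is a minimizer and $\phi^1 = \phi_g$, the function $\vep \mapsto E_{\rm 3D}(\phi^\vep)$ has a critical point at $\vep = 1$, hence $\frac{d}{d\vep}E_{\rm 3D}(\phi^\vep)\big|_{\vep=1} = 0$. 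The main task is then to compute how each of the four pieces $E_{\rm kin}, E_{\rm pot}, E_{\rm int}, E_{\rm dip}$ scales under $\phi_g \mapsto \phi^\vep$.

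First I would work out the scalings of the elementary terms: $E_{\rm kin}(\phi^\vep) = \vep^{-2} E_{\rm kin}(\phi_g)$ by the usual change of variables; $E_{\rm pot}(\phi^\vep) = \vep^{2} E_{\rm pot}(\phi_g)$ using the homogeneity assumption $V(\lambda\bx) = \lambda^2 V(\bx)$ (so $V$ is homogeneous of degree $2$; note this is the natural degree for the harmonic trap); and $E_{\rm int}(\phi^\vep) = \vep^{-3} E_{\rm int}(\phi_g)$ since $\int |\phi^\vep|^4 = \vep^{-3}\int|\phi_g|^4$. Differentiating at $\vep = 1$ gives contributions $-2E_{\rm kin}$, $+2E_{\rm pot}$, $-3E_{\rm int}$ respectively; matching the signs in (\ref{eq:virial:sec8}) (which reads $2E_{\rm kin} - 2E_{\rm pot} + 3E_{\rm int} + 3E_{\rm dip} = 0$) this is consistent once we show $E_{\rm dip}$ scales like $\vep^{-3}$ as well, contributing $-3E_{\rm dip}$.

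The key step — and the one deserving the most care — is the scaling of the dipolar energy $E_{\rm dip}(\phi) = \frac{\lambda}{2}\int_{\Bbb R^3}\left(U_{\rm dip}\ast|\phi|^2\right)|\phi(\bx)|^2\, d\bx$. The cleanest route is to use the Fourier representation: by Plancherel and (\ref{eq:four11:sec8}), $E_{\rm dip}(\phi) = \frac{\lambda}{2}\int_{\Bbb R^3}\widehat{(U_{\rm dip})}(\xi)\,\big|\widehat{|\phi|^2}(\xi)\big|^2\,d\xi$, and since $\widehat{(U_{\rm dip})}(\xi) = -1 + 3(\bn\cdot\xi)^2/|\xi|^2$ is homogeneous of degree $0$ in $\xi$, the kernel $U_{\rm dip}$ is homogeneous of degree $-3$ in $\bx$; hence the convolution $U_{\rm dip}\ast$ acting on the density (which rescales as $|\phi^\vep|^2(\bx) = \vep^{-3}|\phi_g|^2(\bx/\vep)$) produces $E_{\rm dip}(\phi^\vep) = \vep^{-3}E_{\rm dip}(\phi_g)$. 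One must justify differentiating under the integral sign in $\vep$, which follows from the regularity/decay of $\phi_g$ (it is smooth with exponential decay by the properties of ground states established earlier) together with the $L^p$-boundedness of $U_{\rm dip}\ast(\cdot)$ for $p \in (1,\infty)$; alternatively one can work with the equivalent expression $E_{\rm dip}(\phi) = \frac{\lambda}{2}\int_{\Bbb R^3}\big[-|\phi|^4 + 3|\p_{\bf n}\nabla\varphi|^2\big]\,d\bx$ with $\varphi = \frac{1}{4\pi|\bx|}\ast|\phi|^2$, noting $\varphi^\vep(\bx) = \vep^{-1}\varphi_g(\bx/\vep)$ so that $\nabla\varphi^\vep$ rescales by $\vep^{-2}$ and its $L^2$-square integral scales by $\vep^{-1}$, giving again $\vep^{-3}$ overall for $E_{\rm dip}$ (and $\vep^{-3}$ for $\int|\phi|^4$), confirming consistency.

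Assembling the pieces, $0 = \frac{d}{d\vep}E_{\rm 3D}(\phi^\vep)\big|_{\vep=1} = -2E_{\rm kin}(\phi_g) + 2E_{\rm pot}(\phi_g) - 3E_{\rm int}(\phi_g) - 3E_{\rm dip}(\phi_g)$, which after multiplying by $-1$ is exactly (\ref{eq:virial:sec8}). The statement of the proposition writes $\phi_e$ in the displayed identity while the hypothesis concerns $\phi_g$; I would simply note the identity holds for any stationary state to which the scaling argument applies (in particular the ground state), since the only property used is that $\vep = 1$ is a critical point of $\vep \mapsto E_{\rm 3D}(\phi^\vep)$, which holds whenever $\phi$ satisfies the Euler--Lagrange equation (\ref{eq:gpe:sec8})--(\ref{eq:poisson:sec8}) (multiply the equation by $\bx\cdot\nabla\bar\phi$, take real parts, integrate by parts — an equivalent derivation that avoids the minimization and makes clear the result applies to excited states as well). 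The main obstacle is purely the bookkeeping of the dipolar term's homogeneity and the justification of differentiation under the integral; everything else is a routine rescaling computation.
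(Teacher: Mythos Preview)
Your proposal is correct and follows exactly the approach the paper uses for the analogous non-dipolar Virial identity (Theorem in Section~2): rescale by $\phi^\vep(\bx)=\vep^{-3/2}\phi_g(\bx/\vep)\in S$ and set $\frac{d}{d\vep}E_{\rm 3D}(\phi^\vep)\big|_{\vep=1}=0$. The paper does not spell out the proof for the dipolar case (it cites \cite{BaoCaiWang}), but your identification that $E_{\rm dip}$ scales as $\vep^{-3}$ via the degree-zero homogeneity of $\widehat{U_{\rm dip}}$ (or equivalently the degree $-3$ homogeneity of $U_{\rm dip}$) is exactly the needed new ingredient, and your remark that the identity extends to any stationary state $\phi_e$ via the Euler--Lagrange equation correctly resolves the $\phi_g$/$\phi_e$ notational mismatch.
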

\subsubsection{Quasi-2D case I}
Associated to the quasi-2D equation I (\ref{eq:gpe2d:sec8})-(\ref{eq:u2d1:sec8}),
the energy is \be\label{eq:ener2d:sec8} E_{\rm 2D}(\phi)=\int_{\Bbb
R^2}\left[\frac12|\nabla\phi|^2+V_2(\bx_{\perp})|\phi|^2+
\beta_{2D}|\phi|^4-\frac{3\lambda}{4}|\phi|^2
\widetilde{\varphi^{2D}}\right]\,d\bx_{\perp}, \quad \phi\in X,\ee  where $\beta_{2D}=\frac{\beta-\lambda+3\lambda
n_3^2}{\sqrt{2\pi}\,\vep}$ and
\be\widetilde{\varphi^{2D}}=\left(\p_{\bn_\perp\bn_\perp}-n_3^2\nabla^2\right)\varphi^{2D},
\qquad \varphi^{2D}=U_\vep^{2D}*|\phi|^2.\ee

The ground state $\phi_g\in S$ of  (\ref{eq:gpe2d:sec8}) is  the minimizer
of the nonconvex minimization problem \cite{BaoBenCai}: \be \mbox{Find } \phi_g\in
S,\quad\mbox{such that }E_{\rm 2D}(\phi_g)=\min\limits_{\phi\in
S}E_{\rm 2D}(\phi). \ee

\begin{theorem}\label{thm:thm1:sec8}
Assume $0\leq V_2(\bx_{\perp})$  and
 $\lim\limits_{|\bx_{\perp}|\to\infty}V_2(\bx_{\perp})=\infty$, then
we have

(i) There exists a ground state $\phi_g\in S$ of the system
(\ref{eq:gpe2d:sec8})-(\ref{eq:u2d1:sec8}) if one of the following conditions holds
\begin{quote}
(A$1$) $\lambda\ge0$ and $\beta-\lambda> -\sqrt{2\pi} C_b\,\vep$;\\
(A$2$) $\lambda<0$ and
$\beta+\frac{1}{2}(1+3|2n_3^2-1|)\lambda>-\sqrt{2\pi} C_b\,\vep$,
\end{quote}
where $C_b$ is given in (\ref{eq:bestcons:2d}).

(ii) The positive ground state $|\phi_g|$ is unique under one of the
following conditions:

\begin{quote}
(A1$^\prime$) $\lambda\ge0$ and $\beta-\lambda\ge 0$;\\
(A2$^\prime$) $\lambda<0$ and
$\beta+\frac12(1+3|2n_3^2-1|)\lambda\ge0$.
\end{quote}
 Moreover, any ground state is of the form $\phi_g=e^{i\theta_0}|\phi_g|$
for some constant $\theta_0\in\Bbb R$.

(iii) If $\beta+\frac12\lambda(1-3n_3^2)<-\sqrt{2\pi} C_b\,\vep$,
there exists no ground state of Eq.  (\ref{eq:gpe2d:sec8}).
\end{theorem}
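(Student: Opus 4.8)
The plan is to follow the scheme used for Theorem~\ref{thm:gs} in the non-dipolar case $d=2$, treating the dipolar contribution as a Fourier-multiplier perturbation of the quartic term. First I would rewrite the interaction-plus-dipolar part of $E_{\rm 2D}$ (\ref{eq:ener2d:sec8}) via Plancherel's identity: since $\int_{\Bbb R^2}|\phi|^4\,d\bx_\perp=\int_{\Bbb R^2}\big|\widehat{|\phi|^2}(\xi)\big|^2\,d\xi$ and $\widetilde{\varphi^{2D}}=(\p_{\bn_\perp\bn_\perp}-n_3^2\nabla^2)(U^{2D}_\vep*|\phi|^2)$ is a convolution operator acting on $|\phi|^2$, the quartic and dipolar terms combine into $\int_{\Bbb R^2}\big(\beta_{2D}+\lambda\,m_\vep(\xi)\big)\big|\widehat{|\phi|^2}(\xi)\big|^2\,d\xi$, where $m_\vep(\xi)=\tfrac34\big((\bn_\perp\cdot\xi)^2-n_3^2|\xi|^2\big)\widehat{U^{2D}_\vep}(\xi)$. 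The essential preliminary step is to show that $\widehat{U^{2D}_\vep}$ is nonnegative and satisfies $\widehat{U^{2D}_\vep}(\xi)\le C/|\xi|^2$ for $|\xi|$ large and $\vep$ fixed — the kernel $U^{2D}_\vep$ is a superposition of the positive-definite kernels $(|\bx_\perp|^2+\vep^2s^2)^{-1/2}$ and has only a logarithmic singularity at the origin — so that $m_\vep$ is a bounded real even multiplier; then, using $0\le(\bn_\perp\cdot\xi)^2\le(1-n_3^2)|\xi|^2$, one computes its essential range and finds that the normalized endpoints of the range of $\beta_{2D}+\lambda m_\vep$ are exactly what reproduces the constants $\beta-\lambda+3\lambda n_3^2$ and $\beta+\tfrac12\lambda(1-3n_3^2)$ appearing in (A1)--(A2), (A1$^\prime$)--(A2$^\prime$) and in~(iii).

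For existence (i), once the combined multiplier is shown (after the $\tfrac{1}{\sqrt{2\pi}\,\vep}$ normalization carried by $\beta_{2D}$) to stay strictly above the threshold dictated by $C_b$ under hypothesis (A1) or (A2), the interaction-plus-dipolar energy is bounded below by $-(1-\delta_0)\,\|\nabla\phi\|_2^2\,\|\phi\|_2^2$ for some $\delta_0>0$ via the two-dimensional Gagliardo--Nirenberg inequality associated with the best constant $C_b$ in (\ref{eq:bestcons:2d}), exactly as in the proof of Theorem~\ref{thm:gs}(ii). Hence $E_{\rm 2D}$ is coercive on $S$; a minimizing sequence is bounded in $X(\Bbb R^2)=H^1(\Bbb R^2)\cap L_{V_2}(\Bbb R^2)$, Lemma~\ref{lem:comp:sec2} gives strong $L^4$ convergence of a subsequence so the quartic and dipolar terms pass to the limit, and the kinetic and potential terms are weakly lower semicontinuous; thus the weak limit lies in $S$ and realizes the infimum. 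Since both the quartic and the dipolar terms depend only on $|\phi|^2$, Lemma~\ref{lem:pos:sec2} applies verbatim and yields $E_{\rm 2D}(\phi)\ge E_{\rm 2D}(|\phi|)$ with equality iff $\phi=e^{i\theta_0}|\phi|$, so a ground state may be chosen nonnegative.

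For uniqueness (ii), under (A1$^\prime$) or (A2$^\prime$) the combined multiplier $\beta_{2D}+\lambda m_\vep(\xi)$ is everywhere nonnegative, so $\rho\mapsto\int(\beta_{2D}+\lambda m_\vep)\,|\widehat{\rho}|^2\,d\xi$ is a nonnegative quadratic form, hence convex, in the density $\rho=|\phi|^2$; together with linearity of $\rho\mapsto\int V_2\,\rho$ and convexity of the kinetic energy in $\rho$ (the Cauchy--Schwarz argument in the proof of Lemma~\ref{lem:convex:sec2}, valid after an approximation step), this makes $\rho\mapsto E_{\rm 2D}(\sqrt\rho)$ strictly convex on $\{\sqrt\rho\in S\}$, which forces uniqueness of the positive minimizer (existence being covered by (i) since (A1$^\prime$)/(A2$^\prime$) imply (A1)/(A2)), and $\phi_g=e^{i\theta_0}|\phi_g|$ follows once more from Lemma~\ref{lem:pos:sec2}. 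For nonexistence (iii), when $\beta+\tfrac12\lambda(1-3n_3^2)<-\sqrt{2\pi}\,C_b\,\vep$ the combined multiplier drops below the $C_b$ threshold along some ray in frequency space; I would take the radial optimizer $\phi_b$ of (\ref{eq:bestcons:2d}), fix the dipole orientation so that the mass of $\widehat{|\phi_b|^2}$ concentrates on that ray after rescaling, set $\phi_b^\vep(\bx_\perp)=\vep^{-1}\phi_b(\bx_\perp/\vep)\in S$ as in the nonexistence part of Theorem~\ref{thm:gs}, and verify $E_{\rm 2D}(\phi_b^\vep)=c\,\vep^{-2}+O(1)\to-\infty$ with $c<0$. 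The main obstacle throughout is that first step: pinning down the sharp decay and the exact essential range of the $\vep$-dependent regularized kernel $\widehat{U^{2D}_\vep}$ so that the boundary constants in (A1)--(A2) and (iii) come out precisely, since everything downstream is a routine adaptation of the non-dipolar arguments.
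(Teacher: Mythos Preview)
The paper itself does not include a proof of this theorem; it is stated with an implicit reference to \cite{BaoBenCai}. Your Fourier-multiplier strategy---rewriting the quartic-plus-dipolar energy as $\int M(\xi)\,\big|\widehat{|\phi|^2}(\xi)\big|^2\,d\xi$ with $M(\xi)=\tfrac{\beta_{2D}}{2}+\tfrac{3\lambda}{4}\big[(\bn_\perp\cdot\xi)^2-n_3^2|\xi|^2\big]\widehat{U^{2D}_\vep}(\xi)$ and then running the two-dimensional arguments of Theorem~\ref{thm:gs} and Lemmas~\ref{lem:comp:sec2}--\ref{lem:convex:sec2}---is the natural route and recovers the constants correctly. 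The key computation you flag as an obstacle is in fact straightforward once one uses the formula $\widehat{U^{2D}_\vep}(\xi)=\tfrac{1}{\pi}\int_{\Bbb R}\frac{e^{-\vep^2s^2/2}}{|\xi|^2+s^2}\,ds$ (stated at the end of the section): a short calculation shows $|\xi|^2\widehat{U^{2D}_\vep}(\xi)$ increases monotonically from $0$ to the limit $\tfrac{2}{\sqrt{2\pi}\,\vep}$, whence $\inf_\xi M(\xi)=\tfrac{1}{2\sqrt{2\pi}\,\vep}(\beta-\lambda)$ when $\lambda\ge0$, matching (A1), and the analogous bound for $\lambda<0$ yields (A2).

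One point in part~(iii) needs correcting. You cannot make $\widehat{|\phi_b|^2}$ ``concentrate on a ray'': $\phi_b$ is radial, hence so is $\widehat{|\phi_b|^2}$. What actually happens under $\phi_b^\delta(\bx_\perp)=\delta^{-1}\phi_b(\bx_\perp/\delta)$ is simpler and gives exactly the stated condition. As $\delta\to0$ the dipolar term sees the high-frequency limit $|\xi|^2\widehat{U^{2D}_\vep}(\xi)\to\tfrac{2}{\sqrt{2\pi}\,\vep}$, and radiality of $|\widehat{|\phi_b|^2}|^2$ forces an angular \emph{average} of $(\bn_\perp\cdot\omega)^2-n_3^2$ over the unit circle, namely $\tfrac12(1-n_3^2)-n_3^2=\tfrac12(1-3n_3^2)$. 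The effective leading-order coupling is then $\tfrac{1}{2\sqrt{2\pi}\,\vep}\big[\beta+\tfrac12\lambda(1-3n_3^2)\big]$, and the same scaling computation as in Theorem~\ref{thm:gs} gives $E_{\rm 2D}(\phi_b^\delta)\to-\infty$ precisely under~(iii). So no anisotropic trial function is needed: condition~(iii) is the angular-average threshold, not the directional minimum, which also explains the gap between (iii) and the failure of (A1)/(A2).
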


\subsubsection{Quasi-2D case II}
Associated to the quasi-2D equation II (\ref{eq:gpe2d2:sec8}), the energy is
\be\label{eq:ener2d2:sec8} \tilde{E}_{\rm 2D}(\phi)=\int_{\Bbb
R^2}\left[\frac12|\nabla\phi|^2+
V_2(\bx_{\perp})|\phi|^2+\beta_{2D}|\phi|^4
-\frac{3\lambda}{4}|\phi|^2 \varphi\right]\,d\bx, \; \phi\in
X,\ee
 where
\be\label{eq:vphu2d22:sec8}
\varphi(\bx_{\perp})=\left(\p_{\bn_\perp\bn_\perp}-n_3^2\nabla^2\right)((-\nabla^2)^{-1/2}|\phi|^2).
\ee The ground state $\phi_g\in S$ of the equation (\ref{eq:gpe2d2:sec8})
is defined as the minimizer of the nonconvex minimization problem:
\be \mbox{Find } \phi_g\in S,\quad\mbox{such that
}\tilde{E}_{\rm 2D}(\phi_g)=\min\limits_{\phi\in
S}\tilde{E}_{\rm 2D}(\phi). \ee

For the above ground state, we have  the following results \cite{BaoBenCai}.

\begin{theorem}\label{thm:thm1':sec8} Assume $0\leq V_2(\bx_{\perp})$
 and
$\lim\limits_{|\bx_{\perp}|\to\infty}V_2(\bx_{\perp})=\infty$, then we have

(i) There exists a ground state $\phi_g\in S$ of the equation
(\ref{eq:gpe2d2:sec8}) if one of the following conditions holds
\begin{quote}
(B1) $\lambda=0$ and $\beta> -\sqrt{2\pi}C_b\,\vep$;\\
(B2)  $\lambda>0$, $n_3=0$ and $\beta-\lambda> - \sqrt{2\pi}C_b\,\vep$;\\
(B3)  $\lambda<0$, $n_3^2\ge\frac12$ and
$\beta-(1-3n_3^2)\lambda>-\sqrt{2\pi} C_b\,\vep$.
\end{quote}

(ii)  The positive ground state $|\phi_g|$ is unique under one of
the following conditions
\begin{quote}
(B1$^\prime$) $\lambda=0$ and $\beta\ge 0$;\\
(B2$^\prime$) $\lambda>0$, $n_3=0$ and $\beta\ge\lambda$;\\
(B3$^\prime$) $\lambda<0$, $n_3^2\ge\frac12$ and
$\beta-(1-3n_3^2)\lambda\ge0$.
\end{quote}
Moreover, any ground state $\phi_g=e^{i\theta_0}|\phi_g|$ for some
constant $\theta_0\in\Bbb R$.

(iii) There exists no ground state of the equation (\ref{eq:gpe2d2:sec8}) if
one of the following conditions holds
\begin{quote}
(B1$^{\prime\prime}$) $\lambda>0$ and $n_3\neq0$;\\
(B2$^{\prime\prime}$) $\lambda<0$ and  $n_3^2<\frac12$;\\
(B3$^{\prime\prime}$) $\lambda=0$ and $\beta<- \sqrt{2\pi}C_b\,\vep$.
\end{quote}
\end{theorem}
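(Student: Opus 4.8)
The plan is to run the direct method of Theorem~\ref{thm:gs}, with the one new ingredient being a Fourier-space treatment of the nonlocal term. First I would diagonalize the dipolar operator: since $(-\nabla^2)^{-1/2}$ has symbol $|\xi|^{-1}$ and $\p_{\bn_\perp\bn_\perp}-n_3^2\nabla^2$ has symbol $n_3^2|\xi|^2-(\bn_\perp\cdot\xi)^2$ (with $\xi\in\Bbb R^2$), the function $\varphi$ of (\ref{eq:vphu2d22:sec8}) satisfies $\widehat\varphi(\xi)=|\xi|\,m(\xi)\,\widehat{|\phi|^2}(\xi)$ where $m(\xi)=n_3^2-(\bn_\perp\cdot\xi)^2/|\xi|^2$ is $0$-homogeneous. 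Using $|\bn_\perp|^2=1-n_3^2$ one has $(\bn_\perp\cdot\xi)^2/|\xi|^2\in[0,1-n_3^2]$, hence $m(\xi)\in[\,2n_3^2-1,\ n_3^2\,]$. By Plancherel the interaction-plus-dipolar part of $\tilde E_{\rm 2D}$ in (\ref{eq:ener2d2:sec8}) equals $\int_{\Bbb R^2}\big(\beta_{2D}-\tfrac{3\lambda}{4}|\xi|\,m(\xi)\big)\,|\widehat{|\phi|^2}(\xi)|^2\,d\xi$, so the whole analysis is governed by the sign of the $|\xi|$-coefficient $-\tfrac{3\lambda}{4}m(\xi)$.

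For the existence claim~(i) I would verify case by case that in regimes (B1)--(B3) this coefficient is $\ge0$ for every $\xi$: automatic when $\lambda=0$; for $\lambda>0$ it requires $\max m=n_3^2\le0$, i.e.\ $n_3=0$; for $\lambda<0$ it requires $\min m=2n_3^2-1\ge0$, i.e.\ $n_3^2\ge\tfrac12$. In each of the three cases the dipolar part of the energy is then nonnegative, and a one-line evaluation of $\beta_{2D}=\dfrac{\beta-\lambda+3\lambda n_3^2}{\sqrt{2\pi}\,\vep}$ shows that the hypothesis on $\beta$ is precisely $\beta_{2D}>-C_b$. Hence $\tilde E_{\rm 2D}(\phi)\ge \tfrac12\|\nabla\phi\|_2^2+\int V_2|\phi|^2+\beta_{2D}\|\phi\|_4^4$, which is bounded below and $H^1$-coercive by the Gagliardo--Nirenberg/best-constant estimate used in the proof of Theorem~\ref{thm:gs}(ii). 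I would then copy that proof: a minimizing sequence is bounded in $X$, has a weak limit $\phi_g$, the compact embedding $X\hookrightarrow L^4(\Bbb R^2)$ from Lemma~\ref{lem:comp:sec2} gives $|\phi_n|^2\to|\phi_g|^2$ strongly in $L^1\cap L^2$ so that $\widehat{|\phi_n|^2}\to\widehat{|\phi_g|^2}$ pointwise, and after separating the (norm-continuous) term $\beta_{2D}\|\phi_n\|_4^4$ the remaining nonnegative dipolar quadratic form is weakly lower semicontinuous by Fatou; together with lower semicontinuity of $E_{\rm kin}$ and $E_{\rm pot}$ this makes $\phi_g$ a minimizer in $S$, and Lemma~\ref{lem:pos:sec2} gives $\phi_g=e^{i\theta_0}|\phi_g|$.

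For uniqueness~(ii) I would observe that in regimes (B1$'$)--(B3$'$) the stronger inequality $\beta_{2D}\ge0$ holds, so the full symbol $\beta_{2D}-\tfrac{3\lambda}{4}|\xi|\,m(\xi)$ is $\ge0$ for \emph{all} $\xi$ and the interaction-plus-dipolar energy is a nonnegative quadratic functional of the density $\rho=|\phi|^2$, hence convex in $\rho$. Combined with the strict convexity of $\rho\mapsto E_{\rm kin}(\sqrt\rho)$ and the linearity of $E_{\rm pot}$ in $\rho$ (Lemma~\ref{lem:convex:sec2}), $\tilde E_{\rm 2D}$ becomes strictly convex in $\rho$ on the convex set $\{\rho\ge0:\int\rho=1,\ \sqrt\rho\in X\}$, which forces the minimizing density $|\phi_g|^2$ to be unique.

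For nonexistence~(iii): when $\lambda=0$ and $\beta<-\sqrt{2\pi}C_b\,\vep$ (case B3$''$), $\beta_{2D}<-C_b$ and the equation is the plain 2D GPE, so I would reuse the rescaled near-optimizer $\phi_b^\vep$ of $C_b$ from the proof of Theorem~\ref{thm:gs}. In (B1$''$) ($\lambda>0$, $n_3\ne0$) and (B2$''$) ($\lambda<0$, $n_3^2<\tfrac12$) there is a direction $\xi_\ast$ with $-\tfrac{3\lambda}{4}m(\xi_\ast)<0$ (take $\xi_\ast\perp\bn_\perp$ in the first case, $(\bn_\perp\cdot\xi_\ast)^2$ maximal in the second); I would fix a profile $\phi$ whose density $|\phi|^2$ is strongly elongated along $\xi_\ast^\perp$, so that $\widehat{|\phi|^2}$ concentrates angularly around the line $\Bbb R\,\xi_\ast$ and $\int|\xi|\,m(\xi)|\widehat{|\phi|^2}(\xi)|^2\,d\xi$ acquires the sign of $m(\xi_\ast)$, making $-\tfrac{3\lambda}{4}\int|\xi|\,m(\xi)|\widehat{|\phi|^2}|^2<0$; then the dilation $\phi^\vep(\bx)=\vep^{-1}\phi(\bx/\vep)\in S$ scales the dipolar energy like $\vep^{-3}$, which dominates the $\vep^{-2}$ kinetic and contact energies and the $O(1)$ trapping energy, so $\tilde E_{\rm 2D}(\phi^\vep)\to-\infty$ as $\vep\to0^+$. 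I expect this last construction to be the main obstacle, and the real point of departure from Theorem~\ref{thm:gs}: because $(\p_{\bn_\perp\bn_\perp}-n_3^2\nabla^2)(-\nabla^2)^{-1/2}$ has order $1$ rather than $0$, one must produce a \emph{nonnegative} density whose $\dot H^{1/2}$-mass is skewed toward the bad directions --- a genuine restriction since $|\phi|^2\ge0$ constrains $\widehat{|\phi|^2}$ --- and, symmetrically, one must control this order-$1$ term when checking weak lower semicontinuity of the dipolar functional on the existence side (where the Fatou argument sketched above, built on pointwise convergence of $\widehat{|\phi_n|^2}$, is the clean way through).
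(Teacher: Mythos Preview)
The paper does not prove this theorem in the text; it states the result and defers to \cite{BaoBenCai}. Your approach --- writing the nonlocal operator as a Fourier multiplier with symbol $|\xi|\,m(\xi)$, $m(\xi)=n_3^2-(\bn_\perp\cdot\xi)^2/|\xi|^2\in[2n_3^2-1,n_3^2]$, and then rerunning the direct method of Theorem~\ref{thm:gs} --- is the natural route and is essentially what the cited reference does. Your case analysis of when $-\lambda m(\xi)\ge0$ for all $\xi$ is correct, the identification of each hypothesis on $\beta$ with the Gagliardo--Nirenberg threshold for $\beta_{2D}$ is correct, and the Fatou argument for lower semicontinuity of the nonnegative dipolar form (via $L^2$-strong convergence of $|\phi_n|^2$ and hence pointwise convergence of $\widehat{|\phi_n|^2}$) is exactly the right way to handle the order-$1$ operator. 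Uniqueness via convexity of a nonnegative quadratic form in $\rho$ is also standard.

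On nonexistence you correctly isolate the only genuinely new step, and your two-stage plan (elongate to fix a sign, then shrink isotropically) does work, but the elongation deserves one more line. If $\bn_\perp$ is along $e_1$ and you set $\rho(x_1,x_2)=\sigma^{-1}g(x_1/\sigma)g(x_2)$, then $\widehat\rho(\xi)=\widehat g(\sigma\xi_1)\widehat g(\xi_2)$ concentrates on the $\xi_2$-axis and a direct computation gives
\[
\int_{\Bbb R^2}|\xi|\,m(\xi)\,|\widehat\rho(\xi)|^2\,d\xi
= n_3^2\,\sigma^{-1}C + O(\sigma^{-2}),\qquad C>0,
\]
so the sign is that of $m(e_2)=n_3^2>0$ for large $\sigma$; this handles (B1$''$). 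Elongating in the orthogonal direction gives leading term $(2n_3^2-1)\sigma^{-1}C<0$, handling (B2$''$). The integral tends to zero under elongation, but with a definite sign, and that is all you need before the isotropic $\vep^{-3}$ blow-up takes over.
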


\subsubsection{Quasi-1D case}
Associated to the quasi-1D equation (\ref{eq:gpe1d:sec8}), the energy is
\be\label{eq:ener1d:sec8} E_{\rm 1D}(\phi)=\int_{\Bbb
R}\left[\frac12|\p_{z}\phi|^2+V_1(z)|\phi|^2+\frac12\beta_{1D}|\phi|^4
+\frac{3\lambda(1-3n_3^2)}{16\sqrt{2\pi}\,\vep}|\phi|^2
\varphi\right]\,dz, \ee
 where $\beta_{1D}=\frac{\beta+\frac12\lambda(1-3n_3^2)}{2\pi\vep^2}$ and
\be\label{eq:kernel1d:sec8} \varphi(z)=\p_{zz}(U_\vep^{1D}*|\phi|^2),\quad
U_\vep^{1D}(z)=
\frac{2e^{\frac{z^2}{2\vep^2}}}{\sqrt{\pi}}\int_{|z|}^\infty
e^{-\frac{s^2}{2\vep^2}}\,ds. \ee Again, the ground state $\phi_g\in
S$ of the equation (\ref{eq:gpe1d:sec8}) is defined as the minimizer of
the nonconvex minimization problem: \be \mbox{Find } \phi_g\in
S,\quad\mbox{such that }E_{\rm 1D}(\phi_g)=\min\limits_{\phi\in S}\;
E_{\rm 1D}(\phi). \ee

For the above ground state, we have  the following results \cite{BaoBenCai}.

\begin{theorem}\label{thm:thm1'':sec8}(Existence and uniqueness of ground state) Assume
$0\leq V_1(z)$ and
$\lim_{|z|\to\infty}V_1(z)=\infty$, for any parameter
$\beta$, $\lambda$ and $\vep$, there exists a ground state
$\phi_g\in S$ of the  quasi-1D equation
(\ref{eq:gpe1d:sec8})-(\ref{eq:poisson1d:sec8}), and the positive ground state
$|\phi_g|$ is unique under one of the following conditions:

(C1)  $\lambda(1-3n_3^2)\ge0$ and $\beta-(1-3n_3^2)\lambda\ge0$;

 (C2)  $\lambda(1-3n_3^2)<0$ and  $\beta+\frac{\lambda}{2}(1-3n_3^2)\ge0$.

\noindent Moreover, $\phi_g=e^{i\theta_0}|\phi_g|$ for some constant
$\theta_0\in\Bbb R$.
\end{theorem}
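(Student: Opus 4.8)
The plan is to reduce Theorem \ref{thm:thm1'':sec8} to the abstract existence/uniqueness machinery already established in Section \ref{sec:mathgpe} (Lemmas \ref{lem:comp:sec2}, \ref{lem:pos:sec2}, \ref{lem:convex:sec2} and Theorem \ref{thm:gs}), after extracting the key structural facts about the quasi-1D dipolar kernel $U_\vep^{1D}$ and the associated interaction term. First I would record the two elementary properties of the convolution term $E_{\rm conv}(\phi):=\frac{3\lambda(1-3n_3^2)}{16\sqrt{2\pi}\,\vep}\int_{\Bbb R}|\phi|^2\,\p_{zz}(U_\vep^{1D}*|\phi|^2)\,dz$. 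Using the fact that $U_\vep^{1D}(z)=\frac{2e^{z^2/2\vep^2}}{\sqrt\pi}\int_{|z|}^\infty e^{-s^2/2\vep^2}\,ds\ge 0$ is an even, bounded, integrable function with $\|U_\vep^{1D}\|_{L^1(\Bbb R)}=\vep\sqrt{2\pi}$ (a direct computation), one shows via Fourier transform that $\p_{zz}(U_\vep^{1D}*\rho)$ acts on $\rho=|\phi|^2$ as a bounded Fourier multiplier: its symbol is $-\xi^2\widehat{U_\vep^{1D}}(\xi)$, which is bounded because $\widehat{U_\vep^{1D}}(\xi)=\frac{1}{\xi^2}(1-e^{-\vep^2\xi^2/2})\cdot(\text{const})$ decays like $\xi^{-2}$; hence $\|{-\xi^2}\widehat{U_\vep^{1D}}\|_{L^\infty}<\infty$. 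This gives the bound $|E_{\rm conv}(\phi)|\le C_{\beta,\lambda,\vep}\|\phi\|_4^4$ and, more importantly, the explicit sign information $-\|\phi\|_4^2\le \widehat{U_\vep^{1D}}^{1/2}$-weighted norm $\le$ (const)$\|\phi\|_4^2$ that underlies conditions (C1)--(C2).

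Next I would establish that $E_{\rm 1D}$ is bounded from below on $S$ and coercive in the $X=H^1\cap L_V$ norm. Since $V_1(z)\ge0$ is confining and $d=1$, the term $\frac12\|\p_z\phi\|_2^2+\|\sqrt{V_1}\,\phi\|_2^2$ controls the $X$-norm; the quartic terms $\frac12\beta_{1D}\|\phi\|_4^4+E_{\rm conv}(\phi)$ are bounded in absolute value by $C\|\phi\|_4^4\le C\|\phi\|_\infty^2\|\phi\|_2^2\le C\|\p_z\phi\|_2\|\phi\|_2\cdot\|\phi\|_2^2$ using the 1D Sobolev embedding $H^1(\Bbb R)\hookrightarrow L^\infty(\Bbb R)$, exactly as in case (iii) of the existence proof of Theorem \ref{thm:gs}; a Cauchy inequality with $\eps$ then absorbs this into the kinetic energy up to an additive constant, for \emph{every} $\beta,\lambda,\vep$. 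This is why, as stated, existence holds unconditionally in the quasi-1D case. With boundedness below in hand, take a minimizing sequence $\{\phi^n\}\subset S$, bounded in $X$; pass to a weakly convergent subsequence $\phi^n\rightharpoonup\phi_g$; invoke Lemma \ref{lem:comp:sec2} for the compact embedding $X\hookrightarrow L^p(\Bbb R)$ ($p\in[2,\infty]$ in 1D) to get strong convergence in $L^2$ (so $\phi_g\in S$) and in $L^4$ (so $E_{\rm int}$ converges) and in $L^\infty$; for $E_{\rm conv}$, the $L^2\to L^2$ boundedness of the multiplier $\p_{zz}(U_\vep^{1D}*\cdot)$ combined with $|\phi^n|^2\to|\phi_g|^2$ in $L^2$ (from $L^4$ convergence and $L^2$ convergence via interpolation) gives convergence of the convolution term. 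Lower semicontinuity of $\|\p_z\cdot\|_2^2$ and $\|\sqrt{V_1}\cdot\|_2^2$ under weak $X$-convergence then yields $E_{\rm 1D}(\phi_g)\le\liminf E_{\rm 1D}(\phi^n)$, so $\phi_g$ is a ground state.

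For the phase structure $\phi_g=e^{i\theta_0}|\phi_g|$, I would apply Lemma \ref{lem:pos:sec2}: since $|\,|\phi|\,|=|\phi|$ the potential, interaction, and convolution terms depend only on $|\phi|$, while $\|\p_z|\phi|\|_2\le\|\p_z\phi\|_2$ with equality iff $\phi=e^{i\theta}|\phi|$; hence any ground state's modulus is also a ground state and the ground state has constant phase. For uniqueness of the positive ground state under (C1) or (C2), the plan is to run the convexity argument of Lemma \ref{lem:convex:sec2} in the variable $\rho=|\phi|^2$: the kinetic energy is convex in $\rho$ (strictly), the potential energy is linear, and I must check that $\frac12\beta_{1D}\int\rho^2+E_{\rm conv}$ is convex in $\rho$. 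Writing $E_{\rm conv}$ via the Fourier multiplier, $\frac12\beta_{1D}\int\rho^2+E_{\rm conv}=\frac12\int\big(\beta_{1D}+m_\vep(\xi)\big)|\widehat\rho(\xi)|^2\,d\xi$ for an explicit real symbol $m_\vep(\xi)=\frac{3\lambda(1-3n_3^2)}{16\sqrt{2\pi}\,\vep}\cdot(-\xi^2\widehat{U_\vep^{1D}}(\xi))$; conditions (C1)--(C2) are precisely what makes $\beta_{1D}+m_\vep(\xi)\ge0$ for all $\xi$ (the two cases distinguishing the sign of $\lambda(1-3n_3^2)$ and hence whether $m_\vep$ attains its extreme at $\xi=0$ or $\xi\to\infty$), so this quadratic form in $\rho$ is convex; strict convexity of the kinetic part then forces uniqueness. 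The main obstacle I anticipate is the careful Fourier-side bookkeeping: computing $\widehat{U_\vep^{1D}}(\xi)$ exactly, verifying its sign and decay, and pinning down precisely which combinations of $\beta,\lambda,\vep,n_3$ in (C1)--(C2) render $\beta_{1D}+m_\vep(\xi)\ge0$ uniformly in $\xi$ — this is where the stated hypotheses are consumed, and it is easy to lose a constant or a sign. Everything else is a faithful transcription of the non-dipolar arguments of Section \ref{sec:mathgpe}.
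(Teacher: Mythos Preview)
Your approach is correct and is the natural one; since the paper does not actually prove Theorem~\ref{thm:thm1'':sec8} here (it only states it, citing \cite{BaoBenCai}), there is nothing to compare against directly, but your strategy matches the framework the paper uses for its other ground-state theorems and is almost certainly what the cited reference does.

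One small correction worth flagging: your claimed closed form $\widehat{U_\vep^{1D}}(\xi)=\frac{1}{\xi^2}(1-e^{-\vep^2\xi^2/2})\cdot(\text{const})$ is not right. From the paper's Lemma in (\ref{eq:u1dcov:sec8}) one has
\[
\widehat{U_\vep^{1D}}(\xi)=\frac{\sqrt{2}\,\vep}{\sqrt{\pi}}\int_0^\infty\frac{e^{-\vep^2 s/2}}{|\xi|^2+s}\,ds,
\]
an exponential-integral expression, not a simple Gaussian. This does not damage your argument: all you need is that $-\xi^2\widehat{U_\vep^{1D}}(\xi)$ is monotone in $|\xi|$, vanishing at $\xi=0$ and tending to $-\frac{2\sqrt{2}}{\sqrt{\pi}\,\vep}$ as $|\xi|\to\infty$, which follows immediately from the monotonicity of $\xi^2/(\xi^2+s)$ in $\xi^2$. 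With this, your identification of (C1)--(C2) as the conditions making the full symbol $\frac{1}{2}\beta_{1D}+m_\vep(\xi)$ nonnegative checks out exactly: when $a:=\lambda(1-3n_3^2)\ge 0$ the infimum is at $|\xi|\to\infty$ and equals $\frac{\beta-a}{4\pi\vep^2}$, giving (C1); when $a<0$ the infimum is at $\xi=0$ and equals $\frac{\beta+a/2}{4\pi\vep^2}$, giving (C2). So the ``careful bookkeeping'' you worried about is straightforward once you use the integral representation rather than guessing a closed form.
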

\subsection{Well-posedness for dynamics}
In this section, we study the well-posedness for dynamics of dipolar BECs.
\subsubsection{3D case}
In 3D, we have the following results for GPPS (\ref{eq:gpe:sec8})-(\ref{eq:poisson:sec8}) \cite{BaoCaiWang}.
\begin{theorem} (Well-posedness) Suppose the real-valued trap
potential $V(\bx)\in C^\infty(\Bbb R^3)$ such that $V(\bx)\ge0$ for
$\bx\in{\Bbb R}^3$ and $D^\alpha V(\bx)\in L^\infty(\Bbb R^3)$ for
all $\alpha\in{\Bbb N}_0^3$ with $|\alpha|\ge 2$. For any initial
data $\psi(\bx,t=0)=\psi_0(\bx)\in X$,
 there exists
$T_{\mbox{\rm max}}\in(0,+\infty]$ such that the problem
(\ref{eq:gpe:sec8})-(\ref{eq:poisson:sec8})
 has a unique maximal solution
$\psi\in C\left([0,T_{\rm max}),X\right)$. It is maximal in the
sense that if $T_{\rm max}<\infty$, then
$\|\psi(\cdot,t)\|_{X}\to\infty$ when  $t\to T^-_{\rm max}$.
Moreover, the {\sl mass} $N(\psi(\cdot,t))$ and {\sl energy}
$E_{\rm 3D}(\psi(\cdot,t))$ defined in (\ref{eq:norm3d:sec8}) and (\ref{eq:energydp:sec8}),
respectively, are conserved for $t\in[0,T_{\rm max})$. Specifically,
if $\beta\ge0$ and $-\frac12\beta\leq\lambda\leq\beta$, the solution
to (\ref{eq:gpe:sec8})-(\ref{eq:poisson:sec8}) is global in time, i.e.,
  $T_{\rm max}=\infty$.
\end{theorem}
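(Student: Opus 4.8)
The plan is to establish the well-posedness result for the Gross-Pitaevskii-Poisson system (GPPS) \fref{eq:gpe:sec8}--\fref{eq:poisson:sec8} by the standard contraction-mapping/Strichartz approach, exactly as done for the non-dipolar GPE in Theorem \ref{thm:dy}, but with the additional term $-3\lambda\,\p_{\bn\bn}\varphi$ treated as a perturbation. First I would rewrite the equation using the convolution form \fref{eq:ngpe1:sec8}, i.e. absorb $(\beta-\lambda)|\psi|^2\psi - 3\lambda\p_{\bn\bn}\varphi\,\psi$ into a single nonlinearity $F(\psi) = \left(\beta|\psi|^2 + \lambda(U_{\rm dip}*|\psi|^2)\right)\psi$, and note that the linear part is again governed by the unitary group $e^{-it\bH_{\bx}^V}$ with $\bH_{\bx}^V = -\frac12\nabla^2 + V(\bx)$, for which the Strichartz estimates of Lemma \ref{lem:stri} hold under the stated hypotheses on $V$. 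The key new ingredient is the boundedness of the dipolar convolution operator: since $\widehat{U_{\rm dip}}(\xi) = -1 + 3(\bn\cdot\xi)^2/|\xi|^2$ is bounded (indeed $\|\widehat{U_{\rm dip}}\|_{L^\infty}\le 2$), the map $\rho\mapsto U_{\rm dip}*\rho$ is a Fourier multiplier bounded on every $L^p(\Bbb R^3)$ for $1<p<\infty$ and on $H^s$, so it enjoys the same mapping properties as pointwise multiplication by an $L^\infty$ function. This lets the nonlinear estimates for $F$ be reduced to the cubic-in-$H^1$ estimates already available for the pure GPE.

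The key steps, in order, are: (i) set up the Duhamel formulation $\psi(t) = e^{-it\bH_{\bx}^V}\psi_0 - i\int_0^t e^{-i(t-s)\bH_{\bx}^V}F(\psi(s))\,ds$ on a space $C([0,T];X)\cap L^q_{\rm loc}(W^{1,r})$ for an admissible pair $(q,r)$; (ii) use Strichartz estimates together with the algebra/Leibniz properties of $H^1(\Bbb R^3)$, the Sobolev embedding $H^1\hookrightarrow L^6$, and the $L^p$-boundedness of the multiplier $U_{\rm dip}*\,\cdot$, to show $F$ maps bounded sets of $X$ into $X$ with a local Lipschitz bound, hence the Duhamel map is a contraction on a small ball for $T$ small depending on $\|\psi_0\|_X$; (iii) iterate to obtain the maximal existence time $T_{\rm max}$ and the blow-up alternative $\|\psi(\cdot,t)\|_X\to\infty$ as $t\to T_{\rm max}^-$; (iv) prove conservation of mass $N(\psi)$ and energy $E_{\rm 3D}(\psi)$ in \fref{eq:norm3d:sec8}, \fref{eq:energydp:sec8}, first formally by multiplying the equation by $\bar\psi$ and by $\bar\psi_t$ respectively and integrating (using self-adjointness of $U_{\rm dip}*\,\cdot$, which follows from $\widehat{U_{\rm dip}}$ being real, to identify the dipolar interaction energy), then justifying by a density/regularization argument as in \cite{Cazenave}; (v) upgrade to a global solution when $\beta\ge0$ and $-\frac12\beta\le\lambda\le\beta$ by showing the energy controls $\|\nabla\psi\|_2^2$ and $E_{\rm pot}(\psi)$ a priori — here one uses the reformulation $E_{\rm dip}(\phi) = \frac{\lambda}{2}\int(-|\phi|^4 + 3|\p_{\bn}\nabla\varphi|^2)\,d\bx$ from \fref{eq:dipp03:sec8}, so that $E_{\rm int}+E_{\rm dip} = \frac{\beta-\lambda}{2}\|\phi\|_4^4 + \frac{3\lambda}{2}\|\p_{\bn}\nabla\varphi\|_2^2 \ge -\frac{\lambda}{2}\|\phi\|_4^4$ when $\lambda\ge0$, etc., and the condition $-\frac12\beta\le\lambda\le\beta$ is exactly what makes the combined interaction energy bounded below, giving an a priori $H^1$ bound; then the blow-up alternative forces $T_{\rm max}=\infty$.

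The main obstacle I expect is step (v): one must check carefully that the precise interval $-\frac12\beta\le\lambda\le\beta$ guarantees boundedness below of $E_{\rm int}+E_{\rm dip}$ and hence an a priori bound on the full $X$-norm. The borderline cases $\lambda = \beta$ and $\lambda = -\frac12\beta$ require knowing the sharp constant relating $\|\p_{\bn}\nabla\varphi\|_2^2$ (equivalently a Riesz-type operator applied to $|\phi|^2$) to $\|\phi\|_4^4$; the relevant fact is that the dipolar quadratic form $\int |\phi|^2(U_{\rm dip}*|\phi|^2)\,d\bx$ lies between $-\|\phi\|_4^4$ and $2\|\phi\|_4^4$, coming from $-1\le \widehat{U_{\rm dip}}\le 2$ via Plancherel, which is precisely why the admissible range for $\lambda$ is $[-\frac12\beta,\beta]$. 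Everything else (Strichartz setup, contraction, conservation laws) is a routine adaptation of the non-dipolar theory, since the only genuinely new operator, $U_{\rm dip}*\,\cdot$, is a bounded, self-adjoint Fourier multiplier. I would also note that the hypotheses on $V$ (smooth, at most quadratic growth of derivatives of order $\ge 2$) are exactly \fref{eq:potcon:sec2}, so Lemma \ref{lem:stri} applies verbatim and no extra work on the linear propagator is needed.
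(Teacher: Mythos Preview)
Your proposal is correct and follows essentially the same approach as the paper (which states the result with a citation to \cite{BaoCaiWang} rather than giving a full proof in-text): local theory via Strichartz estimates for $e^{-it\bH_{\bx}^V}$ combined with the $L^p$-boundedness of the Calder\'on--Zygmund multiplier $U_{\rm dip}*\,\cdot$ (cf.\ \fref{eq:four11:sec8} and the remark after \fref{eq:kel:sec8}), conservation laws, and global existence from the sharp bounds $-1\le\widehat{U_{\rm dip}}\le 2$ via Plancherel, which yield $E_{\rm int}+E_{\rm dip}\ge 0$ precisely on the interval $-\tfrac12\beta\le\lambda\le\beta$.
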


\begin{theorem}(Finite time blow-up) If $\beta<0$, or $\beta\ge0$ and
$\lambda<-\frac{1}{2}\beta$ or $\lambda>\beta$,
 and assume $V(\bx)$ satisfies $3V(\bx)+ \bx\cdot \nabla V(\bx)\ge0$ for
$\bx\in{\Bbb R}^3$. For any initial data
$\psi(\bx,t=0)=\psi_0(\bx)\in X$ to the problem
(\ref{eq:gpe:sec8})-(\ref{eq:poisson:sec8}), there exists finite time blow-up, i.e.,
$T_{\rm max}<\infty$, if one of the following holds:

(i) $E_{\rm 3D}(\psi_0)<0$;

(ii) $E_{\rm 3D}(\psi_0)=0$ and ${\rm Im}\left(\int_{\Bbb
R^3}\bar{\psi}_0(\bx)\ (\bx\cdot\nabla\psi_0(\bx))\,d\bx\right)<0$;

(iii) $E_{\rm 3D}(\psi_0)>0$ and ${\rm Im}\left(\int_{\Bbb R^3}
\bar{\psi}_0(\bx)\ (\bx\cdot\nabla\psi_0(\bx))\,d\bx\right)
<-\sqrt{3E_{\rm 3D}(\psi_0)}\|\bx\psi_0\|_{L^2}$.
\end{theorem}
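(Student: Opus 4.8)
The plan is to run the classical Glassey-type convexity (virial) argument on the variance $\delta_{_V}(t)=\int_{\Bbb R^3}|\bx|^2|\psi(\bx,t)|^2\,d\bx$, exactly as in the proof of Theorem~2.11 for the single-component GPE, the only genuinely new ingredient being the contribution of the nonlocal dipolar term. First I would invoke the well-posedness theorem stated just above to obtain a maximal solution $\psi\in C([0,T_{\rm max}),X)$ with conserved mass $N(\psi)\equiv1$ and energy $E_{\rm 3D}(\psi(\cdot,t))\equiv E_{\rm 3D}(\psi_0)$. Since the confining potential $V$ here is (at most) harmonic, $\psi_0\in X$ already forces $\int_{\Bbb R^3}|\bx|^2|\psi_0|^2<\infty$; the propagation of finite variance along the flow, $\delta_{_V}\in C^2([0,T_{\rm max}))$ with $\delta_{_V}(t)<\infty$ for $t<T_{\rm max}$, is standard and I would obtain it by the usual regularization: approximate $\psi_0$ by Schwartz data, derive the identities below for the smooth solutions where all integrations by parts are legitimate, then pass to the limit (cf.\ Cazenave and \cite{BaoCaiWang}).

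Next I would derive the variance identity. Differentiating once gives $\dot\delta_{_V}(t)=2\,{\rm Im}\int_{\Bbb R^3}\bar\psi(\bx,t)\,(\bx\cdot\nabla\psi(\bx,t))\,d\bx$. Differentiating again and using the GPPS \fref{eq:gpe:sec8}--\fref{eq:poisson:sec8} and integrating by parts, the kinetic and potential parts produce $4E_{\rm kin}(\psi)-2\int_{\Bbb R^3}(\bx\cdot\nabla V)|\psi|^2\,d\bx$ and the local cubic part $(\beta-\lambda)|\psi|^2\psi$ produces $3(\beta-\lambda)\int_{\Bbb R^3}|\psi|^4\,d\bx$, precisely as for the ordinary cubic NLSE (this is the content of Lemma~2.7). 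For the nonlocal term $-3\lambda\,\p_{\bn\bn}\varphi\,\psi$ the cleanest route is not a brute-force manipulation of the Newtonian potential but the observation that under the $L^2$-preserving dilation $\psi\mapsto\psi_\sigma:=\sigma^{3/2}\psi(\sigma\,\cdot)$ one has $\varphi_{\psi_\sigma}(\bx)=\sigma\,\varphi(\sigma\bx)$, hence $\p_{\bn}\nabla\varphi_{\psi_\sigma}(\bx)=\sigma^3(\p_{\bn}\nabla\varphi)(\sigma\bx)$ and $E_{\rm dip}(\psi_\sigma)=\sigma^3E_{\rm dip}(\psi)$ (equivalently, $U_{\rm dip}$ is homogeneous of degree $-3$ on $\Bbb R^3$). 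Therefore the whole cubic-scaling part $\mathcal N(\psi):=\frac{\beta}{2}\int_{\Bbb R^3}|\psi|^4+E_{\rm dip}(\psi)=E_{\rm 3D}(\psi)-E_{\rm kin}(\psi)-E_{\rm pot}(\psi)$ contributes $6\mathcal N(\psi)$ to $\ddot\delta_{_V}$. Collecting terms and using energy conservation, this collapses to
\be
\ddot\delta_{_V}(t)= 6E_{\rm 3D}(\psi_0)-2E_{\rm kin}(\psi(\cdot,t))-2\!\int_{\Bbb R^3}\!\big(3V(\bx)+\bx\cdot\nabla V(\bx)\big)|\psi(\bx,t)|^2\,d\bx\ \le\ 6E_{\rm 3D}(\psi_0),
\ee
where the inequality uses $E_{\rm kin}\ge0$ and the hypothesis $3V+\bx\cdot\nabla V\ge0$.

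From here the argument is purely ODE. Integrating twice, $\delta_{_V}(t)\le 3E_{\rm 3D}(\psi_0)t^2+\dot\delta_{_V}(0)t+\delta_{_V}(0)=:p(t)$ with $\dot\delta_{_V}(0)=2\,{\rm Im}\int_{\Bbb R^3}\bar\psi_0(\bx\cdot\nabla\psi_0)\,d\bx$ and $\delta_{_V}(0)=\|\bx\psi_0\|_{L^2}^2$. Under each of (i)--(iii) the quadratic $p$ has a positive root $t_0$: for (i) the leading coefficient is negative; for (ii) $p$ is affine with negative slope; for (iii) the condition ${\rm Im}(\cdots)<-\sqrt{3E_{\rm 3D}(\psi_0)}\,\|\bx\psi_0\|_{L^2}$ makes the discriminant positive with both roots positive. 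Since $\delta_{_V}(t)\ge0$ as long as the solution exists while $p(t_0)=0$, one cannot have $T_{\rm max}>t_0$ (else $\delta_{_V}(t_0)\le0$ forces $\psi(\cdot,t_0)\equiv0$, contradicting $N(\psi)=1$); equivalently, by the uncertainty inequality $1=\|\psi(t)\|_2^2\lesssim\delta_{_V}(t)^{1/2}\|\nabla\psi(t)\|_2$, $\delta_{_V}(t)\to0$ forces $\|\psi(t)\|_X\to\infty$. Hence $T_{\rm max}\le t_0<\infty$. I expect the main obstacle to be the rigorous justification of the variance identity in the presence of the singular dipolar kernel and a merely $C^\infty$-with-bounded-higher-derivatives potential — i.e.\ the regularization/limit argument together with the correct accounting of the $\p_{\bn\bn}\varphi$ term, for which I would lean on the exact scaling $E_{\rm dip}(\psi_\sigma)=\sigma^3E_{\rm dip}(\psi)$ rather than direct estimates on $\varphi$. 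The restrictions on $\beta,\lambda$ do not enter the estimate itself; they only guarantee that data satisfying (i)--(iii) exist and that global existence fails in this regime (cf.\ the nonexistence of ground states and part~(iii) of the well-posedness theorem).
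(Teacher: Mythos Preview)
Your proposal is correct and follows essentially the same Glassey-type virial argument the paper uses for the analogous Theorem~\ref{thm:blowup} (the dipolar version is only stated, with details deferred to \cite{BaoCaiWang}). The one point worth highlighting is your treatment of the dipolar contribution via the exact scaling $E_{\rm dip}(\psi_\sigma)=\sigma^3E_{\rm dip}(\psi)$: this is indeed the cleanest way to see that the nonlocal term enters $\ddot\delta_{_V}$ with coefficient $6$, and it yields the key identity $\ddot\delta_{_V}=6E_{\rm 3D}(\psi_0)-2E_{\rm kin}-2\int(3V+\bx\cdot\nabla V)|\psi|^2\le 6E_{\rm 3D}(\psi_0)$ exactly as you wrote.
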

\subsubsection{Quasi-2D case I}
For quasi-2D equation I (\ref{eq:gpe2d:sec8})-(\ref{eq:u2d1:sec8}), we have the following results \cite{BaoBenCai}.
\begin{theorem}\label{thm:thm1dy:sec8}
(Well-posedness of Cauchy problem) Suppose the real-valued trap
potential satisfies $V_2(\bx_{\perp})\ge0$ for $\bx_{\perp}\in{\Bbb R}^2$ and
\be\label{eq:cond:v2:sec8}
V_2(\bx_{\perp})\in C^\infty(\Bbb R^2) \hbox{ and
}D^{{\bf k}} V_2(\bx_{\perp})\in L^\infty(\Bbb R^2),\qquad \hbox{for all }
{\bf k}\in{\Bbb N}_0^2\  \hbox{with}\  |{\bf k}|\ge 2,\ee then we have

(i) For any initial data $\phi(\bx_{\perp},t=0)=\phi_0(\bx_{\perp})\in X$,
 there exists a
$T_{\rm max}\in(0,+\infty]$ such that the problem
(\ref{eq:gpe2d:sec8})-(\ref{eq:u2d1:sec8})
 has a unique maximal solution
$\phi\in C\left([0,T_{\rm max}),X\right)$. It is maximal in
the sense that if $T_{\rm max}<\infty$, then
$\|\phi(\cdot,t)\|_{X}\to\infty$ when  $t\to T^-_{\rm
max}$.

(ii) As long as the solution $\phi(\bx_{\perp},t)$ remains in the energy
space $X$, the {\sl $L^2$-norm} $\|\phi(\cdot,t)\|_2$ and {\sl
energy} $E_{\rm 2D}(\phi(\cdot,t))$ in (\ref{eq:ener2d:sec8}) are conserved for
$t\in[0,T_{\rm max})$.

(iii) Under either condition (A1) or (A2) in  Theorem \ref{thm:thm1:sec8}
with constant $C_b$ being replaced by $C_b/\|\phi_0\|_2^2$,
the solution of (\ref{eq:gpe2d:sec8})-(\ref{eq:u2d1:sec8}) is global in time, i.e.,
  $T_{\rm max}=\infty$.
\end{theorem}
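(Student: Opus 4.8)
The plan is to follow the classical Cazenave--Sulem framework for nonlinear Schr\"odinger equations, adapted to the quasi-2D dipolar model, exactly as was done for the non-dipolar GPE in Theorem \ref{thm:dy} and for the 3D GPPS. First I would set up the semigroup $e^{it\bH_{\bx_\perp}^{V_2}}$ generated by $\bH_{\bx_\perp}^{V_2}=-\tfrac12\nabla^2+V_2(\bx_\perp)$; since $V_2$ satisfies the at-most-quadratic growth condition (\ref{eq:cond:v2:sec8}), the Strichartz estimates of Lemma \ref{lem:stri} apply verbatim in dimension $d=2$. The equation (\ref{eq:gpe2d:sec8})--(\ref{eq:u2d1:sec8}) is then rewritten in Duhamel (mild) form
\be
\phi(t)=e^{it\bH_{\bx_\perp}^{V_2}}\phi_0-i\int_0^t e^{i(t-s)\bH_{\bx_\perp}^{V_2}}\left[\beta_{2D}|\phi|^2\phi-\tfrac{3\lambda}{2}\bigl(\partial_{\bn_\perp\bn_\perp}-n_3^2\nabla^2\bigr)\varphi^{2D}\,\phi\right](s)\,ds.
\ee
The cubic local term is handled by the usual contraction argument in $C([0,T];X)\cap L^q(0,T;W^{1,r})$ for an admissible pair $(q,r)$. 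The only genuinely new ingredient is the nonlocal dipolar term, so the key step is to show that the operator $\rho\mapsto(\partial_{\bn_\perp\bn_\perp}-n_3^2\nabla^2)\,(U_\vep^{2D}*\rho)$ maps $H^1(\Bbb R^2)$-type densities to $L^\infty\cap$ suitable spaces with the right bounds; I expect $U_\vep^{2D}$ in (\ref{eq:u2d1:sec8}) to behave, after the Gaussian smoothing in $z$, like a Coulomb-type kernel whose Fourier multiplier is bounded by a constant plus $|\xi|/(\text{something})$, so that the second-derivative combination yields a bounded (order-zero-type up to the $\vep$-regularization) multiplier on $H^1$. This is the main obstacle: one must verify the mapping properties of this regularized anisotropic kernel carefully, ideally by passing to Fourier space and using the explicit form analogous to (\ref{eq:four11:sec8}), and check that the resulting estimates are uniform enough (though not necessarily uniform in $\vep$, which is not claimed here) to close the fixed point.

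Once local well-posedness in $X$ and the blow-up alternative (part (i)) are established, parts (ii) and (iii) follow by standard arguments. For (ii), conservation of the $L^2$-norm comes from multiplying the equation by $\bar\phi$, integrating, and taking imaginary parts; conservation of $E_{\rm 2D}$ in (\ref{eq:ener2d:sec8}) comes from multiplying by $\bar\phi_t$ and taking real parts, where the self-adjointness of the dipolar operator $(\partial_{\bn_\perp\bn_\perp}-n_3^2\nabla^2)(-\nabla^2)^{-1}$-type form guarantees that the dipolar energy is differentiated correctly (this requires a density/regularization argument exactly as in the cited references, since $X$-regularity is not enough to justify the formal computation directly). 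For (iii), global existence, I would combine the mass conservation with the energy conservation to get an a priori bound on $\|\nabla\phi(t)\|_2^2+\|\phi(t)\|_{L_{V_2}}^2$: under condition (A1) or (A2) of Theorem \ref{thm:thm1:sec8} with $C_b$ replaced by $C_b/\|\phi_0\|_2^2$, the energy functional is coercive on the constraint set $\{\|\phi\|_2=\|\phi_0\|_2\}$ by the same Gagliardo--Nirenberg computation that proves boundedness-below of $E_{\rm 2D}$ in that theorem, so $\|\phi(t)\|_X$ stays bounded and the blow-up alternative forces $T_{\rm max}=\infty$.

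In summary, the proof is mostly a transcription of known NLS techniques, and the only place requiring real work is the harmonic analysis of the $\vep$-regularized dipolar convolution operator $U_\vep^{2D}$ and the verification that the dipolar energy $E_{\rm dip}^{2D}$ admits a representation (analogous to (\ref{eq:dipp03:sec8})) that is manifestly continuous on $X$ and compatible with the Strichartz/fixed-point scheme. I would therefore organize the write-up as: (1) Fourier-side analysis and $H^1$-boundedness of the dipolar term; (2) local contraction in the Strichartz space, giving existence, uniqueness, and the blow-up alternative; (3) conservation laws via regularization; (4) coercivity of $E_{\rm 2D}$ and the resulting global existence under (A1)/(A2). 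The detailed proof is given in \cite{BaoBenCai}.
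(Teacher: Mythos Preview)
Your outline is correct and is exactly the approach taken in the cited reference \cite{BaoBenCai}; the paper itself does not give a proof of this theorem but simply states it and refers to that work. Your identification of the one nontrivial point---that the dipolar operator $\rho\mapsto(\partial_{\bn_\perp\bn_\perp}-n_3^2\nabla^2)(U_\vep^{2D}*\rho)$ is a bounded Fourier multiplier on $L^p$ (by the formula for $\widehat{U_\vep^{2D}}$ in the paper, the multiplier vanishes like $|\xi|$ at the origin and is $O(1/\vep)$ at infinity), so that the nonlocal term can be treated exactly like the cubic term in the Strichartz contraction---is precisely the observation that makes the argument go through.
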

\begin{theorem}(Finite time blow-up) For any initial data
$\phi(\bx_{\perp},t=0)=\phi_0(\bx_{\perp})\in X$ with $\int_{\Bbb
R^2}|\bx_{\perp}|^2|\phi_0(\bx_{\perp})|^2\,d\bx_{\perp}<\infty$,  if conditions (A1) and (A2)
with constant $C_b$ being replaced by $C_b/\|\phi_0\|_2^2$
  are not satisfied and assume $V_2(\bx_{\perp})$ satisfies $2V_2(\bx_{\perp})+ \bx_{\perp}\cdot
 \nabla V_2(\bx_{\perp})\ge0$,  and let $\phi:=\phi(\bx_{\perp},t)$ be
the solution of the problem (\ref{eq:gpe2d:sec8}), there exists finite time
blow-up, i.e., $T_{\rm{max}}<\infty$,  if $\lambda=0$, or
$\lambda>0$ and $n_3^2\ge\frac12$, and one of the following holds:

(i) $E_{\rm 2D}(\phi_0)<0$;

(ii) $E_{\rm 2D}(\phi_0)=0$ and ${\rm Im}\left(\int_{\Bbb
R^2}\bar{\phi}_0(\bx_{\perp})\ (\bx_{\perp}\cdot\nabla\phi_0(\bx_{\perp}))\,d\bx_{\perp}\right)<0$;

(iii) $E_{\rm 2D}(\phi_0)>0$ and ${\rm Im}\left(\int_{\Bbb R^2}
\bar{\phi}_0(\bx_{\perp})\ (\bx_{\perp}\cdot\nabla\phi_0(\bx_{\perp}))\,d\bx_{\perp}\right)
<-\sqrt{2E_{\rm 2D}(\phi_0)}\|\bx_{\perp}\phi_0\|_{2}$.
\end{theorem}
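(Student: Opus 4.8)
The plan is to run the variance (virial) argument of Theorem~\ref{thm:blowup}, the new ingredient being control of the nonlocal dipolar term. First I would invoke Theorem~\ref{thm:thm1dy:sec8} to fix, for $\phi_0\in X$, the unique maximal solution $\phi\in C([0,T_{\rm max}),X)$, with conserved mass $\|\phi(\cdot,t)\|_2\equiv\|\phi_0\|_2$ and conserved energy $E_{\rm 2D}(\phi(\cdot,t))\equiv E_{\rm 2D}(\phi_0)$. Since conditions (A1)--(A2) of Theorem~\ref{thm:thm1:sec8} (with $C_b$ replaced by $C_b/\|\phi_0\|_2^2$) are assumed to fail, we are in the regime where $E_{\rm 2D}$ is unbounded below, so the three alternatives (i)--(iii) are not vacuous. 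Set $\delta_{_V}(t)=\int_{\mathbb{R}^2}|\bx_{\perp}|^2|\phi(\bx_{\perp},t)|^2\,d\bx_{\perp}$, finite at $t=0$ by hypothesis; a standard truncation argument (replace $|\bx_{\perp}|^2$ by bounded radial cutoffs, differentiate, and let the cutoff parameter tend to infinity) shows $\delta_{_V}\in C^2([0,T_{\rm max}))$ and, at the same time, that the finite-variance class is propagated.

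Next I would establish the virial identities. As in Lemma~\ref{lem:variance},
\[
\dot\delta_{_V}(t)=2\,{\rm Im}\!\int_{\mathbb{R}^2}\bar\phi(\bx_{\perp},t)\,\bigl(\bx_{\perp}\cdot\nabla\phi(\bx_{\perp},t)\bigr)\,d\bx_{\perp}.
\]
Differentiating once more and using the equation~(\ref{eq:gpe2d:sec8}), the kinetic, local cubic and potential terms contribute exactly as in Theorem~\ref{thm:blowup}, producing $4E_{\rm kin}+(\text{cubic virial})-2\int|\phi|^2(2V_2+\bx_{\perp}\cdot\nabla V_2)\,d\bx_{\perp}$; the nonlocal term $-\tfrac{3\lambda}{2}(\partial_{\bn_\perp\bn_\perp}-n_3^2\nabla^2)\varphi^{2D}\,\phi$ produces an extra ``dipolar virial.'' To evaluate it I would pass to Fourier variables, writing $\widehat{\varphi^{2D}}=\widehat{U_\vep^{2D}}\,\widehat{|\phi|^2}$ and using that the symbol of $\partial_{\bn_\perp\bn_\perp}-n_3^2\nabla^2$ is $-(\bn_\perp\cdot\xi)^2+n_3^2|\xi|^2$, so that both the dipolar energy and its virial contribution are expressed through $\int_{\mathbb{R}^2}\bigl(n_3^2|\xi|^2-(\bn_\perp\cdot\xi)^2\bigr)\,\widehat{U_\vep^{2D}}(\xi)\,\bigl|\widehat{|\phi|^2}(\xi)\bigr|^2\,d\xi$.

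The crux is the sign bookkeeping. I would first record that $\widehat{U_\vep^{2D}}\ge0$, since $U_\vep^{2D}$ is a positive superposition of scaled Poisson-type kernels (equivalently, it has a Bessel/Laplace integral representation). Combined with the elementary inequality $(\bn_\perp\cdot\xi)^2\le|\bn_\perp|^2|\xi|^2=(1-n_3^2)|\xi|^2$, this gives $0\le n_3^2|\xi|^2-(\bn_\perp\cdot\xi)^2\le n_3^2|\xi|^2$ precisely when $n_3^2\ge\tfrac12$. Hence, when $\lambda=0$, or $\lambda>0$ and $n_3^2\ge\tfrac12$, the dipolar energy is nonpositive and, after combining the dipolar virial with the kinetic term and using energy conservation together with $2V_2+\bx_{\perp}\cdot\nabla V_2\ge0$, one arrives at $\ddot\delta_{_V}(t)\le 4E_{\rm 2D}(\phi_0)$ for $0\le t<T_{\rm max}$. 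Integrating twice,
\[
\delta_{_V}(t)\le 2E_{\rm 2D}(\phi_0)\,t^2+\dot\delta_{_V}(0)\,t+\delta_{_V}(0),
\]
with $\dot\delta_{_V}(0)=2\,{\rm Im}\!\int\bar\phi_0(\bx_{\perp}\cdot\nabla\phi_0)\,d\bx_{\perp}$ and $\delta_{_V}(0)=\|\bx_{\perp}\phi_0\|_2^2$. Under (i) the leading coefficient is negative; under (ii) it vanishes but the slope $\dot\delta_{_V}(0)<0$ with $\delta_{_V}(0)>0$; under (iii) the hypothesis is exactly $\dot\delta_{_V}(0)^2\ge 8E_{\rm 2D}(\phi_0)\,\delta_{_V}(0)$ with $\dot\delta_{_V}(0)<0$. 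In every case the quadratic on the right has a root $t^\ast\in(0,\infty)$. Since $\delta_{_V}\ge0$ while mass conservation rules out $\phi(\cdot,t^\ast)=0$, the solution cannot be continued up to $t^\ast$, whence $T_{\rm max}\le t^\ast<\infty$.

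The step I expect to be the main obstacle is making the dipolar virial rigorous and sign-definite: $U_\vep^{2D}$ is nonlocal and (mildly) singular, so the Fourier manipulations and integrations by parts must be justified on the energy space $X$ (density and duality arguments, and the truncation used to differentiate $\delta_{_V}$ must be shown to commute with the convolution up to vanishing errors); moreover it is exactly the bound $0\le n_3^2|\xi|^2-(\bn_\perp\cdot\xi)^2\le n_3^2|\xi|^2$ together with $\widehat{U_\vep^{2D}}\ge0$ that pins down the hypotheses $\lambda=0$ or $(\lambda>0,\ n_3^2\ge\tfrac12)$ --- any other sign of $\lambda$, or $n_3^2<\tfrac12$, destroys the closure of the differential inequality for $\delta_{_V}$.
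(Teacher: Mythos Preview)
Your overall strategy is correct and matches the virial approach the paper uses for Theorem~\ref{thm:blowup}; the paper itself does not prove this particular theorem but defers to \cite{BaoBenCai}, and your outline is indeed the argument used there. However, there is a genuine gap in your treatment of the dipolar virial.

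You correctly verify that, when $n_3^2\ge\tfrac12$, the Fourier multiplier $m(\xi)=\bigl(n_3^2|\xi|^2-(\bn_\perp\cdot\xi)^2\bigr)\widehat{U_\vep^{2D}}(\xi)$ is nonnegative, hence the dipolar energy $E_{\rm dip}\le0$ for $\lambda>0$. But this alone does \emph{not} yield $\ddot\delta_{_V}\le 4E_{\rm 2D}$. The point is that the dipolar nonlinearity is not $L^2$-critical in 2D: the kernel $\widehat{U_\vep^{2D}}$ is not homogeneous of degree zero, so the dipolar contribution to $\ddot\delta_{_V}$ is not simply $4E_{\rm dip}$ but $4E_{\rm dip}+R$, where (with $\rho=|\phi|^2$)
\[
R=-\frac{3\lambda}{2(2\pi)^2}\int_{\Bbb R^2}\bigl(\xi\cdot\nabla_\xi m(\xi)\bigr)\,|\hat\rho(\xi)|^2\,d\xi.
\]
To close the inequality you must show $R\le0$, i.e.\ $\xi\cdot\nabla_\xi m\ge0$. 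Since the factor $n_3^2|\xi|^2-(\bn_\perp\cdot\xi)^2$ is homogeneous of degree two, one computes
\[
\xi\cdot\nabla_\xi m=\bigl(n_3^2|\xi|^2-(\bn_\perp\cdot\xi)^2\bigr)\Bigl[2\widehat{U_\vep^{2D}}(\xi)+|\xi|\,\partial_{|\xi|}\widehat{U_\vep^{2D}}(\xi)\Bigr],
\]
and the required nonnegativity then hinges on the bracket. Using the explicit formula $\widehat{U_\vep^{2D}}(\xi)=\frac{1}{\pi}\int_{\Bbb R}\frac{e^{-\vep^2 s^2/2}}{|\xi|^2+s^2}\,ds$ one finds
\[
2\widehat{U_\vep^{2D}}(\xi)+|\xi|\,\partial_{|\xi|}\widehat{U_\vep^{2D}}(\xi)=\frac{2}{\pi}\int_{\Bbb R}\frac{s^2\,e^{-\vep^2 s^2/2}}{(|\xi|^2+s^2)^2}\,ds\ge0,
\]
which is exactly the missing computation. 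Without it your sentence ``after combining the dipolar virial with the kinetic term\ldots one arrives at $\ddot\delta_{_V}\le4E_{\rm 2D}$'' is an assertion, not a proof. Once you insert this step, the remainder of your argument (integrating the differential inequality and analyzing the quadratic under (i)--(iii)) goes through as written.
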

\subsubsection{Quasi-2D case II}
For quasi-2D equation II (\ref{eq:gpe2d2:sec8}),
noticing the nonlinearity
$\phi(\p_{\bn_\perp\bn_\perp}-n_3^2\nabla^2)((-\nabla^2)^{-1/2}|\phi|^2)$
is actually a derivative nonlinearity,  it would bring
significant difficulty in analyzing the dynamic behavior. The Cauchy problem of the Schr\"{o}dinger equation with derivative
nonlinearity has been investigated extensively  in the literatures
\cite{Ho,Kpv1}. We are able to prove the existence results in the energy
space with the special structure of our nonlinearity \cite{BaoBenCai}.
\begin{theorem}
(Existence  for Cauchy problem) Suppose the real potential
$V_2(\bx_{\perp})$ satisfies (\ref{eq:cond:v2:sec8}) and
$\lim_{|\bx_{\perp}|\to\infty}V_2(\bx_{\perp})=\infty$,
and initial value $\phi_0(\bx_{\perp})\in X$,
either condition  (B2) or (B3)  in Theorem \ref{thm:thm1':sec8} holds with
constant $C_b$ being replaced by $C_b/\|\phi_0\|_2^2$, then
there exists a solution $\phi\in L^\infty([0,\infty);X)\cap
W^{1,\infty}([0,\infty);X^\ast)$ for the Cauchy problem of
(\ref{eq:gpe2d2:sec8}). Here $X^\ast$ denotes the dual space of $X$.
 Moreover, there holds for $L^2$ norm and energy
$\tilde{E}_{\rm 2D}$ (\ref{eq:ener2d2:sec8}) conservation, i.e. \be
\|\phi(\cdot,t)\|_{L^2(\Bbb R^2)}=\|\phi_0\|_{L^2(\Bbb R^2)},\quad
\tilde{E}_{\rm 2D} (\phi(\cdot,t))\leq \tilde{E}_{\rm 2D}(\phi_0),
\quad\forall t\ge0. \ee
\end{theorem}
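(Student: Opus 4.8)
The plan is to obtain $\phi$ as a weak-$*$ limit of solutions to a mollified problem, using the conserved mass and energy to produce uniform a priori bounds in $X$, and then to pass to the limit in the derivative-type dipolar nonlinearity by exploiting the self-adjointness and the bounded angular part of the singular Fourier multiplier that appears in it, together with compactness in the sense of Aubin--Lions. The energy relation will come out as an inequality precisely because only weak lower semicontinuity of $\tilde E_{\rm 2D}$ along the available compactness can be used.

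The first step is to make the \emph{coercivity of the energy} explicit. Writing $\varphi=T(|\phi|^2)$ with $T=(\p_{\bn_\perp\bn_\perp}-n_3^2\nabla^2)(-\nabla^2)^{-1/2}$, Plancherel gives $\int_{\Bbb R^2}|\phi|^2\,\varphi\,d\bx_\perp=\int_{\Bbb R^2} m(\xi)\,\big|\,\widehat{|\phi|^2}(\xi)\,\big|^2\,d\xi$ with $m(\xi)=|\xi|\,\big(n_3^2-(\bn_\perp\cdot\xi)^2/|\xi|^2\big)$, whose angular factor ranges in $[n_3^2-1,n_3^2]$. Under (B2) or (B3) the combination of the local quartic term $\beta_{2D}\|\phi\|_4^4$ and of the dipolar term therefore dominates $-c_0\,\big\|\,|\phi|^2\,\big\|_{\dot H^{1/2}}^2$ with $c_0$ controlled; this last quantity is absorbed into the kinetic energy by a two-dimensional product estimate of the form $\big\|\,|\phi|^2\,\big\|_{\dot H^{1/2}}\lesssim \|\phi\|_{L^4}\,\|\nabla\phi\|_{L^2}^{1/2}\,\|\phi\|_{L^2}^{1/2}$ combined with the sharp Gagliardo--Nirenberg inequality with constant $C_b$ in \eqref{eq:bestcons:2d}; inserting the conserved mass $\|\phi\|_2=\|\phi_0\|_2$ (which is why $C_b$ is replaced by $C_b/\|\phi_0\|_2^2$) yields $\tilde E_{\rm 2D}(\phi)\ge c\,\|\phi\|_X^2-C$ for some $c>0$. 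This is the same mechanism already used to prove existence of the ground state in Theorem \ref{thm:thm1':sec8}. Next comes the \emph{regularization and the uniform bounds}. Let $\mathcal J_\delta$ be a standard mollifier and replace $N(\phi)=\beta_{2D}|\phi|^2\phi-\frac{3\lambda}{2}\phi\,T(|\phi|^2)$ by $N_\delta(\phi)=\beta_{2D}(\mathcal J_\delta*|\phi|^2)\phi-\frac{3\lambda}{2}\phi\,\mathcal J_\delta*\big(T\,\mathcal J_\delta*|\phi|^2\big)$, which is gauge invariant, derives from a mollified energy $\tilde E^\delta_{\rm 2D}$, and is locally Lipschitz from $X$ to $X$ because $\mathcal J_\delta*T\mathcal J_\delta*$ is bounded and smoothing on every $L^p$. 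Using the dispersive (Strichartz) estimates for $e^{-it\mathbf H}$ with $\mathbf H=-\frac12\nabla^2+V_2$ (Lemma \ref{lem:stri}, applicable since $V_2$ satisfies \eqref{eq:cond:v2:sec8}) and a Banach fixed point argument in $C([0,T_\delta];X)$, one gets a unique local solution $\phi^\delta$ with conserved $\|\phi^\delta(t)\|_2$ and $\tilde E^\delta_{\rm 2D}(\phi^\delta(t))$; the coercivity above (uniform in $\delta$) forces $T_\delta=\infty$ and $\|\phi^\delta\|_{L^\infty(\Bbb R^+;X)}\le C$, and then reading $\p_t\phi^\delta$ off the equation, together with $X\hookrightarrow L^p$ and $T:X\to X^*$ continuously, gives $\|\p_t\phi^\delta\|_{L^\infty(\Bbb R^+;X^*)}\le C$.

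Finally one passes to the limit $\delta\to0$. On each $[0,T]$ the uniform bounds give, along a subsequence, $\phi^\delta\rightharpoonup\phi$ weak-$*$ in $L^\infty(0,T;X)$ and, by the Aubin--Lions--Simon lemma with the compact embedding $X\hookrightarrow\hookrightarrow L^p(\Bbb R^2)$ for $p\in[2,\infty)$ from Lemma \ref{lem:comp:sec2} and the $X^*$-bound on $\p_t\phi^\delta$, strong convergence $\phi^\delta\to\phi$ in $C([0,T];L^2)\cap L^q(0,T;L^p)$; this disposes of the cubic term and, by interpolation with the uniform $H^1$-bound, gives $|\phi^\delta|^2\to|\phi|^2$ in $L^2(0,T;\dot H^{1/2-\eta})$ for small $\eta>0$. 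For the dipolar term, testing against $\psi\in X$ and using that $T$ is self-adjoint with real symbol one rewrites the weak form as $\langle\,|\phi^\delta|^2,\ \mathcal J_\delta*T\,\mathcal J_\delta*(\overline{\phi^\delta}\psi)\,\rangle$, which converges because $\overline{\phi^\delta}\psi\to\overline\phi\psi$ in a space on which $T$ and the mollifiers act continuously into the dual of the fractional Sobolev space just identified; the limit $\phi$ then solves \eqref{eq:gpe2d2:sec8} in $W^{1,\infty}(0,T;X^*)$ and lies in $L^\infty(0,T;X)$, and since $T$ is arbitrary, in $L^\infty(\Bbb R^+;X)$. Mass conservation passes to the limit by the strong $L^2$-convergence, while $\tilde E_{\rm 2D}(\phi(t))\le\liminf_\delta\tilde E^\delta_{\rm 2D}(\phi^\delta(t))=\liminf_\delta\tilde E^\delta_{\rm 2D}(\mathcal J_\delta*\phi_0)=\tilde E_{\rm 2D}(\phi_0)$ follows from weak lower semicontinuity of the convex kinetic-plus-potential part together with the \emph{continuity} (not merely lower semicontinuity) of the two quartic terms along the strong $L^4$ and fractional convergences.

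The main obstacle is exactly this last passage to the limit and the energy relation: the dipolar nonlinearity $\phi\,T(|\phi|^2)$ loses a full derivative, so the usual $L^2$-compactness of $\phi^\delta$ is not sufficient; one must upgrade to convergence of $|\phi^\delta|^2$ in a fractional Sobolev space and use the boundedness of the angular part of the multiplier $m(\xi)$ and the self-adjointness of $T$ (and the sign hypotheses (B2)/(B3) and the rescaled $C_b$) to make the dipolar quadratic form behave well in the weak limit. It is precisely this step that both closes the construction and explains why only an energy inequality is claimed; the remaining arguments are the standard regularization, fixed-point, and conserved-quantity bookkeeping, which one can also carry out following \cite{BaoBenCai,Carles}.
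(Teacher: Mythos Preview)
The paper does not give a proof of this theorem; it states the result and defers to \cite{BaoBenCai}, remarking only that the ``special structure of our nonlinearity'' is what makes the argument go through. Your overall strategy---mollify the nonlinearity, run a Strichartz-based fixed point for the regularized problem, use conserved mass and energy to get uniform $X$-bounds, extract a weak-$*$ limit by Aubin--Lions, and pass to the limit in the weak formulation by exploiting the self-adjointness of the multiplier $T$ to throw the half-derivative onto the test function---is exactly the standard compactness scheme for weak solutions and is the route taken in \cite{BaoBenCai}.

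Your coercivity step, however, is both more elaborate than needed and contains a slip. Since $|\bn_\perp|^2=1-n_3^2$, the angular factor of $m(\xi)$ ranges in $[2n_3^2-1,\,n_3^2]$, not $[n_3^2-1,\,n_3^2]$. The point of hypotheses (B2) and (B3) is precisely that this makes the dipolar energy \emph{nonnegative}: under (B2) one has $\lambda>0$ and $m(\xi)\le 0$, while under (B3) one has $\lambda<0$ and $m(\xi)\ge 0$; in both cases $-\tfrac{3\lambda}{4}\int m(\xi)\,|\widehat{|\phi|^2}|^2\,d\xi\ge 0$. There is therefore nothing to absorb into the kinetic energy, and your product estimate $\||\phi|^2\|_{\dot H^{1/2}}\lesssim \|\phi\|_{L^4}\|\nabla\phi\|_{L^2}^{1/2}\|\phi\|_{L^2}^{1/2}$---which is doubtful in two dimensions, since $\||\phi|^2\|_{\dot H^1}$ is not controlled by $H^1$ data alone---is never invoked. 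Coercivity follows simply because the only possibly negative term left is $\beta_{2D}\|\phi\|_4^4$, which the sharp Gagliardo--Nirenberg inequality and the rescaled condition $\beta_{2D}>-C_b/\|\phi_0\|_2^2$ handle directly. With this correction the rest of your argument goes through as written.
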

Next, we discuss possible finite time blow-up for the continuous
solutions of the quasi-2D equation II (\ref{eq:gpe2d2:sec8}). To this
purpose, the following assumptions are introduced:

(A) Assumption on the trap and coefficient of the cubic term, i.e.
$V_2(\bx_{\perp})$ satisfies $3V_2(\bx_{\perp})+ \bx_{\perp}\cdot \nabla V_2(\bx_{\perp})\ge0$,
$\frac{\beta-\lambda+3\lambda
n_3^2}{\sqrt{2\pi}\,\vep}\ge-\frac{C_b}{\|\phi_0\|_2^2}$, with
$\phi_0$ being the initial data of equation (\ref{eq:gpe2d2:sec8});

(B) Assumption on the trap and coefficient of the nonlocal term, i.e. $V_2(\bx_{\perp})$
satisfies $2V_2(\bx_{\perp})+ \bx_{\perp}\cdot \nabla V_2(\bx_{\perp})\ge0$, $\lambda=0$ or $\lambda>0$ and $n_3^2\ge\frac12$.

\begin{theorem}(Finite time blow-up)
For any initial data $\phi(\bx_{\perp},t=0)=\phi_0(\bx_{\perp})\in X$ with finite variance $\delta_V^0=\int_{\Bbb
R^2}|\bx_{\perp}|^2|\phi_0(\bx_{\perp})|^2\,d\bx_{\perp}<\infty$,
if conditions (B1), (B2) and (B3) with constant $C_b$ being replaced by
$C_b/\|\phi_0\|_2^2$ are not satisfied,
let $\phi:=\phi(\bx_{\perp},t)\in C([0,T_{\rm max}),X)$ solution of the problem (\ref{eq:gpe2d2:sec8}) with
$L^2$ norm and energy conservation, then there exists finite time
blow-up, i.e., $T_{\rm max}<\infty$, if one of the following
condition holds:

(i) $\tilde{E}_{\rm 2D}(\phi_0)<0$, and either Assumption (A) or (B) holds;

(ii) $\tilde{E}_{\rm 2D}(\phi_0)=0$ and ${\rm Im}\left(\int_{\Bbb
R^2}\bar{\phi}_0(\bx_{\perp})\ (\bx_{\perp}\cdot\nabla\phi_0(\bx_{\perp}))\,d\bx_{\perp}\right)<0$,
and either Assumption (A) or (B) holds;

(iii) $\tilde{E}_{\rm 2D}(\phi_0)>0$, and ${\rm Im}\left(\int_{\Bbb R^2}
\bar{\phi}_0(\bx_{\perp})\ (\bx_{\perp}\cdot\nabla\phi_0(\bx_{\perp}))\,d\bx_{\perp}\right)
<-(3\tilde{E}_{\rm 2D}^0)^{1/2}\delta_V^0$ if Assumption (A)
holds, or ${\rm Im}\left(\int_{\Bbb R^2} \bar{\phi}_0(\bx_{\perp})\
(\bx_{\perp}\cdot\nabla\phi_0(\bx_{\perp}))\,d\bx_{\perp}\right)
<-(2\tilde{E}_{\rm 2D}^0)^{1/2}\delta_V^0$ if Assumption (B)
holds. Here $\tilde{E}_{\rm 2D}^0=\tilde{E}_{\rm 2D}(\phi_0)$.
\end{theorem}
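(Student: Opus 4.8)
The plan is to run a virial (variance) argument, in close analogy with the proof of Theorem \ref{thm:blowup} for the non-dipolar GPE, the novelty being the treatment of the nonlocal term in (\ref{eq:gpe2d2:sec8}) through its Fourier multiplier. Set $\delta_V(t)=\int_{\Bbb R^2}|\bx_\perp|^2|\phi(\bx_\perp,t)|^2\,d\bx_\perp$. First I would check that the finite-variance hypothesis propagates, i.e. $\delta_V(t)<\infty$ on $[0,T_{\rm max})$ and $\delta_V\in C^2([0,T_{\rm max}))$, with $\dot\delta_V(t)=2\,{\rm Im}\int_{\Bbb R^2}\bar\phi\,(\bx_\perp\cdot\nabla\phi)\,d\bx_\perp$. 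Because the derivative-type nonlinearity in (\ref{eq:gpe2d2:sec8}) only yields a solution with the limited regularity recorded just before the statement ($\phi\in L^\infty_{\rm loc}(X)\cap W^{1,\infty}_{\rm loc}(X^\ast)$ rather than a classical flow), this is done by the standard truncation: replace $|\bx_\perp|^2$ by a smooth cutoff $a_R$ with bounded derivatives, differentiate twice, and pass to the limit $R\to\infty$ using the conserved mass and energy and Strichartz-type bounds (cf. Cazenave, Sulem).

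The core step is the virial identity for $\ddot\delta_V$. Differentiating once more and using (\ref{eq:gpe2d2:sec8}) should give
$$\ddot\delta_V(t)=2\int_{\Bbb R^2}|\nabla\phi|^2\,d\bx_\perp+[\text{cubic}]+[\text{dipolar}]-2\int_{\Bbb R^2}|\phi|^2\,\bx_\perp\cdot\nabla V_2\,d\bx_\perp,$$
where the cubic contribution is the mass-critical one $2\beta_{2D}\int|\phi|^4$, and the dipolar contribution is computed from the structure of $K:=(\partial_{\bn_\perp\bn_\perp}-n_3^2\nabla^2)(-\nabla^2)^{-1/2}$: its Fourier multiplier $k(\xi_\perp)=\big(n_3^2|\xi_\perp|^2-(\bn_\perp\cdot\xi_\perp)^2\big)/|\xi_\perp|$ is homogeneous of degree $1$, so the dipolar energy $E_{\rm dip}^{2D}(\phi)=-\frac{3\lambda}{4}\int(K|\phi|^2)|\phi|^2$ scales like $\mu^{3}$ under the $L^2$-preserving dilation $\phi\mapsto\mu\,\phi(\mu\bx_\perp)$, and a Plancherel computation (tracking the commutator of $k$ with the dilation generator $\bx_\perp\cdot\nabla$) yields $[\text{dipolar}]=6\,E_{\rm dip}^{2D}(\phi)$. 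Splitting $\tilde E_{\rm 2D}=E_{\rm kin}+E_{\rm pot}^{2D}+E_{\rm int}^{2D}+E_{\rm dip}^{2D}$ and invoking energy conservation then produces the two bounds that match the two Assumptions. Under Assumption (A): the degree-$3$ homogeneity makes the dipolar part cancel against $6E_{\rm dip}^{2D}$, the trap condition $3V_2+\bx_\perp\cdot\nabla V_2\ge0$ kills the potential remainder, and the Gagliardo--Nirenberg inequality (\ref{eq:bestcons:2d}) with $C_b$ replaced by $C_b/\|\phi_0\|_2^2$ together with $\beta_{2D}\ge -C_b/\|\phi_0\|_2^2$ absorbs $-\int|\nabla\phi|^2$ against the cubic remainder, giving $\ddot\delta_V(t)\le 6\,\tilde E_{\rm 2D}(\phi_0)$. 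Under Assumption (B): the kinetic and cubic pieces recombine with no positive remainder, the weaker condition $2V_2+\bx_\perp\cdot\nabla V_2\ge0$ kills the potential remainder, and the restriction on $(\lambda,n_3)$ forces $E_{\rm dip}^{2D}(\phi)\le0$ (if $\lambda>0$ then $n_3^2\ge\frac12$ makes $k\ge0$ by Cauchy--Schwarz and $|\bn_\perp|^2=1-n_3^2$), giving $\ddot\delta_V(t)\le 4\,\tilde E_{\rm 2D}(\phi_0)$.

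Thus $\ddot\delta_V(t)\le 2c\,\tilde E_{\rm 2D}(\phi_0)$ with $c=3$ under Assumption (A) and $c=2$ under Assumption (B). Integrating twice, $0\le\delta_V(t)\le c\,\tilde E_{\rm 2D}(\phi_0)\,t^2+\dot\delta_V(0)\,t+\delta_V^0$ for $t\in[0,T_{\rm max})$, with $\dot\delta_V(0)=2\,{\rm Im}\int\bar\phi_0(\bx_\perp\cdot\nabla\phi_0)$. In case (i) the leading coefficient is negative; in case (ii) it is zero with negative slope; in case (iii) it is positive but the condition on ${\rm Im}\int\bar\phi_0(\bx_\perp\cdot\nabla\phi_0)$ in the statement (a discriminant condition, the factor $(c\,\tilde E_{\rm 2D}^0)^{1/2}$ corresponding exactly to the two bounds above) forces the parabola to have a positive root. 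In all cases the right-hand side attains a nonpositive value at some finite $t^\ast>0$, so $\delta_V\ge0$ cannot hold for all times, whence $T_{\rm max}\le t^\ast<\infty$.

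The hard part will be making the virial identity rigorous for a solution with the limited regularity that the derivative nonlinearity of (\ref{eq:gpe2d2:sec8}) permits, and in particular evaluating the dipolar contribution: one must justify differentiating $\delta_V$ twice and then compute the action of the dilation on the nonlocal, order-$(+1)$ operator $K$, where the degree-$1$ homogeneity of $k$ is precisely what yields the clean weight $6$ that makes the two Assumptions line up with $\ddot\delta_V\le6\,\tilde E_{\rm 2D}^0$ and $\ddot\delta_V\le4\,\tilde E_{\rm 2D}^0$. Once this identity is established, the rest is the same bookkeeping as in Theorem \ref{thm:blowup}.
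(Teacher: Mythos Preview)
The paper does not supply a proof for this particular theorem; it is stated with a reference to \cite{BaoBenCai}. Your virial approach is the standard one and mirrors the paper's proof of the analogous non-dipolar result (Theorem~\ref{thm:blowup}): the key new ingredient, the weight $6E_{\rm dip}^{2D}$ in $\ddot\delta_V$ coming from the degree-$1$ homogeneity of the multiplier $k(\xi)=(n_3^2|\xi|^2-(\bn_\perp\cdot\xi)^2)/|\xi|$, is computed correctly, and the two bounds $\ddot\delta_V\le 6\tilde E_{\rm 2D}^0$ under (A) and $\ddot\delta_V\le 4\tilde E_{\rm 2D}^0$ under (B) then follow by the bookkeeping you indicate. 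Note that the discriminant condition in case~(iii) should read $-(c\,\tilde E_{\rm 2D}^0)^{1/2}\sqrt{\delta_V^0}$ rather than $-(c\,\tilde E_{\rm 2D}^0)^{1/2}\delta_V^0$ (compare the paper's own Theorem~\ref{thm:blowup}, which uses $\|\bx\psi_0\|_{L^2}=\sqrt{\delta_V(0)}$); this is a typo in the statement, and your argument produces the correct form.
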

\subsubsection{Quasi-1D case}
Concerning the Cauchy problem,  similar
results as Theorem \ref{thm:thm1dy:sec8} can be obtained for the equation
(\ref{eq:gpe1d:sec8}) \cite{BaoBenCai}.

\begin{theorem}
(Well-posedness for Cauchy problem) Suppose the real-valued trap
potential satisfies $V_1(z)\ge0$ for $z\in{\Bbb R}$ and $V_1(z)\in
C^\infty(\Bbb R)$, $D^k V_1(z)\in L^\infty(\Bbb R)$ for all
integers $k\ge 2$, for any initial data
$\phi(z,t=0)=\phi_0(z)\in X$,
 there exists
 a unique solution
$\phi\in C\left([0,\infty),X\right)\cap C^1\left([0,\infty),X^\ast\right)$
to the Cauchy problem of equation (\ref{eq:gpe1d:sec8}).
\end{theorem}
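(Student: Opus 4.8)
The plan is to regard the quasi-1D equation (\ref{eq:gpe1d:sec8}) as a cubic nonlinear Schr\"odinger equation with a \emph{bounded} nonlocal perturbation, and to follow the scheme of the proof of Theorem \ref{thm:dy}: local well-posedness in the energy space $X=H^1(\Bbb R)\cap L_{V_1}(\Bbb R)$ by a contraction argument built on Strichartz estimates, then globalization via conservation of mass and energy. Write $\mathbf H=-\frac12\p_{zz}+V_1(z)$. Since $V_1\in C^\infty(\Bbb R)$ with $D^kV_1\in L^\infty(\Bbb R)$ for all $k\ge2$, the potential $V_1$ satisfies the hypothesis (\ref{eq:potcon:sec2}) in one dimension, so the unitary group $e^{-it\mathbf H}$ obeys the Strichartz estimates of Lemma \ref{lem:stri} and generates a strongly continuous group on $X$. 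Denoting the full nonlinearity by $F(\phi)=\beta_{1D}|\phi|^2\phi-c_\vep\,\p_{zz}\!\big(U_\vep^{1D}*|\phi|^2\big)\phi$ with $c_\vep=\frac{3\lambda(3n_3^2-1)}{8\sqrt{2\pi}\,\vep}$, one works with the Duhamel formulation $\phi(t)=e^{-it\mathbf H}\phi_0-i\int_0^t e^{-i(t-s)\mathbf H}F(\phi(s))\,ds$.

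The crucial preliminary step is a mapping lemma for the nonlocal term. Although $\p_{zz}$ carries two derivatives, the even kernel $U_\vep^{1D}$ of (\ref{eq:poisson1d:sec8})--(\ref{eq:kernel1d:sec8}) decays like $|z|^{-1}$ at infinity and has a corner at the origin, so its distributional second derivative decomposes as $\p_{zz}U_\vep^{1D}=c_0\,\delta_0+K_\vep$ with a constant $c_0\in\Bbb R$ and a classical kernel $K_\vep\in L^1(\Bbb R)\cap L^\infty(\Bbb R)$ that decays like $|z|^{-3}$, exactly parallel to the reformulation (\ref{eq:decop1:sec8}) of the dipolar potential in the higher-dimensional case. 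Consequently $\p_{zz}\varphi^{1D}=c_0|\phi|^2+K_\vep*|\phi|^2$, so $\rho\mapsto\p_{zz}(U_\vep^{1D}*\rho)$ is a \emph{bounded} (zeroth-order) operator on every $L^p(\Bbb R)$, $1\le p\le\infty$, and on $H^s(\Bbb R)$. From this I would deduce that $F$ maps $X$ into $X$ (and bounded sets into bounded sets), is Lipschitz on bounded subsets of $X$ and of $L^2$, commutes with multiplication by a constant phase, and is the gradient with respect to $\bar\phi$ of the nonlinear part of the energy $E_{\rm 1D}$ in (\ref{eq:ener1d:sec8}); moreover $\mathrm{Im}\int_{\Bbb R}\bar\phi\,F(\phi)\,dz=0$ because $\p_{zz}\varphi^{1D}$ is real-valued.

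With this in hand the local theory is routine. One runs the contraction mapping principle for the Duhamel map on $C([0,T];H^1)\cap L^q([0,T];W^{1,r})$ for a suitable admissible pair $(q,r)$, with $T$ depending only on $\|\phi_0\|_{H^1}$ (in one dimension the cubic and the nonlocal cubic terms are both $L^2$-subcritical and tame via $H^1\hookrightarrow L^\infty$), exactly as in the proof of Theorem \ref{thm:dy}; the weighted component $L_{V_1}$ of the $X$-norm is propagated by repeating the estimate after applying the $X$-group, or by controlling the commutator $[\mathbf H,\cdot]$ as in the standard energy-space regularity argument. This yields a unique maximal solution $\phi\in C([0,T_{\rm max});X)$ with the blow-up alternative: $\|\phi(\cdot,t)\|_X\to\infty$ as $t\to T_{\rm max}^-$ whenever $T_{\rm max}<\infty$. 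Since $\mathbf H\colon X\to X^*$ and $F\colon X\to X^*$ are bounded, $\p_t\phi=-i(\mathbf H\phi+F(\phi))$ is continuous from $[0,T_{\rm max})$ into $X^*$, i.e. $\phi\in C^1([0,T_{\rm max});X^*)$.

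It remains to globalize. Pairing (\ref{eq:gpe1d:sec8}) with $\bar\phi$, integrating over $\Bbb R$ and taking imaginary parts, the self-adjoint operator $\mathbf H$ and the real nonlocal term drop out and give $\|\phi(\cdot,t)\|_2\equiv\|\phi_0\|_2$; differentiating $E_{\rm 1D}(\phi(t))$, inserting the equation and using the Hamiltonian/symmetry structure of $F$ (evenness of $U_\vep^{1D}$), one obtains $\frac{\rd}{\rd t}E_{\rm 1D}(\phi(t))\equiv0$. Because $V_1\ge0$, the energy controls $\frac12\|\p_z\phi\|_2^2+\|\phi\|_{L_{V_1}}^2$ up to the quartic and nonlocal terms; by Young's inequality and $K_\vep\in L^1$ the nonlocal energy term is bounded by $C\|\phi\|_4^4$, and the one-dimensional Gagliardo--Nirenberg inequality together with mass conservation gives $\|\phi\|_4^4\lesssim\|\p_z\phi\|_2\,\|\phi_0\|_2^3\le\eta\|\p_z\phi\|_2^2+C_\eta$ for any $\eta>0$. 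Hence $\sup_{t\in[0,T_{\rm max})}\|\phi(\cdot,t)\|_X^2\le C\big(E_{\rm 1D}(\phi_0),\|\phi_0\|_2\big)<\infty$, independently of the signs of $\beta$ and $\lambda$ and of $\vep$, and the blow-up alternative forces $T_{\rm max}=\infty$. The main obstacle is the mapping lemma of the second paragraph: pinning down the decomposition of $\p_{zz}U_\vep^{1D}$ and thereby showing that the seemingly derivative-loaded nonlocal operator is in fact $L^p$-bounded --- this is precisely the structural feature that makes the one-dimensional case unconditionally globally well-posed, in sharp contrast with the quasi-2D equation (\ref{eq:gpe2d2:sec8}), where the analogue of this operator genuinely introduces a derivative nonlinearity and only weak solutions are available.
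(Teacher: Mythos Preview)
Your argument is correct and matches the approach the paper points to (the paper does not give a detailed proof here but cites \cite{BaoBenCai} and remarks that the proof is analogous to Theorem~\ref{thm:thm1dy:sec8}). The key step is indeed your mapping lemma showing that $\rho\mapsto\partial_{zz}(U_\vep^{1D}*\rho)$ is zeroth-order; the paper's preferred route to this is via the Fourier formula (\ref{eq:u1dcov:sec8}), from which one reads off directly that $|\xi|^2\,\widehat{U_\vep^{1D}}(\xi)=\frac{\sqrt2\,\vep}{\sqrt\pi}\int_0^\infty\frac{|\xi|^2}{|\xi|^2+s}\,e^{-\vep^2 s/2}\,ds\le\frac{2\sqrt2}{\sqrt\pi\,\vep}$ is a bounded multiplier, which is equivalent to (and a shortcut for) your physical-space decomposition $\partial_{zz}U_\vep^{1D}=c_0\delta_0+K_\vep$ with $K_\vep\in L^1\cap L^\infty$.
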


\subsection{Convergence rate of dimension reduction}
In this section, we  discuss the dimension reduction of 3D GPPS to
lower dimensions. In lower dimensions,
 we require that in the quasi-2D case, $\beta=O(\vep)$, $\lambda=O(\vep)$, and in
 the quasi-1D case, $\beta=O(\vep^2)$, $\lambda=O(\vep^2)$, i.e. we are considering
 the weak interaction regime, then we would get an $\vep$-independent limiting equation.
 In this regime, we will see that  GPPS will reduce to a regular GPE in lower dimensions \cite{BaoBenCai}.

\subsubsection{Reduction to 2D}
 We consider the weak interaction regime, i.e.,   $\beta\to \vep\beta$, $\lambda\to\vep\lambda$.
In {\sl{Case I}} (\ref{eq:case1:sec8}),  for full 3D GPPS
(\ref{eq:gpe:sec8})-(\ref{eq:poisson:sec8}), introduce the re-scaling $z\to \vep z$,
$\psi\to \vep^{-1/2}\psi^\vep$ which preserves the normalization,
then
 \bea\label{eq:rescalgpe2:sec8}
 i\p_t\psi^\vep(\bx_{\perp},z,t)=\left[\bH_{\bx_{\perp}}^V+\frac{1}{\vep^2}\bH_z+(\beta-\lambda)|\psi^\vep|^2-3
 \vep\lambda\p_{\bn_\vep\bn_\vep}\varphi^\vep\right] \psi^\vep,
 \eea
 where $\bx_{\perp}=(x,y) \in \Bbb R^2$ and
 \bea
 &&\bH_{\bx_{\perp}}^V=-\frac{1}{2}(\p_{xx}+\p_{yy})+V_2(x,y),\qquad \bH_z=-\frac12\p_{zz}+\frac{z^2}{2},\\
 &&\bn_\vep=(n_1,n_2,n_3/\vep),\qquad \p_{\bn_\vep}=\bn_\vep\cdot\nabla,\qquad
 \p_{\bn_\vep\bn_\vep}=\p_{\bn_\vep}(\p_{\bn_\vep}),\\
 &&(-\p_{xx}-\p_{yy}-\frac{1}{\vep^2}\p_{zz})\varphi^\vep=\frac{1}{\vep}|\psi^\vep|^2,\qquad
 \lim\limits_{|\bx|\to \infty}\varphi^\vep(\bx_{\perp},z,t)=0.\label{eq:rescalps:sec8}
 \eea
 It is well-known that $\bH_z$ has eigenvalues $\mu_k=k+1/2$ with corresponding
 eigenfunction $w_k(z)$ ($k=0,1,\ldots$), where $\{w_k\}_{k=0}^\infty$ forms an
 orthornormal   basis  of $L^2(\Bbb R)$ \cite{GS}, specifically, $w_0(z)=
 \frac{1}{\pi^{1/4}}e^{-z^2/2}$.  It is convenient to
 consider the initial data concentrated on the ground mode of $\bH_z$, i.e.,
 \be\label{eq:init:sec8}
 \psi^\vep(\bx_{\perp},z,0)=\phi_0(\bx_{\perp})w_0(z),\quad \phi_0\in X(\Bbb R^2) \mbox{ and } \|\phi_0\|_{L^2(\Bbb R^2)}=1.
 \ee

In {\sl{Case I}} (\ref{eq:case1:sec8}),
 when $\vep\to0^+$, quasi-2D equation I (\ref{eq:gpe2d:sec8}), II (\ref{eq:gpe2d2:sec8})
 will yield  an $\vep$-independent equation in the weak interaction regime,
 \be\label{eq:2d:weak:sec8}
 i\p_t \phi(\bx_{\perp},,t)=\bH_{\bx_{\perp}}^V\phi+\frac{\beta-(1-3n_3^2)\lambda}{\sqrt{2\pi}}
 |\phi|^2\phi,\quad \bx_{\perp}=(x,y)\in\Bbb R^2,
 \ee
 with initial condition $\phi(\bx_{\perp},0)=\phi_0(\bx_{\perp})$.
 \begin{theorem}\label{thm:redthm1:sec8}(Dimension reduction to 2D) Suppose $V_2$ satisfies
condition (\ref{eq:cond:v2:sec8}), $-\frac{\beta}{2}\leq \lambda \leq\beta$
and $\beta\ge0$, let $\psi^\vep\in C([0,\infty);X(\Bbb R^3))$ and
 $\phi\in C([0,\infty);X(\Bbb R^2))$ be the unique solutions of equations (\ref{eq:rescalgpe2:sec8})-(\ref{eq:init:sec8})
 and (\ref{eq:2d:weak:sec8}), respectively, then for any $T>0$, there exists $C_{T}>0$ such that
\be \left\|\psi^\vep(\bx_{\perp},z,t)-e^{-i\frac{\mu_0
t}{\vep^2}}\phi(\bx_{\perp},t)w_0(z)\right\|_{L^2(\Bbb R^3)}\leq
C_{T}\,\vep, \qquad \forall t\in [0,T]. \ee
\end{theorem}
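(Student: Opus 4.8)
The plan is to treat this as a singular perturbation / averaging problem, exactly in the spirit of Theorem \ref{thm:dred}, but now with the extra complication of the dipolar (nonlocal) nonlinearity. First I would pass to the filtered unknown $\Phi^\vep(\bx_\perp,z,t)=e^{it\bH_z/\vep^2}\psi^\vep$, so that \eqref{eq:rescalgpe2:sec8} becomes $i\p_t\Phi^\vep=\bH^V_{\bx_\perp}\Phi^\vep+G(t/\vep^2,\Phi^\vep)$, where $G$ collects the cubic local term and the dipolar term conjugated by $e^{is\bH_z}$. Since $\bH_z$ has integer-plus-half eigenvalues, $e^{is\bH_z}$ is $2\pi$-antiperiodic and the quadratic combinations $e^{is\bH_z}(|e^{-is\bH_z}\cdot|^2e^{-is\bH_z}\cdot)$ are $2\pi$-periodic in $s$; the same holds for the dipolar piece once we observe that the Poisson solve \eqref{eq:rescalps:sec8} is itself $\vep$-dependent but, after expansion in the Hermite basis $\{w_k\}$, decomposes into modewise Fourier-multiplier operators whose $s$-dependence is again through the bounded periodic exponentials $e^{is(\bH_z\otimes\mathrm{stuff})}$. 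Thus $G$ is almost periodic (indeed periodic) and its average $G_{\rm av}$ is a genuine integral over $[0,2\pi]$.

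Next I would compute $G_{\rm av}$ explicitly on data polarized on the ground mode. Writing $\Phi^\vep=\sum_k\varphi_k^\vep(\bx_\perp,t)w_k(z)$, the argument used after Theorem \ref{thm:dred} shows that resonances survive averaging only when the mode indices balance; for ground-mode initial data \eqref{eq:init:sec8} the averaged hierarchy is triangular and forces $\varphi_k\equiv0$ for $k\ge1$, leaving a single closed equation for $\varphi_0$. The coefficient of the cubic term is $\langle w_0^4\rangle=1/\sqrt{2\pi}$ for the local part; for the dipolar part I would use the Fourier representation \eqref{eq:four11:sec8} of $U_{\rm dip}$ together with the Gaussian profile $w_0$ to carry out the $z$-integration, which produces exactly the factor $-(1-3n_3^2)/\sqrt{2\pi}$ after accounting for the weak-interaction scaling $\beta\to\vep\beta$, $\lambda\to\vep\lambda$ (the $\vep$ in the Poisson right-hand side cancels the $\vep$ in front of $\p_{\bn_\vep\bn_\vep}$ in \eqref{eq:rescalgpe2:sec8} in the limit). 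This reproduces \eqref{eq:2d:weak:sec8}.

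For the error estimate I would follow the scheme of \cite{bacm,BenCai}: establish uniform-in-$\vep$ bounds for $\Phi^\vep$ and $\widetilde\Phi:=e^{i\mu_0 t/\vep^2}\phi\, w_0$ in $C([0,T];\calB_\ell)$ for suitable $\ell$ (here using that $\beta\ge0$, $-\tfrac\beta2\le\lambda\le\beta$, which by the 3D well-posedness/global existence theorem gives global solutions and the conserved energy controls the $X(\Bbb R^3)$ norm, hence no maximal-time issue on $[0,T]$), then write the Duhamel formula for the difference, split $G-G_{\rm av}=\frac{d}{ds}\big|_{s=t/\vep^2}(\vep^2 K(s,\cdot))$ for a bounded, periodic, zero-mean primitive $K$, and integrate by parts in time to gain the factor $\vep^2$; the remaining commutator-type terms $[\bH^V_{\bx_\perp},K]$ and the $\vep$ losses coming from the dipolar multiplier expansion reduce the gain to order $\vep$, and a Gronwall argument closes the estimate $\|\Phi^\vep-\widetilde\Phi\|_{L^2}\le C_T\vep$, which is the claim. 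The main obstacle I anticipate is controlling the dipolar term uniformly in $\vep$: the operator $\vep\lambda\,\p_{\bn_\vep\bn_\vep}\varphi^\vep$ involves two derivatives in the strongly-confined direction scaled by $1/\vep$, so naively it is $O(1/\vep)$; one must exploit the Poisson equation \eqref{eq:rescalps:sec8} and the boundedness of the Fourier multiplier in \eqref{eq:four11:sec8} (whose symbol $-1+3(\bn\cdot\xi)^2/|\xi|^2$ is bounded on all of $\Bbb R^3$) to see that the apparent singularity is an artifact, and then track how the residual $\vep$-dependence of $U^{2D}_\vep$ (cf. \eqref{eq:u2d1:sec8}) versus its limit $U^{2D}_{\rm dip}$ contributes only $O(\vep)$ after pairing against the concentrated ground-mode profile. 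Making these two reductions rigorous in the scale $\calB_\ell$, rather than just $L^2$, is where the real work lies.
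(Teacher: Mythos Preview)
Your proposal is correct and follows essentially the same approach the paper indicates: this is the averaging technique of \cite{bacm,BenCai} outlined in Section~\ref{subsec:dredrate} (filtered unknown, periodicity of the conjugated nonlinearity via the integer spectrum of $\bH_z$, identification of $G_{\rm av}$ on ground-mode--polarized data, then Duhamel plus integration by parts in the fast time to gain $\vep$), adapted to the GPPS reformulation; the paper itself does not reproduce the proof but refers to \cite{BaoBenCai}, and your handling of the dipolar term via the bounded Fourier multiplier \eqref{eq:four11:sec8} is exactly the mechanism used there. One small correction: the local cubic in \eqref{eq:rescalgpe2:sec8} carries the coefficient $(\beta-\lambda)$, so averaging it against $w_0$ yields $(\beta-\lambda)/\sqrt{2\pi}$, and the dipolar contribution must then supply $+3\lambda n_3^2/\sqrt{2\pi}$ (not $-(1-3n_3^2)/\sqrt{2\pi}$) to recover the coefficient in \eqref{eq:2d:weak:sec8}.
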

\subsubsection{Reduction to 1D}
In this case, we again consider the weak interaction regime $\beta\to \vep^{2}\beta$,
 $\lambda\to\vep^{2}\lambda$.
In {\sl{Case II}} (\ref{eq:case2:sec8}),  for the full 3D GPPS
(\ref{eq:gpe:sec8})-(\ref{eq:poisson:sec8}), introducing the re-scaling $x\to
\vep x$, $y\to \vep y$, $\psi\to \vep^{-1}\psi^\vep$ which preserves the
normalization,  then
 \be\label{eq:rescalgpe1:sec8}
 i\p_t\psi^\vep(\bx_{\perp},z,t)=\left[\bH_z^V+\frac{1}{\vep^2}\bH_{\bx_{\perp}}+(\beta-\lambda)|\psi^\vep|^2-3
 \vep\lambda\p_{\tilde{\bn}_{\vep}\tilde{\bn}_\vep}\varphi^\vep \right]\psi^\vep,
 \ee
 where $\bx_{\perp}=(x,y)\in\Bbb R^2$, $z\in\Bbb R$ and
 \bea
 &&\bH_z^V=-\frac{1}{2}\p_{zz}+V_1(z),\quad \bH_{\bx_{\perp}}=-\frac12(\p_{xx}+\p_{yy}+x^2+y^2),\\
 &&\tilde{\bn}_\vep=(n_1/\vep,n_2/\vep,n_3),\quad \p_{\tilde{\bn}_\vep}=\tilde{\bn}_\vep
 \cdot\nabla,\quad \p_{\tilde{\bn}_\vep\tilde{\bn}_\vep}=\p_{\tilde{\bn}_\vep}(\p_{\tilde{\bn}_\vep}),\\
 &&(-\frac{1}{\vep^2}\p_{xx}-\frac{1}{\vep^2}\p_{yy}-\p_{zz})\varphi^\vep=\frac{1}{\vep^2}|\psi^\vep|^2,\qquad
 \lim\limits_{|\bx|\to \infty}\varphi^\vep(\bx_{\perp},z,t)=0.\label{eq:rescalps1:sec8}
 \eea
 Note that the ground state mode of $\bH_{\bx_{\perp}}$ is given by $w_0(x)w_0(y)$ with eigenvalue 1,
 and the initial data is then assumed to be
\be\label{eq:init1:sec8}
 \psi^\vep(\bx_{\perp},z,0)=\phi_0(z)w_0(x)w_0(y),\quad \phi_0\in X(\Bbb R) \mbox{ and } \|\phi_0\|_{L^2(\Bbb R)}=1.
 \ee

In {\sl{Case II}} (\ref{eq:case2:sec8}),
 when $\vep\to0^+$, the quasi-1D equation  (\ref{eq:gpe1d:sec8}) will lead to an $\vep$-independent equation in the weak interaction regime,
 \be\label{eq:1d:weak:sec8}
 i\p_t \phi(z,t)=\bH_z^V\phi+\frac{\beta+\frac12\lambda (1-3n_3^2)}{2\pi}|\phi|^2\phi,\qquad z\in\Bbb
 R,\quad t>0,
 \ee
 with the initial condition $\phi(z,0)=\phi_0(z)$.

We can prove the following results \cite{BaoBenCai}.
 \begin{theorem}(Dimension reduction to 1D)
 Suppose the real-valued trap
potential satisfies  $V_1(z)\ge0$ for $z\in{\Bbb R}$ and $V_1(z)\in
C^\infty(\Bbb R)$, $D^k V_1(z)\in L^\infty(\Bbb R)$ for all
$k\ge 2$. Assume $-\frac{\beta}{2}\leq \lambda \leq\beta$ and
$\beta\ge0$, and let $\psi^\vep\in C([0,\infty);X(\Bbb R^3))$ and $\phi\in
C([0,\infty);X(\Bbb R))$ be the unique solutions of the equations
(\ref{eq:rescalgpe1:sec8})-(\ref{eq:init1:sec8}) and (\ref{eq:1d:weak:sec8}),
respectively, then for any $T>0$, there exists $C_{T}>0$ such that
\be
\left\|\psi^\vep(\bx_{\perp},z,t)-e^{-it/\vep^2}\phi(z,t)w_0(x)w_0(y)\right\|_{L^2(\Bbb
R^3)}\leq C_{T}\,\vep, \qquad \forall t\in [0,T]. \ee
\end{theorem}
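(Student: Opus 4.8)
The plan is to reduce the statement to the abstract averaging result of Theorem~\ref{thm:dred}. First I would pass to the filtered unknown $\Psi^\vep=e^{it\bH_{\bx_\perp}/\vep^2}\psi^\vep$, where $\bH_{\bx_\perp}=-\tfrac12(\p_{xx}+\p_{yy}+x^2+y^2)$ is the transverse $2$D harmonic oscillator appearing in the rescaled equation~(\ref{eq:rescalgpe1:sec8}). Because $\bH_{\bx_\perp}$ has purely integer spectrum (its ground eigenvalue being $1$, with eigenfunction $w_0(x)w_0(y)$), the conjugated equation takes the form $i\p_t\Psi^\vep=\bH_z^V\Psi^\vep+F(t/\vep^2,\Psi^\vep)$ with $F(s,\cdot)$ $2\pi$-periodic in $s$, exactly as in~(\ref{eq:filtre1:sec2})--(\ref{eq:Fdef:sec2}). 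The only nontrivial point in this reduction is to absorb the $\vep$-singular dipolar term $-3\vep\lambda\,\p_{\tilde{\bn}_\vep\tilde{\bn}_\vep}\varphi^\vep$: expanding $\p_{\tilde{\bn}_\vep\tilde{\bn}_\vep}$ in terms of $\p_x,\p_y,\p_z$ produces prefactors $\vep^{-1}$, but these are compensated by the scaling of $\varphi^\vep$ through the rescaled Poisson equation~(\ref{eq:rescalps1:sec8}); on the Fourier side this amounts to checking that the symbol $\vep\,(\tilde{\bn}_\vep\cdot\xi)^2/(\vep^{-2}(\xi_1^2+\xi_2^2)+\xi_3^2)$ stays bounded (and in fact converges) as $\vep\to0$, so that $F(s,\cdot)$ is bounded on the relevant weighted Sobolev scale uniformly in $\vep$ and inherits the periodicity.

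Next I would set up the functional framework: define the scale $\calB_\ell$ associated with $\bH_z^V$ and $\bH_{\bx_\perp}$ (the $1$D-confinement analogue of the spaces used in Theorem~\ref{thm:dred}), prove that both the filtered equation and the averaged equation are well posed on a common interval $[0,T_0]$ with $\calB$-bounds depending only on the initial datum, and then extend these bounds to an arbitrary $[0,T]$ using the global well-posedness of the $3$D Gross--Pitaevskii--Poisson system under the hypotheses $\beta\ge0$, $-\beta/2\le\lambda\le\beta$, together with mass and energy conservation. The extra regularity $\calB_{m+4}$ (as opposed to $\calB_m$ in the conclusion) is precisely what is needed to control the commutators $\Pi\calA-\calA\Pi$ and the Lipschitz estimates on $F$ in the averaging argument, so that the constant $C_T$ in the final bound will depend on a higher-order norm of $\phi_0$.

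With these ingredients, Theorem~\ref{thm:dred}(ii) applies and yields $\|\Psi^\vep(t)-\widetilde\Psi(t)\|_{\calB_m}\le C_T\,\vep$ for $t\in[0,T]$, where $\widetilde\Psi$ solves the averaged problem $i\p_t\widetilde\Psi=\bH_z^V\widetilde\Psi+F_{\rm av}(\widetilde\Psi)$ with $F_{\rm av}=\tfrac1{2\pi}\int_0^{2\pi}F(s,\cdot)\,ds$ and $\widetilde\Psi(0)=\phi_0(z)w_0(x)w_0(y)$. Expanding $\widetilde\Psi$ in the transverse Hermite basis $\{w_k(x)w_l(y)\}$ and noting, as in the disk-shaped computation of the excerpt, that $F_{\rm av}$ retains only resonant frequency combinations, one checks that an initial datum polarized on the ground mode stays polarized for all time, so $\widetilde\Psi(\bx_\perp,z,t)=\phi(z,t)\,w_0(x)w_0(y)$ with $\phi$ solving the single equation~(\ref{eq:1d:weak:sec8}); the effective coefficient $\frac{\beta+\frac12\lambda(1-3n_3^2)}{2\pi}$ is produced by the averaged self-interaction integral $\langle w_0^2(x)w_0^2(y)\rangle$ together with the $\vep\to0$ limit of the dipolar symbol evaluated on the ground mode. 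Undoing the filtering and using $\bH_{\bx_\perp}\bigl(w_0(x)w_0(y)\bigr)=w_0(x)w_0(y)$, i.e.\ $e^{-it\bH_{\bx_\perp}/\vep^2}\bigl(\phi\,w_0w_0\bigr)=e^{-it/\vep^2}\phi\,w_0w_0$, gives the claimed bound $\|\psi^\vep(t)-e^{-it/\vep^2}\phi(z,t)w_0(x)w_0(y)\|_{L^2(\Bbb R^3)}\le C_T\,\vep$.

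I expect the main obstacle to be the first step: establishing, uniformly in $\vep$, that the nonlocal dipolar contribution fits into the periodic-averaging framework. This demands a careful Fourier analysis of the operator $\vep\,\p_{\tilde{\bn}_\vep\tilde{\bn}_\vep}\bigl(-\vep^{-2}\Delta_{\bx_\perp}-\p_{zz}\bigr)^{-1}$ --- control of its $\calB_\ell$-boundedness, of its dependence on $\vep$, and of the limiting operator that enters $F_{\rm av}$ --- which is the genuinely new ingredient compared with the non-dipolar dimension-reduction results; the remainder of the argument is a fairly direct adaptation of the proof of Theorem~\ref{thm:dred} and of the disk-shaped reduction already carried out in the excerpt.
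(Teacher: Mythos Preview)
The paper does not prove this theorem: it only states the result and refers the reader to \cite{BaoBenCai}. Your outline is the natural strategy and is in line with what that reference does, namely adapt the periodic-averaging machinery behind Theorem~\ref{thm:dred} to the rescaled Gross--Pitaevskii--Poisson system, with the dipolar multiplier as the one genuinely new ingredient.

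One bookkeeping point: the operator you single out at the end, $\vep\,\p_{\tilde{\bn}_\vep\tilde{\bn}_\vep}\bigl(-\vep^{-2}\Delta_{\bx_\perp}-\p_{zz}\bigr)^{-1}$, is missing the factor $\vep^{-2}$ coming from the right-hand side of the rescaled Poisson equation~(\ref{eq:rescalps1:sec8}). The actual Fourier multiplier acting on $|\psi^\vep|^2$ is
\[
3\lambda\,\frac{(\bn_\perp\cdot\xi_\perp+\vep\, n_3\xi_3)^2}{|\xi_\perp|^2+\vep^2\xi_3^2},
\]
which is uniformly bounded by $3|\lambda|$ and converges pointwise (for $\xi_\perp\neq 0$) to $3\lambda\,(\bn_\perp\cdot\xi_\perp)^2/|\xi_\perp|^2$. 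It is this limit that, after projection onto $w_0(x)w_0(y)$ and angular averaging in $\xi_\perp$, produces the coefficient $\bigl(\beta+\tfrac{1}{2}\lambda(1-3n_3^2)\bigr)/(2\pi)$ in~(\ref{eq:1d:weak:sec8}). With that correction your Fourier analysis goes through; the remaining steps (filtering, periodicity of $F$, Lipschitz bounds on $\calB_\ell$, polarization of the averaged dynamics on the ground mode) are exactly as you describe.
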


\subsection{Numerical methods for computing  ground states}
In this section, we present efficient and accurate numerical methods for computing ground states  of dipolar BECs, based on the new formulation GPPS (\ref{eq:gpe:sec8})-(\ref{eq:poisson:sec8}) of dipolar GPE (\ref{eq:ngpe1:sec8}) in 3D.

The difficulty of computing dipolar GPE mainly comes from the dipolar term.
In most of the numerical
methods used in the literatures for theoretically and/or numerically
studying the ground states  of dipolar BECs, the way to
deal with the convolution in (\ref{eq:ngpe1:sec8}) is usually to use the
Fourier transform \cite{Ronen,Xiong}.
However, due to the high singularity in the dipolar interaction
potential (\ref{eq:kel:sec8}), there are two drawbacks in these numerical
methods: (i) the Fourier transforms of the dipolar interaction
potential (\ref{eq:kel:sec8}) and the density function $|\psi|^2$ are
usually carried out in the continuous level on the whole space
${\Bbb R}^3$  and in the discrete
level on a bounded computational domain $U$, respectively, and due
to this mismatch, there is a locking phenomena in practical
computation as observed in \cite{Ronen}; (ii) the second term in the
Fourier transform of the dipolar interaction potential is
$\frac{0}{0}$-type for $0$-mode, i.e when $\xi=0$ (see
(\ref{eq:four11:sec8}) for details), and it is artificially omitted when
$\xi=0$ in practical computation
\cite{Ronen,Yi1,Xiong} thus this may cause
some numerical problems too. With  formulation (\ref{eq:gpe:sec8})-(\ref{eq:poisson:sec8}),
 new numerical methods for computing ground states and
dynamics of dipolar BECs can be constructed, which can avoid the above two drawbacks and
thus they are more accurate than those currently used in the
literatures.

Based on the new mathematical formulation for the energy associated with GPPS (\ref{eq:gpe:sec8})-(\ref{eq:poisson:sec8}) in
(\ref{eq:energydp:sec8}), we will present an efficient and accurate backward
Euler sine pseudospectral (BESP) method for computing the ground states of
a dipolar BEC.

In practice, the whole space problem is usually truncated into a
bounded computational domain $U=[a,b]\tm[c,d]\tm[e,f]$ with
homogeneous  Dirichlet boundary condition.  We adopt the method
of gradient flow with discrete normalization (GFDN) as in section \ref{sec:numgs}: choose a time step
$\tau>0$ and set $t_n=n\; \tau$ for $n=0,1,\ldots$  Applying
the steepest decent method to the energy functional $E_{\rm 3D}(\phi)$ in
(\ref{eq:energydp:sec8}) without the normalization constraint (\ref{eq:norm3d:sec8}), and then
projecting the solution back to the unit sphere $S$ at the end of
each time interval $[t_n,t_{n+1}]$ in order to satisfy the
constraint (\ref{eq:norm3d:sec8}). Then GFDN for computing ground state of the GPPS (\ref{eq:gpe:sec8})-(\ref{eq:poisson:sec8}) is \cite{BaoCaiWang}: \bea \label{eq:ngf1:sec8} &&\p_t
\phi(\bx,t)=\left[\fl{1}{2 }\btd^2
-V(\bx)-(\beta-\lambda)|\phi(\bx,t)|^2+
3\lambda \p_{\bf{nn}}\varphi(\bx,t)\right]\phi(\bx,t),    \qquad\\
\label{eq:ngf21:sec8}&&\nabla^2 \varphi(\bx,t) = -|\phi(\bx,t)|^2, \qquad
\qquad \bx\in U,\quad  t_n \leq t < t_{n+1}, \\
\label{eq:ngf2:sec8} &&\phi(\bx,t_{n+1}):=
\phi(\bx,t_{n+1}^+)=\fl{\phi(\bx,t_{n+1}^-)}{\|\phi(\cdot,t_{n+1}^-)\|_2},
\qquad \bx\in U, \quad n\ge 0,\\
&&\left.\phi(\bx,t)\right|_{\bx\in\p U}=\left.\varphi(\bx,t)\right|_{\bx\in\p U}=0,
\qquad t\ge0,\\
 \label{eq:ngf3:sec8}
 &&\phi(\bx,0)=\phi_0(\bx), \qquad
\qquad \hbox{with}\quad \|\phi_0\|_2=1; \eea where
 $\phi(\bx,
t_n^\pm)=\lim\limits_{t\to t_n^\pm} \phi(\bx,t)$.

Let $M$, $K$ and $L$ be even positive integers and define the index
sets \beas &&{\calT}_{MKL}=\{(j,k,l)\ |\ j=1,2,\ldots,M-1,\
k=1,2,\ldots, K-1, \ l=1,2,\ldots,L-1\}, \\
&&{\calT}_{MKL}^0=\{(j,k,l)\ |\ j=0,1,\ldots,M,\ k=0,1,\ldots, K, \
l=0,1,\ldots,L\}. \eeas Choose the spatial mesh sizes as
$h_x=\frac{b-a}{M}$, $h_y=\frac{d-c}{K}$ and $h_z=\frac{f-e}{L}$ and
define
\[x_j:=a+j\;h_x,\qquad  y_k = c+ k\; h_y,\qquad
z_l = e+ l\; h_z, \qquad (j,k,l)\in {\calT}^0_{MKL}.\] Denote the
space
\[Y_{MKL}={\rm
span}\{\Phi_{jkl}(\bx), \quad (j,k,l)\in{\calT}_{MKL}\},\] with
\be
\Phi_{jkl}(\bx)=\sin\left(\mu_j^x(x-a)\right)\sin\left(\mu_k^y(y-c)\right)
\sin\left(\mu_l^z(z-e)\right), \quad \bx\in U, \ee
\[\mu_j^x =
\fl{\pi j}{b-a}, \qquad \mu_k^y = \fl{\pi k}{d-c}, \qquad \mu_l^z =
\fl{\pi l}{f-e}, \qquad (j,k,l)\in{\calT}_{MKL}; \] and  $P_{MKL}:
Y=\{\varphi\in C(U)\ |\ \varphi(\bx)|_{\bx\in\p U}=0\}\to
Y_{MKL}$ be the standard project operator, i.e.
\[(P_{MKL}v)(\bx)=\sum_{p=1}^{M-1}\sum_{q=1}^{K-1}\sum_{s=1}^{L-1}
\widehat{v}_{pqs}\; \Phi_{pqs}(\bx), \quad \bx\in U,\qquad \forall
v\in Y,
\]
with \be\label{eq:FST:sec8} \widehat{v}_{pqs}=\int_{U} v(\bx)\;
\Phi_{pqs}(\bx)\;d\bx, \qquad (p,q,s)\in{\calT}_{MKL}. \ee Then a
backward Euler sine spectral discretization
for (\ref{eq:ngf1:sec8})-(\ref{eq:ngf3:sec8}) reads:\\
 Find
$\phi^{n+1}(\bx)\in Y_{MKL}$ (i.e. $\phi^{+}(\bx)\in Y_{MKL}$) and
$\varphi^{n}(\bx)\in Y_{MKL}$ such that
\begin{align}
\frac{\phi^{+}(\bx)-\phi^n(\bx)}{\tau}=&-P_{MKL}\left\{\left[V(\bx)+(\beta-\lambda)|\phi^n(\bx)|^2-3\lambda
\p_{{\bf{nn}}}\varphi^n(\bx)\right]\phi^{+}(\bx)\right\}\nn\\
&+\frac{1}{2}\nabla^2
\phi^{+}(\bx), \\
\nabla^2\varphi^n(\bx)=&-P_{MKL}\left(|\phi^n(\bx)|^2\right),\quad
\phi^{n+1}(\bx)=\frac{\phi^{+}(\bx)}{\|\phi^{+}(\bx)\|_2}, \quad
\bx\in U, \end{align} where $n\ge0$ and
$\phi^0(\bx)=P_{MKL}\left(\phi_0(\bx)\right)$ is given.

The above discretization can be solved in phase space and it is not
suitable in practice due to the difficulty of computing the
integrals in (\ref{eq:FST:sec8}). We now present an efficient implementation
by choosing $\phi^0(\bx)$ as the interpolation of $\phi_0(\bx)$ on
the grid points $\{(x_j,y_k,z_l), \ (j,k,l)\in{\calT}_{MKL}^0\}$,
i.e. $\phi^0(x_j,y_k,z_l) =\phi_0(x_j,y_k,z_l)$ for $(j,k,l)\in{\calT}_{MKL}^0$, and approximating the integrals in (\ref{eq:FST:sec8}) by a
quadrature rule on the grid points. Let $\phi_{jkl}^n$ and
$\varphi_{jkl}^n$ be the approximations of $\phi(x_j,y_k,z_l,t_n)$
and $\varphi(x_j,y_k,z_l,t_n)$, respectively, which are the solutions
of (\ref{eq:ngf1:sec8})-(\ref{eq:ngf3:sec8}); denote $V_{jkl}=V(x_j,y_k.z_l)$, $\rho_{jkl}^n=|\phi^n_{jkl}|^2$
and  choose $\phi_{jkl}^0=\phi_0(x_j,y_k,z_l)$ for $(j,k,l)\in {\calT}_{MKL}^0$. For $n=0,1,\ldots$, a backward Euler sine
pseudospectral (BESP) discretization for (\ref{eq:ngf1:sec8})-(\ref{eq:ngf3:sec8}) reads \cite{BaoCaiWang}:
 \begin{align} &\fl{\phi_{jkl}^+-\phi_{jkl}^n}{\tau}=
-\left[V_{jkl}+(\beta-\lambda)
\left|\phi_{jkl}^n\right|^2 -3\lambda\left.\left(\p_{\bn\bn}^s
\varphi^n\right)\right|_{jkl}\right] \phi^+_{jkl}\nn\\
&\qquad\qquad\qquad+\fl{1}{2} \left.\left(\nabla_s^2
\phi^+\right)\right|_{jkl},\quad(j,k,l)\in
{\calT}_{MKL},
\label{eq:discretized1:sec8} \\
&-\left.\left(\nabla_s^2 \varphi^n\right)\right|_{jkl}=
|\phi_{j,k,l}^n|^2=\rho_{jkl}^n, \quad
\phi_{jkl}^{n+1}=\fl{\phi_{jkl}^+}{\|\phi^+\|_{2}}, \quad (j,k,l)\in
{\calT}_{MKL}, \\
&\phi_{0kl}^{n+1}=\phi_{Mkl}^{n+1}=\phi_{j0l}^{n+1}=
\phi_{jKl}^{n+1}=\phi_{jk0}^{n+1}=\phi_{jkL}^{n+1}=0,\quad
(j,k,l)\in {\calT}_{MKL}^0,\\
&\varphi_{0kl}^{n}
=\varphi_{Mkl}^{n}=\varphi_{j0l}^{n}=\varphi_{jKl}^{n}=
\varphi_{jk0}^{n}=\varphi_{jkL}^{n}=0, \quad (j,k,l)\in
{\calT}_{MKL}^0;\label{eq:discretized2:sec8} \end{align}  where $\nabla_s^2$ and
$\p_{\bn\bn}^s$ are sine pseudospectral
 approximations of $\nabla^2$ and $\p_{\bn\bn}$, respectively,
 defined for $(j,k,l)\in\calT_{MKL}$ as
 {\small\begin{align}&\left.\left(\nabla_s^2 \phi^n\right)\right|_{jkl} =
-\sum_{p=1}^{M-1}\sum_{q=1}^{K-1}\sum_{s=1}^{L-1}
\lambda_{pqs}\widetilde{(\phi^n)}_{pqs}
\sin\left(\frac{jp\pi}{M}\right)\sin\left(\frac{kq\pi}{K}\right)
\sin\left(\frac{ls\pi}{L}\right), \nn \\
&\left.\left(\p_{\bn\bn}^s
\varphi^n\right)\right|_{jkl}=\sum_{p=1}^{M-1}\sum_{q=1}^{K-1}\sum_{s=1}^{L-1}
\frac{\widetilde{(\rho^n)}_{pqs}}{(\mu_p^x)^2+(\mu_q^y)^2+(\mu_s^z)^2}
\left.\left(\p_{\bn\bn}\Phi_{pqs}(\bx)\right)\right|_{(x_j,y_k,z_l)}, \label{eq:dstpp:sec8}
 \end{align}}
with $\lambda_{pqs}=(\mu_p^x)^2+(\mu_q^y)^2+(\mu_s^z)^2$, $\widetilde{(\phi^n)}_{pqs}$ ($(p,q,s)\in{\calT}_{MKL})$ the
discrete sine transform coefficients of the vector $\phi^n$  for $(p,q,s)\in {\calT}_{MKl}$ as
{\small \be\label{eq:dst11:sec8}\widetilde{(\phi^n)}_{pqs}=
\frac{8}{MKL}\sum_{j=1}^{M-1}\sum_{k=1}^{K-1}\sum_{l=1}^{L-1}
\phi^n_{jkl}
\sin\left(\frac{jp\pi}{M}\right)\sin\left(\frac{kq\pi}{K}\right)
\sin\left(\frac{ls\pi}{L}\right), \ee } and the discrete norm is defined as
\[ \|\phi^+\|_{2}^2 = h_xh_yh_z\sum_{j=1}^{M-1}\sum_{k=1}^{N-1}\sum_{l=1}^{L-1}
|\phi_{jkl}^+|^2.\] Similar as those in section \ref{subsec:besp} (cf. \cite{BaoChernLim}), the linear
system (\ref{eq:discretized1:sec8})-(\ref{eq:discretized2:sec8}) can be iteratively
solved in phase space very efficiently via discrete sine transform
and we omit the details here for brevity.
\subsection{Time splitting scheme for dynamics}
Similarly, based on the new Gross-Pitaevskii-Poisson type system
(\ref{eq:gpe:sec8})-(\ref{eq:poisson:sec8}), we will present an efficient and
accurate time-splitting sine pseudospectral (TSSP) method for
computing the dynamics of a dipolar BEC.

Again, in practice, the whole space problem is truncated into a
bounded computational domain $U=[a,b]\tm[c,d]\tm[e,f]$ with
homogeneous  Dirichlet boundary condition. From time $t=t_n$ to time
$t=t_{n+1}$, the Gross-Pitaevskii-Poisson type system
(\ref{eq:gpe:sec8})-(\ref{eq:poisson:sec8}) is solved in two steps.  One solves
first \be\label{eq:fgpe:sec8}i \p_t
\psi(\bx,t)=-\fl{1}{2}\nabla^2\psi(\bx,t), \quad \bx\in U, \quad
\left.\psi(\bx,t)\right|_{\bx\in\p U}=0, \quad t_n\le t\le
t_{n+1},\ee for the time step of length $\tau$, followed by
solving \bea\label{eq:ode11:sec8} &&i \p_t
\psi(\bx,t)=\left[V(\bx)+(\beta-\lambda) |\psi(\bx,t)|^2-3\lambda
\p_{\bn\bn} \varphi(\bx,t)
\right]\psi(\bx,t),  \\
\label{eq:poisson11:sec8}&&\nabla^2 \varphi(\bx,t) = -|\psi(\bx,t)|^2,\qquad
 \bx\in U, \qquad t_n\le t \le t_{n+1}; \\ \label{eq:bond123:sec8}
 &&\left.\varphi(\bx,t)\right|_{\bx\in\p U}=0, \qquad \left.\psi(\bx,t)\right|_{\bx\in\p U}=0,
 \qquad t_n\le t \le t_{n+1};\eea
for the same time step. Eq. (\ref{eq:fgpe:sec8}) will be discretized in
space by sine pseudospectral method and integrated in time {\sl
exactly}. For $t\in[t_n,t_{n+1}]$, the equations
(\ref{eq:ode11:sec8})-(\ref{eq:bond123:sec8}) leave $|\psi|$ and $\varphi$ invariant
in $t$  and therefore they collapses
to  {\small\begin{align}\label{eq:ode111:sec8} &i \p_t
\psi(\bx,t)=\left[V(\bx)+(\beta-\lambda) |\psi(\bx,t_n)|^2-3\lambda
\p_{\bn\bn} \varphi(\bx,t_n)
\right]\psi(\bx,t),  \ t_n\le t \le t_{n+1},  \\
\label{eq:poisson113:sec8}&\nabla^2 \varphi(\bx,t_n) =
-|\psi(\bx,t_n)|^2,\qquad
 \bx\in U.
\end{align}}Again, equation (\ref{eq:poisson113:sec8}) will be discretized in
space by sine pseudospectral method \cite{BaoZhang,ShenTang,ShenTangWang,BaoCaiWang} and the linear
ODE (\ref{eq:ode111:sec8}) can be integrated in time {\sl exactly}.

Let $\psi_{jkl}^n$ and $\varphi_{jkl}^n$ be the approximations of
$\psi(x_j,y_k,z_l,t_n)$ and $\varphi(x_j,y_k,z_l,t_n)$,
respectively, which are the solutions of (\ref{eq:gpe:sec8})-(\ref{eq:poisson:sec8});
and choose $\psi^0_{jkl}=\psi_0(x_j,y_k,z_l)$ for $(j,k,l)\in {\calT}_{MKL}^0$. For $n=0,1,\ldots$,  a second-order TSSP method for
solving (\ref{eq:gpe:sec8})-(\ref{eq:poisson:sec8}) via the standard Strang
splitting is \cite{BaoCaiWang}
{\small \begin{align}\label{eq:tssp1:sec8}
 &\psi^{(1)}_{jkl}=\sum_{p=1}^{M-1}\sum_{q=1}^{K-1}\sum_{s=1}^{L-1}
e^{-i\tau\frac{(\mu_p^x)^2+(\mu_q^y)^2+(\mu_r^z)^2}{4}
}\;\widetilde{(\psi^n)}_{pqs}\sin\left(\frac{jp\pi}{M}
\right)\sin\left(\frac{kq\pi}{K}\right)
\sin\left(\frac{ls\pi}{L}\right), \nn \\
 &\psi^{(2)}_{jkl}=
 e^{-i\tau\left[V(x_j,y_k,z_l)+(\bt-\lambda)|\psi_{jkl}^{(1)}|^2-3\lambda
\left.\left(\p_{\bn\bn}^s\varphi^{(1)}\right)\right|_{jkl}\right]
}\; \psi_{jkl}^{(1)},
\quad (j,k,l)\in{\calT}_{MKL}^0, \\
&\psi^{n+1}_{jkl}=\sum_{p=1}^{M-1}\sum_{q=1}^{K-1}\sum_{s=1}^{L-1}
e^{-i\tau\frac{(\mu_p^x)^2+(\mu_q^y)^2+(\mu_r^z)^2}{4}
}\;\widetilde{(\psi^{(2)})}_{pqs}\sin\left(\frac{jp\pi}{M}
\right)\sin\left(\frac{kq\pi}{K}\right)
\sin\left(\frac{ls\pi}{L}\right); \nn
 \end{align}}
where $\widetilde{(\psi^n)}_{pqs}$ and
$\widetilde{(\psi^{(2)})}_{pqs}$ ($(p,q,s)\in{\calT}_{MKL}$) are
the discrete sine transform coefficients of the vectors $\psi^n$ and
$\psi^{(2)}$, respectively (defined similarly as those in
(\ref{eq:dst11:sec8})); and
$\left.\left(\p_{\bn\bn}^s\varphi^{(1)}\right)\right|_{jkl}$
 can be computed as in (\ref{eq:dstpp:sec8}) with
$\rho^n_{jkl}=\rho^{(1)}_{jkl}:=|\psi^{(1)}_{jkl}|^2$  for
$(j,k,l)\in {\calT}_{MKL}^0$.

The above method is explicit and unconditionally stable. The memory
cost is $O(MKL)$ and the computational cost per time step is
$O\left(MKL\ln(MKL)\right)$.

\subsection{Numerical results}
In this section, we first compare our new
methods and the standard method used in the literatures
\cite{Yi1,Xiong} to evaluate numerically the dipolar
energy and then report ground states and dynamics of dipolar BECs by
using our new numerical methods.

\begin{table}[htb]
\begin{center}
\begin{tabular}{c|cc|cc|cc}
  \hline
  &\multicolumn{2}{c|}{Case I}&\multicolumn{2}{c|}{Case II}&\multicolumn{2}{c}{Case III}\\ \cline{2-7}
  &DST &DFT &DST &DFT &DST &DFT\\\hline
 $h=1$ &2.756E-2 &2.756E-2 &3.555E-18 &1.279E-4 &0.1018
 &0.1020\\
 $h=0.5$ &1.629E-3 &1.614E-3  &9.154E-18 &1.278E-4
 &9.788E-5
 &2.269E-4\\
 $h=0.25$ &1.243E-7  &1.588E-5 &7.454E-17 &1.278E-4
&6.406E-7 &1.284E-4  \\
  \hline
\end{tabular}
\end{center}
\caption{Comparison for evaluating dipolar energy under different
mesh sizes $h$.} \label{tab:1:sec8}
\end{table}

\begin{example}\label{exm:1:sec8} Comparison of different methods.
Let \be \phi:=\phi(\bx) = \pi^{-3/4} \gamma_{x}^{1/2} \gamma_z^{1/4}
e^{-\fl{1}{2}\left( \gamma_x (x^2+y^2)+\gamma_z z^2\right)},\qquad
\bx\in{\Bbb R}^3.\ee Then the
 dipolar energy $E_{\rm dip}(\phi)$ in (\ref{eq:dipp03:sec8})
 can be  evaluated analytically as \cite{Tikhonenkov}
\be E_{\rm dip}(\phi)= -\fl{\lambda\gamma_x \sqrt{\gamma_z}}{4\pi
\sqrt{2\pi} }\left\{
\begin{array}{ll}
  \fl{1+2\chi^2}{1-\chi^2}-\fl{3\chi^2 \rm{arctan} \left(
\sqrt{\chi^2-1}\right)}{(1-\chi^2)\sqrt{\chi^2-1}}, & \chi>1, \\
  0, & \chi =1, \\
  \fl{1+2\chi^2}{1-\chi^2}-\fl{1.5\chi^2
}{(1-\chi^2)\sqrt{1-\chi^2}} \rm{ln} \left(
\fl{1+\sqrt{1-\chi^2}}{1-\sqrt{1-\chi^2}}\right), & \chi <1, \\
\end{array}
\right. \ee
 with $\chi =\sqrt{ \fl{\gamma_z}{\gamma_x}}$. This provides a
 perfect example to test the efficiency of different numerical
 methods to deal with the dipolar potential. Based on our new
 formulation, the dipolar energy can be evaluated via
 discrete sine transform (DST) as
\bea \label{eq:dstdip1:sec8}E_{\rm dip}(\phi)\approx \frac{\lambda
h_xh_yh_z}{2}
\sum_{j=1}^{M-1}\sum_{k=1}^{K-1}\sum_{l=1}^{L-1}|\phi(x_j,y_k,z_l)|^2\left[
-|\phi(x_j,y_k,z_l)|^2-3\left.\left(\p_{\bn\bn}^s\varphi^n
\right)\right|_{jkl}\right], \nn\eea where
$\left.\left(\p_{\bn\bn}^s\varphi^n \right)\right|_{jkl}$ is
computed as in (\ref{eq:dstpp:sec8}) with
$\rho^n_{jkl}=|\phi(x_j,y_k,z_l)|^2$ for $(j,k,l)\in {\calT}_{MKL}^0$. In the literatures
\cite{Yi1,Xiong}, this
dipolar energy is usually calculated via discrete Fourier transform
(DFT) as {\small \begin{equation*}E_{\rm dip}(\phi)\approx \frac{\lambda
h_xh_yh_z}{2}
\sum_{j=0}^{M-1}\sum_{k=0}^{K-1}\sum_{l=0}^{L-1}|\phi(x_j,y_k,z_l)|^2\left[
{\calF}^{-1}_{jkl}\left(\widehat{(U_{\rm
dip})}(2\mu_p^x,2\mu_q^y,2\mu_s^z)\cdot {\calF}_{pqs}(|\phi|^2)\right) \right], \end{equation*}}
where ${\calF}$ and
${\calF}^{-1}$ are the discrete Fourier and inverse Fourier
transforms over the grid points $\{(x_j,y_k,z_l), \ (j,k,l)\in {\calT}_{MKL}^0\}$, respectively \cite{Xiong}. We take $\lambda=8\pi/3$,
the bounded computational domain  $U=[-16,16]^3$, $M=K=L$ and thus
$h=h_x=h_y=h_z=\frac{32}{M}$. Tab.~\ref{tab:1:sec8} lists the errors
$e:=\left|E_{\rm dip}(\phi)-E_{\rm dip}^h\right|$ with $E_{\rm
dip}^h$ computed numerically via either DST (\ref{eq:dstdip1:sec8}) or
DFT with mesh size $h$ for three cases:
\begin{itemize}
\item Case I. $\gamma_x=0.25$ and $\gamma_z=1$, $\chi
=2.0$ and $E_{\rm dip}(\phi) = 0.0386708614$;
\item Case II. $\gamma_x=\gamma_z=1$,
$\chi =1.0$ and  $E_{\rm dip}(\phi)= 0$;
\item Case III. $\gamma_x=2$ and $\gamma_z=1$,
$\chi =\sqrt{0.5}$ and  $E_{\rm dip}(\phi)=-0.1386449741$.
\end{itemize}
\end{example}

\begin{example} Ground states of dipolar BEC.
 Here we report the ground
states of a dipolar BEC (e.g., ${}^{52}$Cr \cite{Parker}) with
different parameters and trapping potentials by using  the numerical method
(\ref{eq:discretized1:sec8})-(\ref{eq:discretized2:sec8}). In our computation and
results, we always use the dimensionless quantities. We take
$M=K=L=128$, time step $\tau=0.01$, dipolar direction
$\bn=(0,0,1)^T$ and the bounded computational domain $U=[-8,8]^3$
for all cases except $U=[-16,16]^3$ for the cases
$\frac{N}{10000}=1,\;5,\;10$ and $U=[-20,20]^3$ for the cases
$\frac{N}{10000}=50,\;100$ in Tab.~\ref{tab:2:sec8}. The ground state
$\phi_g$ is reached numerically when
$\|\phi^{n+1}-\phi^n\|_\infty:=\max\limits_{0\le j\le M,\ 0\le k\le
K,\ 0\le l\le L} |\phi^{n+1}_{jkl}-\phi^n_{jkl}|\le \epsilon:=10^{-6}$
in (\ref{eq:discretized1:sec8})-(\ref{eq:discretized2:sec8}). Tab.~\ref{tab:2:sec8} shows
the energy $E^g:=E_{\rm 3D}(\phi_g)$, chemical potential
$\mu^g:=\mu(\phi_g)$, kinetic energy $E_{\rm kin}^g:=E_{\rm
kin}(\phi_g)$, potential energy $E_{\rm pot}^g:=E_{\rm
pot}(\phi_g)$, interaction energy $E_{\rm int}^g:=E_{\rm
int}(\phi_g)$, dipolar energy $E_{\rm dip}^g:=E_{\rm dip}(\phi_g)$,
condensate widths $\sg_x^g:=\sigma_x(\phi_g)$ and
$\sg_z^g:=\sigma_z(\phi_g)$ in (\ref{eq:def_sigma:sec2}) and central density
$\rho_g({\bf 0}):=|\phi_g(0,0,0)|^2$ with harmonic potential
$V(x,y,z)= \fl{1}{2}\left(x^2+y^2+0.25z^2\right)$ for different
$\beta=0.20716N$ and $\lambda=0.033146N$ with $N$ the total number
of particles in the condensate; and Tab.~\ref{tab:3:sec8} lists similar
results  with $\beta=207.16$ for different values of $-0.5\le
\frac{\lambda}{\beta}\le 1$. In addition, Fig.~\ref{fig:1:sec8} depicts
the ground state $\phi_g(\bx)$, e.g. surface plots of
$|\phi_g(x,0,z)|^2$ and isosurface plots of  $|\phi_g(\bx)|=0.01$,
 of a dipolar BEC with $\beta = 401.432$ and $\lambda
=0.16\beta$ for harmonic potential $V(\bx)=
\fl{1}{2}\left(x^2+y^2+z^2\right)$, double-well potential
$V(\bx)=\fl{1}{2}\left(x^2+y^2+z^2\right)+4e^{-z^2/2}$ and optical
lattice potential
$V(\bx)=\fl{1}{2}\left(x^2+y^2+z^2\right)+100\left[\sin^2\left(\fl{\pi}{2}x\right)
+\sin^2\left(\fl{\pi}{2}y\right)+\sin^2\left(\fl{\pi}{2}z\right)
\right]$.
\end{example}

\begin{table}[htb]
\begin{center}
\begin{tabular}{cccccccccc}
\hline \\
$\frac{N}{10000}$ &$E^g$ &$\mu^g$  &$E_{\rm kin}^g$ &$E_{\rm pot}^g$
&$E_{\rm int}^g$ &$E_{\rm dip}^g$
&$\sigma_x^g$ &$\sigma_z^g$& $\rho_g({\bf 0})$\\
\hline
0.1 &1.567 &1.813 &0.477 &0.844 &0.262 &-0.015 &0.796 &1.299 &0.06139 \\
0.5 &2.225 &2.837 &0.349 &1.264 &0.659 &-0.047 &0.940 &1.745 &0.02675 \\
1   &2.728 &3.583 &0.296 &1.577 &0.925 &-0.070 &1.035 &2.009 &0.01779\\
5   &4.745 &6.488 &0.195 &2.806 &1.894 &-0.151 &1.354 &2.790 &0.00673 \\
10  &6.147 &8.479 &0.161 &3.654 &2.536 &-0.204 &1.538 &3.212 &0.00442     \\
50  &11.47 &15.98 &0.101 &6.853 &4.909 &-0.398 &2.095 &4.441 &0.00168  \\
100 &15.07 &21.04 &0.082 &9.017 &6.498  &-0.526 &2.400 &5.103 &0.00111  \\

\hline
\end{tabular}
 \end{center}
\caption{Different quantities of the ground states of a dipolar BEC
 for $\beta=0.20716N$ and $\lambda=0.033146N$ with different number of particles
$N$.} \label{tab:2:sec8}
 \end{table}

\begin{table}[htb]
\begin{center}
\begin{tabular}{cccccccccc}
\hline \\
 $\fl{\lambda}{\beta}$  &  $E^g$  & $\mu^g$ & $E_{\rm kin}^g$ &   $E_{\rm pot}^g$ &  $E_{\rm int}^g$
 &  $E_{\rm dip}^g$ &   $\sigma_x^g$ & $\sigma_z^g$ & $\rho_g({\bf 0})$
 \\ \hline
-0.5 &2.957 &3.927 &0.265 &1.721 &0.839 &0.131 &1.153 &1.770 &0.01575 \\
-0.25 &2.883 &3.817 &0.274 &1.675 &0.853 &0.081 &1.111 &1.879 &0.01605 \\
0 &2.794 &3.684 &0.286 &1.618 &0.890 &0.000 &1.066 &1.962 &0.01693 \\
0.25 &2.689 &3.525 &0.303 &1.550 &0.950 &-0.114 &1.017 &2.030 &0.01842 \\
0.5 &2.563 &3.332 &0.327 &1.468 &1.047 &-0.278 &0.960 &2.089 &0.02087 \\
0.75 &2.406 &3.084 &0.364 &1.363 &1.212 &-0.534 &0.889 &2.141 &0.02536\\
1.0 &2.193 &2.726 &0.443 &1.217 &1.575 &-1.041 &0.786  &2.189 &0.03630 \\
 \hline

\end{tabular}
 \end{center}
 \caption{ Different quantities of the ground states of a dipolar BEC with different
 values of $\frac{\lambda}{\beta}$ with $\beta=207.16$.}\label{tab:3:sec8}
 \end{table}
\begin{figure}[h!]
\centerline{
\psfig{figure=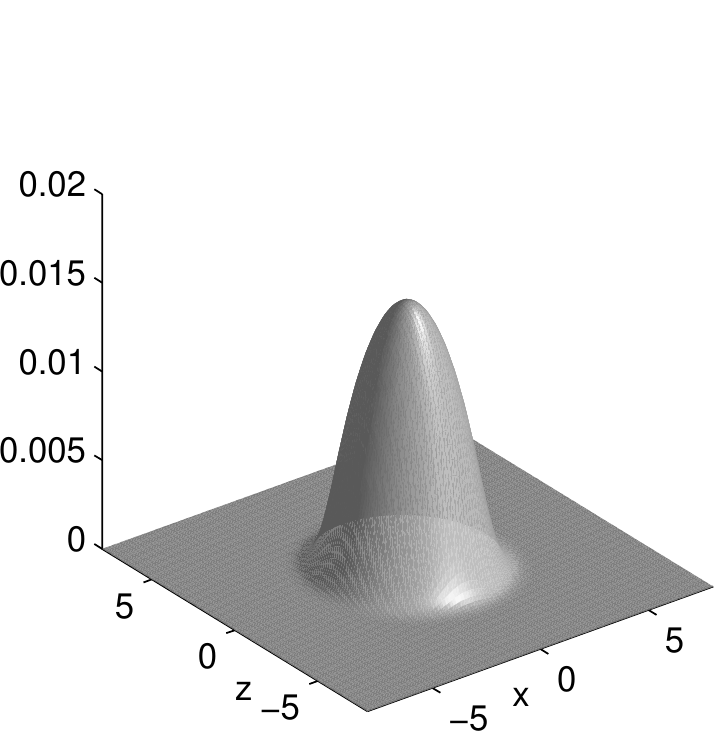,height=4.5cm,width=4.5cm,angle=0}\qquad
\psfig{figure=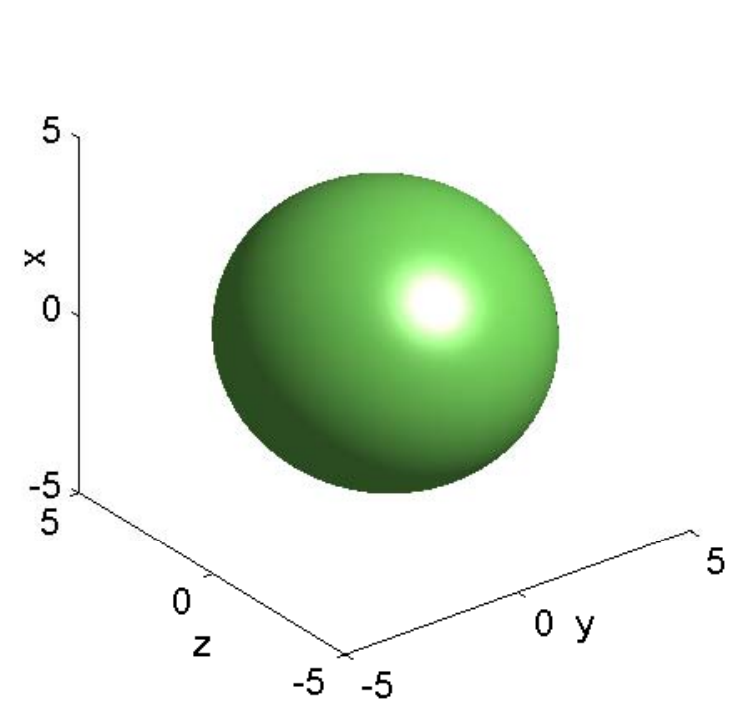,height=4.5cm,width=4.5cm,angle=0}
 }
 \centerline{
\psfig{figure=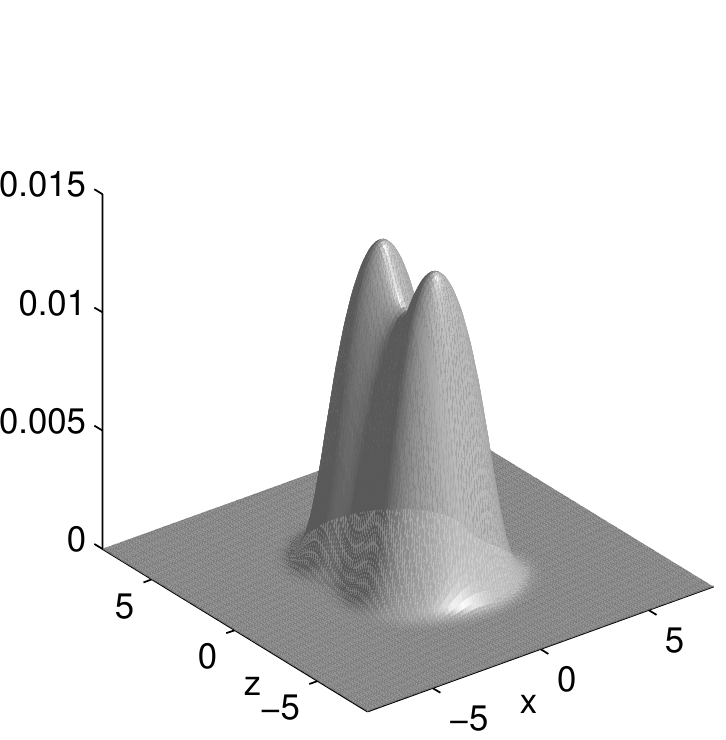,height=4.5cm,width=4.5cm,angle=0}\qquad
\psfig{figure=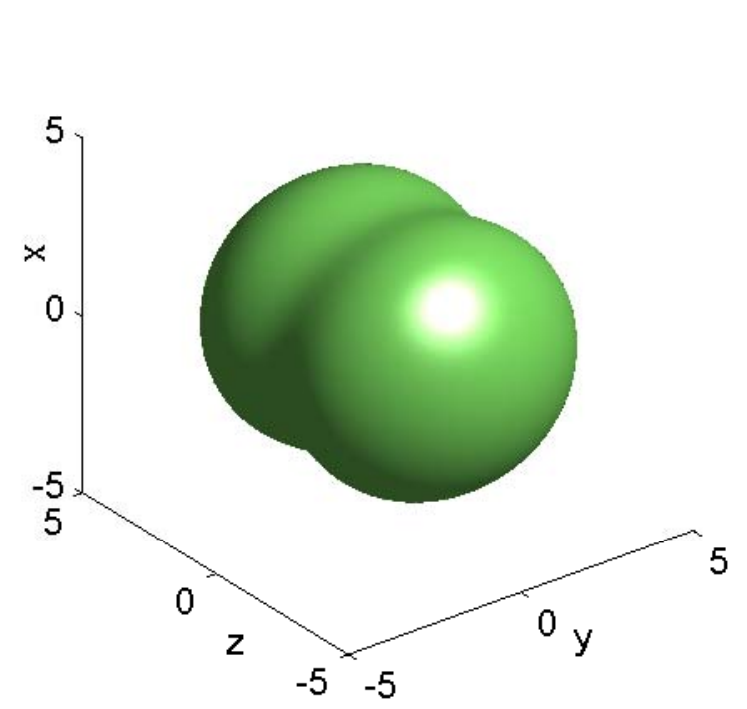,height=4.5cm,width=4.5cm,angle=0}
 }
 \centerline{
\psfig{figure=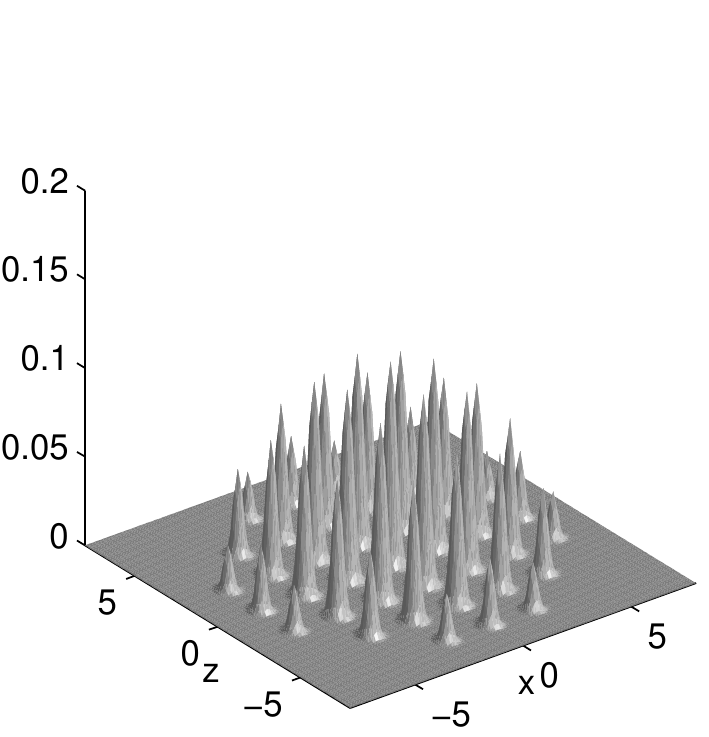,height=4.5cm,width=4.5cm,angle=0}\qquad
\psfig{figure=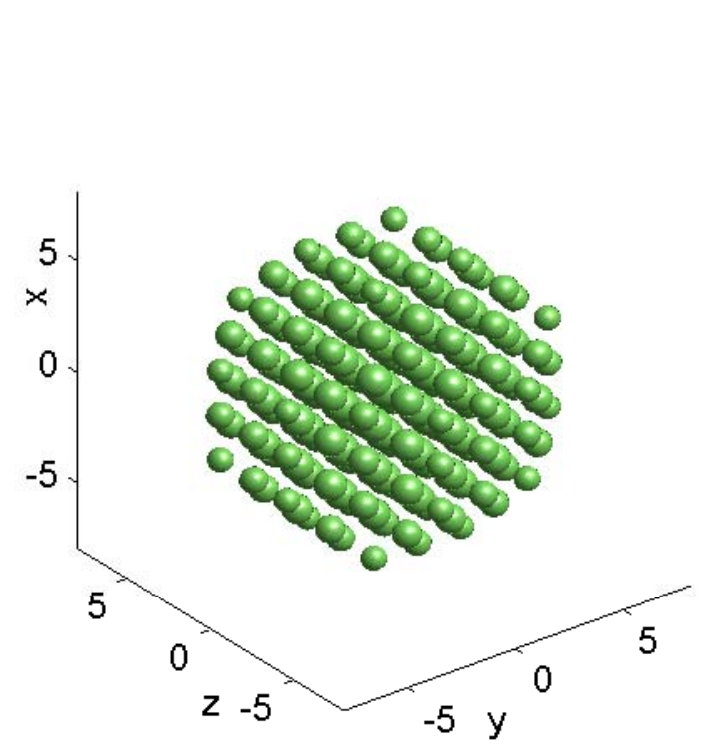,height=4.5cm,width=4.5cm,angle=0}
 }
  \caption{
Surface plots of $|\phi_g(x,0,z)|^2$ (left column) and isosurface
plots of $|\phi_g(x,y,z)|=0.01$ (right column) for the ground state
of a dipolar BEC with $\beta= 401.432$ and $\lambda = 0.16 \beta$
for harmonic potential (top row), double-well potential (middle row)
and optical lattice potential (bottom row).
  } \label{fig:1:sec8}
\end{figure}
\begin{figure}[h!]
\centerline{
\psfig{figure=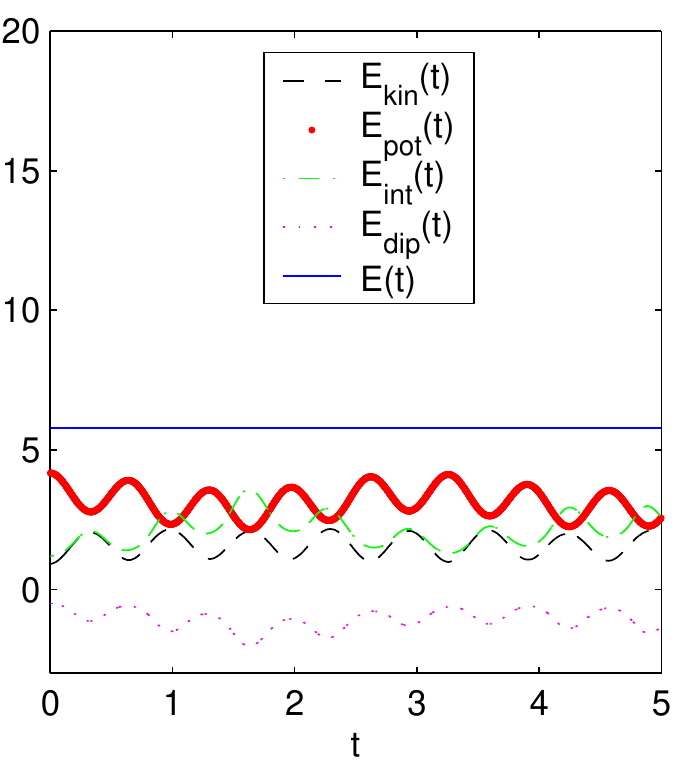,height=4.5cm,width=4.5cm,angle=0}
\qquad
\psfig{figure=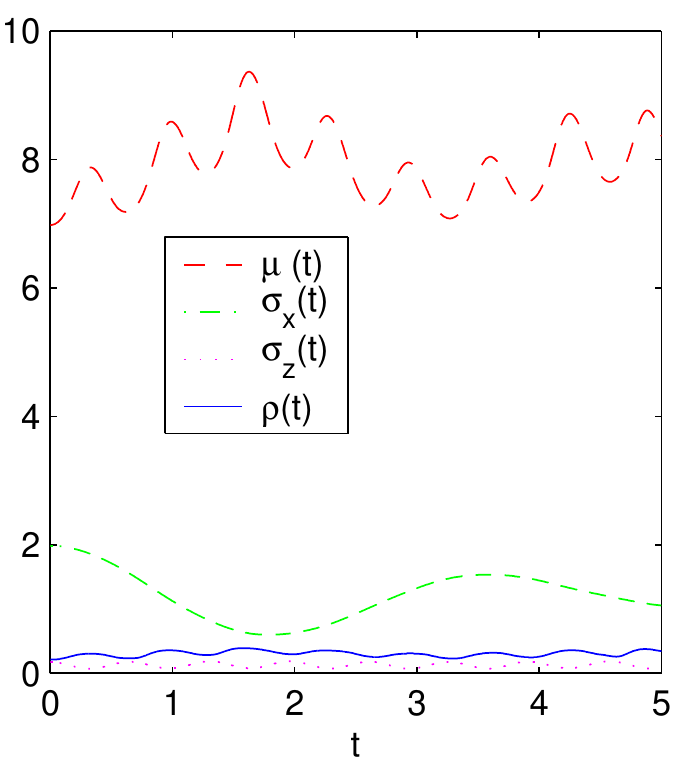,height=4.5cm,width=4.5cm,angle=0} }
\centerline{
\psfig{figure=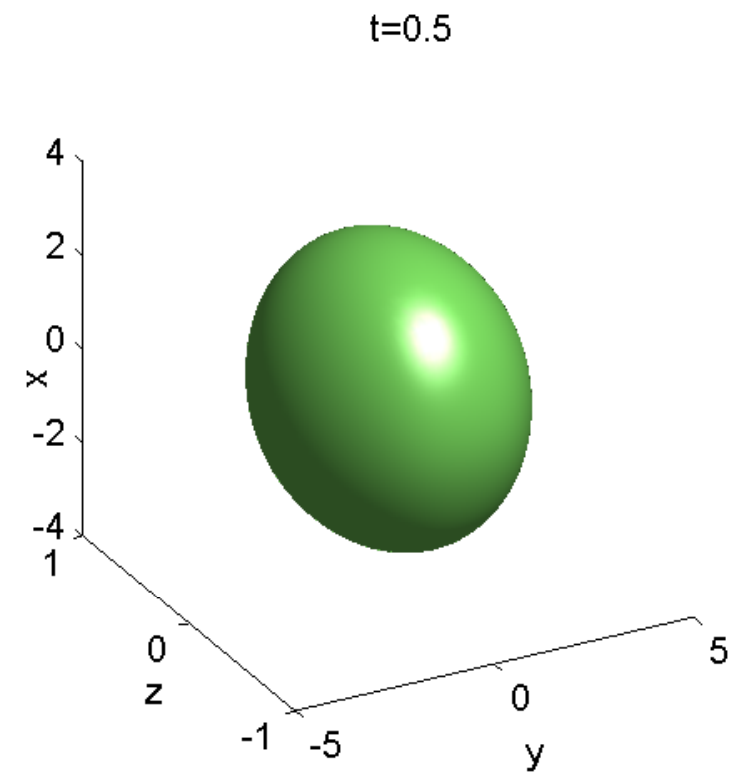,height=4.5cm,width=4.5cm,angle=0}\qquad
\psfig{figure=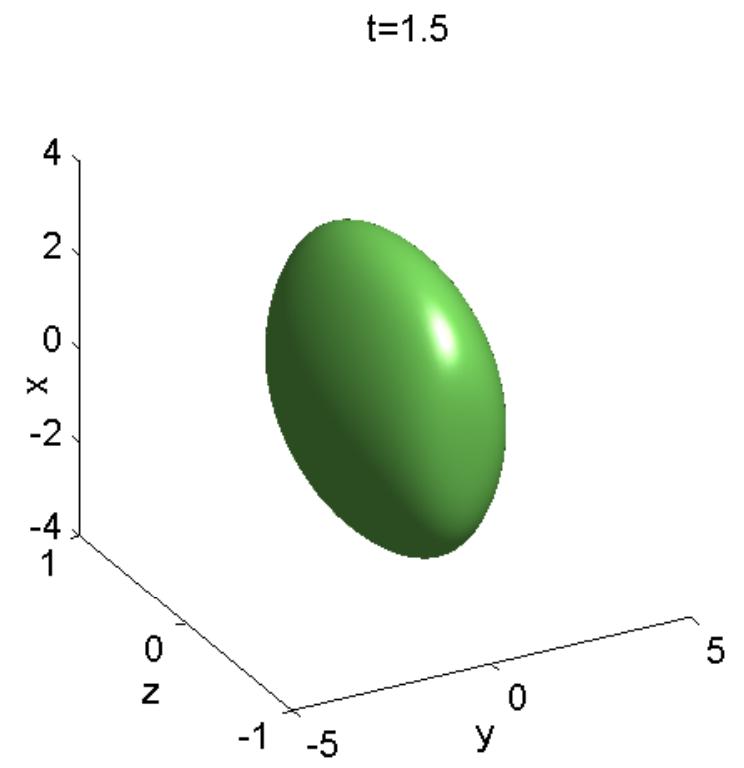,height=4.5cm,width=4.5cm,angle=0}
 }
 \centerline{
\psfig{figure=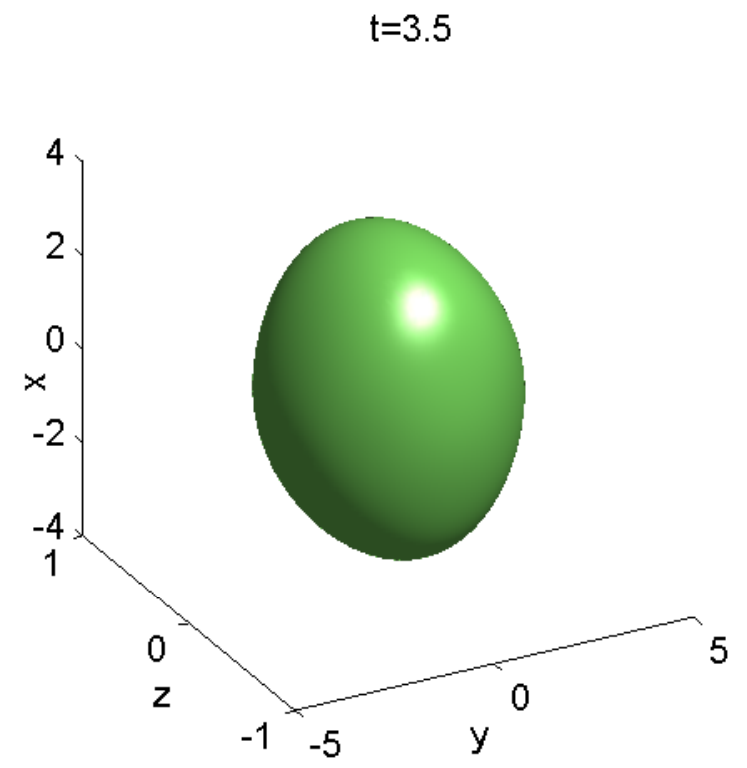,height=4.5cm,width=4.5cm,angle=0}\qquad
\psfig{figure=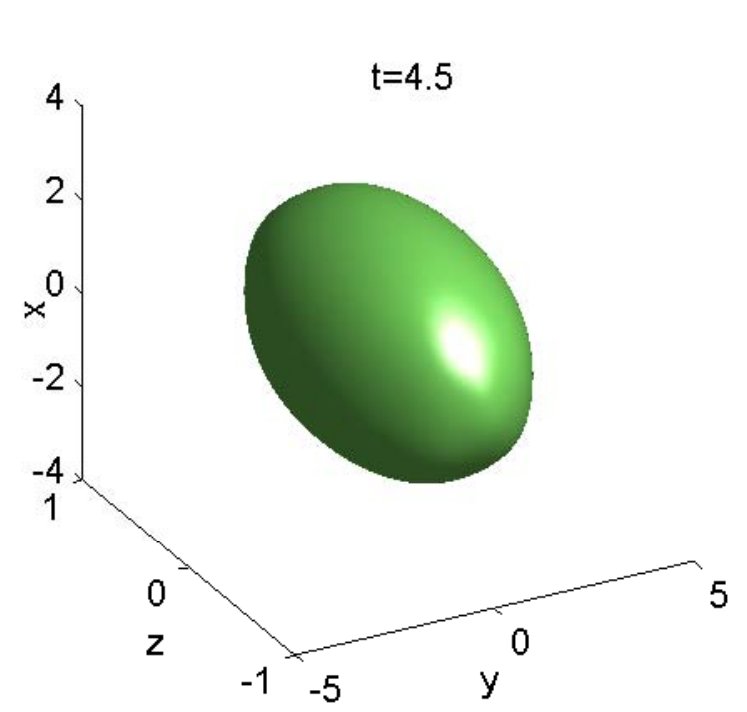,height=4.5cm,width=4.5cm,angle=0}
 }
  \caption{Time evolution of different quantities  and isosurface plots of the
density function $\rho(\bx,t):=|\psi(\bx,t)|^2=0.01$ at different
times for a dipolar BEC
  when the dipolar direction is suddenly changed from $\bn=(0,0,1)^T$ to
  $(1,0,0)^T$ at time $t=0$.
  } \label{fig:2:sec8}
\end{figure}

\begin{example} Dynamics of a dipolar BEC. Here we compute
 the dynamics of a dipolar BEC (e.g., ${}^{52}$Cr
\cite{Parker}) by using our  numerical method (\ref{eq:tssp1:sec8}). Again, in the computation and
results, we always use the dimensionless quantities. We take the
bounded computational domain $U=[-8,8]^2\times[-4,4]$,
$M=K=L=128$, i.e. $h=h_x=h_y=1/8,h_z=1/16$,  time step $\tau
=0.001$. The initial data $\psi(\bx,0)=\psi_0(\bx)$ is chosen as
the ground state of a dipolar BEC computed numerically by our
numerical method with $\bn=(0,0,1)^T$,
$V(\bx)=\fl{1}{2}(x^2+y^2+25z^2)$, $\beta=103.58$ and
$\lambda=0.8\beta=82.864$.

We study the dynamics of suddenly
changing the dipolar direction from $\bn=(0,0,1)^T$ to
$\bn=(1,0,0)^T$ at $t=0$ and keeping all other quantities unchanged.
Fig.~\ref{fig:2:sec8} depicts the time evolution of the energy
$E_{\rm 3D}(t):=E_{\rm 3D}(\psi(\cdot,t))$, chemical potential
$\mu(t)=\mu(\psi(\cdot,t))$, kinetic energy $E_{\rm kin}(t):=E_{\rm
kin}(\psi(\cdot,t))$, potential energy $E_{\rm pot}(t):=E_{\rm
pot}(\psi(\cdot,t))$, interaction energy $E_{\rm int}(t):=E_{\rm
int}(\psi(\cdot,t))$, dipolar energy $E_{\rm dip}(t):=E_{\rm
dip}(\psi(\cdot,t))$, condensate widths
$\sg_x(t):=\sg_x(\psi(\cdot,t))$,
$\sg_z(t):=\sg_z(\psi(\cdot,t))$, and central density
$\rho(t):=|\psi({\bf 0},t)|^2$, as well as the isosurface of the
density function $\rho(\bx,t):=|\psi(\bx,t)|^2=0.01$ for different
times.
\end{example}

From the above numerical results, we can see that the numerical methods based on the GPPS
(\ref{eq:gpe:sec8})-(\ref{eq:poisson:sec8}) are much more efficient and accurate than those used in the literatures based on (\ref{eq:ngpe:sec8}).

\subsection{Extensions in lower dimensions}
Here, we consider the numerical methods for computing ground states and dynamics for dipolar BECs
in 2D and 1D. The difficulties arise from the nonlocal terms, i.e., the dipolar terms in quasi-2D equation I (\ref{eq:gpe2d:sec8}), quasi-2D equation II (\ref{eq:gpe2d2:sec8}) and quasi-1D equation (\ref{eq:gpe1d:sec8}). It is obvious that those methods introduced in sections \ref{sec:numgs} and \ref{sec:numdym} can be extended here, provided that the nonlocal terms can be computed properly.

We propose to compute the convolution terms in (\ref{eq:gpe2d:sec8}), (\ref{eq:gpe2d2:sec8}) and (\ref{eq:gpe1d:sec8}) by Fourier transform. Unlike the 3D case, there are  no singularities for convolution kernels at origin, thus  discrete Fourier transform  is accurate in these cases.

\begin{lemma}(Kernels $U_\vep^{2D}$ in (\ref{eq:u2d1:sec8}) and $U_\vep^{1D}$(\ref{eq:poisson1d:sec8}) )
 For any real function $f(\bx)$ in the Schwartz
space ${\mathcal{S}}(\Bbb R^2)$, we have
\be
\widehat{U_\vep^{2D}*f}(\xi)=\hat{f}(\xi)\,\widehat{U_\vep^{2D}}(\xi)=\frac{\hat{f}(\xi)}{\pi}\int_{\Bbb
R}\frac{e^{-\vep^2s^2/2}}{|\xi|^2+s^2}ds, \qquad f\in
{\mathcal{S}}(\Bbb R^2). \ee
For any $g(z)$ in the Schwartz space ${\mathcal{S}}(\Bbb R)$, we
have \be\label{eq:u1dcov:sec8}
\widehat{U_\vep^{1D}*g}(\xi)=\hat{g}(\xi)\widehat{U_\vep^{1D}}(\xi)=
\frac{\sqrt{2}\,\vep\hat{g}(\xi)}{\sqrt{\pi}}\int_0^\infty\frac{e^{-\vep^2
s/2}}{|\xi|^2+s}ds, \quad \xi\in{\Bbb R}. \ee
Here $\widehat{f}$ and $\widehat{g}$ denote the Fourier transforms of $f$ and $g$, respectively.
\end{lemma}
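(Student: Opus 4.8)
The plan is to reduce everything to the convolution theorem plus a direct computation of the Fourier transforms of the two kernels. Since $U^{2D}_\vep$ and $U^{1D}_\vep$ are locally integrable and decay at infinity (the 2D kernel like $|\bx_\perp|^{-1}$, and the 1D kernel exponentially once the Gaussian prefactor cancels, see below), they are tempered distributions, so for $f\in\calS(\RR^2)$ and $g\in\calS(\RR)$ the convolutions $U^{2D}_\vep*f$ and $U^{1D}_\vep*g$ are well-defined smooth functions of polynomial growth with $\widehat{U^{2D}_\vep*f}=\widehat{U^{2D}_\vep}\,\widehat f$ and $\widehat{U^{1D}_\vep*g}=\widehat{U^{1D}_\vep}\,\widehat g$. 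Hence it remains only to show $\widehat{U^{2D}_\vep}(\xi)=\frac1\pi\int_\RR \frac{e^{-\vep^2s^2/2}}{|\xi|^2+s^2}\,ds$ and $\widehat{U^{1D}_\vep}(\xi)=\frac{\sqrt2\,\vep}{\sqrt\pi}\int_0^\infty\frac{e^{-\vep^2s/2}}{|\xi|^2+s}\,ds$. The main tool for both is the heat-kernel (Gaussian) representation $a^{-1/2}=\frac{2}{\sqrt\pi}\int_0^\infty e^{-t^2a}\,dt$ and $a^{-1}=\int_0^\infty e^{-ta}\,dt$, together with the Gaussian Fourier transform $\widehat{e^{-c|\bx|^2}}(\xi)=(\pi/c)^{d/2}e^{-|\xi|^2/4c}$, under the convention $\widehat f(\xi)=\int f(\bx)e^{-i\xi\cdot\bx}\,d\bx$ used throughout the paper (the same one under which $\widehat{(U_{\rm dip})}$ was computed in (\ref{eq:four11:sec8})).

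For the 2D kernel I would write $\frac{1}{\sqrt{|\bx_\perp|^2+\vep^2s^2}}=\frac{2}{\sqrt\pi}\int_0^\infty e^{-t^2(|\bx_\perp|^2+\vep^2s^2)}\,dt$, insert it into the definition (\ref{eq:u2d1:sec8}) of $U^{2D}_\vep$, and take the $\bx_\perp$-Fourier transform under the (nonnegative, hence Tonelli-legitimate) double integral: the $\bx_\perp$-integral produces $\frac{\pi}{t^2}e^{-|\xi|^2/4t^2}$, and the remaining $s$-integral is another Gaussian giving $\sqrt{2\pi/(1+2\vep^2t^2)}$. After collecting constants this yields $\widehat{U^{2D}_\vep}(\xi)=\frac{1}{\sqrt\pi}\int_0^\infty \frac{e^{-|\xi|^2/4t^2}}{t^2\sqrt{1+2\vep^2t^2}}\,dt$; the substitution $t\mapsto |\xi|/(2\sigma)$ followed by $\sigma\mapsto\sigma^2$ and a final rescaling of the integration variable turns this into $\frac{1}{\sqrt\pi}\int_0^\infty \frac{e^{-w|\xi|^2}}{\sqrt{w+\vep^2/2}}\,dw$, and applying $\frac{1}{|\xi|^2+s^2}=\int_0^\infty e^{-w(|\xi|^2+s^2)}\,dw$ together with Tonelli to the claimed right-hand side shows it equals the same quantity. (One can instead do the $\bx_\perp$-integral first via the Hankel identity $\int_0^\infty \frac{J_0(br)r}{\sqrt{r^2+a^2}}\,dr=e^{-ab}/b$, but that lands on an equivalent form still needing to be reshaped into the stated one, so the Gaussian route is cleaner.)

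For the 1D kernel the first step is to simplify $U^{1D}_\vep$ itself: substituting $s^2=z^2+r$ in (\ref{eq:poisson1d:sec8}) gives $\int_{|z|}^\infty e^{-s^2/2\vep^2}\,ds=\frac{e^{-z^2/2\vep^2}}{2}\int_0^\infty \frac{e^{-r/2\vep^2}}{\sqrt{z^2+r}}\,dr$, so the Gaussian prefactor cancels and $U^{1D}_\vep(z)=\frac{1}{\sqrt{2\pi}\,\vep}\int_0^\infty \frac{e^{-r/2\vep^2}}{\sqrt{z^2+r}}\,dr$. From here the computation parallels the 2D case: write $\frac{1}{\sqrt{z^2+r}}=\frac{2}{\sqrt\pi}\int_0^\infty e^{-t^2(z^2+r)}\,dt$, take the $z$-Fourier transform (obtaining $\frac{\sqrt\pi}{t}e^{-\xi^2/4t^2}$), perform the $r$-integral, and then change variables ($t\mapsto t^2$, then the substitution $q=\xi^2/4t^2$, then a rescaling $q=\vep^2 s/2$) to arrive at $\widehat{U^{1D}_\vep}(\xi)=\frac{\sqrt2\,\vep}{\sqrt\pi}\int_0^\infty \frac{e^{-\vep^2s/2}}{|\xi|^2+s}\,ds$, which is the assertion.

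The genuinely delicate points are minor. The interchanges of integration must be justified, but after the heat-kernel substitutions every integrand is a product of Gaussians and positive rational factors, so Tonelli applies at once; and one should record that $\widehat{U^{2D}_\vep}$ and $\widehat{U^{1D}_\vep}$ — though they behave like $|\xi|^{-1}$ and $\log|\xi|$ respectively near $\xi=0$ — are still locally integrable in the respective dimensions, which is all that is needed to form the products with the rapidly decaying $\widehat f$ and $\widehat g$ and to run the convolution theorem. Thus the only real work is the chain of elementary substitutions that brings the computed transforms into exactly the form displayed in the lemma, and I expect this bookkeeping, rather than any conceptual difficulty, to be the main obstacle.
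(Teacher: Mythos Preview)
Your approach is sound and the computations check out: the heat-kernel identities $a^{-1/2}=\frac{2}{\sqrt\pi}\int_0^\infty e^{-t^2a}\,dt$ and $a^{-1}=\int_0^\infty e^{-ta}\,dt$, together with the Gaussian Fourier transform, give exactly the intermediate form $\widehat{U^{2D}_\vep}(\xi)=\frac{1}{\sqrt\pi}\int_0^\infty \frac{e^{-w|\xi|^2}}{\sqrt{w+\vep^2/2}}\,dw$ and the analogous 1D expression, and the Tonelli-based matching with the stated right-hand sides is correct. Your simplification of $U^{1D}_\vep$ via $s^2=z^2+r$ is also accurate.

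As for comparison: the paper does not actually supply a proof of this lemma. It states the result and then simply remarks that the Fourier transforms can be expressed via Bessel functions of the second kind, citing \cite{CaiRosen}. So you have in fact filled in what the paper omits. One slightly quicker alternative for the 2D part, which you might find illuminating, is to recognize that $U^{2D}_\vep(\bx_\perp)$ is (up to constants) the partial $\bx_\perp$-trace of the 3D Coulomb kernel $\frac{1}{4\pi|\bx|}$ against a Gaussian in $z$; since the 3D Fourier transform of $\frac{1}{4\pi|\bx|}$ is $|\xi|^{-2}$, a slicing argument yields $\widehat{U^{2D}_\vep}$ almost immediately in the stated form. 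Your direct Gaussian route is perfectly fine, however, and has the advantage of being completely self-contained.
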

The Fourier transforms of  $U_\vep^{2D}$ and $U_\vep^{1D}$ can be written in terms of the second kind Bessel functions \cite{CaiRosen}.

 \section{Mathematical theory and numerical methods for two component BEC}
 \label{sec:2bec}
\setcounter{equation}{0}\setcounter{figure}{0}\setcounter{table}{0}
 In view of potential applications, such
as the generation of bright beams of coherent matter waves (atom laser),
a central goal in the study of BEC has been the formation of condensate with the
number of atoms being as large as possible. It is thus of particular
interest to study a scenario where this goal is achieved by
uniting two (or more) independently grown condensates to form
one large single condensate. The first experiment involving the
uniting of  multiple-component BEC was performed with
atoms evaporatively cooled in the $|F=2, m_f=2\rangle$ and $|1, -1\rangle$
spin states of $^{87}$Rb \cite{Myatt}. It demonstrated the possibility of producing
long-lived multiple condensate systems, and that the condensate
wave function is dramatically affected by the  presence of
inter-component interactions.

\subsection{Coupled Gross-Pitaevskii equations}

At temperatures $T$ much smaller than the critical temperature
$T_c$ \cite{PitaevskiiStringari},  a two-component BEC with
an internal atomic Josephson junction (or an external driving field) can be well described by the coupled Gross-Pitaevskii equations (CGPEs)  \cite{LiebSeiringer2, LiebSeiringer1,da3,ZhangBaoLi,BaoCai0},
\be\label{eq:cgpe109:sec9}
\begin{split} &i\hbar\partial_t \psi_1=\left[-\frac{\hbar^2}{2m}\nabla^2
+V(\bx)+\hbar \delta
+g_{11}|\psi_1|^2+g_{12}|\psi_2|^2\right]\psi_1+\lambda \hbar
\psi_2, \\
&i\hbar\partial _t \psi_2=\left[-\frac{\hbar^2}{2m}\nabla^2
+V(\bx)+g_{21}|\psi_1|^2+g_{22}|\psi_2|^2\right]\psi_2+\lambda \hbar
\psi_1,\qquad \bx\in\Bbb{R}^3.\end{split} \ee
Here $\Psi:=\Psi(\bx,t)=(\psi_1(\bx,t),\psi_2(\bx,t))^T$ is the
complex-valued macroscopic wave function, $V(\bx)$ is the
real-valued external trapping potential, $\lambda$ is the effective
Rabi frequency describing the strength to realize the internal atomic
Josephson junction (JJ) by a Raman transition, $\delta$ is the Raman transition constant.
The interactions of particles are described by
$g_{jl}=\frac{4\pi \hbar^2 a_{jl}}{m}$ with $a_{jl}=a_{lj}$ ($j,l=1,2$) being
the $s$-wave scattering lengths between the $j$th and $l$th component (positive for
repulsive interaction and negative for attractive interaction).
It is necessary to ensure that the wave function is properly
normalized. Especially, we require \be \label{eq:norm09:sec9}
\int_{{\Bbb R}^3}
\left[|\psi_1(\bx,t)|^2+|\psi_2(\bx,t)|^2\right]\,d\bx=N=N_1^0+N_2^0, \ee
where
\[N_j^0=\int_{{\Bbb R}^3} |\psi_j(\bx,0)|^2d\bx, \]
is the particle number of the $j$th ($j=1,2$) component at time $t=0$
and $N$ the total number of particle in
the two-component BEC.

By properly nondimensionalization and dimension reduction, we can obtain
the following dimensionless CGPEs in $d$-dimensions ($d=1,2,3$) for a two-component BEC
\cite{ZhangBaoLi,BaoCai0}
\be\label{eq:cgpe1:sec9}
\begin{split} &i\partial_t \psi_1=\left[-\frac 12\nabla^2
+V(\bx)+\delta
+(\beta_{11}|\psi_1|^2+\beta_{12}|\psi_2|^2)\right]\psi_1+\lambda
\psi_2, \qquad\bx\in\Bbb R^d,\\
&i\partial _t \psi_2=\left[-\frac 12\nabla^2
+V(\bx)+(\beta_{12}|\psi_1|^2+\beta_{22}|\psi_2|^2)\right]\psi_2+\lambda
\psi_1,\qquad \bx\in\Bbb{R}^d.\end{split} \ee
Here $\Psi:=\Psi(\bx,t)=(\psi_1(\bx,t),\psi_2(\bx,t))^T$ is the dimensionless
complex-valued macroscopic wave function, $V(\bx)$ is the dimensionless
real-valued external trapping potential, $\beta_{11}$, $\beta_{12}=\beta_{21}$,
$\beta_{22}$ are dimensionless interaction constants, $\delta$ and $\lambda$ are dimensionless constants.
In addition, the wave function is normalized as
 \be \label{eq:norm:sec9}
\|\Psi\|_2^2:=\int_{{\Bbb R}^d}
\left[|\psi_1(\bx,t)|^2+|\psi_2(\bx,t)|^2\right]\,d\bx=1. \ee

The dimensionless CGPEs (\ref{eq:cgpe1:sec9}) conserves the total mass or
normalization, i.e.
 \be\label{eq:mass1:sec9}
N(t):=\|\Psi(\cdot,t)\|^2=N_1(t)+N_2(t) \equiv \|\Psi(\cdot,0)\|^2
=1, \quad t\ge0, \ee with \be\label{eq:Njt:sec9}
N_j(t)=\|\psi_j(\bx,t)\|_2^2:= \|\psi_j(\bx,t)\|_2^2=\int_{{\Bbb
R}^d} |\psi_j(\bx,t)|^2\,d\bx, \quad t\ge0, \quad j=1,2,\ee and
the energy
\begin{eqnarray} E(\Psi)&=&\int_{{\Bbb R}^d}\biggl[\frac
12 \left(|\nabla\psi_1|^2+
|\nabla\psi_2|^2\right)+V(\bx)(|\psi_1|^2+|\psi_2|^2)+\delta
|\psi_1|^2+\frac 12 \beta_{11}|\psi_1|^4\nonumber\\
&&\qquad +\frac 12\beta_{22}|\psi_2|^4
+\beta_{12}|\psi_1|^2|\psi_2|^2+2\lambda\cdot\text{Re}
(\psi_1\bar{\psi}_2)\biggl]d\bx. \label{eq:energy:sec9}
\end{eqnarray}
 In addition, if there is no internal
Josephson junction  in (\ref{eq:cgpe1:sec9}), i.e. $\ld=0$, the mass of each
component is also conserved, i.e. \be\label{eq:Njtt1:sec9} N_1(t)\equiv
\int_{{\Bbb R}^d} |\psi_1(\bx,0)|^2\,d\bx:=\ap, \quad
N_2(t)\equiv \int_{{\Bbb R}^d}
|\psi_2(\bx,0)|^2\,d\bx:=1-\ap, \ee with $0\le \ap\le 1$
a given constant.

\subsection{Ground states for the case without Josephson junction}

  If there is no external driving field in (\ref{eq:cgpe1:sec9}), i.e. $\ld=0$,
for any given $\ap\in[0,1]$, the ground state
$\Phi_g^\ap(\bx)=(\phi_1^\ap(\bx),\phi_2^\ap(\bx))^T$ of the
two-component BEC is defined as the minimizer of the following
nonconvex minimization problem:\\
  Find $\left(\Phi_g^\ap \in S_\ap\right)$, such that
  \begin{equation}\label{eq:minim2:sec9}
    E_g^\ap := E_0\left(\Phi_g^\ap\right) = \min_{\Phi \in S_\ap}
    E_0\left(\Phi\right),
  \end{equation}
where $S_\ap$ is a nonconvex set defined as \be\label{eq:nonconset2:sec9}
S_\ap:=\left\{\Phi=(\phi_1,\phi_2)^T \ | \ \|\phi_1\|_2^2=\ap,\
\|\phi_2\|_2^2=1-\ap,\  E_0(\Phi)<\infty \right\},\ee and the energy
functional $E_0(\Phi)$ is defined as
\begin{eqnarray} E_0(\Phi)&=&\int_{{\mathbb R}^d}\biggl[\frac
12 \left(|\nabla\phi_1|^2+
|\nabla\phi_2|^2\right)+V(\bx)(|\phi_1|^2+|\phi_2|^2)+\delta
|\phi_1|^2+\frac 12 \beta_{11}|\phi_1|^4\nonumber\\
&&\qquad +\frac 12\beta_{22}|\phi_2|^4
+\beta_{12}|\phi_1|^2|\phi_2|^2\biggl]d\bx. \label{eq:energy0:sec9}
\end{eqnarray}
 Again, it is
easy to see that the ground state $\Phi_g^\ap$  satisfies the
following Euler-Lagrange equations, \be \label{eq:grd2:sec9} \begin{split}
&\mu_1\,\phi_1=\left[-\frac 12\nabla^2 +V(\bold{x})+\delta
+(\beta_{11}|\phi_1|^2+\beta_{12}|\phi_2|^2)\right]\phi_1,\qquad \bx\in {\mathbb R}^d,\\
&\mu_2 \,\phi_2=\left[-\frac 12\nabla^2
+V(\bold{x})+(\beta_{12}|\phi_1|^2+\beta_{22}|\phi_2|^2)\right]\phi_2,\qquad
\bx\in {\mathbb R}^d,
\end{split}
\ee under the two constraints \be \label{eq:norm2:sec9}
\|\phi_1\|_2^2:=\int_{{\mathbb R}^d} |\phi_1(\bx)|^2\,d\bx=\ap,\qquad
\|\phi_2\|_2^2:=\int_{{\mathbb R}^d} |\phi_2(\bx)|^2\,d\bx=1-\ap, \ee
with $\mu_1$ and $\mu_2$ being the Lagrange multipliers or chemical
potentials corresponding to the two constraints (\ref{eq:norm2:sec9}).
Again, the above time-independent CGPEs (\ref{eq:grd2:sec9}) can also be
obtained from the CGPEs (\ref{eq:cgpe1:sec9}) with $\ld=0$ by substituting
the ansatz
\begin{equation}\label{eq:anst2:sec9} \psi_1(\bx,t)=e^{-i\mu_1 t}\phi_1(\bx),\qquad
\psi_2(\bx,t)=e^{-i\mu_2 t}\phi_2(\bx).
\end{equation}

Considering the case $\alpha\in(0,1)$ in minimization problem (\ref{eq:minim2:sec9}), denote
\[
\beta_{11}^\prime:=\alpha\beta_{11},\quad \beta_{22}^\prime
:=(1-\alpha)\beta_{22},\quad
\beta_{12}^\prime:=\sqrt{\alpha(1-\alpha)} \beta_{12},\quad
\alpha^\prime:=\alpha(1-\alpha),
\]
and
\[B=\left(\ba{cc}
\beta_{11} &\beta_{12}\\
\beta_{21} &\beta_{22}\\
\ea\right), \qquad B^\prime=\left(\ba{cc}
\beta_{11}^\prime &\beta_{12}^\prime\\
\beta_{21}^\prime &\beta_{22}^\prime\\
\ea\right). \]
Then the  following conclusions can be drawn \cite{BaoCai0}.
\begin{theorem} (Existence and uniqueness of (\ref{eq:minim2:sec9}))
Suppose  $V(\bx)\ge 0$ satisfying
$\lim_{|\bx|\to\infty}V(\bx)=\infty$ and at least one of the following condition holds,
\begin{enumerate}\renewcommand{\labelenumi}{(\roman{enumi})}
 \item $d=1$;
\item  $d=2$ and  $\beta^\prime_{11}
\ge -C_{b}$, $\beta^\prime_{22}\ge -C_b$, and  $\beta^\prime_{12}
\ge-\sqrt{(C_b+\beta^\prime_{11})(C_b+\beta_{22}^\prime)}$;
\item $d=3$ and  $B$ is
either positive semi-definite or nonnegative,
\end{enumerate}
then there exists a ground state $\Phi_g=(\phi_1^g,\phi_2^g)^T$ of
(\ref{eq:minim2:sec9}). In addition,
 $\widetilde{\Phi}_g:=
 (e^{i\theta_1}|\phi_1^g|,e^{i\theta_2}|\phi_2^g|)$ is also a
ground state of (\ref{eq:minim2:sec9}) with two constants $\theta_1$ and
$\theta_2$. Furthermore, if the matrix $B^\prime$ is positive
semi-definite,  the ground state $(|\phi_1^g|,|\phi_2^g|)^T$ of
(\ref{eq:minim2:sec9}) is unique.
 In contrast, if one of the following conditions holds,
\begin{enumerate}\renewcommand{\labelenumi}{(\roman{enumi})}
\item  $d=2$ and $\beta_{11}^\prime<-C_b$ or $\beta_{22}^\prime<-C_b$
or $\beta^\prime_{12}<-\frac{1}{2\sqrt{\alpha^\prime}}
\left(\alpha\beta^\prime_{11}+
(1-\alpha)\beta_{22}^\prime+C_b\right)$;
\item $d=3$ and $\beta_{11}<0$ or $\beta_{22}<0$ or $\beta_{12}<-\frac{1}
{2\alpha^\prime}(\alpha^2\beta_{11}+(1-\alpha)^2\beta_{22})$.
\end{enumerate}
there exists no ground states of (\ref{eq:minim2:sec9}).
\end{theorem}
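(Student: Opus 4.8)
The plan is to mimic, component by component, the proof of Theorem~\ref{thm:gs} for the one–component problem, exploiting that for $\lambda=0$ the functional $E_0$ in \eqref{eq:energy0:sec9} depends on $(\phi_1,\phi_2)$ only through the moduli $|\phi_1|,|\phi_2|$ and the gradient densities $|\nabla\phi_1|^2,|\nabla\phi_2|^2$, and that the term $\delta\|\phi_1\|_2^2=\delta\alpha$ is an inessential additive constant on $S_\alpha$. First I would show $E_0$ is bounded below on $S_\alpha$. For $d=1$ this follows from the one–dimensional Sobolev inequality $\|\phi_j\|_\infty^2\lesssim\|\phi_j\|_2\|\nabla\phi_j\|_2$ applied to each component together with $|\beta_{12}|\int|\phi_1|^2|\phi_2|^2\le\frac{|\beta_{12}|}{2}(\|\phi_1\|_4^4+\|\phi_2\|_4^4)$. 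For $d=2$, put $a=\|\phi_1\|_4^2$, $b=\|\phi_2\|_4^2$; using the sharp Gagliardo--Nirenberg inequality $\|\phi_j\|_4^4\le C_b^{-1}\|\nabla\phi_j\|_2^2\,\|\phi_j\|_2^2$ from \eqref{eq:bestcons:2d} under $\|\phi_1\|_2^2=\alpha$, $\|\phi_2\|_2^2=1-\alpha$, together with $0\le\int|\phi_1|^2|\phi_2|^2\le ab$, the kinetic plus quartic part of $E_0$ is bounded below by the quadratic form
\[
\frac{C_b+\beta_{11}'}{2\alpha}\,a^2+\frac{C_b+\beta_{22}'}{2(1-\alpha)}\,b^2+\beta_{12}\,\theta_{ab},\qquad \theta_{ab}\in[-ab,ab],
\]
whose nonnegativity for all admissible $a,b,\theta_{ab}$ is precisely the three hypotheses on $\beta_{11}',\beta_{22}',\beta_{12}'$. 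For $d=3$, positive semidefiniteness (or entrywise nonnegativity) of $B$ makes the quartic part itself nonnegative.

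Given the lower bound, a minimizing sequence $\Phi^n=(\phi_1^n,\phi_2^n)$ for \eqref{eq:minim2:sec9} is bounded in $X\times X$. I would pass to a weakly convergent subsequence $\Phi^n\rightharpoonup\Phi^\infty$, apply the compact embedding $X\hookrightarrow L^p$ of Lemma~\ref{lem:comp:sec2} to each component so that $\phi_j^n\to\phi_j^\infty$ strongly in $L^2\cap L^4$; then the two mass constraints pass to the limit (so $\Phi^\infty\in S_\alpha$) and the quartic terms of $E_0$ converge, while the $H^1$ and $L_V$ seminorms are weakly lower semicontinuous, whence $E_0(\Phi^\infty)\le\liminf_n E_0(\Phi^n)$ and $\Phi^\infty$ is a ground state. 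The statements on phases are then immediate: by Lemma~\ref{lem:pos:sec2} applied to each component, $\|\nabla|\phi_j^g|\|_2\le\|\nabla\phi_j^g\|_2$ and $E_0$ sees only $|\phi_j|$, so $(|\phi_1^g|,|\phi_2^g|)$ is a ground state, and so is $(e^{i\theta_1}|\phi_1^g|,e^{i\theta_2}|\phi_2^g|)$ for any constants $\theta_1,\theta_2$.

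For uniqueness when $B'$ is positive semidefinite, I would follow Lemma~\ref{lem:convex:sec2} and reformulate $E_0$ in terms of the densities $\rho_j=|\phi_j|^2$: the potential part is linear in $(\rho_1,\rho_2)$; the kinetic part is strictly convex in $(\rho_1,\rho_2)$ by the pointwise Cauchy inequality used in the proof of Lemma~\ref{lem:convex:sec2} (equality forcing the two profiles to coincide up to a constant); and the interaction part $\frac12\int(\beta_{11}\rho_1^2+2\beta_{12}\rho_1\rho_2+\beta_{22}\rho_2^2)$ becomes, after the mass rescaling $\rho_1\mapsto\rho_1/\alpha$, $\rho_2\mapsto\rho_2/(1-\alpha)$, a pointwise quadratic form whose coefficient matrix is congruent to $B'$, hence convex when $B'\ge0$. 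Strict convexity of $E_0$ on the convex set of admissible density pairs then forces $(\rho_1,\rho_2)$, and therefore $(|\phi_1^g|,|\phi_2^g|)$, to be unique.

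Finally, nonexistence is obtained by concentration. If $\beta_{11}'<-C_b$ (resp. $\beta_{22}'<-C_b$) I would take $\phi_1^\vep=\sqrt\alpha\,\vep^{-1}\phi_b(\cdot/\vep)$ in $d=2$, resp. a Gaussian with $\vep^{-3/2}$ scaling in $d=3$ when $\beta_{11}<0$, keeping $\phi_2$ fixed and admissible: the self-energy of component~$1$ behaves like $\frac{\alpha\|\phi_b\|_4^4}{2\vep^2}(C_b+\beta_{11}')\to-\infty$ (resp. the dominant $\vep^{-3}$ term is negative) while all remaining contributions stay bounded, so $\inf_{S_\alpha}E_0=-\infty$. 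If instead $\beta_{12}'<-\frac{1}{2\sqrt{\alpha'}}(\alpha\beta_{11}'+(1-\alpha)\beta_{22}'+C_b)$ (resp. the $d=3$ analogue with $B$), I would concentrate both components on the same rescaled profile $\phi_j^\vep=\sqrt{m_j}\,\vep^{-1}\phi_b(\cdot/\vep)$ with $m_1=\alpha$, $m_2=1-\alpha$; the kinetic-plus-quartic energy equals $\frac{\|\phi_b\|_4^4}{2\vep^2}(C_b+\alpha\beta_{11}'+(1-\alpha)\beta_{22}'+2\sqrt{\alpha'}\,\beta_{12}')$, which diverges to $-\infty$ exactly under the stated inequality, and similarly in $d=3$ with $\vep^{-3}$ scaling. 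The delicate point I expect to be the main obstacle is the borderline $d=2$ regime in which one of the existence hypotheses holds with equality, so that the controlling quadratic form above is only positive semidefinite: one then has to rule out a component of the minimizing sequence concentrating into a Dirac mass — the very mechanism behind nonexistence at $\beta=-C_b$ in Theorem~\ref{thm:gs} — which I would handle by using the confining potential (concentration forces $\int V|\phi_j^n|^2$ to blow up unless the mass stays in a fixed bounded set) together with a Gagliardo--Nirenberg bound carrying a small loss; the remaining verifications (weak-compactness bookkeeping and the elementary estimates on the non-concentrating terms) run parallel to the one-component case already treated.
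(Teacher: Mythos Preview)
The paper does not itself prove this theorem; it simply states the result and cites \cite{BaoCai0}. Your proposal is correct and follows precisely the template of the single-component Theorem~\ref{thm:gs} that the paper does prove in detail: boundedness below via the sharp Gagliardo--Nirenberg inequality, compactness from Lemma~\ref{lem:comp:sec2} applied componentwise, phase reduction via Lemma~\ref{lem:pos:sec2}, uniqueness through strict convexity in the densities as in Lemma~\ref{lem:convex:sec2}, and nonexistence by explicit concentration. Your computations of the $d=2$ quadratic form in $(a,b)=(\|\phi_1\|_4^2,\|\phi_2\|_4^2)$ and of the energies of the concentrated test states are accurate and reproduce the stated thresholds exactly. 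This is the approach of \cite{BaoCai0}.

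Your caution about the borderline $d=2$ case (equality in one of the conditions $\beta_{jj}'\ge -C_b$) is well placed. In the single-component theorem the inequality is strict and nonexistence is shown at $\beta=-C_b$; here, if for instance $\beta_{11}'=-C_b$ with $\beta_{12}\ge 0$, one can let $\phi_1$ concentrate at a minimum of $V$ while keeping $\phi_2$ fixed, and neither the kinetic-plus-self-interaction part of component~1 nor the potential term $\int V|\phi_1|^2$ blows up, so coercivity is genuinely lost along such sequences. Whether a minimizer nevertheless exists then hinges on a finer analysis of the cross term and is not settled by the argument you outline; this edge case is likely a slight overstatement in the review, and the precise treatment should be checked in \cite{BaoCai0}.
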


\subsection{Ground states for the case with Josephson junction}
The ground state $\Phi_g(\bx)=(\phi_1^g(\bx),\phi_2^g(\bx))^T$ of
the two-component BEC with an internal
Josephson junction (\ref{eq:cgpe1:sec9})
is defined as the minimizer of the following nonconvex minimization
problem:\\
Find $\left(\Phi_g \in S\right)$, such that
  \begin{equation}\label{eq:minimize:sec9}
    E_g := E\left(\Phi_g\right) = \min_{\Phi \in S}
    E\left(\Phi\right),
  \end{equation}
where $S$ is a nonconvex set defined as \be\label{eq:nonconset:sec9}
S:=\left\{\Phi=(\phi_1,\phi_2)^T \big| \int_{{\Bbb
R}^d}\left(|\phi_1(\bx)|^2+|\phi_2(\bx)|^2\right)d\bx=1,\
E(\Phi)<\infty \right\}.\ee It is easy to see that the ground state
$\Phi_g$  satisfies the following Euler-Lagrange equations, \be
\label{eq:grd1:sec9} \begin{split} &\mu\,\phi_1=\left[-\frac 12\nabla^2
+V(\bold{x})+\delta
+(\beta_{11}|\phi_1|^2+\beta_{12}|\phi_2|^2)\right]\phi_1+\lambda
\phi_2,\quad \bx\in\Bbb R^d,\\
&\mu \,\phi_2=\left[-\frac 12\nabla^2
+V(\bold{x})+(\beta_{12}|\phi_1|^2+\beta_{22}|\phi_2|^2)\right]\phi_2+\lambda
\phi_1,\qquad \bx\in {\mathbb R}^d,
\end{split}
\ee under the constraint \be \label{eq:norm1:sec9}
\|\Phi\|_2^2:=\|\Phi\|_2^2=\int_{{\mathbb R}^d}
\left[|\phi_1(\bx)|^2+|\phi_2(\bx)|^2\right]\,d\bx=1, \ee
 with the eigenvalue  $\mu$ being the Lagrange multiplier
 or chemical potential corresponding to the constraint (\ref{eq:norm1:sec9}), which can be
computed as
\begin{eqnarray} \mu&=&\mu(\Phi)=\int_{{\mathbb R}^d}\biggl[\frac
12 \left(|\nabla\phi_1|^2+
|\nabla\phi_2|^2\right)+V(\bx)(|\phi_1|^2+|\phi_2|^2)+\delta
|\phi_1|^2+ \beta_{11}|\phi_1|^4\nonumber\\
&&\qquad +\beta_{22}|\phi_2|^4
+2\beta_{12}|\phi_1|^2|\phi_2|^2+2\lambda\cdot\text{Re}
(\phi_1\bar{\phi}_2)\biggl]d\bx. \label{eq:chemp3:sec9}
\end{eqnarray}
  In
fact, the above time-independent CGPEs (\ref{eq:grd1:sec9}) can also be
obtained from the CGPEs (\ref{eq:cgpe1:sec9}) by substituting the ansatz
\begin{equation} \label{eq:anst1:sec9} \psi_1(\bx,t)=e^{-i\mu t}\phi_1(\bx),\qquad
\psi_2(\bx,t)=e^{-i\mu t}\phi_2(\bx).
\end{equation}
The eigenfunctions of  the nonlinear eigenvalue problem (\ref{eq:grd1:sec9})
under the normalization (\ref{eq:norm1:sec9}) are usually called as
stationary states of the two-component BEC (\ref{eq:cgpe1:sec9}). Among
them, the  eigenfunction with minimum energy is the ground state and
those whose energy are larger than that of the ground state are
usually called as excited states.

  It is easy to see that the ground state $\Phi_g$ defined in
  (\ref{eq:minimize:sec9}) is equivalent to the following:\\
Find $\left(\Phi_g \in S\right)$, such that
 \bea\label{eq:minim3:sec9}
  E\left(\Phi_g\right)= \min_{\Phi \in S}
    E\left(\Phi\right)
    =\min_{\ap\in[0,1]}\;E(\ap),
    \qquad E(\ap)=\min_{\Phi \in S_\ap}\; E(\Phi).
  \eea

Denote
\begin{equation}
\mathcal{D}=\left\{\Phi=(\phi_1,\phi_2)^T\ | \,  V\,|\phi_j|^2\in
L^1(\mathbb{R}^d), \ \phi_j\in H^1(\mathbb{R}^d)\cap
L^4(\mathbb{R}^d),\ j=1,2\right\},
\end{equation}
then the ground state $\Phi_g$ of (\ref{eq:minimize:sec9}) is also given by the following:\\
  Find $\left(\Phi_g \in \mathcal{D}_1\right)$, such that
  \begin{equation}\label{eq:minim6:sec9}
    E_g := E\left(\Phi_g\right) = \min_{\Phi \in \mathcal{D}_1}
    E\left(\Phi\right),
  \end{equation}
where
\begin{equation}
\mathcal{D}_1=\mathcal{D}\cap \left\{\Phi=(\phi_1,\phi_2)^T\ |\
\|\Phi\|^2=\int_{\mathbb{R}^d}
(|\phi_1(\bx)|^2+|\phi_2(\bx)|^2)\,d\bx=1\right\}.
\end{equation}

In addition, we introduce the auxiliary  energy functional
\begin{align}\label{eq:mini1:sec9}
\widetilde{E}(\Phi)=&\int_{\mathbb{R}^d}\bigg\{\frac 12
\left(|\nabla\phi_1|^2+
|\nabla\phi_2|^2\right)+\left[V(\bx)\left(|\phi_1|^2
+|\phi_2|^2\right)+\delta|\phi_1|^2\right]\\&
+\left(\frac 12 \beta_{11}|\phi_1|^4+\frac 12\beta_{22}|\phi_2|^4
+\beta_{12}|\phi_1|^2|\phi_2|^2\right)-2|\lambda|\cdot
|\phi_1|\cdot|\phi_2|\,\bigg\}\,d\bx,\nn
\end{align}
and the auxiliary nonconvex minimization problem:\\
  Find $\left(\Phi_g \in \mathcal{D}_1\right)$, such that
  \begin{equation}\label{eq:minim7:sec9}
     \widetilde{E}\left(\Phi_g\right) = \min_{\Phi \in \mathcal{D}_1}
    \widetilde{E}\left(\Phi\right).
  \end{equation}

For $\Phi=(\phi_1,\phi_2)^T$, we write $E(\phi_1,\phi_2)=E(\Phi)$
and $\widetilde{E}(\phi_1,\phi_2)=\widetilde{E}(\Phi)$. Then we have
the following lemmas \cite{BaoCai0}:

\begin{lemma}\label{lem:equival:sec9} For the minimizers
$\Phi_g(\bx)=(\phi_1^g(\bx),\phi_2^g(\bx))^T$  of the
nonconvex minimization problems (\ref{eq:minim6:sec9}) and (\ref{eq:minim7:sec9}),
we have

(i). If $\Phi_g$ is a minimizer of (\ref{eq:minim6:sec9}), then
$\phi_1^g(\bx)=e^{i\tht_1}|\phi_1^g(\bx)|$ and
$\phi_2^g(\bx)=e^{i\tht_2}|\phi_2^g(\bx)|$ with $\tht_1$ and
$\tht_2$ two constants satisfying $\theta_1=\theta_2$ if
$\lambda<0$; and  $\theta_1=\theta_2\pm\pi$ if $\lambda>0$. In
addition,
$\widetilde{\Phi}_g=\left(e^{i\tht_3}\phi_1^g,e^{i\tht_4}\phi_2^g\right)^T$
with $\tht_3$ and $\tht_4$ two constants satisfying
$\theta_3=\theta_4$ if $\lambda<0$; and $\theta_3=\theta_4\pm\pi$ if
$\lambda>0$ is also a minimizer of (\ref{eq:minim6:sec9}).

(ii). If $\Phi_g$ is a minimizer of (\ref{eq:minim7:sec9}), then
$\phi_1^g(\bx)=e^{i\tht_1}|\phi_1^g(\bx)|$ and
$\phi_2^g(\bx)=e^{i\tht_2}|\phi_2^g(\bx)|$ with $\tht_1$ and
$\tht_2$ two constants. In addition,
$\widetilde{\Phi}_g=\left(e^{i\tht_3}\phi_1^g,e^{i\tht_4}\phi_2^g\right)^T$
with $\tht_3$ and $\tht_4$ two constants  is also a minimizer of
(\ref{eq:minim7:sec9}).

(iii). If $\Phi_g$ is a minimizer of (\ref{eq:minim6:sec9}), then
 $\Phi_g$ is also a minimizer of (\ref{eq:minim7:sec9}).

 (iv). If $\Phi_g$ is a minimizer of (\ref{eq:minim7:sec9}), then
 $\widetilde{\Phi}_g=\left(|\phi_1^g|,-\text{sign}(\lambda)|\phi_2^g|\right)^T$
 is a minimizer of (\ref{eq:minim6:sec9}).
\end{lemma}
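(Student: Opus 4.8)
The plan is to deduce all four statements from two elementary comparisons between $E$ and $\widetilde{E}$ together with the equality analysis behind Lemma \ref{lem:pos:sec2}. First I would record the building blocks. Since $|\text{Re}(z)|\le|z|$ for $z\in\mathbb{C}$, we have the pointwise bound $2\lambda\,\text{Re}(\phi_1\overline{\phi_2})\ge-2|\lambda|\,|\phi_1|\,|\phi_2|$, hence $E(\Phi)\ge\widetilde{E}(\Phi)$ for every $\Phi\in\mathcal{D}$, with equality if and only if $\lambda\,\text{Re}(\phi_1\overline{\phi_2})=-|\lambda|\,|\phi_1|\,|\phi_2|$ almost everywhere. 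Next, for real nonnegative functions $u_1,u_2$ one checks directly that $2\lambda\,\text{Re}(u_1\,\overline{(-\text{sign}(\lambda)u_2)})=-2|\lambda|\,u_1u_2$, so $E(u_1,-\text{sign}(\lambda)u_2)=\widetilde{E}(u_1,u_2)$; moreover $(u_1,-\text{sign}(\lambda)u_2)$ has the same component $L^2$-norms as $(u_1,u_2)$ and hence stays in $\mathcal{D}_1$. Finally, by the diamagnetic inequality $\|\nabla|\phi_j|\|_2\le\|\nabla\phi_j\|_2$ \cite{LiebLoss}, as used in the proof of Lemma \ref{lem:pos:sec2}, and the fact that every other term of $\widetilde{E}$ depends on $(\phi_1,\phi_2)$ only through $(|\phi_1|,|\phi_2|)$, we obtain $\widetilde{E}(|\phi_1|,|\phi_2|)\le\widetilde{E}(\phi_1,\phi_2)$ with equality iff $\phi_j=e^{i\theta_j}|\phi_j|$ for constants $\theta_j$; also, multiplying either component by a unimodular constant leaves $\widetilde{E}$ unchanged.

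With these in hand, part (ii) is immediate: if $\Phi_g$ minimizes (\ref{eq:minim7:sec9}), then $\widetilde{E}(|\phi_1^g|,|\phi_2^g|)\le\widetilde{E}(\Phi_g)$ forces equality, so $\phi_j^g=e^{i\theta_j}|\phi_j^g|$ with $\theta_1,\theta_2$ constants and no relation between them (since $\widetilde{E}$ is insensitive to the relative phase); and $(e^{i\theta_3}\phi_1^g,e^{i\theta_4}\phi_2^g)$ shares all moduli and all gradient magnitudes with $\Phi_g$, hence the same value of $\widetilde{E}$, and is again a minimizer. To attack (i), (iii) and (iv) I would first show that the two variational problems have the same value. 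For any $\Phi\in\mathcal{D}_1$ the inequality $\widetilde{E}(\Phi)\le E(\Phi)$ gives $\min_{\mathcal{D}_1}\widetilde{E}\le\min_{\mathcal{D}_1}E=E_g$, while the chain $\widetilde{E}(\Phi)\ge\widetilde{E}(|\phi_1|,|\phi_2|)=E(|\phi_1|,-\text{sign}(\lambda)|\phi_2|)\ge E_g$ gives the reverse inequality; hence $\min_{\mathcal{D}_1}\widetilde{E}=E_g$.

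Now (iii): if $\Phi_g$ minimizes (\ref{eq:minim6:sec9}), then $\widetilde{E}(\Phi_g)\le E(\Phi_g)=E_g=\min_{\mathcal{D}_1}\widetilde{E}\le\widetilde{E}(\Phi_g)$, so $\widetilde{E}(\Phi_g)=E_g$ and $\Phi_g$ minimizes (\ref{eq:minim7:sec9}); the chain also shows $\widetilde{E}(\Phi_g)=E(\Phi_g)$, i.e. the equality case of $E\ge\widetilde{E}$ holds for $\Phi_g$. For (i), apply (ii) to get $\phi_j^g=e^{i\theta_j}|\phi_j^g|$, and insert this into $\lambda\,\text{Re}(\phi_1^g\overline{\phi_2^g})=-|\lambda|\,|\phi_1^g|\,|\phi_2^g|$ to obtain $\lambda\cos(\theta_1-\theta_2)\,|\phi_1^g|\,|\phi_2^g|=-|\lambda|\,|\phi_1^g|\,|\phi_2^g|$ almost everywhere; since a ground state of (\ref{eq:grd1:sec9}) has each component nowhere vanishing (by elliptic regularity and the strong maximum principle, recalling $\lambda\neq0$), the weight $|\phi_1^g|\,|\phi_2^g|$ is positive, so $\cos(\theta_1-\theta_2)=-\text{sign}(\lambda)$, i.e. $\theta_1=\theta_2$ when $\lambda<0$ and $\theta_1=\theta_2\pm\pi$ when $\lambda>0$. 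For the second assertion of (i), writing $\phi_j^g$ for its modulus, $E(e^{i\theta_3}\phi_1^g,e^{i\theta_4}\phi_2^g)-E_g=2\lambda(\cos(\theta_3-\theta_4)+\text{sign}(\lambda))\int|\phi_1^g|\,|\phi_2^g|\,d\bx$, which vanishes precisely when $\cos(\theta_3-\theta_4)=-\text{sign}(\lambda)$, giving the stated relation between $\theta_3$ and $\theta_4$. Finally (iv): if $\Phi_g$ minimizes (\ref{eq:minim7:sec9}), then $E(|\phi_1^g|,-\text{sign}(\lambda)|\phi_2^g|)=\widetilde{E}(|\phi_1^g|,|\phi_2^g|)\le\widetilde{E}(\Phi_g)=E_g$, and since the competitor on the left lies in $\mathcal{D}_1$ it must attain $E_g$, so $\widetilde{\Phi}_g=(|\phi_1^g|,-\text{sign}(\lambda)|\phi_2^g|)$ minimizes (\ref{eq:minim6:sec9}).

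The step I expect to require the most care is the equality analysis: combining the pointwise bound $\text{Re}(z)\ge-|z|$ with the $H^1$ equality case of the diamagnetic inequality to conclude that the phases $\theta_j$ are genuine constants rather than merely measurable functions, which is where the positivity of the ground-state components enters and requires a maximum-principle argument for the coupled system (\ref{eq:grd1:sec9}). The degenerate case $\lambda=0$, in which $E\equiv\widetilde{E}$ on $\mathcal{D}_1$ and all four statements become trivial, should be dispatched separately at the outset.
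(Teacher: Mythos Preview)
The paper does not supply a proof of this lemma; it is stated with a citation to \cite{BaoCai0} and then used to deduce Theorem~\ref{thm:con:sec9}. Your argument is the natural one and matches what that reference does: the pointwise comparison $2\lambda\,\text{Re}(\phi_1\bar\phi_2)\ge-2|\lambda|\,|\phi_1|\,|\phi_2|$ gives $E\ge\widetilde{E}$, the identity $E(u_1,-\text{sign}(\lambda)u_2)=\widetilde{E}(u_1,u_2)$ for nonnegative $u_1,u_2$ matches the two infima, and the componentwise diamagnetic inequality handles the phases.

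One refinement concerning the step you already flag. For the relation $\cos(\theta_1-\theta_2)=-\text{sign}(\lambda)$ you only need $|\phi_1^g|\,|\phi_2^g|\not\equiv0$, which follows energetically: if one component were identically zero, transferring a small amount of mass into it with the favourable phase would lower $E$ linearly when $\lambda\neq0$. Where strict positivity is genuinely required is earlier, in the equality case of the diamagnetic inequality---the phase of $\phi_j^g$ is a priori only locally constant on $\{|\phi_j^g|>0\}$, so you need that set to be connected. The cleanest way to get this without circularity is to pass first to $(|\phi_1^g|,|\phi_2^g|)$, note that this nonnegative pair minimizes $\widetilde{E}$ and hence satisfies its Euler--Lagrange system, and then apply the strong maximum principle to each component separately: in the equation for $|\phi_1^g|$ the coupling term $|\lambda|\,|\phi_2^g|$ enters as a nonnegative source, so a nonnegative nontrivial solution is strictly positive, and likewise for $|\phi_2^g|$.
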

For the auxiliary minimization problem (\ref{eq:minim7:sec9}), we have the following results generalizing
the single component BEC case in section \ref{sec:mathgpe}.

\begin{theorem}\label{thm:mres:sec9}(Existence and uniqueness of (\ref{eq:minim7:sec9}) \cite{BaoCai0})
Suppose  $V(\bx)\ge 0$ satisfying
$\lim\limits_{|\bx|\to\infty}V(\bx)=\infty$, then  there exists a
minimizer
 $\Phi^\infty=(\phi_1^\infty,\phi_2^\infty)^T\in \mathcal{D}_1$ of
 (\ref{eq:minim7:sec9}) if one of the following conditions holds,
 \begin{enumerate}\renewcommand{\labelenumi}{(\roman{enumi})}
 \item $d=1$;
\item  $d=2$ and   $\beta_{11}
\ge -C_{b},\beta_{22}\ge -C_b$, $\beta_{12}\ge -C_b
-\sqrt{C_b+\beta_{11}}\sqrt{C_b+\beta_{22}}$;
\item $d=3$ and  $B$ is
either positive semi-definite or nonnegative,
\end{enumerate}
where $C_b$ is given in (\ref{eq:bestcons:2d}).
In addition, if the matrix $B$
is positive semi-definite and at least one of the
 parameters  $\delta$, $\lambda$, $\gm_1$ and $\gm_2$ are nonzero,
 then the minimizer
$(|\phi_1^\infty|,|\phi_2^\infty|)^T$ is unique.
\end{theorem}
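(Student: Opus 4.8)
The plan is to prove existence by the direct method in the calculus of variations and uniqueness by convexity in the density variables, mirroring the single‑component arguments behind Theorem \ref{thm:gs} (Lemmas \ref{lem:comp:sec2}, \ref{lem:pos:sec2} and \ref{lem:convex:sec2}). First I would record that it suffices to minimize $\widetilde{E}$ over pairs of real nonnegative functions: by the diamagnetic inequality $\|\nabla|\phi_j|\|_2\le\|\nabla\phi_j\|_2$ (cf. Lemma \ref{lem:pos:sec2}), and since every remaining term in $\widetilde{E}$ in (\ref{eq:mini1:sec9}) depends only on $|\phi_1|$ and $|\phi_2|$ (the coupling term is already $-2|\lambda|\,|\phi_1|\,|\phi_2|$), replacing $(\phi_1,\phi_2)$ by $(|\phi_1|,|\phi_2|)$ does not increase the energy and preserves membership in $\mathcal{D}_1$. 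Hence any minimizer may be taken nonnegative, and $(|\phi_1^\infty|,|\phi_2^\infty|)$ is then itself a minimizer.

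For existence the key step is coercivity. Using $\|\Phi\|_2^2=1$ together with the Gagliardo--Nirenberg inequality in $d=2$ (and the definition of $C_b$ in (\ref{eq:bestcons:2d})) one bounds $\beta_{11}\|\phi_1\|_4^4$, $\beta_{22}\|\phi_2\|_4^4$ and the cross term $\beta_{12}\|\phi_1\phi_2\|_2^2$ from below by (a fraction of) the kinetic energy; in $d=3$ positive semidefiniteness (or nonnegativity) of $B$ makes $\tfrac12\beta_{11}\rho_1^2+\tfrac12\beta_{22}\rho_2^2+\beta_{12}\rho_1\rho_2\ge0$ pointwise; in $d=1$ the quartic terms are dominated by an arbitrarily small multiple of the kinetic energy via $\|f\|_\infty^2\le\|\nabla f\|_2\|f\|_2$, exactly as in case (iii) of Theorem \ref{thm:gs}. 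The term $-2|\lambda|\,|\phi_1|\,|\phi_2|$ is bounded below by $-|\lambda|(\|\phi_1\|_2^2+\|\phi_2\|_2^2)=-|\lambda|$, so in all three cases $\widetilde{E}$ is bounded below on $\mathcal{D}_1$; by the confining hypothesis $\lim_{|\bx|\to\infty}V(\bx)=\infty$ every minimizing sequence $\{\Phi^n=(\phi_1^n,\phi_2^n)\}$ is then bounded in $X(\Bbb R^d)\times X(\Bbb R^d)$. Passing to a weakly convergent subsequence $\Phi^n\rightharpoonup\Phi^\infty$, Lemma \ref{lem:comp:sec2} gives strong convergence in $L^2$ and $L^4$, which handles the quartic and coupling terms and forces $\|\Phi^\infty\|_2=1$; weak lower semicontinuity of the $H^1$- and $L_V$-seminorms then gives $\widetilde{E}(\Phi^\infty)\le\liminf\widetilde{E}(\Phi^n)$, so $\Phi^\infty\in\mathcal{D}_1$ is a minimizer.

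For uniqueness under $B\succeq0$, I would rewrite $\widetilde{E}$ as a functional of the density pair $\rho=(\rho_1,\rho_2)=(|\phi_1|^2,|\phi_2|^2)$ on the convex set $\{\rho_1,\rho_2\ge0,\ \int_{\Bbb R^d}(\rho_1+\rho_2)=1\}$, and show it is convex there: the potential and detuning terms are linear in $\rho$, the interaction part is the quadratic form $\tfrac12\int\rho^{\mathrm T}B\rho$, convex since $B$ is positive semidefinite, the coupling term $-2|\lambda|\int\sqrt{\rho_1\rho_2}$ is convex because $(a,b)\mapsto-\sqrt{ab}$ is convex on the first quadrant, and each kinetic term $\tfrac12\int|\nabla\sqrt{\rho_j}|^2$ is convex in $\rho_j$ by the Cauchy-inequality computation in the proof of Lemma \ref{lem:convex:sec2}. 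If $\Phi_g^{(1)},\Phi_g^{(2)}$ were two nonnegative minimizers with densities $\rho^{(1)}\ne\rho^{(2)}$, then $\widetilde{E}$ would have to be affine along the segment joining them, forcing equality in each convex inequality above; the equality case of the kinetic inequality forces $\sqrt{\rho_j^{(1)}}$ and $\sqrt{\rho_j^{(2)}}$ to be proportional (hence $\phi_j^{(1)}=c_j\phi_j^{(2)}$), and the affineness of the remaining terms — the coupling term when $\lambda\ne0$, or the detuning/trap-dependent linear terms when one of $\delta,\gamma_1,\gamma_2$ is nonzero — together with $\int(\rho_1+\rho_2)=1$ pins down $c_1,c_2$, giving $\rho^{(1)}=\rho^{(2)}$, a contradiction. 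Finally I would invoke Lemma \ref{lem:equival:sec9}(ii)--(iv) to transfer the conclusion back to $|\phi_1^g|,|\phi_2^g|$.

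The main obstacle will be the rigidity step in the uniqueness argument: the coupling term is only degenerately convex (its Hessian is rank one, vanishing along rays $\rho_1/\rho_2=\mathrm{const}$), so extracting strictness requires carefully combining it with the strict part of the kinetic energy and with whichever of $\delta,\lambda,\gamma_1,\gamma_2$ is nonzero; one must also justify that a minimizer of $\widetilde{E}$ lies in $H^2_{\mathrm{loc}}$ (elliptic regularity applied to the coupled Euler--Lagrange system) so that the pointwise manipulations with $\sqrt{\rho_j}$ are licit. Compared with the single-component case, it is the bookkeeping of the two coupled Euler--Lagrange equations and of the one (or two) mass constraints, rather than any single estimate, that makes this delicate.
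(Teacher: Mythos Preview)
The paper does not supply its own proof of Theorem~\ref{thm:mres:sec9}; it is stated with a citation to \cite{BaoCai0} and then combined with Lemma~\ref{lem:equival:sec9} to obtain Theorem~\ref{thm:con:sec9}. Your plan---direct method for existence (coercivity via the case-by-case estimates of Theorem~\ref{thm:gs}, compactness via Lemma~\ref{lem:comp:sec2}, weak lower semicontinuity) and uniqueness via convexity of $\widetilde{E}$ in the density pair $(\rho_1,\rho_2)$ as in Lemma~\ref{lem:convex:sec2}---is the natural two-component extension of Section~\ref{sec:mathgpe} and is exactly the strategy of \cite{BaoCai0}.

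Two small points. First, your closing sentence about invoking Lemma~\ref{lem:equival:sec9}(ii)--(iv) ``to transfer the conclusion back to $|\phi_1^g|,|\phi_2^g|$'' does not belong here: Theorem~\ref{thm:mres:sec9} concerns the auxiliary problem (\ref{eq:minim7:sec9}) and its minimizer $\Phi^\infty$ directly; the transfer via Lemma~\ref{lem:equival:sec9} is the content of Theorem~\ref{thm:con:sec9}, not of the present statement. Second, you have correctly located the genuinely delicate step: once equality in the kinetic convexity forces $\phi_j^{(1)}=c_j\phi_j^{(2)}$, the remaining constraints only give $c_1^2\|\phi_1^{(2)}\|_2^2+c_2^2\|\phi_2^{(2)}\|_2^2=1$, and one must extract a second relation from affineness of the coupling term (when $\lambda\ne0$ this forces $c_1=c_2$, hence $c_1=c_2=1$) or from the interaction quadratic form together with the asymmetry parameters. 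This bookkeeping is the substance of the proof in \cite{BaoCai0}, and your outline captures it accurately.
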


Combining Theorem \ref{thm:mres:sec9} and Lemma \ref{lem:equival:sec9}, we  draw the conclusions \cite{BaoCai0}:
\begin{theorem} (Existence and uniqueness of (\ref{eq:minimize:sec9}))\label{thm:con:sec9}
Suppose  $V(\bx)\ge 0$ satisfying
$\lim\limits_{|\bx|\to\infty}V(\bx)=\infty$ and at least one of
the following condition holds,
\begin{enumerate}\renewcommand{\labelenumi}{(\roman{enumi})}
 \item $d=1$;
\item  $d=2$ and  $\beta_{11}
\ge -C_{b}$,  $\beta_{22}\ge -C_b$, and $\beta_{12}\ge -C_b-
\sqrt{C_b+\beta_{11}}\sqrt{C_b+\beta_{22}}$;
\item $d=3$ and  $B$ is
either positive semi-definite or nonnegative,
\end{enumerate}
there exists a
ground state $\Phi_g=(\phi_1^g,\phi_2^g)^T$ of (\ref{eq:minimize:sec9}). In
addition,
 $\widetilde{\Phi}_g:=
 (e^{i\theta_1}|\phi_1^g|,e^{i\theta_2}|\phi_2^g|)$ is also a
ground state of (\ref{eq:minimize:sec9}) with $\theta_1$ and $\theta_2$ two
constants satisfying $\theta_1-\theta_2=\pm\pi$ when $\lambda>0$ and
$\theta_1-\theta_2=0$ when $\lambda<0$, respectively. Furthermore,
if the matrix $B$ is positive semi-definite and at least one of the
 parameters  $\delta$, $\lambda$, $\gm_1$ and $\gm_2$ are nonzero, then the ground state
$(|\phi_1^g|,-{\rm sign}(\ld)|\phi_2^g|)^T$ is unique.
 In contrast, if one of the following conditions holds,
\begin{enumerate}\renewcommand{\labelenumi}{(\roman{enumi})}
\item  $d=2$ and $\beta_{11}<-C_b$ or $\beta_{22}<-C_b$ or $\beta_{12}<
-C_b-\sqrt{C_b+\beta_{11}}\sqrt{C_b+\beta_{22}}$ ;
\item $d=3$ and $\beta_{11}<0$ or $\beta_{22}<0$ or $\beta_{12}<0$ with
$\beta_{12}^2> \beta_{11}\beta_{22}$;
\end{enumerate}
there exists no ground state of (\ref{eq:minimize:sec9}).
\end{theorem}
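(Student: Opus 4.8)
The plan is to reduce the genuine ground state problem (\ref{eq:minimize:sec9}) to the auxiliary minimization problem (\ref{eq:minim7:sec9}), for which existence and uniqueness were already settled in Theorem \ref{thm:mres:sec9}, and then to transfer the conclusions back using the correspondence in Lemma \ref{lem:equival:sec9}. First I would note that the set $S$ in (\ref{eq:nonconset:sec9}) and the set $\mathcal{D}_1$ differ only cosmetically: $E(\Phi)<\infty$ forces $\phi_j\in H^1\cap L^4$ and $V|\phi_j|^2\in L^1$, so (\ref{eq:minimize:sec9}) and (\ref{eq:minim6:sec9}) share the same minimizers and the same infimum. Next, parts (iii) and (iv) of Lemma \ref{lem:equival:sec9} put the minimizers of (\ref{eq:minim6:sec9}) and of (\ref{eq:minim7:sec9}) in bijection: if $\Phi_g$ minimizes (\ref{eq:minim6:sec9}) it also minimizes (\ref{eq:minim7:sec9}), and if $(\phi_1^g,\phi_2^g)^T$ minimizes (\ref{eq:minim7:sec9}) then $(|\phi_1^g|,-\mathrm{sign}(\lambda)|\phi_2^g|)^T$ minimizes (\ref{eq:minim6:sec9}); in particular a ground state exists for one problem iff it does for the other.

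With this reduction in place the existence claims are immediate. Under hypothesis (i) $d=1$, hypothesis (ii) $d=2$ with $\beta_{11}\ge -C_b$, $\beta_{22}\ge -C_b$ and $\beta_{12}\ge -C_b-\sqrt{C_b+\beta_{11}}\sqrt{C_b+\beta_{22}}$ (with $C_b$ from (\ref{eq:bestcons:2d})), or hypothesis (iii) $d=3$ with $B$ positive semi-definite or nonnegative, Theorem \ref{thm:mres:sec9} furnishes a minimizer $(\phi_1^\infty,\phi_2^\infty)^T$ of (\ref{eq:minim7:sec9}); applying Lemma \ref{lem:equival:sec9}(iv) yields the ground state $\Phi_g=(|\phi_1^\infty|,-\mathrm{sign}(\lambda)|\phi_2^\infty|)^T$. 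The structural statement that any ground state equals $(e^{i\theta_1}|\phi_1^g|,e^{i\theta_2}|\phi_2^g|)$ with $\theta_1-\theta_2=\pm\pi$ when $\lambda>0$ and $\theta_1-\theta_2=0$ when $\lambda<0$ is precisely Lemma \ref{lem:equival:sec9}(i): the only term in $E$ (\ref{eq:energy:sec9}) that sees the relative phase is $2\lambda\,\mathrm{Re}(\psi_1\bar\psi_2)$, which a minimizer must render as negative as possible pointwise, fixing the global relative sign, while the diamagnetic inequality $\|\nabla|\phi_j|\|_2\le\|\nabla\phi_j\|_2$ forces each component to be a constant phase times its modulus. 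For uniqueness I would invoke positive semi-definiteness of $B$ together with one of the linear parameters of Theorem \ref{thm:mres:sec9} being nonzero: the functional $\widetilde E$ is strictly convex in the density pair $(\rho_1,\rho_2)=(|\phi_1|^2,|\phi_2|^2)$ — the kinetic terms are convex by the computation in Lemma \ref{lem:convex:sec2}, the potential and Rabi terms are convex (the latter being $-2|\lambda|\sqrt{\rho_1\rho_2}$), and the interaction energy is the quadratic form in $(\rho_1,\rho_2)$ with matrix $B$, convex exactly when $B$ is positive semi-definite — hence the minimizing density pair, and thus $(|\phi_1^g|,-\mathrm{sign}(\lambda)|\phi_2^g|)^T$, is unique.

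For the nonexistence cases I would show $\inf_{\Phi\in S}E(\Phi)=-\infty$ by producing an admissible sequence along which the energy diverges, in the spirit of the test-function constructions in the proof of Theorem \ref{thm:gs}. In $d=3$, when $\beta_{11}<0$ or $\beta_{22}<0$, or $\beta_{12}<0$ with $\beta_{12}^2>\beta_{11}\beta_{22}$, the form associated with $B$ takes a strictly negative value $(c_1^2,c_2^2)B(c_1^2,c_2^2)^T<0$ for a suitable amplitude vector $(c_1,c_2)$; taking $\phi_j^\vep(\bx)=c_j\vep^{-3/2}\phi(\bx/\vep)$ with $\phi$ a fixed normalized Gaussian, the interaction energy behaves like $\vep^{-3}$ times this negative constant while the kinetic, potential, $\delta$ and $\lambda$ contributions stay $O(\vep^{-2})$ or lower (the Josephson term is bounded by $2|\lambda|\,\|\phi_1^\vep\|_2\,\|\phi_2^\vep\|_2\le|\lambda|$), so $E(\Phi^\vep)\to-\infty$ as $\vep\to0^+$. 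In $d=2$, when $\beta_{11}<-C_b$, $\beta_{22}<-C_b$, or $\beta_{12}<-C_b-\sqrt{C_b+\beta_{11}}\sqrt{C_b+\beta_{22}}$, I would instead rescale the optimizer $\phi_b$ of (\ref{eq:bestcons:2d}), $\phi_j^\vep(\bx)=c_j\vep^{-1}\phi_b(\bx/\vep)$, so that the $\vep^{-2}$ coefficient of $E(\Phi^\vep)$ is a combination of $C_b+\beta_{11}$, $C_b+\beta_{22}$ and $C_b+\beta_{12}$ that is negative exactly under the stated inequalities, again driving the energy to $-\infty$.

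The main obstacle, I expect, will be making the relative-phase part of the structure statement and the sharpness of the 2D thresholds fully rigorous rather than formal. For the phase, one must verify that for $\lambda\ne0$ the diamagnetic inequality is saturated simultaneously for both components and that $\mathrm{Re}(\phi_1\bar\phi_2)=-|\phi_1||\phi_2|$ almost everywhere, which in turn rests on strict positivity of $|\phi_1^\infty|$ and $|\phi_2^\infty|$; this requires elliptic regularity and the strong maximum principle applied to the coupled Euler--Lagrange system (\ref{eq:grd1:sec9}), with extra care since the system is not cooperative when $\lambda>0$. For the 2D endpoints, the delicate point is that the boundedness-below estimate (via the two-component Gagliardo--Nirenberg inequality together with Cauchy--Schwarz on the cross term $\beta_{12}\rho_1\rho_2$) and the diverging test sequence must meet at the very same constant $-C_b-\sqrt{C_b+\beta_{11}}\sqrt{C_b+\beta_{22}}$, i.e. at the threshold where the $2\times2$ form with diagonal entries $C_b+\beta_{11}$, $C_b+\beta_{22}$ and off-diagonal entry $C_b+\beta_{12}$ loses positive semi-definiteness; checking that both arguments are tight there is the crux.
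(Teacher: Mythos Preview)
Your proposal is correct and follows exactly the route the paper indicates: the paper's entire proof is the one-line remark ``Combining Theorem \ref{thm:mres:sec9} and Lemma \ref{lem:equival:sec9}, we draw the conclusions \cite{BaoCai0}'', so the existence and phase-structure statements are obtained precisely as you describe, by passing to the auxiliary problem (\ref{eq:minim7:sec9}) via Lemma \ref{lem:equival:sec9}(iii)--(iv) and then invoking Theorem \ref{thm:mres:sec9}. Your test-function arguments for nonexistence and your convexity sketch for uniqueness fill in details that the paper defers to \cite{BaoCai0}, but they are the standard ones (cf.\ the proof of Theorem \ref{thm:gs} and Lemma \ref{lem:convex:sec2}) and match the intended approach.
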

When either $|\delta|$ or $|\lambda|$ goes to infinity, the two component ground state problem (\ref{eq:minimize:sec9}) will collapse to a single component ground state problem in section \ref{sec:mathgpe} \cite{BaoCai0}.

For fixed  $\beta_{11}\ge0$ and
$\beta_{22}\ge0$, when $\beta_{12}\to\infty$,
 the phase of two components of the ground state
 $\Phi_g=(\phi_1^g,\phi_2^g)^T$ will be segregated \cite{CaffeLin,Caliari,DuLin}, i.e. $\Phi_g$ will converge to a state such
that $\phi_1^g\cdot\phi_2^g=0$.

\begin{remark} If the potential  $V(\bx)$ in the two equations in
(\ref{eq:cgpe1:sec9}) is chosen to be different in different equations, i.e.
$V_j(\bx)$ in the $j$th ($j=1,2$) equation, and they satisfy
$V_j(\bx)\ge 0$, $\lim\limits_{|\bx|\to\infty}V_j(\bx)=\infty$
($j=1,2$), then the conclusions in the above Lemmas and Theorems
\ref{thm:mres:sec9}-\ref{thm:con:sec9} are still valid under the similar conditions.
\end{remark}

\subsection{Dynamical properties}
Well-posedness of Cauchy problem of the CGPEs (\ref{eq:cgpe1:sec9}) in energy space is quite similar to that of  single GPE (cf. section \ref{sec:mathgpe}), and we omit the results here. If there is no internal Josephson junction, i.e. $\lambda=0$,  the density of each
component is conserved. With an internal Josephson junction, we have
the following lemmas for the dynamics of the
density of each component \cite{ZhangBaoLi}:
\begin{lemma}\label{lem:6LEMMA1:sec9}
Suppose $(\psi_1({\bx},t),\psi_2({\bx},t))$ is the solution of the
CGPEs (\ref{eq:cgpe1:sec9}) with potential $V(\bx)+\delta$ for the first component $\psi_1$ replaced by $V_1(\bx)$ and potential $V(\bx)$ for the second component $\psi_2$ replaced by $V_2(\bx)$; then we have, for $j=1,2$
\be \label{eq:ode11:sec9}
\ddot{N}_j(t)=-2\ld^2\left[2N_j(t)-1\right]+F_j(t),
\qquad t\ge0, \ee with initial conditions \bea \label{eq:6L1_1:sec9}
&&N_j(0)=N_j^{(0)}=\int_{{\Bbb
R}^d}|\psi_j^0(\bx)|^2\;d{\bx}=\frac{N_j^0}{N},\\
\label{eq:6L1_15:sec9} &&\dot{N}_j(0)=N_j^{(1)}= 2\ld \int_{{\Bbb
R}^d}{\rm
Im}\left[\psi_j^0(\bx)\left(\overline{\psi_{k_j}^0(\bx)}\right)\right]
\;d{\bx}; \eea where $k_1=2$, $k_2=1$ and for $t\ge0$,  \beas &&F_j(t)=\ld\int_{{\Bbb
R}^d}\left(\overline{\psi_j}\psi_{k_j}+\psi_j\overline{\psi_{k_j}}\right)
\bigg[V_{k_j}({\bx})-V_j({\bx}) \\
&&\qquad\qquad\qquad-(\bt_{jj}-\bt_{k_jj})|\psi_j|^2
+(\bt_{k_jk_j}-\bt_{jk_j})|\psi_{k_j}|^2\bigg] d\bx,\quad t\ge
0.\qquad\qquad \eeas
\end{lemma}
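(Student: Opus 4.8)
The plan is to compute $N_j(t)=\int_{\Bbb R^d}|\psi_j(\bx,t)|^2\,d\bx$ and differentiate it twice in time, each time substituting the evolution equations for \emph{both} components and integrating by parts, while carefully separating real and imaginary parts. It is convenient to write the system with the modified potentials as $i\p_t\psi_j=H_j\psi_j+\lambda\psi_{k_j}$, $j=1,2$, where $k_1=2$, $k_2=1$, $H_j\psi_j=-\frac12\nabla^2\psi_j+W_j\psi_j$, and $W_1=V_1+\beta_{11}|\psi_1|^2+\beta_{12}|\psi_2|^2$, $W_2=V_2+\beta_{12}|\psi_1|^2+\beta_{22}|\psi_2|^2$ are real-valued. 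Since the solution lies in $C([0,T_{\rm max});X)$ and, exactly as in the scalar case treated in Lemmas \ref{lem:monode}--\ref{lem:variance}, it decays together with its derivatives as $|\bx|\to\infty$, every boundary term produced below vanishes; the twice-differentiability of $N_j$ in $t$ can be made rigorous by first running the argument for smooth, rapidly decaying initial data and then passing to the limit (or read as a formal computation).

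First I would differentiate once: $\dot N_j(t)=2\,{\rm Re}\int_{\Bbb R^d}\overline{\psi_j}\,\p_t\psi_j\,d\bx=2\,{\rm Re}\int_{\Bbb R^d}\overline{\psi_j}\big(-iH_j\psi_j-i\lambda\psi_{k_j}\big)\,d\bx$. Because $\int\overline{\psi_j}H_j\psi_j\,d\bx=\frac12\int|\nabla\psi_j|^2\,d\bx+\int W_j|\psi_j|^2\,d\bx\in\Bbb R$, the term $-i\int\overline{\psi_j}H_j\psi_j$ has vanishing real part, so only the coupling survives, and using ${\rm Re}(-iz)={\rm Im}(z)$,
\be
\dot N_j(t)=2\lambda\,{\rm Im}\int_{\Bbb R^d}\overline{\psi_j}\,\psi_{k_j}\,d\bx .
\ee
Evaluating at $t=0$ gives the stated $\dot N_j(0)=N_j^{(1)}$, while $N_j(0)=\|\psi_j(\cdot,0)\|_2^2=N_j^0/N$ is immediate from the normalization introduced at the beginning of this section. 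This identity also shows $\dot N_j=-\dot N_{k_j}$, consistent with total mass conservation, hence $N_j(t)+N_{k_j}(t)\equiv1$.

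Next I would differentiate $I_j(t):=\int_{\Bbb R^d}\overline{\psi_j}\,\psi_{k_j}\,d\bx$, so that $\ddot N_j=2\lambda\,{\rm Im}\,\dot I_j$. Using $\p_t\overline{\psi_j}=i\overline{H_j\psi_j}+i\lambda\overline{\psi_{k_j}}$ and $\p_t\psi_{k_j}=-iH_{k_j}\psi_{k_j}-i\lambda\psi_j$ (note $k_{k_j}=j$) and collecting terms,
\be
\dot I_j=i\left[\int_{\Bbb R^d}\Big(\overline{H_j\psi_j}\,\psi_{k_j}-\overline{\psi_j}\,H_{k_j}\psi_{k_j}\Big)\,d\bx+\lambda\big(N_{k_j}-N_j\big)\right].
\ee
The kinetic contributions to the integrand cancel after two integrations by parts ($\int\overline{\nabla^2\psi_j}\,\psi_{k_j}=\int\overline{\psi_j}\,\nabla^2\psi_{k_j}$), leaving $\int(W_j-W_{k_j})\overline{\psi_j}\psi_{k_j}\,d\bx$ with $W_j-W_{k_j}$ real; applying ${\rm Im}(i\,\cdot)={\rm Re}(\cdot)$, symmetrizing ${\rm Re}(\overline{\psi_j}\psi_{k_j})=\frac12(\overline{\psi_j}\psi_{k_j}+\psi_j\overline{\psi_{k_j}})$, and writing $N_{k_j}-N_j=1-2N_j$ yields
\be
\ddot N_j(t)=-2\lambda^2\big[2N_j(t)-1\big]+\lambda\int_{\Bbb R^d}\big(\overline{\psi_j}\psi_{k_j}+\psi_j\overline{\psi_{k_j}}\big)\big(W_j-W_{k_j}\big)\,d\bx .
\ee
Expanding $W_j-W_{k_j}=(V_j-V_{k_j})+(\beta_{jj}-\beta_{k_jj})|\psi_j|^2+(\beta_{jk_j}-\beta_{k_jk_j})|\psi_{k_j}|^2$ identifies the last integral with $F_j(t)$ in the statement, which proves the claim. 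If $\lambda=0$ all $\lambda$-terms drop and each $N_j$ is conserved; the computation is dimension-independent, so it applies verbatim in 1D, 2D and 3D, and, with obvious modifications, to component-dependent interaction matrices.

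I expect the only genuine difficulty to be the bookkeeping of the real/imaginary parts and of the algebraic signs inside $W_j-W_{k_j}$: a single slip propagates simultaneously into $\dot N_j(0)$ and into $F_j$, so the computation must be carried out with care, and one should double-check it against the consistency tests $\dot N_1=-\dot N_2$, $N_1+N_2\equiv1$, and the $\lambda=0$ limit. The analytic ingredients that are actually needed---sufficient decay of $\psi$ and its first derivatives at infinity so that the integrations by parts are licit and $N_j$, $I_j$ are smooth in $t$---are not new; they are inherited from the well-posedness and decay results already established for the scalar equation in Section \ref{sec:mathgpe} and only need to be invoked.
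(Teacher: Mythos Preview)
Your approach is correct and is exactly the standard one: differentiate $N_j$ once to pick up the coupling term (the self-adjoint part drops out), then differentiate the cross term $I_j=\int\overline{\psi_j}\psi_{k_j}$ and use that the Laplacians cancel after integration by parts while the potential/nonlinear parts combine into $W_j-W_{k_j}$. The paper itself does not give a proof here (it cites \cite{ZhangBaoLi}), so there is nothing materially different to compare.

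One remark on the final identification: your last displayed formula has the factor $W_j-W_{k_j}$, whereas the bracket in the stated $F_j$ is $V_{k_j}-V_j-(\beta_{jj}-\beta_{k_jj})|\psi_j|^2+(\beta_{k_jk_j}-\beta_{jk_j})|\psi_{k_j}|^2=W_{k_j}-W_j$, i.e.\ the opposite sign; similarly your $\dot N_j(0)=2\lambda\,{\rm Im}\!\int\overline{\psi_j^0}\psi_{k_j}^0$ differs by a sign from the printed $N_j^{(1)}$. A direct check (take $\psi_1=\psi_2=\phi$ real at $t=0$, all $\beta$'s zero, $V_1-V_2=\delta$, and Taylor-expand $|\psi_1|^2$ to second order) gives $\ddot N_1(0)=\lambda\delta$, which matches your formula and not the printed one. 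So your computation is right and the discrepancy is a sign typo in the statement; you should flag this rather than assert agreement.
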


From this lemma, we have \cite{ZhangBaoLi}

\begin{lemma}\label{6LEMMA9}
 If $\delta =0$ and  $\beta_{11} =
\beta_{12} =\beta_{21}= \beta_{22}$ in (\ref{eq:cgpe1:sec9}),
for any initial data $\Psi(\bx,t=0)=(\psi_1^0(\bx), \psi_2^0(\bx))^T$, we have,
for $t\ge0$, \bea\label{6L1_0}
N_j(t)=\left\|\psi_j(\cdot,t)\right\|_2^2=
\left(N_j^{(0)}-\fl{1}{2}\right)\cos(2\ld
t)+\fl{N_j^{(1)}}{2\ld}\sin(2\ld t)+\fl{1}{2}, \quad  j=1,2. \eea
Thus in this case, the density of each component is a periodic
function with period $T = \pi/|\ld|$ depending only on $\ld$.
\end{lemma}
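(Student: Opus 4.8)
The plan is to obtain the explicit formula as an immediate corollary of Lemma \ref{lem:6LEMMA1:sec9}, by checking that under the stated hypotheses the forcing term $F_j(t)$ vanishes identically, so that $N_j(t)$ solves a constant-coefficient linear second-order ODE which can be integrated by hand.

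First I would specialize the hypotheses of Lemma \ref{lem:6LEMMA1:sec9} to the present situation. Since $\delta = 0$, the two trapping potentials coincide, $V_1(\bx) = V_2(\bx) = V(\bx)$, so the term $V_{k_j}(\bx) - V_j(\bx)$ in the integrand of $F_j$ is zero. Moreover, since $\beta_{11} = \beta_{12} = \beta_{21} = \beta_{22}$, both coefficients $\beta_{jj} - \beta_{k_jj}$ and $\beta_{k_jk_j} - \beta_{jk_j}$ vanish. Hence the bracketed quantity in the definition of $F_j(t)$ is identically zero, so $F_j(t) \equiv 0$ for $j = 1,2$ and all $t \ge 0$, and equation (\ref{eq:ode11:sec9}) reduces to
\[
\ddot{N}_j(t) = -2\lambda^2\bigl(2N_j(t) - 1\bigr), \qquad t \ge 0.
\]

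Then I would solve this ODE. The substitution $u_j(t) = N_j(t) - \tfrac12$ turns it into the harmonic-oscillator equation $\ddot{u}_j(t) + 4\lambda^2 u_j(t) = 0$ with angular frequency $2|\lambda|$, whose general solution is $u_j(t) = A_j\cos(2\lambda t) + B_j\sin(2\lambda t)$. Matching the initial data in (\ref{eq:6L1_1:sec9}) and (\ref{eq:6L1_15:sec9}) gives $A_j = u_j(0) = N_j^{(0)} - \tfrac12$ and $2\lambda B_j = \dot{u}_j(0) = N_j^{(1)}$, i.e. $B_j = N_j^{(1)}/(2\lambda)$ (here $\lambda \neq 0$ is implicit, which is also needed for the period statement; the $\lambda \to 0$ case is read as the obvious limit). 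This is exactly (\ref{6L1_0}). The periodicity claim follows at once, since $\cos(2\lambda t)$ and $\sin(2\lambda t)$ have common period $2\pi/(2|\lambda|) = \pi/|\lambda|$, depending only on $\lambda$.

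There is essentially no hard step: the entire content is the cancellation $F_j \equiv 0$, which itself rests on Lemma \ref{lem:6LEMMA1:sec9}. Were one to insist on a self-contained argument, the only nontrivial work would be re-deriving that lemma — differentiating $N_j(t)$ twice, using the CGPEs (\ref{eq:cgpe1:sec9}), integrating by parts, and invoking the exponential decay of $\psi_j$ at infinity to discard boundary terms — but since Lemma \ref{lem:6LEMMA1:sec9} is available, the present statement is a one-line corollary followed by an elementary ODE computation. As a consistency check I would add the two formulas and use $N_1^{(0)} + N_2^{(0)} = 1$ together with $N_1^{(1)} + N_2^{(1)} = 0$ (the latter from conservation of total mass, $\dot N(t) \equiv 0$) to recover $N_1(t) + N_2(t) \equiv 1$, which confirms the sign conventions.
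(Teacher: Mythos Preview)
Your proposal is correct and matches the paper's approach exactly: the paper presents this lemma immediately after Lemma~\ref{lem:6LEMMA1:sec9} with the phrase ``From this lemma, we have,'' indicating precisely the argument you give --- the hypotheses force $F_j\equiv 0$, reducing (\ref{eq:ode11:sec9}) to a harmonic-oscillator ODE whose solution with the stated initial data is (\ref{6L1_0}).
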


\subsection{Numerical methods for computing ground states}
To find the ground state, we first present a continuous normalized gradient flow (CNGF) method discussed in section \ref{subsubsec:gfdn} and then propose a GFDN method based on discretization of CNGF.
\subsubsection{Continuous normalized gradient flow and its discretization}
In order to compute the ground state of two-component BEC with an
internal Josephson junction (\ref{eq:minimize:sec9}), we construct the following CNGF \cite{BaoCai0}:
\bea\label{eq:cgf1:sec9}\begin{split} &\qquad
\frac{\partial\phi_1(\bx,t)}{\partial t}=\left[\frac 1 2\nabla^2
-V(\bx)-\delta-(\beta_{11}|\phi_1|^2+\beta_{12}|\phi_2|^2)\right]\phi_1-\lambda
\phi_2+\mu_\Phi(t)\phi_1,  \\
&\qquad \frac{\partial\phi_2(\bx,t)}{\partial t}=\left[\frac
12\nabla^2-V(\bx)-(\beta_{12}|\phi_1|^2+\beta_{22}|\phi_2|^2)\right]\phi_2-\lambda
\phi_1+\mu_\Phi(t)\phi_2,
\end{split}
\eea where $\Phi(\bx,t)=(\phi_1(\bx,t),\phi_2(\bx,t))^T$ and
$\mu_\Phi(t)$ is chosen such that the above CNGF is mass or
normalization conservative and it is given as
\begin{eqnarray}
\mu_\Phi(t)&=&\frac{1}{\|\Phi(\cdot,t)\|_2^2}\int_{{\mathbb R}^d}
\biggl[\frac 12 \left(|\nabla\phi_1|^2+
|\nabla\phi_2|^2\right)+V(\bx)(|\phi_1|^2+|\phi_2|^2)+\delta|\phi_1|^2\nn \\
&&\qquad + \beta_{11}|\phi_1|^4+\beta_{22}|\phi_2|^4
+2\beta_{12}|\phi_1|^2|\phi_2|^2+2\lambda\ \text{Re}(
\phi_1\bar{\phi}_2)\biggr]\,d\bx\nn\\
&=&\frac{\mu(\Phi(\cdot,t))}{\|\Phi(\cdot,t)\|_2^2}, \qquad t\ge0.
\end{eqnarray}

For the above CNGF, we have \cite{BaoCai0}

\begin{theorem}
For any given initial data \be\label{eq:init1:sec9}
\Phi(\bx,0)=(\phi_1^0(\bx),\phi_2^0(\bx))^T:=\Phi^{(0)}(\bx), \qquad
\bx\in{\mathbb R}^d,\ee satisfying $\|\Phi^{(0)}\|_2^2=1$, the CNGF
(\ref{eq:cgf1:sec9}) is mass or normalization conservative and energy
diminishing, i.e.
\begin{equation}
\|\Phi(\cdot,t)\|_2^2\equiv \|\Phi^{(0)}\|_2^2=1,\qquad
E(\Phi(\cdot,t))\leq E(\Phi(\cdot,s)),\qquad 0\leq s\leq t.
\end{equation}
\end{theorem}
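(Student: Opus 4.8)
The plan is to mirror the proof of the single-component result Theorem \ref{thm:edhh:sec3}, using that the CNGF (\ref{eq:cgf1:sec9}) is nothing but a projected gradient flow for the energy $E$ in (\ref{eq:energy:sec9}): the right-hand sides of (\ref{eq:cgf1:sec9}) are exactly $-\frac{\delta E}{\delta\bar\phi_j}+\mu_\Phi(t)\phi_j$ for $j=1,2$, so that $\partial_t\phi_j=-\frac{\delta E}{\delta\bar\phi_j}+\mu_\Phi(t)\phi_j$. I would establish mass conservation first, and then deduce the energy-diminishing property from it.

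For mass conservation, I would differentiate $\|\Phi(\cdot,t)\|_2^2=\int_{\Bbb R^d}(|\phi_1|^2+|\phi_2|^2)\,d\bx$ in $t$, substitute (\ref{eq:cgf1:sec9}), and integrate by parts in the Laplacian terms (the boundary contributions vanishing thanks to the decay induced by the confining potential). Taking real parts gives
\[
\frac{d}{dt}\|\Phi\|_2^2 = 2\,\text{Re}\int_{\Bbb R^d}\!\left(\bar\phi_1\partial_t\phi_1+\bar\phi_2\partial_t\phi_2\right)d\bx = -2\,\text{Re}\int_{\Bbb R^d}\!\left(\bar\phi_1\tfrac{\delta E}{\delta\bar\phi_1}+\bar\phi_2\tfrac{\delta E}{\delta\bar\phi_2}\right)d\bx+2\mu_\Phi(t)\|\Phi\|_2^2 .
\]
The key identification is that $\text{Re}\int_{\Bbb R^d}(\bar\phi_1\frac{\delta E}{\delta\bar\phi_1}+\bar\phi_2\frac{\delta E}{\delta\bar\phi_2})\,d\bx$ coincides with the chemical-potential functional $\mu(\Phi)$ of (\ref{eq:chemp3:sec9}); together with the definition $\mu_\Phi(t)=\mu(\Phi(\cdot,t))/\|\Phi(\cdot,t)\|_2^2$ this forces $\frac{d}{dt}\|\Phi\|_2^2\equiv 0$, hence $\|\Phi(\cdot,t)\|_2^2\equiv\|\Phi^{(0)}\|_2^2=1$ and, as a by-product, $\mu_\Phi(t)=\mu(\Phi(\cdot,t))$.

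For the energy-diminishing property, I would differentiate $E(\Phi(\cdot,t))$, using the chain rule for a real functional of $(\phi_1,\bar\phi_1,\phi_2,\bar\phi_2)$, which gives $\frac{d}{dt}E=2\,\text{Re}\int_{\Bbb R^d}(\overline{\frac{\delta E}{\delta\bar\phi_1}}\,\partial_t\phi_1+\overline{\frac{\delta E}{\delta\bar\phi_2}}\,\partial_t\phi_2)\,d\bx$. Replacing $\frac{\delta E}{\delta\bar\phi_j}=\mu_\Phi(t)\phi_j-\partial_t\phi_j$ from the flow equation yields
\[
\frac{d}{dt}E = 2\,\text{Re}\int_{\Bbb R^d}\!\left(\mu_\Phi(t)\left(\bar\phi_1\partial_t\phi_1+\bar\phi_2\partial_t\phi_2\right)-|\partial_t\phi_1|^2-|\partial_t\phi_2|^2\right)d\bx ,
\]
in which the first term equals $\mu_\Phi(t)\,\frac{d}{dt}\|\Phi\|_2^2=0$ by the mass conservation already proved. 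Hence $\frac{d}{dt}E(\Phi(\cdot,t))=-2\big(\|\partial_t\phi_1(\cdot,t)\|_2^2+\|\partial_t\phi_2(\cdot,t)\|_2^2\big)\le 0$, and integrating this over $[s,t]$ gives $E(\Phi(\cdot,t))\le E(\Phi(\cdot,s))$ for $0\le s\le t$.

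The main obstacle is not this algebra — which parallels \cite{AftalionDu,BaoDu,Du} — but its rigorous justification: one needs enough regularity and spatial decay of the CNGF solution to differentiate under the integral sign and to discard the boundary terms in the integration by parts, and, more fundamentally, one needs well-posedness of the nonlinear, temporally nonlocal system (\ref{eq:cgf1:sec9}), whose coefficient $\mu_\Phi(t)$ depends on the entire state. I would handle this by working with sufficiently smooth solutions (as is standard in this setting), or by first establishing local existence of a smooth solution via a fixed-point argument and then propagating the conserved $L^2$ norm to obtain global existence, after which the formal identities above become legitimate.
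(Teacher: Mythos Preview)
Your proposal is correct and follows exactly the approach the paper has in mind: the paper does not spell out a proof for this two-component theorem but simply refers back to the single-component analogue Theorem~\ref{thm:edhh:sec3} (and to \cite{AftalionDu,BaoDu,Du}), whose proof amounts precisely to the computation you outline --- differentiate $\|\Phi\|_2^2$ to see that the choice of $\mu_\Phi(t)$ forces mass conservation, then differentiate $E$ and use the gradient-flow structure together with mass conservation to obtain $\frac{d}{dt}E=-2\sum_j\|\partial_t\phi_j\|_2^2\le 0$. Your caveat about regularity and well-posedness is appropriate and is handled in the paper, as you anticipate, at the formal level.
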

For practical computation, here we also present a second-order in
both space and time full discretization for the above CNGF
(\ref{eq:cgf1:sec9}). For  simplicity of notation, we introduce the method
for the case of one spatial dimension (1D) in a bounded domain
$U=(a,b)$ with homogeneous Dirichlet boundary condition
\be\label{eq:dbc1:sec9} \Phi(a,t)=\Phi(b,t)={\bf 0}, \qquad t\ge0. \ee
Generalizations to higher dimensions are straightforward for tensor
product grids.

Let
$\Phi_{j}^n=(\phi_{1,j}^n,\phi_{2,j}^n)^T$ be the numerical
approximation of $\Phi(x_j,t_n)$ and $\Phi^n$ be the solution vector
with component $\Phi_{j}^n$. In addition, denote
$\Phi_j^{n+1/2}=(\phi_{1,j}^{n+1/2},\phi_{2,j}^{n+1/2})^T$ with \be
\phi_{l,j}^{n+1/2}=\frac
12\left(\phi_{l,j}^{n+1}+\phi_{l,j}^n\right),\qquad
j=0,1,2,\ldots,M,\qquad l=1,2. \ee Then a second-order  full
discretization for the CNGF (\ref{eq:cgf1:sec9}) is given, for
$j=1,2,\ldots,M-1$ and $n\ge0$,  as
\begin{align*}
\frac{\phi_{1,j}^{n+1}-\phi_{1,j}^n}{\tau}=&\frac{\phi_{1,j+1}^{n+1/2}
-2\phi_{1,j}^{n+1/2}+\phi_{1,j-1}^{n+1/2}}{2h^2}
-\left[V(x_j)+\delta-\mu_{\Phi,h}^{n+1/2}\right]\phi_{1,j}^{n+1/2}-\lambda
\phi_{2,j}^{n+1/2} \\&-\frac 12\left[
\beta_{11}\left(|\phi_{1,j}^{n+1}|^2+|\phi_{1,j}^n|^2\right)+
\beta_{12}\left(|\phi_{2,j}^{n+1}|^2+|\phi_{2,j}^n|^2\right)
\right]\phi_{1,j}^{n+1/2},\\
\frac{\phi_{2,j}^{n+1}-\phi_{2,j}^n}{\tau}=&\frac{\phi_{2,j+1}^{n+1/2}-2\phi_{2,j}^{n+1/2}
+\phi_{2,j-1}^{n+1/2}}{2h^2}
-\left[V(x_j)-\mu_{\Phi,h}^{n+1/2}\right]\phi_{2,j}^{n+1/2}-\lambda
\phi_{1,j}^{n+1/2} \\&-\frac 12\left[
\beta_{12}\left(|\phi_{1,j}^{n+1}|^2+|\phi_{1,j}^n|^2\right)+
\beta_{22}\left(|\phi_{2,j}^{n+1}|^2+|\phi_{2,j}^n|^2\right)\right]
\phi_{2,j}^{n+1/2},
\end{align*}
where \be
\mu_{\Phi,h}^{n+1/2}=\frac{D_{\Phi,h}^{n+1/2}}{h\sum\limits_{j=0}^{M-1}
\left(|\phi_{1,j}^{n+1/2}|^2+|\phi_{2,j}^{n+1/2}|^2\right)}, \qquad
n\ge0,\ee with
\begin{eqnarray}
D_{\Phi,h}^{n+1/2}&=&h\sum\limits_{j=0}^{M-1}\biggl\{\sum\limits_{l=1}^2
\left(\frac{1}{2h^2}|\phi_{l,j+1}^{n+1/2}-\phi_{l,j}^{n+1/2}|^2+V(x_j)
|\phi_{l,j}^{n+1/2}|^2\right)
+\delta |\phi_{1,j}^{n+1/2}|^2\nonumber\\
&&+\frac
12\beta_{11}(|\phi_{1,j}^{n+1}|^2|+|\phi_{1,j}^n|^2)|\phi_{1,j}^{n+1/2}|^2+\frac
12\beta_{22}(|\phi_{2,j}^{n+1}|^2+|\phi_{2,j}^n|^2)|\phi_{2,j}^{n+1/2}|^2\nonumber\\
&&+\frac 12\beta_{12}\left[(|\phi_{2,j}^{n+1}|^2+|\phi_{2,j}^n|^2)
|\phi_{1,j}^{n+1/2}|^2+(|\phi_{1,j}^{n+1}|^2|+|\phi_{1,j}^n|^2)
|\phi_{2,j}^{n+1/2}|^2\right] \nonumber\\&&+2\lambda\
\text{Re}\left(
\phi_{1,j}^{n+1/2}\bar{\phi}_{2,j}^{n+1/2}\right)\biggl\}.
\end{eqnarray}
The boundary condition (\ref{eq:dbc1:sec9}) is  discretized as
 \be\label{eq:cngfbdry:sec9}
 \phi_{1,0}^{n+1}=\phi_{1,M}^{n+1}=\phi_{2,0}^{n+1}=\phi_{2,M}^{n+1}=0, \quad n=0,1,2,\ldots.
 \ee
The initial data (\ref{eq:init1:sec9}) is discretized as \be\label{eq:cngfini:sec9}
\phi_{1,j}^0=\phi_1^0(x_j), \qquad \phi_{2,j}^0=\phi_2^0(x_j),\qquad
j=0,1,\ldots,M. \ee
In the above full discretization, at every time step, we need to
solve a fully nonlinear system which  is very tedious in practical
computation. Below we present a more efficient discretization for
the CNGF (\ref{eq:cgf1:sec9}) for computing the ground states.

\subsubsection{Gradient flow with discrete normalization}

Another more efficient way to discretize  the CNGF (\ref{eq:cgf1:sec9}) is
through the construction of the following GFDN \cite{BaoCai0}:

\be\label{eq:dngf1:sec9} \begin{split}
\frac{\partial\phi_1}{\partial t}=\left[\frac 1 2\nabla^2
-V(\bx)-\dt-(\beta_{11}|\phi_1|^2+\beta_{12}|\phi_2|^2)\right]\phi_1
-\lambda\phi_2, \qquad\qquad\quad \\
\frac{\partial\phi_2}{\partial t}=\left[\frac
12\nabla^2-V(\bx)-(\beta_{12}|\phi_1|^2+\beta_{22}|\phi_2|^2)\right]\phi_2-\lambda
\phi_1,\quad t\in(t_n,t_{n+1}), \end{split} \ee followed by a
projection step as
\begin{eqnarray}\label{eq:dnproj:sec9}
\phi_l(\bx,t_{n+1}):=
\phi_l(\bx,t_{n+1}^+)=\sg_l^{n+1}\;\phi_l(\bx\, ,t_{n+1}^-), \qquad
l=1,2, \quad n\ge0,
\end{eqnarray}
where $\phi_l(\bx,t_{n+1}^{\pm})=\lim\limits_{t\to
t_{n+1}^{\pm}}\phi_l(\bx,t)$ ($l=1,2$)  and $\sg_l^{n+1}$ ($l=1,2$)
are chosen such that \be \label{eq:pcont3:sec9}
 \|\Phi(\bx,t_{n+1})\|^2=\|\phi_1(\bx,t_{n+1})\|_2^2+
 \|\phi_2(\bx,t_{n+1})\|_2^2=1, \qquad n\ge0.
\ee The above GFDN (\ref{eq:dngf1:sec9})-(\ref{eq:dnproj:sec9}) can be viewed as
applying the first-order splitting method to the CNGF (\ref{eq:cgf1:sec9})
and the projection step (\ref{eq:dnproj:sec9}) is equivalent to solving the
following ordinary differential equations (ODEs) \be
\frac{\partial\phi_1(\bx,t)}{\partial t}= \mu_\Phi(t)\phi_1, \qquad
\frac{\partial\phi_2(\bx,t)}{\partial t}= \mu_\Phi(t)\phi_2, \qquad
\quad t_n\leq t\leq t_{n+1}, \ee which immediately suggests that the
projection constants in (\ref{eq:dnproj:sec9}) are chosen as \be
\label{eq:pcont1:sec9}\sg_1^{n+1}=\sg_{2}^{n+1}, \qquad n\ge0. \ee Plugging
(\ref{eq:pcont1:sec9}) and (\ref{eq:dnproj:sec9}) into (\ref{eq:pcont3:sec9}), we obtain \be
\sg_1^{n+1}=\sg_2^{n+1}=\frac{1}{\|\Phi(\cdot,t_{n+1}^-)\|_2}=
\frac{1}{\sqrt{\|\phi_1(\cdot,t_{n+1}^-)\|_2^2
+\|\phi_2(\cdot,t_{n+1}^-)\|_2^2}}, \; n\ge0. \ee

Then, BEFD in section \ref{subsec:BEFD} can be used to discretize the GFDN (\ref{eq:dngf1:sec9})-(\ref{eq:dnproj:sec9}) and we omit the detailed scheme here, as the generalization is straightforward.

\subsection{Numerical methods for computing dynamics}
To compute dynamics of a two component BEC, finite difference time domain methods in section \ref{subsec:fdtd} can be directly extended to solve the CGPEs (\ref{eq:cgpe1:sec9}). Here we focus on the time splitting methods.
 For $n = 0, 1, \dots$,  from time
$t=t_n=n \tau$ to $t=t_{n+1}=t_n+\tau$, the CGPEs (\ref{eq:cgpe1:sec9})
are solved in three splitting steps \cite{ZhangBaoLi,Wang,WangXu}. One first
solves \bea \label{eq:ODE1:Sec9} i\fl{\p\psi_j}{\p
t}=-\fl{1}{2}\nabla^2\psi_j, \qquad j = 1, 2, \eea
for the time step of length $\tau$, followed by solving \bea
\label{eq:ODE3:sec9} i\fl{\p\psi_j}{\p t} =
V_j({\bx})\psi_j+\sum_{l=1}^2\bt_{jl} |\psi_l|^2\psi_j,\qquad j =
1, 2, \eea for the same time step with $V_1(\bx)=V(\bx)+\delta$ and $V_2(\bx)=V(\bx)$, and then by solving \be
\label{eq:ODE2:sec9} i\fl{\p\psi_1}{\p t} = -\ld\psi_{2}, \quad i\fl{\p\psi_2}{\p t}=-\ld\psi_{1},\ee  for the same time step. For time $t\in[t_n,t_{n+1}]$,
the ODE system (\ref{eq:ODE3:sec9}) leaves $|\psi_1({\bx},t)|$ and
$|\psi_2({\bx},t)|$
 invariant in $t$, and thus it
can be integrated {\sl exactly} to obtain \cite{BaoJinP,BaoJakschP,BaoZhang,BaoZhang2,Zhang0}, for
$j=1,2$ and $t\in[t_n, t_{n+1}]$ \bea \label{eq:solution3:sec9}
\psi_j({\bx},t) = \psi_j(\bx,t_n)\exp\left[-i\left(V_j({\bx})+
\sum_{l=1}^2\bt_{jl}\left|\psi_l(\bx,t_n)\right|^2\right)
(t-t_n)\right]. \eea
 For the ODE system (\ref{eq:ODE2:sec9}), we can rewrite it as
\bea i\fl{\p\Psi}{\p t}= -\ld A\Psi,\qquad\mathrm{with}\quad
A=\left(\begin{array}{cc}
0 & 1\\
1 & 0
\end{array}\right)\quad {\mathrm and}\quad
\Psi=\left(\begin{array}{c}
\psi_1\\
\psi_2
\end{array}\right).
\eea Since $A$ is a real and symmetric matrix, it can be
diagonalized and integrated {\sl exactly}, and then we obtain
\cite{Bao,ZhangBaoLi}, for $t\in[t_n, t_{n+1}]$ \begin{equation*} \Psi(\bx,t)=e^{i\ld
A\;(t-t_n)}\Psi(\bx,t_n) =\left(\ba{ll}
\cos\left(\ld(t-t_n)\right) &i\sin\left(\ld(t-t_n)\right)\\
i\sin\left(\ld(t-t_n)\right) &\cos\left(\ld(t-t_n)\right)\\
\ea\right)\Psi(\bx,t_n). \end{equation*}
Then, time splitting spectral method introduced in sections \ref{sec:numdym} and \ref{sec:numrotat} can be applied to compute the dynamics of the CGPEs (\ref{eq:cgpe1:sec9}), by a suitable composition of the above three steps (cf. section \ref{subsec:ts}). The detailed scheme is omitted here for brevity.

\subsection{Numerical results}
In this section, we will report
the ground states of (\ref{eq:minimize:sec9}), computed by our numerical methods.

\begin{figure}[t!]
\centerline{
\psfig{figure=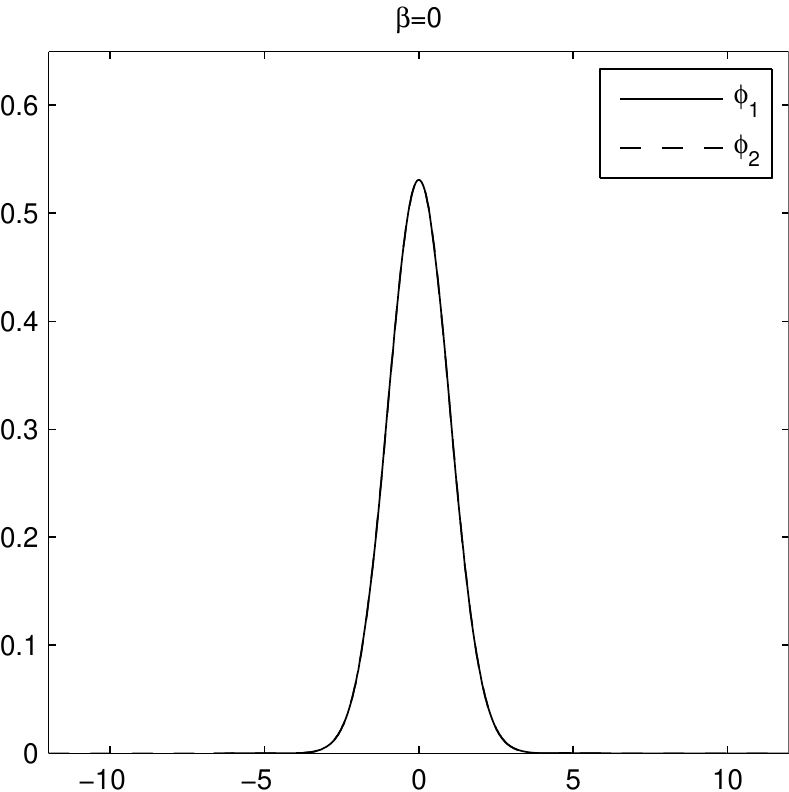,height=5cm,width=5cm,angle=0} \quad
\psfig{figure=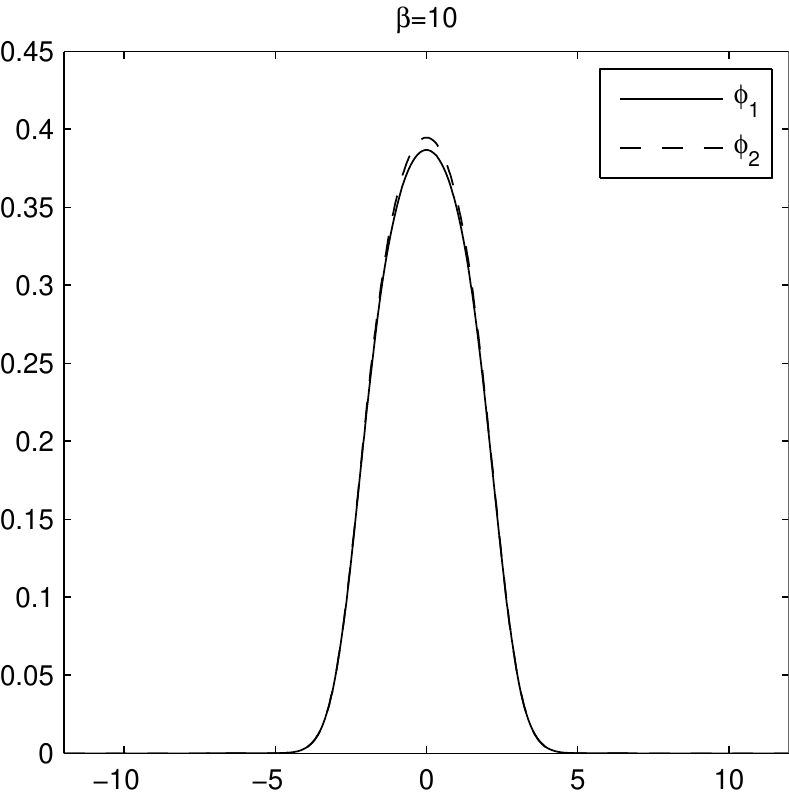,height=5cm,width=5cm,angle=0}}
\centerline{
\psfig{figure=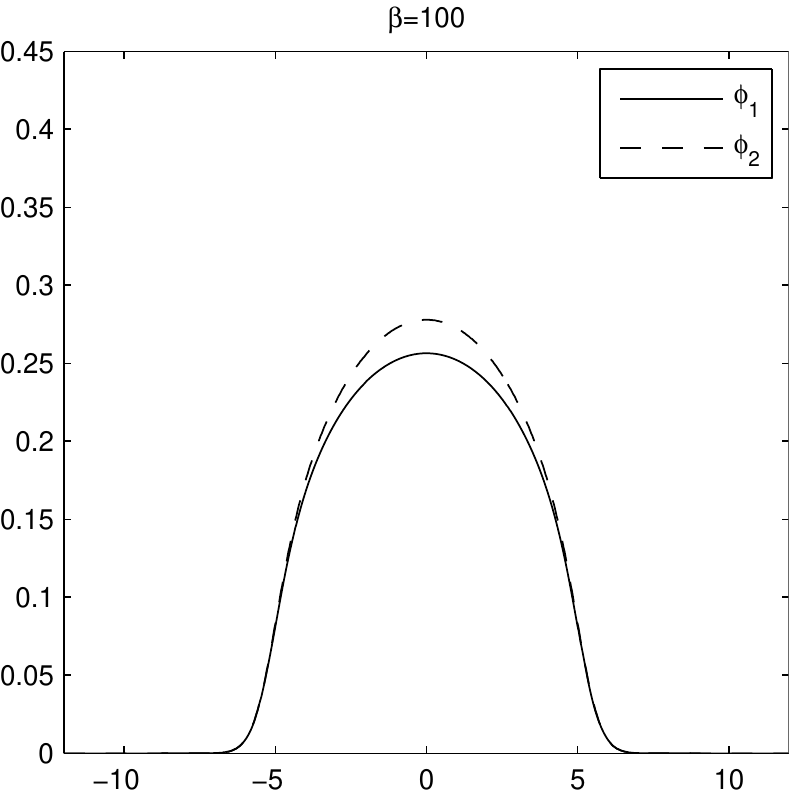,height=5cm,width=5cm,angle=0}
\quad
\psfig{figure=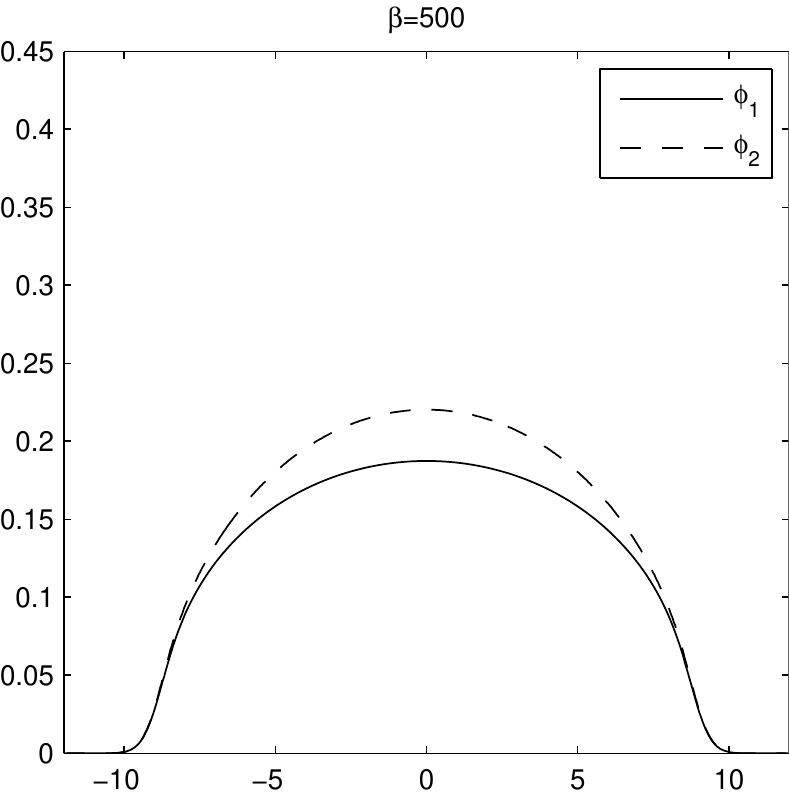,height=5cm,width=5cm,angle=0}}

\caption{Ground states $\Phi_g=(\phi_1,\phi_2)^T$ in Example~\ref{exm:1:sec9} when
$\dt=0$ and $\ld=-1$ for different $\beta$. } \label{fig:fig1:sec9}
\end{figure}
\begin{figure}[t]
\centerline{
\psfig{figure=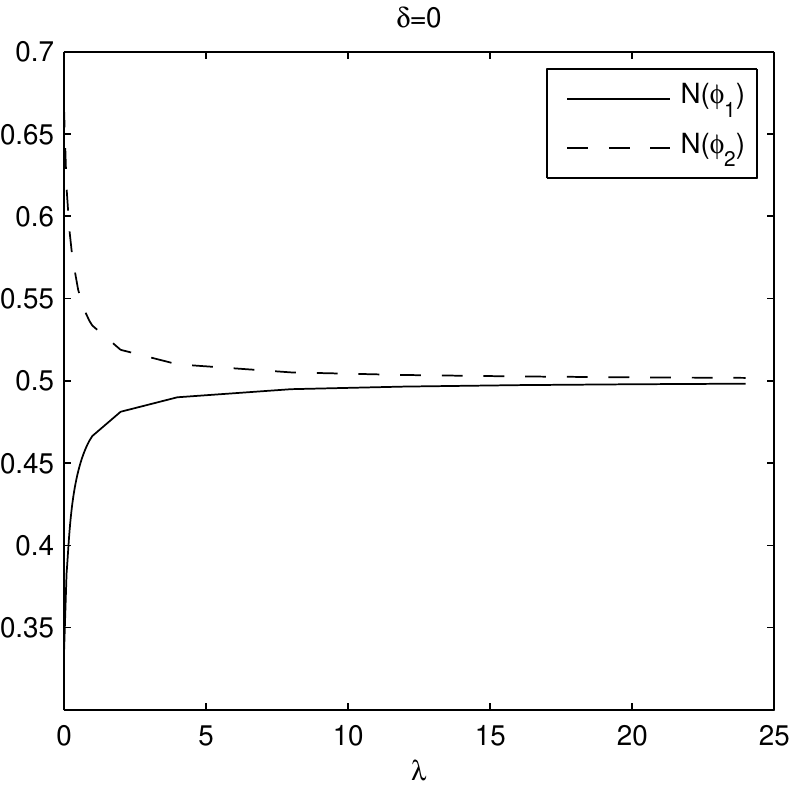,height=5cm,width=5cm,angle=0} \quad
\psfig{figure=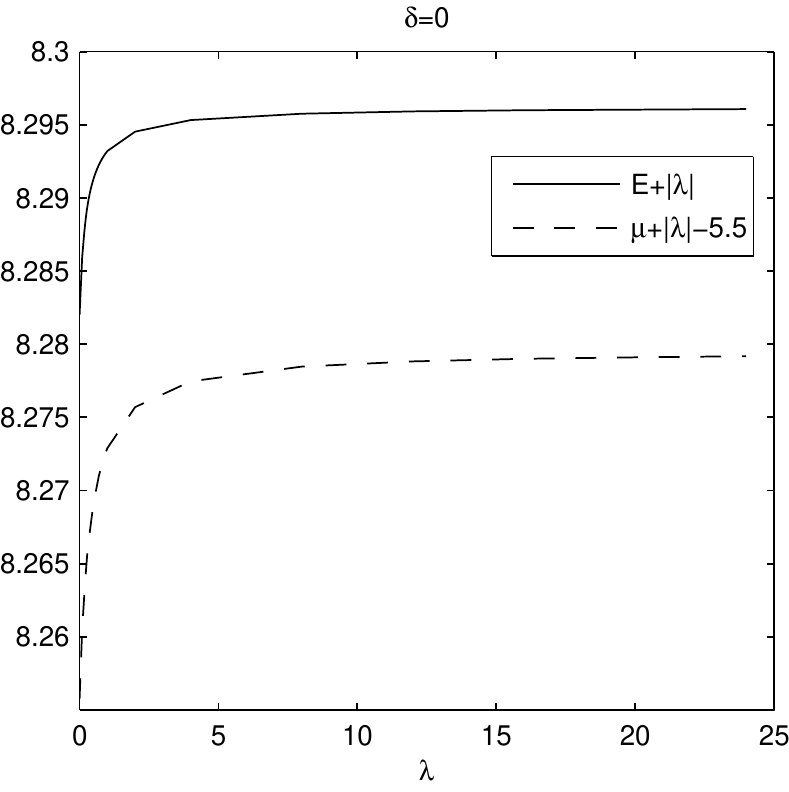,height=5cm,width=5cm,angle=0}}
\centerline{ \psfig{figure=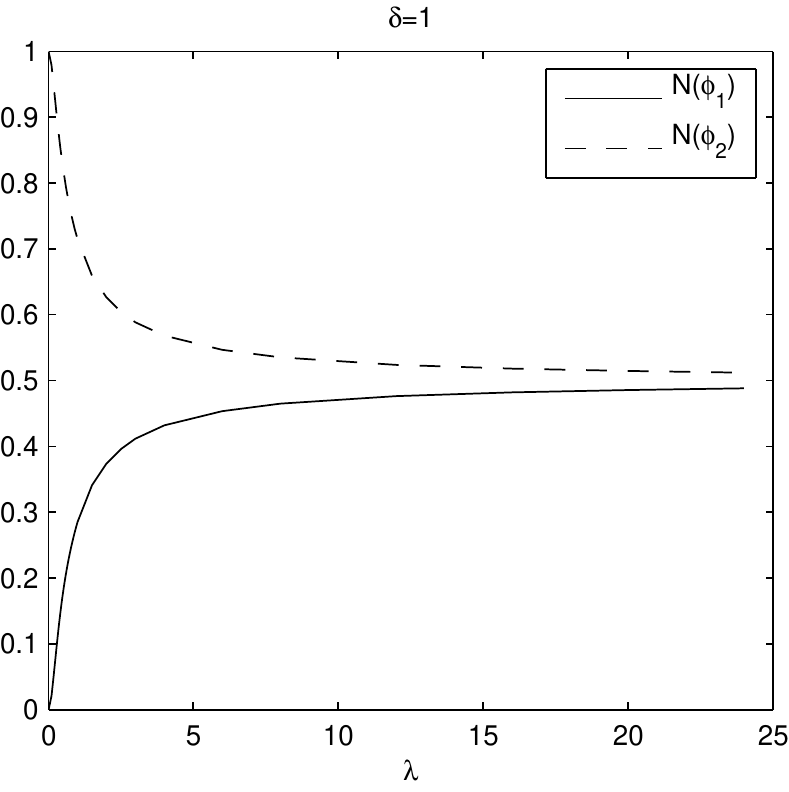,height=5cm,width=5cm,angle=0}
\quad \psfig{figure=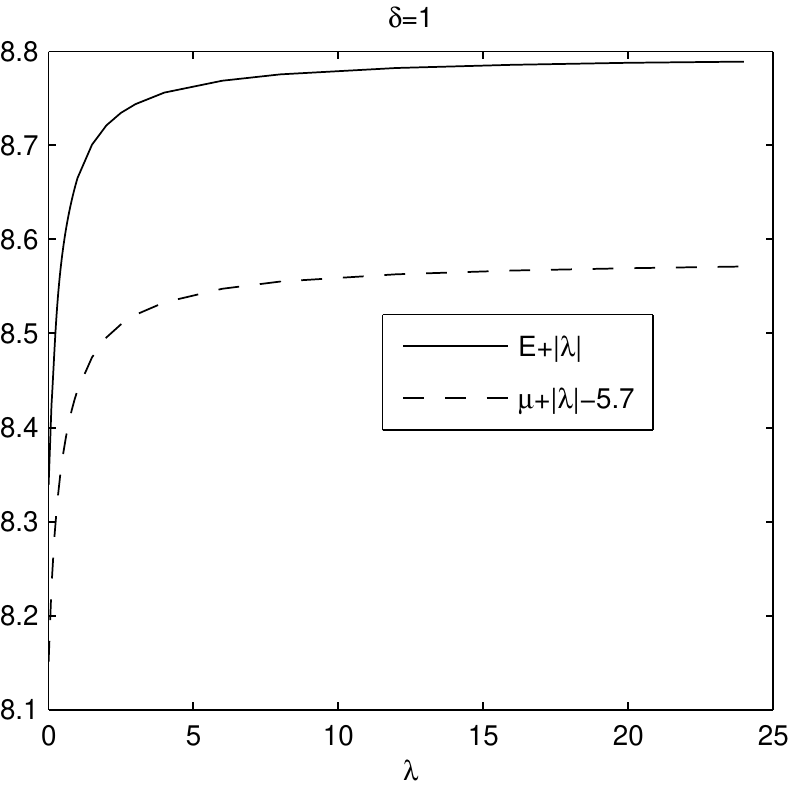,height=5cm,width=5cm,angle=0}}
\caption{Mass of each component $N(\phi_j)=\|\phi_j\|^2$ ($j=1,2$),
energy $E:=E(\Phi_g)$ and chemical potential
 $\mu:=\mu(\Phi_g)$ of the ground states in Example~\ref{exm:1:sec9} when $\beta=100$
 and $\dt=0,1$
 for different $\ld$. } \label{fig:fig4:sec9}
\end{figure}
\begin{figure}[t]
\centerline{
\psfig{figure=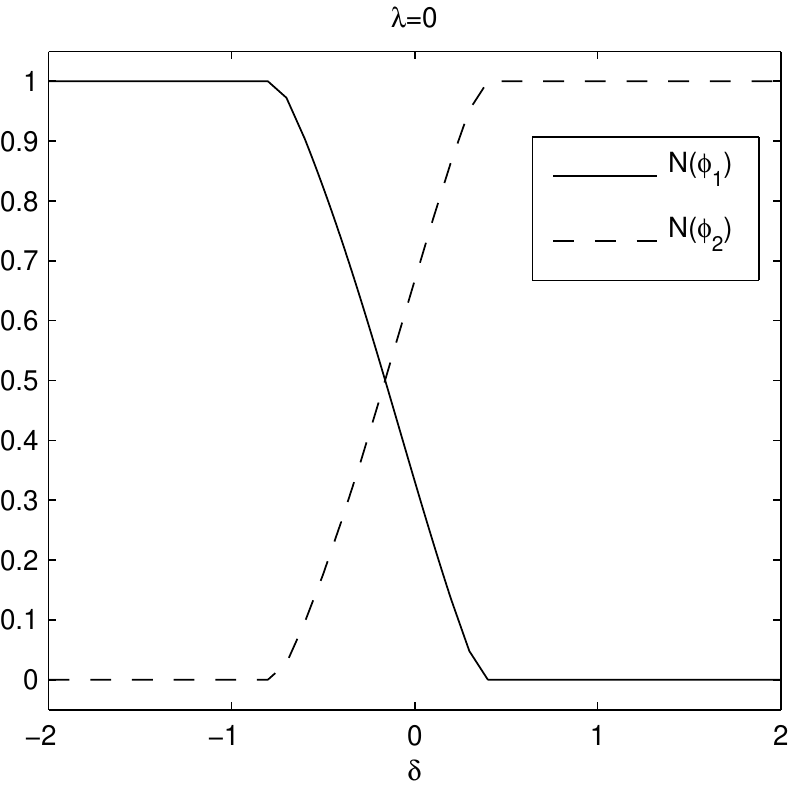,height=5cm,width=5cm,angle=0} \quad
\psfig{figure=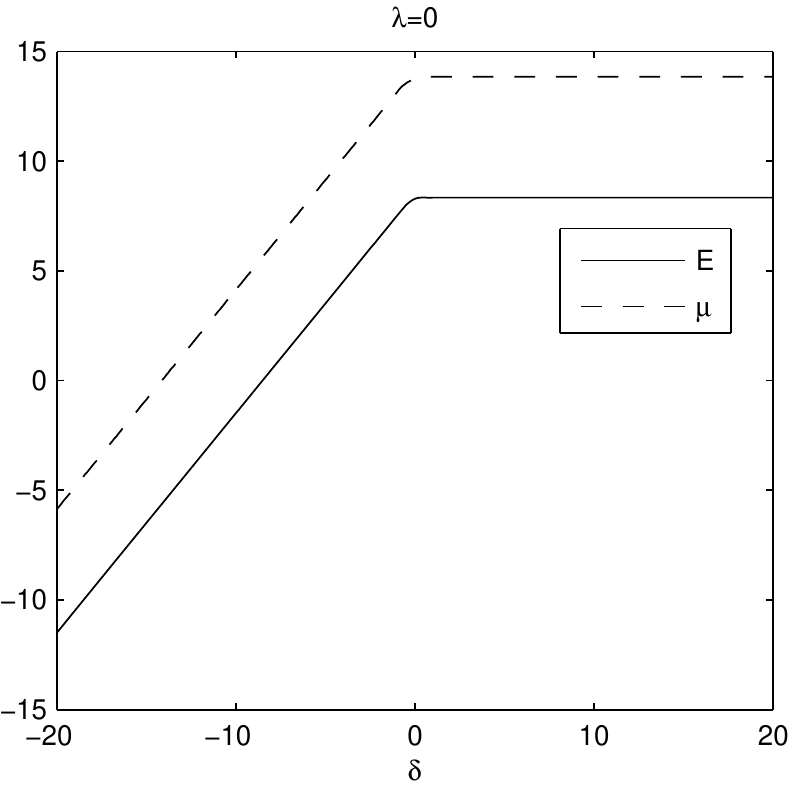,height=5cm,width=5cm,angle=0}}
\centerline{
\psfig{figure=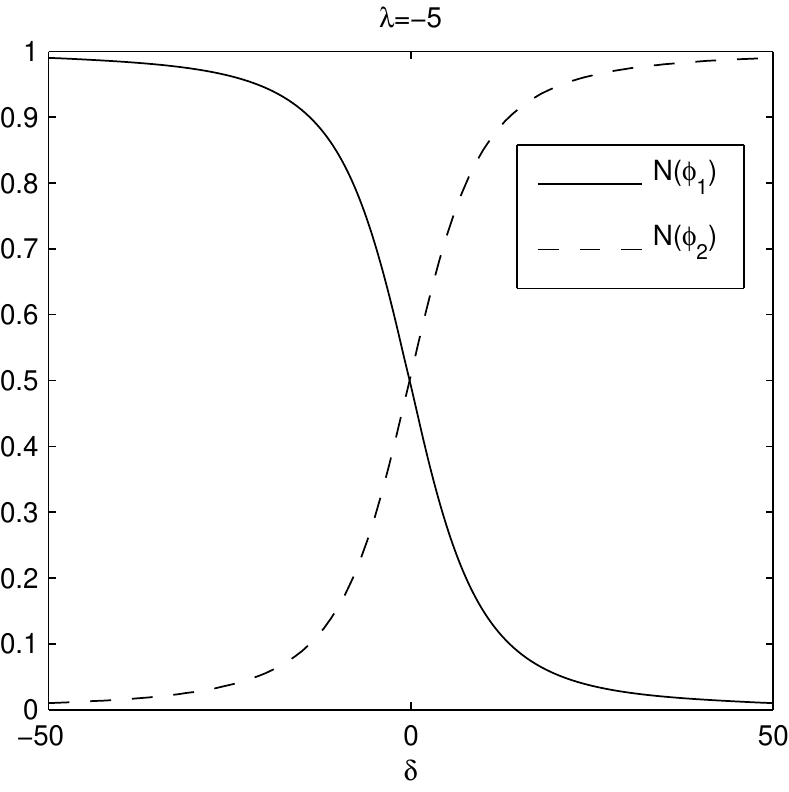,height=5cm,width=5cm,angle=0} \quad
\psfig{figure=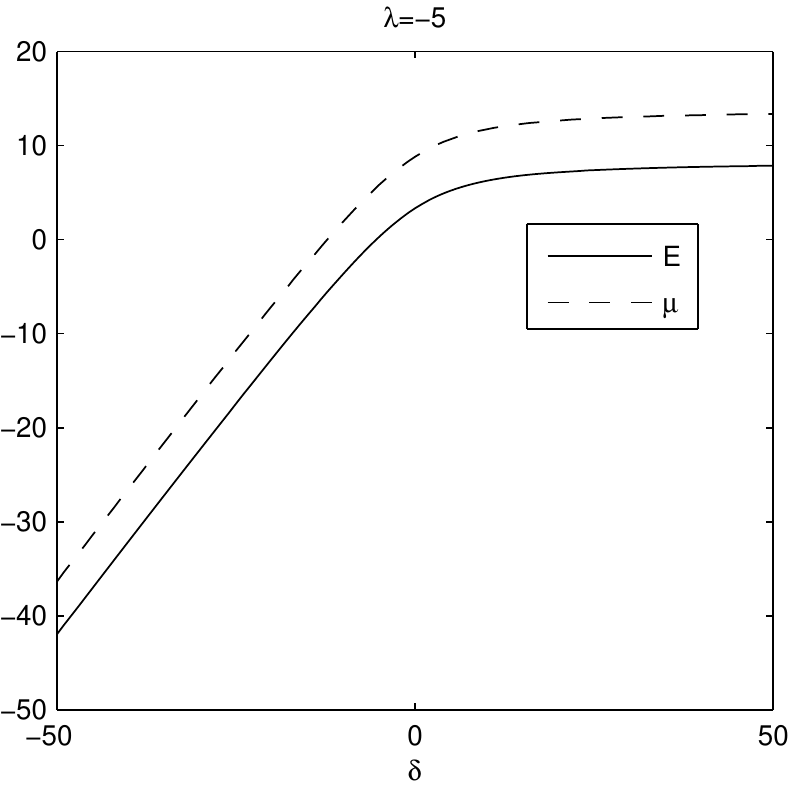,height=5cm,width=5cm,angle=0}}
\caption{Mass of each component $N(\phi_j)=\|\phi_j\|^2$ ($j=1,2$),
energy $E:=E(\Phi_g)$ and chemical potential
 $\mu:=\mu(\Phi_g)$ of the ground states in Example~\ref{exm:1:sec9} when  $\beta=100$ and
 $\ld=0,-5$  for different $\dt$.} \label{fig:fig5:sec9}
\end{figure}
\begin{example}\label{exm:1:sec9} Ground states of a two-component BEC with an external
driving field when $B$ is positive definite, i.e. we take $d=1$,
$V(x)=\frac 12 x^2$ and
$\beta_{11}:\beta_{12}:\beta_{22}=(1:0.94:0.97)\beta$ in
(\ref{eq:minimize:sec9}) \cite{Bao,BaoCai0}. In this case, since
$\ld\le 0$ and $B$ is positive definite when $ \bt>0$, thus we know
that the positive ground state $\Phi_g=(\phi_1, \phi_2)^T$ is
unique. In our computations, we take the computational domain
$U=[-16,16]$ with mesh size $h=\frac{1}{32}$ and time step
$\tau=0.1$. The initial data in (\ref{eq:init1:sec9}) is chosen as \be
\phi_1^0(x)=\phi_2^0(x)=\frac{1}{\pi^{1/4}\sqrt{2}}e^{-x^2/2},
\qquad x\in{\mathbb R}. \ee

Fig.~\ref{fig:fig1:sec9} plots the ground states $\Phi_g$ when $\dt=0$ and
$\ld=-1$ for different $\beta$. Fig.~\ref{fig:fig4:sec9}  shows  mass of each component $N(\phi_j)=\|\phi_j\|^2$ ($j=1,2$),
  energy $E:=E(\Phi_g)$ and  chemical potential
 $\mu:=\mu(\Phi_g)$ of the ground states
 when $\beta=100$ and $\dt=0,1$ for different $\ld$, and
Fig.~\ref{fig:fig5:sec9} depicts  similar results when $\beta=100$ and $\ld=0,-5$ for
different $\dt$.
\end{example}


\section{Perspectives and challenges}
\label{sec:chall}
\setcounter{equation}{0}\setcounter{figure}{0}\setcounter{table}{0}
So far, we have introduced mathematical results  and numerical methods for ground states and dynamics of a single/two component rotating/nonrotating BEC with/without dipole-dipole interactions described by mean field GPE. Despite these BEC systems, much progress has been made towards realizing other kinds of gaseous BEC, such as spinor condensates, condensates at finite temperature, Bose-Fermi mixtures, etc. These  achievements have brought great challenges  to atomic physics community and scientific computing community for  modeling, simulating and understanding various interesting phenomenons.
\subsection{Spin-1 BEC}
In earlier BEC experiments, the atoms were confined in magnetic trap
\cite{Anderson, Davis, Bradley},
in which the spin degrees of freedom is frozen.  In recent years, experimental
achievement of spin-1 and spin-2
condensates \cite{Barrett, Ueda}
offers new
regimes to study various quantum phenomena that are generally absent in
a single component condensate.
The spinor condensate is achieved experimentally when an optical trap,
instead of a magnetic trap,
is used to provide equal confinement for all hyperfine states.

The theoretical studies of spinor condensate have been carried out
in several papers since the achievement of it in experiments
\cite{Ho, Law}. In contrast to single component
condensate, a spin-$F$ ($F\in {\Bbb N}$) condensate is described
by a generalized coupled GPEs which consists of $2F+1$ equations,
each governing one of the $2F+1$ hyperfine states $(m_F = -F,
-F+1,..., F-1, F)$ within the mean-field approximation. For a
spin-1 condensate, at temperature much lower than the critical
temperature $T_c$, the three-components wave function $\Psi:=\Psi(\bx,t)
= (\psi_1(\bx,t), \psi_0(\bx,t), \psi_{-1}(\bx,t))^T$ are well
described by the following coupled GPEs \cite{Ueda,BaoLim},
\begin{align}
\label{eq:CGPE_1:sec11}
i\hbar\,\p_t\psi_{1}(\bx,t)=&\left[-\fl{\hbar^2}{2m}\nabla^2
+V(\bx)+
(c_0+c_2)\left(|\psi_1|^2+|\psi_0|^2\right)+(c_0-c_2)|\psi_{-1}|^2\right]
\psi_1\nn\\
&+c_2\,\bar{\psi}_{-1}\,\psi_0^2, \\
 \label{eq:CGPE_2:sec11}
i\hbar\,\p_t\psi_{0}(\bx,t)=&\left[-\fl{\hbar^2}{2m}\nabla^2
+V(\bx)+
(c_0+c_2)\left(|\psi_1|^2+|\psi_{-1}|^2\right)+c_0|\psi_{0}|^2\right]
\psi_0 \nn \\
&+2c_2\,\psi_{-1}\,\bar{\psi}_{0}\,\psi_1, \\
\label{eq:CGPE_3:sec11}
i\hbar\,\p_t\psi_{-1}(\bx,t)=&\left[-\fl{\hbar^2}{2m}\nabla^2
+V(\bx)+
(c_0+c_2)\left(|\psi_{-1}|^2+|\psi_0|^2\right)+(c_0-c_2)|\psi_{1}|^2\right]
\psi_{-1} \nn\\
&+c_2\,\psi_0^2\,\bar{\psi}_{1},\qquad \bx=(x,y,z)^T\in {\Bbb R}^3.
\end{align}
Here $V(\bx)$ is an external trapping potential.  There are two atomic collision terms,
$c_0 = \frac{4\pi\hbar^2}{3m}(a_0+2a_2)$ and $c_2 =
\frac{4\pi\hbar^2}{3m}(a_2-a_0)$, expressed in terms of the
$s$-wave scattering lengths, $a_0$ and $a_2$, for scattering
channel of total hyperfine spin 0 (anti-parallel spin collision)
and spin 2 (parallel spin collision), respectively. The usual
mean-field interaction, $c_0$, is positive for repulsive
interaction and negative for attractive interaction. The
spin-exchange interaction, $c_2$, is positive for
antiferromagnetic interaction and negative for ferromagnetic
interaction. The wave function is normalized according to \be
\label{eq:Ns:sec11} \|\Psi\|_2^2:=\int_{{\Bbb R}^3} |\Psi(\bx,t)|^2\;d\bx=
\int_{{\Bbb R}^3} \sum_{l=-1}^1 |\psi_l(\bx,t)|^2\;d\bx
:=\sum_{l=-1}^1 \|\psi_l\|_2^2 =N, \ee where $N$ is the total number
of particles in the condensate. This normalization is conserved by coupled GPEs (\ref{eq:CGPE_1:sec11})-(\ref{eq:CGPE_3:sec11}), and so are the   magnetization
\be \label{eq:magn:sec11} M(\Psi(\cdot,t)):=\int_{{\Bbb R}^3}
\left[|\psi_1(\bx,t)|^2- |\psi_{-1}(\bx,t)|^2\right]d\bx\equiv
M(\Psi(\cdot,0))= M \ee and the energy per particle
\begin{eqnarray}
\label{eq:energy:sec11}
E(\Psi(\cdot,t))&=&\int_{{\Bbb
R}^3}\biggl\{\sum_{l=-1}^{1} \left(\frac{\hbar^2}{2m}|\nabla
\psi_l|^2+V(\bx)|\psi_l|^2\right)
+(c_0-c_2)|\psi_1|^2 |\psi_{-1}|^2 \nn\\
&&+\frac{c_0}{2}|\psi_0|^4
+\fl{c_0+c_2}{2}\Bigl[|\psi_1|^4+|\psi_{-1}|^4
+2|\psi_0|^2
\left(|\psi_1|^2+|\psi_{-1}|^2\right)\Bigr]\nn\\
&&+c_2\left(\bar{\psi}_{-1}\psi_0^2\bar{\psi}_{1}+
\psi_{-1}\bar{\psi}_0^2\psi_{1}\right)\biggr\}\; d\bx \equiv
E(\Psi(\cdot,0)), \qquad t\ge0.
\end{eqnarray}
Then  ground states of spin-1 BEC can be defined as the minimizer of energy $E$ under the normalization and magnetization constraints \cite{BaoLim,BaoWang2,BaoChernZhang}. In particular, when the external traps for all the components are the same,
ground states for ferromagnetic and antiferromagnetic spin-1 BECs can be simplified \cite{BaoChernZhang}. Generally speaking, for spin-$F$ BEC, the complicated nonlinear terms in (\ref{eq:CGPE_1:sec11})-(\ref{eq:CGPE_3:sec11}) lead to new difficulties for mathematical analysis and numerical simulation \cite{Wang-spin}. Much work needs to be done in future, especially when rotational frame and dipole-dipole interactions are taken into account in spin-$F$ BECs \cite{Ueda}.
\subsection{Bogoliubov excitation}
The theory of interacting Bose gases, developed by Bogoliubov in 1947, is very useful and important to understand BEC in dilute atomic gases. One of the key issue is the Bogoliubov excitation.

 To describe the condensate, we have the lowest order approximation, i.e., the Gross-Pitaevskii energy by assuming that all particles are in the ground state. However, due to the interactions between the atoms, there is a small portion  occupying the excited states. Thus, if a higher order approximation of the ground state energy  is considered, excitations have to be included. Using a perturbation technique, Bogoliubov has investigated this problem and shown that the excited states of a system of interacting Bose particles can be described by a system of noninteracting quasi-particles satisfying the Bogoliubov dispersion relation.

 To determine the Bogoliubov excitation spectrum, we consider small perturbations
around the ground state of Eq. (\ref{eq:ngpe}).  For simplicity we
assume a vanishing  harmonic potential $V(\bx) = 0$ and
homogeneous density $\nu$.    A stationary state of the
GPE (\ref{eq:ngpe}) is given by $\psi(\bx, t) =
\psi(t) = e^{-i\mu t} \sqrt\nu$ with the chemical potential
\begin{equation}
  \mu = \beta \nu.
\end{equation}
Now we add a local perturbation $\xi(\bx, t)$ to the
stationary state $\psi(t)$, that is, $\psi(\bx, t) =
e^{-i\mu t} [\sqrt\nu + \xi(\bx, t)]$.  We expand the
perturbation in a plane wave basis as $\xi(\bx, t) =
\int_{\Bbb R^3}  \bigl( u_{{\bf q}}e^{i ({\bf q}\cdot\bx -
  \omega_{{\bf q}} t)} + \overline{v_{{\bf q}}} e^{-i ({\bf q}\cdot\bx -
  \omega_{{\bf q}} t)} \bigr)d{\bf q}$ and insert $\psi(\bx, t)$ into
Eq. (\ref{eq:ngpe}).  Here, $\omega_{{\bf q}}$ are the excitation
frequencies of quasimomentum ${\bf q}$ and $u_{{\bf q}}$, $v_{{\bf
  q}}$ are the mode functions.  Keeping terms
linear in the excitations $u_{{\bf q}}$ and $v_{{\bf q}}$ we find
the Bogoliubov-de Gennes equations
\begin{equation}\label{eq:u_q:sec11}
  \begin{split}
    \omega_{{\bf q}} u_{{\bf q}} &= \frac{{\bf q}^2}{2} u_{{\bf q}} +
    \nu \beta(v_{{\bf q}} + u_{{\bf q}}),\\
    - \omega_{{\bf q}} v_{{\bf q}} &= \frac{{\bf q}^2}{2} v_{{\bf q}} +
    \nu  \beta (v_{{\bf q}} + u_{{\bf q}}).
  \end{split}
\end{equation}
 Then we can find the eigenenergies of
 Eq. (\ref{eq:u_q:sec11}) by solving the eigenvalue problem.  The resulting Bogoliubov
energy $E_B({\bf q}) = \omega_{{\bf q}}$ is determined by
\begin{equation}
    E_B^2({\bf q}) = \frac{{\bf q}^2}{2} \left(\frac{{\bf q}^2}{2} + 2\beta\nu\right).
\end{equation}
When an external potential is considered, the Bogoliubov energy would be more complicated.
In 1999, the Bogoliubov excitation spectrum was observed for the first time in atomic BEC \cite{Stamper}, using light scattering. Later in 2008, observation of Bogoliubov excitations was announced in
exciton-polariton condensates \cite{Utsun}. Such elementary excitations are crucial in understanding various phenomenon in BEC.
\subsection{BEC at finite temperature}
The process of creating a BEC in a trap by means of evaporative cooling starts in a
regime covered by the quantum Boltzmann equation (QBE) and finishes in a regime where the GPE is expected to be valid. The GPE is capable to describe the
main properties of the condensate at very low temperatures, it treats the condensate
as a classical field and neglects quantum and thermal fluctuations. As
a consequence, the theory breaks down at higher temperatures where the non-condensed fraction of the gas cloud is significant. An approach which allows
the treatment of both condensate and noncondensate parts simultaneously
was developed in \cite{Zaremba,BaoPM45}.

The resulting equations of motion reduce to a generalized GPE for the
condensate wave function coupled with a semiclassical QBE for the thermal
cloud:
\begin{equation}\label{eq:qbe:sec11}
\begin{split}
&i\hbar\p_t\psi(\bx,t)=\left[-\frac{\hbar^2}{2m}\nabla^2+(n_c(\bx,t)+2n(\bx,t))g-iR(\bx,t)\right]\psi,\\
&\frac{\p F}{\p t}+\frac{\bp}{m}\cdot\nabla_{\bx} F-\nabla_{\bx}U\cdot\nabla_{\bp}F=Q(F)+Q_c(F),
\end{split}
\end{equation}
where $n_c(\bx,t)=|\psi(\bx,t)|^2$ is the condensate density, $F:=F(\bx,\bp,t)$ describes the distribution of thermal atoms in the phase space and it gives the particle number with momentum $\bp$ at position $\bx$ and time $t$ in the thermal cloud. $n(\bx,t)=\int_{\Bbb R^3}F(\bx,\bp,t)/(2\pi\hbar)^3d\bp$, $V(\bx)$ is the confining potential and $g=4\pi \hbar^2a_s/m$. The collision integral $Q(F)$  is given by
\begin{align*}
Q(F)=&\frac{2g^2}{(2\pi)^5\hbar^7}\int_{\Bbb R^3\times\Bbb R^3\times\Bbb R^3}\delta(\bp+\bp_\ast-\bp^\prime-\bp_\ast^\prime)\times
\delta(\epsilon+\epsilon_\ast-\epsilon^\prime-\epsilon_\ast^\prime)\\
&\times[(1+F)(1+F_\ast)F^\prime F^\prime_\ast-FF_\ast(1+F^\prime)(1+F^\prime_\ast)]\,d\,\bp_\ast\,d\bp^\prime\,d\bp_\ast^\prime,
\end{align*}
where $\epsilon=U(\bx,t)+\bp^2/2m$, $U(\bx,t)=V(\bx)+2gn_c(\bx,t)+2gn(\bx,t)$ and $\delta(\cdot)$ is the Dirac distribution.
$Q_c(F)$ which describes collisions between condensate and non condensate particles
 is given by
 \begin{align*}
Q_c(F)=&\frac{2g^2n_c}{(2\pi)^2\hbar^4}\int_{\Bbb R^3\times\Bbb R^3\times\Bbb R^3}\delta(m{\bf v}_c+\bp_\ast-\bp^\prime-\bp_\ast^\prime)\\
&\times
\delta(\epsilon_c+\epsilon_\ast-\epsilon^\prime-\epsilon_\ast^\prime)[\delta(\bp-\bp_\ast)
-\delta(\bp-\bp^\prime)-\delta(\bp-\bp_\ast^\prime)]\\
&\times[(1+F_\ast)F^\prime F^\prime_\ast-F_\ast(1+F^\prime)(1+F^\prime_\ast)]\,d\,\bp_\ast\,d\bp^\prime\,d\bp_\ast^\prime,
\end{align*}
where
\be
\epsilon_c(\bx,t)=\frac12m {\bf v}_c(\bx,t)^2+\mu_c(\bx,t),
\ee
and ${\bf v}_c$  is the quantum hydrodynamic velocity, $\mu_c$ is the effective  potential acting on the condensate \cite{Zaremba,Griffin}. $R(\bx,t)$ is then written as
\be
R(\bx,t)=\frac{\hbar}{2n_c}\int_{\Bbb R^3}\frac{Q_c(F)}{(2\pi\hbar)^3}\,d\bp.
\ee
Note that for low temperatures $T\to 0$  we have $n,R\to 0$ and we recover the
conventional GPE. The system (\ref{eq:qbe:sec11}) is normalized as $N_c(0)=N_c^0$ and $N_t(0)=N_t^0$ with
\be
N_c(t)=\int_{\Bbb R^3}|\psi(\bx,t)|^2\,d\bx,\quad N_t(t)=\int_{\Bbb R^3}|n(\bx,t)|^2d\,\bx, \quad t\ge0,
\ee
where $N_c^0$ and $N_t^0$ are the number of particles in the condensate and thermal
cloud at time $t=0$, respectively. It is easy to see from the equations (\ref{eq:qbe:sec11}) that the total number of particles defined as $N_{\rm total}(t)=N_c(t)+
N_t(t)= N^0_{\rm total}=N^0_c+N^0_t$ is conserved. For this set of equations, the GPE part can be solved efficiently, and the main trouble comes from the Boltzmann equation part. Alternatively,  projected GPE model is  also used for simulating BEC at finite temperature \cite{Davis-ft1,Davis-ft2}.

\section*{Acknowledgments}
We would like to thank our collaborators: Naoufel Ben Abdallah, Francois Castella, I-Liang Chern, Qiang Du, Jiangbin Gong, Dieter Jaksch,  Shi Jin, Baowen Li, Hai-Liang Li, Fong-Yin Lim, Peter A. Markowich,  Florian M\'ehats, Lorenzo Pareschi, Han Pu, Matthias Rosenkranz, Christian Schmeiser, Jie Shen, Weijun Tang, Hanquan Wang, Rada M. Weish\"aupl, Yanzhi Zhang, etc.
for their significant contributions and fruitful collaborations on the topic over the last decade.
We have learned a lot from them during the fruitful collaboration and interaction.
This work was supported by the Academic Research
Fund of Ministry of Education of Singapore grant
R-146-000-120-112.


\medskip
Received xxxx 20xx; revised xxxx 20xx.
\medskip

\end{document}